\setlist{nosep, itemsep=.1cm, topsep=.1cm}
\newcommand{\be}[0]{\begin{equation}}
\newcommand{\ee}[0]{\end{equation}}
\newcommand*{\textcal}[1]{%
  \textit{\large \fontfamily{pzc}\selectfont#1}%
}
\numberwithin{equation}{section}
\theoremstyle{plain}
\newtheorem{theorem}{Theorem}[section]
\newtheorem{lemma}[theorem]{Lemma}
\newtheorem{proposition}[theorem]{Proposition}
\newtheorem{corollary}[theorem]{Corollary}
\newtheorem{assertion}[theorem]{Assertion}
\theoremstyle{definition}
\newtheorem{definition}[theorem]{Definition}
\newtheorem{remark}[theorem]{Remark}
\begin{document}
\sloppy

\vspace*{-1cm}
\thispagestyle{empty}
\vspace*{1.5cm}

\begin{center}
{\Large 
{\bf Topological field theories on manifolds with Wu structures}}
\vspace{2.0cm}

{\large Samuel Monnier}
\vspace*{0.5cm}

Section de Mathématiques, Université de Genève\\
2-4 rue du Lièvre, 1211 Genève 4, Switzerland\\
samuel.monnier@gmail.com

\vspace*{1cm}

{\bf Abstract}
\end{center}

We construct invertible field theories generalizing abelian prequantum spin Chern-Simons theory to manifolds of dimension $4\ell+3$ endowed with a Wu structure of degree $2\ell+2$. After analysing the anomalies of a certain discrete symmetry, we gauge it, producing topological field theories whose path integral reduces to a finite sum, akin to Dijkgraaf-Witten theories. We take a general point of view where the Chern-Simons gauge group and its couplings are encoded in a local system of integral lattices. 

The Lagrangian of these theories has to be interpreted as a class in a generalized cohomology theory in order to obtain a gauge invariant action. We develop a computationally friendly cochain model for this generalized cohomology and use it in a detailed study of the properties of the Wu Chern-Simons action. In the three-dimensional spin case, the latter provides a definition of the "fermionic correction" introduced recently in the literature on fermionic symmetry protected topological phases.

In order to construct the state space of the gauged theories, we develop an analogue of geometric quantization for finite abelian groups endowed with a skew-symmetric pairing.

The physical motivation for this work comes from the fact that in the $\ell =  1$ case, the gauged 7-dimensional topological field theories constructed here are essentially the anomaly field theories of the 6-dimensional conformal field theories with (2,0) supersymmetry, as will be discussed elsewhere.

\newpage

{\small \linespread{.6}
\tableofcontents
}

\section{Introduction and summary}

Gauge invariance requires the coupling of three-dimensional $U(1)$ Chern-Simons theories, traditionally called the level, to be an integer in suitable units. On spin manifolds, thanks to the evenness of the cup product pairing in dimension 4, one can define Chern-Simons theories with half-integer level, the so-called spin Chern-Simons theories. More generally, an abelian (classical) Chern-Simons theory, with its gauge group and couplings, can be encoded in an even lattice, while the same data for an abelian spin Chern-Simons theory is encoded in an arbitrary integral lattice \cite{Belov:2005ze}. The abelian gauge group is the torus $(\Lambda \otimes_\mathbb{Z} \mathbb{R}) / \Lambda$, and the couplings are determined by the lattice pairing.

Abelian Chern-Simons theories have higher dimensional analogues involving higher degree abelian gauge fields. The quadratic theories involving degree $2\ell+1$ abelian gauge fields live in $4\ell+3$ dimensions. In Section \ref{SecPreq}, we construct at the classical/prequantum level the higher dimensional analogues of spin Chern-Simons theories. The spin structure gets generalized to a degree $2\ell+2$ Wu structure, which can be understood as follows. A spin structure can be pictured as a trivialization of the degree 2 Wu class $\nu_2 := w_2$, which coincides with the second Stiefel-Whitney class. A degree $2\ell+2$ Wu structure is a trivialization of the degree $2\ell+2$ Wu class $\nu_{2\ell+2}$, a certain polynomial in the Stiefel-Whitney classes. Like in three dimensions, the gauge group and couplings of the theory can be specified by an arbitrary integral lattice $\Lambda$. In fact, we generalize the discussion to theories associated to a local system $\bar{\Lambda}$ of lattices over spacetime. We will come back later to the motivation for this generalization.

The rest of the paper is devoted to field theories obtained from the prequantum theories by the gauging of the action of the discrete group $\Gamma := \Lambda^\ast/\Lambda$. These theories have finite dimensional state spaces and their path integrals reduce to finite sums, as happens for Dijkgraaf-Witten theories \cite{Dijkgraaf:1989pz, Freed:1991bn, Monnier:2015qxa}. They are however clearly distinct from the latter. For instance the dimensions of the state spaces of the $\Gamma$-valued higher Dijkgraaf-Witten theory and of the theories constructed here on a closed $4\ell+2$-dimensional manifold are respectively $|H^{2\ell+1}(M, \Gamma)|$ and $|H^{2\ell+1}(M, \Gamma)|^{1/2}$. The full quantization of Wu Chern-Simons theories, which would require the gauging of the full group $(\Lambda \otimes_\mathbb{Z} \mathbb{R}) / \Lambda$, is not treated in the present paper. Quantum spin Chern-Simons theories in three dimensions have been studied in \cite{Belov:2005ze} for abelian gauge groups and in \cite{2006math......5239J} for non-abelian gauge groups. 

We construct the Euclidean field theories described above in the Atiyah-Segal framework, by associating a Hilbert space to each closed $4\ell+2$-dimensional manifolds (endowed with suitable structure) and a vector to each $4\ell+3$-dimensional manifold with boundary. We prove that these associations define field theory functors in Theorems \ref{ThPreqFTFunc} and \ref{ThMainTheorem}.

\paragraph{Relation to 6d SCFTs} The main motivation for the present work is the study of anomalies in six-dimensional (2,0) superconformal field theories (6d SCFTs). A $d$-dimensional anomalous field theory $\mathcal{F}$ can be pictured elegantly as a field theory taking value in a $d+1$-dimensional field theory $\mathcal{A}$, the anomaly field theory, which contains all the information about the anomalies of $\mathcal{F}$  \cite{Freed:2014iua, Monnierd}. The results of \cite{Monnier:2014txa} suggest that the anomaly field theories of the 6d SCFTs are essentially the 7-dimensional discretely gauged Wu Chern-Simons theories constructed in the present paper. This claim will be developed in a future paper. 

The existence of these gauged theories and their link to the 6d SCFTs was previously suggested in Section 4 of \cite{Witten:1998wy}.

This relation also provides the motivation for considering Chern-Simons theories associated to local systems of lattices. Indeed, the lattice $\Lambda$ is essentially the charge lattice of the 6d SCFT, and situations where this charge lattice is a non-trivial local system over spacetime are relevant, leading for instance to gauge theories with non simply laced gauge groups upon reduction to four or five dimensions \cite{Tachikawa:2011ch}. 

\paragraph{Action from generalized cohomology} Maybe one of the most interesting aspect of the theories constructed in the present paper is that their Lagrangians naively fail to be gauge invariant. More precisely, the Lagrangian is a top $\mathbb{R}/\mathbb{Z}$-valued cocycle on spacetime. One may construct a naive action $S^{\rm naive}$ by integrating this cocycle over the spacetime manifold. However, under a (large) gauge transformation of the Chern-Simons gauge field, $\exp 2\pi i S^{\rm naive}$ may change by a sign. 

The solution to this problem is the following. The Lagrangian determines a degree $4\ell+3$ class in a generalized cohomology theory, known as E-theory  \cite{2005math......4524J, Freed:2006mx}. We construct a "fundamental E-homology class" associated to a $4\ell+3$-dimensional Wu manifold, and show that the action obtained by pairing the Lagrangian with this class is gauge invariant. We also develop in Appendix \ref{SecEth} a non-abelian cochain model for E-theory that allows for concrete computations. We use it to study the action in detail in Sections \ref{SecAct} and \ref{SecWilHooftOp}. 

Chern-Simons actions are often constructed as the integral of a top differential form on a manifold $W$ bounded by the spacetime $M$. This approach is fine in three dimensions, where such bounded manifolds always exists, but is deficient in higher dimensions, where some spacetimes are not boundaries. We show in Section \ref{SecActBound} that in cases where the spacetime is a boundary, our action can be reexpressed as the integral of a top differential form on $W$. Another formalism expressing the spin Chern-Simons action without using a bounding four-manifold involves eta invariants \cite{2005math......4524J}, but does not generalize straightforwardly above three dimensions.

The same type of actions appeared in the literature on fermionic symmetry protected topological orders \cite{2012arXiv1201.2648G, 2015arXiv150505856G}. The lack of gauge invariance of the naive action was noted, and solved by invoking a "fermionic correction" to the action, which cannot be expressed as the integral of a top (ordinary) cohomology class. Our framework in terms of generalized cohomology can be seen as providing a precise definition for this fermionic correction. The relevance of generalized cohomology to this problem was previously suggested in \cite{2014arXiv1406.7278F}.

\paragraph{The action as a quadratic refinement} Another interesting characteristic of the action on a closed $4\ell+3$-dimensional manifold $M$ is that when restricted to flat gauge fields, it coincides with the pullback of a quadratic refinement of the linking pairing on the torsion cohomology group $H^{2\ell+2}_{\rm tors}(M; \bar{\Lambda})$ with value in the local system $\bar{\Lambda}$. This quadratic refinement generalizes classical constructions for three-dimensional spin manifolds \cite{Taylor1984259} and on $4\ell+3$-dimensional boundaries \cite{Brumfiel1973, Miller_Lee_1987}. Modulo a technical condition that in our context coincides with the absence of discrete gauge anomalies, quadratic refinements have associated Arf invariants in $\mathbb{Z}_8$. These Arf invariants are topological invariants of $4\ell+3$-dimensional manifolds endowed with degree $2\ell+2$-Wu structures, generalizing the Rokhlin invariant of spin 3-manifolds. The Arf invariant of the action should appear as a phase in the partition function of the quantum Wu Chern-Simons theory.
 
\paragraph{Geometric quantization of finite groups} The construction of the state space of the gauged theory on a $4\ell+2$-dimensional manifold $M$ involves a discrete analogue of geometric quantization. The problem is the following. Suppose one is given a lattice $\Theta$ endowed with a non-degenerate alternating pairing. Given a finite abelian group $\Xi$ endowed with a non-degenerate symmetric pairing, we can construct the group $\mathsf{G} := \Theta \otimes_\mathbb{Z} \Xi$, which carries an induced non-degenerate alternating pairing. One can see $\mathsf{G}$ as a discrete analogue  of a symplectic vector space and one may be interested in quantizing this structure. The rules of canonical quantization apply straightforwardly. The commutative algebra of functions on $\mathsf{G}$ gets quantized to the Heisenberg central extension ${\rm Heis}_{\mathsf{G}}$ of $\mathsf{G}$ determined by the alternating pairing. We can take the Hilbert space to be the unique (up to isomorphism) irreducible representation of ${\rm Heis}_{\mathsf{G}}$, of dimension $|\mathsf{G}|^{1/2}$. As in ordinary geometric quantization, most of the subtleties appear when one asks for a canonical construction of the representation of ${\rm Heis}_{\mathsf{G}}$. 

In our context, $\mathsf{G} = H^{2\ell+1}_{\rm free}(M;\mathbb{Z}) \otimes_{\mathbb{Z}} \Gamma_0$, where $\Gamma_0 := \Lambda^\ast_0/\Lambda_0$ and $\Lambda_0$ is the sublattice left invariant by the monodromy representation of the local system $\bar{\Lambda}$.
The irreducible representation of ${\rm Heis}_{\mathsf{G}}$, or more precisely a direct sum of such representations over certain torsion fluxes, is the value of the field theory functor on a $4\ell+2$-dimensional manifold $M$. We therefore do need a canonical construction for the functor to be well-defined.

It turns out that the prequantum theory of Section \ref{SecPreq} and the Wilson operators of Section \ref{SecWilHooftOp} provide a copy of the left and right regular representations of ${\rm Heis}_{\mathsf{G}}$ on a vector space $V$ of dimension $|\mathsf{G}|$. To construct the Hilbert space, we pick a Lagrangian (i.e. maximal isotropic) subgroup $\mathsf{L}$ of $\mathsf{G}$, a direct analogue of a choice of polarization in geometric quantization. We lift $\mathsf{L}$ to a commutative subgroup of $\tilde{\mathsf{L}}$ of ${\rm Heis}_{\mathsf{G}}$ and restrict to vectors in $V$ invariant under the right action of $\tilde{\mathsf{L}}$. The left action of ${\rm Heis}_{\mathsf{G}}$ on this invariant subspace provides a realization of the irreducible module. 

To ensure that the representation constructed in this way is independent of the choice of lifted Lagrangians, we need to construct a canonical intertwiner for each pair of lifted Lagrangian, with the property that they close under composition. The details of this construction depends on the arithmetic properties of $|\mathsf{G}|$ and are spelled out in Section \ref{SecStateSpace}. In certain cases, extra structures restricting the set of allowed lifted Lagrangians are required for this construction to be possible, signalling a Hamiltonian anomaly of the theory.

This problem was studied in the case of symplectic vector spaces over finite fields by Gurevich and Hadani \cite{Gurevich2009, 2008arXiv0808.1664G}.

\paragraph{Quadratic refinements} Amusingly, quadratic refinements (see the definition in Appendix \ref{SecPairFinAbGroups}) appear in four distinct contexts in the present work. We already mentioned that the action is a quadratic refinement of the linking pairing of the $4\ell+3$-dimensional spacetime. 

A quadratic refinement of the $\mathbb{Z}_2$ valued pairing on $H^{2\ell+1}_{\rm free}(M;\bar{\Lambda})/2H^{2\ell+1}_{\rm free}(M;\bar{\Lambda})$ appears in the construction of the Wilson operators in Section \ref{SecWilHooftOp}. It Arf invariant, valued in $\{0,1/2\}$, is a cobordism invariant of $4\ell+2$-dimensional manifolds endowed with a degree $2\ell+2$ Wu structure, generalizing Atiyah's Arf invariant of spin structures \cite{Atiyah1971}.

In Section \ref{SecStateSpace}, we use quadratic refinements to characterize the lifts of Lagrangian subgroups of $\mathsf{G}$ to commutative subgroups of ${\rm Heis}_{\mathsf{G}}$. In the same section, certain quadratic refinements are part of the extra structure needed on $4\ell+2$-dimensional manifolds to resolve the Hamiltonian anomaly of the gauged Wu Chern-Simons theories.

\paragraph{Quantum Wu Chern-Simons theories} The present work lays the foundations for the study of quantum Wu Chern-Simons theories, which were classified in the three dimensional case in \cite{Belov:2005ze}. One of the main results of \cite{Belov:2005ze} (see also \cite{Witten:2003ya}) was the quantum equivalence of certain distinct classical spin Chern-Simons theories. It would be very interesting so see how these equivalences generalize to higher dimensions. However we do not touch this problem in the present paper.

We can nevertheless note the following difference with the analysis of \cite{Belov:2005ze}. There, essentially due to the fact that any 3-manifold bounds, no spin structure was required to define the Chern-Simons theories associated to even lattices. As we explain in Section \ref{SecEvLat}, this is not necessarily true in higher dimensions, and only theories associated to lattices with $2\mathbb{Z}$-valued pairings are guaranteed to be well-defined without a Wu structure. Manifolds carrying a local system of even lattices have a preferred Wu structure, but the prequantum theory may be anomalous with this Wu structure, what would prevent quantization.

\paragraph{Organization of the paper} In Section \ref{SecBackStruct}, we describe the structures required on spacetime for the definition of the field theories of interest here. In Section \ref{SecFields}, we describe the classical Chern-Simons gauge fields, as well as the discrete gauge fields appearing in the gauged theories. We model gauge fields using differential cocycles with value in local systems of lattices. We also describe the measure of the discrete path integral over the discrete gauge fields, which is analogous to the measure entering (higher) abelian Dijkgraaf-Witten theory \cite{Dijkgraaf:1989pz, Monnier:2015qxa}.

In Section \ref{SecAct}, we define the Lagrangian of the theory, and remark that the corresponding naive action fails to be gauge invariant. We show how this problem is solved by seeing the Lagrangian as defining a class in a generalized cohomology. We show that on boundaries, the action admits a formula in terms of differential forms on the bounded manifold. We study the dependence of the action on the Wu structure and shows that it is the pullback of a quadratic refinement of the linking pairing on the degree $2\ell+2$ torsion cohomology of spacetime. We also study an anomaly with respect to the discrete symmetry that will later be gauged.

In Section \ref{SecPreq}, we construct the prequantum theory. After proving a general theorem characterizing field theory functors, we construct the prequantum theory using techniques used previously in the context of Dijkgraaf-Witten theory \cite{Freed:1991bn}, suitably generalized for Lagrangians valued in a generalized cohomology theory. On a $4\ell+2$-dimensional manifold, we show that the state space is trivial unless a condition on the torsion background flux is satisfied. This condition generalizes similar conditions guaranteeing the absence of a global gauge anomaly in the self-dual field theory \cite{Witten:1999vg, Monniera}. We prove that the prequantum theory defines a field theory functor in the sense of Atiyah-Segal in Theorem \ref{ThPreqFTFunc}.

From Section \ref{SecPartFunc} on, we turn to the construction of the gauged Wu Chern-Simons theories. We start by defining the partition function, and describe a gauge anomaly that forces it to vanish on certain manifolds.

In Section \ref{SecWilHooftOp}, we study Wilson operators. We construct them as non-flat field configurations on cylinders, which allows us to derive their properties from the action principle of the theory. On a $4\ell+2$-dimensional manifold, the group $\mathsf{G}_M := H^{2\ell+1}_{\rm free}(M; \bar{\Lambda}^\ast)/H^{2\ell+1}_{\rm free}(M; \bar{\Lambda})$ is endowed with a perfect alternating pairing. We show that the Wilson operators form a copy of the associated Heisenberg group.

In Section \ref{SecStateSpace}, we turn to the construction of the state space of the gauged theory. Informally, the state space is a certain direct sum over torsion background fluxes of irreducible modules for the Heisenberg group above. Its construction is however rather subtle and involves a discrete analogue of geometric quantization, as already discussed above. Like the partition function, the state space exhibits an anomaly, and in general some extra structure is required on $4\ell+2$-dimensional manifolds in order to construct it.

In Section \ref{SecVectManBound}, we explain how the path integral over the discrete gauge fields on a $4\ell+3$-dimensional manifold with boundary produces naturally a vector in the state space associated to the boundary. We encounter here an analogue of the gauge anomaly that was affecting the partition function.

In Section \ref{SecGluing}, we prove that the homomorphisms associated to manifolds with boundaries glue consistently. This involves in particular a non-trivial relation between the partition function anomaly and the Hamiltonian anomaly of the state space.

In Section \ref{SecHamAn}, we investigate in more detail the Hamiltonian anomaly affecting the state space. We explain informally how the state space can be constructed without invoking extra structures on $4\ell+2$-dimensional manifolds. The price to pay is that the state space is no longer a Hilbert space, but rather a "Hilbert space up to the tensor product with a Hermitian line", more precisely an object in a category linearly equivalent to the category of Hilbert space, but not canonically so.

In Section \ref{SecEx}, we point out some special cases in which our construction simplify. We show that the anomalies are absent either if the lattice $\Lambda$ has a $2\mathbb{Z}$-valued pairing, or if $\Gamma_0$ has odd order. In particular, the anomaly is always absent in $U(1)$ Wu Chern-Simons theories, which correspond to the lattices $\Lambda = \sqrt{2k}\mathbb{Z}$, for $k$ the half-integral level. For $k$ an integer, corresponding to ordinary $U(1)$ Chern-Simons theories, there is a always a unique Wu structure. This is consistent with the fact that these theories do not require extra structures on spacetime. 

Several appendices introduce concepts used in the main text and prove technical results. Appendix \ref{SecPairFinAbGroups} reviews pairings on abelian groups, quadratic refinements and their classification. Appendix \ref{SecPerPair} reviews the various cup product pairings used in the main text, as well as the linking pairing. We show that a $4\ell+3$-dimensional manifold determines a Lagrangian subgroup of the cohomology of its boundary for various cohomology groups. 

Appendix \ref{SecWuStruct} contains the definition of Wu classes and Wu structure for cohomology with value in a local system of lattices. We also review the higher cup products of Steenrod. In Appendix \ref{SecEth}, we define E-theory, the generalized cohomology relevant for the construction of the action. We construct a non-abelian cochain model for E-theory and construct a fundamental E-theory class that allows us to integrate top E-cohomology classes on Wu manifolds. In Appendix \ref{ApBordCatWMan}, we define the bordism categories that are the domains of the field theory functors constructed here. In Appendix \ref{AppHeisGr2Gr} and \ref{App2Groups}, we review some basic facts about Heisenberg groups and 2-groups.

\section{Background structures}

\label{SecBackStruct}

Let $\Lambda$ be an $n$-dimensional integral Euclidean lattice and let $\Lambda^\ast$ be the dual lattice. The $\mathbb{Z}$-valued pairing on $\Lambda$ can be extended to a $\mathbb{Q}$-valued pairing on $\Lambda^\ast$. $\Gamma = \Lambda^\ast/\Lambda$ is a finite abelian group, endowed with a pairing $(\bullet, \bullet)_\Gamma: \Gamma \times \Gamma \rightarrow \mathbb{Q}/\mathbb{Z}$ induced from the pairing on $\Lambda^\ast$ modulo $1$. $(\bullet, \bullet)_\Gamma$ is perfect, which means that it induces an isomorphism between $\Gamma$ and its group of characters, see Appendix \ref{SecPairFinAbGroups}. Write $\Gamma^{(2)}_{\rm deg} := \Lambda/2\Lambda$ endowed with the $\frac{1}{2}\mathbb{Z}/\mathbb{Z}$-valued pairing induced by half the pairing on $\Lambda$ modulo 1. Let $\Gamma^{(2)}$ be the quotient of $\Gamma^{(2)}_{\rm deg}$ by the radical of the pairing. $\Gamma^{(2)}$ is endowed with a perfect pairing as well.

The theories we will construct are $4\ell+3$-dimensional field theories defined on smooth oriented compact manifolds, possibly with boundary, endowed with the following data:
\begin{enumerate}
\item An orthogonal local system $\bar{\Lambda}$ of lattices with fibers isometric to $\Lambda$. "Orthogonal" means that its structure group is $O(\Lambda)$, the finite group of automorphisms of the lattice preserving its pairing (see Appendix \ref{SecPerPair}). 
\item A degree $2\ell+2$ Wu structure relative to $\bar{\Gamma}^{(2)}$, as defined in Appendix \ref{SecWuStruct}.
\end{enumerate}
Because of an anomaly, a third piece of data is actually required on $4\ell+2$-dimensional manifolds if a certain subgroup $\Gamma_0$ of $\Gamma$ has even order, as will be discussed Section \ref{SecExStruct}. Moreover, some constraints have to be imposed on the Wu structures of $4\ell+2$-dimensional manifolds, as explained in Sections \ref{SecActTori} and \ref{SecCobInv}.

In the following, we will call the data above on a manifold $M$ a \emph{$\mathfrak{W}$-structure}, including the orientation, the smooth structure and the compactness assumption. Such manifolds will be referred to as $\mathfrak{W}$-manifolds and unless otherwise noted, all the manifolds will be assumed to be $\mathfrak{W}$-manifolds. $\bar{\Lambda}$ induces local systems of groups $\bar{\Gamma}$ and $\bar{\Gamma}^{(2)}$ with respective fibers $\Gamma$ and $\Gamma^{(2)}$. A $\mathfrak{W}$-manifold also carries a flat vector bundle $\mathscr{V} := \bar{\Lambda} \otimes_\mathbb{Z} \mathbb{R}^n$.

We now collect a few remarks about Wu structures, most of them derived from the discussion in Appendix \ref{SecWuStruct}.
\begin{itemize} 
\item The Wu structures on $M$ are classified by homotopy classes of maps from $M$ into a classifying space $BSO[\Gamma^{(2)}, 2\ell+2]$ defined in Appendix \ref{SecWuStruct}, subject to a constraint.
\item As explained in Appendix \ref{SecWuStruct}, we endow the classifying space of Wu structures with a pair of cochains $(\hat{\eta}, \hat{\nu})$ valued in $\bar{\Gamma}^{(2)}$ and satisfying $d\hat{\eta} = \hat{\nu}$. These cochains can be pulled back to any $\mathfrak{W}$-manifold $M$ through its classifying map. (We use the same notation for the pulled-back cochains.) $\hat{\nu}$ is a cocycle representing the (trivial) Wu class on $M$, and the cochain $\hat{\eta}$ encodes the choice of Wu structure.
\item Any smooth compact oriented manifold of dimension $4\ell+3$ or less can be given a degree $2\ell+2$ Wu structure relative to any local system $\bar{\Lambda}$.
\item The set of degree $2\ell+2$ Wu structures on a closed manifold $M$ is a torsor for $H^{2\ell+1}(M; \bar{\Gamma}^{(2)})$. Under a change of Wu structure associated to a class in $H^{2\ell+1}(M; \bar{\Gamma}^{(2)})$, the cochain $\hat{\eta}$ changes by a cocycle representing this cohomology class.
\item A direct corollary of the above is that if the pairing on $\Lambda$ is valued in $2\mathbb{Z}$, $\Gamma^{(2)}$ is the trivial group and there is a unique Wu structure. In this case, the latter plays no role and we may as well drop it from the definition of $\mathfrak{W}$-manifolds.
\item If the pairing on $\Lambda$ is even, then $\hat{\nu} = 0$, so $\hat{\eta}$ is closed. There is a preferred Wu structure corresponding to $\hat{\eta}$ representing the trivial class in $H^{2\ell+1}(M; \bar{\Gamma}^{(2)})$. In other words, the torsor of Wu structures is canonically trivialized.
\item If $\bar{\Lambda}$ is the trivial local system with fiber $\mathbb{Z}$ with the generator having norm $1$, a Wu structure in the sense above coincides with a Wu structures in the sense of \cite{Miller_Lee_1987, Brumfiel1973, Sati:2011rw}. If moreover $\ell = 0$, then a Wu structure is simply a spin structure. 
\end{itemize}

A \emph{morphism} $\phi$ from a $\mathfrak{W}$-manifold $M$ to a $\mathfrak{W}$-manifold $N$ is an orientation-preserving diffeomorphism from $M$ to $N$ preserving the $\mathfrak{W}$-structures. By this, we mean first that the local systems $\bar{\Lambda}_M$ and $\bar{\Lambda}_N$ over $M$ and $N$ are related by $\bar{\Lambda}_M = \phi^\ast(\bar{\Lambda}_N)$. Second, the maps $\omega_M$ and $\omega_N$ classifying the Wu structures on $M$ and $N$ are such that $\omega_N \circ \phi$ is homotopic to $\omega_M$. A pair $(M,N)$ of $\mathfrak{W}$-manifolds is a $\mathfrak{W}$-manifold $M$ with a submanifold $N$ endowed with a $\mathfrak{W}$-structure such that the inclusion $N \hookrightarrow M$ is a morphism in the sense above.

\section{Fields}

\label{SecFields}

\subsection{Cohomology groups associated to a local system of lattices}

\label{SecCohomLat}

Consider the setup of Section \ref{SecBackStruct}. We have the following square of short exact sequences of abelian groups
\be
\label{EqSqExSeqLatGrps}
\begin{tikzcd}
\Lambda \arrow[r] \arrow[d] & \Lambda^\ast \arrow[r] \arrow[d] & \Gamma \arrow[d] \\
\Lambda \arrow[r] \arrow[d] & \mathbb{R}^n \arrow[r] \arrow[d] & \mathbb{R}^n/\Lambda \arrow[d] \\
1 \arrow[r] & \mathbb{R}^n/\Lambda^\ast \arrow[r] & \mathbb{R}^n/\Lambda^\ast
\end{tikzcd}
\ee
We have a corresponding square of exact sequences where the groups $\Lambda$ and $\Lambda^\ast$ are replaced by the local systems $\bar{\Lambda}$ and $\bar{\Lambda}^\ast$ and $\mathbb{R}^n$ is replaced with the vector bundle $\mathscr{V}$. The latter square induces a corresponding grid of long exact sequences for the relative cohomology of a pair $N \subset M$ of $\mathfrak{W}$-manifolds (for space reasons, we suppress the labels $(M,N)$ in the cohomology groups):
\be
\label{DiagCohomGroups}
\begin{tikzcd}
& ... \arrow[d] & ... \arrow[d] & ... \arrow[d] & ... \arrow[d] & \\
... \arrow[r] 
& H^p(\bar{\Lambda}) \arrow[r, "\iota_\Lambda"] \arrow[d, "{\rm id}"] 
& H^p(\bar{\Lambda}^\ast) \arrow[r, "\rho_{\Lambda^\ast}"] \arrow[d, "\iota_{\Lambda^\ast}"] \arrow[rd, dotted, "\iota"]
& H^p(\bar{\Gamma}) \arrow[d, "\iota_{\Gamma}"] \arrow[r, "\beta_\Gamma"] 
& H^{p+1}(\bar{\Lambda}) \arrow[r] \arrow[d] 
& ... \\
... \arrow[r]
& H^p(\bar{\Lambda}) \arrow[r, "\iota_{\Lambda^\ast} \circ \iota_\Lambda"] \arrow[d]
& H^p(\mathscr{V}) \arrow[r] \arrow[d] 
& H^p(\mathscr{V}/\Lambda) \arrow[r, "\beta"] \arrow[d] 
& H^{p+1}(\bar{\Lambda}) \arrow[r] \arrow[d] 
& ...\\
... \arrow[r] 
& 1 \arrow[r] \arrow[d]
& H^p(\mathscr{V}/\Lambda^\ast) \arrow[r] \arrow[d]
& H^p(\mathscr{V}/\Lambda^\ast) \arrow[r] \arrow[d]
& 1 \arrow[r] \arrow[d]
& ... \\
& ... & ... & ... & ... &
\end{tikzcd}
\ee
We have named some of the homomorphisms for future reference.

We define the torsion subgroup $H^p_{\rm tors}(M,N; \bar{\Lambda}^\ast) = {\rm ker}(\iota_{\Lambda^\ast}) \subset H^p(M,N; \bar{\Lambda}^\ast)$, and let $H^p_{\rm free}(M,N; \bar{\Lambda}^\ast) := H^p(M,N; \bar{\Lambda}^\ast)/H^p_{\rm tors}(M,N; \bar{\Lambda}^\ast)$ be the free quotient. We use similar notations for the cohomology groups valued in $\bar{\Lambda}$.

\subsection{Differential cochains valued in local systems}

We need a slight generalization of the concept of differential cocycle introduced in \cite{springerlink:10.1007/BFb0075216} (see also \cite{hopkins-2005-70}), adapted to our setup involving local systems of lattices. Let $N \subset M$ be a pair of $\mathfrak{W}$-manifolds, with $M$ possibly having a boundary and $N$ disjoint from $\partial M$. If $N = \emptyset$, we follow the usual convention and omit the corresponding label. Let $C^p(M,N; \bar{\Lambda})$ and $C^p(M,N; \mathscr{V})$ be the groups of degree $p$ smooth singular cochains on $M$ valued in $\bar{\Lambda}$ and $\mathscr{V}$, respectively, and vanishing on $N$. Let $\Omega^p(M,N; \mathscr{V})$ be the group of degree $p$ smooth differential forms on $M$ valued in $\mathscr{V}$ and vanishing on $N$. Consider the abelian group
\be
\check{C}^p(M,N; \bar{\Lambda}) := C^p(M,N; \bar{\Lambda}) \times C^{p-1}(M,N; \mathscr{V}) \times \Omega^p(M,N; \mathscr{V}) \;,
\ee
the group of \emph{differential cochains} on $M$ relative to $N$. We write differential cochains with a caron: $\check{c} \in \check{C}^p(M,N; \bar{\Lambda})$, while ordinary cochains and differential forms are written with a hat: $\check{c} = (\hat{a}, \hat{h}, \hat{\omega})$. We call $\hat{a} =: [\check{c}]_a$ the \emph{characteristic}, $\hat{h} =: [\check{c}]_h$ the \emph{connection} and $\hat{\omega} =: [\check{c}]_\omega$ the \emph{curvature} of $\check{c}$. A differential cochain is \emph{flat} if its curvature vanishes. It is \emph{topologically trivial} if its characteristic vanishes.

We define a differential on $\check{C}^p(M,N; \bar{\Lambda})$ by
\be
\label{EqDefDiffCompDiffCoch}
d(\hat{a},\hat{h},\hat{\omega}) = (d\hat{a}, \hat{\omega} - d\hat{h} - \hat{a}, d\hat{\omega}) \;.
\ee
In the connection component of the right hand side, we used implicitly the homomorphisms $C^p(M,N; \bar{\Lambda}) \rightarrow C^p(M,N; \mathscr{V})$ and $\Omega^p(M,N) \rightarrow C^p(M,N; \mathscr{V})$, where the latter map is given by integration over the simplices. We define the group $\check{Z}^p(M,N; \bar{\Lambda}) := {\rm ker}(d)$ of degree $p$ differential cocycles as the group of closed differential cochains. 

We define a groupoid structure on $\check{Z}^p(M,N; \bar{\Lambda})$, by having an arrow from $\check{c}_1$ to $\check{c}_2$ if and only if $\check{c}_2 - \check{c}_1 = d\check{e}$, for $\check{e} = (\hat{b}, \hat{j}, 0)$ a flat differential cochain vanishing on $N \sqcup \partial M$. In components:
\be
\label{EqArrowGroupoidDiffCoc}
\hat{a}_2 = \hat{a}_1 + d\hat{b} \;, \quad \hat{h}_2 = \hat{h}_1 - d\hat{j} - \hat{b} \;, \quad \omega_2 = \omega_1 \;.
\ee
Equivalently, the groupoid $\check{Z}^p(M; \bar{\Lambda})$ is the action groupoid with respect to the action \eqref{EqArrowGroupoidDiffCoc}.

The group of components of the groupoid $\check{Z}^p(M,N; \bar{\Lambda})$ is the group of degree $p$ \emph{differential cohomology classes} $\check{H}^p(M,N; \bar{\Lambda})$ on $M$ relative to $N$. We write differential cohomology classes, and cohomology classes in general, as lower cases without accents. If $\hat{a}$ or $\check{c}$ are cocycles, $a$ or $c$ are the corresponding cohomology classes. 

A flat differential cocycle $\check{c} = (\hat{a}, \hat{h}, 0) \in \check{Z}^p(M,N; \bar{\Lambda})$ satisfies $\hat{a} = -d\hat{h}$, which means that the image of $a$ in $H^p(M,N; \mathscr{V}) = H^p(M,N; \bar{\Lambda}) \otimes_\mathbb{Z} \mathbb{R}$ vanishes. Therefore $a$ is a torsion class in $H^p(M,N; \bar{\Lambda})$. 

We can define a cup product on the groups of differential cochains as follows \cite{springerlink:10.1007/BFb0075216, hopkins-2005-70}:
\begin{align}
\label{EqDefCupProdDiffCohom}
\cup: \: & \check{C}^p(M,N; \bar{\Lambda}) \times \check{C}^q(M,N; \bar{\Lambda}) \rightarrow \check{C}^{p+q}(M,N; \mathbb{Z}) \\
& \check{c}_1 \cup \check{c}_2 = (\hat{a}_1 \cup \hat{a}_2, (-1)^{{\rm deg} a_1} \hat{a}_1 \cup \hat{h}_2 + \hat{h}_1 \cup \hat{\omega}_2 + H^\wedge_\cup(\hat{\omega}_1, \hat{\omega}_2) , \hat{\omega}_1 \wedge \hat{\omega}_2) \notag
\end{align}
$\check{C}^{p+q}(M,N; \mathbb{Z})$ is the group of differential cochains valued in the trivial $\mathbb{Z}$-local system. The cup products on the right-hand side are the ones associated to the pairings on $\Lambda$ and $\mathbb{R}^n$, see Appendix \ref{SecPerPair}. $\wedge$ is the wedge product of forms. The wedge product is homotopically equivalent to the cup product, and $H^\wedge_\cup$ is any choice of equivalence, i.e. a homomorphism from $\Omega^\bullet(M,N; \mathscr{V}) \times \Omega^\bullet(M,N; \mathscr{V})$ to $C^\bullet(M,N; \mathscr{V})$ satisfying
\be
dH^\wedge_\cup(\hat{\omega}_1, \hat{\omega}_2) + H^\wedge_\cup(d\hat{\omega}_1, \hat{\omega}_2) + (-1)^{{\rm deg} \omega_1} H^\wedge_\cup(\hat{\omega}_1, d\hat{\omega}_2) = \hat{\omega}_1 \wedge \hat{\omega}_2 - \hat{\omega}_1 \cup \hat{\omega}_2 \;.
\ee 
In the first term on the right-hand side, we first perform the wedge product and see the resulting form as a cochain. In the second one, we first see $\hat{\omega}_1$ and $\hat{\omega}_2$ as cochains and take their cup product. The cup product \eqref{EqDefCupProdDiffCohom} passes to a well-defined cup product in differential cohomology.

\subsection{Background fields}

\label{SecFieldGroup}

The domains of the field theory functors to be constructed will be $\mathfrak{W}$-manifolds endowed with a differential cohomology class, a "background gauge field" in physical parlance. We introduce here the relevant notation.

We define $\mathsf{Z}_{M,N} := \check{Z}^{2\ell+2}(M,N;\bar{\Lambda})$ and $\mathsf{Y}_{M,N} := \check{H}^{2\ell+2}(M,N,\bar{\Lambda})$. We write $\mathsf{Z}_{M,N,\check{w}} \subset \mathsf{Z}_{M,N}$ for the subgroupoid consisting of differential cocycles restricting to $\check{w} \in \mathsf{Z}_{\partial M}$ on $\partial M$. Similarly, we define the subgroup $\mathsf{Y}_{M,N,\check{w}} \subset \mathsf{Y}_{M,N}$ consisting of differential cohomology classes restricting to $\check{w}$ on $\partial M$. (Recall that the arrows in the groupoid of differential cocycles are all relative to $\partial M$.) $\mathsf{Z}_{M,N,\check{w}}$ and $\mathsf{Y}_{M,N,\check{w}}$ are torsors for $\mathsf{Z}_{M,N,\check{0}}$ and $\mathsf{Y}_{M,N,\check{0}}$, where $\check{0}$ denotes the vanishing differential cocycle on $\partial M$. Writing $P := N \sqcup \partial M$, we have $\mathsf{Z}_{M,N,\check{0}} = \mathsf{Z}_{M,P}$ and $\mathsf{Y}_{M,N,\check{0}} = \mathsf{Y}_{M,P}$. There are short exact sequences \cite{springerlink:10.1007/BFb0075216,hopkins-2005-70}:
\be
\label{EqShExSeqChar}
\begin{tikzcd}
0 \arrow[r] & \Omega^{2\ell+1}(M,P;\mathscr{V})/\Omega_{\Lambda}^{2\ell+1}(M,P;\mathscr{V}) \arrow[r] & \mathsf{Y}_{M,P} \arrow[r,"{[\bullet]_a}"] & H^{2\ell+2}(M,P;\bar{\Lambda}) \arrow[r] & 0 
\end{tikzcd}
\ee
\be
\label{EqShExSeqFieldStr}
\begin{tikzcd}
0 \arrow[r] & H^{2\ell+1}(M,P;\mathscr{V}/\bar{\Lambda}) \arrow[r] & \mathsf{Y}_{M,P} \arrow[r,"{[\bullet]_\omega}"] & \Omega_{\Lambda}^{2\ell+2}(M,P;\mathscr{V}) \arrow[r] & 0 
\end{tikzcd}
\ee
where $\Omega_{\Lambda}^{2\ell+2}(M,P;\mathscr{V})$ denotes the subgroup of differential forms valued in $\mathscr{V}$ whose periods over $\bar{\Lambda}^\ast$-valued cycles are integral.

We also define $\mathsf{Z}_{M,N}^{\rm flat} \subset \mathsf{Z}_{M,N}$, the subgroupoid of flat differential cocycles. $\mathsf{Y}_{M,N}^{\rm flat} \subset \mathsf{Y}_{M,N}$ is the subgroup of flat differential cohomology classes. $\mathsf{Z}_{M,N,\check{w}}^{\rm flat} := \mathsf{Z}_{M,N}^{\rm flat} \cap \mathsf{Z}_{M,N,\check{w}}$, $\mathsf{Y}_{M,N,\check{w}}^{\rm flat} := \mathsf{Y}_{M,N}^{\rm flat} \cap \mathsf{Y}_{M,N,\check{w}}$, for $\check{w} \in \mathsf{Z}_{\partial M}^{\rm flat}$. The short exact sequence \eqref{EqShExSeqFieldStr} shows that
\be
\mathsf{Y}^{\rm flat}_{M,P} \simeq H^{2\ell+1}(M,P;\mathscr{V}/\bar{\Lambda}) \;.
\ee
If $x \in \mathsf{Y}^{\rm flat}_{M,P}$, with a representative cocycle $\check{x} = (\hat{a}, \hat{h}, 0)$, then $\beta(x) = -a$, see \eqref{DiagCohomGroups}. The image of $\beta$ is $H^{2\ell+2}_{\rm tors}(M,P;\bar{\Lambda})$.

A $\mathfrak{W}$-manifold $M$ endowed with a flat differential cocycle in $\mathsf{Z}^{\rm flat}_M$ will be called a $(\mathfrak{W},\mathsf{Z}^{\rm flat})$-manifold for brevity.

\subsection{Discrete gauge fields}

\label{SecDiscrGF}

The field theories to be constructed involve a quotient operation with respect to the action of discrete groups on $\mathsf{Y}^{\rm flat}_{M,N}$. In physical parlance, they are obtained through the gauging of a discrete symmetry. We describe here these discrete actions.

Let $\mathsf{F}_{M,N}$ be the subgroupoid of $\mathsf{Z}^{\rm flat}_{M,N}$ defined as follows. Its objects are flat differential cocycles $(\hat{a}, \hat{h}, 0) \in \mathsf{Z}^{\rm flat}_{M,N}$ satisfying the following discreteness constraint on their connection:
\be
\label{EqConstrFields}
\hat{h} \in C^{2\ell+1}(M,N; \bar{\Lambda}^\ast) \subset C^{2\ell+1}(M,N; \mathscr{V}) \;.
\ee
Its arrows are the flat differential cochains $(\hat{b}, \hat{j}, 0)$ of degree $2\ell + 1$ vanishing on $\partial M$, satisfying the analogous constraint
\be
\label{EqConstrFields2}
\hat{j} \in C^{2\ell}(M,N; \bar{\Lambda}^\ast) \;.
\ee
We write $\mathsf{E}_{M,N}$ for the group of components of $\mathsf{F}_{M,N}$.

As above, if $\check{w} \in \mathsf{F}_{\partial M}$, we will write $\mathsf{F}_{M,N,\check{w}}$ for the subgroupoid of $\mathsf{F}_{M,N}$ consisting of differential cocycles restricting to $\check{w}$ on $\partial M$. We write $\mathsf{E}_{M,N,\check{w}}$ for the corresponding group of components. $\mathsf{E}_{M,N,\check{w}}$ is a torsor for the finite abelian group $\mathsf{E}_{M,N,\hat{0}}$. $\mathsf{E}_{M,N} = \bigcup_{\check{w} \in \mathsf{F}_{\partial M}} \mathsf{E}_{M,N,\check{w}}$ is an abelian group and is finite if and only if $M$ is closed. (We assume that an empty boundary carries a unique vanishing cocycle.)

In the rest of this section, we study the structure of $\mathsf{E}_{M,N,\check{0}} \simeq \mathsf{E}_{M,P}$, $P := N \sqcup \partial M$. This yields the structure of $\mathsf{E}_{M,N,\check{w}}$ and $\mathsf{E}_{M,N}$ by the remarks above. 
\begin{proposition}
\label{PropCharEM}
We have an isomorphism
\be
\label{EqEInTermsOfGammaCohom}
\mathsf{E}_{M,P} \simeq H^{2\ell+1}(M,P; \bar{\Gamma})
\ee
\end{proposition}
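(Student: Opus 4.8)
The plan is to identify the groupoid $\mathsf{F}_{M,P}$ with a groupoid whose components obviously compute $H^{2\ell+1}(M,P;\bar\Gamma)$, essentially by the short exact sequence $\bar\Lambda \to \bar\Lambda^\ast \to \bar\Gamma$ read through the mapping machinery of cochains. A flat differential cocycle $(\hat a,\hat h,0)\in\mathsf{F}_{M,P}$ satisfies $d\hat a=0$ and, by flatness together with \eqref{EqDefDiffCompDiffCoch}, $\hat a=-d\hat h$ with $\hat h\in C^{2\ell+1}(M,P;\bar\Lambda^\ast)$. Thus the datum $(\hat a,\hat h)$ is equivalent to a single cochain $\hat h\in C^{2\ell+1}(M,P;\bar\Lambda^\ast)$ together with the requirement that $d\hat h$, a priori $\mathscr V$-valued, actually lies in the image of $\bar\Lambda$; equivalently, that the reduction $\rho_{\Lambda^\ast}(\hat h)\in C^{2\ell+1}(M,P;\bar\Gamma)$ is closed. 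So objects of $\mathsf{F}_{M,P}$ map, via $\hat h\mapsto \rho_{\Lambda^\ast}(\hat h)$, onto $Z^{2\ell+1}(M,P;\bar\Gamma)$. First I would check this map is surjective: given a $\bar\Gamma$-cocycle $\hat g$, lift it (cochain-wise, since $\bar\Lambda^\ast\to\bar\Gamma$ is surjective on fibres) to $\hat h\in C^{2\ell+1}(M,P;\bar\Lambda^\ast)$; then $d\hat h$ reduces to $d\hat g=0$, hence $d\hat h$ is $\bar\Lambda$-valued, so $(-d\hat h,\hat h,0)$ is a legitimate object of $\mathsf{F}_{M,P}$ mapping to $\hat g$.

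Next I would analyse the arrows. An arrow between two objects is a flat degree $2\ell+1$ differential cochain $(\hat b,\hat j,0)$ vanishing on $\partial M$ with $\hat j\in C^{2\ell}(M,P;\bar\Lambda^\ast)$; flatness forces $\hat b=-d\hat j$, and by \eqref{EqArrowGroupoidDiffCoc} the effect on $\hat h$ is $\hat h\mapsto \hat h-d\hat j-\hat b=\hat h-d\hat j+d\hat j=\hat h$ — wait, that would make arrows trivial, so I must be more careful: the shift is $\hat h_2=\hat h_1-d\hat j-\hat b$ and simultaneously $\hat a_2=\hat a_1+d\hat b$, and $\hat b$ here is an independent $\bar\Lambda$-valued cochain, not forced to equal $-d\hat j$ (flatness of the arrow cochain $(\hat b,\hat j,0)$ means $d(\hat b,\hat j,0)$ has vanishing curvature, i.e. only constrains the top component, which is automatically $0$). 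So the arrow data is a pair $(\hat b,\hat j)$ with $\hat b\in C^{2\ell+1}(M,P;\bar\Lambda)$ and $\hat j\in C^{2\ell}(M,P;\bar\Lambda^\ast)$, acting by $\hat h\mapsto \hat h -d\hat j-\hat b$. Under $\rho_{\Lambda^\ast}$ this sends the $\bar\Gamma$-cocycle $\rho_{\Lambda^\ast}(\hat h)$ to $\rho_{\Lambda^\ast}(\hat h)-d(\rho_{\Lambda^\ast}\hat j)$, since $\rho_{\Lambda^\ast}(\hat b)=0$ as $\hat b$ is $\bar\Lambda$-valued. Hence two objects in the same component of $\mathsf{F}_{M,P}$ have cohomologous images in $H^{2\ell+1}(M,P;\bar\Gamma)$, so the map descends to $\mathsf{E}_{M,P}\to H^{2\ell+1}(M,P;\bar\Gamma)$, and it is surjective by the lifting argument above. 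For injectivity I would show: if $\rho_{\Lambda^\ast}(\hat h_2)-\rho_{\Lambda^\ast}(\hat h_1)=d\hat k$ for some $\bar\Gamma$-cochain $\hat k$ (vanishing on $P$), lift $\hat k$ to $\hat j\in C^{2\ell}(M,P;\bar\Lambda^\ast)$; then $\hat h_2-\hat h_1+d\hat j$ reduces to $0$ mod $\bar\Lambda$, so it equals $-\hat b$ for a unique $\bar\Lambda$-valued cochain $\hat b$, and one checks $d\hat b = d(\hat h_1-\hat h_2-d\hat j)=\hat a_2-\hat a_1$ using $\hat a_i=-d\hat h_i$; thus $(\hat b,\hat j,0)$ is an arrow from $\check c_1$ to $\check c_2$, proving the two objects lie in the same component.

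I expect the main obstacle to be purely bookkeeping: keeping straight the various implicit maps $C^p(\bar\Lambda)\to C^p(\bar\Lambda^\ast)\to C^p(\bar\Gamma)$ and $C^p(\bar\Lambda^\ast)\to C^p(\mathscr V)$, and verifying that the curvature and characteristic components impose exactly the constraints claimed and nothing more — in particular that flatness of an arrow cochain $(\hat b,\hat j,0)$ imposes no relation between $\hat b$ and $\hat j$ (it only kills the top slot), while flatness of an object does force $\hat a=-d\hat h$. A secondary point worth stating carefully is the role of the relative condition (vanishing on $P=N\sqcup\partial M$): all lifts must be chosen to vanish on $P$, which is possible because $\bar\Lambda^\ast\to\bar\Gamma$ is a surjection of local systems and cochains vanishing on a subcomplex still surject. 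Once these are pinned down, the isomorphism \eqref{EqEInTermsOfGammaCohom} follows formally; I would also remark that this is exactly the statement that the functor "take $\rho_{\Lambda^\ast}$ of the connection" is an equivalence of groupoids from $\mathsf{F}_{M,P}$ to the groupoid of $\bar\Gamma$-cocycles with coboundary arrows, which makes the naturality of the isomorphism under morphisms of $\mathfrak W$-manifolds transparent.
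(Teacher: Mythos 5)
Your proposal is correct and follows essentially the same route as the paper: the isomorphism is reduction of the connection $\hat h$ modulo $\bar\Lambda$, with surjectivity by lifting $\bar\Gamma$-cocycles to $\bar\Lambda^\ast$-cochains vanishing on $P$ and injectivity by lifting the trivializing $\bar\Gamma$-cochain to an arrow $(\hat b,\hat j,0)$ of $\mathsf{F}_{M,P}$. You in fact spell out the well-definedness on components and the inverse construction in more detail than the paper does, and your observation that flatness of an arrow cochain imposes no relation between $\hat b$ and $\hat j$ is the correct reading of the groupoid structure.
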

\begin{proof}
Suppose $x \in \mathsf{E}_{M,\partial M}$ and pick a cocycle representative $\check{x} = (\hat{a}, \hat{h}, 0) \in \mathsf{F}_M$. Then $\hat{y} := \hat{h}$ mod $\Lambda$ is a $\bar{\Gamma}$-valued cocycle, associated to a class $y \in H^{2\ell+1}(M,\partial M; \bar{\Gamma})$.

Conversely, let $y \in H^{2\ell+1}(M,\partial M; \bar{\Gamma})$. Pick a cocycle representative $\hat{y}$ in $Z^{2\ell+1}(M; \bar{\Gamma})$ vanishing on $\partial M$. Lift it to a cochain $\hat{h} \in Z^{2\ell+1}(M; \bar{\Lambda}^\ast)$ vanishing on $\partial M$. Then $\hat{a}:= -d\hat{h} \in Z^{2\ell+1}(M; \bar{\Lambda})$ and vanishes on $\partial M$. Hence $\check{x} := (\hat{a}, \hat{h}, 0) \in \mathsf{F}_{M,\hat{0}}$ and its equivalence class is independent of the choices of lifts. 

The two maps above are inverse of each other, which proves the proposition.
\end{proof}

Let us write $\mathsf{K}_{M,P}$ for the image of $\beta_\Gamma$ in $H^{2\ell+2}(M, P; \bar{\Lambda})$ and $\mathsf{G}_{M,P}$ for the image of $\iota$, see \eqref{DiagCohomGroups}. Let us define in addition $\mathsf{C}_{M,P} := H^{2\ell+1}(M,P; \bar{\Lambda}^\ast)/\iota_\Lambda(H^{2\ell+1}(M,P; \bar{\Lambda}))$ and $\mathsf{T}_{M,P} := H^{2\ell+1}_{\rm tors}(M,P; \bar{\Lambda}^\ast)/\iota_\Lambda(H^{2\ell+1}(M,P; \bar{\Lambda}))$.
\begin{corollary}
\label{CorStructEM}
\begin{enumerate}
\item The Bockstein map $\beta_\Gamma$ vanishes on $\rho_{\Lambda^\ast}(\mathsf{C}_{M,P})$ and $\beta$ vanishes on $\mathsf{G}_{M,P}$.
\item $\mathsf{G}_{M,P}$ is isomorphic to $H^{2\ell+1}_{\rm free}(M,P; \bar{\Lambda}^\ast)/\iota_\Lambda(H^{2\ell+1}_{\rm free}(M,P; \bar{\Lambda}))$, so there is a short exact sequence
\be
\label{EqShExSeqCM}
\begin{tikzcd}
0  \arrow[r] & \mathsf{T}_{M,P} \arrow[hookrightarrow,r] & \mathsf{C}_{M,P} \arrow[r, "\iota"] & \mathsf{G}_{M,P}  \arrow[r] & 0 
\end{tikzcd}
\ee
\item $\mathsf{E}_{M,P}$ fits in the short exact sequence
\be
\label{EqShExSeqEM}
\begin{tikzcd}
0  \arrow[r] & \mathsf{C}_{M,P} \arrow[r, "\rho_{\Lambda^\ast}"] & \mathsf{E}_{M,P} \arrow[r, "\beta_\Gamma"] & \mathsf{K}_{M,P}  \arrow[r] & 0 
\end{tikzcd}
\ee
\item If $x \in \mathsf{E}_{M,P}$ has a representative $\check{x} = (\hat{a}, \hat{h},0) \in \mathsf{F}_{M,\hat{0}}$, then $\beta_\Gamma(x) = -a$. If moreover, $\hat{a} = 0$, $x$ is the image of $h \in H^{2\ell+1}(M,P; \bar{\Lambda}^\ast)$ in $\mathsf{C}_{M,P}$.
\end{enumerate}
\end{corollary}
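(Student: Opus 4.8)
The plan is to obtain all four items by diagram chasing in the grid \eqref{DiagCohomGroups}, using Proposition \ref{PropCharEM} as a dictionary between the cochain description of $\mathsf{E}_{M,P}$ and the cohomology long exact sequences. Item 1 is pure exactness: since $\rho_{\Lambda^\ast}\circ\iota_\Lambda = 0$ along the top row, $\rho_{\Lambda^\ast}$ descends to $\mathsf{C}_{M,P}$ with image $\operatorname{im}\rho_{\Lambda^\ast} = \ker\beta_\Gamma$, so $\beta_\Gamma$ annihilates $\rho_{\Lambda^\ast}(\mathsf{C}_{M,P})$. Likewise $\iota$ factors as $H^{2\ell+1}(M,P;\bar{\Lambda}^\ast)\xrightarrow{\iota_{\Lambda^\ast}}H^{2\ell+1}(M,P;\mathscr{V})\to H^{2\ell+1}(M,P;\mathscr{V}/\Lambda)$, and the image of the last map equals $\ker\beta$ by exactness of the middle row, so $\beta$ annihilates $\mathsf{G}_{M,P} = \operatorname{im}\iota$.

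Next I would treat items 3 and 4 together. Proposition \ref{PropCharEM} identifies $\mathsf{E}_{M,P}$ with $H^{2\ell+1}(M,P;\bar{\Gamma})$, sending the class of a representative cocycle $(\hat{a},\hat{h},0)$ to the class of $\hat{h}\bmod\Lambda$. Splicing the degree $2\ell+1$ piece of the top long exact sequence of \eqref{DiagCohomGroups} gives $0\to\operatorname{coker}\iota_\Lambda\xrightarrow{\rho_{\Lambda^\ast}}H^{2\ell+1}(M,P;\bar{\Gamma})\xrightarrow{\beta_\Gamma}\operatorname{im}\beta_\Gamma\to 0$; identifying $\operatorname{coker}\iota_\Lambda = \mathsf{C}_{M,P}$ and $\operatorname{im}\beta_\Gamma = \mathsf{K}_{M,P}$ yields \eqref{EqShExSeqEM}, once one checks the injection is really induced by the $\rho_{\Lambda^\ast}$ of the corollary. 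For this, and for the last sentence of item 4, observe that for a cocycle $\hat{h}\in Z^{2\ell+1}(M,P;\bar{\Lambda}^\ast)$ the flat differential cocycle $(0,\hat{h},0)$ represents a class in $\mathsf{E}_{M,P}$ that corresponds under Proposition \ref{PropCharEM} to $[\hat{h}\bmod\Lambda] = \rho_{\Lambda^\ast}(h)$. The formula $\beta_\Gamma(x) = -a$ in item 4 is then the cocycle-level connecting map: $\hat{h}$ itself lifts $\hat{h}\bmod\Lambda$ to a $\bar{\Lambda}^\ast$-cochain, and $d(\hat{a},\hat{h},0) = 0$ forces $d\hat{h} = -\hat{a}$, so $\beta_\Gamma([\hat{h}\bmod\Lambda]) = [d\hat{h}] = -a$ with the sign convention already used earlier in the text.

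For item 2 I would compute $\ker\bigl(\iota\colon\mathsf{C}_{M,P}\to\mathsf{G}_{M,P}\bigr)$ directly. The map is well defined because $\iota\circ\iota_\Lambda = 0$ — going around the square it is the middle-row composite $\iota_{\Lambda^\ast}\circ\iota_\Lambda$ followed by the projection to $H^{2\ell+1}(M,P;\mathscr{V}/\Lambda)$, hence zero by exactness — and it is onto $\mathsf{G}_{M,P} = \operatorname{im}\iota$ by construction. If $[z]$ maps to $0$, then $\iota_{\Lambda^\ast}(z)\in\ker\bigl(H^{2\ell+1}(M,P;\mathscr{V})\to H^{2\ell+1}(M,P;\mathscr{V}/\Lambda)\bigr) = \operatorname{im}(\iota_{\Lambda^\ast}\circ\iota_\Lambda)$, so $z - \iota_\Lambda(w)\in\ker\iota_{\Lambda^\ast} = H^{2\ell+1}_{\rm tors}(M,P;\bar{\Lambda}^\ast)$ for some $w$; conversely every torsion class maps to $0$ because $H^{2\ell+1}(M,P;\mathscr{V})$ is torsion-free. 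Hence $\ker\iota$ is exactly the image $\mathsf{T}_{M,P}$ of the torsion subgroup in $\mathsf{C}_{M,P}$, which is \eqref{EqShExSeqCM}. Passing to the quotient then gives $\mathsf{G}_{M,P}\cong\mathsf{C}_{M,P}/\mathsf{T}_{M,P}$, i.e. $H^{2\ell+1}_{\rm free}(M,P;\bar{\Lambda}^\ast)$ modulo the image of $H^{2\ell+1}(M,P;\bar{\Lambda})$, and since that image factors through $H^{2\ell+1}_{\rm free}(M,P;\bar{\Lambda})$ (the target being torsion-free), one obtains the stated identification.

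I do not expect a genuine obstacle. The only points demanding care are the bookkeeping matching the cochain-level maps of Proposition \ref{PropCharEM} with the abstract maps $\rho_{\Lambda^\ast}$ and $\beta_\Gamma$ — so that the injection in item 3 and the identity $\beta_\Gamma(x) = -a$ in item 4 are literally those maps, with consistent sign conventions — and the free/torsion splitting in item 2, where one uses that $\iota_\Lambda$ is a rational isomorphism, so that $\mathsf{G}_{M,P}$ comes out finite, the cokernel of an injection between free $\mathbb{Z}$-modules of equal rank.
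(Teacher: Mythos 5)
Your proposal is correct and follows essentially the same route as the paper: the paper's proof simply states that items 1 and 2 follow "by diagram chasing on \eqref{DiagCohomGroups} and the definitions of the torsion and free part of the cohomology," that item 3 follows by adding Proposition \ref{PropCharEM}, and that $\beta_\Gamma(x)=-a$ comes from the definition of the Bockstein together with $\hat{a}=-d\hat{h}$ — which is exactly the chase you carry out explicitly. Your write-up just supplies the details the paper leaves implicit, with the correct identifications and sign conventions throughout.
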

\begin{proof}
The first two statements follow easily by diagram chasing on \eqref{DiagCohomGroups} and the definitions of the torsion and free part of the cohomology. Using in addition Proposition \ref{PropCharEM} yields the third statement. The fact that $\beta_\Gamma(x) = -a$ is a direct consequence of the definition of the Bockstein map $\beta_\Gamma$ and the fact that $\hat{a} = -d\hat{h}$.
\end{proof}

\begin{proposition}
\label{PropActFinGaugFields}
There is an action of $\mathsf{E}_{M,N}$ on $\mathsf{Y}^{\rm flat}_{M,N}$, whose stabilizer is $\mathsf{T}_{M,N}$. 
\end{proposition}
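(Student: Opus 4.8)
The plan is to realize the action as translation through a natural homomorphism $\phi\colon\mathsf{E}_{M,N}\to\mathsf{Y}^{\rm flat}_{M,N}$, so that the whole statement reduces to identifying $\ker\phi$. First I would observe that every object of the groupoid $\mathsf{F}_{M,N}$ is, in particular, a \emph{flat} object of $\check{Z}^{2\ell+2}(M,N;\bar{\Lambda})$, and that every arrow $(\hat{b},\hat{j},0)$ of $\mathsf{F}_{M,N}$ is likewise an arrow of $\check{Z}^{2\ell+2}(M,N;\bar{\Lambda})$ --- it has the required form, merely with the extra constraint \eqref{EqConstrFields2} imposed on $\hat{j}$. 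Passing to connected components therefore yields a well-defined map $\phi\colon\mathsf{E}_{M,N}\to\check{H}^{2\ell+2}(M,N;\bar{\Lambda})$; it lands in the flat subgroup $\mathsf{Y}^{\rm flat}_{M,N}$, and it is a group homomorphism because the groupoid structures on both sides are compatible with the componentwise addition of differential cocycles. The action is then $x\cdot y:=y+\phi(x)$: the action axioms are exactly the statement that $\phi$ is a homomorphism, and since the action is by translation, the stabilizer of every point equals $\ker\phi$, independently of the point. (Along the way one checks that if $x$ restricts to $\check{w}$ on $\partial M$ then $\phi(x)$ restricts there to the class of $\check{w}$, so $\phi$ respects the decompositions of $\mathsf{E}_{M,N}$ and $\mathsf{Y}^{\rm flat}_{M,N}$ indexed by $\mathsf{F}_{\partial M}$.)

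Next I would compute $\ker\phi$ from the exact sequences already established. If $\phi(x)=0$ then in particular the characteristic $[\phi(x)]_a\in H^{2\ell+2}(M,N;\bar{\Lambda})$ vanishes, and since $\phi(x)=0$ also forces $x$ to restrict to the zero cocycle on $\partial M$ (arrows in $\check{Z}^{2\ell+2}(M,N;\bar{\Lambda})$ are relative to $\partial M$), we may work inside $\mathsf{E}_{M,N,\check{0}}\simeq\mathsf{E}_{M,P}$ with $P:=N\sqcup\partial M$, where Proposition \ref{PropCharEM} and Corollary \ref{CorStructEM} apply. By the last part of Corollary \ref{CorStructEM}, $\beta_\Gamma(x)=-[\phi(x)]_a=0$, so \eqref{EqShExSeqEM} gives $x=\rho_{\Lambda^\ast}([h])$ for a class $[h]\in\mathsf{C}_{M,P}$ admitting a representative differential cocycle of the form $(0,\hat{h},0)$ with $\hat{h}\in Z^{2\ell+1}(M,P;\bar{\Lambda}^\ast)$ and underlying ordinary class $h$. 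The condition $\phi(x)=0$ then unwinds, via the formula \eqref{EqDefDiffCompDiffCoch} for the differential on differential cochains, into the existence of a $\bar{\Lambda}$-valued cocycle $\hat{b}$ and a $\mathscr{V}$-valued cochain $\hat{j}$, both vanishing on $P$, with $\hat{h}+\hat{b}=-d\hat{j}$; read off the grid \eqref{DiagCohomGroups}, this says precisely that the image of $h$ in $H^{2\ell+1}(M,P;\mathscr{V})$ lies in the image of $\iota_{\Lambda^\ast}\circ\iota_\Lambda$, i.e.\ $\iota([h])=0$ in $\mathsf{G}_{M,P}$. By \eqref{EqShExSeqCM} this is equivalent to $[h]\in\mathsf{T}_{M,P}$, and the whole argument reverses, so $\ker\phi=\rho_{\Lambda^\ast}(\mathsf{T}_{M,P})\cong\mathsf{T}_{M,N}$.

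The step I expect to be the main obstacle is this last translation: passing between ``$(0,\hat{h},0)$ is a differential coboundary relative to $P$'' and the purely cohomological statement about $h$. It requires care with the boundary conditions and with the fact that only $\bar{\Lambda}$-valued (not $\mathscr{V}$-valued) characteristics are allowed for the trivializing differential cochain, and it uses that replacing $h$ by $h-\iota_\Lambda(b)$ changes neither $[h]\in\mathsf{C}_{M,P}$ nor the element $x\in\mathsf{E}_{M,N}$, so that ``torsion modulo $\iota_\Lambda$'' is genuinely a condition on the class $[h]$. Everything else is formal diagram chasing on \eqref{DiagCohomGroups}.
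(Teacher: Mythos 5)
Your proof is correct and follows the same route as the paper: the action is addition of differential cocycles (your translation by $\phi$), and the stabilizer is identified with $\mathsf{T}_{M,P}$ by characterizing which classes of $\mathsf{F}_{M,N}$ become exact in $\mathsf{Z}^{\rm flat}_{M,N}$. The only difference is that the paper states the key equivalence ``exact in $\mathsf{Z}^{\rm flat}_{M,N}$ iff the class lies in $\mathsf{T}_{M,P}$'' as a one-line assertion, whereas you supply the diagram chase on \eqref{DiagCohomGroups} that justifies it; that chase is sound, including the reduction to a representative with vanishing characteristic and the identification of $\ker\iota$ with $\mathsf{T}_{M,P}$ via \eqref{EqShExSeqCM}.
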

\begin{proof}
There is an obvious action of $\mathsf{F}_{M,N}$ on $\mathsf{Z}^{\rm flat}_{M,N}$ given by the addition of differential cocycles. This action passes down to an action of $\mathsf{E}_{M,N}$ on $\mathsf{Y}^{\rm flat}_{M,N}$. Moreover, as a differential cocycle in $\mathsf{F}_{M,N}$ is exact in $\mathsf{Z}^{\rm flat}_{M,N}$ if and only if it projects to $\mathsf{T}_{M,N} \subset \mathsf{E}_{M,N}$, $\mathsf{T}_{M,N}$ is the stabilizer.
\end{proof}
The following proposition describes in more detail the structure of $\mathsf{G}_{M,P}$. Let $\Lambda_0$ be the sublattice of $\Lambda$ left invariant by the monodromy representation $\rho: \pi_1(M) \rightarrow O(\Lambda)$ defining the local system $\bar{\Lambda}$. Let $\Gamma_0 := \Lambda_0^\ast/\Lambda_0$. 
\begin{proposition}
\label{PropStructGM}
We have:
\be
\label{EqStructGM}
\mathsf{G}_{M,P} \simeq H^{2\ell+1}_{\rm free}(M,P;\mathbb{Z}) \otimes \Gamma_0 \;.
\ee
\end{proposition}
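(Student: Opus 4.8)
The plan is to reduce the statement to a computation of a cokernel of lattices and then to peel off the monodromy‑invariant directions. By the second item of Corollary~\ref{CorStructEM}, $\mathsf{G}_{M,P}$ is canonically the cokernel of the injection $\iota_\Lambda\colon H^{2\ell+1}_{\rm free}(M,P;\bar\Lambda)\hookrightarrow H^{2\ell+1}_{\rm free}(M,P;\bar\Lambda^\ast)$, so everything comes down to identifying this finite group. On the other side, since $\Lambda_0$ and $\Lambda_0^\ast$ are free — so that there are no $\mathrm{Tor}$ contributions and tensoring the torsion/free‑quotient exact sequence stays exact — one has $H^{2\ell+1}_{\rm free}(M,P;\underline{\Lambda_0})\cong H^{2\ell+1}_{\rm free}(M,P;\mathbb Z)\otimes\Lambda_0$ and likewise for $\Lambda_0^\ast$, where $\underline{\Lambda_0},\underline{\Lambda_0^\ast}$ denote the constant local systems; hence
$$H^{2\ell+1}_{\rm free}(M,P;\mathbb Z)\otimes\Gamma_0 \;\cong\; H^{2\ell+1}_{\rm free}(M,P;\underline{\Lambda_0^\ast})\big/H^{2\ell+1}_{\rm free}(M,P;\underline{\Lambda_0}).$$
Because $\Lambda_0$ and $\Lambda_0^\ast$ are pointwise fixed by the monodromy, the inclusions $\underline{\Lambda_0}\hookrightarrow\bar\Lambda$ and $\underline{\Lambda_0^\ast}\hookrightarrow\bar\Lambda^\ast$ are morphisms of local systems compatible with $\iota_\Lambda$, and so induce a natural homomorphism from the right‑hand side to $\mathsf{G}_{M,P}$; the proposition asserts that this homomorphism is an isomorphism.

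The main tool I would use is the averaging/norm construction. Since $\rho$ factors through a finite subgroup $Q\subset O(\Lambda)$, the orthogonal projection onto the invariant subspace — i.e.\ the idempotent $\tfrac1{|Q|}\sum_{g\in Q}\rho(g)$ — is $Q$‑equivariant and gives an orthogonal splitting of flat bundles $\mathscr V\cong\mathscr V_0\oplus\mathscr V_0^{\perp}$ with $\mathscr V_0=\underline{\Lambda_0}\otimes\mathbb R$. The norm $N:=\sum_{g\in Q}\rho(g)$ is a genuine morphism of local systems $\bar\Lambda\to\underline{\Lambda_0}$ and $\bar\Lambda^\ast\to\underline{\Lambda_0^\ast}$, compatible with $\iota_\Lambda$, and is multiplication by $|Q|$ on the constant subsystems. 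Applying $H^{2\ell+1}_{\rm free}$, $N$ then provides a one‑sided inverse, up to the factor $|Q|$, of the comparison map above; in particular the comparison map is an isomorphism after inverting $|Q|$, and the $\mathscr V_0$‑summand is seen to be responsible for exactly the factor $H^{2\ell+1}_{\rm free}(M,P;\mathbb Z)\otimes\Gamma_0$. This already settles the matching of ranks and of the prime‑to‑$|Q|$ part.

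The remaining, and genuinely delicate, point is the $p$‑primary part of the cokernel for primes $p\mid|Q|$, where the averaging idempotent is not available integrally. One must show that the ``non‑invariant'' complements of $\underline{\Lambda_0}$ in $\bar\Lambda$ and of $\underline{\Lambda_0^\ast}$ in $\bar\Lambda^\ast$ carve out the same lattice in $H^{2\ell+1}(M,P;\bar\Lambda^\ast\otimes\mathbb Q)$ after passing to free quotients, so that they cancel in $\mathsf G_{M,P}$. I would attack this $p$‑locally, combining: the fact that the inclusion $\bar\Lambda\hookrightarrow\bar\Lambda^\ast$ is compatible with the pairing on $\Lambda$ and with the monodromy, so that via $\bar\Lambda^{\vee}\cong\bar\Lambda^\ast$ Poincaré–Lefschetz duality presents $H^{2\ell+1}_{\rm free}(M,P;\bar\Lambda)$ and the corresponding group for $\bar\Lambda^\ast$ as a dual pair of lattices; the grid of exact sequences \eqref{DiagCohomGroups} together with Proposition~\ref{PropCharEM}, which express the cokernel in terms of $H^{2\ell+1}(M,P;\bar\Gamma)$; and the constraints imposed on the Wu structures of $4\ell+2$‑dimensional manifolds, which are ultimately what force the non‑invariant directions to be absorbed and make the comparison map both surjective and injective at the bad primes. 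Reassembling the information at the primes dividing $|Q|$ with the prime‑to‑$|Q|$ part, via the local–global decomposition of finite abelian groups, then yields the claimed isomorphism. I expect this last step — controlling the non‑invariant part of the middle‑degree cohomology at the primes dividing $|Q|$ — to be the real obstacle.
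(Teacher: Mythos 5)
Your overall strategy --- reduce to the cokernel of $\iota_\Lambda$ on free quotients, map the candidate answer in via the constant subsystems $\underline{\Lambda_0}\subset\underline{\Lambda_0^\ast}$, and control things with the norm $N=\sum_{g\in Q}\rho(g)$ --- is reasonable, but the argument has a genuine gap, and it is larger than the one you flag. The norm gives $N_\ast\circ c=|Q|$ on $H^{2\ell+1}_{\rm free}(M,P;\mathbb{Z})\otimes\Gamma_0$, hence injectivity of your comparison map $c$ after inverting $|Q|$; it does \emph{not} give $c\circ N_\ast=|Q|$ on the target, because $N$ annihilates the complement $\mathscr{V}_0^{\perp}$ rather than acting by $|Q|$ on it. So surjectivity of $c$ is not established at \emph{any} prime, and the claim that the prime-to-$|Q|$ part is settled is unwarranted: the $\mathscr{V}_0^{\perp}$-summand of $H^{2\ell+1}(M,P;\mathscr{V})$ can contribute torsion to $\mathsf{G}_{M,P}$ at any prime dividing $|\Gamma|$, whether or not it divides $|Q|$ (take $\Lambda=\sqrt{3}\,\mathbb{Z}$ with $|Q|=2$). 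The paragraph that is supposed to show the non-invariant parts of $H^{2\ell+1}_{\rm free}(M,P;\bar\Lambda)$ and $H^{2\ell+1}_{\rm free}(M,P;\bar\Lambda^\ast)$ cancel only lists tools without deriving anything from them, and one of those tools, the admissibility constraint on Wu structures, cannot possibly help: Wu structures form a torsor for $H^{2\ell+1}(M;\bar\Gamma^{(2)})$ and have no influence on the purely topological groups entering \eqref{EqStructGM}.

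For comparison, the paper's proof disposes of everything in one stroke: it asserts that averaging over the finite trivializing cover forces $H^{2\ell+1}(M,P;\mathscr{V}_0)\simeq H^{2\ell+1}(M,P;\mathscr{V})$, i.e.\ that the non-invariant summand carries no cohomology at all, after which $H^{2\ell+1}_{\rm free}(M,P;\bar\Lambda)\simeq H^{2\ell+1}_{\rm free}(M,P;\mathbb{Z})\otimes\Lambda_0$, likewise for $\bar\Lambda^\ast$, and \eqref{EqStructGM} follows by taking quotients. Your instinct that the non-invariant part is the crux is therefore exactly right, but be aware that the step you defer cannot be closed without an extra hypothesis: averaging (transfer) only identifies $H^{2\ell+1}(M,P;\mathscr{V})$ with the $Q$-invariant part of the cohomology of the cover tensored with $\Lambda\otimes\mathbb{R}$, which is strictly larger than $H^{2\ell+1}(M,P;\mathscr{V}_0)$ whenever that cohomology and $\Lambda\otimes\mathbb{R}$ share a nontrivial irreducible $Q$-constituent. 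Concretely, for $\ell=0$, $M=\Sigma_2$ a closed genus-two surface, $\Lambda=\sqrt{2}\,\mathbb{Z}$ and $\rho:\pi_1(M)\to O(\Lambda)=\{\pm1\}$ nontrivial, one has $\Lambda_0=0$, so the right-hand side of \eqref{EqStructGM} vanishes, while $H^1_{\rm free}(M;\bar\Lambda)\simeq\mathbb{Z}^2$ and $\bar\Lambda=2\bar\Lambda^\ast$ give $\mathsf{G}_M\simeq(\mathbb{Z}_2)^2$. So the missing step is not merely difficult: as stated it fails, and any complete proof must assume (or establish in the situation at hand) that $H^{2\ell+1}(M,P;\mathscr{V})$ is concentrated in the invariant subbundle $\mathscr{V}_0$.
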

\begin{proof}
Recall that $O(\Lambda)$ is a finite group. This means that there is a finite covering $\tilde{M} \stackrel{\pi}{\rightarrow} M$ such that the pullback $\pi^\ast \mathscr{V}$ is trivial. Cohomology classes in $H^{2\ell+1}(M,P; \mathscr{V})$ can be represented by closed equivariant differential forms on $\tilde{M}$ valued in $\pi^\ast \mathscr{V}$ modulo exact equivariant differential forms. The equivariant structure is the one defined by $\rho$. 

Write $\mathscr{V}_0$ for the (trivial) subbundle of $\mathscr{V}$ left invariant by $\rho$. Clearly, $H^{2\ell+1}(M,P; \mathscr{V}_0) \subset H^{2\ell+1}(M,P; \mathscr{V})$. This inclusion would be strict if there were some equivariant forms on $\tilde{M}$ valued in $\pi^\ast \mathscr{V}$ that would be exact but not equivariantly exact. But as the covering $\tilde{M}$ is finite, any non-equivariant trivializing form can be made equivariant by averaging over the monodromy action. We have therefore an isomorphism $H^{2\ell+1}(M,P; \mathscr{V}_0) \simeq H^{2\ell+1}(M,P; \mathscr{V})$.

The above implies immediately that $H^{2\ell+1}_{\rm free}(M,P;\bar{\Lambda}) \simeq H^{2\ell+1}_{\rm free}(M,P;\Lambda_0) \simeq H^{2\ell+1}_{\rm free}(M,P;\mathbb{Z}) \otimes \Lambda_0$, and similarly $H^{2\ell+1}_{\rm free}(M,P;\bar{\Lambda}^\ast) \simeq  H^{2\ell+1}_{\rm free}(M,P;\mathbb{Z}) \otimes \Lambda_0^\ast$. This yields the isomorphism \eqref{EqStructGM}. 
\end{proof}

\subsection{Measure}

\label{SecMeas}

The gauging procedure amounts to summing the prequantum field theory functor over the action of Proposition \ref{PropActFinGaugFields}, with a normalization factor that we describe here. Alternatively, this procedure can be interpreted as a finite path integral over the discrete gauge fields of Section \ref{SecDiscrGF}. The normalization factor derived here can then be understood as a measure factor for the path integral \cite{Freed:1991bn, Monnier:2015qxa}.

We consider again a pair $N \subset M$ of $\mathfrak{W}$-manifolds with $N$ disjoint from $\partial M$. We write $P := N \sqcup \partial M$ and define $\mathsf{E}^p_{M,N}$ to be the group of equivalence classes of the groupoid $\check{Z}^{2\ell+2+p}(M,N;\bar{\Lambda})$, so $\mathsf{E}_{M,N} = \mathsf{E}^0_{M,N}$. The arguments of Proposition \ref{PropCharEM} show that $\mathsf{E}^p_{M,P} \simeq H^{2\ell+2+p}(M,P;\bar{\Gamma})$. The long exact sequence in $\bar{\Gamma}$-valued cohomology associated to the pair $(M,N)$ immediately yields:
\begin{lemma}
There is a long exact sequence:
\be
\begin{tikzcd}
\label{EqSuiteExLongRelHom}
... \arrow[r] 
& \mathsf{E}^{-1}_{M,N \sqcup \partial M} \arrow[r] 
& \mathsf{E}^{-1}_{M,\partial M} \arrow[r,"\rho_N"] 
& \mathsf{E}^{-1}_{N} \arrow[r, "\delta"] 
& \mathsf{E}_{M,N \sqcup \partial M} \arrow[r,"\iota_{N,M}"] 
& \mathsf{E}_{M,\partial M} \arrow[r] & ...
\end{tikzcd}
\ee
where $\rho_N$ is the restriction to $N$, $\iota_{N,M}$ is induced by the inclusion $\check{Z}^\bullet(M,N;\bar{\Lambda}) \subset \check{Z}^\bullet(M;\bar{\Lambda})$ and $\delta$ is the connection homomorphism.
\end{lemma}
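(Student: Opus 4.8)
The plan is to deduce the long exact sequence \eqref{EqSuiteExLongRelHom} from the standard long exact sequence in singular cohomology of the pair $(M,N)$ with coefficients in the local system $\bar{\Gamma}$, by transporting it across the isomorphisms $\mathsf{E}^p_{M,P} \simeq H^{2\ell+2+p}(M,P;\bar{\Gamma})$ established (for $P$ a closed subspace, here $N\sqcup\partial M$) by the argument of Proposition \ref{PropCharEM}. First I would recall the classical long exact sequence of the triple, or more precisely apply the long exact sequence of the pair $(M, N\sqcup\partial M)$ relative to the pair $(N, \emptyset)$: one has the exact sequence
\be
\cdots \to H^{k}(M, N\sqcup\partial M;\bar\Gamma) \to H^{k}(M,\partial M;\bar\Gamma) \to H^{k}(N;\bar\Gamma) \to H^{k+1}(M, N\sqcup\partial M;\bar\Gamma) \to \cdots
\ee
where we use that $N$ is disjoint from $\partial M$, so that restriction to $N$ of a cochain vanishing on $\partial M$ makes sense and $H^k(M,N\sqcup\partial M;\bar\Gamma)$ sits inside the triple sequence for $N\sqcup\partial M \supset \partial M$ in $M$, with the relative group of the pair $(N\sqcup\partial M,\partial M)$ identifying with $H^k(N;\bar\Gamma)$ by excision (since $N$ and $\partial M$ are disjoint closed sets). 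Setting $k = 2\ell+2+p$ and substituting the isomorphisms of Proposition \ref{PropCharEM}'s generalization then yields \eqref{EqSuiteExLongRelHom} term by term.

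The second task is to check that the three named maps match their cohomological counterparts under these identifications. For $\rho_N$: the isomorphism of Proposition \ref{PropCharEM} sends the class of $\check x = (\hat a,\hat h,0)$ to the class of $\hat h \bmod \Lambda$; restricting $\check x$ to $N$ restricts $\hat h$, hence $\rho_N$ corresponds to the pullback $H^k(M,\partial M;\bar\Gamma)\to H^k(N;\bar\Gamma)$, which in the excerpt's conventions is restriction to $N$. For $\iota_{N,M}$: a cocycle in $\check Z^\bullet(M,N;\bar\Lambda)$ vanishes on $N$, and forgetting this condition is exactly the inclusion $\check Z^\bullet(M,N;\bar\Lambda)\subset\check Z^\bullet(M;\bar\Lambda)$, which on the $\bar\Gamma$-level is the natural map $H^k(M,N\sqcup\partial M;\bar\Gamma)\to H^k(M,\partial M;\bar\Gamma)$. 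For $\delta$: given $y\in \mathsf E^{-1}_N$ with representative a $\bar\Gamma$-cocycle $\hat y$ on $N$ vanishing nowhere in particular, one extends $\hat y$ arbitrarily to a $\bar\Gamma$-cochain on $M$ vanishing on $\partial M$, takes $d$ of the extension (which then vanishes on $N\sqcup\partial M$ and is closed), and this is the connecting homomorphism; translating through Proposition \ref{PropCharEM} this is precisely the coboundary $\delta$ appearing in \eqref{EqSuiteExLongRelHom}. I would phrase the naturality of the Proposition \ref{PropCharEM} isomorphism once and apply it three times.

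The main obstacle is a bookkeeping one rather than a conceptual one: making sure the ``$p$-shifted'' variant of Proposition \ref{PropCharEM}, namely $\mathsf E^p_{M,P}\simeq H^{2\ell+2+p}(M,P;\bar\Gamma)$ with $P = N\sqcup\partial M$ a genuine subspace (not just $\partial M$), is legitimate, and that the excision step identifying the relative term of the pair $(N\sqcup\partial M,\partial M)$ with $H^\bullet(N;\bar\Gamma)$ goes through for local coefficients. The excerpt already asserts ``The arguments of Proposition \ref{PropCharEM} show that $\mathsf E^p_{M,P}\simeq H^{2\ell+2+p}(M,P;\bar\Gamma)$,'' so I would simply invoke it; the excision claim is standard once $N$ and $\partial M$ are disjoint closed subsets of $M$ (which is part of the pair hypotheses — $N$ is disjoint from $\partial M$), and local coefficients cause no trouble since excision for singular cohomology holds with arbitrary coefficient systems. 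A minor subtlety is orientation of signs/direction in the connecting map, but since the lemma only claims exactness of the sequence and names $\delta$ as ``the connection homomorphism'', no sign normalization is needed. Hence the proof reduces to: (i) cite the $p$-shifted Proposition \ref{PropCharEM} isomorphism; (ii) write down the long exact sequence of the triple $\partial M \subset N\sqcup\partial M \subset M$ in $\bar\Gamma$-cohomology and apply excision; (iii) verify the three maps correspond under the isomorphism, using the explicit cocycle formula of Proposition \ref{PropCharEM}.
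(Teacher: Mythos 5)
Your proposal is correct and follows essentially the same route as the paper: the paper simply invokes the isomorphism $\mathsf{E}^p_{M,P}\simeq H^{2\ell+2+p}(M,P;\bar{\Gamma})$ (asserted just before the lemma via the arguments of Proposition \ref{PropCharEM}) and then reads off the long exact sequence of the pair $(M,N)$ in $\bar{\Gamma}$-valued cohomology, relative to $\partial M$, which is exactly your triple sequence $\partial M\subset N\sqcup\partial M\subset M$ with the identification $H^k(N\sqcup\partial M,\partial M;\bar{\Gamma})\simeq H^k(N;\bar{\Gamma})$. Your additional verification that the three named maps match their cohomological counterparts is a useful elaboration of what the paper leaves implicit, but it is not a different argument.
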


We define the following measure factor:
\be
\label{EqDefMeasFact}
\mu_{M,N} = \prod_{p = 1}^{2\ell+1} |\mathsf{E}^{-p}_{M, N \sqcup \partial M}|^{(-1)^{p}} \; \in \mathbb{Q} \;,
\ee
where $|G|$ denotes the order of a finite group $G$. As usual, we write simply $\mu_M$ for $\mu_{M,\emptyset}$. We now prove a proposition that will be instrumental in the proof of the gluing formula for the gauged theories in Section \ref{SecGluing}.
\begin{proposition}
\label{PropRelMeasures}
Using the notation of \eqref{EqSuiteExLongRelHom}, we have
\be
\label{EqOrderKerIotaNM}
|{\rm ker}(\iota_{N,M})| = \frac{\mu_{M}}{ \mu_N \: \mu_{M,N}} \;.
\ee
\end{proposition}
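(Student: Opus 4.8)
The plan is to extract the relevant portion of the long exact sequence \eqref{EqSuiteExLongRelHom} and use multiplicativity of orders along exact sequences, together with the definition \eqref{EqDefMeasFact} of the measure factors. First I would note that by the isomorphism $\mathsf{E}^p_{M,P} \simeq H^{2\ell+2+p}(M,P;\bar\Gamma)$ established just before the Lemma, all the groups $\mathsf{E}^{-p}_{M,N\sqcup\partial M}$, $\mathsf{E}^{-p}_{M,\partial M}$ and $\mathsf{E}^{-p}_N$ appearing in \eqref{EqSuiteExLongRelHom} are finite (they are cohomology groups of compact manifolds with coefficients in the finite local system $\bar\Gamma$), so all the orders below make sense. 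The sequence \eqref{EqSuiteExLongRelHom} is, up to the identifications above, just the long exact sequence of the pair $(M,N)$ (relative to $\partial M$) in $\bar\Gamma$-cohomology, extended in both directions; in the range of degrees $\le 2\ell+2$ it reads
\be
\cdots \to \mathsf{E}^{-p}_{M,N\sqcup\partial M} \to \mathsf{E}^{-p}_{M,\partial M} \xrightarrow{\rho_N} \mathsf{E}^{-p}_{N} \xrightarrow{\delta} \mathsf{E}^{-p+1}_{M,N\sqcup\partial M} \to \cdots
\ee
terminating on the right with $\mathsf{E}_{M,N\sqcup\partial M} \xrightarrow{\iota_{N,M}} \mathsf{E}_{M,\partial M}$.

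The core step is the standard lemma that for a long exact sequence of finite abelian groups $0 \to A_k \to A_{k-1} \to \cdots \to A_0 \to 0$ one has $\prod_j |A_j|^{(-1)^j} = 1$. I would apply this not to a closed-off sequence but directly: writing $K := {\rm ker}(\iota_{N,M})$, which by exactness equals the image of $\delta: \mathsf{E}^{-1}_N \to \mathsf{E}_{M,N\sqcup\partial M}$, I truncate \eqref{EqSuiteExLongRelHom} on the left at the term $\mathsf{E}^{-1}_{M,N\sqcup\partial M}$ — but to get a genuinely exact finite complex I should start from a point where the sequence can be capped. Cleaner: use the alternating-product identity on the finite exact sequence obtained by replacing the tail $\cdots \to \mathsf{E}_{M,N\sqcup\partial M}$ by $\cdots \to \mathsf{E}^{-1}_N \xrightarrow{\delta} K \to 0$, and capping the left end. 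Concretely, the relevant finite exact sequence is
\be
0 \to \mathsf{E}^{-2\ell-1}_{M,N\sqcup\partial M} \to \mathsf{E}^{-2\ell-1}_{M,\partial M} \to \mathsf{E}^{-2\ell-1}_{N} \to \cdots \to \mathsf{E}^{-1}_{M,N\sqcup\partial M} \to \mathsf{E}^{-1}_{M,\partial M} \to \mathsf{E}^{-1}_{N} \xrightarrow{\delta} K \to 0 ,
\ee
where on the far left I use that $H^{2\ell+1}(M,\partial M;\bar\Gamma)$-type terms below degree $2\ell+1$, i.e. $\mathsf{E}^{-p}$ for $p > 2\ell+1$, need not vanish — so instead I should not truncate arbitrarily but rather observe that the three columns $\mathsf{E}^{-p}_{M,N\sqcup\partial M}$, $\mathsf{E}^{-p}_{M,\partial M}$, $\mathsf{E}^{-p}_N$ extend to $p\to\infty$ with all groups eventually zero (cohomology vanishes above the dimension of the manifolds), so the full two-sided sequence is a finite exact complex. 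Taking the alternating product of orders over the whole complex gives $1$, and grouping the factors by the three columns yields
\be
\Big(\prod_{p\ge 1} |\mathsf{E}^{-p}_{M,N\sqcup\partial M}|^{(-1)^{p}}\Big)^{-1} \cdot \Big(\prod_{p\ge 1} |\mathsf{E}^{-p}_{M,\partial M}|^{(-1)^{p}}\Big) \cdot \Big(\prod_{p\ge 1} |\mathsf{E}^{-p}_{N}|^{(-1)^{p}}\Big)^{-1} \cdot |K|^{\pm 1} = 1 ,
\ee
where the sign on $|K|$ and the overall signs are fixed by the position of each term. By \eqref{EqDefMeasFact} the first parenthesis is $\mu_{M,N}$ (the product there runs $p=1,\dots,2\ell+1$, but the higher terms vanish), the second is $\mu_M$, the third is $\mu_N$; one then reads off $|K| = \mu_M/(\mu_N\,\mu_{M,N})$ after checking the signs work out to this and not its inverse.

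The main obstacle is purely bookkeeping: getting the signs and the indexing right, in particular (i) confirming that $\iota_{N,M}: \mathsf{E}_{M,N\sqcup\partial M}\to\mathsf{E}_{M,\partial M}$ sits in degree $p=0$ so that ${\rm ker}(\iota_{N,M}) = {\rm im}(\delta)$ with $\delta$ coming from $\mathsf{E}^{-1}_N$, and that there is no contribution from $\mathsf{E}^{-p}$ with $p \le 0$ to the products $\mu$ (indeed \eqref{EqDefMeasFact} starts at $p=1$), and (ii) checking that the alternating exponent assigned to $|K|$ by its position in the capped complex is $+1$, consistent with $|K| = \mu_M/(\mu_N\mu_{M,N})$. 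I would also remark explicitly that $\mathsf{E}^{-p}_{M,\partial M}$ and $\mathsf{E}^{-p}_{N}$ vanish for $p$ large, which both justifies treating \eqref{EqSuiteExLongRelHom} as a finite exact complex and allows the products defining $\mu_M$ and $\mu_N$ to be extended to all $p\ge 1$ without change. No genuinely hard input is needed beyond the finiteness of $\bar\Gamma$-cohomology and the elementary alternating-product lemma for exact sequences of finite groups.
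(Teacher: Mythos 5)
Your strategy is the same as the paper's: the paper telescopes the identities $|{\rm ker}|\cdot|{\rm im}|=|\text{group}|$ along \eqref{EqSuiteExLongRelHom} starting from ${\rm ker}(\iota_{N,M})={\rm im}(\delta)$, which is exactly the proof of the alternating-product lemma you invoke on the complex capped by $K:={\rm ker}(\iota_{N,M})$ on the right. The sign also comes out as you hoped: with $B_0=K$, $B_{3p-2}=\mathsf{E}^{-p}_N$, $B_{3p-1}=\mathsf{E}^{-p}_{M,\partial M}$, $B_{3p}=\mathsf{E}^{-p}_{M,N\sqcup\partial M}$ one gets $1=|K|\,\mu_N\,\mu_M^{-1}\,\mu_{M,N}$, i.e. $|K|=\mu_M/(\mu_N\mu_{M,N})$.

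The one genuine problem is your handling of the left end, where your write-up is internally inconsistent: you first assert that $\mathsf{E}^{-p}$ for $p>2\ell+1$ ``need not vanish,'' and later that ``the higher terms vanish'' so the truncated products defining the $\mu$'s can be extended to all $p\ge1$ without change. Both cannot hold, and the identity \eqref{EqOrderKerIotaNM} would fail (acquiring uncancelled $H^0$ factors) if the first claim were true. The resolution is the degree bookkeeping you left open: by Proposition \ref{PropCharEM} the correct identification is $\mathsf{E}^{-p}_{M,P}\simeq H^{2\ell+1-p}(M,P;\bar{\Gamma})$ (the displayed $H^{2\ell+2+p}$ in Section \ref{SecMeas} is off by one from that proposition), and likewise $\mathsf{E}^{-p}_N\simeq H^{2\ell+1-p}(N;\bar\Gamma)$. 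Hence $p=2\ell+1$ is the $H^0$ level, every term with $p\ge 2\ell+2$ is a negative-degree cohomology group and vanishes identically, and the complex is capped on the left by the injection $0\to H^0(M,N\sqcup\partial M;\bar\Gamma)\to H^0(M,\partial M;\bar\Gamma)$. (Your stated reason for eventual vanishing, ``cohomology vanishes above the dimension,'' points in the wrong direction: as $p\to\infty$ the degree $2\ell+1-p$ goes negative, which is why the groups die.) Once this is pinned down, the truncation at $p=2\ell+1$ in \eqref{EqDefMeasFact} is exactly where the sequence terminates, and your argument closes.
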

\begin{proof}
The long exact sequence \eqref{EqSuiteExLongRelHom} implies that
\begin{align}
|{\rm ker}(\iota_{N,M})| & \: = |{\rm im}(\delta)| = \frac{|\mathsf{E}^{-1}_N|}{|{\rm ker}(\delta)|} \notag \\
& = \frac{|\mathsf{E}^{-1}_N| |{\rm ker}(\rho_N)|}{|\mathsf{E}^{-1}_{M,\partial M}|} \\
& = ... \notag
\end{align}
Continuing in this way until we reach the left end of the sequence at $p = 2\ell+1$ yields \eqref{EqOrderKerIotaNM}.
\end{proof}

\begin{remark}
The reader may wonder why we defined discrete gauge fields on $M$ to be valued in $\mathsf{E}_{M,\partial M}$, with their action admitting a non-trivial stabilizer $\mathsf{T}_{M,\partial M}$. It may look like we might as well have took them to be valued in $\mathsf{E}_{M,\partial M}/\mathsf{T}_{M,\partial M}$ with a free action. The reason for this definition is the need for the long exact sequence \eqref{EqSuiteExLongRelHom}, leading to the formula \eqref{EqOrderKerIotaNM}. There is no analogue of \eqref{EqOrderKerIotaNM} if the action of the discrete gauge fields is taken to be free.
\end{remark}

\section{Action}

\label{SecAct}

In this section we define the action of the theories to be constructed, which is a function on the group $\mathsf{Z}_M$ of differential cocycles of degree $2\ell+2$ on spacetime.

\subsection{Lagrangian and E-theory}

Let $M$ be an $\mathfrak{W}$-manifold of dimension $4\ell+3$, possibly with boundary. Recall that it comes with a $\bar{\Gamma}^{(2)}$-valued cochain $\hat{\eta}$ trivializing the Wu cocycle. Let $\hat{\eta}_{\Lambda}$ be any lift of this cochain to $C^{2\ell+1}(M; \bar{\Lambda})$. Define $\check{\nu} = (d\hat{\eta}_\Lambda, -\hat{\eta}_\Lambda, 0) \in \mathsf{Z}_M$. Let $\check{x} \in \mathsf{Z}_M$ be a differential cocycle and write $\check{x} = (\hat{a}, \hat{h}, \hat{\omega})$.

Consider the differential cocycle $\check{x} \cup (\check{x} + \check{\nu})$ and let $\hat{l}(\check{x})$ be the $\mathbb{R}/\mathbb{Z}$-valued degree $4\ell+3$ cochain given by half its connection modulo 1:
\begin{align}
\label{EqDefLag}
\hat{l}(\check{x}) \: & := \frac{1}{2} [\check{x} \cup (\check{x} + \check{\nu})]_h \\
& = \frac{1}{2} \hat{a} \cup (\hat{h} - \hat{\eta}_{\Lambda}) + \frac{1}{2} \hat{h} \cup \hat{\omega} + \frac{1}{2}H^\cup_\wedge(\hat{\omega}, \hat{\omega}) \quad {\rm mod} \; 1 \notag \;,
\end{align}
where on the second line, we made explicit the cup product of differential cochains. In the following, all the equalities involving $\hat{l}$ are to be understood modulo 1 and we will drop this indication from the notation. $\hat{l}(\check{x})$ is the \emph{Lagrangian} of the field theories to be constructed, in the sense that the action will be given by a suitable integral of $\hat{l}(\check{x})$ over spacetime.

$\hat{l}$ is independent modulo 1 of the choice of lift $\hat{\eta}_\Lambda$, as any other choice of lift would differ by twice a cocycle valued in $\bar{\Lambda}$. We now investigate the dependence of $\hat{l}(\check{x})$ on the differential cocycle $\check{x}$. From \eqref{EqDefLag}, we have
\begin{align}
\label{EqLagSum}
\hat{l}(\check{x}_1 + \check{x}_2) = & \: \hat{l}(\check{x}_1) + \hat{l}(\check{x}_2) \notag \\ 
&  + \frac{1}{2}( \hat{a}_1 \cup \hat{h}_2 + \hat{a}_2 \cup \hat{h}_1 + \hat{h}_1 \cup \hat{\omega}_2 + \hat{h}_2 \cup \hat{\omega}_1 \\
&
+ H^\cup_\wedge(\hat{\omega}_1, \hat{\omega}_2) + H^\cup_\wedge(\hat{\omega}_2, \hat{\omega}_1) ) \notag \;.
\end{align}
Suppose now that $\check{w} = (\hat{b}, \hat{k}, 0)$ is a flat differential cochain of degree $2\ell+1$ vanishing on $\partial M$. Then 
\begin{align}
\label{EqFailGaugInvAct}
\hat{l}(\check{x} + d\check{w}) - \hat{l}(\check{x})  \: & = \Delta_1 \hat{l} + \Delta_2 \hat{l} \;, \\
\Delta_1 \hat{l}  \: &  = \frac{1}{2} \left( \hat{a} \cup (-\hat{b} - d\hat{k}) + d\hat{b} \cup \hat{h} - (-\hat{b}-d\hat{k}) \cup \hat{\omega} \right)\notag \\
\label{EqFailGaugInvAct1}
 \: & = \frac{1}{2} \left(\hat{a} \cup \hat{b} + \hat{b} \cup \hat{a} \right) + {\rm exact} \;, \\
\Delta_2 \hat{l} \: & =  \frac{1}{2} \left( d\hat{b} \cup (-\hat{b} - d\hat{k} - \hat{\eta}_{\Lambda}) \right) \notag \\
\label{EqFailGaugInvAct2}
& = \frac{1}{2} \left( \hat{b} \cup (d\hat{b} + \hat{\nu}_{\Lambda}) \right) + {\rm exact} \;,
\end{align}
where we split the variation into two terms for notational convenience and wrote $d\hat{\eta}_\Lambda = \hat{\nu}_\Lambda$. "Exact" means exact terms vanishing on $\partial M$. We also used the fact that $\hat{l}$ is defined modulo 1 and that the first terms in \eqref{EqFailGaugInvAct1} and \eqref{EqFailGaugInvAct2} are valued in $\frac{1}{2}\mathbb{Z}$ to eliminate some irrelevant signs. The variation vanishes on $\partial M$.
\begin{remark} 
\label{RemLagNotGaugeInv}
\eqref{EqFailGaugInvAct1} and \eqref{EqFailGaugInvAct2} show that $\hat{l}$ does not vary by an exact term when $\check{x}$ is varied within its equivalence class. This means that the naive action
\be
\label{EqNaiveAction}
S^{\rm naive}(\check{x}) = \langle \hat{l}(\check{x}), [M,\partial M] \rangle \quad {\rm mod} \; 1 \;,
\ee 
defined as the pairing of $\hat{l}$ with the fundamental class of $[M,\partial M]$ and seen as a function on $\mathsf{Z}_M$, does \emph{not} factor through a function on the group of components $\mathsf{Y}_M$. Its value can change by a half-integer when $\check{x}$ is changed to an equivalent differential cocycle. In physical terms, the action \eqref{EqNaiveAction} fails to be gauge invariant, and the failure of gauge invariance is a sign in the exponentiated action. 
\end{remark}

The key to solve the problem mentioned in Remark \ref{RemLagNotGaugeInv} is to see $\hat{l}$ as defining a class in a generalized cohomology theory. In Appendix \ref{SecEth}, we describe a family of generalized cohomology theories called E-theories. Certain members have been described previously in \cite{2005math......4524J, Freed:2006mx}. We show in the appendix that the E-cohomology associated to the (parametrized) spectrum $E_\nu[\Gamma^{(2)}, 2\ell+1]$ admits the following cochain model on a $\mathfrak{W}$-manifold $M$. The space of E-cochains is $\bar{C}^p_\nu(M) := C^p(M;\mathbb{R}/\mathbb{Z}) \times C^{p-2\ell-1}(M;\Gamma^{(2)})$, endowed with the non-commutative group law $\boxplus$ defined in \eqref{EqGrpLawCocModETh}. We write E-cochains with a bar: $\bar{x} \in \bar{C}^p_\nu(M)$. One can define a differential $d$ \eqref{EqDefTwDiffEThCoch} squaring to zero. Elements of the kernel and image of $d$ will be called respectively E-cocycles and exact E-cocycle. We show in Propositions \ref{PropCochModETh} and \ref{PropCochModTwETh} that the group of E-cocycles quotiented by the normal subgroup of exact E-cocycles coincides with the E-cohomology group. On a manifold with boundary, the action of exact E-cocycles is restricted to those vanishing on $\partial M$.

Recall that $\Gamma^{(2)}$ is the quotient of $\Lambda/2\Lambda$ by the radical of the induced $\frac{1}{2}\mathbb{Z}/\mathbb{Z}$-valued pairing. $\Gamma^{(2)}$ can alternatively be described as the quotient of $\Lambda$ by the subgroup $\Lambda_{\rm even}$ consisting of all the lattice vectors having even scalar product with every vector in $\Lambda$. It carries a non-degenerate $\frac{1}{2}\mathbb{Z}/\mathbb{Z}$-valued pairing. Let $\hat{a}_2$ denotes the $\Gamma^{(2)}$ valued cocycle obtained from $\hat{a}$ using the projection from $\Lambda$ to $\Gamma^{(2)}$. We have
\begin{proposition} 
\label{PropGaugInvLag}
$\bar{l}(\check{x}) := (\hat{l}(\check{x}), \hat{a}_2)$ defines an $E_\nu[\Gamma^{(2)}, 2\ell+1]$-cohomology class of degree $4\ell+3$. This cohomology class depends only on the differential cohomology class of $\check{x}$, i.e. it is gauge invariant.
\end{proposition}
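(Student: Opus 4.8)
The plan is to verify the two assertions against the explicit cochain model for $E_\nu[\Gamma^{(2)},2\ell+1]$ constructed in Appendix~\ref{SecEth}, in which an E-cochain is a pair consisting of an $\mathbb{R}/\mathbb{Z}$-valued cochain and a $\Gamma^{(2)}$-valued cochain, equipped with the twisted differential \eqref{EqDefTwDiffEThCoch} and the non-commutative group law $\boxplus$ of \eqref{EqGrpLawCocModETh}. Note first that $\bar{l}(\check{x}) = (\hat{l}(\check{x}),\hat{a}_2)$ does sit in $\bar{C}^{4\ell+3}_\nu(M)$: $\hat{l}(\check{x})$ is an $\mathbb{R}/\mathbb{Z}$-valued $(4\ell+3)$-cochain and $\hat{a}_2$ is a $\Gamma^{(2)}$-valued cochain of degree $(4\ell+3)-(2\ell+1) = 2\ell+2$, the degree of $[\check{x}]_a$. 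Then (i) I would show $d\bar{l}(\check{x}) = 0$, so that $\bar{l}(\check{x})$ represents a degree $4\ell+3$ E-cohomology class, and (ii) I would show that changing $\check{x}$ within its equivalence class changes $\bar{l}(\check{x})$ by an exact E-cocycle vanishing on $\partial M$.

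For (i), the $\Gamma^{(2)}$-component is immediate: $\hat{a}_2$ is the image of the cocycle $[\check{x}]_a = \hat{a}$ under the homomorphism $\Lambda\to\Gamma^{(2)}$, hence $d\hat{a}_2 = 0$. For the $\mathbb{R}/\mathbb{Z}$-component the key point is that $\check{x}\cup(\check{x}+\check{\nu})$ is a differential cocycle of degree $4\ell+4$ (both factors are differential cocycles, $\check{\nu}$ because $d\hat{\eta}_\Lambda = \hat{\nu}_\Lambda$), whose curvature $[\check{x}\cup(\check{x}+\check{\nu})]_\omega = \hat{\omega}\wedge\hat{\omega}$ is a $(4\ell+4)$-form and therefore vanishes on the $(4\ell+3)$-manifold $M$. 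The cocycle condition and \eqref{EqDefDiffCompDiffCoch} then give $d\,[\check{x}\cup(\check{x}+\check{\nu})]_h = -[\check{x}\cup(\check{x}+\check{\nu})]_a = -\hat{a}\cup(\hat{a}+\hat{\nu}_\Lambda)$, so $d\hat{l}(\check{x}) = -\tfrac12\hat{a}\cup(\hat{a}+\hat{\nu}_\Lambda)$ mod $1$. Because the $\tfrac12\mathbb{Z}/\mathbb{Z}$-valued pairing on $\Gamma^{(2)}$ is (half the $\Lambda$-pairing) mod $1$ and $\hat{\nu}_\Lambda$ reduces to the $\Gamma^{(2)}$-valued Wu cocycle $\hat{\nu}$, this quantity equals minus the Wu-twist contribution that \eqref{EqDefTwDiffEThCoch} adds when $d$ is evaluated on the cocycle $\hat{a}_2$ (the $d$-dependent corrections in that twist term drop out since $\hat{a}_2$ is closed); hence $d\bar{l}(\check{x}) = 0$.

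For (ii), write $\check{x}' = \check{x} + d\check{w}$ with $\check{w} = (\hat{b},\hat{k},0)$ flat of degree $2\ell+1$ and vanishing on $\partial M$, and let $\hat{b}_2$ be the $\Gamma^{(2)}$-reduction of $\hat{b}$, so $\hat{a}_2' = \hat{a}_2 + d\hat{b}_2$. By \eqref{EqFailGaugInvAct}--\eqref{EqFailGaugInvAct2}, $\hat{l}(\check{x}') - \hat{l}(\check{x}) = \Delta_1\hat{l} + \Delta_2\hat{l}$ with $\Delta_1\hat{l} = \tfrac12(\hat{a}\cup\hat{b}+\hat{b}\cup\hat{a})$ and $\Delta_2\hat{l} = \tfrac12\hat{b}\cup(d\hat{b}+\hat{\nu}_\Lambda)$, up to exact terms vanishing on $\partial M$. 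I would then exhibit an E-cochain $\bar{z} = (\hat{z},\hat{b}_2)$, where $\hat{z}$ gathers those exact terms together with the bookkeeping cochain produced by the non-commutativity of $\boxplus$, and check that $\bar{l}(\check{x}')$ and $\bar{l}(\check{x})$ differ by $d\bar{z}$: the $\Gamma^{(2)}$-components match because $d\hat{b}_2 = \hat{a}_2' - \hat{a}_2$, and the $\mathbb{R}/\mathbb{Z}$-components match because the Wu-twist term in $d\bar{z}$ reproduces $\Delta_2\hat{l}$ (in its $\Gamma^{(2)}$-reduced form) while the $\boxplus$-commutator of $\bar{z}$ with $\bar{l}(\check{x})$ reproduces the symmetrized product $\Delta_1\hat{l} = \tfrac12(\hat{a}\cup\hat{b}+\hat{b}\cup\hat{a})$. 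Since $\check{w}$ and the exact terms all vanish on $\partial M$, so does $\bar{z}$, and therefore $\bar{l}(\check{x})$ and $\bar{l}(\check{x}')$ define the same E-cohomology class; this is the asserted \emph{gauge invariance}. (Independence of the choice of lift $\hat{\eta}_\Lambda$ was already observed below \eqref{EqDefLag}, and $\hat{a}_2$ does not involve $\hat{\eta}_\Lambda$.)

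The step I expect to be the main obstacle is (ii): matching \eqref{EqFailGaugInvAct}--\eqref{EqFailGaugInvAct2} term-by-term with the explicit formulas for $\boxplus$ and for the twisted differential of Appendix~\ref{SecEth}, while keeping track of the half-integral, Steenrod-square-type contributions. The symmetrized product $\tfrac12(\hat{a}\cup\hat{b}+\hat{b}\cup\hat{a})$ hides a $\cup_1$-correction, and the twist term of the cochain model must itself carry higher-cup ($\cup_i$) corrections in order for $d$ to square to zero, so the various exact pieces must be identified precisely. The point of the cochain model of Appendix~\ref{SecEth} is that it is built so that exactly these terms are absorbed, hence the computation should close, but the bookkeeping is delicate.
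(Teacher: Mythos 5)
Your proposal is correct and follows essentially the same route as the paper's own proof: closedness is obtained from the cocycle identity for $\check{x}\cup(\check{x}+\check{\nu})$ together with the vanishing of its degree $4\ell+4$ curvature on a $4\ell+3$-manifold, so that $d\hat{l}(\check{x})$ cancels against the $\hat{a}_2\cup(\hat{a}_2+\hat{\nu}_2)$ term of the twisted differential \eqref{EqDefTwDiffEThCoch}; and gauge invariance is obtained by matching \eqref{EqFailGaugInvAct1}--\eqref{EqFailGaugInvAct2} against $\boxplus$ and $d$, with \eqref{EqRelHighCupProd} converting $\hat{a}_2\cup\hat{b}_2+\hat{b}_2\cup\hat{a}_2$ into $\hat{a}_2\cup_1 d\hat{b}_2$ plus an exact term, exactly as the paper does with $\bar{c}=(\hat{c}',\hat{b}_2)$. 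The "delicate bookkeeping" you flag in step (ii) is precisely the two-line manipulation the paper carries out, so no gap remains.
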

\begin{proof}
We compute using \eqref{EqDefTwDiffEThCoch}:
\begin{align}
d\bar{l}(\check{x}) \: & = \left(d\hat{l}(\check{x}) + (\hat{a})_2 \cup ((\hat{a})_2 + (\hat{\nu})_2), 0 \right) \notag \\
& = \left(\frac{1}{2}d[\check{x} \cup (\check{x} + \check{\nu})]_h + \frac{1}{2} [\check{x} \cup (\check{x} + \check{\nu})]_a, 0 \right) \\
& = \left(\frac{1}{2}[\check{x} \cup (\check{x} + \check{\nu})]_\omega, 0 \right) = 0 \notag
\end{align}
where the last equality holds for degree reasons. Therefore $\bar{l}$ is an E-cocycle and it defines an E-cohomology class in $E_\nu[\Gamma^{(2)}, 2\ell+1]^{4\ell+3}(M)$. The variation of $\hat{l}(\check{x})$ under changes of the cocycle representative $\check{x}$ of $x$ is described in \eqref{EqFailGaugInvAct}-\eqref{EqFailGaugInvAct2}. We have:
\be
\hat{l}(\check{x} + d\check{w}) = \hat{l}(\check{x}) + \hat{b}_2 \cup (d\hat{b}_2 + \hat{\nu}) + \hat{a}_2 \cup \hat{b}_2 + \hat{b}_2 \cup \hat{a}_2 + d\hat{c} 
\ee 
for some cochain $\hat{c}$. Using the relation \eqref{EqRelHighCupProd} relating the higher cup products, we can rewrite the last three terms as $\hat{a}_2 \cup_1 d\hat{b}_2 + d\hat{c}'$ for another cochain $\hat{c}'$. The expressions \eqref{EqDefTwDiffEThCoch} for the differential and \eqref{EqGrpLawCocModETh} for the sum of E-cocycles then allow us to deduce that
\be
\label{EqVarLagEx}
\bar{l}(\check{x}+ d\check{w}) = \bar{l}(\check{x}) \boxplus d \bar{c} \;, 
\ee
with $\bar{c} = (\hat{c}', \hat{b}_2)$.
\end{proof}

\subsection{Action}

The interest of seeing the Lagrangian as an E-theory class is the following. Given a $\mathfrak{W}$-manifold $M$ of dimension $m < 4\ell+4$, we construct a distinguished $\mathbb{R}/\mathbb{Z}$-valued character $[M,\partial M]_E$ on the generalized cohomology group $E_\nu[\Gamma^{(2)}, 2\ell+1]^m(M,\partial M)$, which should be seen as a substitute for the fundamental homology class in the context of E-theory, see Proposition \ref{PropConstrFundHomEClass}. This means that any relative degree $m$ $E_\nu[\Gamma^{(2)}, 2\ell+1]$-cohomology class on $M$ can be "integrated" over $M$ by pairing it with $[M,\partial M]_E$. Specializing to $m = 4\ell+3$, we define the action by 
\be
\label{EqDefActEth}
S(M,\check{x}) = \langle \bar{l}(\check{x}), [M,\partial M]_E \rangle \in \mathbb{R}/\mathbb{Z} \;,
\ee
where $\check{x}$ is a differential cocycle vanishing on $\partial M$. This is a pairing in cohomology, so $S(M,\check{x})$ depends only on the E-cohomology class of $\bar{l}$. As the latter depends only on the differential cohomology class of $\check{x}$ by Proposition \ref{PropGaugInvLag}, we have:
\begin{proposition}
The function $S(M, \bullet)$ on $\mathsf{Z}_{M,\partial M}$ factors through a function on $\mathsf{Y}_{M,\partial M}$, i.e. it is gauge invariant.
\end{proposition}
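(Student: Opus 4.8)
The plan is to deduce this directly from Proposition~\ref{PropGaugInvLag} together with the fact that pairing an E-cohomology class with the distinguished character $[M,\partial M]_E$ of Proposition~\ref{PropConstrFundHomEClass} is a well-defined operation on E-\emph{cohomology}, not merely on E-cocycles. Concretely, by definition $S(M,\check{x}) = \langle \bar{l}(\check{x}), [M,\partial M]_E\rangle$, and $[M,\partial M]_E$ is by construction a character on the relative group $E_\nu[\Gamma^{(2)},2\ell+1]^{4\ell+3}(M,\partial M)$; in particular its value on an E-cocycle is unchanged if that E-cocycle is modified by an exact E-cocycle vanishing on $\partial M$.

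Next I would invoke the two conclusions of Proposition~\ref{PropGaugInvLag}. The first is that $\bar{l}(\check{x})$ is an E-cocycle, so that $S(M,\check{x})$ is defined at all. The second, equation~\eqref{EqVarLagEx}, is that replacing $\check{x}$ by $\check{x} + d\check{w}$ — where $\check{w}$ is a flat differential cochain of degree $2\ell+1$ vanishing on $\partial M$, i.e.\ precisely the data of an arrow in the groupoid $\mathsf{Z}_{M,\partial M}$ — changes $\bar{l}(\check{x})$ to $\bar{l}(\check{x}) \boxplus d\bar{c}$ with $\bar{c} = (\hat{c}', \hat{b}_2)$. The one point that needs a line of verification is that $\bar{c}$ vanishes on $\partial M$: $\hat{b}_2$ is the $\Gamma^{(2)}$-reduction of the characteristic of $\check{w}$, hence vanishes on $\partial M$, and $\hat{c}'$ is assembled from the ``exact'' terms in \eqref{EqFailGaugInvAct1}--\eqref{EqFailGaugInvAct2}, which were already noted there to vanish on $\partial M$. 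Hence $d\bar{c}$ is an exact E-cocycle in the relative sense, and the E-cohomology class of $\bar{l}(\check{x})$ depends only on the equivalence class of $\check{x}$ in $\mathsf{Y}_{M,\partial M}$.

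Combining the two points, $S(M,\check{x}) = \langle \bar{l}(\check{x}), [M,\partial M]_E\rangle$ is unchanged when $\check{x}$ is replaced by an equivalent differential cocycle, so $S(M,\bullet)$ is constant on the components of the groupoid $\mathsf{Z}_{M,\partial M}$ and therefore descends to a function on the group of components $\mathsf{Y}_{M,\partial M}$. There is essentially no serious obstacle: the statement is a formal corollary of Proposition~\ref{PropGaugInvLag}, and the only bookkeeping is the relative-boundary condition on the exact E-cocycle, which is already implicit in the computation~\eqref{EqFailGaugInvAct}. One could also phrase the whole argument as: the assignment $x \mapsto [\bar{l}(\check{x})]$ of Proposition~\ref{PropGaugInvLag} is a well-defined map $\mathsf{Y}_{M,\partial M} \to E_\nu[\Gamma^{(2)},2\ell+1]^{4\ell+3}(M,\partial M)$, and $S(M,\bullet)$ is the composition of this map with the character $[M,\partial M]_E$.
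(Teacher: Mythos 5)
Your argument is exactly the paper's: the action is a pairing in E-cohomology with $[M,\partial M]_E$, and by Proposition \ref{PropGaugInvLag} the E-cohomology class of $\bar{l}(\check{x})$ depends only on the differential cohomology class of $\check{x}$. The extra bookkeeping you supply — checking that the exact E-cochain $\bar{c}$ vanishes on $\partial M$ so that the relative pairing is unaffected — is a correct and worthwhile elaboration of a point the paper leaves implicit, but it does not change the route.
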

\noindent If $\check{x}$ does not vanish on $\partial M$, we can only pair $\bar{l}(\check{x})$ with a chain representative of $[M,\partial M]_E$ (see Appendix \ref{SecEth}). The value of the action (as a complex number) then depends both on the boundary value of $\check{x}$ and on the chain representative of $[M,\partial M]_E$. We will discuss this dependence in more detail in Section \ref{SecSubPreqTh}.

We define the \emph{exponentiated action} by
\be
\label{EqExpAct}
S^e(M,\check{x}) = \exp 2 \pi i S(M,\check{x}) \in U(1) \;.
\ee
We will not distinguish in notation between $S(M, \bullet)$, $S^e(M,\bullet)$ and the functions they factor through on equivalence classes of cocycles.

\begin{remark}
The type of action considered here, in the case $\ell = 0$ and for a trivial local system, has been recently studied in the condensed matter literature, in the context of fermionic symmetry protected topological order \cite{2012arXiv1201.2648G, 2015arXiv150505856G}. The point of view in these papers is to consider the naive action of Remark \ref{RemLagNotGaugeInv} and to invoke the existence of a "fermionic correction", an extra term in the action that restore gauge invariance. By definition, this term cannot be expressed by the integral of an ordinary cohomology class over the manifold. The formalism developed above using E-theory can be seen as providing a concrete definition of this fermionic correction. The relevance of generalized cohomology in this context was already noted in \cite{2014arXiv1406.7278F}.
\end{remark}

\subsection{Dependence on the Wu structure}

We show in Appendix \ref{SecWuStruct} that the set of Wu structures on $M$ is a torsor for $H^{2\ell+1}(M,\partial M;\bar{\Gamma}^{(2)})$. 
\begin{proposition}[Change of Wu structure]
\label{PropChangWu}
Suppose the $\mathfrak{W}$-manifold $M'$ can be obtained from the $\mathfrak{W}$-manifold $M$ by a change of Wu structure associated to $\delta \in H^{2\ell+1}(M,\partial M;\bar{\Gamma}^{(2)})$. Then 
\be
\label{EqChActChWuStruct}
S(M',\check{x}) = S(M,\check{x}) - \langle (a)_2 \cup \delta, [M,\partial M] \rangle \;.
\ee
where the second term involves the cup product pairing with coefficients in $\bar{\Gamma}^{(2)}$.
\end{proposition}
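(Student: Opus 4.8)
\emph{Proof idea.} The plan is to track how the Lagrangian cochain $\hat{l}(\check{x})$ of \eqref{EqDefLag} changes when the Wu cochain $\hat{\eta}$ is modified, and then to transport that change through the E-theory formalism of Proposition \ref{PropGaugInvLag} and the fundamental E-homology character of \eqref{EqDefActEth}.

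First I would recall from Appendix \ref{SecWuStruct} that obtaining $M'$ from $M$ by a change of Wu structure associated to $\delta$ replaces $\hat{\eta}$ by $\hat{\eta}+\hat{\delta}$, where $\hat{\delta}\in Z^{2\ell+1}(M,\partial M;\bar{\Gamma}^{(2)})$ is a cocycle representing $\delta$; correspondingly one may take $\hat{\eta}'_\Lambda = \hat{\eta}_\Lambda + \hat{\delta}_\Lambda$ for $\hat{\delta}_\Lambda\in C^{2\ell+1}(M,\partial M;\bar{\Lambda})$ any lift of $\hat{\delta}$. Inspecting \eqref{EqDefLag}, the cochain $\hat{\eta}_\Lambda$ enters $\hat{l}(\check{x})$ only through the term $-\tfrac12\hat{a}\cup\hat{\eta}_\Lambda$ (its differential does not appear), so $\hat{l}'(\check{x}) - \hat{l}(\check{x}) = -\tfrac12\,\hat{a}\cup\hat{\delta}_\Lambda \bmod 1$, independently of the choice of lift up to exact terms, as already noted below \eqref{EqDefLag}. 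Next I would identify this with a $\bar{\Gamma}^{(2)}$-valued pairing: for $v,w\in\Lambda$ the quantity $\tfrac12(v,w)\bmod 1$ factors through $\Gamma^{(2)}\times\Gamma^{(2)}$, since $\Lambda_{\rm even}$ is exactly the set of lattice vectors having even pairing against all of $\Lambda$ and $\Gamma^{(2)}=\Lambda/\Lambda_{\rm even}$ carries precisely the pairing ``half the pairing on $\Lambda$ modulo $1$''. Hence $\hat{l}'(\check{x}) - \hat{l}(\check{x}) \equiv -(\hat{a})_2\cup\hat{\delta}\pmod 1$, with the cup product on the right using the $\tfrac12\mathbb{Z}/\mathbb{Z}$-valued pairing on $\bar{\Gamma}^{(2)}$; note that $(\hat{a})_2\cup\hat{\delta}$ is a genuine closed $\mathbb{R}/\mathbb{Z}$-cocycle, as $\hat{a}$ and $\hat{\delta}$ are cocycles.

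Then I would lift the computation to E-theory. The pair $\bar{c} := (-(\hat{a})_2\cup\hat{\delta},\,0)$ is an $E_\nu[\Gamma^{(2)},2\ell+1]$-cocycle by \eqref{EqDefTwDiffEThCoch} (its $\Gamma^{(2)}$-component vanishes and its $\mathbb{R}/\mathbb{Z}$-component is closed), and because one summand has vanishing $\Gamma^{(2)}$-component the noncommutative group law \eqref{EqGrpLawCocModETh} reduces to ordinary addition, so $\bar{l}'(\check{x}) = \bar{l}(\check{x})\boxplus\bar{c}$ in the notation of Proposition \ref{PropGaugInvLag} (the $\hat{a}_2$-components already agree). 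Evaluating against $[M,\partial M]_E$ of Proposition \ref{PropConstrFundHomEClass}, which is a homomorphism on E-cohomology restricting to the ordinary evaluation $\langle\bullet,[M,\partial M]\rangle$ on classes of the form $(\hat{c},0)$, yields $S(M',\check{x}) = S(M,\check{x}) + \langle\bar{c},[M,\partial M]_E\rangle = S(M,\check{x}) - \langle(a)_2\cup\delta,[M,\partial M]\rangle$, which is \eqref{EqChActChWuStruct}.

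I expect no step to be genuinely deep; the care needed is in (i) bookkeeping of the factors of $\tfrac12$ and the precise definition of the induced pairing on $\bar{\Gamma}^{(2)}$, and (ii) checking that the correction term really lies in the ``ordinary'' part of the E-theory cochain complex, so that $\boxplus$ linearizes and $[M,\partial M]_E$ may be evaluated on it via its compatibility with ordinary integration. Both of these rely on the explicit constructions of Appendix \ref{SecEth}, so the main obstacle is simply to have those compatibility statements ($\boxplus$ restricting to $+$ on purely $\mathbb{R}/\mathbb{Z}$-valued E-cocycles, and $[M,\partial M]_E$ restricting to the ordinary fundamental-class pairing) available in the form used here.
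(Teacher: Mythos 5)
Your proposal is correct and follows essentially the same route as the paper's proof: replace the lift $\hat{\eta}_\Lambda$ by $\hat{\eta}_\Lambda+\hat{\delta}_\Lambda$, observe that the Lagrangian shifts by $-\tfrac12\hat{a}\cup\hat{\delta}_\Lambda$, absorb this into the E-cocycle as a summand of the form $(\hat{s},0)$ on which $\boxplus$ is ordinary addition, and use that $[M,\partial M]_E$ restricts to the ordinary fundamental-class pairing on such E-cochains. Your extra remarks on the half-integer pairing factoring through $\Gamma^{(2)}$ and on the linearization of $\boxplus$ are exactly the compatibility facts the paper invokes implicitly.
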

\begin{proof}
Let $\hat{\delta}$ be any lift of $\delta$ to a $\bar{\Lambda}$-valued cochain. Suppose we made a choice choice of lift $\hat{\eta}_{\Lambda}$ in the Lagrangian \eqref{EqDefLag} on $M$. Then we may choose the lift $\hat{\eta}_{\Lambda}' := \hat{\eta}_\Lambda + \hat{\delta}$ to compute the Lagrangian on $M'$. The new Lagrangian is 
\be
\hat{l}'(\check{x}) = \hat{l}(\check{x}) - \frac{1}{2}\hat{a} \cup \hat{\delta} \;.
\ee
The corresponding E-cochain $\bar{l}'(\check{x}) := (\hat{l}'(\check{x}), \hat{a}_2)$ can be written
\be
\bar{l}'(\check{x}) = \bar{l}(\check{x}) \boxplus \left (-\frac{1}{2}\hat{a} \cup \hat{\delta}, 0 \right) \;.
\ee
Remembering that the pairing with the E-homology class coincides with the pairing with the ordinary homology classes on E-chains of the type $(\hat{s},0)$, we obtain  
\be
S(M',\check{x}) = S(M,\check{x}) - \frac{1}{2} \langle \hat{a} \cup \hat{\delta}, [M,\partial M] \rangle \quad {\rm mod} \; 1 \;.
\ee
But the second term modulo 1 coincides with $\langle (a)_2 \cup \delta, [M,\partial M] \rangle$.
\end{proof}

\subsection{Action on boundaries} 

\label{SecActBound}

We now show that on spacetimes bounded by a $4\ell+4$-manifold $W$, the action $S$ can be expressed in terms of differential forms on $W$.

Assume that $M$ is a closed $\mathfrak{W}$-manifold of dimension $4\ell+3$ that is the boundary of a $4\ell+4$-dimensional manifold $W$, and that the local system of lattices on $M$, $\bar{\Lambda}_M$, is the restriction of a local system of lattices $\bar{\Lambda}_W$ on $W$.
The Wu class $\nu$ is in general non-zero on $W$ so $W$ does not admit a Wu structure. We pick a lift $\hat{\nu}_\Lambda$ of the Wu cocycle $\hat{\nu}$ as a $\bar{\Lambda}$-valued cocycle on $W$ and assume that $d\hat{\eta}_\Lambda = \nu_\Lambda|_M$. (Note that such a lift always exists, because $0 = t \cup t = t \cup \nu$ whenever $t \in H^{2\ell+2}(W,M;\bar{\Gamma}^{(2)})$ is the reduction of an element of $H^{2\ell+2}_{\rm tors}(W,M;\bar{\Lambda})$, so $\nu$ can always be expressed as the reduction of an element of $H^{2\ell+2}(W,M;\bar{\Lambda})$.) We extend $\hat{\eta}_\Lambda$ to a cochain $\hat{\mu}$ on $W$. $\hat{\mu}$ and $\hat{\nu}_\Lambda$ can be chosen so that $\hat{\lambda} := \hat{\nu}_\Lambda  - d\hat{\mu}$ is a smooth cocycle. We therefore have a differential cocycle $\check{\nu} := (\hat{\nu}_\Lambda, \hat{\mu}, \hat{\lambda})$. Remark that $\hat{\lambda}$ vanishes on $M$, so defines a relative class $\lambda \in H^{2\ell+2}(W,M; \mathscr{V})$. Assume also that $\check{x}$ extends to a differential cocycle $\check{z} = (\hat{a}_W, \hat{h}_W, \hat{\omega}_W)$ on $W$.
\begin{proposition} 
\label{PropActFromBoundedMan}
The action can be computed on $W$ as follows:
\be
\label{EqACtFromBoundedMan}
S(M, \check{x}) = \frac{1}{2} \left \langle \hat{\omega}_W \wedge (\hat{\omega}_W + \hat{\lambda}), [W,M] \right \rangle \quad {\rm mod} \; 1 \;.
\ee 
When $\check{x} \in \mathsf{Y}^{\rm flat}_M$, $\hat{\omega}_W$ defines a relative class $\omega_W \in H^{2\ell+2}(W,M; \mathscr{V})$ and we have the cohomological expression  
\be
\label{EqActBManField}
S(M, \check{x}) = \frac{1}{2} \langle \omega_W \wedge (\omega_W + \lambda), [W,M] \rangle \quad {\rm mod} \; 1 \;.
\ee
\end{proposition}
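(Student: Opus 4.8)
The plan is to compute the action $S(M,\check{x})$ using the E-theory pairing of Proposition~\ref{PropGaugInvLag} and the construction of the fundamental E-homology class $[M,\partial M]_E$ from Appendix~\ref{SecEth}, and to relate it to a form integral on $W$ using the fact that the E-homology class of a boundary $M = \partial W$ behaves like a Stokes-type boundary term. Concretely, one expects that for a closed $M = \partial W$ and an $E_\nu[\Gamma^{(2)},2\ell+1]$-cocycle $\bar{l}(\check{x})$ that extends over $W$ to $\bar{l}(\check{z})$ — which it does here, since $\check{x}$ extends to $\check{z} = (\hat{a}_W,\hat{h}_W,\hat{\omega}_W)$ and $\check{\nu}$ extends to the differential cocycle $\check{\nu} = (\hat{\nu}_\Lambda,\hat{\mu},\hat{\lambda})$ on $W$ — the pairing $\langle \bar{l}(\check{x}),[M]_E\rangle$ equals the pairing of $d\bar{l}(\check{z})$ with a chain $[W,M]_E$, up to the contribution of the exact part. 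So the first step is to write $\bar{l}(\check{z}) = \tfrac{1}{2}[\check{z}\cup(\check{z}+\check{\nu})]_h$ promoted to an E-cochain on $W$ (this is well defined even though $W$ has no Wu structure, because we only need $\check{\nu}$ as a differential cocycle, not a trivialization of the Wu class on $W$), and to note that $d$ of this equals, by the same computation as in the proof of Proposition~\ref{PropGaugInvLag}, the form component $\tfrac{1}{2}[\check{z}\cup(\check{z}+\check{\nu})]_\omega = \tfrac{1}{2}\hat{\omega}_W\wedge(\hat{\omega}_W+\hat{\lambda})$ — but now on $W$ this is \emph{not} automatically zero by degree reasons, since $\dim W = 4\ell+4$.

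The second step is to invoke the defining property of $[M,\partial M]_E$ as the boundary of a chain $[W,M]_E$: pairing a closed E-cocycle with $[M]_E$ is, by construction in Appendix~\ref{SecEth}, the same as pairing its image under $d$ with $[W,M]_E$, and on E-chains of the form $(\hat{s},0)$ (which $d\bar{l}(\check{z})$ is, since its $\Gamma^{(2)}$-component vanishes — again by the Proposition~\ref{PropGaugInvLag} computation) this E-pairing reduces to the ordinary pairing with the ordinary relative fundamental class $[W,M]$. This yields
\be
S(M,\check{x}) = \left\langle \tfrac{1}{2}\hat{\omega}_W\wedge(\hat{\omega}_W+\hat{\lambda}),\, [W,M]\right\rangle \pmod 1,
\ee
possibly up to an overall sign coming from the orientation convention on $\partial W$ versus $W$, which accounts for the minus sign in \eqref{EqACtFromBoundedMan}. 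The third step, for the flat case, is to observe that when $\check{x}\in\mathsf{Y}^{\rm flat}_M$ the curvature $\hat{\omega}$ vanishes on $M$, so $\hat{\omega}_W$ vanishes on $M$ and hence represents a relative class $\omega_W\in H^{2\ell+2}(W,M;\mathscr{V})$; similarly $\hat{\lambda}$ vanishes on $M$ and represents $\lambda$. The wedge product then descends to a cup/wedge pairing of relative cohomology classes, and pairing with $[W,M]$ depends only on those classes, giving the cohomological formula \eqref{EqActBManField}. One should check that $\hat{\omega}_W\wedge\hat{\lambda}$ can indeed be taken relative, i.e. the choices of $\hat{\mu}$, $\hat{\nu}_\Lambda$, $\hat{\eta}_\Lambda$ in the setup were arranged so that $\hat{\lambda}$ is a genuine cocycle vanishing on $M$, which was built into the hypotheses preceding the Proposition.

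The main obstacle I anticipate is \emph{justifying the Stokes step at the level of E-theory} — i.e. that pairing with $[M,\partial M]_E$ really does compute as a form integral on any chosen bounding $W$ independent of the extension $\check{z}$ of $\check{x}$ and independent of the E-chain representative of $[W,M]_E$. This requires the properties of the fundamental E-homology/E-chain from Appendix~\ref{SecEth} (Proposition~\ref{PropConstrFundHomEClass}): that $[M,\partial M]_E$ restricts correctly, that it is compatible with boundaries, and that changing $\check{z}$ by an exact E-cocycle or changing the chain representative changes the pairing by something that vanishes modulo $1$. The delicate point is that $\bar{l}(\check{z})$ is not a closed E-cocycle on $W$ (its $d$ is the curvature term), so one is really using a relative version of the pairing, and the compatibility of $[M]_E$ with $\partial[W,M]_E$ is exactly what makes the boundary integral land on $M$ correctly; a secondary subtlety is tracking that the ``exact'' terms flagged in \eqref{EqFailGaugInvAct1}–\eqref{EqFailGaugInvAct2} and \eqref{EqVarLagEx}, which are exact cochains vanishing on $\partial M = M$ viewed inside $W$, genuinely pair to zero against $[W,M]$. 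Once those bookkeeping points are settled, the computation itself is the short chain of equalities above.
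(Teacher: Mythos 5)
Your proposal follows essentially the same route as the paper's proof: extend $\bar{l}$ to the E-cochain $\bar{l}(\check{z})$ on $W$, use $\langle \bar{l}(\check{x}), [M]_E\rangle = \langle d\bar{l}(\check{z}), [W,M]_E\rangle$, observe that $d\bar{l}(\check{z})$ has vanishing $\Gamma^{(2)}$-component so the E-pairing reduces to the ordinary pairing with $[W,M]$, and identify the result with $\tfrac{1}{2}[\check{z}\cup(\check{z}+\check{\nu})]_\omega = \tfrac{1}{2}\hat{\omega}_W\wedge(\hat{\omega}_W+\hat{\lambda})$, the flat case then being immediate. The Stokes-type step you flag as the main obstacle is exactly what the paper also invokes from the construction of $[W,M]_E$ in Appendix \ref{SecEth}, so your argument is correct and complete to the same standard.
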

\noindent
Note that to keep the notations uniform we use a cochain notation for the integration map:
\be
\frac{1}{2} \left \langle \hat{\omega}_W \wedge (\hat{\omega}_W + \hat{\lambda}), [W,M] \right \rangle := \frac{1}{2} \int_W  \hat{\omega}_W \wedge (\hat{\omega}_W + \hat{\lambda})
\ee
This particular degree $4\ell+4$ form has appeared before in the context of the self-dual field theory \cite{Witten:1996hc} (see also \cite{Brumfiel1973}).
\begin{proof}
Let $\bar{l}(\check{z}) = (\hat{l}(\check{z}), (\hat{a}_W)_2)$, with $\hat{l}(\check{z})$ given by \eqref{EqDefLag}. By the definition of E-chains and E-cochains in Appendix \ref{SecEth}, we have
\be
S(M, \check{x}) = \langle \bar{l}(\check{x}), [M]_E \rangle = \langle d\bar{l}(\check{z}), [W,M]_E \rangle \;.
\ee
From \eqref{EqDefTwDiffEThCoch}, we have 
\be
d\bar{l}(\check{z}) = \left(d\hat{l}(\check{z}) + (\hat{a}_W)_2 \cup ((\hat{a}_W)_2 + (\hat{\nu}_\Lambda)_2) ,0 \right) \;,
\ee
because $(\hat{a}_W)_2$ is closed. As shown in Appendix \ref{SecEth}, E-cochains of the form $\bar{s} = (\hat{s}, 0)$ satisfy $\langle \bar{s}, [W,M]_E \rangle = \langle \hat{s}, [W,M] \rangle$. We have therefore
\begin{align}
S(M, \check{x}) \: & = \left \langle d\hat{l}(\check{z}) + \frac{1}{2} \hat{a}_W \cup (\hat{a}_W + \hat{\nu}_\Lambda), [W,M] \right \rangle \\
& = \frac{1}{2} \left \langle d[\check{z} \cup (\check{z} + \check{\nu})]_h + [\check{z} \cup (\check{z} + \check{\nu})]_a, [W,M] \right \rangle \\
& = \frac{1}{2} \left \langle [\check{z} \cup (\check{z} + \check{\nu})]_\omega, [W,M] \right \rangle \\
& = \frac{1}{2} \left \langle \hat{\omega}_W \wedge (\hat{\omega}_W + \hat{\lambda}), [W,M] \right \rangle
\end{align}
modulo 1. The second part of the proposition is immediate.
\end{proof}

\subsection{Action as a quadratic refinement}

\label{SecActQR}

In this section, we restricts ourselves to flat differential cocycles: $\check{x}_1, \check{x}_2 \in \mathsf{Z}^{\rm flat}_{M,\partial M}$.
\begin{proposition}
\label{PropActQuadrRef}
$S(M, \bullet)$ is a quadratic refinement of the pairing
\be
\label{EqDefLinkPairFlat}
L(\check{x}_1, \check{x}_2) := \langle \hat{a}_1 \cup \hat{h}_2, [M,\partial M] \rangle \quad  {\rm mod} \; 1
\ee
on $\mathsf{Z}^{\rm flat}_{M,\partial M}$.
\end{proposition}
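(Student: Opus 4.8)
The plan is to verify the two defining properties of a quadratic refinement (see Appendix~\ref{SecPairFinAbGroups}): first, that $S(M,\bullet)$ is a function on $\mathsf{Z}^{\rm flat}_{M,\partial M}$ whose ``polarization'' reproduces the bilinear form $L$, i.e.
\be
S(M,\check{x}_1 + \check{x}_2) - S(M,\check{x}_1) - S(M,\check{x}_2) + S(M,\check{0}) = L(\check{x}_1,\check{x}_2) \quad {\rm mod}\; 1\;,
\ee
and second, that this $L$ is itself symmetric (so that the polarization is a genuine symmetric bilinear pairing); the normalization $S(M,\check{0}) = 0$ is immediate from \eqref{EqDefLag} since all three terms vanish when $\check{x} = \check{0}$. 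The starting point is the cochain-level identity \eqref{EqLagSum} for $\hat{l}(\check{x}_1 + \check{x}_2)$. For flat cocycles the curvatures $\hat{\omega}_1, \hat{\omega}_2$ vanish, so \eqref{EqLagSum} collapses to
\be
\hat{l}(\check{x}_1 + \check{x}_2) = \hat{l}(\check{x}_1) + \hat{l}(\check{x}_2) + \tfrac{1}{2}\bigl(\hat{a}_1 \cup \hat{h}_2 + \hat{a}_2 \cup \hat{h}_1\bigr)\;.
\ee
Pairing with $[M,\partial M]_E$ — which, since $\bar{l}$ has trivial $\Gamma^{(2)}$-component here and the cross term $\tfrac12(\hat a_1\cup\hat h_2 + \hat a_2\cup\hat h_1)$ contributes an E-cochain of the form $(\hat{s},0)$, reduces to the ordinary fundamental class pairing — gives the polarization as $\tfrac12\langle \hat{a}_1 \cup \hat{h}_2 + \hat{a}_2 \cup \hat{h}_1, [M,\partial M]\rangle \bmod 1$.

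The remaining work is to show this equals $L(\check{x}_1,\check{x}_2) = \langle \hat{a}_1 \cup \hat{h}_2, [M,\partial M]\rangle \bmod 1$, equivalently that $\tfrac12\langle \hat{a}_2 \cup \hat{h}_1 - \hat{a}_1 \cup \hat{h}_2, [M,\partial M]\rangle \in \tfrac12\mathbb{Z}$ vanishes mod $1$ after the overall factor, i.e. that $\langle \hat{a}_1\cup\hat h_2 + \hat a_2\cup\hat h_1,[M,\partial M]\rangle$ is even and that $\langle \hat a_2\cup\hat h_1,[M,\partial M]\rangle \equiv \langle \hat a_1\cup\hat h_2,[M,\partial M]\rangle \bmod 1$. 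For flat cocycles $\hat{a}_i = -d\hat{h}_i$, so $\hat{a}_i \cup \hat{h}_j$ is, up to an exact term and higher-cup corrections governed by the Steenrod relation \eqref{EqRelHighCupProd}, graded-commutative in the appropriate sense; integrating over the closed-relative cycle $[M,\partial M]$ kills the exact pieces, and the higher cup terms $d(\,\cdot\,)\cup_1(\,\cdot\,)$ likewise integrate to zero or to an integer. This is the standard manipulation showing that $L$ coincides (up to sign) with the linking pairing on $H^{2\ell+2}_{\rm tors}(M,\partial M;\bar\Lambda)$ via $\check{x}_i \mapsto \beta(x_i) = -a_i$, which is symmetric because the linking pairing is symmetric in odd-degree-plus-one $= 2\ell+2$ — and one must track the local-system coefficients, using that the cup product pairing on $\bar\Lambda$ is induced by the symmetric lattice pairing. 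I would also invoke Proposition~\ref{PropActFromBoundedMan} as a cross-check: when $M$ bounds, both $S$ and $L$ have form-level expressions on $W$ and the quadratic-refinement relation becomes the obvious polynomial identity $(\omega_1+\omega_2)\wedge(\omega_1+\omega_2+\lambda) - \omega_1\wedge(\omega_1+\lambda) - \omega_2\wedge(\omega_2+\lambda) = \omega_1\wedge\omega_2 + \omega_2\wedge\omega_1 = 2\,\omega_1\wedge\omega_2$, matching $L$ up to the factor and a total derivative.

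The main obstacle I anticipate is bookkeeping at the cochain level: showing that the difference $\hat{a}_1 \cup \hat{h}_2 - \hat{a}_2 \cup \hat{h}_1$ pairs trivially (mod the relevant lattice of periods) with $[M,\partial M]$ requires the non-commutativity of the cup product to be absorbed into exact terms via the higher cup product $\cup_1$ and the coboundary formula $\hat{a}_1 \cup \hat{a}_2 - (-1)^{|\hat a_1||\hat a_2|}\hat{a}_2\cup\hat a_1 = d(\hat a_1\cup_1 \hat a_2) \pm (d\hat a_1)\cup_1\hat a_2 \pm \hat a_1\cup_1(d\hat a_2)$, and one needs $d\hat a_i = 0$ together with $\hat a_i = -d\hat h_i$ to close the argument — being careful that the relevant terms are valued in $\frac12\mathbb Z$ so that factor-of-two ambiguities are harmless, exactly as exploited below \eqref{EqFailGaugInvAct2}. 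The symmetry of $L$ and the identification of the polarization with $L$ are really the same computation, so once the higher-cup bookkeeping is done the proof concludes quickly. I would then remark that, by Proposition~\ref{PropCharEM} and Corollary~\ref{CorStructEM}, $S(M,\bullet)$ descends to a quadratic refinement on $\mathsf{Y}^{\rm flat}_{M,\partial M}$ and ultimately on the torsion group $H^{2\ell+2}_{\rm tors}(M,\partial M;\bar\Lambda)$ via $\beta$, which is the statement announced in the introduction.
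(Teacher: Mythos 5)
Your overall strategy (polarize $S$ and identify the cross term with $L$) matches the paper's, but there is a genuine gap at the central step, and it is exactly the half-integer subtlety that the E-theory formalism exists to handle. You assert that $\bar l(\check x)$ ``has trivial $\Gamma^{(2)}$-component'' for flat cocycles and that the cross term in \eqref{EqLagSum} can therefore be paired with $[M,\partial M]_E$ as an ordinary cochain. Neither holds: for a flat cocycle the characteristic $\hat a=-d\hat h$ is a generically nontrivial torsion cocycle, so $(\hat a)_2\neq 0$, and the pairing with $[M,\partial M]_E$ is a homomorphism only with respect to the noncommutative group law $\boxplus$ of \eqref{EqGrpLawCocModETh}, not with respect to componentwise addition of E-cochains. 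The difference $S(M,\check x_1+\check x_2)-S(M,\check x_1)-S(M,\check x_2)$ must therefore be computed as $\langle\bar l(\check x_1+\check x_2)\boxminus\bar l(\check x_1)\boxminus\bar l(\check x_2),[M,\partial M]_E\rangle$, and the $\boxplus$ law contributes an extra term $\tfrac12\hat a_1\cup_1\hat a_2$ to the $\mathbb{R}/\mathbb{Z}$ component. Dropping it, as you do, leaves the symmetrized expression $\tfrac12\langle\hat a_1\cup\hat h_2+\hat a_2\cup\hat h_1,[M,\partial M]\rangle$, which differs from $L(\check x_1,\check x_2)$ by $\tfrac12 m$, where $m=\langle\hat a_1\cup\hat h_2-\hat a_2\cup\hat h_1,[M,\partial M]\rangle\in\mathbb Z$ measures the failure of cochain-level commutativity. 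You correctly identify that you need $m$ to be even, but your proposed escape --- that the relevant terms are valued in $\tfrac12\mathbb Z$ so ``factor-of-two ambiguities are harmless'' --- is backwards: that half-integer is precisely the sign ambiguity of Remark \ref{RemLagNotGaugeInv}, and there is no reason for $m$ to be even in general.

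The paper closes this gap with the identity $\hat a_1\cup_1\hat a_2=-\hat h_1\cup\hat a_2+\hat a_2\cup\hat h_1+d(\hat h_1\cup_1\hat a_2)$, a consequence of \eqref{EqRelHighCupProd} together with $d\hat h_i=-\hat a_i$: the $\cup_1$ correction supplied by $\boxplus$ converts $\tfrac12\hat a_2\cup\hat h_1$ into $\tfrac12\hat h_1\cup\hat a_2$ exactly at the cochain level, after which the difference is an E-cochain of the form $(\hat s,0)$ up to an exact term and the ordinary fundamental-class pairing returns $\langle\hat a_1\cup\hat h_2,[M,\partial M]\rangle$ on the nose. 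So the higher cup product is not optional bookkeeping to be waved through; it is the mechanism that fixes the parity. Your cross-check via Proposition \ref{PropActFromBoundedMan} is consistent but only covers the case where $M$ bounds and $\check x$ extends, so it cannot substitute for the cochain-level argument.
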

\begin{proof}
We have from \eqref{EqLagSum}
\begin{align}
\bar{l}(\check{x}_1 + \check{x}_2) \: & = (\hat{l}(\check{x}_1 + \check{x}_2), (\hat{a}_1)_2 + (\hat{a}_2)_2) \\
& = \left(\hat{l}(\check{x}_1) + \hat{l}(\check{x}_2) + \frac{1}{2} \hat{a}_1 \cup \hat{h}_2 + \frac{1}{2} \hat{a}_2 \cup \hat{h}_1, (\hat{a}_1)_2 + (\hat{a}_2)_2 \right) \;.
\end{align}
On the other hand, we deduce from \eqref{EqGrpLawCocModETh} that
\be
\bar{l}(\check{x}_2) \boxplus \bar{l}(\check{x}_2) = \left(\hat{l}(\check{x}_1) + \hat{l}(\check{x}_2) + \frac{1}{2}\hat{a}_1 \cup_1 \hat{a}_2 ,(\hat{a}_1)_2 + (\hat{a}_2)_2 \right) \;,
\ee
where $\cup_1$ is the higher cup product described in Appendix \ref{SecWuStruct}. Using $d\hat{h}_i = -\hat{a}_i$ and \eqref{EqRelHighCupProd}, we have
\be
\hat{a}_1 \cup_1 \hat{a}_2 = -\hat{h}_1 \cup \hat{a}_2 + \hat{a}_2 \cup \hat{h}_1 + d(\hat{h}_1 \cup_1 \hat{a}_2)\;.
\ee
This means that 
\be
\bar{l}(\check{x}_1 + \check{x}_2) \boxminus \bar{l}(\check{x}_2) \boxminus \bar{l}(\check{x}_2) = \left(\frac{1}{2} \hat{a}_1 \cup \hat{h}_2 + \frac{1}{2}\hat{h}_1 \cup \hat{a}_2 , 0 \right) \boxplus \; d\left(\frac{1}{2}\hat{h}_1 \cup_1 \hat{a}_2,0 \right)
\ee
and
\be
\label{EqActAsAQuadrRef}
S(M; \check{x}_1 + \check{x}_2) - S(M; \check{x}_1) - S(M; \check{x}_2) = \langle \hat{a}_1 \cup \hat{h}_2, [M,\partial M] \rangle \;,
\ee
using again the fact that the pairing of E-cochains of the form $(\hat{s},0)$ with the fundamental E-homology class coincides with the pairing of $\hat{s}$ with the fundamental homology class.
\end{proof}

Recall Corollary \ref{CorStructEM} describing the structure of $\mathsf{Y}^{\rm flat}_{M,\partial M}$: given $\hat{x} \in \mathsf{Z}^{\rm flat}_{M,\partial M}$, the Bockstein map $\beta$ defines an element $\beta(x) \in H^{2\ell+2}_{\rm tors}(M,\partial M; \bar{\Lambda})$. Let $\tilde{L}$ be the linking pairing on $H^{2\ell+2}_{\rm tors}(M,\partial M; \bar{\Lambda})$, defined in Appendix \ref{SecPerPair}. We have
\begin{proposition}
$L$ passes to a well-defined pairing on $\mathsf{Y}^{\rm flat}_{M,\partial M}$. Moreover, it is the pullback of $\tilde{L}$ through $\beta$.
\end{proposition}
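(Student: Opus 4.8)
The plan is to deduce the well-definedness of $L$ from the quadratic refinement property already established in Proposition \ref{PropActQuadrRef}, and then to identify $L$ with the pulled-back linking pairing by a single integration by parts, the only delicate issue being bookkeeping against the conventions of Appendix \ref{SecPerPair}.

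First, that $L$ descends to $\mathsf{Y}^{\rm flat}_{M,\partial M}$ is essentially free: by \eqref{EqActAsAQuadrRef} one has $L(\check x_1,\check x_2) = S(M;\check x_1+\check x_2) - S(M;\check x_1) - S(M;\check x_2)$, where $S(M;\bullet)$ factors through $\mathsf{Y}_{M,\partial M}$ (hence through $\mathsf{Y}^{\rm flat}_{M,\partial M}$ on flat cocycles), and the sum of two flat cocycles vanishing on $\partial M$ is again such a cocycle whose class depends only on the classes of the summands. (A direct check also works: varying $\check x_i$ by $d\check w$ with $\check w=(\hat b,\hat j,0)$ a flat cochain vanishing on $\partial M$, cf.\ \eqref{EqArrowGroupoidDiffCoc}, changes $L(\check x_1,\check x_2)$ only by terms $\langle\hat\alpha\cup\hat\beta,[M,\partial M]\rangle$ with both factors $\bar\Lambda$-valued---hence $\mathbb Z$-valued under the lattice pairing, so $0$ mod $1$---together with integrals of exact cochains that are relative to $\partial M$, which vanish.)

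For the pullback statement, recall from Corollary \ref{CorStructEM} and the discussion preceding it that a flat differential cocycle $\check x=(\hat a,\hat h,0)\in\mathsf{Z}^{\rm flat}_{M,\partial M}$ satisfies $\hat a=-d\hat h$ with $\hat h\in C^{2\ell+1}(M,\partial M;\mathscr V)$, so $a$ is torsion, the reduction $\hat h\bmod\bar\Lambda$ is a $\mathscr V/\bar\Lambda$-valued cocycle, and the Bockstein $\beta$ of \eqref{DiagCohomGroups} satisfies $\beta(x)=-a$. By the definition of the linking pairing $\tilde L$ on $H^{2\ell+2}_{\rm tors}(M,\partial M;\bar\Lambda)$ in Appendix \ref{SecPerPair}, one evaluates $\tilde L(t_1,t_2)$ by picking any class with Bockstein $t_1$ and pairing it (via the induced pairing $(\mathscr V/\bar\Lambda)\times\bar\Lambda\to\mathbb R/\mathbb Z$) with $t_2$ against $[M,\partial M]$; up to the sign dictated by $\beta(x_1)=-a_1$, the connection $\hat h_1$ of $\check x_1$ is exactly such a primitive. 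Since the lattice pairing identifies the mod-$1$ reduction of the $\mathscr V$-valued cochain $\hat h_1\cup\hat a_2$ with the $\mathbb R/\mathbb Z$-valued cochain $(\hat h_1\bmod\bar\Lambda)\cup\hat a_2$, this gives
\[
\tilde L\bigl(\beta(x_1),\beta(x_2)\bigr)=\bigl\langle(\hat h_1\bmod\bar\Lambda)\cup\hat a_2,\,[M,\partial M]\bigr\rangle=\bigl\langle\hat h_1\cup\hat a_2,\,[M,\partial M]\bigr\rangle\pmod 1 .
\]
It remains to move the differential: from $\hat a_i=-d\hat h_i$ and the Leibniz rule $d(\hat h_1\cup\hat h_2)=d\hat h_1\cup\hat h_2-\hat h_1\cup d\hat h_2=-\hat a_1\cup\hat h_2+\hat h_1\cup\hat a_2$ we get $\hat a_1\cup\hat h_2=\hat h_1\cup\hat a_2-d(\hat h_1\cup\hat h_2)$; as $\hat h_1,\hat h_2$ vanish on $\partial M$, the cochain $\hat h_1\cup\hat h_2$ is relative to $\partial M$, so $\langle d(\hat h_1\cup\hat h_2),[M,\partial M]\rangle=\langle\hat h_1\cup\hat h_2,\partial[M,\partial M]\rangle=0$. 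Hence $\langle\hat a_1\cup\hat h_2,[M,\partial M]\rangle=\langle\hat h_1\cup\hat a_2,[M,\partial M]\rangle$, and therefore $L(\check x_1,\check x_2)=\tilde L(\beta(x_1),\beta(x_2))$ mod $1$; in particular the right-hand side depends only on $x_1,x_2$, reproving well-definedness.

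The routine parts are the integration by parts and the remark that purely $\bar\Lambda$-valued cup products are integral. The step needing genuine care is the alignment with Appendix \ref{SecPerPair}: one must verify that the connection of a flat differential cocycle is precisely the $\mathscr V$-valued primitive used there, that the relation $\beta(x)=-a$ enters the Bockstein appearing in the definition of $\tilde L$ with the correct sign, and that all boundary contributions vanish---this last point relying crucially on working with cocycles (hence connections) relative to $\partial M$, so that $\hat h$ is a relative cochain and the cup products $\hat h_1\cup\hat h_2$, $\hat a_i\cup\hat h_j$ are relative as well.
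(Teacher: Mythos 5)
Your proposal is correct, and both halves land on the right identities; the difference from the paper is mostly in packaging. For the pullback statement the paper is more direct: since $\hat a_2=-d\hat h_2$, the cochain $\hat y_2:=-k\hat h_2$ (with $k$ the order of $a_2$) satisfies $d\hat y_2=k\hat a_2$, so substituting it straight into the definition \eqref{EqDefLinkPair} gives $L(\check x_1,\check x_2)=-\frac{1}{k}\langle\hat a_1\cup\hat y_2,[M,\partial M]\rangle=\tilde L(a_1,a_2)$ in one line, with no integration by parts and no sign chase through $\beta(x)=-a$ (the signs cancel by bilinearity). You instead invoke the Bockstein--primitive description of the linking pairing, which is \emph{not} the formula actually given in Appendix \ref{SecPerPair}; it is equivalent, but strictly speaking that equivalence is an extra (standard) fact you are assuming, and your own integration-by-parts step is essentially what proves it. If you want to stay entirely within the paper's definitions, the $\hat y_2=-k\hat h_2$ substitution is the cleaner route. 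Your treatment of the descent claim, on the other hand, is a genuine and arguably nicer alternative: deducing it from \eqref{EqActAsAQuadrRef} together with the gauge invariance of $S(M,\bullet)$ makes the well-definedness of $L$ an immediate corollary of Proposition \ref{PropActQuadrRef}, whereas the paper obtains it only as a byproduct of the pullback identification. Both your direct cocycle-variation check and the final sign bookkeeping are sound.
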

\begin{proof}
$-\hat{h}_2$ trivializes the torsion cocycle $\check{a}_2$. If $a_2$ has order $k$, $\hat{y}_2 := -k\hat{h}_2$ is a cocycle trivializing $k\check{a}_2$. We have
\be
L(\check{x}_1, \check{x}_2) = -\frac{1}{k} \langle \hat{a}_1 \cup \hat{y}_2, [M,\partial M] \rangle \quad {\rm mod} \; 1 \;,
\ee
which is the linking pairing between the torsion classes $a_1$ and $a_2$, see \eqref{EqDefLinkPair}. This shows that $L$ is a pull-back from $H^{2\ell+2}_{\rm tors}(M,\partial M; \bar{\Lambda})$ and in particular that it factors through a pairing on $\mathsf{Y}^{\rm flat}_{M,\partial M}$.
\end{proof}
\begin{proposition}
\label{PropActQRLP}
On $\mathsf{Y}^{\rm flat}_{M,\partial M}$, $S(M, \bullet)$ is the pull-back of a quadratic refinement $q_M$ of $\tilde{L}$.
\end{proposition}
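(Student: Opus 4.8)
The plan is to exhibit $q_M$ as the function induced by $S(M,\bullet)$ along the Bockstein map $\beta\colon \mathsf{Y}^{\rm flat}_{M,\partial M}\to H^{2\ell+2}_{\rm tors}(M,\partial M;\bar\Lambda)$, which is onto (its image is $H^{2\ell+2}_{\rm tors}$). By Proposition~\ref{PropActQuadrRef} together with the preceding proposition, $S(M,\bullet)$ is a quadratic refinement of $L=\beta^\ast\tilde L$ on $\mathsf{Y}^{\rm flat}_{M,\partial M}$; so the only substantive thing to check is that $S(M,\bullet)$ is constant on the fibers of $\beta$, after which the quadratic refinement property transports down to $H^{2\ell+2}_{\rm tors}$ automatically by surjectivity.

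To this end I would first describe $\ker\beta$. By the short exact sequence \eqref{EqShExSeqFieldStr} one has $\mathsf{Y}^{\rm flat}_{M,\partial M}\simeq H^{2\ell+1}(M,\partial M;\mathscr{V}/\bar\Lambda)$, and $\beta$ is the connecting homomorphism of the coefficient sequence $\bar\Lambda\to\mathscr{V}\to\mathscr{V}/\bar\Lambda$; hence $\ker\beta$ is the image of $H^{2\ell+1}(M,\partial M;\mathscr{V})$. Concretely, if $e\in\ker\beta$ with representative cocycle $\check e=(\hat a_e,\hat h_e,0)$, then $\beta(e)=-a_e=0$, so $\hat a_e$ is exact relative to $\partial M$; subtracting the exact differential cochain $d(\hat b,0,0)$ with $d\hat b=\hat a_e$ produces an equivalent flat representative of the form $\check e=(0,\hat h_e,0)$ with $\hat h_e$ a closed $\mathscr{V}$-valued cochain.

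The key step is the vanishing $S(M,e)=0$ for $e\in\ker\beta$. With the representative $\check e=(0,\hat h_e,0)$, formula \eqref{EqDefLag} gives $\hat l(\check e)=0$ since both the characteristic and the curvature vanish, and $(\hat a_e)_2=0$ as well; thus $\bar l(\check e)=(0,0)$ and $S(M,e)=\langle(0,0),[M,\partial M]_E\rangle=0$. Now, for $x\in\mathsf{Y}^{\rm flat}_{M,\partial M}$ and $e\in\ker\beta$, \eqref{EqActAsAQuadrRef} gives $S(M,x+e)=S(M,x)+S(M,e)+L(x,e)$; here $L(x,e)=\tilde L(\beta x,\beta e)=\tilde L(\beta x,0)=0$ by the preceding proposition and $S(M,e)=0$ by the above, so $S(M,x+e)=S(M,x)$. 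Hence $S(M,\bullet)$ factors as $\beta^\ast q_M$ for a well-defined function $q_M$ on $H^{2\ell+2}_{\rm tors}(M,\partial M;\bar\Lambda)$.

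It remains to see $q_M$ is a quadratic refinement of $\tilde L$: given torsion classes $t_1,t_2$, pick $x_i\in\mathsf{Y}^{\rm flat}_{M,\partial M}$ with $\beta(x_i)=t_i$ and compute $q_M(t_1+t_2)-q_M(t_1)-q_M(t_2)=S(M,x_1+x_2)-S(M,x_1)-S(M,x_2)=L(x_1,x_2)=\tilde L(t_1,t_2)$, using \eqref{EqActAsAQuadrRef} and the preceding proposition; any homogeneity or normalization condition built into the definition of quadratic refinement in Appendix~\ref{SecPairFinAbGroups} transfers identically from the corresponding property of $S$. The main obstacle, and the only place where the specific construction is used rather than formal diagram chasing, is the vanishing $S(M,e)=0$ on $\ker\beta$: it relies on the fact that elements of $\ker\beta$ are "continuous" classes, representable with zero characteristic and zero curvature, so that the Lagrangian \eqref{EqDefLag} and hence its $E$-cohomology class are trivial on them.
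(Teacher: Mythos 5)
Your proof is correct and follows essentially the same route as the paper: both reduce to the observation that two flat representatives with the same image under $\beta$ differ by a topologically trivial flat cocycle $(0,\hat{c},0)$, on which the Lagrangian contributes only an exact term. You package this as "$S$ vanishes on $\ker\beta$ plus bilinearity \eqref{EqActAsAQuadrRef}" while the paper directly exhibits $\bar{l}(\check{x}_2)\boxminus\bar{l}(\check{x}_1)$ as $d\bigl(-\tfrac{1}{2}\hat{h}\cup\hat{c},0\bigr)$, but these are the same computation.
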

\begin{proof}
Let $\check{x}_1, \check{x}_2 \in \mathsf{Y}^{\rm flat}_{M,\partial M}$ be such that $\beta(x_1) = \beta(x_2)$. We can choose the differential cocycles $\check{x}_1$ and $\check{x}_2$ such that $\hat{a}_1 = \hat{a}_2$. Then $\check{x}_2 = \check{x}_1 + (0, \hat{c},0)$, where $\hat{c} \in Z^{2\ell+1}(M,\partial M;\mathscr{V})$ and $\hat{l}(\check{x}_2) = \hat{l}(\check{x}_1) + \frac{1}{2} \hat{a} \cup \hat{c}$. Therefore
\be
\bar{l}(\check{x}_2) = \bar{l}(\check{x}_1) \boxplus d \left(-\frac{1}{2} \hat{h} \cup \hat{c}, 0 \right) \;,
\ee
and $S(M,\check{x}_2) = S(M,\check{x}_1)$.
\end{proof}

It follows from \eqref{EqDefLinkPairFlat} that $\mathsf{E}_{M,\partial M} \subset \mathsf{Y}^{\rm flat}_{M,\partial M}$ is the radical of $L$. Similarly, $\mathsf{K}_{M,\partial M}$ is the radical of $\tilde{L}$. Basic facts about quadratic refinements reviewed in Appendix \ref{SecPairFinAbGroups} imply that $q_M$ is a $\mathbb{Z}_2$-valued character of $\mathsf{K}_{M,\partial M}$, and similarly that $S(M,\bullet)$ is a $\mathbb{Z}_2$-valued character of $\mathsf{E}_{M,\partial M}$. These quadratic refinements are \emph{tame} if the corresponding characters are trivial. The Gauss sum constructed from $q_M$ is non-vanishing if and only if $q_M$ is tame. In this case, the associated Arf invariant is a topological invariant of $\mathfrak{W}$-manifolds. 

Remark that reduction modulo $2\Lambda$ followed with the quotient by the radical provides a homomorphism from $\mathsf{K}_{M,\partial M} \subset H^{2\ell+2}(M,\partial M; \bar{\Lambda})$ into $H^{2\ell+2}(M,\partial M; \bar{\Gamma}^{(2)})$. The $\mathbb{Z}_2$-character on $\mathsf{K}_{M,\partial M}$ determined by the action can be seen as an element $\delta_\mathsf{K}$ of $H^{2\ell+1}(M;\bar{\Gamma}^{(2)})/\mathsf{W}_M$, where $\mathsf{W}_M$ is the orthogonal of the image of $\mathsf{K}_{M,\partial M}$ in $H^{2\ell+2}(M;\bar{\Gamma}^{(2)})$. If $\delta_\mathsf{K}|_{\partial M} = 0$, then we pick a lift $\delta$ of $\delta_\mathsf{K}$ in $H^{2\ell+1}(M,\partial M;\bar{\Gamma}^{(2)})$ and use it to shift the Wu structure of $M$. By Proposition \ref{PropChangWu}, the resulting Wu structure is such that $q_M$ is tame. We therefore have
\begin{proposition}
\label{PropExWuStructqMTame}
If $\delta_\mathsf{K}|_{\partial M} = 0$, there is a $\mathsf{W}_M$-torsor of Wu structures for which $q_M$ is tame.
\end{proposition}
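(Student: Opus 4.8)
The plan is to turn the paragraph preceding the statement into a proof. The first step is to record what tameness means here: by the structure theory of quadratic refinements (Appendix \ref{SecPairFinAbGroups}), the restriction of $q_M$ to the radical $\mathsf{K}_{M,\partial M}$ of $\tilde{L}$ is a homomorphism into the $2$-torsion subgroup $\tfrac{1}{2}\mathbb{Z}/\mathbb{Z}$, i.e. a $\mathbb{Z}_2$-valued character, and ``$q_M$ tame'' is by definition the vanishing of this character. By Proposition \ref{PropActQRLP} one has $S(M,\bullet)=q_M\circ\beta$ on $\mathsf{Y}^{\rm flat}_{M,\partial M}$, and by the exact sequence \eqref{EqShExSeqEM} the Bockstein $\beta$ carries $\mathsf{E}_{M,\partial M}$ onto $\mathsf{K}_{M,\partial M}$; hence $q_M|_{\mathsf{K}_{M,\partial M}}$ is trivial if and only if the $\mathbb{Z}_2$-character $S(M,\bullet)|_{\mathsf{E}_{M,\partial M}}$ obtained from \eqref{EqDefLinkPairFlat} is trivial. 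So it suffices to kill this last character by a change of Wu structure.

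The second step identifies this character with the class $\delta_\mathsf{K}$. For a representative $\check{x}=(\hat{a},\hat{h},0)\in\mathsf{F}_{M,\hat{0}}$, reducing $\hat{a}$ modulo $2\Lambda$ and then modulo the radical yields the homomorphism $\mathsf{K}_{M,\partial M}\to H^{2\ell+2}(M,\partial M;\bar{\Gamma}^{(2)})$ described after Proposition \ref{PropActQRLP}. Since the pairing on $\Gamma^{(2)}$ is perfect, $\bar{\Gamma}^{(2)}$ is a self-dual local system, so Lefschetz duality in the complementary degrees $2\ell+2$ and $2\ell+1$ (which sum to $\dim M=4\ell+3$) provides a perfect cup-product pairing valued in $\tfrac{1}{2}\mathbb{Z}/\mathbb{Z}$. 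Consequently every $\mathbb{Z}_2$-character of the image of $\mathsf{K}_{M,\partial M}$ is given by cup product with a class $\delta_\mathsf{K}$, well defined modulo the orthogonal $\mathsf{W}_M$ of that image; this is exactly the $\delta_\mathsf{K}\in H^{2\ell+1}(M;\bar{\Gamma}^{(2)})/\mathsf{W}_M$ of the statement. The decisive input from Proposition \ref{PropChangWu} is then that changing the Wu structure by a relative class $\delta$ alters $S(M,\bullet)$, and hence this very character, by subtracting cup product with $\delta$.

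The third step uses the hypothesis. Saying $\delta_\mathsf{K}|_{\partial M}=0$ means that $\delta_\mathsf{K}$ has a representative restricting to zero on $\partial M$, so by the long exact sequence of the pair $(M,\partial M)$ it lifts to a relative class $\delta\in H^{2\ell+1}(M,\partial M;\bar{\Gamma}^{(2)})$. Using $\delta$ to change the Wu structure produces a $\mathfrak{W}$-manifold $M'$ for which, by Step 2, the character $S(M',\bullet)|_{\mathsf{E}_{M,\partial M}}$ --- equivalently $q_{M'}|_{\mathsf{K}_{M,\partial M}}$ --- vanishes, so $q_{M'}$ is tame. For the torsor statement, note that a relative class $\delta'$ has the same effect if and only if $\delta-\delta'$ pairs trivially with the image of $\mathsf{K}_{M,\partial M}$, that is, lies in $\mathsf{W}_M$ (viewed inside the group $H^{2\ell+1}(M,\partial M;\bar{\Gamma}^{(2)})$ parametrizing Wu structures), and such a $\delta-\delta'$ still restricts trivially to $\partial M$; hence the Wu structures for which $q$ is tame form precisely a torsor over $\mathsf{W}_M$.

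The main obstacle I anticipate is the relative-versus-absolute bookkeeping linking Steps 2 and 3: one must check that the Lefschetz pairing used to extract $\delta_\mathsf{K}$ is literally the cup-product pairing with $[M,\partial M]$ appearing in Proposition \ref{PropChangWu} (the same cup product, the same fundamental class, and the two factors placed in the correct relative and absolute groups so that $(a)_2\cup\delta$ lands in $H^{4\ell+3}(M,\partial M;\cdot)$), and that ``$\delta_\mathsf{K}|_{\partial M}=0$'' is exactly the obstruction the long exact sequence removes, compatibly with the $\mathsf{W}_M$-ambiguity in $\delta_\mathsf{K}$. Everything else is a routine diagram chase in \eqref{DiagCohomGroups} together with the properties of quadratic refinements collected in Appendix \ref{SecPairFinAbGroups}.
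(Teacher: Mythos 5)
Your proposal is correct and follows essentially the same route as the paper, whose "proof" is precisely the paragraph preceding the statement: identify $q_M|_{\mathsf{K}_{M,\partial M}}$ with a $\mathbb{Z}_2$-character, realize it via the perfect $\bar{\Gamma}^{(2)}$-valued cup product pairing as a class $\delta_\mathsf{K}$ modulo $\mathsf{W}_M$, lift it to a relative class using $\delta_\mathsf{K}|_{\partial M}=0$, and cancel the character with Proposition \ref{PropChangWu}. The relative-versus-absolute bookkeeping you flag is exactly the point the paper leaves implicit, and your resolution of it (and of the $\mathsf{W}_M$-torsor claim) is the intended one.
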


We call \emph{admissible} the Wu structures on $M$ for which $q_M$ is tame. $\delta_\mathsf{K}|_{\partial M} = 0$ if and only if $q_M$ vanishes on elements of $\mathsf{K}_{M,\partial M}$ of the form $a = a' \cup b$. Here $a' \in H^{2\ell+1}_{\rm tors}(\partial M; \bar{\Lambda})$ and $b \in H^1(M,\partial M; \mathbb{Z})$ is obtained as the differential of a 0-cochain taking value 1 on $\partial M$ and vanishing outside a tubular neighborhood of $\partial M$. As $b$ has support in a tubular neighborhood of $\partial M$, this situation can be more conveniently studied on a torus $\partial M \times S^1$, which we will do in Section \ref{SecActTori}.

\paragraph{Relation to existing literature} Let us briefly compare the quadratic refinement $q_M$ to existing constructions in the literature. As far as we know, the existing constructions involve only the case where $\bar{\Lambda}$ is the trivial $\mathbb{Z}$ local system endowed with the unimodular pairing. In this case, $\mathsf{K}_M = 0$ and $q_M$ is automatically tame.

Brumfiel and Morgan define a quadratic refinement of the linking pairing in \cite{Brumfiel1973} in the case where $M$ is the boundary of a $4\ell+4$ manifold $W$. (See also Section 5 of \cite{Morgan1974}.) The data required is a lift of the Wu class of $W$ as a relative cohomology class. In our case, such a lift is provided by the universal choice of trivialization of the Wu class on the classifying space of Wu structures. Proposition \ref{PropActFromBoundedMan} shows that the quadratic refinement defined by their formula on $W$ coincides with our intrinsic definition.

An dual construction by Taylor on homology in the 3-dimensional case appears in \cite{Taylor1984259}, using spin structures. Unlike the approach of Brumfiel and Morgan, it does not rely on manifolds being boundaries.

In principle, $q_M$ should coincide with the quadratic refinement constructed by Hopkins and Singer (Proposition 5.66 of \cite{hopkins-2005-70}). Proposition \ref{PropActFromBoundedMan} shows that they coincide on boundaries, but in general we do not know how to make contact with their formalism.

Topological field theories involving quadratic functions on finite groups were constructed by Deloup in \cite{DELOUP2003199}, but their relation to the present work is unclear to us.

\subsection{Action on tori}

\label{SecActTori}

Let $N$ be a closed $4\ell+2$-dimensional $\mathfrak{W}$-manifold, set $M = N \times S^1$, with the local system pulled back from $N$. If $\ell > 0$, $S^1$ has a unique Wu structure, and we take the Wu structure on $M$ to be  the one induced from $N$. If $\ell = 0$, the Wu structures are spin structures. We take the spin structure on $M$ induced by the spin structure on $N$ and the non-bounding spin structure on $S^1$. Let $\check{x} = (\hat{a}, \hat{h},0) \in \mathsf{Z}^{\rm flat}_M$ be a cocycle representative of $x \in \mathsf{Y}^{\rm flat}_M$, so that $\hat{a} = -d\hat{h}$ and $a$ is a torsion class. Recall that $\beta(x) = a$. We use the fact that the action is a pull back with respect to $\beta$ and freely write $S(M,a)$ for $S(M,x)$. We have a decomposition
\be
\label{EqDecompCohomCyl}
H^{2\ell+2}_{\rm tors}(M; \bar{\Lambda}) = H_{\rm tors}^{2\ell+2}(N; \bar{\Lambda}) \oplus H_{\rm tors}^{2\ell+1}(N; \bar{\Lambda}) \cup g_{S^1} \;,
\ee
where $g_{S^1}$ is the generator of $H^1(S^1;\mathbb{Z})$. Write $\pi$ for the projection on the first summand. The decomposition \eqref{EqDecompCohomCyl} induces a decomposition $\mathsf{K}_M = \mathsf{K}_N \oplus \mathsf{K}'$ with $\mathsf{K}' \subset H_{\rm tors}^{2\ell+1}(N; \bar{\Lambda}) \cup g_{S^1}$.
\begin{proposition}
\label{PropActVanishPullBackTorus}
$S(M,a) = 0$ for any $a$ in the first summand of \eqref{EqDecompCohomCyl}.
\end{proposition}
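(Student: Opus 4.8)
The plan is to exploit the two available formulas for the action: the intrinsic E-theory definition \eqref{EqDefActEth}, and—crucially—the fact that $M = N \times S^1$ is a boundary, so the differential-form formula of Proposition \ref{PropActFromBoundedMan} applies. Since $S^1$ bounds a disk $D^2$, the manifold $W := N \times D^2$ has $\partial W = M$, and the local system on $M$ extends over $W$ (it is pulled back from $N$). Thus I would first verify that the hypotheses of Proposition \ref{PropActFromBoundedMan} are met: the Wu cocycle $\hat{\nu}$ lifts to a $\bar{\Lambda}$-valued cocycle on $W$, and the flat cocycle $\check{x}$, which corresponds to the torsion class $a \in H^{2\ell+2}_{\rm tors}(N;\bar\Lambda)$ pulled back to $M$, extends to a flat differential cocycle $\check{z}$ on $W$—indeed $a$ itself, being pulled back from $N$, extends to $N \times D^2$ via the projection $W \to N$. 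Then by \eqref{EqActBManField},
\be
S(M, a) = -\frac{1}{2}\langle \omega_W \wedge (\omega_W + \lambda), [W,M]\rangle \quad {\rm mod}\; 1 \;,
\ee
where $\omega_W$ is the curvature of the extension. Because $a$ is torsion and pulled back from $N$, we can arrange $\check{z}$ to be flat (curvature $\omega_W = 0$), so naively this vanishes; but one must be careful, since the Wu-structure choice on $M$ (the non-bounding spin structure when $\ell = 0$) is not the one induced from $W$, so there is a correction governed by $\lambda$ and by Proposition \ref{PropChangWu}.

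The cleaner route, and the one I would actually pursue, is to work directly with the intrinsic formula and the product structure. Since $a$ lies in the first summand $H^{2\ell+2}_{\rm tors}(N;\bar\Lambda) \hookrightarrow H^{2\ell+2}_{\rm tors}(M;\bar\Lambda)$, pick a cocycle representative $\hat{a}$ pulled back from $N$, and a trivializing cochain $\hat{h}$ (with $d\hat h = -\hat a$) also pulled back from $N$—this is possible because the pullback map on cochains is compatible with differentials. Then $\hat{a}_2$ is pulled back from $N$ as well. Now examine the Lagrangian \eqref{EqDefLag}: with $\hat\omega = 0$ it reads $\hat l(\check x) = \tfrac12 \hat a \cup (\hat h - \hat\eta_\Lambda)$. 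The key point is that $\hat a \cup \hat a$ and $\hat a \cup \hat h$ are pullbacks from $N$, hence are cochains of degree $4\ell+3$ supported on a $(4\ell+2)$-dimensional manifold's pullback, and thus pair trivially with $[M,\partial M] = [N] \times [S^1]$ unless they pick up the $S^1$ factor—which they cannot, being pulled back. The only term that can survive is the one coupling to $\hat\eta_\Lambda$, i.e. the Wu-structure-dependent term $-\tfrac12 \hat a \cup \hat\eta_\Lambda$, since $\hat\eta_\Lambda$ encodes the non-bounding spin structure on $S^1$ and genuinely involves the $S^1$ direction.

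So the crux reduces to showing that $\langle \hat a \cup \hat\eta_\Lambda, [M]\rangle$ vanishes (mod $2$, so that its half vanishes mod $1$). Here I would use that the Wu structure on $M$ is the product of the Wu structure on $N$ with a fixed structure on $S^1$: concretely, $\hat\eta$ (hence a lift $\hat\eta_\Lambda$) can be chosen of the form $\hat\eta^N + (\text{something} \times g_{S^1})$, and when cupped with $\hat a$ pulled back from $N$, the first piece lands in degree $4\ell+3$ on $N$ (zero for dimension reasons) while the second piece gives $\langle (\hat a \cup \hat\eta')_N \times g_{S^1}, [N]\times[S^1]\rangle = \langle (\hat a \cup \hat\eta')_N, [N]\rangle$, a pairing of a degree-$(4\ell+2)$ class against $[N]$—which is precisely where the Wu relation $\langle a \cup \nu_{2\ell+2}, [N]\rangle = \langle a \cup a, [N]\rangle$ (mod $2$) on the $(4\ell+2)$-manifold $N$ should kick in to show the whole thing is an integer. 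The main obstacle I anticipate is bookkeeping the $\mathbb{Z}$ versus $\mathbb{Z}_2$ reductions carefully and making the decomposition of $\hat\eta_\Lambda$ along \eqref{EqDecompCohomCyl} precise at the cochain (not just cohomology) level, together with correctly accounting for the choice of the non-bounding spin structure in the $\ell = 0$ case—this is exactly the place where an ordinary (bounding) choice would give a different, nonzero answer, so the argument must genuinely use that specific choice.
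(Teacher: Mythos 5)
Your first paragraph is exactly the paper's proof, and you should have stopped there. The worry that derails you is unfounded: Proposition \ref{PropActFromBoundedMan} does not require the Wu structure on $M$ to be ``induced from $W$'' (indeed $W$ need not admit a Wu structure at all); the formula \eqref{EqACtFromBoundedMan} already accounts for whatever Wu structure $M$ carries through the class $\lambda$. Since $\check{x}$ can be taken to be the pullback of a flat differential cocycle on $N$, it extends to a flat cocycle on $W = N \times D^2$ via the projection, so $\hat{\omega}_W$ (and hence the entire integrand $\hat{\omega}_W \wedge (\hat{\omega}_W + \hat{\lambda})$, regardless of $\hat{\lambda}$) vanishes — equivalently, everything is pulled back from the $4\ell+2$-dimensional manifold $N$ and a $4\ell+4$-form pulled back from $N$ is zero for degree reasons. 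No correction from Proposition \ref{PropChangWu} arises because $\lambda$ only enters multiplied by $\omega_W$.

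The ``cleaner route'' you actually pursue has a genuine gap. The action is $S(M,\check{x}) = \langle \bar{l}(\check{x}), [M]_E\rangle$, where $\bar{l}(\check{x}) = (\hat{l}(\check{x}), \hat{a}_2)$ is an E-cocycle and $[M]_E$ is the fundamental \emph{E-homology} class; the pairing reduces to the ordinary pairing $\langle \hat{s}, [M]\rangle$ only for E-cochains of the form $(\hat{s},0)$. Since $\hat{a}_2 \neq 0$ in general, your argument that ``$\hat{a}\cup\hat{h}$ is pulled back from $N$, hence pairs trivially with $[M]$'' computes the naive action of Remark \ref{RemLagNotGaugeInv}, not $S$: it silently discards the E-theoretic (``fermionic'') correction, which is precisely the nontrivial content here. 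In fact Proposition \ref{PropActVanishPullBackTorus} is repeatedly invoked elsewhere (e.g.\ in Propositions \ref{PropDepQCohomClassj} and \ref{PropQ2beq1}) exactly as the statement that the E-pairing of a pulled-back E-cocycle of the form \eqref{EqPullBackELag} with $[N\times S^1]_E$ vanishes — so an argument at the level of ordinary cochains is circular in spirit and cannot see the issue. The remaining step you sketch (killing $\tfrac12\langle\hat{a}\cup\hat{\eta}_\Lambda,[M]\rangle$ via the Wu relation on $N$) is also left unproved, and the degrees do not line up with \eqref{EqDefWuClass} as directly as you suggest. The bounding-manifold argument avoids all of this because on $W$ the E-pairing collapses to an honest integral of a differential form.
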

\begin{proof}
Remark that if $x$ lies in the first summand of \eqref{EqDecompCohomCyl}, we can extend it to $W := N \times D^2$, where $D^2$ is a disk. We can therefore use  \eqref{EqActBManField} to compute the action. But the integrand is pulled back from $N$ and vanishes for degree reasons.
\end{proof}
Reduction modulo 2 provides a homomorphism $\mathsf{K}' \rightarrow H^{2\ell+1}(N;\bar{\Gamma}^{(2)}) \cup g_{S^1} \simeq H^{2\ell+1}(N;\bar{\Gamma}^{(2)})$. Let $\mathsf{W}_N$ be the orthogonal of the image of this inclusion with respect to the non-degenerate cup product pairing on $H^{2\ell+1}(N;\bar{\Gamma}^{(2)})$.
\begin{proposition}
\label{PropDefAdmWuStruct}
There is a Wu structure on $N$, determined up to an element of $\mathsf{W}_N$, such that $S(M,a) = 0$ for all $a \in \mathsf{K}_M$
\end{proposition}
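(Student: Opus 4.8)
\subsection*{Proof proposal}

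The plan is to reduce the statement to a question about a single $\mathbb{Z}_2$-valued character on $\mathsf{K}'$, and then realize that character by an appropriate shift of the Wu structure of $N$. First, by Propositions~\ref{PropActQuadrRef}--\ref{PropActQRLP}, on flat fields $S(M,\bullet)$ is the pullback along $\beta$ of the quadratic refinement $q_M$ of the linking pairing $\tilde L$ on $H^{2\ell+2}_{\rm tors}(M;\bar\Lambda)$; since $\mathsf{K}_M$ is the radical of $\tilde L$, the facts on quadratic refinements recalled in Appendix~\ref{SecPairFinAbGroups} show that $q_M$ restricts to a $\tfrac12\mathbb{Z}/\mathbb{Z}$-valued character on $\mathsf{K}_M$. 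By Proposition~\ref{PropActVanishPullBackTorus} this character vanishes on the summand $\mathsf{K}_N$ of $\mathsf{K}_M=\mathsf{K}_N\oplus\mathsf{K}'$ from \eqref{EqDecompCohomCyl}, so it is entirely encoded in its restriction $\chi$ to $\mathsf{K}'\subseteq H^{2\ell+1}_{\rm tors}(N;\bar\Lambda)\cup g_{S^1}$.

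The main point, and the step I expect to be the real obstacle, is to show that $\chi$ factors through the reduction-modulo-$2$ homomorphism $r\colon \mathsf{K}'\to H^{2\ell+1}(N;\bar\Gamma^{(2)})$. Specialized to $M=N\times S^1$, this is exactly the assertion made in the discussion preceding Proposition~\ref{PropExWuStructqMTame}, that the $\mathbb{Z}_2$-character of the action on the radical factors through reduction modulo $2$ into $H^{2\ell+2}(M;\bar\Gamma^{(2)})$: under the Künneth splitting $H^{2\ell+2}(M;\bar\Gamma^{(2)})=H^{2\ell+2}(N;\bar\Gamma^{(2)})\oplus H^{2\ell+1}(N;\bar\Gamma^{(2)})\cup g_{S^1}$ the mod-$2$ image of $\mathsf{K}'$ sits in the second summand, and, combined with the vanishing on $\mathsf{K}_N$, this yields $\chi=\psi\circ r$ for a character $\psi$ of $\mathrm{im}(r)$. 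If a self-contained argument is wanted here, one notes that an element of $\ker r$ is of the form $\iota(a'')\cup g_{S^1}$ with $\iota\colon\bar\Lambda_{\rm even}\hookrightarrow\bar\Lambda$ and $a''$ a torsion class valued in the even sublattice $\Lambda_{\rm even}$; representing it by $\iota$ of a flat $\bar\Lambda_{\rm even}$-cocycle makes the $\hat\eta_\Lambda$-term in \eqref{EqDefLag} integral (since $\Lambda_{\rm even}$ pairs evenly with $\Lambda$) and kills the $\bar\Gamma^{(2)}$-component of $\bar l$, so $S(M,\bullet)$ on such classes equals the action of the even-lattice Chern--Simons theory, which vanishes on $N\times S^1$ for such a class by a product computation exploiting $g_{S^1}\cup g_{S^1}=0$ and the $2\mathbb{Z}$-valuedness of the even pairing.

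Next I use the non-degenerate cup product pairing on $H^{2\ell+1}(N;\bar\Gamma^{(2)})$ (Poincaré duality on the closed $(4\ell+2)$-manifold $N$ with the self-dual coefficient system $\bar\Gamma^{(2)}$). Since $H^{2\ell+1}(N;\bar\Gamma^{(2)})$ is $2$-torsion and $\mathbb{Q}/\mathbb{Z}$ is injective, $\psi$ extends to a character of all of $H^{2\ell+1}(N;\bar\Gamma^{(2)})$, and by perfectness of the pairing this character is $\langle\,\bullet\cup\delta_N,[N]\rangle$ for some $\delta_N\in H^{2\ell+1}(N;\bar\Gamma^{(2)})$; thus $\langle r(a)\cup\delta_N,[N]\rangle=\chi(a)$ for all $a\in\mathsf{K}'$. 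Two choices of $\delta_N$ differ by an element pairing trivially with $\mathrm{im}(r)$, i.e. by an element of $\mathsf{W}_N=\mathrm{im}(r)^\perp$.

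Finally I perform the change of Wu structure on $N$ associated to $\delta_N$, keeping the fixed structure on the $S^1$ factor; this induces the change $p^\ast\delta_N$ of the Wu structure of $M$, where $p\colon M\to N$ is the projection. By Proposition~\ref{PropChangWu} the action changes by $-\langle (a)_2\cup p^\ast\delta_N,[M]\rangle$. For $a\in\mathsf{K}_N$ this vanishes because $(a)_2\cup\delta_N$ lives in $H^{4\ell+3}(N;\,\bullet)=0$; for $a\in\mathsf{K}'$, writing $(a)_2=r(a)\cup g_{S^1}$ and integrating out the $S^1$ factor, it equals $\pm\langle r(a)\cup\delta_N,[N]\rangle=\pm\chi(a)$. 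Choosing the sign of $\delta_N$ accordingly, $S(M,\bullet)$ becomes identically zero on $\mathsf{K}_M$, and by the previous paragraph the set of Wu structures on $N$ for which this holds is a $\mathsf{W}_N$-torsor. The only genuinely substantive input is the factorization of $\chi$ through $r$; the remaining steps are bookkeeping with the Künneth decomposition \eqref{EqDecompCohomCyl} and the perfect pairing on $N$.
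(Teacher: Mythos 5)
Your proposal is correct and follows essentially the same route as the paper's proof: vanishing on $\mathsf{K}_N$ via Proposition \ref{PropActVanishPullBackTorus}, factorization of the resulting $\mathbb{Z}_2$-character of $\mathsf{K}'$ through reduction mod $2$, identification of that character with a coset of $\mathsf{W}_N$ via the perfect pairing on $H^{2\ell+1}(N;\bar{\Gamma}^{(2)})$, and cancellation by a Wu shift using Proposition \ref{PropChangWu}. The only difference is that you supply a justification for the factorization step, which the paper simply asserts.
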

\begin{proof}
As $S$ is linear on $\mathsf{K}_M$ and vanishes on $\mathsf{K}_N$ by Proposition \ref{PropActVanishPullBackTorus}, it is sufficient to show that $S$ vanishes on $\mathsf{K}'$ for some appropriate choice of Wu structure on $N$.

Let $\chi$ be the $\mathbb{Z}_2$-valued character of $\mathsf{K}'$ defined by the action. $\chi$ factors to a character of the image of $\mathsf{K}'$ in $H^{2\ell+1}(N;\bar{\Gamma}^{(2)})$. This ensures that we can identify $\chi$ with a coset of $\mathsf{W}_N$ in $H^{2\ell+1}(N;\bar{\Gamma}^{(2)})$. After shifting the Wu class of $N$ by any element of this coset, the action vanishes on $\mathsf{K}'$, hence on $\mathsf{K_M}$. 
\end{proof}
We call the Wu structures characterized by Proposition \ref{PropDefAdmWuStruct} \emph{admissible}. From now on, we always assume that all $4\ell+2$-dimensional manifolds are endowed with admissible Wu structures. 
\begin{proposition}
\label{PropPrelPreqThHermLine}
Let us fix $\pi(a)$ and consider $S$ as a function of $j = a - \pi(a) \in H_{\rm tors}^{2\ell+1}(N; \bar{\Lambda}) \cup g_{S^1}$. Then there is a unique coset $a_N + \mathsf{K}_N \subset H_{\rm tors}^{2\ell+2}(N; \bar{\Lambda})$ such that $S(M, a_N + j) = 0$ for all $j$.
\end{proposition}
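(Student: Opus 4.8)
The plan is to reduce the statement to an algebraic fact about the linking pairing of $M=N\times S^1$, using the quadratic-refinement property of $S(M,\bullet)$ together with the two vanishing results already at hand. Write $A:=H^{2\ell+2}_{\rm tors}(N;\bar\Lambda)$ and $B:=H^{2\ell+1}_{\rm tors}(N;\bar\Lambda)\cup g_{S^1}$, so that \eqref{EqDecompCohomCyl} reads $H^{2\ell+2}_{\rm tors}(M;\bar\Lambda)=A\oplus B$, with $\pi$ the projection onto $A$; the role of $a_N$ is played by a choice of $A$-component, and $j$ ranges over $B$. By \eqref{EqActAsAQuadrRef} (equivalently Proposition \ref{PropActQRLP}), for $\alpha,\alpha'\in A$ and $j\in B$ one has $S(M,\alpha+j)=S(M,\alpha)+S(M,j)+\tilde L(\alpha,j)$, where as in the paper we write $S(M,a)$ for the value on the flat cocycle with $\beta(x)=a$. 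Proposition \ref{PropActVanishPullBackTorus} gives $S(M,\alpha)=0$ for every $\alpha\in A$; feeding $\alpha,\alpha'\in A$ into the same relation also forces $\tilde L|_{A\times A}=0$.

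Next I would pin down $\tilde L$ on the remaining pieces using the Künneth splitting. Describing the linking pairing through the Bockstein $\tilde\beta$ (so that $\tilde L(x,y)=\langle \tilde\beta^{-1}(x)\smile y,[M]\rangle$ at the level of cohomology classes), and using that $\tilde\beta$ commutes with cup product by the integral class $g_{S^1}$, one gets for $a,a'\in H^{2\ell+1}_{\rm tors}(N;\bar\Lambda)$ that $\tilde L(a\cup g_{S^1},a'\cup g_{S^1})$ is computed by a cup product of cohomology classes containing the factor $g_{S^1}\smile g_{S^1}\in H^2(S^1;\mathbb Z)=0$, hence vanishes. Thus $\tilde L|_{B\times B}=0$, and consequently $\chi:=S(M,\bullet)|_B$ is a homomorphism $B\to\mathbb R/\mathbb Z$. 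The same computation, integrating over the $S^1$ factor, identifies $\tilde L(\alpha,a'\cup g_{S^1})$ for $\alpha\in A$ with (up to sign) the linking pairing $\tilde L_N$ of $N$ between $\alpha\in H^{2\ell+2}_{\rm tors}(N)$ and $a'\in H^{2\ell+1}_{\rm tors}(N)$. Since $\tilde L$ induces a perfect pairing on $H^{2\ell+2}_{\rm tors}(M;\bar\Lambda)$ modulo its radical $\mathsf K_M$, which by the discussion after \eqref{EqDecompCohomCyl} splits as $\mathsf K_M=\mathsf K_N\oplus\mathsf K'$ compatibly with $A\oplus B$, and since $A$ and $B$ are both $\tilde L$-isotropic, the pairing restricts to a perfect pairing between $A/\mathsf K_N$ and $B/\mathsf K'$. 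In particular $\alpha\mapsto\tilde L(\alpha,\bullet)|_B$ gives an isomorphism $A/\mathsf K_N\cong\mathrm{Hom}(B/\mathsf K',\mathbb R/\mathbb Z)$, whose kernel on $A$ is exactly $\mathsf K_N$.

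Now assemble. Using $S(M,\alpha)=0$ on $A$, the condition $S(M,a_N+j)=0$ for all $j\in B$ becomes $\tilde L(a_N,\bullet)|_B=-\chi$. By the isomorphism just displayed this is solvable precisely when the character $\chi$ annihilates $\mathsf K'$. But $\mathsf K'\subset\mathsf K_M$, and the Wu structure on $N$ is admissible, so by Proposition \ref{PropDefAdmWuStruct} $S(M,\bullet)$ vanishes on $\mathsf K_M$; hence $\chi|_{\mathsf K'}=0$ and a solution $a_N$ exists. Its set of solutions is the coset $a_N+\ker\bigl(\alpha\mapsto\tilde L(\alpha,\bullet)|_B\bigr)=a_N+\mathsf K_N$, which yields at once the existence and the uniqueness of the coset $a_N+\mathsf K_N\subset H^{2\ell+2}_{\rm tors}(N;\bar\Lambda)$.

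The step I expect to be the real work is the second one: getting the linking-pairing bookkeeping on $N\times S^1$ exactly right. Concretely, one must justify the vanishing of $\tilde L$ on $B\times B$ at the level of cohomology classes — which is why I would work with the Bockstein formulation rather than the cochain-level formula $\tilde L=-\tfrac1k\langle\hat a_1\smile\hat y_2,[M]\rangle$, since $\hat y_2$ is not a cocycle and the cancellation $g_{S^1}\smile g_{S^1}=0$ is only visible in cohomology — and one must check that the radical decomposition $\mathsf K_M=\mathsf K_N\oplus\mathsf K'$ is genuinely compatible with the Künneth decomposition, so that $A/\mathsf K_N$ and $B/\mathsf K'$ are in perfect duality. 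The remaining steps are a short quadratic-form argument.
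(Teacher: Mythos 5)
Your proof is correct and follows essentially the same route as the paper: both reduce the problem, via the quadratic-refinement identity and the vanishing of $S$ on the pullback summand (Proposition \ref{PropActVanishPullBackTorus}) and on $\mathsf{K}_M$ (admissibility of the Wu structure), to identifying the character $S|_{H^{2\ell+1}_{\rm tors}(N;\bar\Lambda)\cup g_{S^1}}$ with $-\tilde L(a_N,\bullet)$ under the perfect duality that $\tilde L$ induces between $H^{2\ell+2}_{\rm tors}(N;\bar\Lambda)/\mathsf{K}_N$ and $(H^{2\ell+1}_{\rm tors}(N;\bar\Lambda)\cup g_{S^1})/\mathsf{K}'$. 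The only difference is that you spell out the isotropy of the two K\"unneth summands (via the Bockstein and $g_{S^1}\cup g_{S^1}=0$), which the paper simply asserts.
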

\begin{proof}
The summands in \eqref{EqDecompCohomCyl} are isotropic. The linking pairing $\tilde{L}$ induces a perfect pairing $L'$ on $H^{2\ell+2}_{\rm tors}(M; \bar{\Lambda})/\mathsf{K}_{M}$, with respect to which $H_{\rm tors}^{2\ell+2}(N; \bar{\Lambda})/\mathsf{K}_N$ and $(H_{\rm tors}^{2\ell+1}(N; \bar{\Lambda}) \cup g_{S^1})/\mathsf{K}'$ are Lagrangian. $L'$ identifies them as dual to each other.

$S$ is linear when restricted $H_{\rm tors}^{2\ell+1}(N; \bar{\Lambda}) \cup g_{S^1}$ because the latter subgroup is isotropic. Moreover, we know that $S$ vanishes on $\mathsf{K}_M$. This implies that the linear functional defined by $S$ on $H_{\rm tors}^{2\ell+1}(N; \bar{\Lambda}) \cup g_{S^1}$ is canonically associated to a coset $-a_N + \mathsf{K}_N \in H_{\rm tors}^{2\ell+2}(N; \bar{\Lambda})/\mathsf{K}_N$ under $L'$. Given $f \in H^{2\ell + 1}(N;\bar{\Lambda})$, we see that
\be
S(M,a_N + f \cup g_{S^1}) = S(M,a_N) + S(M,f \cup g_{S^1}) + \tilde{L}(a_N, f \cup g_{S^1}) = S(M,a_N) \;,
\ee
using the defining relation for the quadratic refinement. But Proposition \ref{PropActVanishPullBackTorus} shows that $S(M,a_N) = 0$.
\end{proof}

\section{Prequantum theory} 

\label{SecPreq}

In this section, we construct a field theory functor $\mathcal{S}^e$ out of the exponentiated action function described in Section \ref{SecAct}. This field theory should be thought of as the prequantum Wu Chern-Simons field theory associated to the action \eqref{EqDefActEth}.

\subsection{Field theories on $\mathfrak{W}$-manifolds}

The discretely gauged Wu Chern-Simons field theories of interest in this paper will be constructed in the Atiyah-Segal formalism, as functors 
\be
\label{EqFuncTopFieldTh}
\mathcal{R}: \mathcal{B}_{\mathfrak{W}, {\rm flat}}^{4\ell+3,1} \rightarrow \mathcal{H} \mbox{ or } \mathcal{B}_{\mathfrak{W}, {\rm flat}, \epsilon}^{4\ell+3,1} \rightarrow \mathcal{H} \;,
\ee
depending on whether the order of $\Gamma_0$ is odd or even. The category $\mathcal{B}_{\mathfrak{W}, {\rm flat}}^{4\ell+3,1}$, defined in Appendix \ref{ApBordCatWMan}, is the bordism category of $4\ell+2$-dimensional $(\mathfrak{W}, \mathsf{Z}^{\rm flat})$-manifolds, i.e. $\mathfrak{W}$-manifolds $M$ endowed with an element of $\mathsf{Z}^{\rm flat}_M$. The category $\mathcal{B}_{\mathfrak{W}, {\rm flat}, \epsilon}^{4\ell+3,1}$ is a closely related category in which the $4\ell+2$-dimensional manifolds are endowed with an extra structure $\epsilon$ to be described in Section \ref{SecExStruct}. The target category $\mathcal{H}$ is the category of finite dimensional Hilbert spaces. $\mathcal{B}_{\mathfrak{W}, {\rm flat}}^{4\ell+3,1}$, $\mathcal{B}_{\mathfrak{W}, {\rm flat}, \epsilon}^{4\ell+3,1}$ and $\mathcal{H}$ carry symmetric monoidal structures, given by the disjoint union of manifolds and bordisms and by the tensor product, respectively. The functor $\mathcal{R}$ is required to be monoidal. In addition, the three categories carry dagger structures given by orientation reversal and complex conjugation, respectively (see Appendix \ref{ApBordCatWMan} for a comment on this point). $\mathcal{R}$ is required to preserve the dagger structures as well. We will refer to functors preserving both monoidal and dagger structures as \emph{field theory functors}.

The intermediate step toward the construction of $\mathcal{R}$, taken in the present section, is the construction of an invertible field theory functor
\be
\mathcal{S}^e: \mathcal{B}_{\mathfrak{W},{\rm flat}}^{4\ell+3,1} \rightarrow \mathcal{H} 
\ee
out of the exponentiated action $S^e$. We will call the associated field theory the \emph{prequantum theory} associated to the action $S^e$. This terminology comes from geometric quantization, the state space of the theory being the fiber of the prequantum line bundle \cite{MR1183739}

In order to check that $\mathcal{S}^e$ and $\mathcal{R}$ are field theory functors, we will use the following proposition which offers a characterization slightly different from the one given directly by the axioms of a functor. It is well-known, but we have not been able to find an explicit proof in the literature, so we provide it here. The following notation will be helpful. For all pairs $((M,\check{x}),(N,\check{x}|_N))$ of $(\mathfrak{W},\mathsf{Z}^{\rm flat})$-manifold, where $M$ has dimension $4\ell+3$, $N$ has codimension 1 and is disjoint from $\partial M$, let $M_N$ be the $(\mathfrak{W},\mathsf{Z}^{\rm flat})$-manifold obtained by cutting $M$ along $N$, with boundary $\partial M_N = \partial M \sqcup N \sqcup -N$. The flat differential cocycle on $M_N$ is $\pi^\ast(\check{x})$, where $\pi : M_N \rightarrow M$ is the gluing map, identifying the boundary components $N$ and $-N$ of $M_N$ (see Figure \ref{Fig-Prop51-1}). 
\begin{figure}[t]
  \begin{center}
    \includegraphics[width=.9\textwidth]{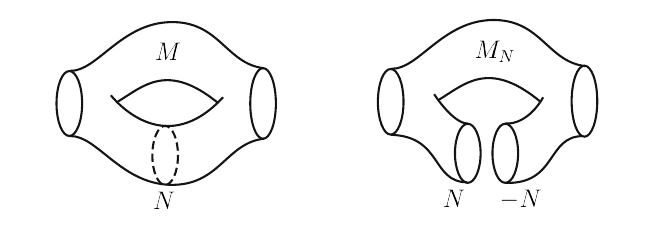}
    \vspace{-1cm}
    \caption{The manifold $M$ and the manifold $M_N$ obtained by cutting $M$ along $N$.}
    \label{Fig-Prop51-1}
  \end{center}
\vspace{-.5cm}
\end{figure}
To simplify the notation, we suppress the flat differential cocycles in the following. It is understood that $\partial M$ is endowed with the restriction of the flat differential cocycle on $M$.
\begin{proposition}
\label{PropAltCondCheckFunc}
Let $\mathcal{F}$ be a function assigning a Hilbert space $\mathcal{F}(N)$ to each closed $4\ell+2$-dimensional $(\mathfrak{W},\mathsf{Z}^{\rm flat})$-manifold $N$ and a vector $\mathcal{F}(M)$ in $\mathcal{F}(\partial M)$ to each $4\ell+3$-dimensional $(\mathfrak{W},\mathsf{Z}^{\rm flat})$-manifold $M$. Then $\mathcal{F}$ is a field theory functor from $\mathcal{B}_{\mathfrak{W},{\rm flat}}^{4\ell+3,1}$ to $\mathcal{H}$ if and only if the following conditions holds.
\begin{enumerate}
\item $\mathcal{F}$ is compatible with the monoidal structures, i.e.
\be
\mathcal{F}(N_1 \sqcup N_2) = \mathcal{F}(N_1) \otimes \mathcal{F}(N_2) \;,
\ee
\be
\label{EqCompMonStructDimD}
\mathcal{F}(M_1 \sqcup M_2) = \mathcal{F}(M_1) \otimes \mathcal{F}(M_2) \in \mathcal{F}(\partial M_1 \sqcup \partial M_2) \;.
\ee
\item $\mathcal{F}$ is compatible with the dagger structures, i.e.
\be
\mathcal{F}(-N) = \overline{\mathcal{F}(N)} \;, 
\ee
\be
\label{EqCompDagStructDimD}
\mathcal{F}(-M) = \overline{\mathcal{F}(M)} \;, 
\ee
where the bar denotes the complex conjugation of vectors and vector spaces.
\item (Gluing condition) For all pairs $(M,N)$ as above, 
\be
\label{EqConstrGluing}
{\rm Tr}_{\mathcal{F}(N)} \mathcal{F}(M_N) = \mathcal{F}(M) \;.
\ee
${\rm Tr}_{\mathcal{F}(N)}$ is here the canonical pairing between the tensor factors $\mathcal{F}(N)$ and $\mathcal{F}(-N) = \overline{\mathcal{F}(N)} \simeq \mathcal{F}(N)^\ast$ of $\mathcal{F}(\partial M_N) \simeq \mathcal{F}(\partial M) \otimes \mathcal{F}(N) \otimes \mathcal{F}(-N)$. 
\end{enumerate}
\end{proposition}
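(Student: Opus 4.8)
The plan is to prove both directions of the equivalence, with the forward direction (a field theory functor satisfies 1--3) being essentially a repackaging of the functor axioms, and the reverse direction (1--3 imply functoriality) being the substantive part.

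\medskip

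\noindent\textbf{Forward direction.} Suppose $\mathcal{F}$ extends to a symmetric monoidal dagger functor on $\mathcal{B}_{\mathfrak{W},{\rm flat}}^{4\ell+3,1}$. Conditions 1 and 2 are immediate translations of the requirement that $\mathcal{F}$ preserves $\sqcup$ and orientation reversal, once one unwinds how closed $4\ell+3$-manifolds (morphisms $\emptyset \to \emptyset$, equivalently vectors in $\mathcal{F}(\partial M) = \mathbb{C}$ when $\partial M = \emptyset$) and manifolds with boundary (morphisms into $\partial M$) are interpreted. For condition 3, I would observe that cutting $M$ along $N$ exhibits $M$ as a composition of bordisms: $M_N$ is a bordism from $N$ to $N$ (together with the $\partial M$ boundary carried along), and $M$ is recovered by composing with the "trace" / closing-up operation that identifies the incoming and outgoing copies of $N$. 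Functoriality of $\mathcal{F}$ then gives $\mathcal{F}(M) = {\rm Tr}_{\mathcal{F}(N)}\mathcal{F}(M_N)$, since the categorical trace in $\mathcal{H}$ is exactly the canonical contraction of $\mathcal{F}(N)\otimes\overline{\mathcal{F}(N)}$.

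\medskip

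\noindent\textbf{Reverse direction.} This is where the work lies. Given $\mathcal{F}$ satisfying 1--3, I must define $\mathcal{F}$ on a general morphism $M: N_0 \to N_1$ of the bordism category (a $4\ell+3$-manifold with $\partial M = -N_0 \sqcup N_1$, equipped with collar and $\mathfrak{W}$/flat-cocycle data) as a linear map $\mathcal{F}(N_0) \to \mathcal{F}(N_1)$, and check this is well-defined (independent of the bordism representative, i.e. of collars and of diffeomorphisms), respects composition, respects identities, and is monoidal and dagger-compatible. The natural definition is to view $M$ as a vector $\mathcal{F}(M) \in \mathcal{F}(\partial M) = \overline{\mathcal{F}(N_0)} \otimes \mathcal{F}(N_1) \cong {\rm Hom}(\mathcal{F}(N_0), \mathcal{F}(N_1))$, using condition 2 to identify $\mathcal{F}(-N_0)$ with $\overline{\mathcal{F}(N_0)} = \mathcal{F}(N_0)^\ast$. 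The identity axiom $\mathcal{F}(N \times [0,1]) = {\rm id}_{\mathcal{F}(N)}$ must be imposed or derived — I would add it as a hypothesis if needed, but in fact it should follow: apply the gluing condition 3 to $M = N\times S^1$ cut along a fiber $N$, giving $\mathcal{F}(N\times S^1) = {\rm Tr}_{\mathcal{F}(N)}\mathcal{F}(N\times[0,1])$; combined with monoidal/normalization this pins down $\mathcal{F}(N\times[0,1])$ to be idempotent and of the right trace, hence the identity (one may need the invertibility/finite-dimensionality and a small argument here). Actually the cleanest route: since the statement is about \emph{this particular} $\mathcal{F}$ coming from the prequantum construction, I expect the intended reading is that one need not separately verify the identity and collar-independence axioms because they are built into how $\mathcal{F}(M)$ is defined in Section \ref{SecPreq}; the proposition's role is to reduce functoriality to the three listed checks.

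\medskip

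\noindent\textbf{Composition via gluing.} The heart of the reverse direction is: given composable bordisms $M: N_0 \to N_1$ and $M': N_1 \to N_2$, their composite $M' \circ M$ is obtained by gluing $M$ and $M'$ along $N_1$. Set $M'' := M \sqcup M'$, a $4\ell+3$-manifold with $N_1 \sqcup -N_1$ (and $-N_0 \sqcup N_2$) in its boundary; then $M' \circ M = M''_{N_1}$ in the notation of the proposition, i.e. $M''$ glued along $N_1$ — wait, more precisely $M'\circ M$ is obtained from $M''$ by identifying the $N_1$-boundary of $M$ with the $-N_1$-boundary of $M'$, which is exactly the cut-and-reglue operation in reverse. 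Applying condition 1 to get $\mathcal{F}(M'') = \mathcal{F}(M) \otimes \mathcal{F}(M')$ and then condition 3 to contract the $\mathcal{F}(N_1) \otimes \overline{\mathcal{F}(N_1)}$ factors yields $\mathcal{F}(M'\circ M) = {\rm Tr}_{\mathcal{F}(N_1)}(\mathcal{F}(M)\otimes\mathcal{F}(M'))$, which under the ${\rm Hom}$-identifications is precisely the composition of linear maps $\mathcal{F}(M') \circ \mathcal{F}(M)$. One must check this contraction matches the matrix-multiplication convention, tracking the $\overline{\mathcal{F}(N_0)}$ and $\mathcal{F}(N_2)$ "spectator" factors through — routine but needs care with the symmetry isomorphisms of the monoidal structure. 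Monoidality of $\mathcal{F}$ on morphisms is then \eqref{EqCompMonStructDimD} directly, and the dagger compatibility on morphisms is \eqref{EqCompDagStructDimD} together with the fact that orientation-reversal of a bordism $N_0 \to N_1$ gives a bordism $N_1 \to N_0$ whose associated map is the adjoint, which follows from \eqref{EqCompDagStructDimD} and condition 2.

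\medskip

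\noindent\textbf{Main obstacle.} The genuine difficulty is \emph{well-definedness}: showing $\mathcal{F}(M)$, and the composition law derived above, are independent of the cutting hypersurface $N$, of collar neighborhoods, and of the choice of representative within a diffeomorphism/bordism class — and that gluing is associative at the level of these vectors, i.e. that cutting along two disjoint hypersurfaces in either order gives the same answer. The statement as phrased sidesteps part of this by only asserting the gluing identity \eqref{EqConstrGluing} for a single cut, so the real content I must supply is: \emph{the three conditions, applied iteratively, are consistent}, i.e. any two ways of decomposing a bordism into pieces and reassembling via conditions 1 and 3 give the same linear map. I would handle this by choosing, for each bordism, a "standard" decomposition into elementary pieces (collars and a minimal bulk), defining $\mathcal{F}$ on morphisms via that, and then using condition 3 repeatedly to show any other decomposition agrees — the compatibility of nested/disjoint cuts reduces, via an isotopy argument, to commuting two trace operations over different tensor factors, which is automatic in $\mathcal{H}$. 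This bookkeeping, rather than any deep idea, is the crux.
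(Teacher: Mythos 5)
Your overall architecture matches the paper's: both directions hinge on identifying $\mathcal{F}(\partial M)\simeq \overline{\mathcal{F}(N^-)}\otimes\mathcal{F}(N^+)={\rm Hom}(\mathcal{F}(N^-),\mathcal{F}(N^+))$ via conditions 1 and 2, and on recognizing composition of bordisms as the special case of the gluing condition with $M_N=M_1\sqcup M_2$ and $M=M_1\sqcup_\theta M_2$ — which is exactly how the paper proves that conditions 1--3 imply functoriality. The one step where you genuinely diverge is the derivation of condition 3 \emph{from} functoriality. You close up $M_N$ by "composing with the trace", which in the bordism category means composing with the evaluation bordism (the bent cylinder $N\sqcup -N\to\emptyset$); to conclude, you then need $\mathcal{F}$ of that bent cylinder to be the canonical pairing, which requires the identity axiom on the straight cylinder plus a snake-identity argument — a standard but nontrivial supplement you only gesture at. The paper instead picks a second hypersurface $N'$ parallel to $N$ so that cutting along $N\sqcup N'$ disconnects $M$ into $M_1\sqcup M_2$, writes both $M$ and $M_N$ as compositions of $M_1$ and $M_2$ (along $N\sqcup N'$ and along $N'$ respectively), and uses ${\rm Tr}_{\mathcal{F}(N\sqcup N')}={\rm Tr}_{\mathcal{F}(N)}{\rm Tr}_{\mathcal{F}(N')}$. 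This avoids bent cylinders and duality data entirely and is the cleaner route.

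Two further remarks. First, your "main obstacle" paragraph is largely moot for this proposition as the paper sets it up: the bordism category of Appendix \ref{ApBordCatWMan} does not quotient by diffeomorphism, and $\mathcal{F}(M)$ is \emph{given} data on every $4\ell+3$-manifold rather than something constructed from a chosen decomposition, so there is no well-definedness or cut-independence to verify — condition 3 is an identity between two already-defined vectors. Second, you are right that the identity axiom ($\mathcal{F}$ of a collar cylinder equals ${\rm id}$) does not obviously follow from conditions 1--3; your proposed derivation via $N\times S^1$ only yields idempotency plus a trace constraint you cannot independently evaluate, so it does not close that gap. The paper's proof silently omits this point as well; for the specific functors $\mathcal{S}^e$ and $\mathcal{R}$ it holds by construction, but as a general statement the proposition should be read as taking the identity axiom for granted.
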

\noindent The conditions given in Proposition \ref{PropAltCondCheckFunc} are natural from the point of view of Euclidean quantum field theory, where the distinction between the incoming components $\partial^-M$ and outgoing components $\partial^+ M$ of a bordism $M$ is arbitrary. The proposition holds for any bordism category, with identical proof up to notational details. In particular, the proposition also holds for the category $\mathcal{B}_{\mathfrak{W},{\rm flat},\epsilon}^{4\ell+3,1}$
\begin{proof}
Suppose that $\mathcal{F}$ satisfies the conditions above. Let $(M, \theta^-, \theta^+)$ be a bordism from $N^-$ to $N^+$. (See Appendix \ref{ApBordCatWMan} for the notation.) We see that
\begin{align}
\label{EqDerFuncBoundToHomFunc}
\mathcal{F}(\partial M) \: & = \mathcal{F}(\partial^- M) \otimes \mathcal{F}(\partial^+ M) \simeq \mathcal{F}(-N^-) \otimes \mathcal{F}(N^+) \\
& \simeq \mathcal{F}(N^-)^\ast \otimes \mathcal{F}(N^+) = {\rm Hom}(\mathcal{F}(N^-), \mathcal{F}(N^+)) \;, \notag
\end{align}
so $\mathcal{F}$ assigns a homomorphism from $\mathcal{F}(N^-)$ to $\mathcal{F}(N^+)$ to the bordism $M$, as a functor should. 

Assume now that we have a bordism $(M_1, \theta^-_1, \theta^+_1)$ from $N^-$ to $N$ and a bordism $(M_2, \theta^-_2, \theta^+_2)$ from $N$ to $N^+$. By setting $M_N = M_1 \sqcup M_2$ and $M = M_1 \sqcup_{(\theta^-_2)^{-1} \circ \theta^+_1} M_2$ in \eqref{EqConstrGluing} (see Appendix \ref{ApBordCatWMan}), we see that $\mathcal{F}$ preserves the composition in $\mathcal{B}_{\mathfrak{W},{\rm flat}}^{4\ell+3,1}$ and $\mathcal{H}$, hence is a functor. The first two conditions ensure that it is a field theory functor. This proves one direction of the proposition.

Suppose now that $\mathcal{F}$ is a field theory functor. The functor axioms imply that on a bordism $M$ with $\partial^- M = \emptyset$, $\partial^+ M = \partial M$, $\mathcal{F}(M) \in {\rm Hom}(\mathbb{C}, \mathcal{F}(\partial M)) \simeq \mathcal{F}(\partial M)$. The fact that $\mathcal{F}$ is a field theory functor ensures that the first two conditions are satisfied. 

To check the third one, pick another codimension 1 submanifold $N' \subset M$ disjoint from $N$ and $\partial M$, and such that after cutting $M$ along $N \sqcup N'$, pairs of points on each side of the cut belong to disconnected components, say $M_1$ and $M_2$. In particular, we have a decomposition $\partial M = N_1 \sqcup N_2$, $N_1 \subset M_1$, $N_2 \subset M_2$ (see Figure \ref{Fig-Prop51-2}).
\begin{figure}[t]
  \begin{center}
    \includegraphics[width=.9\textwidth]{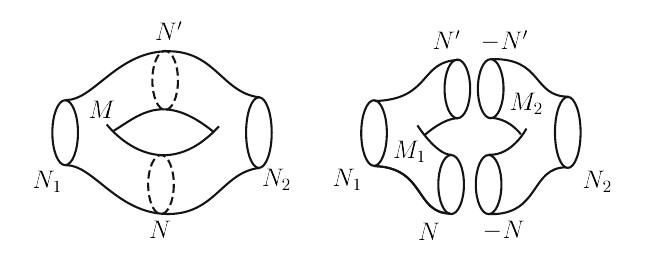}
    \vspace{-1cm}
    \caption{The manifold $M$ cut into two disjoint components.}
    \label{Fig-Prop51-2}
  \end{center}
\vspace{-.5cm}
\end{figure}
We define bordism structures on $M_1$ and $M_2$ by setting $\partial^- M_1 = N_1$, $\partial^+ M_1 = \partial^- M_2 = N \sqcup N'$, $\partial^+ M_2 = N_2$ and a bordism structure on $M$ by setting $\partial^- M = N_1$, $\partial^+ M = N_2$. (The choices of isomorphisms $\theta^\pm_{1,2}$ are irrelevant as long as they are made consistently with the gluing.) Then the bordism $M$ can be obtained by the composition of $M_1$ and $M_2$, from which we deduce that
\be
\label{EqFM}
\mathcal{F}(M) = \mathcal{F}(M_2) \circ_{N \sqcup N'} \mathcal{F}(M_1) = {\rm Tr}_{\mathcal{F}_{N \sqcup N'}} (\mathcal{F}(M_2) \otimes \mathcal{F}(M_1)) \;.
\ee
By making the alternative choice $\partial^- M_1 = N_1 \sqcup N$, $\partial^+ M_1 = \partial^- M_2 = N'$, $\partial^+ M_2 = N_2 \sqcup N$, and setting a bordism structure on $M_N$ with $\partial^- M_N = N_1 \sqcup N$, $\partial^+ M_N = N_2 \sqcup N$, we deduce that 
\be
\label{EqFMN}
\mathcal{F}(M_N) = \mathcal{F}(M_2) \circ_{N'} \mathcal{F}(M_1) = {\rm Tr}_{\mathcal{F}_{N'}} (\mathcal{F}(M_2) \otimes \mathcal{F}(M_1))\;.
\ee
As $\mathcal{F}(N \sqcup N') = \mathcal{F}(N) \otimes \mathcal{F}(N')$, 
\be
\label{EqProdTrace}
{\rm Tr}_{\mathcal{F}_{N \sqcup N'}} = {\rm Tr}_{\mathcal{F}_{N}}{\rm Tr}_{\mathcal{F}_{N'}}
\ee
Combining \eqref{EqFM}, \eqref{EqFMN} and \eqref{EqProdTrace} yields the third condition of the proposition, completing the proof.
\end{proof}

\subsection{Prequantum theory}

\label{SecSubPreqTh}

\paragraph{Closed $4\ell+3$-dimensional manifold} Let $(M,\check{x})$ be a closed $(\mathfrak{W},\mathsf{Z}^{\rm flat})$-manifold of dimension $4\ell+3$.
Such a manifold can be seen as a bordism from the empty object of $\mathcal{B}_{\mathfrak{W},{\rm flat}}^{4\ell+3,1}$ to itself. It is not difficult to see that the compatibility of the field theory functor with the monoidal structure requires it to assign $\mathbb{C}$ to the empty object. The value of the field theory functor on such bordisms is therefore a complex number, the \emph{partition function}. The partition function of the prequantum theory is simply the exponentiated action:
\be
\mathcal{S}^e(M,\check{x}) := S^e(M,\check{x}) \;.
\ee
As the exponentiated action depends only on the differential cohomology class $x$ of $\check{x}$, the same is true for the partition function $\mathcal{S}^e(M,\check{x})$ of the prequantum theory. We will therefore sometimes freely write $\mathcal{S}^e(M,x)$.

\paragraph{Closed $4\ell+2$-dimensional} Let $(M,\check{x})$ be a closed $(\mathfrak{W},\mathsf{Z}^{\rm flat})$-manifold of dimension $4\ell+2$. We construct a Hermitian line $\mathcal{S}^e(M, x)$ out of this data as follows. As the notation suggests, $\mathcal{S}^e(M, x)$ depends only on the cohomology class $x$ of $\check{x}$.

Consider the category $\mathcal{C}_M$ whose objects are pairs $(\bar{m}, \check{x}')$, where $\bar{m}$ is an E-chain representing the fundamental E-homology class of $M$ and $\check{x}' \in \mathsf{Z}^{\rm flat}_M$ is a differential cocycle representative of $x$. E-chains are defined simply as homomorphisms from the E-cochain group into $\mathbb{R}/\mathbb{Z}$, see Appendix \ref{SecEth}. The morphisms in $\mathcal{C}_M$ from $(\bar{m}_0, \check{x}_0)$ to $(\bar{m}_1, \check{x}_1)$ are cylinders $M \times I$ endowed with a pair $(\bar{n}, \check{y})$, where $\bar{n}$ is an E-chain representing the fundamental E-homology class of $M \times I$ and $\check{y} \in \mathsf{Z}^{\rm flat}(M \times I)$. $(\bar{n}, \check{y})$ is required to satisfy $\partial \bar{n} = \bar{m}_1 - \bar{m}_0$, $\check{y}|_{M \times \{0\}} = \check{x}_0$ and $\check{y}|_{M \times \{1\}} = \check{x}_1$. $\mathcal{C}_M$ is a transitive groupoid whose automorphisms are given by tori $M \times S^1$ endowed with the data above. 

We write $\bar{l}$ for the Lagrangian associated to a morphism $(M \times I, \check{y})$. $\bar{l}$ is an E-cocycle of degree $4\ell+3$. Consider the functor $\textcal{f}: \mathcal{C}_M \rightarrow \mathcal{H}$ assigning $\mathbb{C}$ to each object and $\exp 2\pi i \langle \bar{l}, \bar{n} \rangle$ to any morphism $\bar{n}$. 

A \emph{section} of $\textcal{f}$ is a function $v$ assigning to each object $(\bar{m}, \check{x})$ of $\mathcal{C}_M$ a vector $v(\bar{m}, \check{x})$ in $\textcal{f}\:(\bar{m}, \check{x})$. A section $v$ is \emph{invariant} if $v \circ (\bar{n}, \check{y}) = \textcal{f}\:(\bar{n}, \check{y}) \circ v$ for any morphism $(\bar{n},\check{y})$ in $\mathcal{C}_M$. We define $\mathcal{S}^e(M,x)$ to be the Hilbert space of invariant sections of $\textcal{f}$. 

Recall that by considering the partition function on tori of the form $M \times I$ in Section \ref{SecActTori}, we defined a coset $a_M + \mathsf{K}_M \subset H^{2\ell+2}_{\rm tors}(M;\bar{\Lambda})$ on $M$.
\begin{proposition}
\label{PropDimHilbSp}
The Hilbert space $\mathcal{S}^e(M,x)$ is a Hermitian line if $a \in a_M + \mathsf{K}_M$, otherwise it is the zero Hilbert space.
\end{proposition}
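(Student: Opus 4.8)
The plan is to show that the space of invariant sections of $\textcal{f}$ is either one-dimensional or zero-dimensional, with the nonzero case occurring precisely when $a \in a_M + \mathsf{K}_M$. Since $\mathcal{C}_M$ is a transitive groupoid, an invariant section is determined by its value at a single object $(\bar{m}_0,\check{x}_0)$, and the constraint is that going around any automorphism (a torus $M \times S^1$) acts trivially on that value. So the first step is to fix a basepoint object and identify the group of automorphisms $\mathrm{Aut}(\bar{m}_0,\check{x}_0)$ in $\mathcal{C}_M$: these are pairs $(\bar{n},\check{y})$ on $M\times S^1$ with $\partial\bar n = 0$ (so $\bar n$ differs from a fixed chain by an E-cycle) and $\check y$ a flat cocycle on $M\times S^1$ restricting to $\check x_0$ at both ends. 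The homomorphism $\chi : \mathrm{Aut}(\bar m_0,\check x_0) \to U(1)$, $(\bar n,\check y)\mapsto \exp 2\pi i\langle \bar l, \bar n\rangle$, is precisely the obstruction: a nonzero invariant section exists iff $\chi$ is trivial, and in that case the space of invariant sections is one-dimensional, hence a Hermitian line (the Hermitian structure coming from the standard inner product on $\mathbb{C}$, which is automorphism-invariant).

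The second step is to compute $\chi$. The automorphisms split into two kinds of moves: (i) changing the E-chain representative $\bar n$ while keeping $\check y$ the pullback of $\check x_0$ from $M$ — here $\langle \bar l, \bar n\rangle$ is unchanged because $\bar l$ is an E-\emph{cocycle} and a closed E-chain pairs to zero with it (the fundamental E-homology class being well-defined, cf. Proposition \ref{PropConstrFundHomEClass}); and (ii) varying $\check y$ over flat cocycles on $M\times S^1$ that restrict to $\check x_0$ at both ends. For the latter, the relevant invariant is exactly the partition function $\mathcal{S}^e(M\times S^1, \check y) = S^e(M\times S^1, \check y)$, because pairing $\bar l$ on the cylinder $M\times I$ with an E-chain whose boundary vanishes is the same as pairing $\bar l$ with the fundamental E-homology class of the closed manifold $M\times S^1$ obtained by gluing. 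So $\chi$ is trivial iff $S(M\times S^1, a')=0$ for all flat classes $a'$ on $M\times S^1$ restricting to (the class of) $\check x_0$ on both copies of $M$, i.e. for all $a' = a + f\cup g_{S^1}$ with $a$ the given class and $f$ ranging over $H^{2\ell+1}_{\rm tors}(M;\bar\Lambda)$ (plus the freedom in $a$ within $\mathsf{K}_M$, which does not matter since $S$ vanishes on $\mathsf{K}_M$ by the admissibility convention of Section \ref{SecActTori}).

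The third step invokes the computations of Section \ref{SecActTori} directly. By Proposition \ref{PropPrelPreqThHermLine}, for fixed $\pi(a)$ the function $j\mapsto S(M\times S^1, a+j)$ on $H^{2\ell+1}_{\rm tors}(M;\bar\Lambda)\cup g_{S^1}$ is a linear functional that vanishes identically precisely when the free part of $a$ lies in the distinguished coset $a_N + \mathsf{K}_N$ — here taking $N = M$. More precisely: $S(M,a)$ restricted to this isotropic summand is linear, and by the argument in Proposition \ref{PropPrelPreqThHermLine} it is the zero functional iff $a$ lies in $a_M + \mathsf{K}_M$ (combining the identification of the functional with a coset under $L'$ with Proposition \ref{PropActVanishPullBackTorus}). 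Thus $\chi$ is trivial iff $a\in a_M + \mathsf{K}_M$, which gives exactly the dichotomy claimed.

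The main obstacle I expect is not the linear-algebra dichotomy but the careful bookkeeping that gluing a cylinder $M\times I$ along its two ends genuinely reproduces the closed-manifold action $S(M\times S^1,\cdot)$, including the role of the E-chain representative and the fact that changing it (or the cocycle $\check x_0$ within its class on $M$) does not alter the pairing — in short, checking that $\chi$ is well-defined as a homomorphism out of $\mathrm{Aut}(\bar m_0,\check x_0)$ and does not depend on the chosen basepoint object. This amounts to verifying that the functor $\textcal{f}$ descends to a genuine character of the automorphism group, which is where the E-cocycle property of $\bar l$ and the functoriality of $[M\times I, \partial(M\times I)]_E$ under gluing (Appendix \ref{SecEth}) are used essentially.
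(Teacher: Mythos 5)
Your proposal is correct and follows essentially the same route as the paper: the transitivity of $\mathcal{C}_M$ with one-dimensional fibers reduces the statement to whether the automorphisms (tori $M\times S^1$) act trivially under $\textcal{f}$, and Proposition \ref{PropPrelPreqThHermLine} supplies exactly the dichotomy in terms of the coset $a_M + \mathsf{K}_M$. The additional bookkeeping you flag (well-definedness of the character, independence of the E-chain representative) is handled implicitly in the paper by the constructions of Section \ref{SecActTori} and Appendix \ref{SecEth}, so no further argument is needed.
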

\begin{proof}
As the groupoid $\mathcal{C}_M$ is transitive and $\textcal{f}$ sends each object to a 1-dimensional Hilbert space, the space of invariant section is either 1-dimensional or 0-dimensional. It is 1-dimensional if and only if all the automorphisms of $\mathcal{C}_M$ are sent to the identity by $\textcal{f}\:$. Proposition \ref{PropPrelPreqThHermLine} shows that this is the case if and only if $a \in a_M + \mathsf{K}_M$.
\end{proof}
The vanishing of the Hilbert space when $a \notin a_M + \mathsf{K}_M$ can be understood as a global gauge anomaly. In case $\Lambda = \Lambda^\ast = \mathbb{Z}$ with trivial local systems, it boils down to the one described in \cite{Witten:1999vg, Monniera} in the context of the self-dual field theory.

\paragraph{$4\ell+3$-dimensional manifolds with boundary} Let $(M,\check{w})$ be a $(\mathfrak{W},\mathsf{Z}^{\rm flat})$-manifold of dimension $4\ell + 3$ with boundary. Let $w$ be the differential cohomology class of $\check{w}$.

\begin{proposition}
\label{PropPQFTdDimWB}
$(M,w)$ determines an element $\mathcal{S}^e(M,w) \in \mathcal{S}^e(\partial M, w|_{\partial M})$.
\end{proposition}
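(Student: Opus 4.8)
The plan is to produce $\mathcal{S}^e(M,w)$ explicitly as an invariant section of the functor $\textcal{f}\colon\mathcal{C}_{\partial M}\to\mathcal{H}$, since by construction $\mathcal{S}^e(\partial M,w|_{\partial M})$ is precisely the space of such sections. Given an object $(\bar m,\check x')$ of $\mathcal{C}_{\partial M}$ — with $\bar m$ an E-chain representing the fundamental E-homology class of $\partial M$ and $\check x'\in\mathsf{Z}^{\rm flat}_{\partial M}$ a representative of $w|_{\partial M}$ — I would choose a flat differential cocycle $\check z\in\mathsf{Z}^{\rm flat}_M$ restricting to $\check x'$ on $\partial M$ and with $[\check z]=w$ (any representative of $w$ can be adjusted by the differential of a differential cochain on $M$ so as to have the prescribed boundary value, which does not alter its class), and an E-chain $\bar c$ on $M$ with $\partial\bar c=\bar m$ representing the relative fundamental E-homology class $[M,\partial M]_E$ of Proposition \ref{PropConstrFundHomEClass}. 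Then set
\[
\mathcal{S}^e(M,w)(\bar m,\check x'):=\exp 2\pi i\,\langle\bar l(\check z),\bar c\rangle\;\in\;\textcal{f}\:(\bar m,\check x')=\mathbb{C},
\]
where $\bar l(\check z)=(\hat l(\check z),(\hat a_W)_2)$ is the degree $4\ell+3$ E-cocycle of \eqref{EqDefLag} (with $\hat\omega=0$ since $\check z$ is flat).

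The first task is to check that this number is independent of the two choices. Two extensions $\check z$ differ by $d\check w'$ for $\check w'$ a flat differential cochain of degree $2\ell+1$ vanishing on $\partial M$; by \eqref{EqVarLagEx} the E-cocycle changes to $\bar l(\check z)\boxplus d\bar c'$ for $\bar c'=(\hat c',\hat b_2)$ an E-cochain vanishing on $\partial M$ (because $\check w'$ does), and $\langle d\bar c',\bar c\rangle=\langle\bar c',\partial\bar c\rangle=\langle\bar c',\bar m\rangle=0$ since $\bar c'$ vanishes on $\partial M$ while $\bar m$ is supported there. Two choices of $\bar c$ with the same boundary $\bar m$ both represent $[M,\partial M]_E$, hence — by the coherence of the chain-level construction of the fundamental E-homology class with restriction to the boundary, established in Appendix \ref{SecEth} — pair identically with every degree $4\ell+3$ E-cocycle on $M$; this is the E-theoretic analogue of the vanishing of the top homology of the $(4\ell+2)$-dimensional boundary, and is where the machinery of Appendix \ref{SecEth} is genuinely used. (If $\mathcal{S}^e(\partial M,w|_{\partial M})=0$ by Proposition \ref{PropDimHilbSp}, there is nothing to prove; otherwise it is a Hermitian line and the above exhibits a generator.)

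It then remains to verify invariance. A morphism $(\bar n,\check y)\colon(\bar m_0,\check x_0)\to(\bar m_1,\check x_1)$ is a cylinder $\partial M\times I$ with $\partial\bar n=\bar m_1-\bar m_0$ and $\check y$ interpolating between $\check x_0$ and $\check x_1$, and $\textcal{f}\:(\bar n,\check y)$ acts by $\exp 2\pi i\,\langle\bar l(\check y),\bar n\rangle$; one must show $\mathcal{S}^e(M,w)(\bar m_1,\check x_1)=\exp 2\pi i\,\langle\bar l(\check y),\bar n\rangle\cdot\mathcal{S}^e(M,w)(\bar m_0,\check x_0)$. The idea is to glue the cylinder onto $M$ along $\partial M\times\{0\}$, obtaining a $(\mathfrak{W},\mathsf{Z}^{\rm flat})$-manifold $M'$; attaching a collar changes neither the diffeomorphism type, nor the Wu classifying map up to homotopy, nor the flat differential cohomology class, so $M'$ is canonically isomorphic to $M$. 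Computing $\mathcal{S}^e(M,w)(\bar m_0,\check x_0)$ on $M$ with data $\check z,\bar c$, one observes that the differential cocycle on $M'$ restricting to $\check z$ and $\check y$, together with the E-chain on $M'$ obtained by gluing $\bar c$ and $\bar n$ — which has boundary $\bar m_1$ and represents $[M',\partial M']_E$ by the gluing behavior of the fundamental E-homology class — form admissible data computing $\mathcal{S}^e(M,w)(\bar m_1,\check x_1)$. By additivity of $\bar l$ and locality of the chain pairing their pairing equals $\langle\bar l(\check z),\bar c\rangle+\langle\bar l(\check y),\bar n\rangle$, yielding the claim. Compatibility with the monoidal and dagger structures is then immediate from additivity of $\bar l$ and complex conjugation.

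The main obstacle I anticipate is not the cochain bookkeeping with \eqref{EqDefLag} and \eqref{EqVarLagEx}, but the two inputs from Appendix \ref{SecEth}: the precise behavior of the fundamental E-homology chain under cutting and gluing along a codimension-one hypersurface, and the insensitivity of the relative class $[M,\partial M]_E$ to attaching a collar. The well-definedness with respect to the choice of E-chain $\bar c$ is of exactly the same nature and is really the technical heart of the proof; once those facts are in hand, the remaining steps are the formal manipulations sketched above.
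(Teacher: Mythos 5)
Your proposal is correct and follows essentially the same route as the paper: define the section value at each object by pairing $\bar{l}$ of an extension of the boundary cocycle with a chain representative of $[M,\partial M]_E$, check independence of the choices for fixed boundary data, and realize changes of boundary data by gluing a cylinder so that the pairing transforms exactly by $\textcal{f}$ of the corresponding morphism. The paper justifies the fixed-boundary well-definedness by excision from the closed-manifold case rather than by your direct appeal to the chain-level properties of the fundamental E-homology class, but the content is the same.
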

\begin{proof}
Pick an E-chain representative $\bar{p}$ of the fundamental class $[M,\partial M]_E$. Write $\bar{m} = \partial \bar{p}$. Pick as well any cocycle representative $\check{w}' \in \mathsf{Z}^{\rm flat}_M$ of $w$. As the exponentiated pairing $\exp 2\pi i \langle \bar{l}, \bar{p} \rangle$ is invariant on closed manifolds under changes of $\bar{p}$ and $\check{w}'$, by excision it is also invariant on $M$ as long as $\bar{m}$ and $\check{w}|_{\partial M}$ are kept fixed. However, as the Lagrangian $\bar{l}$ is not a relative E-cocycle in general, $\exp 2\pi i \langle \bar{l}, \bar{p} \rangle$ fails to be invariant for generic variations of $\bar{p}$ and $\check{w}'$.

Suppose we change the representative of the fundamental class from $\bar{p}_0$ to $\bar{p}_1 = \bar{p}_0 + \bar{n}$, with $\partial \bar{n} = \bar{m}_1 - \bar{m}_0$. Suppose also that we also change the cocycle representative from $\check{w}_0$ to $\check{w}_1$, with $\check{w}_i|_{\partial M} = \check{x}_i$, $i = 0,1$. Up to a change of $(\bar{p}_1, \check{w}_1)$ away from $\partial M$ leaving the exponentiated pairing invariant, we can implement these changes by gluing along $\partial M$ the cylinder $(\partial M  \times I, \bar{n}, \check{y})$ corresponding to a morphism in the category $\mathcal{C}_{\partial M}$  from $(\bar{m}_0, \check{x}_0)$ to $(\bar{m}_1, \check{x}_1)$. Under this operation, the exponentiated pairing gets multiplied by $\exp 2\pi i \langle \bar{l}(\check{y}), \bar{n} \rangle$.  This shows that the exponentiated pairing defines an invariant section $\mathcal{S}^e(M, w)$ in $\mathcal{S}^e(\partial M, w|_{\partial M})$.
\end{proof}

Note that the proof does not use the fact that the differential cocycle $\check{w}$ is flat. We have therefore the following Corollary that will be useful in Section \ref{SecWilHooftOp}.
\begin{corollary}
\label{CorHomFromManWBoundWNFDiffCoc}
Let $M$ be a $\mathfrak{W}$-manifold of dimension $4\ell+3$ with boundary endowed with a not necessarily flat differential cohomology class $w$, whose restriction to $\partial W$ is flat. Then the pair $(M,w)$ determines an element $\mathcal{S}^e(M,w) \in \mathcal{S}^e(\partial M, w|_{\partial M})$.
\end{corollary}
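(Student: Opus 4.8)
The plan is to reduce the statement to Proposition \ref{PropPQFTdDimWB} by re-running its proof and checking that flatness of the differential cocycle on the interior of $M$ is never used. The observation underlying this, already noted in the text, is that the Lagrangian $\hat{l}(\check{w}')$ of \eqref{EqDefLag} and the E-cochain $\bar{l}(\check{w}')=(\hat{l}(\check{w}'),(\hat{a})_2)$ are defined for an \emph{arbitrary} differential cocycle $\check{w}'=(\hat{a},\hat{h},\hat{\omega})$, and that $\bar{l}(\check{w}')$ is still a closed E-cocycle of degree $4\ell+3$: the computation in the proof of Proposition \ref{PropGaugInvLag} ends with $\tfrac{1}{2}[\check{w}'\cup(\check{w}'+\check{\nu})]_\omega=0$ for degree reasons, independently of whether $\hat{\omega}=0$. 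Similarly, equation \eqref{EqVarLagEx}, describing the change of $\bar{l}$ under a change of cocycle representative of $w$, holds verbatim for non-flat representatives.

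Concretely, I would proceed as follows. For an object $(\bar{m},\check{x})$ of $\mathcal{C}_{\partial M}$, with $\check{x}$ flat, choose an E-chain $\bar{p}$ on $M$ with $\partial\bar{p}=\bar{m}$ representing $[M,\partial M]_E$, and a cocycle representative $\check{w}'\in\mathsf{Z}_M$ of $w$ with $\check{w}'|_{\partial M}=\check{x}$; the latter exists because cochains and forms on $\partial M$ extend to $M$, so any representative of $w$ can be modified by an exact cocycle to acquire the prescribed flat boundary value. Set the candidate section to have value $\exp 2\pi i\langle\bar{l}(\check{w}'),\bar{p}\rangle\in\textcal{f}(\bar{m},\check{x})$ at $(\bar{m},\check{x})$. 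Two things must then be checked, exactly as in Proposition \ref{PropPQFTdDimWB}: first, that for fixed $(\bar{m},\check{x})$ this number is independent of $\bar{p}$ and of $\check{w}'$ away from $\partial M$ — using invariance on closed manifolds (from $\bar{l}$ being an E-cocycle together with \eqref{EqVarLagEx}) and an excision argument; second, that under a change of boundary data from $(\bar{m}_0,\check{x}_0)$ to $(\bar{m}_1,\check{x}_1)$ implemented by gluing on a cylinder $(\partial M\times I,\bar{n},\check{y})$ representing a morphism of $\mathcal{C}_{\partial M}$, the value is multiplied by $\exp 2\pi i\langle\bar{l}(\check{y}),\bar{n}\rangle=\textcal{f}(\bar{n},\check{y})$. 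Together these say that the assignment is a well-defined invariant section of $\textcal{f}$, i.e. an element of $\mathcal{S}^e(\partial M,w|_{\partial M})$.

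The only role of flatness is that the cylinders $(\partial M\times I,\bar{n},\check{y})$ relating different boundary data are morphisms of $\mathcal{C}_{\partial M}$ and therefore carry flat differential cocycles $\check{y}$; this is consistent precisely because $w|_{\partial M}$ is flat, which is why flatness of $w$ in the interior of $M$ is irrelevant. I do not expect a genuine obstacle here: the work is the bookkeeping of transporting each exactness and excision step of the proof of Proposition \ref{PropPQFTdDimWB} into the non-flat setting, checking that all manipulations are relative to $\partial M$ so that the boundary cocycle stays fixed and flat throughout. (If $\mathcal{S}^e(\partial M,w|_{\partial M})$ is the zero Hilbert space by Proposition \ref{PropDimHilbSp}, the statement is vacuous, with $\mathcal{S}^e(M,w)$ the unique zero element.)
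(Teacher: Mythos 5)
Your proposal is correct and coincides with the paper's own argument: the paper proves the corollary by the one-line remark that the proof of Proposition \ref{PropPQFTdDimWB} nowhere uses flatness of $\check{w}$ in the interior of $M$, which is exactly the reduction you carry out (including the key observations that $\bar{l}(\check{x})$ is a closed E-cocycle for arbitrary curvature by the degree argument in Proposition \ref{PropGaugInvLag}, and that the boundary data handled by the cylinders of $\mathcal{C}_{\partial M}$ remain flat).
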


\paragraph{Compatibility with the monoidal structures} Suppose $(M,\check{x})$ is a $(\mathfrak{W},\mathsf{Z}^{\rm flat})$-manifold of dimension $4\ell+3$, possibly with boundary, that decomposes into two disjoint components $(M,\check{x}) = (M_1,\check{x}_1) \sqcup (M_2,\check{x}_2)$, with $\check{x}_i$, $i = 1,2$ supported on $M_i$. Let $\bar{n}_1$ and $\bar{n}_2$ be E-chains representing $[M_1,\partial M_1]_E$ and $[M_2, \partial M_2]_E$ and extend them by zero to $M$. Proposition \ref{PropFundHomEClassDisUn} ensures that $[M,\partial M]_E = [M_1,\partial M_1]_E + [M_2, \partial M_2]_E$, so $\bar{n} := \bar{n}_1 + \bar{n}_2$ is an E-chain representative of $[M,\partial M]_E$. As $\bar{l}(\check{x}) = \bar{l}(\check{x}_1) + \bar{l}(\check{x}_2)$ and $\langle \bar{l}(\check{x}_1), \bar{n}_2 \rangle = \langle \bar{l}(\check{x}_2), \bar{n}_1 \rangle = 0$, we have
\be
\langle \bar{l}(\check{x}), \bar{n} \rangle = \langle \bar{l}(\check{x}_1), \bar{n}_1 \rangle + \langle \bar{l}(\check{x}_2), \bar{n}_2 \rangle \;.
\ee
The formula above together with the definitions of $\mathcal{S}^e$ on $4\ell+2$ and $4\ell+3$-dimensional $\mathfrak{W}$-manifolds imply immediately
\begin{proposition}
\label{PropCompPreqThMonoidStruct}
If $(N,\check{y})$ is a $4\ell+2$-dimensional $(\mathfrak{W},\mathsf{Z}^{\rm flat})$-manifold that decomposes into two disjoint components $(N,\check{y}) = (N_1,\check{y}_1) \sqcup (N_2,\check{y}_2)$, then
\be
\mathcal{S}^e(N,y) = \mathcal{S}^e(N_1,y_1) \otimes \mathcal{S}^e(N_2,y_2) \;.
\ee
Moreover, if $(M,\check{x})$ is as above, then
\be
\mathcal{S}^e(M,x) = \mathcal{S}^e(M_1,x_1) \otimes \mathcal{S}^e(M_2,x_2) \;.
\ee
\end{proposition}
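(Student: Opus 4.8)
The plan is to recognize that this is essentially a bookkeeping consequence of two facts already available: the additivity of the Lagrangian, $\bar{l}(\check{x}_1 + \check{x}_2) = \bar{l}(\check{x}_1) + \bar{l}(\check{x}_2)$, together with the vanishing of the cross-pairings $\langle \bar{l}(\check{x}_i), \bar{n}_j \rangle = 0$ for $i \neq j$ recorded in the displayed formula above; and the additivity of the fundamental E-homology class under disjoint union, Proposition~\ref{PropFundHomEClassDisUn}. I would then unwind the definitions of $\mathcal{S}^e$ in dimensions $4\ell+2$ and $4\ell+3$ separately and check that all the data entering them factor over connected components.

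For the $4\ell+2$-dimensional statement, write $N = N_1 \sqcup N_2$. First I would check that the groupoid $\mathcal{C}_N$ is canonically isomorphic to the product $\mathcal{C}_{N_1} \times \mathcal{C}_{N_2}$: an E-chain representing $[N]_E$ is a sum of E-chains representing $[N_i]_E$ extended by zero (Proposition~\ref{PropFundHomEClassDisUn}), a flat differential cocycle on $N$ is a pair of flat differential cocycles on the $N_i$, and a morphism (a cylinder $N \times I$, or an automorphism $N \times S^1$, with its data) decomposes as the disjoint union of the corresponding data on $N_i \times I$ (resp. $N_i \times S^1$). Next, additivity of $\bar{l}$ and vanishing of the cross-pairings show that the functor $\textcal{f} : \mathcal{C}_N \to \mathcal{H}$ is the external tensor product of $\textcal{f}_1 : \mathcal{C}_{N_1} \to \mathcal{H}$ and $\textcal{f}_2 : \mathcal{C}_{N_2} \to \mathcal{H}$: it sends each object to $\mathbb{C} = \mathbb{C} \otimes \mathbb{C}$ and a morphism $(\bar{n}, \check{y})$ to the scalar $\exp 2\pi i \langle \bar{l}(\check{y}_1), \bar{n}_1 \rangle \cdot \exp 2\pi i \langle \bar{l}(\check{y}_2), \bar{n}_2 \rangle$. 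Finally I would invoke the elementary fact that the space of invariant sections of such an external tensor product over a product of transitive groupoids is the tensor product of the spaces of invariant sections of the two factors, which yields $\mathcal{S}^e(N,y) = \mathcal{S}^e(N_1,y_1) \otimes \mathcal{S}^e(N_2,y_2)$. This is automatically compatible with Proposition~\ref{PropDimHilbSp}, since $a \in a_N + \mathsf{K}_N$ if and only if $a_i \in a_{N_i} + \mathsf{K}_{N_i}$ for both $i$.

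For the $4\ell+3$-dimensional statement I would argue as in the paragraph preceding the proposition. Choosing E-chain representatives $\bar{p}_i$ of $[M_i,\partial M_i]_E$ and extending by zero, Proposition~\ref{PropFundHomEClassDisUn} gives that $\bar{p} := \bar{p}_1 + \bar{p}_2$ represents $[M,\partial M]_E$, with $\partial \bar{p} = \partial \bar{p}_1 + \partial \bar{p}_2$. By Proposition~\ref{PropPQFTdDimWB}, $\mathcal{S}^e(M,x)$ is the invariant section of $\textcal{f}$ over $\mathcal{C}_{\partial M}$ taking the value $\exp 2\pi i \langle \bar{l}(\check{x}), \bar{p} \rangle$ at the object $(\partial \bar{p}, \check{x}|_{\partial M})$; additivity of $\bar{l}$ and the vanishing of the cross-pairings give $\exp 2\pi i \langle \bar{l}(\check{x}), \bar{p} \rangle = \exp 2\pi i \langle \bar{l}(\check{x}_1), \bar{p}_1 \rangle \cdot \exp 2\pi i \langle \bar{l}(\check{x}_2), \bar{p}_2 \rangle$. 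Under the canonical isomorphism $\mathcal{S}^e(\partial M, x|_{\partial M}) \simeq \mathcal{S}^e(\partial M_1, x_1|_{\partial M_1}) \otimes \mathcal{S}^e(\partial M_2, x_2|_{\partial M_2})$ from the first part, this exhibits $\mathcal{S}^e(M,x)$ as the pure tensor $\mathcal{S}^e(M_1,x_1) \otimes \mathcal{S}^e(M_2,x_2)$, completing the proof.

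I do not expect a serious obstacle: the content is entirely a matter of tracking the supports of cochains, differential cocycles and E-chains across connected components. The one point worth stating with care — and the closest thing to a genuine step — is the claim that invariant sections of an external tensor product of two line-valued functors over a product of transitive groupoids form the tensor product of the individual invariant-section spaces; this is what makes the Hermitian-line structures match, and it follows because an invariant section is determined by its value at a single object and invariance over a product groupoid is equivalent to invariance over each factor separately.
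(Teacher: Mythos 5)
Your proposal is correct and follows essentially the same route as the paper: the paper's proof consists precisely of the additivity $\langle \bar{l}(\check{x}), \bar{n} \rangle = \langle \bar{l}(\check{x}_1), \bar{n}_1 \rangle + \langle \bar{l}(\check{x}_2), \bar{n}_2 \rangle$ derived from Proposition \ref{PropFundHomEClassDisUn} and the vanishing cross-pairings, followed by the assertion that the definitions of $\mathcal{S}^e$ in both dimensions then yield the result immediately. You have merely spelled out the "immediately" (the factorization of the groupoid $\mathcal{C}_N$, of the functor $\textcal{f}$, and of the invariant-section spaces), which is a faithful elaboration rather than a different argument.
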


\noindent Hence the functor $\mathcal{S}^e$ is compatible with the monoidal structures on $\mathcal{B}^{4\ell+3,1}_{\mathfrak{W},{\rm flat}}$ and $\mathcal{H}$.

\paragraph{Compatibility with the dagger structures} A bar over a Hilbert space will denote the complex conjugate Hilbert space, and a bar over a vector in a Hilbert space will denote the complex conjugate vector in the complex conjugate Hilbert space. (We recall that a bar is also used to denote E-chains and E-cochains, hopefully the context will allow the reader to distinguish these two meanings unambiguously.) Let $(M,\check{x})$ be a $(\mathfrak{W},\mathsf{Z}^{\rm flat})$-manifold of dimension $4\ell+3$, possibly with boundary. Let $\bar{n}$ be an E-chain representing $[M,\partial M]_E$. Proposition \ref{PropFundHomEClassOrFl} ensures that $- \bar{n}$ is an E-chain representing $[-M,\partial (-M)]_E$. 
\begin{proposition}
\label{PropCompPreqDagStruct}
If $(N,\check{y})$ is a $4\ell+2$-dimensional $(\mathfrak{W},\mathsf{Z}^{\rm flat})$-manifold, then
\be
\mathcal{S}^e(-N,y) = \overline{\mathcal{S}^e(N,y)} \;.
\ee
Moreover, if $(M,\check{x})$ is as above, then
\be
\mathcal{S}^e(-M,x) = \overline{\mathcal{S}^e(M,x)} \;.
\ee
\end{proposition}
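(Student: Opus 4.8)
The plan is to reduce both assertions to the single input supplied by Proposition \ref{PropFundHomEClassOrFl}, namely that orientation reversal negates the fundamental E\nobreakdash-homology class: if $\bar{n}$ represents $[M,\partial M]_E$ then $-\bar{n}$ represents $[-M,\partial(-M)]_E$. Everything else is bookkeeping in the auxiliary groupoids and the spaces of invariant sections, and will rest on the elementary observation that an E\nobreakdash-chain is a homomorphism into the additive group $\mathbb{R}/\mathbb{Z}$, so that $\langle \bar{l}, -\bar{n}\rangle = -\langle\bar{l},\bar{n}\rangle$.

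First I would treat the $4\ell+2$-dimensional case. The manifold underlying $-N$ is that of $N$, and a differential cocycle representing $y$ on $N$ equally represents $y$ on $-N$; likewise the cylinder $(-N)\times I$ carries the orientation of $-(N\times I)$. Hence the assignment $(\bar{m},\check{x}')\mapsto(-\bar{m},\check{x}')$ on objects and $(\bar{n},\check{y})\mapsto(-\bar{n},\check{y})$ on morphisms defines an isomorphism of groupoids $\mathcal{C}_N \simeq \mathcal{C}_{-N}$: the boundary relations $\partial\bar{n}=\bar{m}_1-\bar{m}_0$ survive negation, and the Lagrangian $\bar{l}(\check{y})$ attached to a morphism is unchanged since $\check{y}$ is. Under this isomorphism the functor $\textcal{f}_{-N}$ assigns to a morphism the scalar $\exp 2\pi i\langle\bar{l}(\check{y}),-\bar{n}\rangle=\exp\!\left(-2\pi i\langle\bar{l}(\check{y}),\bar{n}\rangle\right)$, i.e. $\textcal{f}_{-N}$ is the complex conjugate functor $\overline{\textcal{f}_N}$. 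Consequently $v\mapsto\bar{v}$, pointwise complex conjugation of sections read through the groupoid isomorphism, is an antilinear isomorphism from the space of invariant sections of $\textcal{f}_N$ onto that of $\textcal{f}_{-N}$, which is precisely the statement $\mathcal{S}^e(-N,y)=\overline{\mathcal{S}^e(N,y)}$.

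For the $4\ell+3$-dimensional case with boundary I would argue directly with the invariant section produced in the proof of Proposition \ref{PropPQFTdDimWB}. Choosing an E\nobreakdash-chain $\bar{p}$ representing $[M,\partial M]_E$ and a cocycle representative $\check{w}'$ of $w$, the vector $\mathcal{S}^e(M,w)$ is the invariant section of $\textcal{f}_{\partial M}$ given by $(\partial\bar{p},\,\check{w}'|_{\partial M})\mapsto\exp 2\pi i\langle\bar{l}(\check{w}'),\bar{p}\rangle$. By Proposition \ref{PropFundHomEClassOrFl}, $-\bar{p}$ represents $[-M,\partial(-M)]_E$ with $\partial(-\bar{p})=-\partial\bar{p}$, so $\mathcal{S}^e(-M,w)$ is the invariant section of $\textcal{f}_{-\partial M}$ given by $(-\partial\bar{p},\,\check{w}'|_{\partial M})\mapsto\exp 2\pi i\langle\bar{l}(\check{w}'),-\bar{p}\rangle=\overline{\exp 2\pi i\langle\bar{l}(\check{w}'),\bar{p}\rangle}$; under the identification $\mathcal{S}^e(-\partial M,w|_{\partial M})=\overline{\mathcal{S}^e(\partial M,w|_{\partial M})}$ from the previous paragraph this is exactly the complex conjugate of the section defining $\mathcal{S}^e(M,w)$, giving $\mathcal{S}^e(-M,w)=\overline{\mathcal{S}^e(M,w)}$.

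The only genuinely delicate point, and where I would be most careful, is the compatibility of orientation conventions: that the product orientation on $(-N)\times I$ agrees with the orientation reversal of $N\times I$, that $\partial(-M)=-\partial M$ for the induced-boundary convention in use, and correspondingly that the E\nobreakdash-chain attached to a morphism of $\mathcal{C}_{-N}$ (a cylinder) is literally $-\bar{n}$ for the corresponding morphism of $\mathcal{C}_N$. These are fixed by the conventions of Appendices \ref{ApBordCatWMan} and \ref{SecEth} but must be invoked explicitly; once they are pinned down, both assertions are formal consequences of $\langle\bar{l},-\bar{n}\rangle=-\langle\bar{l},\bar{n}\rangle$ together with $[-\,\cdot\,]_E=-[\,\cdot\,]_E$.
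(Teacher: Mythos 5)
Your proposal is correct and follows essentially the same route as the paper: the paper's (one-line) proof likewise reduces everything to Proposition \ref{PropFundHomEClassOrFl} and the observation that the factors $\exp 2\pi i\langle \bar{l},\bar{n}\rangle$ become their complex conjugates when $\bar{n}$ is replaced by $-\bar{n}$. Your version merely spells out the groupoid bookkeeping that the paper leaves implicit.
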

\begin{proof}
The factors $\exp \langle \bar{l}(\check{x}), \bar{n} \rangle$ involved in the construction of $\mathcal{S}^e(N,y)$ and $\mathcal{S}^e(M,x)$ turn into their complex conjugate under an orientation flip.
\end{proof}

\paragraph{Gluing} We now show that $\mathcal{S}^e$ satisfies the 3rd condition of Proposition \ref{PropAltCondCheckFunc}. 
Let $(M,\check{x})$, $(N,\check{x}|_N)$ and $(M_N,\pi^\ast(\check{x}))$ be $(\mathfrak{W},\mathsf{Z}^{\rm flat})$-manifolds as in Proposition \ref{PropAltCondCheckFunc}. Recall that the gluing map $\pi: M_N \rightarrow M$ identifies the points of the components $N$ and $-N$ of $\partial M_N$ and is bijective away from $N \sqcup -N$. We write $\check{y} := \check{x}|_N$, $\check{z} := \pi^\ast(\check{x})$. 
\begin{proposition}
\label{PropGluPreq}
The 3rd condition of Proposition \ref{PropAltCondCheckFunc} holds:
\be
\label{Eq3rdCondSe}
{\rm Tr}_{\mathcal{S}^e(N,y)}(\mathcal{S}^e(M_N,z)) = \mathcal{S}^e(M,x) \;.
\ee
\end{proposition}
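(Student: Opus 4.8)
The plan is to unwind the definitions of $\mathcal{S}^e$ on $(4\ell+2)$- and $(4\ell+3)$-dimensional manifolds from Section \ref{SecSubPreqTh} and to match representatives across the gluing map $\pi : M_N \to M$. The two structural facts I would lean on are: (i) the fundamental $E$-homology class is natural under gluing, i.e. if $\bar p$ is an $E$-chain representing $[M_N,\partial M_N]_E$ whose boundary $\partial\bar p$ restricts on the two copies $N$ and $-N$ of $\partial M_N$ to $E$-chains that are exactly negatives of one another, then $\pi_\ast(\bar p)$ represents $[M,\partial M]_E$ and $\partial(\pi_\ast\bar p)$ is the $\partial M$-part of $\partial\bar p$ (this should follow from the properties of the fundamental $E$-chain established in Appendix \ref{SecEth}, together with excision); and (ii) the Lagrangian $\bar l$ is natural under pullback, $\bar l(\pi^\ast\check x')=\pi^\ast\bar l(\check x')$, provided one chooses on $M_N$ the cochain $\hat\eta_\Lambda$ to be $\pi^\ast$ of the chosen one on $M$, which is legitimate because the $\mathfrak{W}$-structure of $M_N$ is pulled back. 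Fact (ii) gives $\langle\bar l(\pi^\ast\check x'),\bar p\rangle=\langle\bar l(\check x'),\pi_\ast\bar p\rangle$ in $\mathbb{R}/\mathbb{Z}$.

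First I would set up representatives. Write $z=\pi^\ast x$, $y=x|_N$. Pick a cocycle representative $\check x'\in\mathsf{Z}^{\rm flat}_M$ of $x$ and an $E$-chain $\bar p$ representing $[M_N,\partial M_N]_E$ with $\partial\bar p=\bar m_\partial+\bar n+(-\bar n)$, where $\bar m_\partial$ is supported on $\partial M$, $\bar n$ on $N$ and $-\bar n$ on $-N$; such a $\bar p$ exists by transitivity of the groupoid $\mathcal{C}_N$ together with additivity of fundamental $E$-chains under disjoint union (Proposition \ref{PropFundHomEClassDisUn}) and orientation reversal (Proposition \ref{PropFundHomEClassOrFl}) — one first fixes $\bar n$ on $N$, takes $-\bar n$ on $-N$ and fills in. Then $\check z':=\pi^\ast\check x'$ represents $z$, and by fact (i) the chain $\bar q:=\pi_\ast\bar p$ represents $[M,\partial M]_E$ with $\partial\bar q=\bar m_\partial$.

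Next I would evaluate the left-hand side of \eqref{Eq3rdCondSe}. By Proposition \ref{PropPQFTdDimWB}, $\mathcal{S}^e(M_N,z)$ is the invariant section of $\textcal{f}$ over $\mathcal{C}_{\partial M_N}\simeq\mathcal{C}_{\partial M}\times\mathcal{C}_N\times\mathcal{C}_{-N}$ whose value at the object $\big((\bar m_\partial,\check x'|_{\partial M}),(\bar n,\check x'|_N),(-\bar n,\check x'|_N)\big)$ is $\exp 2\pi i\,\langle\bar l(\check z'),\bar p\rangle$. Fix a unit vector $u$ of the Hermitian line $\mathcal{S}^e(N,y)$ normalized so that $u(\bar n,\check x'|_N)=1$; under the dagger identification $\mathcal{S}^e(-N,y)=\overline{\mathcal{S}^e(N,y)}$ (which, by the proof of Proposition \ref{PropCompPreqDagStruct}, negates $E$-chains and conjugates the phase), the conjugate vector $\bar u$ is the unit vector of $\mathcal{S}^e(-N,y)$ with $\bar u(-\bar n,\check x'|_N)=1$. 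The partial trace ${\rm Tr}_{\mathcal{S}^e(N,y)}$ contracts the $\mathcal{S}^e(N,y)\otimes\overline{\mathcal{S}^e(N,y)}$ slots against $u$ and $\bar u$; because these normalizations are matched on the objects $(\bar n,\check x'|_N)$ and $(-\bar n,\check x'|_N)$, the contraction produces the scalar $1$. Hence ${\rm Tr}_{\mathcal{S}^e(N,y)}(\mathcal{S}^e(M_N,z))$ is the section of $\textcal{f}$ over $\mathcal{C}_{\partial M}$ taking the value $\exp 2\pi i\,\langle\bar l(\check z'),\bar p\rangle=\exp 2\pi i\,\langle\bar l(\check x'),\bar q\rangle$ at $(\bar m_\partial=\partial\bar q,\check x'|_{\partial M})$. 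By the very definition of $\mathcal{S}^e(M,x)$ (Proposition \ref{PropPQFTdDimWB}), applied with the representative pair $(\bar q,\check x')$, this section is exactly $\mathcal{S}^e(M,x)$, which proves \eqref{Eq3rdCondSe}.

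The delicate part is the bookkeeping in the third step: one must keep careful track of how the fundamental $E$-chain and the Lagrangian split across the three boundary pieces, and use that $\textcal{f}$ restricted to $-N$ is the complex-conjugate functor with $E$-chains negated, so that the $N$-contribution to $\exp 2\pi i\,\langle\bar l(\check z'),\bar p\rangle$ is genuinely the reciprocal of its $(-N)$-contribution and drops out of the trace together with the $u\otimes\bar u$ normalization. One should also confirm that the $E$-chain $\bar p$ can indeed be taken with $\partial\bar p|_{-N}=-\,\partial\bar p|_N$, which I expect to follow from transitivity of $\mathcal{C}_N$ and the additivity and orientation properties of fundamental $E$-chains. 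Finally, I would note that the degenerate possibility of Proposition \ref{PropDimHilbSp} — $\mathcal{S}^e(N,y)$ or $\mathcal{S}^e(\partial M,x|_{\partial M})$ being the zero space — does not arise here: after cutting, $N$ (and $\partial M$) occurs as a boundary component of the $(4\ell+3)$-manifold $M_N$, and together with the admissibility of the Wu structures on $4\ell+2$-manifolds (Section \ref{SecActTori}) and the Lagrangian-subgroup property of $(4\ell+3)$-manifolds (Appendix \ref{SecPerPair}) this forces the corresponding global anomaly to vanish, so all the Hilbert spaces in play are honest lines and the argument above applies verbatim.
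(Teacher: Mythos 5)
Your argument is correct and follows essentially the same route as the paper's proof: push forward an E-chain representative of $[M_N,\partial M_N]_E$ along the gluing map, use that the Lagrangian of $\pi^\ast\check{x}$ pairs with it the same way $\bar{l}(\check{x})$ pairs with the pushforward, and observe that in the resulting matched boundary trivializations the partial trace is just the identity $\mathbb{C}\otimes\mathbb{C}\to\mathbb{C}$. Your extra bookkeeping (the splitting $\partial\bar{p}=\bar{m}_\partial+\bar{n}+(-\bar{n})$ and the explicit contraction against $u\otimes\bar{u}$) only makes explicit what the paper leaves implicit.
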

\begin{proof}
Let us choose E-chains $\bar{n}$ and $\bar{p} := \pi_\ast(\bar{n})$ representing $[M_N,\partial M_N]_E$ and $[M,\partial M]_E$. Together with the differential cocycles $\check{x}|_{\partial M}$ and $\check{z}|_{\partial M_N}$, their restriction to the boundary provide trivializations of $\mathcal{S}^e(\partial M, x|_{\partial M})$ and $\mathcal{S}^e(\partial M_N, z|_{\partial M_N})$. Moreover we have by definition
\be
\langle \bar{l}(\check{x}), \bar{p} \rangle = \langle \bar{l}(\check{z}), \bar{n} \rangle \;.
\ee
These two quantities are respectively $\mathcal{S}^e(M,x)$ and $\mathcal{S}^e(M_N,z)$ in the trivializations above. As the trace from $\mathbb{C} \simeq \mathbb{C} \otimes \mathbb{C} \stackrel{{\rm Tr}_{\mathbb{C}}}{\rightarrow} \mathbb{C}$ is the identity map, we obtain \eqref{Eq3rdCondSe}.
\end{proof}

\noindent The results of this section combined with Proposition \eqref{PropAltCondCheckFunc} yields
\begin{theorem}
\label{ThPreqFTFunc}
$\mathcal{S}^e: \mathcal{B}_{\mathfrak{W},{\rm flat}}^{4\ell+3,1} \rightarrow \mathcal{H}$ is a field theory functor.
\end{theorem}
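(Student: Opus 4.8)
The proof is an assembly of the results established earlier in this section, organized through the criterion of Proposition \ref{PropAltCondCheckFunc}; essentially no new computation is needed.

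The plan is first to observe that $\mathcal{S}^e$ provides exactly the data required by that proposition. To every closed $4\ell+2$-dimensional $(\mathfrak{W},\mathsf{Z}^{\rm flat})$-manifold $(N,\check{y})$ it assigns the finite-dimensional Hilbert space $\mathcal{S}^e(N,y)$ of invariant sections of the functor $\textcal{f}:\mathcal{C}_N\to\mathcal{H}$: by Proposition \ref{PropDimHilbSp} this is a Hermitian line when $a\in a_N+\mathsf{K}_N$ and the zero Hilbert space otherwise, so in either case an object of $\mathcal{H}$. To every $4\ell+3$-dimensional $(\mathfrak{W},\mathsf{Z}^{\rm flat})$-manifold $(M,\check{w})$ it assigns the vector $\mathcal{S}^e(M,w)\in\mathcal{S}^e(\partial M,w|_{\partial M})$ constructed in Proposition \ref{PropPQFTdDimWB}, the partition function of the closed case being recovered as the special case $\partial M=\emptyset$ with $\mathcal{S}^e(\emptyset)=\mathbb{C}$. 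All of these assignments depend only on differential cohomology classes, as follows from Proposition \ref{PropGaugInvLag}, so $\mathcal{S}^e$ is a bona fide assignment in the sense of Proposition \ref{PropAltCondCheckFunc}.

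Then I would verify the three conditions of Proposition \ref{PropAltCondCheckFunc} in turn. Condition (1), compatibility with the monoidal structures, is precisely Proposition \ref{PropCompPreqThMonoidStruct}. Condition (2), compatibility with the dagger structures, is precisely Proposition \ref{PropCompPreqDagStruct}. Condition (3), the gluing identity ${\rm Tr}_{\mathcal{S}^e(N,y)}\mathcal{S}^e(M_N,z)=\mathcal{S}^e(M,x)$, is Proposition \ref{PropGluPreq}. Invoking Proposition \ref{PropAltCondCheckFunc} then yields that $\mathcal{S}^e$ is a field theory functor from $\mathcal{B}_{\mathfrak{W},{\rm flat}}^{4\ell+3,1}$ to $\mathcal{H}$, which is the assertion of the theorem.

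The only point where something genuinely has to be tracked rather than quoted — and hence the main (mild) obstacle — is the coherence of trivializations underlying the gluing step: one must make sure that the E-chain $\bar{p}=\pi_\ast(\bar{n})$ together with the boundary differential cocycle simultaneously trivialize $\mathcal{S}^e(\partial M,x|_{\partial M})$ and the relevant tensor factors of $\mathcal{S}^e(\partial M_N,z|_{\partial M_N})$, so that the tautological equality $\langle\bar{l}(\check{x}),\bar{p}\rangle=\langle\bar{l}(\check{z}),\bar{n}\rangle$ of elements of $\mathbb{R}/\mathbb{Z}$ translates into the asserted equality of a trace of a vector with a vector. This is where the subtlety that $\bar{l}$ is an absolute rather than relative E-cocycle is absorbed: it is exactly the reason the boundary Hilbert space was built from the groupoid $\mathcal{C}_N$ of E-chain representatives instead of a single choice, and the reason the construction is anomaly-free only on the locus identified in Proposition \ref{PropPrelPreqThHermLine}. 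Everything else is formal once Proposition \ref{PropAltCondCheckFunc} is granted.
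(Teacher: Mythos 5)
Your proposal is correct and follows exactly the paper's route: the paper's proof of Theorem \ref{ThPreqFTFunc} is precisely the one-line assembly of Propositions \ref{PropCompPreqThMonoidStruct}, \ref{PropCompPreqDagStruct} and \ref{PropGluPreq} via the criterion of Proposition \ref{PropAltCondCheckFunc}. Your added remark about the coherence of the trivializations in the gluing step accurately identifies where the content of Proposition \ref{PropGluPreq} lies.
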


\section{Partition function}

\label{SecPartFunc}

We now proceed to the construction of the field theory functor $\mathcal{R}$ out of the prequantum theory of Section \ref{SecPreq}. $\mathcal{R}$ has a $\mathbb{C}$-valued truncation to closed $4\ell + 3$-dimensional manifolds, the partition function, which we describe in the present section.

\subsection{Definition}

Given a closed $(\mathfrak{W},\mathsf{Z}^{\rm flat})$-manifold $(M,\check{x})$ of dimension $4\ell+3$, the partition function of the quantum theory $\mathcal{R}$ on $M$ is obtained by summing the exponentiated action over the coset of $x + \mathsf{E}_M$, with a normalization factor given by $\mu_M$:
\be
\label{EqPartFuncThR}
\mathcal{R}(M,x) = \mu_M \sum_{w \in x + \mathsf{E}_M} \mathcal{S}^e(M,w) \in \mathbb{C} \;.
\ee
The partition function depends only on the differential cohomology class $x$ of $\check{x}$ because $\mathcal{S}^e(M,x)$ has this property. Moreover, $\mathcal{R}(M,x)$ depends by construction only on the equivalence class of $x$ in $\mathsf{Y}^{\rm flat}_M/\mathsf{E}_M$.

Proposition \ref{PropActQRLP} states that the action $S^e$ is the pull-back through $\beta$ of a quadratic refinement $q_M$ of the linking pairing on $H_{\rm tors}^{2\ell+2}(M; \bar{\Lambda})$. We can therefore rewrite the partition function \eqref{EqPartFuncThR} as follows:
\be
\mathcal{R}(M,x) = \mu_M |\mathsf{C}_M| \sum_{y \in \beta(x) + \mathsf{K}_M} \exp 2\pi i \, q_M(y) \;.
\ee

\subsection{Partition function anomaly}

\label{SecPartFuncAnom}

Recall that $\mathsf{K}_M$ is the radical of the linking pairing on $H_{\rm tors}^{2\ell+2}(M; \bar{\Lambda})$, so $q_M|_{\mathsf{K}_M}$ is a character of $\mathsf{K}_M$. If this character is trivial,
\be
\mathcal{R}(M,x) = \mu_M |\mathsf{E}_M| \exp 2\pi i \, q_M(x)\;.
\ee
If $q_M|_{\mathsf{K}_M}$ is non-trivial, the partition function of the prequantum theory is not invariant under the action of the discrete symmetry $\mathsf{E}_M$. In other words, this symmetry is anomalous. As a result, the partition function of the gauged theory $\mathcal{R}$ vanishes identically in this case.

As $q_M$ is $\mathbb{Z}_2$-valued on $\mathsf{K}_M$, it can be non-trivial only if the order of $\mathsf{K}_M$ is even. This is possible only if $\Gamma := \Lambda^\ast/\Lambda$ has even order, see \eqref{DiagCohomGroups}. Therefore the theories with $|\Gamma|$ odd are free of this anomaly. Moreover, Proposition \ref{PropExWuStructqMTame} shows that there exist always admissible Wu structures on $M$, with which the theory is anomaly-free.

We will see that the partition function anomaly described here has a Hamiltonian counterpart on $4\ell+2$-dimensional $\mathfrak{W}$-manifolds.

\section{Wilson operators}

\label{SecWilHooftOp}

Given a closed $4\ell+2$-dimensional $(\mathfrak{W},\mathsf{Z}^{\rm flat})$-manifold $(M,\check{w})$, the prequantum theory yields a Hilbert space $\mathcal{S}^e(M,w)$. Proposition \ref{PropDimHilbSp} shows that $\mathcal{S}^e(M,w)$ is a Hermitian line if $\beta(w) \in a_M + \mathsf{K}_M$, otherwise it is the trivial Hilbert space. Let us consider the vector space
\be
\label{EqDefVecSpV}
V_{M,w} := \bigoplus_{x \in w + \mathsf{G}_M} \mathcal{S}^e(M,x) \;.
\ee
If $\beta(w) \notin a_M + \mathsf{K}_M$, ${\rm dim}(V_{M,v}) = 0$. Otherwise ${\rm dim}(V_{M,w}) = |\mathsf{G}_M|$. In the following, we will always assume that $\beta(w) \in a_M + \mathsf{K}_M$.

$\mathsf{G}_M$ is endowed with a alternating pairing (see Appendix \ref{SecPerPair}) and has therefore an associated Heisenberg extension $\mathsf{H}_M$ (see Appendix \ref{AppHeisGr2Gr}). The aim of the present section is to show that $(M,\hat{w})$ determines canonically a representation of $\mathsf{H}_M$ on $V_{M,w}$.

\subsection{Homomorphisms from cylinders}

In this section, we show that certain non-flat differential cocycles on cylinders define operators on $V_{M,w}$ associated to each element of $H^{2\ell+1}_{\rm free}(M; \bar{\Lambda}^\ast)$. These operators can be thought of as the Wilson operators for the background gauge field $\check{x}$.

Let $\check{x} = (\hat{a}, \hat{h}, 0) \in \beta^{-1}(a_M + \mathsf{K}_M)$. Let $\Omega_{\bar{\Lambda}^\ast}^{2\ell+1}(M; \mathscr{V})$ be the subgroup of closed differential forms whose pairing with classes in $H_{2\ell+1}(M; \bar{\Lambda})$ is integer-valued.
Let $\hat{j} \in \Omega_{\bar{\Lambda}^\ast}^{2\ell+1}(M; \mathscr{V})$ and write $\check{j}$ for the flat topologically trivial differential cocycle $(0, \hat{j},0)$. Let $I := [0,1]$ and define the path $p_{\check{x},\hat{j}}: I \rightarrow \check{Z}^{2\ell+2}(M;\bar{\Lambda})$, 
\be
\label{EqPathDiffCoc}
p_{\check{x},\hat{j}}(t) = \check{x} + t \check{j} = (\hat{a}, \hat{h} + t\hat{j},0) \;, \quad t \in I \;,
\ee 
interpolating linearly between the differential cocycles $\check{x}$ and $\check{x} + \check{j}$. Here $t$ is a coordinate on $I$ running from $0$ to $1$ and that has all its derivative vanishing at the endpoints. $p_{\check{x},\hat{j}}$ defines a non-flat differential cocycle $\check{y}_{\check{x},\hat{j}} \in \check{Z}^{2\ell+2}(M \times I)$ given by 
\be
\label{EqDefCocYOnCyl}
\check{y}_{\check{x},\hat{j}} = (\hat{a}, \hat{h} + \hat{j} \wedge \hat{t}, -\hat{j} \wedge d\hat{t}) \;.
\ee
For notational consistency, we write $\hat{t} = t$ for the degree 0 $\mathbb{R}$-valued differential form on $M \times I$ determined by the function $t$. We also write all the wedge products explicitly.

Let $j$ be the cohomology class of $\hat{j}$ in $H^{2\ell+1}_{\rm free}(M;\bar{\Lambda}^\ast)$. There is an action of $\Omega_{\bar{\Lambda}^\ast}^{2\ell+1}(M; \mathscr{V})$ on $\mathsf{Z}^{\rm flat}_M$ given by picturing the differential form $\hat{j}$ as a differential cocycle $\check{j}$ as above and using the addition of differential cocycles. The action passes to an action of $H^{2\ell+1}_{\rm free}(M;\bar{\Lambda}^\ast)$ on $\mathsf{Y}^{\rm flat}_M$, which we write additively: $x + j \in \mathsf{Y}^{\rm flat}_M$. By Corollary \ref{CorHomFromManWBoundWNFDiffCoc}, $\mathcal{S}^e(M \times I, \check{y}_{\check{x},\hat{j}})$ is an isomorphism from $\mathcal{S}^e(M, x)$ to $\mathcal{S}^e(M,x + j)$. If we change $\check{x}$ by the differential of a flat differential cochain, $\check{y}_{\check{x},\hat{j}}$ changes by the differential of a flat differential cochain, hence $\mathcal{S}^e(M \times I, \check{y}_{\check{x},\hat{j}})$ depends only on the differential cohomology class $x$.
\begin{proposition}
\label{PropDepQCohomClassj}
$\mathcal{S}^e(M \times I, \check{y}_{\check{x},\hat{j}})$ depends only on the cohomology class $j$.
\end{proposition}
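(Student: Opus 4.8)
The plan is to reduce the statement to the invariance of the construction of $\mathcal{S}^e$ under a change of cocycle representative, the only genuine work being to control the fact that replacing $\hat j$ by a cohomologous form also changes the \emph{outgoing} boundary value of the cylinder cocycle. First I would observe that two forms $\hat j,\hat j'\in\Omega^{2\ell+1}_{\bar\Lambda^\ast}(M;\mathscr V)$ representing the same class $j\in H^{2\ell+1}_{\rm free}(M;\bar\Lambda^\ast)$ necessarily have the same de Rham class in $H^{2\ell+1}(M;\mathscr V)$: indeed $H^{2\ell+1}_{\rm free}(M;\bar\Lambda^\ast)$ injects into $H^{2\ell+1}(M;\bar\Lambda^\ast)\otimes\mathbb R=H^{2\ell+1}(M;\mathscr V)$, and every form in $\Omega^{2\ell+1}_{\bar\Lambda^\ast}$ represents a class in the image of this map. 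Hence $\hat j'=\hat j+d\hat k$ for some $\hat k\in\Omega^{2\ell}(M;\mathscr V)$, by the Poincar\'e lemma for the local system $\mathscr V$. In particular $x+j=x+j'$ in $\mathsf Y^{\rm flat}_M$, so $\mathcal{S}^e(M\times I,\check y_{\check x,\hat j})$ and $\mathcal{S}^e(M\times I,\check y_{\check x,\hat j'})$ are both elements of $\mathrm{Hom}\bigl(\mathcal{S}^e(M,x),\mathcal{S}^e(M,x+j)\bigr)$, and asking whether they coincide is meaningful.

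Next I would relate the two cylinder cocycles. A direct substitution in \eqref{EqDefCocYOnCyl} gives $\check y_{\check x,\hat j'}=\check y_{\check x,\hat j}+d\check c$ with $\check c=(0,0,(d\hat k)\wedge\hat t)\in\check C^{2\ell+1}(M\times I;\bar\Lambda)$; crucially, $\check c$ vanishes on the incoming component $M\times\{0\}$ (since $\hat t$ does), while on the outgoing component $M\times\{1\}$ it restricts to $(0,0,d\hat k)$, whose differential $(0,d\hat k,0)=d(0,-\hat k,0)$ is the coboundary of a \emph{flat} cochain relating the two representatives $\check x+\check j$ and $\check x+\check j'$ of the one class $x+j$. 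Running the argument through the proof of Proposition \ref{PropPQFTdDimWB}: pick an $E$-chain $\bar p$ of $[M\times I,\partial]_E$ with $\partial\bar p$ representing the two boundary fundamental classes; then $\mathcal{S}^e(M\times I,\check y_{\check x,\hat j'})$ is represented by $\exp 2\pi i\langle\bar l(\check y_{\check x,\hat j'}),\bar p\rangle$ in the boundary trivializations determined by $\bar p$ and by the cocycles $\check x$ and $\check x+\check j'$. Using $\check y_{\check x,\hat j'}=\check y_{\check x,\hat j}+d\check c$ together with the sum formula \eqref{EqLagSum}, the $E$-cochain group law \eqref{EqGrpLawCocModETh} and the higher-cup-product relation \eqref{EqRelHighCupProd} — exactly the manipulations of Proposition \ref{PropGaugInvLag} — one rewrites $\bar l(\check y_{\check x,\hat j'})=\bar l(\check y_{\check x,\hat j})\boxplus d\bar c$ up to terms that vanish when paired with $\bar p$ for degree reasons (the $I$-direction forms $(d\hat k)\wedge\hat t$ carry $M$-degree exceeding $\dim M$, and $(d\hat k)\wedge d\hat t$ is exact), so that $\langle d\bar c,\bar p\rangle=\langle\bar c,\partial\bar p\rangle$. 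Since $\bar c$ is built from $\check c$ and $\check c$ vanishes on $M\times\{0\}$, this boundary term localises on $M\times\{1\}$, where it reproduces precisely the transition factor between the trivializations of the Hermitian line $\mathcal{S}^e(M,x+j)$ given by $\check x+\check j$ and by $\check x+\check j'$.

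The step I expect to be the main obstacle is precisely this last identification of the leftover term with the transition factor, and it is where the exactness of $d\hat k$ is essential: for a closed but non-exact $\hat k$-type contribution the corresponding cylinder would implement the genuinely nontrivial action of a class in $H^{2\ell+1}_{\rm free}(M;\bar\Lambda^\ast)$ and there would be nothing to prove. Once it is established, one concludes as follows: $\mathcal{S}^e(M,x+j)$ is defined as the space of sections of $\textcal{f}$ that are invariant under all morphisms of the groupoid $\mathcal{C}_M$, and passing from the $\check x+\check j$-trivialization to the $\check x+\check j'$-trivialization is exactly such a morphism (the flat cylinder over $M$ realizing $\check x+\check j\to\check x+\check j'$); an invariant section assigns coherent values across these trivializations, so the abstract morphism $\mathcal{S}^e(M\times I,\check y_{\check x,\hat j'})$ is unchanged. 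One may equivalently package this as a gluing argument via Theorem \ref{ThPreqFTFunc}, factoring $\check y_{\check x,\hat j'}$ through $\check y_{\check x,\hat j}$ followed by the flat cylinder cocycle that is an arrow of $\mathcal{C}_M$, and using the tautology that $\mathcal{S}^e$ of such an arrow is the identity; either way one obtains $\mathcal{S}^e(M\times I,\check y_{\check x,\hat j'})=\mathcal{S}^e(M\times I,\check y_{\check x,\hat j})$, which is the assertion of the proposition.
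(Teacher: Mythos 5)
Your reduction to $\hat j'=\hat j+d\hat k$ and the identity $\check y_{\check x,\hat j'}=\check y_{\check x,\hat j}+d\check c$ with $\check c=(0,0,(d\hat k)\wedge\hat t)$ are correct, but the heart of the proof is missing. Since $\check c$ is not flat, the variation of the Lagrangian is not covered by Proposition \ref{PropGaugInvLag}; expanding with \eqref{EqLagSum} produces cross terms such as (up to signs) $\tfrac12\,\hat a\cup((d\hat k)\wedge\hat t)$, $\tfrac12\,(\hat h+\hat j\wedge\hat t)\cup((d\hat k)\wedge d\hat t)$ and $\tfrac12\,((d\hat k)\wedge\hat t)\cup(\hat j\wedge d\hat t)$, and these do \emph{not} vanish ``for degree reasons'': the $M$-degree count kills wedge products of differential forms, not cup products in which one factor ($\hat a$, $\hat h$) is a singular cochain, and the exactness of $(d\hat k)\wedge d\hat t$ does not make a cup product exact. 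These terms carry the factor $\tfrac12$ that is the very source of the gauge non-invariance in Remark \ref{RemLagNotGaugeInv}, so they cannot be waved away; they must be recombined pairwise via the cocycle relation $\hat a=[\check y]_\omega-d[\check y]_h$ and the Leibniz rule into exact terms (whose boundary restrictions at $t=1$ do \emph{not} vanish) plus a piece of the form \eqref{EqPullBackELag} pulled back from $M$, which is killed only by Proposition \ref{PropActVanishPullBackTorus} --- a tool you never invoke. On top of this, the identification of the surviving boundary residue with the transition factor of the line $\mathcal S^e(M,x+j)$ between the representatives $\check x+\check j$ and $\check x+\check j+d\check k$ is itself a second nontrivial computation (the value of the functor of $\mathcal C_M$ on the corresponding flat cylinder), which you explicitly flag as ``the main obstacle'' and do not perform.

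The paper sidesteps both problems with a substitution you are missing: it replaces $\check y_{\check x,\hat j+d\hat k}$ by the equivalent cocycle $\check y'=(\hat a,\hat h+\hat j\wedge\hat t-\hat k\wedge dt,-(\hat j+d\hat k)\wedge dt)$, whose restrictions to $M\times\{0\}$ and $M\times\{1\}$ coincide with those of $\check y_{\check x,\hat j}$. The comparison then becomes the evaluation of the exponentiated action on the closed torus $M\times S^1$ --- no boundary terms and no transition factors --- and the explicit cochain manipulation reduces the torus Lagrangian to the pulled-back form \eqref{EqPullBackELag}, after which Proposition \ref{PropActVanishPullBackTorus} concludes. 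Either adopt that substitution or carry out the two deferred computations explicitly; as written, the proof is not complete.
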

\begin{proof}
Any differential form in the same class as $\hat{j}$ in $H^{2\ell+1}_{\rm free}(M;\bar{\Lambda}^\ast)$ can be written $\hat{j} + d\hat{k}$ for some $\hat{k} \in \Omega^{2\ell}(M, \mathscr{V})$. Then
\be
\check{y}_{\check{x},\hat{j}+d\hat{k}} = (\hat{a}, \hat{h} + (\hat{j} + d\hat{k}) \wedge \hat{t}, -(\hat{j}  + d\hat{k}) \wedge dt) \;.
\ee
Define
\be
\check{y}' = (\hat{a}, \hat{h} + \hat{j} \wedge \hat{t} - \hat{k} \wedge dt, -(\hat{j} + d\hat{k}) \wedge dt ) \;.
\ee
Then $\check{y}'$ and $\check{y}_{\check{x},\hat{j}+d\hat{k}}$ differ by an exact differential cocycle and Corollary \ref{CorHomFromManWBoundWNFDiffCoc} shows that $\mathcal{S}^e(M \times I,\check{y}_{\check{x},\hat{j}+d\hat{k}})$ and $\mathcal{S}^e(M \times I,\check{y}')$ are equal as isomorphisms from $\mathcal{S}^e(M, x)$ to $\mathcal{S}^e(M,x + j)$. We can compute the composition $\mathcal{S}^e(M \times I,\check{y}') \circ (\mathcal{S}^e(M \times I,\check{y}_{\check{x},\hat{j}}))^{-1}$ by gluing the corresponding cylinders into a torus $M \times S^1$, with an orientation flip on the second one, and evaluating the exponentiated action $S^e$ on the corresponding differential cocycle. Up to exact terms, the latter is
\be
\check{y}_{S^1} = (\hat{a}, \hat{h} + \hat{j} \wedge \hat{t}_{S^1} - \hat{k} \wedge \hat{g}^{(1)}, -d\hat{k} \wedge \hat{g}^{(1)} - \hat{j}\wedge (\hat{g}^{(1)} - \hat{g}^{(2)}))\;.
\ee
$\hat{t}_{S^1}$ is the smooth function obtained from the functions $\hat{t}$ on the two cylinders after gluing. $\hat{g}^{(1)}$ et $\hat{g}^{(2)}$ are the cocycle representatives of the generator of $H^1(S^1;\mathbb{Z})$ restricting to $d\hat{t}$ on each interval respectively. 

We can compute the Lagrangian
\begin{align}
\label{EqLagActCyl}
\bar{l}(\check{y}_{S^1}) = \: & \left(\hat{l}(\check{y}_{S^1}), (\hat{a})_2 \right) \;, \notag \\
\hat{l}(\check{y}_{S^1}) = \: & \frac{1}{2} \hat{a} \cup (\hat{h} + \hat{j} \wedge \hat{t}_{S^1} - \hat{k} \wedge \hat{g}^{(1)} - \hat{\eta}) \\ 
& + \frac{1}{2} (\hat{h} + \hat{j} \wedge \hat{t}_{S^1} - \hat{k} \wedge \hat{g}^{(1)}) \cup (-d\hat{k} \wedge \hat{g}^{(1)} - \hat{j}\wedge \hat{g}^{(1)} + \hat{j} \wedge \hat{g}^{(2)}) 
+ H^\wedge_\cup(\hat{\omega}, \hat{\omega})   \notag
\end{align}
where we wrote $\hat{\omega} = -d\hat{k} \wedge \hat{g}^{(1)} - \hat{j} \wedge (\hat{g}^{(1)} - \hat{g}^{(2)})$. As $\hat{a} = -d\hat{h}$, 
\be
\frac{1}{2} \hat{a} \cup (- \hat{k} \wedge \hat{g}^{(1)}) + \frac{1}{2} \hat{h} \cup (-d\hat{k} \wedge \hat{g}^{(1)}) = \frac{1}{2} d\left(\hat{h} \cup (\hat{k} \wedge \hat{g}^{(1)}) \right) \;,
\ee
\be
\frac{1}{2} \hat{a} \cup (\hat{j} \wedge \hat{t}_{S^1}) + \frac{1}{2} \hat{h} \cup (-\hat{j} \wedge (\hat{g}^{(1)} - \hat{g}^{(2)})) = -\frac{1}{2} d\left(\hat{h} \cup (\hat{j} \wedge \hat{t}_{S^1})) \right) \;,
\ee
We can always get rid of such exact terms by picking an equivalent E-cocycle, as this does not change the value of the action. The remaining terms on the second line of \eqref{EqLagActCyl} are
\begin{align}
\frac{1}{2} (\hat{j} \wedge \hat{t}_{S^1} - \hat{k} \wedge \hat{g}^{(1)}) \cup \hat{\omega} + H^\wedge_\cup(\hat{\omega}, \hat{\omega}) \: & = \frac{1}{2} (\hat{j} \wedge \hat{t}_{S^1} - \hat{k} \wedge \hat{g}^{(1)}) \wedge \hat{\omega} + {\rm exact} \notag \\
& = - \frac{1}{2} (\hat{j} \wedge \hat{t}_{S^1} \wedge d\hat{k} \wedge \hat{g}^{(1)}) + {\rm exact} \;,
\end{align}
where we discarded the vanishing terms involving $\hat{g}^{(1)} \wedge \hat{g}^{(1)}$, $\hat{g}^{(1)} \wedge \hat{g}^{(2)}$ or $\hat{j} \wedge \hat{j}$. The remaining term is exact as well because $d\hat{t}_{S^1} \wedge \hat{g}^{(1)} = 0$. The Lagrangian is equivalent to an E-cocycle of the form 
\be
\label{EqPullBackELag}
\left( \frac{1}{2} \hat{a} \cup (\hat{h} - \hat{\eta}_\Lambda),(\hat{a})_2 \right) \;,
\ee
which is pulled-back from $M$. Proposition \ref{PropActVanishPullBackTorus} shows that the action vanishes. Therefore $\mathcal{S}^e(M \times I,\check{y}_{\check{x},\hat{j}}) = \mathcal{S}^e(M \times I,\check{y}_{\check{x},\hat{j}+d\hat{k}})$ as isomorphisms from $\mathcal{S}^e(M, x)$ to $\mathcal{S}^e(M,x + j)$, proving the proposition.
\end{proof}
\begin{proposition}
\label{PropQIndepPar}
$\mathcal{S}^e(M \times I, \check{y}_{\check{x},\hat{j}})$ does not depend on the parametrization of the path \eqref{EqPathDiffCoc}.
\end{proposition}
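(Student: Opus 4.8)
The plan is to compare the differential cocycle $\check{y}_{\check{x},\hat{j}}$ built from the linear path \eqref{EqPathDiffCoc} with the one built from an arbitrary reparametrization, and to show that the resulting homomorphisms agree by gluing the two cylinders into a torus and invoking Proposition \ref{PropActVanishPullBackTorus}, exactly as in the proof of Proposition \ref{PropDepQCohomClassj}. Concretely, let $\tau: I \to I$ be a smooth monotone function with $\tau(0)=0$, $\tau(1)=1$ and all derivatives vanishing at the endpoints, and let $\check{y}'_{\check{x},\hat{j}}$ be the differential cocycle on $M \times I$ obtained from the path $t \mapsto \check{x} + \tau(t)\check{j}$, i.e. $\check{y}'_{\check{x},\hat{j}} = (\hat{a}, \hat{h} + \hat{j}\wedge\widehat{\tau(t)}, -\hat{j}\wedge d\widehat{\tau(t)})$. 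Both $\check{y}_{\check{x},\hat{j}}$ and $\check{y}'_{\check{x},\hat{j}}$ restrict to $\check{x}$ on $M\times\{0\}$ and to $\check{x}+\check{j}$ on $M\times\{1\}$, so by Corollary \ref{CorHomFromManWBoundWNFDiffCoc} each determines an isomorphism from $\mathcal{S}^e(M,x)$ to $\mathcal{S}^e(M,x+j)$, and it remains to show these two isomorphisms coincide.

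The key step is to compute the composition $\mathcal{S}^e(M\times I,\check{y}'_{\check{x},\hat{j}}) \circ (\mathcal{S}^e(M\times I,\check{y}_{\check{x},\hat{j}}))^{-1}$ as the exponentiated action $S^e$ on the torus $M\times S^1$ obtained by gluing the first cylinder to the orientation-reversed second cylinder along both boundary components. On this torus one gets a differential cocycle whose characteristic is still $\hat{a}$ and whose curvature is a sum of terms of the shape $\hat{j}\wedge(\text{closed 1-form on }S^1)$. Because $\hat{a}=-d\hat{h}$, the terms in the Lagrangian $\hat{l}$ mixing $\hat{a}$ with the $\hat{j}$-dependent pieces of the connection can be written as total differentials (the relevant computation is the direct analogue of the two displayed identities in the proof of Proposition \ref{PropDepQCohomClassj}), and can therefore be discarded by replacing the E-cocycle by an equivalent one. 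The remaining curvature-squared contributions vanish because the wedge $\hat{j}\wedge\hat{j}=0$ (the pairing on $\bar{\Lambda}^\ast$ is symmetric and $\hat{j}$ has odd degree $2\ell+1$) and products of the two $S^1$-forms either vanish or are exact after using that $d\widehat{\tau(t)}$ and the generator cocycle of $H^1(S^1;\mathbb{Z})$ multiply to an exact term. What is left is the E-cocycle $\bigl(\tfrac{1}{2}\hat{a}\cup(\hat{h}-\hat{\eta}_\Lambda),(\hat{a})_2\bigr)$, pulled back from $M$, exactly as in \eqref{EqPullBackELag}.

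Then Proposition \ref{PropActVanishPullBackTorus} forces the action on this torus to vanish, so the composition above is the identity and $\mathcal{S}^e(M\times I,\check{y}_{\check{x},\hat{j}}) = \mathcal{S}^e(M\times I,\check{y}'_{\check{x},\hat{j}})$. Since any reparametrization with the prescribed boundary behaviour is handled this way, the proposition follows. The main obstacle is purely bookkeeping: one must carefully organize the Lagrangian on the torus into (i) manifestly exact pieces, (ii) pieces that vanish for degree/antisymmetry reasons, and (iii) the pulled-back remainder, keeping track of the higher cup product correction terms $H^\wedge_\cup$ and of the fact that everything is only defined modulo $1$ — but there is no new conceptual input beyond what already appears in the proof of Proposition \ref{PropDepQCohomClassj}.
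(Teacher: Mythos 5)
Your proposal is correct and follows essentially the same route as the paper: glue the two cylinders (one orientation-reversed) into $M\times S^1$, organize the Lagrangian into exact pieces, pieces killed by $\hat{j}\wedge\hat{j}=0$, and the pulled-back remainder \eqref{EqPullBackELag}, then invoke Proposition \ref{PropActVanishPullBackTorus}. The only cosmetic difference is that you phrase the second parametrization as a reparametrization $\tau$ of the first, while the paper works directly with two independent parametrizations $t_1$, $t_2$ supported on the two halves of the circle.
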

\begin{proof}
Suppose we have two parametrization $t_1$ and $t_2$. We evaluate the exponentiated action on $M \times S^1$ endowed with the differential cocycle
\be
\check{y}_{S^1} = (\hat{a}, \hat{h} + \hat{j} \wedge (\hat{t}_2 - \hat{t}_1), - \hat{j} \wedge (d\hat{t}_2 - d \hat{t}_1)) \;.
\ee
Here we decompose $M \times S^1$ into $M \times I$ and $M \times -I$. $\hat{t}_1$ ($\hat{t}_2$) is the function on $M \times S^1$ that take value $t$ on the first (second) copy of $M \times I$ and vanishes on the complement. The Lagrangian reads
\be
\label{EqLagRepInv}
\hat{l}(\check{y}_{S^1}) = \frac{1}{2} \hat{a} \cup (\hat{h} + \hat{j} \wedge (\hat{t}_2 - \hat{t}_1) - \hat{\eta}_\Lambda) - \frac{1}{2} (\hat{h} + \hat{j} \wedge (\hat{t}_2 - \hat{t}_1)) \cup (\hat{j} \wedge (d\hat{t}_2 - d\hat{t}_1)) + H^\wedge_\cup(\hat{\omega},\hat{\omega}) \;,
\ee
with $\hat{\omega} := -\hat{j} \wedge (d\hat{t}_2 - d\hat{t}_1)$. The terms proportional to $\hat{a} \cup (\hat{j} \wedge (\hat{t}_2 - \hat{t}_1))$ and $\hat{h} \cup (\hat{j} \wedge (d\hat{t}_2 - d\hat{t}_1))$ combine into an exact term and can therefore be dropped.  The first term of \eqref{EqLagRepInv} becomes $\frac{1}{2} \hat{a} \cup (\hat{h} - \hat{\eta}_\Lambda)$ and the remaining terms combine up to an exact term into
\be
-\frac{1}{2} \hat{j} \wedge (\hat{t}_2 - \hat{t}_1) \wedge \hat{j} \wedge (d\hat{t}_2 - d\hat{t}_1) \;,
\ee
which vanishes because it involves $\hat{j} \wedge \hat{j}$. The Lagrangian is therefore equivalent to an E-cocycle of the form \eqref{EqPullBackELag}.  Proposition \ref{PropActVanishPullBackTorus} shows that  the action evaluated on the torus vanishes, hence that the exponentiated actions on the two cylinders yield the same homomorphism.
\end{proof}
Let us simplify the notation and define
\be
Q(x,j) := \mathcal{S}^e(M \times I,\check{y}_{\check{x},\hat{j}}) : \mathcal{S}^e(M, x) \rightarrow \mathcal{S}^e(M, x + j) \;.
\ee

We can repeat the construction above on a cylinder $M \times [0,\epsilon]$ for any $\epsilon \in \mathbb{R}_+$. The resulting operator is the same, as the theory is topological. We can therefore take the limit $\epsilon \rightarrow 0$ and the operators $Q(x,j)$ can be pictured as codimension 1 defects in the field theory. But we can do better. The proof of Proposition \ref{PropStructGM} shows that $H^{2\ell+1}_{\rm free}(M;\bar{\Lambda}^\ast) \simeq H^{2\ell+1}_{\rm free}(M;\mathbb{Z}) \otimes \Lambda^\ast_0$, where $\Lambda^\ast_0$ is the sublattice of $\Lambda^\ast$ left invariant by the monodromy representation defining the local system $\bar{\Lambda}$. Let us therefore consider classes of the form $j  = j' \otimes \lambda$, $j' \in H^{2\ell+1}_{\rm free}(M;\mathbb{Z})$, $\lambda \in \Lambda^\ast_0$. $j'$ is Poincaré dual to the fundamental homology class of a $2\ell+1$-dimensional submanifold $J \subset M$, so we can perform a similar limiting procedure and take the cocycle $\hat{j}$ to be supported on $J$. We can therefore picture $Q(x,j)$ as a $2\ell+1$-dimensional defect operator in the theory. Those are Wilson operators on $J$ with charge $\lambda$ for the degree $2\ell+1$ gauge field $\check{x}$.

\subsection{Refinement} 

\label{SecSkewSymRef}

We study here the groupoid generated by the operators $Q$ defined in the previous section. Define a $U(1)$-valued pairing $\psi$ on $H_{\rm free}^{2\ell+1}(M;\bar{\Lambda}^\ast)$ by
\be
\label{EqDefPsiPair}
\psi(j_1, j_2) := Q(x + j_1 + j_2, -j_1 - j_2) Q(x + j_1, j_2) Q(x, j_1) \;.
\ee
Remark that $\psi(j_1, j_2)$ is an automorphism of the Hermitian line $\mathcal{S}^e(M, x)$, hence can be canonically identified with an element of $U(1)$. 
\begin{proposition}
\label{PropPsiIndepx}
$\psi(j_1, j_2)$ is independent of $x$.
\end{proposition}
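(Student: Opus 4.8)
The plan is to show that $\psi(j_1,j_2)$, viewed as a $U(1)$-valued automorphism of the Hermitian line $\mathcal{S}^e(M,x)$, is unchanged when $x$ is replaced by $x + j_0$ for an arbitrary $j_0 \in H^{2\ell+1}_{\rm free}(M;\bar{\Lambda}^\ast)$. Since the elements of this group together with the torsion data generate $\mathsf{Y}^{\rm flat}_M$ modulo the subgroup on which the Hilbert space is supported, and since the linking-pairing shifts are already known (Proposition~\ref{PropActQuadrRef}) to leave $\mathcal{S}^e$-related quantities controlled, it suffices to treat shifts by such $j_0$.

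First I would write out the definition \eqref{EqDefPsiPair} at the shifted basepoint: $\psi(j_1,j_2)$ computed at $x+j_0$ is the composite
\[
Q(x + j_0 + j_1 + j_2, -j_1-j_2)\, Q(x+j_0+j_1, j_2)\, Q(x+j_0, j_1) \;.
\]
The natural strategy is to express the difference between this composite and the one at basepoint $x$ as the exponentiated action $S^e$ evaluated on a closed $4\ell+3$-dimensional $\mathfrak{W}$-manifold obtained by gluing the relevant cylinders $M\times I$ into a torus $M\times S^1$ (or a union of cylinders forming a closed configuration), with the appropriate orientation flips so that the four $Q$-factors at $x+j_0$ and the inverses of the four factors at $x$ cancel pairwise along the shared boundary components. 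Concretely, one builds the loop of differential cocycles that goes $x \to x+j_1 \to x+j_1+j_2 \to x$ and compares it with the translated loop $x+j_0 \to x+j_0+j_1 \to \cdots$; the cylinder interpolating between these two loops via the constant path in the $j_0$-direction closes everything up into a closed manifold, and $\psi(j_1,j_2)|_{x+j_0}\cdot \psi(j_1,j_2)|_x^{-1}$ equals $S^e$ of the resulting closed configuration.

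The main computation — and the step I expect to be the real obstacle — is then to show that the Lagrangian $\bar l(\check y_{S^1})$ on this closed configuration is equivalent, modulo exact E-cocycles, to an E-cocycle pulled back from $M$, exactly as in the proofs of Propositions~\ref{PropDepQCohomClassj} and \ref{PropQIndepPar}. This requires carefully tracking the terms in $\hat l$ of the form \eqref{EqDefCocYOnCyl}: the $j_0$-dependence enters only through the connection component $\hat h + j_0 \wedge \hat t_{S^1} + \cdots$ and the curvature component, and one must check that every $j_0$-dependent contribution is either exact (hence removable by a change of E-cocycle representative, which does not affect the action) or vanishes for degree reasons (the offending terms involve wedge products like $\hat j_0 \wedge \hat j_0$, $\hat g^{(1)} \wedge \hat g^{(1)}$, or pull-backs to $M$ of forms of too high degree). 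Using $\hat a = -d\hat h$ to convert the cross terms $\tfrac12 \hat a \cup (\hat j_0 \wedge \hat t_{S^1})$ and $\tfrac12 \hat h \cup (\hat j_0 \wedge d\hat t_{S^1})$ into a single exact term, as done repeatedly above, is the key simplification. Once the Lagrangian is reduced to the pulled-back form \eqref{EqPullBackELag}, Proposition~\ref{PropActVanishPullBackTorus} shows the action on the torus vanishes, so $S^e$ of the closed configuration is $1$, giving $\psi(j_1,j_2)|_{x+j_0} = \psi(j_1,j_2)|_x$. Finally, since the general element of $\mathsf{Y}^{\rm flat}_M$ differs from a fixed one by a free class $j_0$ plus a torsion/flat shift already handled by the quadratic-refinement structure, this establishes independence of $x$ in full.
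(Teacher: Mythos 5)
Your strategy differs from the paper's and, as written, has two genuine gaps. First, the reduction step: you only propose to check invariance under shifts $x \mapsto x + j_0$ with $j_0 \in H^{2\ell+1}_{\rm free}(M;\bar{\Lambda}^\ast)$. But the basepoints $x$ for which $\psi$ is defined range over $\beta^{-1}(a_M + \mathsf{K}_M)$, which is a torsor for $\beta^{-1}(\mathsf{K}_M)$; this contains in particular the connected group of shifts by $H^{2\ell+1}(M;\mathscr{V})$, of which the $j_0$-orbit is only a finite subset, as well as shifts changing the fiber of $\beta$ over $\mathsf{K}_M$. Your appeal to Proposition \ref{PropActQuadrRef} to "control" the remaining shifts is not an argument, and since $\psi$ is $U(1)$-valued you cannot fall back on the continuity-plus-discreteness trick the paper later uses for $Q(x,b)$ in Proposition \ref{PropQxbIndepX}. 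Second, the computation you defer — showing that the Lagrangian of your glued closed configuration reduces to an E-cocycle pulled back from $M$ — is exactly the point at issue, and the expectation behind it is off: the triangle composite itself does \emph{not} have vanishing action, since the paper computes $\psi(j_1,j_2) = \exp \pi i \langle \hat{j}_1 \wedge \hat{j}_2, [M]\rangle$, which is generically nontrivial. Only the \emph{ratio} of the two composites would cancel, and verifying that all the $j_0$-cross-terms drop out is precisely the unproved content.

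The paper's route is both shorter and stronger: the three cylinders in \eqref{EqDefPsiPair} assemble into $M \times \partial\Delta$ for the triangle $\Delta = \{0 \le t_2 \le t_1 \le 1\}$, the differential cocycle visibly extends over $M \times \Delta$ with curvature $\hat{j}_1 \wedge d\hat{t}_1 + \hat{j}_2 \wedge d\hat{t}_2$, and the boundary formula of Proposition \ref{PropActFromBoundedMan} then yields the closed expression $\exp \pi i \langle \hat{j}_1 \wedge \hat{j}_2, [M]\rangle$ in one line. Independence of $x$ is then immediate for \emph{all} admissible $x$, and the explicit formula is needed anyway for Corollary \ref{CorPropSkewRefPsi}. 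I would encourage you to look for a bounding manifold before setting up a gluing-and-cancellation argument: whenever the configuration of cylinders closes up into the boundary of a product $M \times \Sigma$, Proposition \ref{PropActFromBoundedMan} usually gives the answer directly.
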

\begin{proof}
Consider the triangle
\be
\Delta := \{ (t_1,t_2) \in \mathbb{R}^2 | 0 \leq t_1 \leq 1, 0 \leq t_2 \leq t_1 \}
\ee
and endow $M \times \Delta$ with the differential cocycle 
\be
\check{x}_W = (\hat{a}_W, \hat{h}_W, \hat{\omega}_W) := \check{x} + (0, -\hat{j}_1 \wedge \hat{t}_1 - \hat{j}_2 \wedge \hat{t}_2, \hat{j}_1 \wedge d\hat{t}_1 + \hat{j}_2 \wedge d\hat{t}_2) \;.
\ee
We write $\hat{t}_i$ for the degree 0 $\mathbb{R}$-valued differential forms on $M \times \Delta$ given by the pullback of the coordinate functions $t_i$. We picked differential form representatives $\hat{j}_i$ of $j_i$. 

Remark that $\mathcal{S}^e(M \times \partial \Delta, \check{x}_W)$ computes the inverse of the right-hand side of \eqref{EqDefPsiPair}. We now use Proposition \ref{PropActFromBoundedMan} to get
\begin{align}
\label{EqExplFormPsi}
\psi(j_1, j_2) = & \: \exp \pi i \langle \hat{\omega}_W \wedge \hat{\omega}_W + \hat{\omega}_W \wedge \hat{\lambda}, [M \times \Delta, M \times \partial \Delta] \rangle \\
= & \: \exp \pi i \langle (\hat{j}_1 \wedge d\hat{t}_1 + \hat{j}_2 \wedge d\hat{t}_2) \wedge (\hat{j}_1 \wedge d\hat{t}_1 + \hat{j}_2 \wedge d\hat{t}_2), [M \times \Delta, M \times \partial \Delta] \rangle \notag \;,\\
= & \: \exp -\pi i \langle \hat{j}_1 \wedge \hat{j}_2 , [M] \rangle \notag \;.
\end{align}
which makes it obvious that $\psi(j_1, j_2)$ is independent of $x$.
\end{proof}
\begin{corollary}
\label{CorPropSkewRefPsi}
$\psi$ is bimultiplicative, i.e.
\be
\psi(j_1 + j_2, j_3) = \psi(j_1, j_3)\psi(j_2,j_3) \;,
\ee
\be
\psi(j_1, j_2 + j_3) = \psi(j_1, j_2)\psi(j_1,j_3) \;.
\ee
$\psi$ is skew-symmetric, i.e. 
\be
\psi(j_1, j_2) = (\psi(j_2, j_1))^{-1} \;.
\ee
It is also alternating, because it is defined on a free group: $\psi(j, j) = 1$. $\psi$ refines the bimultiplicative $U(1)$-valued skew-symmetric pairing on $H^{2\ell+1}_{\rm free}(M;\bar{\Lambda}^\ast)$:
\be
\label{EqRelPsiPairing}
\psi(j_1, j_2)(\psi(j_2, j_1))^{-1} = \exp -2\pi i \langle j_1 \wedge j_2, [M] \rangle \;.
\ee
\end{corollary}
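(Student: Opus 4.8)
The plan is to read every assertion of the corollary directly off the explicit formula
\[
\psi(j_1,j_2) = \exp \pi i \,\langle \hat{j}_1 \wedge \hat{j}_2, [M]\rangle
\]
established in the proof of Proposition~\ref{PropPsiIndepx} (equation~\eqref{EqExplFormPsi}), where $\hat{j}_i$ are closed differential-form representatives of $j_i \in H^{2\ell+1}_{\rm free}(M;\bar{\Lambda}^\ast)$ and the bracket is formed using the $\mathbb{Q}$-valued pairing on $\Lambda^\ast$ extended to $\mathscr{V}$. First I would observe that this expression is independent of the choice of representatives: this is already contained in Proposition~\ref{PropDepQCohomClassj}, and in any case replacing $\hat{j}_1$ by $\hat{j}_1 + d\hat{k}$ changes $\hat{j}_1\wedge\hat{j}_2$ by $d(\hat{k}\wedge\hat{j}_2)$, which integrates to zero over the closed manifold $M$. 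Hence $\psi$ is a well-defined pairing on $H^{2\ell+1}_{\rm free}(M;\bar{\Lambda}^\ast)$, and it remains only to verify its formal properties.

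Bimultiplicativity is immediate from the bilinearity of the wedge product of $\mathscr{V}$-valued forms (via the bilinear lattice pairing) together with additivity of integration over $[M]$. For skew-symmetry I would compute, in a local frame $\{e_a\}$ of $\mathscr{V}$ with $\hat{j}_i = \sum_a \alpha^{(i)}_a e_a$, that
\[
\hat{j}_2 \wedge \hat{j}_1 = \sum_{a,b}(\alpha^{(2)}_b \wedge \alpha^{(1)}_a)\langle e_b, e_a\rangle = (-1)^{(2\ell+1)^2}\sum_{a,b}(\alpha^{(1)}_a \wedge \alpha^{(2)}_b)\langle e_a, e_b\rangle = -\,\hat{j}_1\wedge\hat{j}_2 \;,
\]
using the symmetry of the pairing on $\Lambda^\ast$ and the oddness of $(2\ell+1)^2$; integrating and exponentiating gives $\psi(j_1,j_2) = \psi(j_2,j_1)^{-1}$. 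Taking $j_1 = j_2 = j$ in this identity yields $\hat{j}\wedge\hat{j} = 0$, hence $\psi(j,j) = 1$ (equivalently, $j\mapsto\psi(j,j)$ is a homomorphism into $\{\pm1\}$ by bimultiplicativity and skew-symmetry, pinned to the trivial one by the formula). Finally, the refinement relation~\eqref{EqRelPsiPairing} follows from
\[
\psi(j_1,j_2)\,\psi(j_2,j_1)^{-1} = \exp \pi i\,\langle \hat{j}_1\wedge\hat{j}_2 - \hat{j}_2\wedge\hat{j}_1, [M]\rangle = \exp 2\pi i\,\langle \hat{j}_1\wedge\hat{j}_2, [M]\rangle \;,
\]
again using $\hat{j}_2\wedge\hat{j}_1 = -\hat{j}_1\wedge\hat{j}_2$, and the right-hand side equals $\exp 2\pi i\langle j_1\wedge j_2,[M]\rangle$ because at the level of cohomology the wedge product of form representatives computes the cup-product pairing on $H^{2\ell+1}_{\rm free}(M;\bar{\Lambda}^\ast)$ induced by the pairing on $\Lambda^\ast$.

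The only step that is not purely formal is the sign bookkeeping in the local-system setting: one must check that the graded-commutativity sign $(-1)^{(2\ell+1)^2}=-1$ of the form part combines with the \emph{symmetry} of the lattice pairing (not an antisymmetry) to produce an overall minus sign, which is precisely what makes $\psi$ skew-symmetric rather than symmetric. Everything else is a direct unwinding of the definition~\eqref{EqDefPsiPair} through the explicit formula of Proposition~\ref{PropPsiIndepx}.
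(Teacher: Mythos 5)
Your proof is correct and follows essentially the same route the paper intends: the corollary is stated without a separate proof precisely because every assertion is read off the explicit formula $\psi(j_1,j_2)=\exp \pi i \langle \hat{j}_1\wedge\hat{j}_2,[M]\rangle$ derived in \eqref{EqExplFormPsi}, which is exactly what you do. Your sign bookkeeping (symmetric lattice pairing times the odd-degree graded commutativity giving skew-symmetry, and $\hat{j}\wedge\hat{j}=0$ giving the alternating property directly) is accurate and, if anything, more explicit than the paper's terse "because it is defined on a free group" justification.
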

The above definitions can be interpreted geometrically as follows. We can think of $Q$ as a connection on the line bundle $\mathcal{S}^e(M,\bullet)$ over the discrete set $x + \mathsf{G}_M$, in the sense that $Q$ defines a parallel transport between the fibers of $\mathcal{S}^e(M,\bullet)$. $\psi(j_1, j_2)$ describes the holonomies of this non-flat connection along elementary triangular paths, hence can bee seen as a discrete analogue of the curvature of the connection $Q$.

\subsection{A cobordism invariant}

\label{SecCobInv}

We now turn to the properties of the operators $Q(x,b)$ for $b \in H^{2\ell+1}_{\rm free}(M;\bar{\Lambda})$. In this case, $Q(x,b)$ maps the Hermitian line $\mathcal{S}^e(M,x)$ to itself, hence can be canonically identifed with a complex number.
\begin{proposition}
\label{PropQ2beq1}
$Q(x,\bullet)$, seen as a complex valued function on $H^{2\ell+1}_{\rm free}(M;\bar{\Lambda})$, passes to a function on $H^{2\ell+1}_{\rm free}(M;\bar{\Lambda})/2H^{2\ell+1}_{\rm free}(M;\bar{\Lambda})$.
\end{proposition}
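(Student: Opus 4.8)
\emph{Proof idea.} The plan is to rewrite the statement as a single scalar identity. First, for $b\in H^{2\ell+1}_{\rm free}(M;\bar\Lambda)$ regarded inside $H^{2\ell+1}_{\rm free}(M;\bar\Lambda^\ast)$ through $\iota_\Lambda$, a form $\hat b$ representing $b$ gives via $(0,\hat b,0)$ a class of $\mathsf{Y}^{\rm flat}_M$ whose image in $H^{2\ell+1}(M;\mathscr V/\bar\Lambda)$ vanishes; hence $x+b=x$ and $\mathcal S^e(M,x+b)=\mathcal S^e(M,x)$, so $Q(x,b)$ is a scalar, an automorphism of the Hermitian line of Proposition \ref{PropDimHilbSp}. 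Setting $j_2=-j_1$ in \eqref{EqDefPsiPair}, using $Q(x,0)=\mathrm{id}$ and $\psi(j,-j)=\psi(j,j)^{-1}=1$ (Corollary \ref{CorPropSkewRefPsi}), yields $Q(x+j,-j)=Q(x,j)^{-1}$. Combining this with \eqref{EqDefPsiPair}, for $b,c\in H^{2\ell+1}_{\rm free}(M;\bar\Lambda)$ one gets
\be
Q(x,b+c)=Q(x,b)\,Q(x,c)\,\psi(b,c)^{-1}\;.
\ee
By \eqref{EqExplFormPsi}, on integral classes $\psi(b,c)=\exp\pi i\langle\hat b\wedge\hat c,[M]\rangle$ with $\langle\hat b\wedge\hat c,[M]\rangle\in\mathbb Z$, so $\psi(b,c)\in\{\pm1\}$ and $\psi(c,c)=1$. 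Therefore $Q(x,b+2c)=Q(x,b)\,Q(x,2c)=Q(x,b)\,Q(x,c)^2$, and the statement is equivalent to the identity $Q(x,2c)=1$ (equivalently $Q(x,c)^2=1$) for every $c\in H^{2\ell+1}_{\rm free}(M;\bar\Lambda)$.

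Next I would realize $Q(x,c)^2$ as a partition function. Gluing the two cylinders computing $Q(x+c,c)\circ Q(x,c)$ (Proposition \ref{PropGluPreq}) and reparametrizing (Proposition \ref{PropQIndepPar}) identifies $Q(x,c)^2=\mathcal S^e(M\times I,\check y_{\check x,2\hat c})$ with $\check y_{\check x,2\hat c}$ as in \eqref{EqDefCocYOnCyl}. Closing the cylinder into $M\times S^1$, endowed with the Wu structure induced from $M$ (and the non-bounding spin structure on $S^1$ when $\ell=0$) as in Section \ref{SecActTori}, and using that $\mathcal S^e(M,x)$ is a Hermitian line so that the trace over it is the identity, gives
\be
Q(x,2c)=\mathcal S^e(M\times S^1,\check y_{S^1})=S^e(M\times S^1,\check y_{S^1})\;,
\ee
where the differential cohomology class $z$ of $\check y_{S^1}$ has characteristic $[\check x]_a\oplus(2c)\cup g_{S^1}$ and curvature $2\hat c\wedge\hat g$ for a form $\hat g$ representing $g_{S^1}$.

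Finally I would show $S^e(M\times S^1,z)=1$ by a Lagrangian computation in the spirit of Propositions \ref{PropDepQCohomClassj}, \ref{PropQIndepPar} and \ref{PropPsiIndepx}. Write $z=z_0\boxplus 2w$, with $z_0$ flat and pulled back from $M$ (characteristic $[\check x]_a$) and $w$ a lift of the flux $c\cup g_{S^1}$. Using the bilinearity \eqref{EqLagSum}, the relation $\hat l(2\check w)=2\hat l(\check w)+[\check w\cup\check w]_h$, the vanishing $\hat c\wedge\hat c=0$ for the odd-degree form $\hat c$ (killing the self-wedge terms), and the inclusion $2\Lambda\subseteq\Lambda_{\rm even}$ (so that $(\hat a_z)_2=([\check x]_a)_2$ only involves the part pulled back from $M$), one should be able to bring $\bar l(\check z)$, modulo exact E-cocycles, to the sum of an ordinary $\mathbb R/\mathbb Z$-cochain pairing to an \emph{integer} with $[M\times S^1]$ — the "$2w$" and cross contributions, whose dangerous factors of $\tfrac12$ are cancelled by the doubling — and an E-cocycle pulled back from $M$, on which Proposition \ref{PropActVanishPullBackTorus} forces the action to vanish. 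Hence $S(M\times S^1,z)=0\bmod 1$ and $Q(x,2c)=1$. The main obstacle I anticipate is precisely this last cochain-level manipulation: tracking the $H^\wedge_\cup$ terms and the $z_0$--$2w$ cross terms, and verifying that after discarding exact pieces nothing survives beyond an integer-valued cochain and an $M$-pullback.
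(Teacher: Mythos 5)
Your proposal is correct and follows essentially the same route as the paper: reduce to showing $Q(x,2c)=1$, realize this as the exponentiated action on $M\times S^1$ for the closed-up cocycle, and show by a cochain-level Lagrangian manipulation (using $\hat c\wedge\hat c=0$, the doubling killing the factors of $\tfrac12$, and $2\Lambda\subseteq\Lambda_{\rm even}$) that $\bar l$ is equivalent to an E-cocycle pulled back from $M$, so Proposition \ref{PropActVanishPullBackTorus} forces the action to vanish. The paper carries out the final cochain computation that you only sketch (introducing the step cochain $\hat\theta$ to pass from the cylinder to the torus and then simplifying via the Leibniz rule), and it leaves implicit the reduction via the $\psi$-cocycle identity that you make explicit; these are presentational differences, not mathematical ones.
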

\begin{proof}
Consider the following function on $I$:
\be
\theta(t) = \left \{ \begin{array}{ll} 0 & \mbox{ if } t < 1/2 \\
1 & \mbox{ if } t \geq 1/2 \;. \end{array} \right. \;, 
\ee
and let $\hat{\theta}$ be the corresponding $\mathbb{Z}$-valued 0-cochain. Let $b \in H^{2\ell+1}(M;\bar{\Lambda})$. The cocycle 
\be
\check{y}_{\check{x},2\hat{b}} = (\hat{a}, \hat{h} + 2\hat{b} \wedge \hat{t}, -2\hat{b} \wedge d\hat{t}) \;,
\ee
is equivalent to 
\be
\label{EqDefCocYThet}
\check{y} = (\hat{a} -  2 \hat{b} \cup d\hat{\theta}, \hat{h} + 2\hat{b} \wedge \hat{t}- 2\hat{b} \cup \hat{\theta}, -2\hat{b} \wedge d\hat{t}) \;,
\ee
where we picked a cocycle representative $\hat{b}$ for $b$. Proposition \ref{PropPQFTdDimWB} ensures that $Q(x,2b)$ can be computed by evaluating the exponentiated action on $M \times I$ endowed with the differential cocycle $\check{y}$. $\check{y}$ has the advantage that its restrictions to the boundaries of $M \times I$ coincide, so $Q(x,2b)$ can be computed by evaluating the exponentiated action on $M \times S^1$ endowed with $\check{y}$. The Lagrangian reads
\begin{align}
\label{EqDefLagTorus}
\hat{l}_{\check{x}, 2\hat{b}} := \: & \frac{1}{2} (\hat{a} - 2\hat{b} \cup d\hat{\theta}) \cup (\hat{h} + 2\hat{b} \wedge \hat{t}- 2\hat{b} \cup \hat{\theta} - \hat{\eta}_\Lambda) \\
& + \frac{1}{2} (\hat{h} + 2\hat{b} \wedge \hat{t} -  2\hat{b} \cup \hat{\theta})  \cup (-2\hat{b} \wedge d\hat{t}) + \frac{1}{2} H^\wedge_\cup(-2\hat{b} \wedge d\hat{t},-2\hat{b} \wedge d\hat{t}) \notag \;.
\end{align}
After simplifying using the Leibniz rule and dropping the exact terms, this expression reduces to
\begin{align}
\frac{1}{2} \hat{a} \cup (\hat{h} - \hat{\eta}_\Lambda) \: & - \hat{h} \cup (\hat{b} \cup d\hat{\theta}) 
- (\hat{b} \cup d\hat{\theta}) \cup \hat{h} + 2\hat{b} \cup d\hat{\theta} \cup \hat{b} \cup \hat{\theta}
+ (\hat{b} \cup d\hat{\theta}) \cup \hat{\eta}_\Lambda \\
& + 2(\hat{b} \wedge \hat{t}) \cup (\hat{b} \wedge d\hat{t}) + 2 H^\wedge_\cup(\hat{b} \wedge d\hat{t},\hat{b} \wedge d\hat{t}) \notag
\end{align}
The last four terms on the first line vanish modulo 1. The terms on the second line combine into $2\hat{b} \wedge \hat{t} \wedge \hat{b} \wedge d\hat{t}$, which vanishes by the skew-symmetry of the wedge product. The E-theory class determined by the Lagrangian has a representative of the form \eqref{EqPullBackELag}. Proposition \ref{PropActVanishPullBackTorus} ensures that the action vanishes. Therefore, $Q(x,2b) = 1$.
\end{proof}
\noindent \eqref{EqRelPsiPairing} shows that $\psi$ restricted to $H^{2\ell+1}_{\rm free}(M; \bar{\Lambda})/2H^{2\ell+1}_{\rm free}(M; \bar{\Lambda})$ is a symmetric pairing.
\begin{proposition}
\label{PropQQuadrRef}
$Q(x,\bullet)$ restricted to $H^{2\ell+1}_{\rm free}(M; \bar{\Lambda})/2H^{2\ell+1}_{\rm free}(M; \bar{\Lambda})$, is a quadratic refinement of $\psi$.
\end{proposition}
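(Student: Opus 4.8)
The plan is to verify the defining relation of a quadratic refinement (Appendix \ref{SecPairFinAbGroups}). Writing $q:=Q(x,\bullet)$, which by Proposition \ref{PropQ2beq1} is a $U(1)$-valued function on the quotient $H^{2\ell+1}_{\rm free}(M;\bar\Lambda)/2H^{2\ell+1}_{\rm free}(M;\bar\Lambda)$, the goal is
\[
q(b_1+b_2)=q(b_1)\,q(b_2)\,\psi(b_1,b_2),\qquad b_1,b_2\in H^{2\ell+1}_{\rm free}(M;\bar\Lambda),
\]
all factors being automorphisms of the Hermitian line $\mathcal{S}^e(M,x)$, hence commuting elements of $U(1)$, so that no operator-ordering issue arises. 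Here $\psi$ is the restriction to the quotient, which is symmetric and therefore valued in $\{\pm1\}$, being a bimultiplicative pairing on a $2$-torsion group. Before the main argument I would record two elementary facts: $Q(x,0)=1$ (the constant cylinder $(M\times I,\pi^\ast\check x)$ is the identity bordism, or more concretely $Q(x,0)^2=Q(x,0)\neq0$ by gluing two such cylinders); and, since elements of $H^{2\ell+1}_{\rm free}(M;\bar\Lambda)$ act trivially on $\mathsf{Y}^{\rm flat}_M$ (noted at the start of Section \ref{SecCobInv}) while $Q$ depends only on the differential cohomology class of its first argument, one may freely drop any shift by such an element from the first slot of $Q$.

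The key step is the identity $Q(x,-c)=Q(x,c)^{-1}$ for every $c\in H^{2\ell+1}_{\rm free}(M;\bar\Lambda)$, which I would extract formally from the structure of $\psi$ rather than by a new computation. Putting $j_1=c$, $j_2=-c$ in the definition \eqref{EqDefPsiPair} and using the two remarks above collapses it to $\psi(c,-c)=Q(x,0)\,Q(x,-c)\,Q(x,c)=Q(x,-c)\,Q(x,c)$. On the other hand, Corollary \ref{CorPropSkewRefPsi} says $\psi$ is bimultiplicative and alternating, whence $\psi(c,-c)=\psi(c,c)^{-1}=1$. Therefore $Q(x,-c)\,Q(x,c)=1$.

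Finally I would apply \eqref{EqDefPsiPair} to $j_1=b_1$, $j_2=b_2$; dropping the trivial shifts from the first slots gives $\psi(b_1,b_2)=Q(x,-b_1-b_2)\,Q(x,b_2)\,Q(x,b_1)$, and substituting $Q(x,-b_1-b_2)=Q(x,b_1+b_2)^{-1}$ and rearranging yields $Q(x,b_1+b_2)=Q(x,b_1)\,Q(x,b_2)\,\psi(b_1,b_2)^{-1}$, which equals $Q(x,b_1)\,Q(x,b_2)\,\psi(b_1,b_2)$ since $\psi$ is $\{\pm1\}$-valued on the quotient — exactly the quadratic-refinement identity. The only genuine subtlety is the sign/inverse bookkeeping, i.e.\ ensuring one lands on $\psi$ and not $\psi^{-1}$, which is precisely what the $2$-torsion of $\psi$ on $H^{2\ell+1}_{\rm free}(M;\bar\Lambda)/2H^{2\ell+1}_{\rm free}(M;\bar\Lambda)$ resolves; if one did not notice that $Q(x,-c)=Q(x,c)^{-1}$ is a formal consequence of Corollary \ref{CorPropSkewRefPsi}, the fallback would be a torus computation showing that the relevant Lagrangian on $M\times S^1$ is equivalent to one pulled back from $M$ and then invoking Proposition \ref{PropActVanishPullBackTorus}, exactly as in the proofs of Propositions \ref{PropQ2beq1} and \ref{PropPsiIndepx}.
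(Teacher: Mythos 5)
Your proof is correct and follows essentially the same route as the paper's: both read the refinement identity off the defining relation \eqref{EqDefPsiPair} (using that elements of $H^{2\ell+1}_{\rm free}(M;\bar{\Lambda})$ act trivially in the first slot of $Q$) and obtain quadraticity from $Q(x,2b)=1$ (Proposition \ref{PropQ2beq1}) together with the properties of $\psi$ from Corollary \ref{CorPropSkewRefPsi}. The only difference is that you spell out the step $Q(x,-c)=Q(x,c)^{-1}$, which the paper leaves implicit when asserting that \eqref{EqDefPsiPair} exhibits $Q(x,\bullet)$ as a refinement of $\psi$.
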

\begin{proof}
\eqref{EqDefPsiPair} shows that for fixed $x$, $Q(x,b) \in Hom(S^e(M,x), S^e(M,x)) \simeq \mathbb{C}$ is a refinement of $\psi$. We have $(Q(x,b))^{-2} = Q(x,2b)(Q(x,b))^{-2} = \psi(b,b)$, from which we deduce that $Q(x,\bullet)$ is quadratic, see Appendix \ref{SecPairFinAbGroups}. 
\end{proof} 
\noindent Note that as $\psi(b,b) = 1$, the above also shows
\begin{corollary}
$Q(x,b) = \pm 1$ for $b \in H^{2\ell+1}_{\rm free}(M; \bar{\Lambda})$.
\end{corollary}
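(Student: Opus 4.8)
The plan is to deduce the statement immediately from the three facts just established, with essentially no new work. First I would note that, for fixed $x$ and $b \in H^{2\ell+1}_{\rm free}(M;\bar{\Lambda})$, the operator $Q(x,b)$ is an automorphism of the Hermitian line $\mathcal{S}^e(M,x)$ (a Hermitian line under the running assumption $\beta(w) \in a_M + \mathsf{K}_M$ by Proposition \ref{PropDimHilbSp}; if that assumption fails the claim is vacuous since $\mathcal{S}^e(M,x)=0$), and hence is canonically identified with an element of $U(1)$.

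Next I would invoke Proposition \ref{PropQQuadrRef}: $Q(x,\bullet)$ descends to a quadratic refinement of $\psi$ on $H^{2\ell+1}_{\rm free}(M;\bar{\Lambda})/2H^{2\ell+1}_{\rm free}(M;\bar{\Lambda})$. Writing out the defining relation of a quadratic refinement (Appendix \ref{SecPairFinAbGroups}) for the element $b + b$ gives
\be
\psi(b,b) = Q(x,2b)\, Q(x,b)^{-2} \;.
\ee
I would then substitute $Q(x,2b) = 1$ from Proposition \ref{PropQ2beq1} and $\psi(b,b) = 1$, the latter being the alternating property recorded in Corollary \ref{CorPropSkewRefPsi} (valid because $\psi$ is defined on a free abelian group). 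This yields $Q(x,b)^2 = 1$, and since $Q(x,b) \in U(1)$ this forces $Q(x,b) = \pm 1$.

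I expect no real obstacle: all the content has already been placed in Propositions \ref{PropQ2beq1} and \ref{PropQQuadrRef}, and the corollary is just the elementary observation that the only square roots of unity in $U(1)$ are $\pm 1$. The single point deserving a word of care is the scalar identification of $Q(x,b)$, which is legitimate precisely because the relevant Hilbert space is one-dimensional.
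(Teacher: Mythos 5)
Your proposal is correct and follows the paper's own route: the paper deduces the corollary from the identity $Q(x,2b)\,Q(x,b)^{-2}=\psi(b,b)$ established in the proof of Proposition \ref{PropQQuadrRef}, combined with $Q(x,2b)=1$ from Proposition \ref{PropQ2beq1} and the alternating property $\psi(b,b)=1$, exactly as you do. Your remark about the scalar identification of $Q(x,b)$ via the one-dimensionality of $\mathcal{S}^e(M,x)$ is also the justification the paper relies on.
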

\begin{proposition}
\label{PropQIndepJLambVal}
For $b \in H^{2\ell+1}_{\rm free}(M; \bar{\Lambda})$, $Q(x,b)$ depends only on the equivalence class of $x$ in $\mathsf{Y}^{\rm flat}_M/\mathsf{C}_M$.
\end{proposition}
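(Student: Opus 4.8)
The plan is to reduce to a shift by a cohomology class carrying a closed differential form representative, and then to evaluate the relevant holonomy by filling in a two-parameter family of cocycles over $M\times[0,1]^2$ and invoking Proposition~\ref{PropActFromBoundedMan}. First I would dispose of the torsion directions. The group $\mathsf{C}_M$ acts on $\mathsf{Y}^{\rm flat}_M$, and by Proposition~\ref{PropActFinGaugFields} the stabiliser of the ambient $\mathsf{E}_M$-action is $\mathsf{T}_M\subset\mathsf{C}_M$, so the $\mathsf{C}_M$-action factors through $\mathsf{C}_M/\mathsf{T}_M\simeq\mathsf{G}_M\simeq H^{2\ell+1}_{\rm free}(M;\bar{\Lambda}^\ast)/\iota_\Lambda(H^{2\ell+1}_{\rm free}(M;\bar{\Lambda}))$ by Corollary~\ref{CorStructEM}. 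Hence, for any $c\in\mathsf{C}_M$, the class $x+c$ coincides with $x+j$ for some $j\in H^{2\ell+1}_{\rm free}(M;\bar{\Lambda}^\ast)$ acting through the differential-form construction of Section~\ref{SecWilHooftOp}, and it is enough to prove $Q(x+j,b)=Q(x,b)$ for such $j$ and for $b\in H^{2\ell+1}_{\rm free}(M;\bar{\Lambda})$. Both sides are endomorphisms of Hermitian lines --- since $b$ acts trivially on $\mathsf{Y}^{\rm flat}_M$ one has $x+b=x$ and $x+j+b=x+j$ --- hence canonically complex numbers; we may assume $\beta(x)\in a_M+\mathsf{K}_M$, so the lines are nonzero.

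Next I would choose closed form representatives $\hat{b}$ of $b$ and $\hat{j}$ of $j$, and on $W:=M\times[0,1]^2$ with coordinates $(s,t)$ form the differential cocycle
\[
\check{W}=\bigl(\hat{a},\ \hat{h}+\hat{b}\wedge\hat{t}+\hat{j}\wedge\hat{s},\ -\hat{b}\wedge d\hat{t}-\hat{j}\wedge d\hat{s}\bigr),
\]
extending the pullback of $\check{x}$. Its restrictions to the four pieces $M\times(\text{edge})$ of $M\times\partial([0,1]^2)$ are, respectively, the cylinder cocycles \eqref{EqDefCocYOnCyl} defining $Q(x,b)$, $Q(x+j,b)$, $Q(x,j)$, and $Q(x+b,j)=Q(x,j)$ (the last equality because $x+b=x$). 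By the gluing property of $\mathcal{S}^e$ (Theorem~\ref{ThPreqFTFunc}), $\mathcal{S}^e\bigl(M\times\partial([0,1]^2),\check{W}|_\partial\bigr)$ is the composite of these four homomorphisms around the loop $\partial([0,1]^2)$; this is an endomorphism of the line $\mathcal{S}^e(M,x)$, and once the two $j$-cylinders cancel (again using $x+b=x$) the scalar it computes is exactly $Q(x,b)^{-1}Q(x+j,b)$.

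Finally I would evaluate this partition function. As $M\times\partial([0,1]^2)$ bounds $W$ with the local system pulled back from $M$, and since $M$ already carries a Wu structure, one may choose the auxiliary data on $W$ pulled back from $M$ so that $\hat{\lambda}=0$ in the notation of Proposition~\ref{PropActFromBoundedMan}. That proposition then gives for the action $-\tfrac12\langle\hat{\omega}_W\wedge\hat{\omega}_W,[W,\partial W]\rangle$ with $\hat{\omega}_W=-\hat{b}\wedge d\hat{t}-\hat{j}\wedge d\hat{s}$. The $d\hat{t}\wedge d\hat{t}$ and $d\hat{s}\wedge d\hat{s}$ terms vanish and the cross terms combine into $-2\,\hat{b}\wedge\hat{j}\wedge d\hat{t}\wedge d\hat{s}$, so the action reduces modulo $1$ to $\pm\langle\hat{b}\wedge\hat{j},[M]\rangle$. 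Since the cup-product pairing of a $\bar{\Lambda}$-valued class with a $\bar{\Lambda}^\ast$-valued class is $\mathbb{Z}$-valued, the action vanishes, the loop holonomy equals $1$, and therefore $Q(x+j,b)=Q(x,b)$.

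The main obstacle is pure bookkeeping: one must check with care that the four boundary restrictions of $\check{W}$ really are the claimed cylinder cocycles and that, under the canonical identifications $\mathcal{S}^e(M,x+b)=\mathcal{S}^e(M,x)$ and $\mathcal{S}^e(M,x+j+b)=\mathcal{S}^e(M,x+j)$, the loop composition collapses to $Q(x,b)^{-1}Q(x+j,b)$; the integral from Proposition~\ref{PropActFromBoundedMan} is then immediate. Equivalently, one can package this computation through the pairing $\psi$: reading off the holonomy of $\check{W}$ as a product of two triangular holonomies \eqref{EqDefPsiPair} gives $Q(x+j,b)=\psi(j,b)\,\psi(b,j)^{-1}\,Q(x,b)$, and \eqref{EqRelPsiPairing} together with the integrality of $\langle j\wedge b,[M]\rangle$ concludes the argument.
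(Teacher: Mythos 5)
Your proposal is correct and, in substance, the same as the paper's: your closing paragraph (two applications of \eqref{EqDefPsiPair}, then \eqref{EqRelPsiPairing} together with the integrality of $\langle b\wedge j,[M]\rangle$, then cancellation of the scalar factors $Q(x,b)$ and $Q(x+j,b)$) is precisely the argument given in the paper, combined with the same final observation that the $H^{2\ell+1}_{\rm free}(M;\bar{\Lambda}^\ast)$-action on $\mathsf{Y}^{\rm flat}_M$ has the same orbits as the $\mathsf{C}_M$-action. The square-bounding computation over $M\times[0,1]^2$ in your main body is a valid but redundant geometric rederivation of the needed instance of \eqref{EqRelPsiPairing}, which the paper already established via the triangle computation \eqref{EqExplFormPsi} in Proposition \ref{PropPsiIndepx} and Corollary \ref{CorPropSkewRefPsi}.
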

\begin{proof}
\eqref{EqDefPsiPair} implies that for $j \in H^{2\ell+1}_{\rm free}(M; \bar{\Lambda}^\ast)$,
\begin{align}
\label{EqAlphIndepX2}
Q(x, j) Q(x,b) & \: = Q(x, j + b) \psi(b,j) \notag \\
& = \psi(b,j)/\psi(j,b) \: Q(x+j, b) Q(x,j) \\
&= Q(x+j, b) Q(x,j)\;, \notag
\end{align}
because $\psi(b,j)/\psi(j,b) = \exp 2\pi i \langle b \wedge j, [M] \rangle = 1$. $Q(x,b)$ and $Q(x+j, b)$ both act by multiplication by a complex number, so \eqref{EqAlphIndepX2} implies that they are equal. As the action of $H^{2\ell+1}_{\rm free}(M; \bar{\Lambda}^\ast)$ on $\mathsf{Y}^{\rm flat}_M$ has the same orbits as the action of $\mathsf{C}_M$, $Q(x,b)$ depends only on the equivalence class of $x$ in $\mathsf{Y}^{\rm flat}_M/\mathsf{C}_M$.
\end{proof}
\noindent At the expense of a less elegant proof, we can show
\begin{proposition}
\label{PropQIndepJLambVal2}
$Q(x,b)$ depends only on the equivalence class of $x$ in $\mathsf{Y}^{\rm flat}_M/\mathsf{E}_M$.
\end{proposition}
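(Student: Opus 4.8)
The plan is to leverage Proposition~\ref{PropQIndepJLambVal} and upgrade $\mathsf{C}_M$-invariance to $\mathsf{E}_M$-invariance. By Corollary~\ref{CorStructEM} one has the short exact sequence $0\to\mathsf{C}_M\xrightarrow{\rho_{\Lambda^\ast}}\mathsf{E}_M\xrightarrow{\beta_\Gamma}\mathsf{K}_M\to 0$, and the $\mathsf{E}_M$-action on $\mathsf{Y}^{\rm flat}_M$ is by addition of flat differential cocycles; granting Proposition~\ref{PropQIndepJLambVal}, it therefore suffices to prove $Q(x+e,b)=Q(x,b)$ for every $e\in\mathsf{E}_M$ (equivalently, for $e$ running over a set of $\mathsf{C}_M$-coset representatives). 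Fix a representative $\check{e}=(\hat{a}_e,\hat{h}_e,0)\in\mathsf{F}_M$ of such an $e$, so $\hat{a}_e=-d\hat{h}_e$ with $\hat{a}_e$ a torsion $\bar\Lambda$-valued cocycle and $\hat{h}_e\in C^{2\ell+1}(M;\bar\Lambda^\ast)$. Both $Q(x+e,b)$ and $Q(x,b)$ are scalars, acting on the Hermitian lines $\mathcal{S}^e(M,x+e)$ and $\mathcal{S}^e(M,x)$ respectively (which are defined since $\beta(x+e)\in a_M+\mathsf{K}_M$ whenever $\beta(x)$ is), and $x+b=x$, $(x+e)+b=x+e$ since $b$ is $\bar\Lambda$-valued.

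The second step, and the main obstacle, is to compare these two scalars. Unlike the case $j\in H^{2\ell+1}_{\rm free}(M;\bar\Lambda^\ast)$ treated in Proposition~\ref{PropQIndepJLambVal}, the shift by $e$ changes the underlying torsion characteristic class, so $\mathcal{S}^e(M,x)$ and $\mathcal{S}^e(M,x+e)$ sit over different torsion background fluxes and cannot be connected by a differential cocycle on a cylinder $M\times I$; the ``canonical intertwiner plus scalar cancellation'' argument of Proposition~\ref{PropQIndepJLambVal} is thus unavailable. Instead I would realise the ratio $Q(x+e,b)\,Q(x,b)^{-1}$ as an exponentiated action on a single closed manifold, in the style of Propositions~\ref{PropQ2beq1} and \ref{PropPsiIndepx}: use the step-cochain ($\hat\theta$) trick of Proposition~\ref{PropQ2beq1} to replace the non-flat cylinder cocycles $\check{y}_{\check{x}+\check{e},\hat{b}}$ and $\check{y}_{\check{x},\hat{b}}$ by equivalent ones whose two boundary restrictions literally agree, close up into $M\times S^1$, and obtain cocycles $\check{z}_1,\check{z}_0\in\check{Z}^{2\ell+2}(M\times S^1;\bar\Lambda)$ with $\mathcal{S}^e(M\times S^1,\check{z}_1)=Q(x+e,b)$, $\mathcal{S}^e(M\times S^1,\check{z}_0)=Q(x,b)$, and $\check{z}_1=\check{z}_0+\check{e}_{S^1}$, where $\check{e}_{S^1}$ is the pull-back of $\check{e}$. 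Then $Q(x+e,b)\,Q(x,b)^{-1}=\exp 2\pi i\,\langle\bar{l}(\check{z}_1)\boxminus\bar{l}(\check{z}_0),[M\times S^1]_E\rangle$. (One could instead avoid the closing-up and use the bounded-manifold formula of Proposition~\ref{PropActFromBoundedMan}, as in Proposition~\ref{PropPsiIndepx}.)

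The third step is the computation that this pairing vanishes, exactly in the spirit of Propositions~\ref{PropQ2beq1}, \ref{PropDepQCohomClassj} and \ref{PropQIndepPar}. Writing $\check{z}_0=(\hat{a}_0,\hat{h}_0,\hat\omega_0)$ and expanding $\bar{l}(\check{z}_0+\check{e}_{S^1})$ with \eqref{EqLagSum} and the E-cochain group law \eqref{EqGrpLawCocModETh}, and using that $\check{e}_{S^1}$ is flat (which kills the curvature-dependent cross-terms), one gets $\bar{l}(\check{z}_1)\boxminus\bar{l}(\check{z}_0)=\bar{l}(\check{e}_{S^1})\boxplus\bigl(\tfrac12(\hat{a}_0\cup\hat{h}_e+\hat{a}_e\cup\hat{h}_0+\hat{h}_e\cup\hat\omega_0)-\tfrac12\,\hat{a}_0\cup_1\hat{a}_e,\,0\bigr)$. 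Then, using $\hat{a}_e=-d\hat{h}_e$, the cocycle relation for $\check{z}_0$, the higher-cup identity \eqref{EqRelHighCupProd} to turn the $\cup_1$-term into a commutator plus an exact term, and the vanishing of $\hat{b}\wedge\hat{b}$, $d\hat{t}\wedge d\hat{t}$ and of $(4\ell+3)$-forms whose legs all lie on $M$, I expect the right-hand side to collapse, modulo exact E-cocycles, to $\bar{l}(\check{e}_{S^1})$ together with an E-cocycle of the pulled-back form \eqref{EqPullBackELag}, all pulled back from $M$ along $M\times S^1\to M$ and with characteristic in the first summand of \eqref{EqDecompCohomCyl}. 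Proposition~\ref{PropActVanishPullBackTorus}, which applies because $M$ carries an admissible Wu structure (Proposition~\ref{PropDefAdmWuStruct}), then forces the pairing to vanish, so $Q(x+e,b)=Q(x,b)$. The delicate parts will be the careful bookkeeping in this non-abelian E-cochain computation and verifying that the $\hat\theta$-closure really does produce cocycles computing the two operators and differing exactly by $\check{e}_{S^1}$ (the $\hat{h}_e$-term must be carried along consistently); the conceptual obstruction, as noted, is that the comparison cannot be effected by a canonical isomorphism, which is presumably why the author bills this as ``a less elegant proof''.
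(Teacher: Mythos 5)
Your proposal is correct and follows essentially the same route as the paper: both compute $Q(x+e,b)$ and $Q(x,b)$ via the $\hat\theta$-closure on $M\times S^1$, expand the difference of Lagrangians using \eqref{EqLagSum} and the E-cochain group law \eqref{EqGrpLawCocModETh}, absorb the symmetrized cross-terms into a $\cup_1$ correction via \eqref{EqRelHighCupProd}, and conclude by observing that the residual E-cocycle $\bigl(\tfrac12\hat{c}\cup(\hat{j}-\hat{\eta}_\Lambda),\hat{c}_2\bigr)$ is pulled back from $M$, so Proposition \ref{PropActVanishPullBackTorus} kills it. The only cosmetic difference is your preliminary reduction via Proposition \ref{PropQIndepJLambVal} to $\mathsf{C}_M$-coset representatives, which the paper skips by treating all of $\mathsf{E}_M$ at once.
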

\begin{proof}
Suppose $w \in \mathsf{E}_M$, with a lifting cocycle $\check{w} = (\hat{c}, \hat{j}, 0) \in \mathsf{F}_M$, $d\hat{j} = -\hat{c}$ and $\hat{j}$ valued in $\bar{\Lambda}^\ast$. Using the arguments of the proof of Proposition \ref{PropQ2beq1} and the notation \eqref{EqDefLagTorus}, we can compute $Q(x + w, b)$ and $Q(x, b)$ by evaluating the actions associated respectively to the Lagrangians $\hat{l}_{\check{x} + \check{w}, \hat{b}}$ and $\hat{l}_{\check{x}, \hat{b}}$ on $M \times S^1$.

We have
\be
\hat{l}_{\check{x} + \check{w}, \hat{b}} = \hat{l}_{\check{x}, \hat{b}} + \Delta \hat{l} \;,
\ee
with (up to exact terms)
\begin{align}
\Delta \hat{l} \: & = \frac{1}{2} \hat{c} \cup (\hat{j} - \hat{\eta}_\Lambda) + \frac{1}{2} (\hat{a} - \hat{b} \cup d\hat{\theta}) \cup \hat{j} + \frac{1}{2} \hat{j} \cup (\hat{a} - \hat{b} \cup d\hat{\theta}) \\
& = \frac{1}{2} \hat{c} \cup (\hat{j} - \hat{\eta}_\Lambda) + \frac{1}{2} \hat{c} \cup_1 (\hat{a} - \hat{b} \cup d\hat{\theta}) \;.
\end{align}
Writing 
\be
\bar{l}_{\check{x}, \hat{b}} = \left( \hat{l}_{\check{x}, \hat{b}}, (\hat{a} - \hat{b} \cup d\hat{\theta})_2 \right) \;, \quad \bar{l}_{\check{x} + \check{w}, \hat{b}} = \left( \hat{l}_{\check{x} + \check{w}, \hat{b}}, (\hat{a} + \hat{c} - \hat{b} \wedge d\hat{\theta})_2 \right) \;, 
\ee
we have
\be
\bar{l}_{\check{x} + \check{w}, \hat{b}} = \bar{l}_{\check{x}, \hat{b}} \boxplus \left(\frac{1}{2} \hat{c} \cup (\hat{j} - \hat{\eta}_\Lambda),\hat{c}_2 \right) \;. 
\ee
The second term on the right hand side is pulled back from $M$, so Proposition \ref{PropActVanishPullBackTorus} ensures that the actions constructed from $\bar{l}_{\check{x} + \check{w}, \hat{b}}$ and $\bar{l}_{\check{x}, \hat{b}}$ are equal, proving the proposition.
\end{proof}
\noindent Recall that the image of $\beta(x)$ in $H^{2\ell+2}_{\rm tors}(M;\bar{\Lambda})/\mathsf{K}_M$ is fixed by the requirement that $V_{M,x}$ is a non-trivial vector space. With this requirement, $x$ should be chosen in a torsor for the group $H^{2\ell+1}(M;\mathscr{V})/H^{2\ell+1}(M;\bar{\Lambda}^\ast)$, which is connected. As $Q(x,b) \in \{\pm 1\}$ and is continuous in $x$, we obtain
\begin{proposition}
\label{PropQxbIndepX}
$Q(x,b)$ is independent of $x$ for $b \in H^{2\ell+1}_{\rm free}(M; \bar{\Lambda})$.
\end{proposition}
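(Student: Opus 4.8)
The plan is to reduce the statement to the elementary fact that a continuous function from a connected space to a discrete set is constant. Three inputs are needed. First, by the Corollary following Proposition~\ref{PropQQuadrRef}, $Q(x,b)$ — canonically an automorphism of the Hermitian line $\mathcal{S}^e(M,x)$, i.e.\ a complex number — lies in the discrete set $\{\pm 1\}$. Second, $Q(x,b) = \mathcal{S}^e(M\times I,\check{y}_{\check{x},\hat{b}})$ depends continuously on the flat differential cohomology class $x$: two classes in the same connected component of $\mathsf{Y}^{\rm flat}_M$ admit representatives $(\hat{a},\hat{h},0)$ and $(\hat{a},\hat{h}+\hat{c},0)$ with the same characteristic $\hat{a}$, so along a path in that component only the connection piece $\hat{h}$ entering \eqref{EqDefCocYOnCyl} varies, and it varies continuously; since $Q(x,b)$ is obtained by pairing the Lagrangian \eqref{EqDefLag} of $\check{y}_{\check{x},\hat{b}}$ with a fixed representative of the fundamental $E$-homology class, and \eqref{EqDefLag} depends continuously (indeed polynomially) on $\hat{h}$, the map $x\mapsto Q(x,b)$ is continuous on each connected component of $\mathsf{Y}^{\rm flat}_M$. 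Third, by Proposition~\ref{PropQIndepJLambVal2}, $Q(x,b)$ descends to a function on $\mathsf{Y}^{\rm flat}_M/\mathsf{E}_M$.

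I would then assemble these as follows. One may assume $\beta(x)\in a_M+\mathsf{K}_M$, since otherwise $V_{M,x}=0$ and there is nothing to prove. Recall $\mathsf{Y}^{\rm flat}_M\simeq H^{2\ell+1}(M;\mathscr{V}/\bar{\Lambda})$, whose connected components are indexed by $\beta(x)\in H^{2\ell+2}_{\rm tors}(M;\bar{\Lambda})$, the component over a fixed value being a coset of the image of the real vector space $H^{2\ell+1}(M;\mathscr{V})$, hence connected (a torus, namely $H^{2\ell+1}(M;\mathscr{V})/H^{2\ell+1}(M;\bar{\Lambda}^\ast)$). By \eqref{EqShExSeqEM} the homomorphism $\beta_\Gamma$ sends $\mathsf{E}_M$ onto $\mathsf{K}_M$, and the $\mathsf{E}_M$-action of Proposition~\ref{PropActFinGaugFields} shifts $\beta(x)$ by $\beta_\Gamma$; hence $\mathsf{E}_M$ permutes transitively the $|\mathsf{K}_M|$ components of $\mathsf{Y}^{\rm flat}_M$ lying over the coset $a_M+\mathsf{K}_M$, so that the image of $\{x:\beta(x)\in a_M+\mathsf{K}_M\}$ in $\mathsf{Y}^{\rm flat}_M/\mathsf{E}_M$ is connected. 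By the first and second inputs, $Q(\cdot,b)$ is a continuous $\{\pm 1\}$-valued function on this connected space, so it is constant; combined with the third input this says exactly that $Q(x,b)$ is independent of $x$.

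The only real obstacle is topological bookkeeping: $\mathsf{Y}^{\rm flat}_M$ itself is disconnected, so continuity in $x$ is not by itself enough, and one must first invoke Proposition~\ref{PropQIndepJLambVal2} to glue together the finitely many torsion components lying over $a_M+\mathsf{K}_M$ into a single connected piece before the ``continuous map into a discrete set'' argument can be applied. Verifying continuity of $Q$ along a single component, and transitivity of the $\mathsf{E}_M$-action on those components, are both routine given the results already established.
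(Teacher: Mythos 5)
Your proof is correct and follows essentially the same route as the paper, which also deduces the statement from the fact that $Q(x,b)\in\{\pm 1\}$ is continuous in $x$ on the connected space of admissible backgrounds (the paper states this in a two-sentence remark just before the proposition, implicitly relying on Proposition~\ref{PropQIndepJLambVal2} as you do explicitly). Your version merely spells out the topological bookkeeping — the $|\mathsf{K}_M|$ torsion components over $a_M+\mathsf{K}_M$ and their transitive permutation by $\mathsf{E}_M$ — that the paper leaves implicit.
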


We write $\mathfrak{q}_M$ for the multiplicative quadratic refinement of $\psi$ defined by $Q$ on $H^{2\ell+1}_{\rm free}(M; \bar{\Lambda})/2H^{2\ell+1}_{\rm free}(M; \bar{\Lambda})$. Let us write $\mathsf{G}^{(2)}_M$ for the quotient of $H^{2\ell+1}_{\rm free}(M; \bar{\Lambda})/2H^{2\ell+1}_{\rm free}(M; \bar{\Lambda})$ by the radical $\mathsf{R}^{(2)}_M$ of the pairing $\psi$. 
\begin{proposition}
\label{PropqMCobTame}
$\mathfrak{q}_M$ is a tame quadratic refinement, i.e. it vanishes on $\mathsf{R}^{(2)}_M$.
\end{proposition}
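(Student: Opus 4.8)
The plan is to show that $\mathfrak{q}_M$ vanishes on the radical $\mathsf{R}^{(2)}_M$ by exhibiting, for any class $b \in H^{2\ell+1}_{\rm free}(M;\bar{\Lambda})$ lying in the radical of $\psi$, a geometric situation in which $Q(x,b)$ can be computed and shown to equal $1$. The key idea is to use the fact that $b$ pairs trivially with all of $H^{2\ell+1}_{\rm free}(M;\bar{\Lambda}^\ast)$ under the wedge pairing of \eqref{EqRelPsiPairing}, which by Poincar\'e duality on the closed $4\ell+2$-manifold $M$ means that $b$, viewed in real cohomology, is $2$-divisible: $b = 2b'$ for some $b' \in H^{2\ell+1}(M;\mathscr{V})$ (more precisely $\hat b$ differs from twice an integral class by a cochain whose wedge pairings with all $\bar\Lambda^\ast$-valued classes are integral, so that $b$ lies in $2 H^{2\ell+1}_{\rm free}(M;\bar\Lambda^\ast) \cap H^{2\ell+1}_{\rm free}(M;\bar\Lambda)$ modulo the relevant identifications). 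The point is that $b/2$ makes sense as a real class, even though $b/2$ need not be $\bar\Lambda$-integral.

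First I would set up the computation of $Q(x,b)$ on the torus $M \times S^1$ exactly as in the proof of Proposition \ref{PropQ2beq1}, using the differential cocycle $\check y_{\check x,\hat b}$ made periodic via the step cochain $\hat\theta$, so that $Q(x,b)$ equals the exponentiated action $S^e(M \times S^1, \check y)$ with $\check y = (\hat a - \hat b\cup d\hat\theta,\ \hat h + \hat b\wedge\hat t - \hat b\cup\hat\theta,\ -\hat b\wedge d\hat t)$. Next I would use the radical hypothesis: since $\psi$ restricted to $H^{2\ell+1}_{\rm free}/2H^{2\ell+1}_{\rm free}$ is the symmetric pairing induced by $\langle j_1\wedge j_2,[M]\rangle \bmod 2$, and $b$ lies in its radical, I can write $2\hat b = 2\hat b' + (\text{integral correction})$ at the level of forms, and then apply Proposition \ref{PropActFromBoundedMan}: the torus $M\times S^1$ bounds $M\times D^2$, and the action is $-\tfrac12\langle \hat\omega_W\wedge(\hat\omega_W+\hat\lambda),[W,M\times S^1]\rangle$ for a suitable extension. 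With $\hat\omega_W$ built from $\hat b\wedge(\text{form on }D^2)$, the $\hat\omega_W\wedge\hat\lambda$ term vanishes for degree reasons (both $\hat b$ and $\hat\lambda$ live on $M$, no room on $D^2$ beyond one $dt$), and the $\hat\omega_W\wedge\hat\omega_W$ term is proportional to $\langle b\wedge b,[M]\rangle$ times $\tfrac12$ — but $b\wedge b = 2\, b\wedge b'$ in real cohomology with $b' $ such that $b\wedge b' \in \mathbb{Z}$, so the factor of $\tfrac12$ is absorbed and the action is an integer, giving $Q(x,b)=1$.

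The main obstacle I anticipate is the divisibility argument: one must be careful that "$b$ is in the radical of $\psi$" is the statement $\langle b\wedge j,[M]\rangle \in 2\mathbb{Z}$ for all $j \in H^{2\ell+1}_{\rm free}(M;\bar\Lambda^\ast)$, and extracting from this a usable real (or rational) "$b/2$" that interacts correctly with the cup/wedge products in the E-theory Lagrangian requires the non-degeneracy of the cup-product pairing between $H^{2\ell+1}_{\rm free}(M;\bar\Lambda)$ and $H^{2\ell+1}_{\rm free}(M;\bar\Lambda^\ast)$, i.e. Poincar\'e--Lefschetz duality for the local system (Appendix \ref{SecPerPair}). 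The cleanest route is probably to avoid explicit division and instead argue as follows: by the refinement property Proposition \ref{PropQQuadrRef}, $\mathfrak q_M(b)^2 = \psi(b,b)$, and $\psi(b,b)=1$ since $\psi$ is alternating, so $\mathfrak q_M(b) = \pm 1$ always; then to pin down the sign when $b\in\mathsf R^{(2)}_M$, use bimultiplicativity of $\psi$ and the fact that $b$ being in the radical means $\mathfrak q_M(b+b') = \mathfrak q_M(b)\mathfrak q_M(b')\psi(b,b') = \mathfrak q_M(b)\mathfrak q_M(b')$ for all $b'$ — i.e. $\mathfrak q_M$ restricted to the radical is a homomorphism to $\{\pm 1\}$, and moreover $\mathfrak q_M$ descends to a quadratic refinement of the \emph{non-degenerate} pairing on $\mathsf G^{(2)}_M$ iff this homomorphism is trivial. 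Establishing triviality of that homomorphism is exactly what requires the geometric boundedness computation above, so ultimately one cannot fully sidestep it; I expect the form-degree vanishing on $M\times D^2$ together with the $2$-divisibility of $b$ in real cohomology to be the crux, and I would present that as the heart of the proof while relegating the duality bookkeeping to a citation of Appendix \ref{SecPerPair}.
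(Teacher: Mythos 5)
Your proposal has a genuine gap, and it is in the step you yourself identify as the crux. First, the divisibility claim is not available: $b\in\mathsf{R}^{(2)}_M$ only says that $\langle \hat b\wedge \hat b',[M]\rangle$ is even for all $b'\in H^{2\ell+1}_{\rm free}(M;\bar\Lambda)$, i.e.\ evenness against the $\bar\Lambda$-valued classes. The pairing that is unimodular by Poincar\'e--Lefschetz duality is the one between $H^{2\ell+1}_{\rm free}(M;\bar\Lambda)$ and $H^{2\ell+1}_{\rm free}(M;\bar\Lambda^\ast)$; restricted to $\bar\Lambda\times\bar\Lambda$ it is not, so no $2$-divisibility of $b$ follows. (In the totally even case the radical is \emph{all} of $H^{2\ell+1}_{\rm free}(M;\bar\Lambda)/2H^{2\ell+1}_{\rm free}(M;\bar\Lambda)$, but essentially no class is $2$-divisible.) Second, and independently, the bounding step fails: the differential cocycle $\check y$ used to compute $Q(x,b)$ on $M\times S^1$ has characteristic containing the component $\hat b\cup d\hat\theta$, representing $b\cup g_{S^1}$, and this class does not extend to $M\times D^2$ (compare the proof of Proposition \ref{PropActVanishPullBackTorus}, which only extends classes pulled back from $M$). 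If your degree-counting argument on $M\times D^2$ were admissible it would use nothing about the radical and would prove $Q(x,b)=1$ for \emph{every} $b$, contradicting the nontriviality of the Arf invariant $\mathfrak a_M$ on the $2$-torus with non-bounding spin structure. Your fallback route correctly establishes $\mathfrak q_M(b)=\pm1$ and that $\mathfrak q_M$ is a character on $\mathsf{R}^{(2)}_M$, but, as you concede, it defers the sign determination back to the broken geometric step.

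The paper's argument is much more elementary: it stays on $M\times S^1$ and simply reruns the cochain computation of Proposition \ref{PropQ2beq1} with $\hat b$ substituted for $2\hat b$. The terms which in that proof were integral because of the explicit factor of $2$ now carry a bare $\tfrac12$ in front of cup products of $\hat b$ with $\bar\Lambda$-valued cochains; the radical hypothesis says exactly that these pairings are even, so the terms still vanish modulo $1$ and the Lagrangian again reduces to the pulled-back form \eqref{EqPullBackELag}, whence $Q(x,b)=1$ by Proposition \ref{PropActVanishPullBackTorus}. If you want to salvage your write-up, replace the $M\times D^2$ argument by this direct substitution.
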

\begin{proof}
If $b \in \mathsf{R}^{(2)}_M$, its lifts to $H^{2\ell+1}_{\rm free}(M; \bar{\Lambda})$ have even pairing with any element of $H^{2\ell+1}_{\rm free}(M; \bar{\Lambda})$. Replacing $2b$ by $b$ in the proof of Proposition \ref{PropQ2beq1}, we find that $Q(x,b) = 1$.
\end{proof}

Proposition \ref{PropqMCobTame} ensures $\mathfrak{q}_M$ has a well-defined Arf invariant $\mathfrak{a}_M \in \{0,1\}$. We will refer to $\mathfrak{a}_M \in \{0,1\}$ as the Arf invariant of the $4\ell+2$-dimensional $\mathfrak{W}$-manifold $M$. 
\begin{proposition}
\label{PropArfInvCobInv}
Assume there exists a $4\ell+3$-dimensional $\mathfrak{W}$-manifold $W$ such that $\partial W = M$ as $\mathfrak{W}$-manifolds. Then $\mathfrak{a}_M = 0$. In other words, $\mathfrak{a}_M$ is a cobordism invariant of $4\ell+2$-dimensional $\mathfrak{W}$-manifolds.
\end{proposition}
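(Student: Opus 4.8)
The strategy is to show that the tame quadratic refinement $\mathfrak{q}_M$ vanishes identically on a Lagrangian subgroup of $\mathsf{G}^{(2)}_M$; by the classification of quadratic refinements in Appendix \ref{SecPairFinAbGroups} this forces the Arf invariant $\mathfrak{a}_M$ to be $0$. The Lagrangian will be the one produced by the null-cobordism: by the results of Appendix \ref{SecPerPair} on the cohomology of boundaries, $W$ determines a Lagrangian subgroup $\mathsf{L}\subset\mathsf{G}^{(2)}_M$, namely the image of the restriction homomorphism $\mathsf{L}_0 := {\rm im}\big(H^{2\ell+1}_{\rm free}(W;\bar{\Lambda})\to H^{2\ell+1}_{\rm free}(M;\bar{\Lambda})\big)$ after reduction mod $2$ and mod the radical $\mathsf{R}^{(2)}_M$ of $\psi$. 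It thus suffices to prove that $Q(x,b) = 1$ for every $b\in\mathsf{L}_0$, i.e. for every $b\in H^{2\ell+1}_{\rm free}(M;\bar{\Lambda})$ lying in the image of restriction from $W$.

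Fix such a $b$. Choose a closed differential form $\hat{\tilde b}$ on $W$ representing a preimage of $b$, and set $\hat b := \hat{\tilde b}|_M$, a closed form representing $b$. By Proposition \ref{PropQxbIndepX} we may compute $Q(0,b)$. Arguing exactly as in the proof of Proposition \ref{PropQ2beq1}, with $b$ playing the role of $2b$, we have $Q(0,b) = S^e(M\times S^1,\check{y})$, where $\check{y} = (-\hat b\cup d\hat\theta,\ \hat b\wedge\hat t - \hat b\cup\hat\theta,\ -\hat b\wedge d\hat t)$ is the $\hat\theta$-corrected cocycle on the mapping torus $M\times S^1$ obtained from the cylinder cocycle; here $M\times S^1$ carries the $\mathfrak{W}$-structure used in Section \ref{SecWilHooftOp} (with the non-bounding spin structure on the $S^1$ factor when $\ell=0$). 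Since $b$ extends over $W$, the cocycle $\check{y}$ extends to $\check{z} = (-\hat{\tilde b}\cup d\hat\theta,\ \hat{\tilde b}\wedge\hat t - \hat{\tilde b}\cup\hat\theta,\ -\hat{\tilde b}\wedge d\hat t)$ on $W\times S^1$, whose curvature is $\hat\omega_{W\times S^1} = -\hat{\tilde b}\wedge\hat g$ with $\hat g$ the angular $1$-form on $S^1$. As $M\times S^1 = \partial(W\times S^1)$, Proposition \ref{PropActFromBoundedMan} applies and yields
\be
S(M\times S^1,\check{y}) = -\tfrac12\,\big\langle\hat\omega_{W\times S^1}\wedge(\hat\omega_{W\times S^1}+\hat\lambda),\,[W\times S^1,\,M\times S^1]\big\rangle\ .
\ee
Because $W$ has dimension $4\ell+3$ its degree $2\ell+2$ Wu class vanishes, hence so does that of $W\times S^1$ (the Wu class of $S^1$ being trivial); moreover the given Wu structure on $M$ extends over $W$, this being part of the hypothesis $\partial W = M$ as $\mathfrak{W}$-manifolds. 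One may therefore choose the lift $\hat\nu_\Lambda$ of the Wu cocycle and the extension $\hat\mu$ of $\hat\eta_\Lambda$ on $W\times S^1$ so that $\hat\lambda = \hat\nu_\Lambda - d\hat\mu = 0$. The surviving term is $-\tfrac12\langle\hat\omega_{W\times S^1}\wedge\hat\omega_{W\times S^1},[W\times S^1,M\times S^1]\rangle$, and $\hat\omega_{W\times S^1}\wedge\hat\omega_{W\times S^1} = \hat{\tilde b}\wedge\hat g\wedge\hat{\tilde b}\wedge\hat g = 0$ by skew-symmetry of the wedge product ($\hat g\wedge\hat g = 0$). Hence $S(M\times S^1,\check{y}) = 0$ and $Q(0,b) = 1$.

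Combining the two parts, $\mathfrak{q}_M$ vanishes on the Lagrangian $\mathsf{L}$, so $\mathfrak{a}_M = 0$ by Appendix \ref{SecPairFinAbGroups}, which is the claim. The steps I expect to need the most care are the precise construction of the Lagrangian $\mathsf{L}$ — i.e. verifying that the integral ``half lives, half dies'' statement for the $\bar{\Lambda}$-valued intersection pairing of $M$ survives reduction mod $2$ and mod $\mathsf{R}^{(2)}_M$ to a genuine maximal isotropic subgroup of $\mathsf{G}^{(2)}_M$ (this is the content cited from Appendix \ref{SecPerPair}) — and the bookkeeping of the Wu data on $W\times S^1$ needed to legitimately arrange $\hat\lambda = 0$ in Proposition \ref{PropActFromBoundedMan}; both are routine but somewhat technical.
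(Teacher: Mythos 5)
Your proposal is correct and follows essentially the same route as the paper: it identifies the Lagrangian subgroup of $\mathsf{G}^{(2)}_M$ coming from restriction off $W$ (Proposition \ref{PropLMLagSubMod2}), computes $\mathfrak{q}_M(b)=Q(x,b)$ on $M\times S^1$ via the $\hat\theta$-corrected cocycle as in Proposition \ref{PropQ2beq1}, extends it to $W\times S^1$, and kills the action with Proposition \ref{PropActFromBoundedMan} using $\hat\lambda=0$ and $\hat\omega\wedge\hat\omega=0$. The only cosmetic difference is that you justify the vanishing of the Wu class of $W\times S^1$ via the dimension of $W$ while the paper invokes the product structure; both are fine.
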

\begin{proof}
As $\mathfrak{q}_{M}$ is tame, it reduces to a quadratic refinement $\mathfrak{q}'_{M}$ on $\mathsf{G}^{(2)}_M$, on which $\psi$ is non-degenerate. A symmetric $\mathbb{Z}_2$-valued pairing is also skew-symmetric, so $\mathsf{G}^{(2)}_M$ admits Lagrangian subgroups. As explained in Appendix \ref{SecPairFinAbGroups}, such a quadratic refinement has Arf invariant zero if and only if there is a Lagrangian subgroup on which it vanishes identically. Proposition \ref{PropLMLagSubMod2} shows that the elements of $\mathsf{G}^{(2)}_M$ that extend to $W$ form a Lagrangian subgroup $\mathsf{L}^{(2)}_W$.

Let $b \in \mathsf{L}^{(2)}_W$. Following the argument in the proof of Proposition \ref{PropQ2beq1}, we can compute $\mathfrak{q}'_M(b) = Q(x,b)$ by evaluating the exponentiated action on $M \times S^1$ endowed with the differential cocycle
\be
\check{y} = (\hat{a} -  \hat{b} \cup d\hat{\theta}, \hat{h} + \hat{b} \wedge \hat{t} - \hat{b} \cup \hat{\theta}), -\hat{b} \wedge d\hat{t})
\ee
Writing $\hat{c}$ for the extension of $\hat{b}$ to $W$, $\check{y}$ extends to a differential cocycle on $W \times S^1$ with curvature given by $-\hat{c} \wedge d\hat{t}$. We can therefore evaluate the action using Proposition \ref{PropActFromBoundedMan}. As $W \times S^1$ is a direct product, its Wu class vanishes, so we can take $\lambda = 0$. Moreover $(\hat{c} \wedge d\hat{t}) \wedge (-\hat{c} \wedge d\hat{t}) = 0$, so the first term in \eqref{EqACtFromBoundedMan} vanishes as well. The exponentiated action is therefore equal to 1. This shows that $\mathfrak{q}'_M$ vanishes on $\mathsf{L}^{(2)}_W$ and that $M$ has Arf invariant $0$.
\end{proof}

Consider a 2-torus with a distinguished basis of its first homology. Pick the non-bounding spin structure along the generators. The resulting spin structure is actually independent of the choice of basis, and the corresponding Arf invariant is $1$. This example shows that the cobordism invariant defined above can be non-trivial. It also shows that the cobordism invariant $\mathfrak{a}_M$ is a generalization of the Arf invariant of spin structures \cite{Atiyah1971}.

\begin{proposition}
\label{PropChgWuStructCobInvVan}
Suppose there is a $4\ell+2$-dimensional $\mathfrak{W}$-manifold $M$ with an admissible Wu structure such that $\mathfrak{a}_M = 1$. One can always change the Wu structure to an admissible one such that $\mathfrak{a}_M = 0$. 
\end{proposition}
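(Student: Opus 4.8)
\emph{Overall strategy.} The plan is to determine how the tame quadratic refinement $\mathfrak{q}_M$ of Section~\ref{SecCobInv} (equivalently, its Arf invariant $\mathfrak{a}_M$) depends on the admissible Wu structure, and then to exhibit a change of Wu structure inside the admissible torsor that flips $\mathfrak{a}_M$.

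\emph{Variation of $\mathfrak{q}_M$.} First I would recompute $\mathfrak{q}_M(b)=Q(x,b)$ as in the proof of Proposition~\ref{PropArfInvCobInv}, i.e.\ as $S^e$ evaluated on $M\times S^1$ equipped with the differential cocycle $\check{y}$ whose characteristic class is $a-b\cup g_{S^1}$. Changing the Wu structure on $M$ by $\gamma\in H^{2\ell+1}(M;\bar{\Gamma}^{(2)})$ changes the product Wu structure on $M\times S^1$ by $\pi^\ast\gamma$, so Proposition~\ref{PropChangWu} multiplies $S^e(M\times S^1,\check{y})$ by $\exp\!\big(-2\pi i\,\langle(a-b\cup g_{S^1})_2\cup\pi^\ast\gamma,\,[M\times S^1]\rangle\big)$. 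The $(a)_2$-term is pulled back from $M$ and hence vanishes for degree reasons, while integrating $g_{S^1}$ over $S^1$ collapses the remaining term to $\langle (b)_2\cup\gamma,[M]\rangle$. Writing $r$ for the reduction $H^{2\ell+1}_{\rm free}(M;\bar{\Lambda})/2H^{2\ell+1}_{\rm free}(M;\bar{\Lambda})\to H^{2\ell+1}(M;\bar{\Gamma}^{(2)})$ and $\langle\bullet,\bullet\rangle_2$ for the non-degenerate $\mathbb{Z}_2$-valued cup-product pairing, one obtains $\mathfrak{q}^{\gamma}_M(b)=\mathfrak{q}_M(b)\,(-1)^{\langle r(b),\gamma\rangle_2}$. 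Since by~\eqref{EqExplFormPsi} the pairing $\psi$ on $H^{2\ell+1}_{\rm free}(M;\bar{\Lambda})/2H^{2\ell+1}_{\rm free}(M;\bar{\Lambda})$ is the pullback $r^\ast\langle\bullet,\bullet\rangle_2$, we get $\mathsf{R}^{(2)}_M=\ker r$, $\mathsf{G}^{(2)}_M={\rm im}\,r\subset H^{2\ell+1}(M;\bar{\Gamma}^{(2)})$, the character $\langle r(\bullet),\gamma\rangle_2$ descends to $\mathsf{G}^{(2)}_M$, and (because $\psi$ is non-degenerate there) it equals $\psi(\bullet,v_\gamma)$ for a unique $v_\gamma\in\mathsf{G}^{(2)}_M$.

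\emph{From $v_\gamma$ to $\mathfrak{a}_M$.} Recall (Appendix~\ref{SecPairFinAbGroups}) that if $\mathfrak{q}$ is a quadratic refinement of a non-degenerate alternating $\mathbb{F}_2$-pairing $\psi$, then $\mathfrak{q}(\bullet)\,\psi(\bullet,v_0)=\mathfrak{q}(\bullet+v_0)\,\mathfrak{q}(v_0)^{-1}$, so the Gauss sum, hence the Arf invariant, is multiplied by $\mathfrak{q}(v_0)$; thus ${\rm Arf}(\mathfrak{q}^{\gamma}_M)$ differs from ${\rm Arf}(\mathfrak{q}_M)$ exactly when $\mathfrak{q}_M(v_\gamma)=-1$. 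So it suffices to find $\gamma\in\mathsf{W}_M$ with $\mathfrak{q}_M(v_\gamma)=-1$. As $\gamma$ ranges over $\mathsf{W}_M=(\mathsf{K}'_2)^\perp$, where $\mathsf{K}'_2\subset H^{2\ell+1}(M;\bar{\Gamma}^{(2)})$ is the mod-$2$ reduction of $\mathsf{K}'$, the element $v_\gamma$ ranges over exactly $\mathsf{G}_0:=(\mathsf{G}^{(2)}_M\cap\mathsf{K}'_2)^{\perp}$, the orthogonal of $\mathsf{G}^{(2)}_M\cap\mathsf{K}'_2$ inside $\mathsf{G}^{(2)}_M$; this is a double-orthogonal computation in the non-degenerate space $H^{2\ell+1}(M;\bar{\Gamma}^{(2)})$ using that $\psi$ is non-degenerate on $\mathsf{G}^{(2)}_M$. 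The key point is that $\mathsf{K}'_2$ is isotropic for $\langle\bullet,\bullet\rangle_2$: each element of $\mathsf{K}'$ has the form $t\cup g_{S^1}$ with $t\in H^{2\ell+1}_{\rm tors}(M;\bar{\Lambda})$, so each element of $\mathsf{K}'_2$ has a torsion $\bar{\Lambda}$-cohomology lift, and the $\mathbb{Z}_2$-cup pairing of two such classes equals the mod-$2$ reduction of $\langle\tilde{t}\cup\tilde{t}',[M]\rangle$, which vanishes because $\tilde{t}\cup\tilde{t}'$ is torsion in $H^{4\ell+2}(M;\mathbb{Z})$. Hence $\mathsf{R}_0:=\mathsf{G}^{(2)}_M\cap\mathsf{K}'_2$ is isotropic, so $\mathsf{R}_0\subset\mathsf{G}_0$ and $\psi|_{\mathsf{G}_0}$ has radical exactly $\mathsf{R}_0$.

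\emph{Conclusion.} It remains to check that $\mathfrak{q}_M$ is not identically $+1$ on $\mathsf{G}_0$. If $\mathsf{R}_0\subsetneq\mathsf{G}_0$, then $\psi$ descends to a non-zero pairing on $\mathsf{G}_0/\mathsf{R}_0$, so $\psi(v,w)=-1$ for some $v,w\in\mathsf{G}_0$, which forces $\mathfrak{q}_M$ to take the value $-1$ among $v$, $w$, $v+w$. If $\mathsf{R}_0=\mathsf{G}_0$, then $\psi|_{\mathsf{G}_0}=0$ and a dimension count ($\dim\mathsf{G}_0=\dim\mathsf{G}^{(2)}_M-\dim\mathsf{R}_0$) shows $\mathsf{G}_0$ is a Lagrangian of the symplectic $\mathbb{F}_2$-space $\mathsf{G}^{(2)}_M$, so $\mathfrak{q}_M|_{\mathsf{G}_0}\equiv+1$ would give $\mathfrak{a}_M=0$, contradicting the hypothesis. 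Either way, choose $\gamma\in\mathsf{W}_M$ with $\mathfrak{q}_M(v_\gamma)=-1$; then the Wu structure shifted by $\gamma$ stays in the $\mathsf{W}_M$-torsor of admissible Wu structures and has $\mathfrak{a}_M=0$. I expect the main obstacle to be the first step — establishing that the Wu-structure dependence of $\mathfrak{q}_M$ is the linear character $(-1)^{\langle r(\bullet),\gamma\rangle_2}$ with no quadratic correction, which hinges on the degree-vanishing of the $(a)_2$-contribution over $M\times S^1$ — together with the bookkeeping identifying the range of $v_\gamma$ and proving $\mathsf{K}'_2$ isotropic; granting these, the $\mathbb{F}_2$ Arf-invariant combinatorics in the last two steps are routine.
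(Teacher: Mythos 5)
Your proposal is correct in outline, and it shares with the paper the one indispensable ingredient: under a shift of the Wu structure by $\gamma$, the refinement $\mathfrak{q}_M$ is multiplied by the linear character $(-1)^{\langle r(\bullet),\gamma\rangle_2}$ — a fact the paper asserts without derivation and you derive cleanly from Proposition \ref{PropChangWu} applied to $M\times S^1$. After that the two arguments diverge. The paper is constructive and much shorter: fix any Lagrangian $\mathsf{L}\subset\mathsf{G}^{(2)}_M$; since $\mathfrak{q}'_M|_{\mathsf{L}}$ is linear it equals $(\bullet,x'_{\mathsf{L}})_{\mathsf{G}^{(2)}_M}$ for some $x'_{\mathsf{L}}$, and shifting the Wu structure by a lift-then-reduction $x_{\mathsf{L}}$ of $x'_{\mathsf{L}}$ produces the refinement $\mathfrak{q}_M-\chi_{\mathsf{L}}$, which vanishes on $\mathsf{L}$ and hence has Arf invariant zero by the criterion of Appendix \ref{SecPairFinAbGroups}; admissibility of the shift follows because reductions of integral classes pair trivially with reductions of torsion classes, so $x_{\mathsf{L}}\in\mathsf{W}_M$. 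You instead characterize the entire set of characters $\psi(\bullet,v_\gamma)$ reachable by admissible shifts (namely $v_\gamma\in\mathsf{G}_0=(\mathsf{G}^{(2)}_M\cap\mathsf{K}'_2)^\perp$, using the same integral-versus-torsion orthogonality to show $\mathsf{K}'_2$ isotropic) and then argue by non-degeneracy and contradiction that some reachable $v_\gamma$ satisfies $\mathfrak{q}_M(v_\gamma)=-1$, which flips the Gauss sum. Your route is longer but buys more: an exact description of the coset of admissible Wu structures with $\mathfrak{a}_M=0$, where the paper only needs existence. Two fine points in your write-up deserve care, though neither is fatal: (i) the identification $\mathsf{R}^{(2)}_M=\ker r$ requires ${\rm im}\,r$ to be a non-degenerate subspace of $H^{2\ell+1}(M;\bar{\Gamma}^{(2)})$; this is not automatic from $\psi=r^\ast\langle\bullet,\bullet\rangle_2$ alone, but it can be extracted from the tameness of the shifted refinement for every Wu structure (Proposition \ref{PropqMCobTame}) combined with your variation formula. (ii) In the degenerate case $\mathsf{R}_0=\mathsf{G}_0$ you invoke, correctly, that a quadratic refinement vanishing identically on a Lagrangian has Arf invariant zero — the same appendix criterion the paper uses.
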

\begin{proof}
Pick a Lagrangian subgroup $\mathsf{L}$ of $\mathsf{G}^{(2)}_M$. $\mathfrak{q}'_M|_\mathsf{L}$ is linear, hence coincides with a character $\chi_\mathsf{L} = ( \bullet, x'_{\mathsf{L}})_{\mathsf{G}^{(2)}_M}$ for some $x'_{\mathsf{L}} \in \mathsf{G}^{(2)}_M$. $x'_{\mathsf{L}}$ can be lifted to an element of $H^{2\ell+1}(M;\bar{\Lambda})$ and then reduced mod $2\Lambda$ to an element $x_{\mathsf{L}}$ of $H^{2\ell+1}(M;\bar{\Gamma}^{(2)})$. We can use $x_{\mathsf{L}}$ to shift the Wu structure of $M$. 

The quadratic refinement associated to the shifted Wu structure is $\mathfrak{q}_M - \chi_\mathsf{L}$. As it vanishes on $\mathsf{L}$, it has Arf invariant zero.

It remains to check that the Wu structure obtained in this way is admissible. Recall that admissible Wu structures differ by elements of the group $\mathsf{W}_N$, defined above Proposition \ref{PropDefAdmWuStruct}. $\mathsf{W}_N$ contains in particular the elements of $H^{2\ell+1}(M;\bar{\Gamma}^{(2)})$ orthogonal to the image of the torsion subgroup $H^{2\ell+1}_{\rm tors}(M;\bar{\Lambda})$ into $H^{2\ell+1}(M;\bar{\Gamma}^{(2)})$. From the construction above, it is clear that $x_{\mathsf{L}}$ belongs to $\mathsf{W}_M$. As the original Wu structure is admissible by assumption, the shifted Wu structure is admissible as well.
\end{proof}
\noindent We will see in Section \ref{SecStateSpace} that the state space of the gauged theory can be constructed only on $4\ell+2$-dimensional manifolds with vanishing Arf invariant. By the proposition above, this is a constraint on the Wu structure, not on the underlying manifold.

\subsection{Push-down}

\label{SecPushDown}

In the discussion so far, we have defined operators $Q(x,j)$ on $V_{M,x}$, with $j \in H^{2\ell+1}_{\rm free}(M; \Lambda^\ast)$. Let $\mathsf{G}_{M,2\Lambda} := H^{2\ell+1}_{\rm free}(M; \Lambda^\ast)/2H^{2\ell+1}_{\rm free}(M; \Lambda)$. We have
\begin{lemma}
$Q(x,j) = Q(x,j+2b)$, for $b \in H^{2\ell+1}_{\rm free}(M; \Lambda)$, hence $Q(x,j)$ is well-defined for $j \in \mathsf{G}_{M,2\Lambda}$.
\end{lemma}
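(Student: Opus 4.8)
The plan is to show $Q(x,j) = Q(x,j+2b)$ by combining the already-established multiplicative structure of the operators $Q$ with Proposition~\ref{PropQ2beq1}. First I would use the defining relation \eqref{EqDefPsiPair} of the pairing $\psi$, which expresses $Q(x+j,2b)\,Q(x,j)$ in terms of $Q(x,j+2b)$ and $\psi(j,2b)$ (or the analogous decomposition with $j$ and $2b$ interchanged). Concretely, from \eqref{EqDefPsiPair} one reads off $Q(x,j+2b) = \psi(j,2b)^{-1}\,Q(x+j,2b)\,Q(x,j)$ as an identity between isomorphisms $\mathcal{S}^e(M,x)\to\mathcal{S}^e(M,x+j+2b) = \mathcal{S}^e(M,x+j)$, using that $2b\in H^{2\ell+1}_{\rm free}(M;\bar{\Lambda})$ acts trivially on $\mathsf{Y}^{\rm flat}_M/\mathsf{C}_M$ and more simply that adding $2b$ does not change the orbit. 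Then Proposition~\ref{PropQ2beq1}, or rather its generalization via Proposition~\ref{PropQxbIndepX}, gives $Q(x+j,2b) = Q(x,2b) = 1$, while Corollary~\ref{CorPropSkewRefPsi} gives $\psi(j,2b) = \psi(j,b)^2$, which by \eqref{EqExplFormPsi} equals $\exp 2\pi i\langle j\wedge b,[M]\rangle$, an integer-valued pairing, hence equals $1$.

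The cleanest route, which I would actually write, is to bypass $\psi$ entirely: $Q(x,j+2b)$ and $Q(x,j)$ are both computed (via Corollary~\ref{CorHomFromManWBoundWNFDiffCoc} applied to a cylinder) by evaluating $\mathcal{S}^e$ on $M\times I$ with the cocycle $\check{y}_{\check{x},\hat{j}+2\hat{b}}$ versus $\check{y}_{\check{x},\hat{j}}$. The composition $Q(x,j+2b)\circ Q(x,j)^{-1}$, after the orientation-reversing gluing into $M\times S^1$, is $\mathcal{S}^e$ evaluated on a torus whose cocycle is, up to exact terms and the replacement of $2\hat{b}\wedge\hat{t}$-type terms as in the proof of Proposition~\ref{PropQ2beq1}, of the pulled-back form \eqref{EqPullBackELag} plus the contribution $\tfrac{1}{2}\hat a\cup 2\hat b\cup d\hat\theta$ and $\hat b\wedge\hat t\wedge\hat b\wedge d\hat t$ terms. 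Exactly the same computation as in Proposition~\ref{PropQ2beq1} shows all extra terms either vanish mod $1$ (they carry a factor of $2$), vanish by skew-symmetry of $\wedge$ (the $\hat b\wedge\hat b$ terms), or are exact, so the Lagrangian is equivalent to an E-cocycle pulled back from $M$, and Proposition~\ref{PropActVanishPullBackTorus} forces the action to vanish. Hence the two operators agree.

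I expect the main obstacle to be purely bookkeeping: tracking the $\hat\theta$-cochain manipulations (Leibniz rule, discarding exact terms, identifying which terms are $\tfrac12\mathbb{Z}$-valued versus genuinely exact) so that one lands exactly on the form \eqref{EqPullBackELag}, just as in Proposition~\ref{PropQ2beq1}. There is no new conceptual difficulty — the statement is essentially "$Q(x,\cdot)$ factors through reduction mod $2$ also in the $\bar{\Lambda}^\ast$ direction", and the proof is a one-line appeal to Proposition~\ref{PropQ2beq1} applied verbatim with $\hat j + 2\hat b$ in place of $\hat j$ and the shift by $2\hat b$ playing the role that $2\hat b$ played there. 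I would therefore write: "By the argument of the proof of Proposition~\ref{PropQ2beq1} with $\hat{j}$ replaced by $\hat{j} + 2\hat{b}$, the difference $Q(x,j+2b)\,Q(x,j)^{-1}$ is computed by an action on $M\times S^1$ whose Lagrangian is equivalent to an E-cocycle of the form \eqref{EqPullBackELag}; Proposition~\ref{PropActVanishPullBackTorus} then gives the claim," and leave the routine cochain algebra to the reader.
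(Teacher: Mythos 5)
Your first argument is exactly the paper's proof: the Lemma is deduced from the first line of \eqref{EqAlphIndepX2} with $2b$ substituted for $b$, together with $Q(x,2b)=1$ from Proposition \ref{PropQ2beq1} and $\psi(2b,j)=\psi(b,j)^2=1$ from the integrality of the $\bar{\Lambda}$--$\bar{\Lambda}^\ast$ pairing. The alternative direct cylinder computation you sketch is also sound but unnecessary, since it just re-derives the content of Proposition \ref{PropQ2beq1}.
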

\begin{proof}
This follows immediately from the discussions in the previous section, for instance from the first line of \eqref{EqAlphIndepX2} after substituting $2b$ for $b$ there.
\end{proof}
\noindent The aim of the present section is to define analogous operators parametrized by $k \in \mathsf{G}_M := H^{2\ell+1}_{\rm free}(M; \Lambda^\ast)/H^{2\ell+1}_{\rm free}(M; \Lambda)$. 

Let $t \in \mathbb{C}$, $j \in \mathsf{G}_{M,2\Lambda}$  and consider the operators $tQ(x,j)$. For fixed $x$ and $j$ they form a copy of the trivial Hermitian line $\mathbb{C}$, which we write $\mathbb{C}_{(x,j)}$. Suppose that $j_2 = j_1 + b$, for $b \in H^{2\ell+1}_{\rm free}(M; \bar{\Lambda})/2H^{2\ell+1}_{\rm free}(M; \Lambda)$. $Q(x, j_1)$ and $Q(x, j_2)$ have the same domain and target, so $Q(x, j_2)(Q(x, j_1))^{-1} =: \alpha(x, j_1, b) \in \mathbb{C}$. This provides an identification of the Hermitian lines $\mathbb{C}_{(x,j_1)}$ and $\mathbb{C}_{(x,j_2)}$. Identifying all the trivial Hermitian lines $\mathbb{C}_{(x,j+b)}$, for $b \in H^{2\ell+1}_{\rm free}(M; \bar{\Lambda})/2H^{2\ell+1}_{\rm free}(M; \Lambda)$, we obtain a Hermitian line $L_{(x,k)}$, where $k \in \mathsf{G}_M$ is the reduction of $j$ modulo $\Lambda$. Remark that the lines $L_{(x,0)}$ are canonically trivialized, as there is a preferred isomorphism to $\mathbb{C}_{(x,0)}$.

Given $l \in L_{(x,k)}$, we obtain an operator
\be
P(x, k, l) := t Q(x,j) \;,
\ee
where $t$ is the element of $\mathbb{C}_{(x,j)}$ corresponding to $l$. One can check that the definition of $P$ is independent of the choice of lift $j$.

\begin{proposition}
\label{PropHeisGrpLineIndepx}
$L_{(x,k)}$ is independent of $x$.
\end{proposition}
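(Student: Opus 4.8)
The plan is to produce, for admissible background fields $x$ and $x'=x+i$ with $i\in H^{2\ell+1}_{\mathrm{free}}(M;\bar\Lambda^\ast)$, a canonical isomorphism $L_{(x,k)}\xrightarrow{\ \sim\ }L_{(x',k)}$ compatible with composition, and then to upgrade this to genuine $x$-independence by a continuity argument in the spirit of the proof of Proposition \ref{PropQxbIndepX}. The key preliminary observation is that the structure constants $\alpha(x,j_1,b)=Q(x,j_1+b)Q(x,j_1)^{-1}$ gluing $L_{(x,k)}$ out of the lines $\mathbb{C}_{(x,j)}$ are themselves independent of $x$. Indeed, unwinding the definition \eqref{EqDefPsiPair} of $\psi$ yields the composition law $Q(x+j_1,j_2)Q(x,j_1)=\psi(j_1,j_2)Q(x,j_1+j_2)$ together with $Q(x+j,-j)=Q(x,j)^{-1}$ (using that $\psi$ is alternating), whence $\alpha(x,j_1,b)=\psi(j_1,b)^{-1}Q(x+j_1,b)$; the first factor is $x$-independent by Proposition \ref{PropPsiIndepx} and the second is independent of its base point by Proposition \ref{PropQxbIndepX}.

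First I would fix $i\in H^{2\ell+1}_{\mathrm{free}}(M;\bar\Lambda^\ast)$ representing $x'-x\in\mathsf{G}_M$ and define a conjugation
\be
C_i(\phi):=Q(x+j,i)\circ\phi\circ Q(x,i)^{-1}\ :\ \mathrm{Hom}\bigl(\mathcal{S}^e(M,x),\mathcal{S}^e(M,x+k)\bigr)\longrightarrow\mathrm{Hom}\bigl(\mathcal{S}^e(M,x'),\mathcal{S}^e(M,x'+k)\bigr),
\ee
where $j$ is any lift of $k$; this is unambiguous because elements of $H^{2\ell+1}_{\mathrm{free}}(M;\bar\Lambda)$ act trivially on $\mathsf{Y}^{\mathrm{flat}}_M$, so $x+j$ as an element of $\mathsf{Y}^{\mathrm{flat}}_M$ does not depend on the lift. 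Using the composition law together with \eqref{EqRelPsiPairing} one computes
\be
C_i\bigl(Q(x,j)\bigr)=\exp\!\bigl(2\pi i\,\langle j\wedge i,[M]\rangle\bigr)\,Q(x',j),
\ee
so that $C_i$ carries $\mathbb{C}_{(x,j)}$ into $\mathbb{C}_{(x',j)}$.

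Since $b\wedge i$ is $\mathbb{Z}$-valued whenever $b\in H^{2\ell+1}_{\mathrm{free}}(M;\bar\Lambda)$ — the natural pairing between $\bar\Lambda$ and $\bar\Lambda^\ast$ being integral — the phase in the displayed formula depends only on the class $k$, not on the chosen lift $j$. Combined with the $x$-independence of the constants $\alpha$, this shows that $C_i$ is compatible with the gluing relations defining $L_{(x,k)}$ and hence descends to an isomorphism $\bar C_i:L_{(x,k)}\to L_{(x',k)}$. The same formula shows at once that $\bar C_i$ depends only on the class of $i$ in $\mathsf{G}_M$, that the $\bar C_i$ compose correctly, and that they are the identity on the canonically trivialized lines $L_{(\cdot,0)}$. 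Finally, to rule out dependence on deformations of $x$ that do not change its class in $\mathsf{G}_M$, I would note that all admissible $x$ lie in a single connected component of $\mathsf{Y}^{\mathrm{flat}}_M$ (a torsor for $H^{2\ell+1}(M;\mathscr{V})/H^{2\ell+1}(M;\bar\Lambda^\ast)$, since $\beta$ vanishes on $\mathsf{G}_M$ by Corollary \ref{CorStructEM}), that the transport isomorphisms coming from cylinders $M\times I$ vary continuously with $x$, and that a locally constant family of isomorphisms over a connected space is constant.

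The main obstacle is the middle step: verifying that the conjugation $C_i$ genuinely respects the gluing that produces $L_{(x,k)}$. This hinges precisely on the integrality of the $\bar\Lambda$–$\bar\Lambda^\ast$ pairing, so that the spurious phase $\exp 2\pi i\langle j\wedge i,[M]\rangle$ is constant along each gluing orbit, and on careful bookkeeping of which of the operators $Q(\cdot,\cdot)$ and discrete holonomies $\psi(\cdot,\cdot)$ are base-point independent (Propositions \ref{PropPsiIndepx}, \ref{PropQxbIndepX} and \ref{PropQIndepJLambVal2}). Once this is in place the statement is essentially formal.
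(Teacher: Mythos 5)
Your opening observation --- that $\alpha(x,j,b)=Q(x,j+b)Q(x,j)^{-1}=\psi(j,b)^{-1}Q(x+j,b)$ is independent of $x$ by Propositions \ref{PropPsiIndepx} and \ref{PropQxbIndepX} --- is exactly the paper's proof, and since $L_{(x,k)}$ is glued from the canonically trivial lines $\mathbb{C}_{(x,j)}$ using only the data $\alpha$, that observation already finishes the argument. The subsequent construction of the intertwiners $C_i$ and the final continuity argument are correct but superfluous.
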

\begin{proof}
Concretely, we need to check that $\alpha(x, j, b)$ is independent of $x$. To show that, remark that \eqref{EqDefPsiPair} implies that
\be
\label{EqAlphIndepX}
\alpha(x, j, b) = Q(x, j+b)(Q(x, j))^{-1} = Q(x+j, b) \psi^{-1}(j,b) \;.
\ee
Proposition \ref{PropQxbIndepX} ensures that $Q(x+j, b)$ is independent of $x$, hence the right-hand side of \eqref{EqAlphIndepX} is independent of $x$. 
\end{proof}

From now on we write $L_{k}$ for $L_{(x,k)}$. Remark that the Hermitian lines $L_{k}$ are trivialized by a choice of lift $j: \mathsf{G}_M \rightarrow \mathsf{G}_{M,2\Lambda}$: $L_{k} \simeq \mathbb{C}_{(x,j(k))}$. Such a lift is in general not a homomorphism and given $k_1, k_2 \in \mathsf{G}_M$, there is a unique $b(k_1, k_2) \in H^{2\ell+1}_{\rm free}(M; \bar{\Lambda})/2H^{2\ell+1}_{\rm free}(M; \Lambda)$ such that 
\be
\label{EqDefCocb}
j(k_1) + j(k_2) = j(k_1 + k_2) + b(k_1, k_2) \;.
\ee
Define
\be
\label{EqDefPhi}
\phi(k_1, k_2) = \big(Q(x, j(k_1 + k_2))\big)^{-1} Q(x + k_1, j(k_2)) Q(x, j(k_1)) \;.
\ee
Although we do not make this fact explicit in the notation, $\phi$ depends on the choice of trivialization $j$.
\begin{proposition}
$\phi$ is a 2-cocycle on $\mathsf{G}_M$ valued in $U(1)$, i.e.
\be
\label{EqPhiCocycle}
\phi(k_1, k_2) \phi(k_1 + k_2, k_3) = \phi(k_1, k_2 + k_3) \phi(k_2, k_3) \;.
\ee
Moreover $\phi$ is independent of $x$, it is skew-symmetric
\be
\phi(k_2, k_1)= \big(\phi(k_1,k_2)\big)^{-1}\;,
\ee
and 
\be
\label{EqPhiSkewRef}
\phi(k_1, k_2)\big(\phi(k_2,k_1)\big)^{-1} = \exp -2\pi i \langle k_1 \cup k_2, [M] \rangle \;.
\ee
\end{proposition}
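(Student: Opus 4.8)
The plan is to reduce everything to a composition law for the operators $Q$. First I would record that, by \eqref{EqDefPsiPair} and Corollary \ref{CorPropSkewRefPsi}, applied to the pair $(j_1+j_2,-(j_1+j_2))$ together with $\psi(j,-j)=1$ and $Q(x,0)=\mathrm{id}$, one gets $Q(x+j_1+j_2,-(j_1+j_2))=(Q(x,j_1+j_2))^{-1}$, and feeding this back into \eqref{EqDefPsiPair} yields
\[
Q(x+j_1,j_2)\,Q(x,j_1)=\psi(j_1,j_2)\,Q(x,j_1+j_2),\qquad j_1,j_2\in H^{2\ell+1}_{\rm free}(M;\bar{\Lambda}^\ast).
\]
In particular \eqref{EqDefPhi} is just the rearranged ``twisted composition law'' $Q(x+j(k_1),j(k_2))\,Q(x,j(k_1))=\phi(k_1,k_2)\,Q(x,j(k_1+k_2))$; here one uses throughout that adding a class in the image of $\iota_\Lambda\colon H^{2\ell+1}_{\rm free}(M;\bar{\Lambda})\to H^{2\ell+1}_{\rm free}(M;\bar{\Lambda}^\ast)$ acts trivially on $\mathsf{Y}^{\rm flat}_M$, so the sources and targets (Hermitian lines $\mathcal{S}^e(M,\bullet)$) appearing in \eqref{EqDefPhi} match and $\phi(k_1,k_2)\in U(1)$ is well defined.

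Next I would prove $x$-independence, recording along the way an explicit formula that will also settle the last two assertions. Inserting $Q(x,j(k_1)+j(k_2))$ and its inverse into \eqref{EqDefPhi} and using the composition law gives $\phi(k_1,k_2)=\alpha(x,j(k_1+k_2),b(k_1,k_2))\,\psi(j(k_1),j(k_2))$, with $\alpha$ as in \eqref{EqAlphIndepX} and $b(k_1,k_2)$ defined by \eqref{EqDefCocb}. Since \eqref{EqAlphIndepX} gives $\alpha(x,j,b)=Q(x+j,b)\,\psi^{-1}(j,b)$, and $b(k_1,k_2)$ is $\bar{\Lambda}$-valued, Proposition \ref{PropQxbIndepX} shows $Q(x+j(k_1+k_2),b(k_1,k_2))$ is independent of the basepoint and equals $\mathfrak{q}_M(b(k_1,k_2))$, while $\psi$ is independent of $x$ by Proposition \ref{PropPsiIndepx}. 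Hence
\[
\phi(k_1,k_2)=\mathfrak{q}_M(b(k_1,k_2))\;\psi(b(k_1,k_2),j(k_1+k_2))\;\psi(j(k_1),j(k_2)),
\]
which is manifestly independent of $x$.

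For the cocycle identity \eqref{EqPhiCocycle} I would evaluate the triple composite $T:=Q(x+j(k_1)+j(k_2),j(k_3))\,Q(x+j(k_1),j(k_2))\,Q(x,j(k_1))$ in two ways. Grouping the first two factors and using the twisted composition law (twice), together with $x+j(k_1)+j(k_2)=x+j(k_1+k_2)$ in $\mathsf{Y}^{\rm flat}_M$, gives $T=\phi(k_1,k_2)\phi(k_1+k_2,k_3)\,Q(x,j(k_1+k_2+k_3))$. Grouping the last two factors instead, and invoking the $x$-independence of $\phi$ just proved so that the law holds at the basepoint $x+j(k_1)$ with the same $\phi$, gives $T=\phi(k_2,k_3)\phi(k_1,k_2+k_3)\,Q(x,j(k_1+k_2+k_3))$. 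Cancelling the invertible operator $Q(x,j(k_1+k_2+k_3))$ yields \eqref{EqPhiCocycle}. Finally, the remaining two claims follow from the explicit formula: $b(k_1,k_2)$ is symmetric in its arguments, $\mathfrak{q}_M$ takes values in $\{\pm1\}$, and for a $\bar{\Lambda}$-valued class $c$ one has $\psi(c,\bullet)=\psi(\bullet,c)$, hence $\psi(c,\bullet)\in\{\pm1\}$, because the ratio in \eqref{EqRelPsiPairing} equals $\exp 2\pi i\langle c\wedge\bullet,[M]\rangle=1$ by integrality of the $\bar{\Lambda}\times\bar{\Lambda}^\ast$ pairing; therefore $\phi(k_1,k_2)\phi(k_2,k_1)=\psi(j(k_1),j(k_2))\psi(j(k_2),j(k_1))=1$, which is skew-symmetry, and $\phi(k_1,k_2)(\phi(k_2,k_1))^{-1}=\phi(k_1,k_2)^2=\psi(j(k_1),j(k_2))(\psi(j(k_2),j(k_1)))^{-1}=\exp 2\pi i\langle j(k_1)\wedge j(k_2),[M]\rangle=\exp 2\pi i\langle k_1\cup k_2,[M]\rangle$ by \eqref{EqRelPsiPairing} and the fact that this pairing descends to $\mathsf{G}_M$.

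I expect the main obstacle to be not any single computation but the bookkeeping: one must consistently track which coefficient group each class lies in (the lift $j$ into $\mathsf{G}_{M,2\Lambda}$, the defect $b(k_1,k_2)$ into $H^{2\ell+1}_{\rm free}(M;\bar{\Lambda})/2H^{2\ell+1}_{\rm free}(M;\Lambda)$, and so on), keep straight that the sources and targets of the various $Q$'s agree only after using that $\iota_\Lambda$-classes act trivially on $\mathsf{Y}^{\rm flat}_M$, and be careful to establish the $x$-independence of $\phi$ before invoking it in the regrouping of $T$.
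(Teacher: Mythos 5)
Your proof is correct, and for most of the statement it follows the paper's route: the identities you derive from \eqref{EqDefPsiPair} are exactly the paper's \eqref{EqProofPhiCoc1}--\eqref{EqProofPhiCoc2} (your $\psi(b(k_1,k_2),j(k_1+k_2))$ differs from the paper's $\psi(b(k_1,k_2),j(k_1)+j(k_2))$ only by $\psi(b,b)=1$, and your $\mathfrak{q}_M(b)$ equals the paper's $(Q(x,b))^{-1}$ since both are signs), and the $x$-independence, skew-symmetry and \eqref{EqPhiSkewRef} are read off from that formula in both arguments. Where you genuinely diverge is the cocycle identity \eqref{EqPhiCocycle}: the paper expands all four factors of \eqref{EqPhiCocycle} into products of $\psi$'s and $Q^{-1}$'s and cancels term by term, ultimately reducing the identity to $b_{1|2}+b_{12|3}=b_{1|23}+b_{2|3}$, whereas you evaluate the triple composite $T$ by two bracketings and cancel the invertible operator $Q(x,j(k_1+k_2+k_3))$, so the cocycle condition becomes pure associativity (with the defect identity hidden inside the group law). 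Your version is shorter and more conceptual; the paper's explicit expansion has the side benefit that the intermediate formula \eqref{EqRelPhiPsi} is reused later (e.g.\ in Proposition \ref{PropLAdmPhiTriv} and in the Maslov-index computations), and your explicit formula for $\phi$ would serve the same purpose. You are also right to flag, and correctly justify, the two bookkeeping points the paper leaves implicit: that classes in the image of $\iota_\Lambda$ act trivially on $\mathsf{Y}^{\rm flat}_M$ (so the sources and targets of the $Q$'s in \eqref{EqDefPhi} match), and that the $x$-independence of $\phi$ must be in hand before the regrouping of $T$ at the shifted basepoint $x+j(k_1)$.
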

\begin{proof}
Remark that
\be
\label{EqProofPhiCoc1}
\phi(k_1, k_2) = \psi(j(k_1), j(k_2)) \:  \frac{Q(x, j(k_1) + j(k_2))}{Q(x, j(k_1 + k_2))} \;,
\ee
\eqref{EqDefPsiPair} implies
\be
\label{EqProofPhiCoc2}
\frac{Q(x, j(k_1) + j(k_2))}{Q(x, j(k_1 + k_2))} = \psi(b(k_1, k_2), j(k_1) + j(k_2)) \: \left( Q(x, b(k_1, k_2)) \right)^{-1} \;,
\ee
which shows that the second factor in \eqref{EqProofPhiCoc1} is a sign. We deduce that $\phi$ is skew-symmetric because $\psi$ is. \eqref{EqPhiSkewRef} is a direct consequence of \eqref{EqProofPhiCoc1}. \eqref{EqProofPhiCoc2} also show that $\phi$ is independent of $x$.

To check the cocycle condition \eqref{EqPhiCocycle}, we need to simplify the notation a bit. We write $j_a := j(k_a)$, $b_{a|b} := b(k_a, k_b)$, $b_{ab|c} := b(k_a + k_b, k_c)$, $a,b,c \in 1,2,3$. We also write $Q^{-1}(b_{a|b}) := (Q(x,b(k_a, k_b))^{-1}$. We can now compute each of the factors in \eqref{EqPhiCocycle}, using \eqref{EqProofPhiCoc1}, \eqref{EqProofPhiCoc2} and the relation $j(k_a + k_b) = j_a + j_b + b_{a|b}$.
\begin{align}
\phi(k_1, k_2) = \, & \psi(j_1, j_2) \psi(b_{1|2}, j_1 + j_2) Q^{-1}(b_{1|2}) \;,\notag \\
\phi(k_1 + k_2, k_3)  = \, & \psi(j_1, j_3) \psi(j_2, j_3) \psi(b_{1|2}, j_3) \psi(b_{12|3}, j_1) \psi(b_{12|3}, j_2)\notag \\
& \psi(b_{12|3}, b_{1|2}) \psi(b_{12|3}, j_3) Q^{-1}(b_{12|3}) \;, \\
\phi(k_1, k_2 + k_3)  = \, & \psi(j_1, j_2) \psi(j_1, j_3) \psi(j_1, b_{2|3}) \psi(b_{1|23}, j_1) \psi(b_{1|23}, j_2) \notag \\
& \psi(b_{1|23}, j_3) \psi(b_{1|23}, b_{2|3}) Q^{-1}(b_{1|23}) \;, \notag \\
\phi(k_2, k_3) = \, & \psi(j_2, j_3) \psi(b_{2|3}, j_2 + j_3) Q^{-1}(b_{2|3}) \;,\notag
\end{align}
Plugging the expressions above into \eqref{EqPhiCocycle}, all the factors of the form $\psi(j_a, j_b)$ cancel. Using \eqref{EqDefPsiPair} and the fact that $\psi(j_1, b_{2|3}) = \pm 1$, hence $\psi(j_1, b_{2|3}) = \psi(b_{2|3}, j_1)$, we can gather factors and obtain
\be
\psi(b_{1|2} + b_{12|3}, j_1 + j_2 + j_3) Q^{-1}(b_{1|2} + b_{12|3}) \stackrel{?}{=} \psi(b_{1|23} + b_{2|3}, j_1 + j_2 + j_3) Q^{-1}(b_{1|23} + b_{2|3}) \;.
\ee
This equality is true because $b_{1|2} + b_{12|3} = b_{1|23} + b_{2|3}$, as is obvious from the definition of $b$ in \eqref{EqDefCocb}.
\end{proof}

\subsection{The Heisenberg group and its representations}

\label{SecHeisGroupAndRep}

We now describe the canonical representation of the Heisenberg group associated to $\mathsf{G}_M$ on $V_{M,w}$. In Appendix \ref{AppHeisGr2Gr}, we review some basic facts about Heisenberg groups. 

There is a bimultiplicative pairing on $\mathsf{G}_M$ given by
\be
\label{DefBimultPairG}
B(k_1, k_2) := \exp -2\pi i \langle \hat{k}_1 \wedge \hat{k}_2, [M] \rangle \;,
\ee
where $\hat{k}_1$ and $\hat{k}_2$ are differential form representatives of $k_1, k_2 \in \mathsf{G}_M$. $B$ is independent of the choice of representatives. We write $\mathsf{H}_M$ for the corresponding Heisenberg group. Given a lift $j$ as in Section \ref{SecPushDown}, the associated cocycle $\phi$ provides an explicit realization of $\mathsf{H}_M$ on the set of pairs $\left(k, t \right) \in \mathsf{G}_M \times U(1)$, with the multiplication law given by
\be
\label{EqExplRealHeisGrp}
\left(k_1, t_1\right) \cdot \left(k_2, t_2\right) = \left(k_1 + k_2, \phi(k_1, k_2)t_1t_2 \right) \;.
\ee
This realization depends on the lift $j$ through $\phi$.

Given $w \in \mathsf{Y}^{\rm flat}_M$, we can obtain a canonical realization of $\mathsf{H}_M$ as follows.  Let $k \in \mathsf{G}_M$ and $l \in L_{k}$. Let us define
\be
\label{EqDefRho}
\rho(k,l) = \bigoplus_{x \in w + \mathsf{G}_M} P(x,k,l): V_{M,w} \rightarrow V_{M,w} \;.
\ee
\be
\label{EqDefRhoT}
\tilde{\rho}(k,l) =  \bigoplus_{x \in w + \mathsf{G}_M} B(k,x - w) P(x,k,l): V_{M,w} \rightarrow V_{M,w} \;.
\ee
Remark that $x - w \in \mathsf{G}_M$ so $B(k,x - w)$ is well-defined.
\begin{proposition}
\label{PropVRegRepHeisGrp}
$V_{M,w}$ carries the regular representation of $\mathsf{H}_M$, with the right regular representation given by $\rho$ and the left regular representation given by $\tilde{\rho}$. 
\end{proposition}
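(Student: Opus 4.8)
The plan is to verify directly that $\rho$ and $\tilde{\rho}$ are homomorphisms realizing the two commuting regular representations, then check that together they generate all of $\mathrm{End}(V_{M,w})$. Throughout I work with a fixed lift $j:\mathsf{G}_M \to \mathsf{G}_{M,2\Lambda}$ and its cocycle $\phi$, so that $\mathsf{H}_M$ is modelled on $\mathsf{G}_M \times U(1)$ with multiplication \eqref{EqExplRealHeisGrp}; the lines $L_k$ are trivialized by $j$ via $L_k \simeq \mathbb{C}_{(x,j(k))}$, and under this trivialization $P(x,k,l) = t\, Q(x,j(k))$ for $l \leftrightarrow t$.

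First I would establish that $\rho$ is the right regular representation. The key computation is that on the summand $\mathcal{S}^e(M,x)$ the composite $P(x+k_1,k_2,l_2)\circ P(x,k_1,l_1)$ equals $t_1 t_2\, Q(x+k_1,j(k_2))\,Q(x,j(k_1))$, and by \eqref{EqDefPsiPair} together with the definition \eqref{EqDefPhi} of $\phi$ and the relation $j(k_1)+j(k_2) = j(k_1+k_2)+b(k_1,k_2)$ and $Q(x,2b)=1$ (Proposition \ref{PropQ2beq1}), this reduces to $\phi(k_1,k_2)\,t_1t_2\, Q(x,j(k_1+k_2))$, i.e.\ to $P(x,k_1+k_2,\,\phi(k_1,k_2)l_1l_2)$. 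Summing over $x \in w+\mathsf{G}_M$ (noting the shift $x \mapsto x+k_1$ just permutes the summands) gives $\rho(k_2,l_2)\,\rho(k_1,l_1) = \rho\big(k_1+k_2,\phi(k_1,k_2)l_1l_2\big)$, which is exactly the (anti)homomorphism property matching the multiplication \eqref{EqExplRealHeisGrp}; a glance at conventions (Appendix \ref{AppHeisGr2Gr}) fixes this as the right regular representation, whose defining feature is that the operator $\rho(k,l)$ maps the $x$-summand to the $(x+k)$-summand by a scalar. The identification with the regular representation follows because $V_{M,w} = \bigoplus_x \mathcal{S}^e(M,x)$ is free of rank one over the group algebra: each $\mathcal{S}^e(M,x)$ is one-dimensional and $\rho$ acts simply transitively on the summands.

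Next I would treat $\tilde{\rho}$. Here the extra twist $B(k,x-w)$ enters: computing $\tilde{\rho}(k_2,l_2)\,\tilde{\rho}(k_1,l_1)$ on the $x$-summand produces the scalar $B(k_2, x+k_1-w)\,B(k_1,x-w)$ times the same $Q$-product as before, and using bimultiplicativity of $B$ and the relation \eqref{EqPhiSkewRef} (equivalently \eqref{DefBimultPairG}), $B(k_2,k_1) = \phi(k_2,k_1)\phi(k_1,k_2)^{-1}$, one checks the twisting cocycle is again $\phi$ up to the required symmetry, so $\tilde{\rho}$ is the left regular representation. The crucial separate point is that $\rho$ and $\tilde{\rho}$ \emph{commute}: $\rho(k_1,l_1)$ and $\tilde{\rho}(k_2,l_2)$ both act summand-by-summand-up-to-shift, and the discrepancy between the two orders of composition is precisely the Heisenberg commutator $B(k_1,k_2)$ versus $B(k_2,k_1)$ combined with the $x$-dependent factors $B(k,x-w)$, which cancel; this is the content of $V_{M,w}$ carrying \emph{both} commuting regular actions. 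Finally, since $\dim V_{M,w} = |\mathsf{G}_M|$ and the left and right regular representations of $\mathsf{H}_M$ together generate $\mathrm{End}(V_{M,w})$ (the regular representation of a group algebra is faithful and its commutant is generated by the opposite regular representation), the proposition follows.

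The main obstacle I expect is bookkeeping the cocycle $\phi$ and the pairings $\psi$, $B$ consistently: all the sign ambiguities live in $\{\pm 1\}$ because $\psi$ restricted to $\bar{\Lambda}$-valued classes takes values in $\{\pm1\}$ (Corollary after Proposition \ref{PropQQuadrRef}), and one must repeatedly invoke $Q(x,2b)=1$, the dependence of $Q(x,b)$ only on $x \bmod \mathsf{E}_M$ (Propositions \ref{PropQIndepJLambVal2}, \ref{PropQxbIndepX}), and the skew-symmetry relations \eqref{EqPhiSkewRef}, \eqref{EqRelPsiPairing} to collapse the products. The conceptual content is light — it is the verification that the geometrically-defined operators $P$ assemble into the algebraic Heisenberg action — but getting every factor of $\phi$, $\psi$ and $B$ in the right place, and correctly matching "left'' versus "right'' regular representation to the conventions of Appendix \ref{AppHeisGr2Gr}, is where care is needed.
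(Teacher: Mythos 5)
Your proposal is correct and follows essentially the same route as the paper's proof: fix the lift $j$ and its cocycle $\phi$, verify the right-action law for $\rho$ and the left-action law for $\tilde{\rho}$ (the composite $Q(x+k_1,j(k_2))Q(x,j(k_1))=\phi(k_1,k_2)Q(x,j(k_1+k_2))$ is in fact immediate from the definition \eqref{EqDefPhi} of $\phi$, so the detour through \eqref{EqDefPsiPair} and $Q(x,2b)=1$ is unnecessary), check the two actions commute, and conclude by the dimension count $\dim V_{M,w}=|\mathsf{G}_M|$. The only cosmetic difference is your closing appeal to the commutant of the regular representation, which the paper does not need.
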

\begin{proof}
In this proof, all the sums over $x$ are understood to run over $w + \mathsf{G}_M$. Pick a lift $j$ of $k$ as in Section \ref{SecPushDown} and let $\phi$ be the associated 2-cocycle. Then each $l \in L_{k}$ can be canonically identified with a complex number. In particular, there is an element $l(k) \in L_{k}$ associated to $1 \in \mathbb{C}_{(x,j(k))}$. Then $P(x,k,l) = l/l(k) Q(x,j(k))$, where $l/l(k) \in \mathbb{C}$.

We can now check that $\rho$ is a right action of $\mathsf{H}_M$:
\begin{align}
\rho(k_2, l_2) \circ \rho(k_1, l_1) \: & = t_1 t_2 \bigoplus_x Q(x + k_1, j(k_2)) Q(x,j(k_1)) \notag \\
& = t_1 t_2 \bigoplus_x \phi(k_1, k_2) Q(x, j(k_1 + k_2)) \\
& = \rho\big( (k_1, l_1) \cdot (k_2, l_2) \big) \notag\;,
\end{align}
where $t_i = l_i/l(k_i)$.
$\tilde{\rho}$ is a left action of $\mathsf{H}_M$:
\begin{align}
\tilde{\rho}(k_2, l_2) \circ \tilde{\rho}(k_1, l_1) \: &  = t_1 t_2 \bigoplus_x B(k_2,x - w + k_1) Q(x + k_1, j(k_2)) B(k_1, x - w) Q(x,j(k_1)) \notag \\
& = t_1 t_2 B(k_2, k_1) \bigoplus_x B(k_2, x - w)B(k_1, x - w)\phi(k_1, k_2) Q(x, j(k_1 + k_2)) \notag \\
& = t_1 t_2 \bigoplus_x B(k_1 + k_2, x - w)\phi(k_2, k_1) Q(x, j(k_1 + k_2)) \\
& = \tilde{\rho} \big( (k_2, l_2) \cdot (k_1, l_1) \big) \;. \notag
\end{align}
Moreover, the two actions commute:
\begin{align}
\tilde{\rho}(k_2, l_2) \circ \rho(k_1, l_1) & \: = t_1 t_2 \bigoplus_x B(k_2,x - w + k_1) Q(x + k_1, j(k_2)) Q(x,j(k_1)) \notag \\
& = t_1 t_2 \bigoplus_x B(k_2, x - w)\phi(k_2, k_1) Q(x, j(k_1 + k_2)) \\
& = t_1 t_2 \bigoplus_x B(k_2, x - w) Q(x + k_2, j(k_1)) Q(x,j(k_2)) \\
& = \rho(k_1, l_1) \circ \tilde{\rho}(k_2, l_2) \;.
\end{align}
The dimension of $V_{M,w}$ is equal to $|\mathsf{G}_M|$ by Proposition \ref{PropDimHilbSp}, so $V_{M,w}$ coincides indeed with the regular representation of $\mathsf{H}_M$.
\end{proof}
Note that the elements of norm 1 of a Hermitian line is a $U(1)$-torsor. Let us write $L_k^{(1)}$ for the $U(1)$-torsor determined by $L_k$. 
\begin{corollary}
$\bigcup_{k \in \mathsf{G}_M} L_{k}^{(1)}$ endowed with the product given by the composition of the morphisms $\rho(k,l)$ is a group isomorphic to $\mathsf{H}_M$.
\end{corollary}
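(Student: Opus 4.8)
The plan is to verify that the set $\bigcup_{k \in \mathsf{G}_M} L_{k}^{(1)}$, with the composition product inherited from the operators $\rho(k,l)$, satisfies the defining properties of the Heisenberg group $\mathsf{H}_M$ associated to the bimultiplicative pairing $B$ of \eqref{DefBimultPairG}. First I would observe that an element $l \in L_k^{(1)}$ is a norm-one vector in the Hermitian line $L_k$, and by Proposition \ref{PropVRegRepHeisGrp} it determines a unitary operator $\rho(k,l)$ on $V_{M,w}$. The composition computation already carried out in the proof of Proposition \ref{PropVRegRepHeisGrp} shows $\rho(k_2,l_2)\circ\rho(k_1,l_1) = \rho\big((k_1,l_1)\cdot(k_2,l_2)\big)$, where the right-hand side uses the explicit realization \eqref{EqExplRealHeisGrp}. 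So composition of the $\rho(k,l)$ induces a well-defined binary operation on $\bigcup_k L_k^{(1)}$: given $l_i \in L_{k_i}^{(1)}$ one sets $l_1 \ast l_2$ to be the element of $L_{k_1+k_2}^{(1)}$ whose associated operator is $\rho(k_2,l_2)\circ\rho(k_1,l_1)$. This is closed because the composite operator is again of the form $\rho(k_1+k_2, \cdot)$ for a unique norm-one element of $L_{k_1+k_2}$ (the representation $\rho$ restricted to each $L_k^{(1)}$ is injective since $V_{M,w}$ is the regular representation and the $\rho(k,l)$ for distinct $(k,l)$ are linearly independent translates).

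Next I would check the group axioms. Associativity is inherited from associativity of operator composition. The identity is the unique norm-one element of $L_0$ mapping to $1 \in \mathbb{C}_{(x,0)}$ under the canonical trivialization noted after Proposition \ref{PropHeisGrpLineIndepx}; its operator is the identity on $V_{M,w}$. Inverses exist because each $\rho(k,l)$ is invertible (being, up to a phase, $\bigoplus_x Q(x,j(k))$ with $Q(x,j)$ an isomorphism) and $\rho(k,l)^{-1}$ is again of the form $\rho(-k,l')$ for a unique $l' \in L_{-k}^{(1)}$. Then I would construct the isomorphism to $\mathsf{H}_M$ explicitly: fix a lift $j$ as in Section \ref{SecPushDown} with associated 2-cocycle $\phi$, trivializing $L_k \simeq \mathbb{C}_{(x,j(k))}$, so $l \in L_k^{(1)}$ corresponds to a pair $(k,t)$ with $t \in U(1)$. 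The composition law computed above becomes exactly \eqref{EqExplRealHeisGrp}, so the map $l \mapsto (k,t)$ is a group isomorphism onto the explicit realization of $\mathsf{H}_M$ given by $\phi$. Since the extension class of $\phi$ is the commutator pairing — which by \eqref{EqPhiSkewRef} equals $\exp 2\pi i \langle k_1 \cup k_2, [M]\rangle = B(k_1,k_2)$ — this realization is indeed (isomorphic to) $\mathsf{H}_M$ as defined in Appendix \ref{AppHeisGr2Gr}.

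The one point requiring a little care — and the closest thing to an obstacle — is checking that the group structure on $\bigcup_k L_k^{(1)}$ is \emph{canonical}, i.e. independent of the auxiliary choices (the lift $j$, the base class $w$, the chosen representative $x$). Independence of $x$ follows from Proposition \ref{PropHeisGrpLineIndepx} ($L_k$ is independent of $x$) together with the $x$-independence of $\phi$ established in the previous subsection. Independence of the lift $j$: a different lift changes $\phi$ by a coboundary, which changes the explicit realization \eqref{EqExplRealHeisGrp} by an isomorphism but not the abstract group; more directly, the operation $l_1 \ast l_2$ was defined purely through operator composition of the $\rho(k,l)$, which makes no reference to $j$, so it is manifestly $j$-independent. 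Independence of $w$ would follow similarly once one notes $L_k$ and the commutator pairing do not involve $w$. I would phrase the final argument to emphasize that the composition product is intrinsic (defined via $\rho$) and that all the explicit cocycle data is only used to identify the resulting group with the standard model of $\mathsf{H}_M$. I expect this to be essentially immediate given everything proved in Sections \ref{SecPushDown} and \ref{SecHeisGroupAndRep}; the proof is a short assembly of those results rather than a new computation.
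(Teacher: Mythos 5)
Your proposal is correct and follows essentially the same route the paper intends: the corollary is stated as an immediate consequence of Proposition \ref{PropVRegRepHeisGrp}, whose proof already contains the composition identity $\rho(k_2,l_2)\circ\rho(k_1,l_1)=\rho\big((k_1,l_1)\cdot(k_2,l_2)\big)$ in the trivialization determined by $j$, and the identification with $\mathsf{H}_M$ then follows from the fact that the commutator of $\phi$ is $B$ together with the uniqueness of the Heisenberg extension recalled in Appendix \ref{AppHeisGr2Gr}. Your added remarks on injectivity of $(k,l)\mapsto\rho(k,l)$ and on the intrinsic ($j$-independent) nature of the composition product are sound and merely make explicit what the paper leaves implicit.
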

This is the canonical realization of the Heisenberg group associated to the $(\mathfrak{W},\mathsf{Z}_{\rm flat})$-manifold $(M,\check{x})$.

\section{State space}

\label{SecStateSpace}

Let $(M,x)$ be a closed $4\ell+2$-dimensional $(\mathfrak{W},\mathsf{Z}^{\rm flat})$-manifold. Up to isomorphism, $\mathsf{H}_M$ has a unique irreducible module $\mathsf{I}_M$ (see Appendix \ref{AppHeisGr2Gr}).  Informally, the state space $\mathcal{R}(M)$ of the topological field theory $\mathcal{R}$ is the Hilbert space given by the direct sum of $|\mathsf{K}_M|$ copies of $\mathsf{I}_M$. However, characterizing $\mathcal{R}(M)$ up to isomorphisms is not sufficient, we need to construct explicitly a canonical Hilbert space from $M$. 

Recall that $\Lambda_0$ is the sublattice of $\Lambda$ left invariant by the monodromy representation defining $\bar{\Lambda}$, and that $\Gamma_0 := \Lambda^\ast_0/\Lambda_0$. As we will see, when $|\Gamma_0|$ is even, a canonical construction of the Hilbert space is impossible without picking an extra structure on $M$, because of the existence of a Hamiltonian anomaly. The latter is the counterpart on $4\ell+2$-dimensional manifolds of the partition function anomaly on $4\ell+3$-dimensional manifolds described in Section \ref{SecPartFuncAnom}, in a sense that will be made precise in Section \ref{SecProofGluinFormAnomCase}. 

In the present section, we will choose the relevant extra structure on $M$ and show how to construct a Hilbert space given this choice. In Section \ref{SecHamAn}, we will describe the state space in a canonical way, as an object in a category linearly equivalent to the category of Hilbert spaces, rather than as a Hilbert space.

\subsection{Strategy}

\label{SecStrat}

Recall Proposition \ref{PropStructGM}, stating that $\mathsf{G}_{M} \simeq H^{2\ell+1}_{\rm free}(M;\mathbb{Z}) \otimes \Gamma_0$. The skew-symmetric pairing on $\mathsf{G}_{M}$ is the tensor product of the skew-symmetric pairing on $H^{2\ell+1}_{\rm free}(M;\mathbb{Z})$ with the symmetric pairing on $\Gamma_0$. All these pairings are non-degenerate. Whenever we consider Lagrangian subgroups of $\mathsf{G}_{M}$, we always mean subgroups of the form $\mathsf{L} = \mathsf{L'} \otimes \Gamma_0$, where $\mathsf{L}'$ is a Lagrangian subgroup of $H^{2\ell+1}_{\rm free}(M;\mathbb{Z})$.

Explicit models for $\mathsf{I}_M$ can be obtained as follows. The Heisenberg group $\mathsf{H}_M$ admits finite commutative subgroups lifting the Lagrangian subgroups of $\mathsf{G}_M$. We will call such subgroups \emph{LL subgroups}. Given a LL subgroup $\tilde{\mathsf{L}} \subset \mathsf{H}_M$, the subspace of vectors in $V_{M,w}$ invariant under $\rho(\tilde{\mathsf{L}})$ form an irreducible $\mathsf{H}_M$-module $\mathsf{I}(\tilde{\mathsf{L}})$ under the action $\tilde{\rho}(\mathsf{H}_M)$.

There is no canonical way to pick a preferred LL subgroup $\tilde{\mathsf{L}}$. The plan is therefore to use the invariant section construction, which we already encountered in the construction of the prequantum theory, to construct a canonical irreducible module. When $|\Gamma_0|$ is even, the existence of invariant sections requires us to pick extra structures on $M$ restricting the set of Lagrangians considered.

This construction can be understood as a discrete analogue of geometric quantization \cite{MR1183739}. The LL subgroups play the role of the polarizations. We are constructing a flat parallel transport on a bundle of Hilbert spaces over the (discrete) space of polarizations. The space of flat sections forms the canonical Hilbert space that we aim to construct. We will also see an analogue of the metaplectic correction of geometric quantization.

The results in this section are inspired by the work of Gurevich and Hadani \cite{Gurevich2009, 2008arXiv0808.1664G} on the quantization of symplectic vector spaces over finite fields. Their results are extended in two directions. First, we need to generalize vector spaces over finite fields to free modules over finite rings of the form $\mathbb{Z}_r$, $r = p^m$ for $p$ a prime and $m$ a positive integer. This introduces some complications that can be dealt with using the results of Taylor in \cite{Taylor}. Second, in characteristic 2, we obtain a global trivialization of the bundle of Heisenberg modules (the "strong Von Neumann property" in the language of \cite{Gurevich2009, 2008arXiv0808.1664G}), at the expense of restricting ourselves to a subclass of LL subgroups. In the present context, this subclass is ultimately determined by the Wu structure of $M$, as well as the extra structure to be described.

\subsection{Extra structure for the construction of the Hilbert space}

\label{SecExStruct}

The extra structure required for the construction of the Hilbert space associated to $M$ consists of two pieces. 

First, we need a lift $j: \mathsf{G}_M \rightarrow \mathsf{G}_{M,2\Lambda} := H^{2\ell+1}_{\rm free}(M; \bar{\Lambda}^\ast)/2H^{2\ell+1}_{\rm free}(M; \bar{\Lambda})$. We consider as equivalent any two choices of lifts $j_1$ and $j_2$ such that $Q(x,j_1(k)) = Q(x,j_2(k))$ for all $k \in \mathsf{G}_M$.
In general, such lifts cannot be homomorphisms, but if the order of $\Gamma_0$ is odd, there is a canonical choice of lift, which is a homomorphism. To construct it, we compose the division by 2 in $\Gamma_0$, which is well-defined in odd order abelian groups, with the map $\Gamma_0 \rightarrow \Gamma_{0,2\Lambda}$ given by multiplication by two. In the odd case, we will always choose this canonical lift.

The second extra structure needed to construct the state space is non-trivial only if the order of $\mathsf{G}_M$, hence of $\Gamma_0$, is even. Using the classification theorem for finite abelian groups, we can write 
\be
\Gamma_0 = \bigoplus_r \Gamma_{0,r} \;, \quad \Gamma_{0,r} := \mathbb{Z}_r^{m_r} \;,
\ee
where $r$ runs over the power of primes and the $m_r$ are positive integers. We deduce from Proposition \ref{PropStructGM} that
\be
\label{EqDecompGM}
\mathsf{G}_M = \bigoplus_r \mathsf{G}_{M,r} \;, \quad \mathsf{G}_{M,r} := \mathbb{Z}_r^{2n_r} \;,
\ee
for some positive integers $n_r := m_r m$, where $m$ is half the number of generators in $H^{2\ell+1}_{\rm free}(M;\mathbb{Z})$. The pairing $B$ on $\mathsf{G}_M$ decomposes into pairings $B_r$ on $\mathsf{G}_{M,r}$. For $r$ even, $(B_r)^{r/2}$ is a $\{\pm 1\}$-valued bimultiplicative pairing on $\mathsf{G}_{M,r}/2\mathsf{G}_{M,r}$. The extra structure is a choice of decomposition \eqref{EqDecompGM}, together with a (multiplicative) quadratic refinement $q_r$ of $(B_r)^{r/2}$ of Arf invariant zero, for each even $r$ appearing in the decomposition.

We will write $\epsilon := (j, \{q_r\})$ for the choice of extra structure. A $4\ell+2$-dimensional $\mathfrak{W}$-manifold endowed with such an extra structure will be called a $(\mathfrak{W}, \epsilon)$-manifold. Remark that effectively, no extra structure is needed to construct the Hilbert space of the theory on $M$ when $\mathsf{G}_M$, hence $\Gamma_0$, has odd order. In particular, $\Gamma_0$ has odd order if $\Gamma$ has odd order, which is a property of the field theory, valid for any local system on any $4\ell+2$-dimensional manifold $M$.

\subsection{Admissible Lagrangian subgroups}

\label{SecAdmLagSubg}

$\psi$ is a perfect bimultiplicative pairing on $\mathsf{G}_{M,2\Lambda}$ and it is in fact a square root of the pairing $B$ on $\mathsf{G}_M$, as shown by \eqref{EqRelPsiPairing}. By the proof of Proposition \ref{PropStructGM}, we can also write $\mathsf{G}_{M,2\Lambda} \simeq H^{2\ell+1}_{\rm free}(M;\mathbb{Z}) \otimes \Gamma_{0,2\Lambda}$, where $\Gamma_{0,2\Lambda} := \Lambda^\ast_0/2\Lambda_0$. When speaking of Lagrangian subgroups of $\mathsf{G}_{M,2\Lambda}$, we will always understand subgroups of the form $\mathsf{L}' \otimes \Gamma_{0,2\Lambda}$, where $\mathsf{L}'$ is a Lagrangian subgroup of $H^{2\ell+1}_{\rm free}(M;\mathbb{Z})$. If $\mathsf{L}_{2\Lambda}$ is such a Lagrangian subgroup, then its projection to $\mathsf{G}_M$ is a Lagrangian subgroup $\mathsf{L}$. Suppose we picked the extra structure $\epsilon = (j, \{q_r\})$ on $M$. Then $\mathsf{L}$ decomposes into a direct sum of Lagrangian subgroups $\mathsf{L}_r \subset \mathsf{G}_{M,r}$.

\begin{definition}
The pair $(\mathsf{L}_{2\Lambda}, \mathsf{L})$ is called \emph{admissible} with respect to $\epsilon$ if the following conditions are satisfied.
\begin{enumerate}
\item $j(\mathsf{L}) \subset \mathsf{L}_{2\Lambda}$.
\item If $b \in \mathsf{L}_{2\Lambda} \cap H^{2\ell+1}_{\rm free}(M;\Lambda)/2H^{2\ell+1}_{\rm free}(M;\Lambda)$, then $\mathfrak{q}_M(b) = 1$, where $\mathfrak{q}_M$ is the multiplicative quadratic refinement of the cup product pairing mod 2 defined in Section \ref{SecCobInv}.
\item $q_r|_{\mathsf{L}_r/2\mathsf{G}_{M,r}} = 1$.
\end{enumerate}
\end{definition}
We will also call admissible those Lagrangians $\mathsf{L} \subset \mathsf{G}_M$ fitting into an admissible pair $(\mathsf{L}_{2\Lambda}, \mathsf{L})$. Note that given $\mathsf{L}$ we can always choose a lift $j$ such that the first admissibility condition is satisfied. Thanks to the Arf invariant zero constraint on $q_r$, there are always Lagrangian subgroups satisfying the third admissibility condition. The second admissibility condition can be satisfied if the Arf invariant $\mathfrak{a}_M$ vanishes. Proposition \ref{PropChgWuStructCobInvVan} then shows that given any Lagrangian subgroup $\mathsf{L}_{2\Lambda}$, one can always pick a Wu structure so that $\mathfrak{a}_M = 0$ and the second admissibility condition holds. Therefore the existence of admissible Lagrangian subgroups is a constraint on the Wu structure only.
The lift $j$ determines a 2-cocycle $\phi$ on $\mathsf{G}_M$, as explained in Section \ref{SecPushDown}.
\begin{proposition}
\label{PropLAdmPhiTriv}
If $\mathsf{L}$ is admissible, then $\phi|_\mathsf{L} = 1$.
\end{proposition}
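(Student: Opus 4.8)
The plan is to unwind the definition of $\phi$ from \eqref{EqDefPhi} and express it, via \eqref{EqProofPhiCoc1} and \eqref{EqProofPhiCoc2}, purely in terms of the pairing $\psi$ and the function $Q(x,\bullet)$ on $H^{2\ell+1}_{\rm free}(M;\bar\Lambda)/2H^{2\ell+1}_{\rm free}(M;\Lambda)$. Recall that
\be
\phi(k_1,k_2) = \psi(j(k_1),j(k_2))\,\psi(b(k_1,k_2),\,j(k_1)+j(k_2))\,\big(Q(x,b(k_1,k_2))\big)^{-1}\;,
\ee
where $b(k_1,k_2) \in H^{2\ell+1}_{\rm free}(M;\bar\Lambda)/2H^{2\ell+1}_{\rm free}(M;\Lambda)$ is the defect of $j$ from being a homomorphism, as in \eqref{EqDefCocb}. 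So the task reduces to showing that each of the three factors is trivial when $k_1,k_2 \in \mathsf{L}$.

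First I would use the first admissibility condition, $j(\mathsf{L}) \subset \mathsf{L}_{2\Lambda}$: then $j(k_1),j(k_2) \in \mathsf{L}_{2\Lambda}$, and since $\mathsf{L}_{2\Lambda}$ is isotropic for $\psi$ (it is Lagrangian, hence self-orthogonal), we get $\psi(j(k_1),j(k_2)) = 1$. Second, since $j(k_1)+j(k_2) = j(k_1+k_2) + b(k_1,k_2)$ and all three terms on the right lie in $\mathsf{L}_{2\Lambda}$ — indeed $j(k_1+k_2) \in \mathsf{L}_{2\Lambda}$ by admissibility and $b(k_1,k_2) \in \mathsf{L}_{2\Lambda}$ because it is the difference of elements of the subgroup $\mathsf{L}_{2\Lambda}$ — the element $b(k_1,k_2)$ also lies in $\mathsf{L}_{2\Lambda} \cap H^{2\ell+1}_{\rm free}(M;\Lambda)/2H^{2\ell+1}_{\rm free}(M;\Lambda)$. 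Isotropy of $\mathsf{L}_{2\Lambda}$ for $\psi$ then kills the middle factor $\psi(b(k_1,k_2),j(k_1)+j(k_2)) = 1$. Finally, for the last factor I invoke the second admissibility condition directly: $b(k_1,k_2) \in \mathsf{L}_{2\Lambda} \cap H^{2\ell+1}_{\rm free}(M;\Lambda)/2H^{2\ell+1}_{\rm free}(M;\Lambda)$, so $\mathfrak{q}_M(b(k_1,k_2)) = 1$, i.e. $Q(x,b(k_1,k_2)) = 1$ by Proposition \ref{PropQ2beq1} and the identification of $\mathfrak{q}_M$ with $Q(x,\bullet)$. Hence all three factors are $1$ and $\phi(k_1,k_2) = 1$.

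The only genuinely delicate point — and the step I would be most careful about — is keeping straight exactly which group the various quantities live in and that the isotropy statements apply. In particular one must check that $b(k_1,k_2)$ indeed lands in the $\bar\Lambda$-part $H^{2\ell+1}_{\rm free}(M;\Lambda)/2H^{2\ell+1}_{\rm free}(M;\Lambda)$ rather than merely in $\mathsf{G}_{M,2\Lambda}$: this is forced because $j(k_1)+j(k_2)$ and $j(k_1+k_2)$ have the same image in $\mathsf{G}_M = \mathsf{G}_{M,2\Lambda}/\big(H^{2\ell+1}_{\rm free}(M;\bar\Lambda)/2H^{2\ell+1}_{\rm free}(M;\Lambda)\big)$, so their difference is killed by the quotient map, which is precisely the definition of $b$ in \eqref{EqDefCocb}. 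One should also note that $\psi$ restricted to the $2\Lambda$-reduction takes values in $\{\pm 1\}$ on the relevant elements, so symmetry versus skew-symmetry is not an issue. Beyond that bookkeeping, the argument is a short chain of substitutions; no serious obstacle is expected. I would present it in roughly that order: expand $\phi$, dispose of the two $\psi$-factors via isotropy of $\mathsf{L}_{2\Lambda}$, then dispose of the $Q$-factor via the second admissibility condition.
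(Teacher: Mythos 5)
Your proof is correct and follows essentially the same route as the paper: expand $\phi$ via the relation \eqref{EqRelPhiPsi}, kill the two $\psi$-factors by the isotropy of $\mathsf{L}_{2\Lambda}$ (using the first admissibility condition), and kill the $Q$-factor by the second admissibility condition. The extra bookkeeping you supply — checking that $b(k_1,k_2)$ lands in $\mathsf{L}_{2\Lambda}\cap H^{2\ell+1}_{\rm free}(M;\bar{\Lambda})/2H^{2\ell+1}_{\rm free}(M;\bar{\Lambda})$ so that the second admissibility condition actually applies — is accurate and is left implicit in the paper's own proof.
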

\begin{proof}
Recall the expression of $\phi$ given by \eqref{EqProofPhiCoc1} and \eqref{EqProofPhiCoc2}, which amount to
\be
\label{EqRelPhiPsi}
\phi(k_1, k_2) = \psi(j(k_1), j(k_2)) \psi(b(k_1, k_2), j(k_1) + j(k_2)) \: \left( Q(x, b(k_1, k_2)) \right)^{-1} \;.
\ee
If $k_1, k_2 \in \mathsf{L}$, then the first two factors are equal to 1 by the isotropy of $\mathsf{L}_{2\Lambda}$. The last factor is equal to 1 thanks to the second admissibility condition.
\end{proof}

\subsection{LL subgroups and orientations}

\label{SecLLSubAndOr}

We obtain from $\phi$ a realization of $\mathsf{H}_M$ on $\mathsf{G}_M \times U(1)$ given by \eqref{EqExplRealHeisGrp}. A LL subgroup of $\mathsf{H}_M$ is then specified by a pair $(\mathsf{L}, q_\mathsf{L})$, where $\mathsf{L}$ is a Lagrangian subgroup of $\mathsf{G}_M$ and $q_\mathsf{L}$ is a quadratic refinement of $\phi|_\mathsf{L}$. More explicitly, the elements of $\mathsf{G}_M \times U(1)$ corresponding to $\tilde{\mathsf{L}}$ are pairs $(l, q_\mathsf{L}(l))$ with $l \in \mathsf{L}$. 

Proposition \ref{PropLAdmPhiTriv} shows that if $\mathsf{L}$ is admissible, there is a preferred associated LL subgroup, given by the trivial quadratic refinement, $q_\mathsf{L}(k) = 1$ for all $k \in \mathsf{L}$. We call this preferred LL subgroup \emph{admissible} and write it $\tilde{\mathsf{L}}$. 

Recall as well that $j$ determines a trivialization of the Hermitian lines $L_k$. We will find it convenient to write for $t \in \mathbb{C}$
\be
\rho(k,t) = \bigoplus_{x \in w + \mathsf{G}_M} t Q(x,j(k)) \;,
\ee
\be
\tilde{\rho}(k,t) =  \bigoplus_{x \in w + \mathsf{G}_M} t B(k,x - w) Q(x,j(k)) \;,
\ee
which are simply \eqref{EqDefRho} and \eqref{EqDefRhoT} expressed in the trivialization of $L_k$ determined by $j$. We will also write $\rho(k)$ and $\tilde{\rho}(k)$ for $\rho(k,1)$ and $\tilde{\rho}(k,1)$.

Let $r$ be a power of a prime $p$. $\mathbb{Z}_r$ is a ring, and let $\Xi$ be a free $\mathbb{Z}_r$ module, of the form $\mathbb{Z}_r^n$. An \emph{orientation} of $\Xi$ is a generator of $\bigwedge^{n} \Xi \simeq \mathbb{Z}_r$. For a generic finite abelian group $\Xi \simeq \bigoplus_{r} \mathbb{Z}_r^n$, an orientation is a collection of orientations for each of the summands. We will consider below orientations of Lagrangian subgroups of $\mathsf{G}_M$. By extension, an orientation of an LL subgroup is an orientation of the underlying Lagrangian subgroup.

\subsection{Bundle of Heisenberg modules and parallel transport}

\label{SecBunHeisParTrans}

Let $\mathcal{C}_{o,\epsilon}$ be the category defined as follows. Its objects are the oriented LL subgroups of $\mathsf{H}_M$ admissible with respect to $\epsilon$ and there is a single morphism between each pair of objects. Let $\mathcal{I}$ be the category of irreducible modules for $\mathsf{H}_M$, with morphisms given by intertwining isomorphisms. Our aim is to define a functor $\mathcal{F}_{o,\epsilon}: \mathcal{C}_{o,\epsilon} \rightarrow \mathcal{I}$. Given such a functor, we can take its space of invariant sections to get a canonical irreducible module for $\mathsf{H}_M$. ("Canonical" is understood here and in the following up to a choice of extra structure $\epsilon$.) We can picture $\mathcal{F}_{o,\epsilon}$ as a bundle of Heisenberg modules over $\mathcal{C}_{o,\epsilon}$ endowed with a holonomy-free parallel transport operation. The invariant sections are then the flat sections of $\mathcal{F}_{o,\epsilon}$. 

We define $\mathcal{F}_{o,\epsilon}$ on objects as follows. $\mathcal{F}_{o,\epsilon}$ associates the irreducible $\mathsf{H}_M$-module $\mathsf{I}(\tilde{\mathsf{L}})$ described in Section \ref{SecStrat} to the object $\tilde{\mathsf{L}}$ of $\mathcal{C}_{o,\epsilon}$. Note that this assignment is independent of the orientation.

We will now describe $\mathcal{F}_{o,\epsilon}$ on morphisms. Let $\tilde{\mathsf{L}}_1$ and $\tilde{\mathsf{L}}_2$ be two oriented admissible LL subgroups. Recall that $\mathsf{I}(\tilde{\mathsf{L}}_i)$ is composed of the vectors in $V_{M,w}$ invariant under the action of $\rho(\tilde{\mathsf{L}}_i)$. We can therefore define a homomorphism $F_{\tilde{\mathsf{L}}_2, \tilde{\mathsf{L}}_1}: \mathsf{I}(\tilde{\mathsf{L}}_1) \rightarrow \mathsf{I}(\tilde{\mathsf{L}}_2)$ by
\be
F_{\tilde{\mathsf{L}}_2, \tilde{\mathsf{L}}_1}(v) = \sum_{k \in \mathsf{L}_2} \rho(k)v \;, \quad v \in \mathsf{I}(\tilde{\mathsf{L}}_1) \;.
\ee
By irreducibility, these homomorphisms compose up to a complex factor: given admissible oriented LL subgroups $\tilde{\mathsf{L}}_1$, $\tilde{\mathsf{L}}_2$, $\tilde{\mathsf{L}}_3$, we have
\be
\label{EqDefCoca}
F_{\tilde{\mathsf{L}}_3, \tilde{\mathsf{L}}_2} \circ F_{\tilde{\mathsf{L}}_2, \tilde{\mathsf{L}}_1} = a(\tilde{\mathsf{L}}_3, \tilde{\mathsf{L}}_2, \tilde{\mathsf{L}}_1) F_{\tilde{\mathsf{L}}_3, \tilde{\mathsf{L}}_1} \;, \quad a(\tilde{\mathsf{L}}_3, \tilde{\mathsf{L}}_2, \tilde{\mathsf{L}}_1) \in \mathbb{C} \;.
\ee
\begin{remark}
In the context of Lagrangians of a real symplectic vector space, the quantity $a(\tilde{\mathsf{L}}_3, \tilde{\mathsf{L}}_2, \tilde{\mathsf{L}}_1)$ is essentially the \emph{Maslov index} of the triplet of Lagrangians $(\mathsf{L}_3, \mathsf{L}_2, \mathsf{L}_1)$, see for instance \cite{lion2013weil}. We will use this terminology in the following.
\end{remark}
In Section \ref{SecTrivMasInd}, Proposition \ref{PropFactorMaslovIndex}, we will prove
\begin{assertion}
\label{PropTrivialCompCoc}
There is a $\mathbb{C}^\ast$-valued function $b$ depending on ordered pairs of admissible oriented LL subgroups such that 
\be
\label{EqTrivialCompCoc}
a(\tilde{\mathsf{L}}_3, \tilde{\mathsf{L}}_2, \tilde{\mathsf{L}}_1) = b(\tilde{\mathsf{L}}_3, \tilde{\mathsf{L}}_2) b(\tilde{\mathsf{L}}_2, \tilde{\mathsf{L}}_1) (b(\tilde{\mathsf{L}}_3, \tilde{\mathsf{L}}_1))^{-1} \;.
\ee
\end{assertion}
We will refer to a function $b$ with the property \eqref{EqTrivialCompCoc} as a "trivialization" of the Maslov index. Given such a trivialization,
\be
\label{EqCompCompT}
T_{\tilde{\mathsf{L}}_2, \tilde{\mathsf{L}}_1} := (b(\tilde{\mathsf{L}}_2, \tilde{\mathsf{L}}_1))^{-1} F_{\tilde{\mathsf{L}}_2, \tilde{\mathsf{L}}_1}
\ee
satisfies 
\be
\label{EqCompatCompT}
T_{\tilde{\mathsf{L}}_3, \tilde{\mathsf{L}}_2} \circ T_{\tilde{\mathsf{L}}_2, \tilde{\mathsf{L}}_1} = T_{\tilde{\mathsf{L}}_3, \tilde{\mathsf{L}}_1} \;.
\ee
We can now define the functor $\mathcal{F}_{o,\epsilon}$ on morphisms:
\be
\mathcal{F}_{o,\epsilon}(\tilde{\mathsf{L}}_1 \rightarrow \tilde{\mathsf{L}}_2) := T_{\tilde{\mathsf{L}}_2, \tilde{\mathsf{L}}_1} \;.
\ee 
\eqref{EqCompatCompT} shows that $\mathcal{F}_{o,\epsilon}$ is compatible with the composition of morphisms in $\mathcal{C}_{o,\epsilon}$ and $\mathcal{I}$, hence is indeed a functor.

\subsection{Maslov index}

\label{SecMasInd}

The aim of the present section and of the next one is to prove Assertion \ref{PropTrivialCompCoc}. The first step, taken in the present section, is to study the Maslov index $a(\tilde{\mathsf{L}}_3, \tilde{\mathsf{L}}_2, \tilde{\mathsf{L}}_1)$, written $a_{321}$ in the following.

We start by introducing some notation and stating some basic facts about triplets of Lagrangian subgroups $(\mathsf{L}_1, \mathsf{L}_2, \mathsf{L}_3)$. Given such a triplet, define
\be
\mathsf{L}_{12} := \mathsf{L}_1 \cap \mathsf{L}_2 \;, \quad \mathsf{L}_{23} := \mathsf{L}_2 \cap \mathsf{L}_3  \;, \quad \mathsf{L}_{31} := \mathsf{L}_3 \cap \mathsf{L}_1 \;, 
\ee
and $\mathsf{L}_{123} = \mathsf{L}_1 \cap \mathsf{L}_2 \cap \mathsf{L}_3$. We write $\mathsf{S}_2$ for the subgroup of elements of $\mathsf{L}_2$ that can be written as sums of elements in $\mathsf{L}_1$ and in $\mathsf{L}_3$, $\mathsf{R}_2 := \mathsf{L}_2/\mathsf{S}_2$ and $\mathsf{U}_2 = \mathsf{S}_2/(\mathsf{L}_{12} \cup \mathsf{L}_{23})$. (By $\mathsf{L}_{12} \cup \mathsf{L}_{23}$, we mean the smallest subgroup of $\mathsf{L}_2$ containing $\mathsf{L}_{12}$ and $\mathsf{L}_{23}$.) By cyclic permutations of the indices, we define $\mathsf{S}_i$, $\mathsf{R}_i$ and $\mathsf{U}_i$ for $i = 1,3$. 

Consider a pair of group homomorphisms $p_{21}: \mathsf{S}_2 \rightarrow \mathsf{S}_3$ and $p_{23}: \mathsf{S}_2 \rightarrow \mathsf{S}_1$ with the property that
\be
\label{EqDefRelp}
k_2 = p_{21}(k_2) + p_{23}(k_2)
\ee
for each $k_2 \in \mathsf{S}_2$. Remark that if $k_2 \in \mathsf{L}_{12} \cap \mathsf{S}_2$, then the fact that $\mathsf{L}_1$ is a subgroup implies that $p_{23}(k_2) \in \mathsf{L}_{31} \cap \mathsf{S}_3$ and $p_{21}(k_2) \in (\mathsf{L}_{12} \cup \mathsf{L}_{31}) \cap \mathsf{S}_1$. Similarly, if $k_2 \in \mathsf{L}_{23} \cap \mathsf{S}_2$, then $p_{21}(k_2) \in \mathsf{L}_{31} \cap \mathsf{S}_1$ and $p_{23}(k_2) \in (\mathsf{L}_{23} \cup \mathsf{L}_{31}) \cap \mathsf{S}_3$. This implies that the homomorphisms $p_{21}$ and $p_{23}$ induce well-defined homomorphisms
\be
\tilde{p}_{21}: \mathsf{U}_2 \rightarrow \mathsf{U}_1 \;, \quad \tilde{p}_{23}: \mathsf{U}_2 \rightarrow \mathsf{U}_3 \;.
\ee
\begin{proposition}
$\tilde{p}_{21}$ and $\tilde{p}_{23}$ are isomorphisms. Moreover, they are independent of the choices of homomorphisms $p_{21}$ and $p_{23}$.
\end{proposition}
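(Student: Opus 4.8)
The plan is to reduce the statement to three independent verifications: (i) $\tilde{p}_{21}$ and $\tilde{p}_{23}$ are surjective, (ii) they are injective, and (iii) they do not depend on the choice of $p_{21}, p_{23}$. I will work directly with the quotient groups $\mathsf{U}_i = \mathsf{S}_i/(\mathsf{L}_{ij}\cup\mathsf{L}_{jk})$ and exploit the symmetry of the construction under cyclic permutation of the indices $1,2,3$, so that establishing one of the three maps settles the others.

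For surjectivity of $\tilde{p}_{23}: \mathsf{U}_2 \to \mathsf{U}_3$: take $k_3 \in \mathsf{S}_3$. By definition of $\mathsf{S}_3$ we can write $k_3 = k_1 + k_2$ with $k_1 \in \mathsf{L}_1$, $k_2 \in \mathsf{L}_2$. Then $k_2 = k_3 - k_1 \in \mathsf{L}_2$ is a sum of an element of $\mathsf{L}_3$ and an element of $\mathsf{L}_1$, so $k_2 \in \mathsf{S}_2$, and any admissible pair $(p_{21}, p_{23})$ decomposes $k_2 = p_{21}(k_2) + p_{23}(k_2)$ with $p_{21}(k_2) \in \mathsf{S}_3$, $p_{23}(k_2) \in \mathsf{S}_1$. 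I then need to compare $p_{23}(k_2)$ with $k_3$ modulo $\mathsf{L}_{31}\cup\mathsf{L}_{23}$: from $k_3 = k_1 + p_{21}(k_2) + p_{23}(k_2)$ and $k_3, p_{21}(k_2) \in \mathsf{L}_3$ one deduces $k_1 + p_{23}(k_2) \in \mathsf{L}_3 \cap \mathsf{S}_1 \subseteq \mathsf{L}_{31}$, and similarly the $\mathsf{L}_1$-component of the discrepancy lands in $\mathsf{L}_{12}\cup\mathsf{L}_{31}$ — so the image in $\mathsf{U}_3$ of $\tilde{p}_{23}(\bar{k_2})$ equals $\bar{k_3}$. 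Independence of the choices (iii) follows by the same bookkeeping: if $(p_{21}', p_{23}')$ is another pair, then $(p_{23} - p_{23}')(k_2) = (p_{21}' - p_{21})(k_2)$ lies simultaneously in $\mathsf{L}_1$ and $\mathsf{L}_3$, hence in $\mathsf{L}_{31}$, so $\tilde{p}_{23}$ and $\tilde{p}_{23}'$ agree on $\mathsf{U}_2$.

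For injectivity, the cleanest route is to observe that $\tilde{p}_{23}$ and $\tilde{p}_{21}$ are mutually inverse, up to the relation \eqref{EqDefRelp}. Concretely, from a homomorphism pair for the triplet $(\mathsf{L}_3,\mathsf{L}_2,\mathsf{L}_1)$ viewed the other way (say $p_{31}: \mathsf{S}_3 \to \mathsf{S}_2$, $p_{32}: \mathsf{S}_3 \to \mathsf{S}_1$, decomposing $k_3 = p_{31}(k_3) + p_{32}(k_3)$) one checks that $\tilde{p}_{31} \circ \tilde{p}_{23} = \mathrm{id}_{\mathsf{U}_2}$: starting from $k_2 \in \mathsf{S}_2$, apply $\tilde{p}_{23}$ to get (a representative of) $p_{23}(k_2) \in \mathsf{S}_1$, note $p_{23}(k_2) = k_2 - p_{21}(k_2)$ is also a sum of an $\mathsf{L}_2$-element and an $\mathsf{L}_3$-element, so it lies in $\mathsf{S}_1$ — wait, rather one tracks that applying the $\mathsf{L}_2$-projection recovers $k_2$ modulo $\mathsf{L}_{12}\cup\mathsf{L}_{23}$. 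Since $\tilde{p}_{23}$ then has a two-sided inverse it is an isomorphism; surjectivity already shown also gives injectivity once one knows the $\mathsf{U}_i$ are finite (which they are, as subquotients of the finite group $\mathsf{G}_M$), so in fact either half suffices.

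The main obstacle I anticipate is the careful diagram-chasing needed to show the induced maps are well-defined and inverse to each other — i.e. that the various discrepancy terms all land in the subgroups $\mathsf{L}_{ij}\cup\mathsf{L}_{jk}$ being quotiented out. This is entirely elementary but error-prone, since it requires keeping track of which of $\mathsf{L}_1, \mathsf{L}_2, \mathsf{L}_3$ each intermediate term belongs to; the symmetry under cyclic permutation is what keeps it manageable, and I would organize the argument so that only one of the six maps $\tilde{p}_{ij}$ is checked explicitly and the rest follow by relabeling.
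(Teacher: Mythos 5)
Your surjectivity argument for $\tilde{p}_{23}$ is correct and is the exact dual of what the paper does: the paper instead proves \emph{injectivity} directly (if $p_{21}(k_2)$ dies in $\mathsf{U}_1$, a short case analysis on whether $p_{21}(k_2)$ lies in $\mathsf{L}_{12}$ or $\mathsf{L}_{31}$ shows $k_2 \in \mathsf{L}_{12}\cup\mathsf{L}_{23}$), and then closes the argument by composing the three injections around the cycle $\mathsf{U}_2 \rightarrow \mathsf{U}_1 \rightarrow \mathsf{U}_3 \rightarrow \mathsf{U}_2$ and comparing cardinalities. Your computation that the discrepancy $k_3 - p(k_2)$ lies in $\mathsf{L}_3\cap\mathsf{L}_1 = \mathsf{L}_{31}$, hence dies in $\mathsf{U}_3$, is sound, and your independence argument (two admissible choices differ by an element of $\mathsf{L}_{31}$, which is killed in the quotient) is the same as the paper's.

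The genuine gap is in how you deduce bijectivity. The claim that ``surjectivity already shown also gives injectivity once one knows the $\mathsf{U}_i$ are finite'' is false for a single map: a surjection of finite groups $\mathsf{U}_2 \twoheadrightarrow \mathsf{U}_3$ only gives $|\mathsf{U}_2|\ge|\mathsf{U}_3|$, and equality of the orders is part of what must be proved. The correct way to close the loop from your surjectivity result is the cyclic composition you gesture at but never write down: the three surjections $\mathsf{U}_2 \twoheadrightarrow \mathsf{U}_3 \twoheadrightarrow \mathsf{U}_1 \twoheadrightarrow \mathsf{U}_2$ force $|\mathsf{U}_1|=|\mathsf{U}_2|=|\mathsf{U}_3|$, whence each is a bijection — precisely the paper's counting step, with arrows reversed. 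Your alternative route via a two-sided inverse ($\tilde{p}_{31}\circ\tilde{p}_{23}=\mathrm{id}$) would also work in principle, but as written it trails off mid-verification and cannot be counted as complete. Either fix is one or two lines, so the proposal is essentially viable but not finished as it stands.
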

\begin{proof}
We prove this proposition for $\tilde{p}_{21}$, the proof for $\tilde{p}_{23}$ being obtained by a permutation of the indices.

$\tilde{p}_{21}$ is injective. Indeed, let us write $\tilde{k}_2$ for the equivalence class in $\mathsf{U}_2$ of $k_2 \in \mathsf{S}_2$. Assume that $\tilde{p}_{21}(\tilde{k}_2) = 0$ in $\mathsf{U}_1$, so $p_{21}(k_2) \in (\mathsf{L}_{12} \cup \mathsf{L}_{31}) \cap \mathsf{S}_1$. If $p_{21}(k_2) \in \mathsf{L}_{12} \cap \mathsf{S}_1$, then $p_{23}(k_2) \in \mathsf{L}_{23} \cap \mathsf{S}_3$ so $k_2 \in (\mathsf{L}_{12} \cup \mathsf{L}_{23}) \cap \mathsf{S}_2$. If $p_{21}(k_2) \in \mathsf{L}_{31} \cap \mathsf{S}_1$, then $k_2 \in \mathsf{L}_{23} \cap \mathsf{S}_2$. By the homomorphism property, we have $k_2 \in (\mathsf{L}_{12} \cup \mathsf{L}_{23}) \cap \mathsf{S}_2$ whenever $p_{21}(k_2) \in (\mathsf{L}_{12} \cup \mathsf{L}_{31}) \cap \mathsf{S}_1$, so $\tilde{k}_2 = 0$.

Repeating the construction with permuted indices, we obtain a sequence of injective homomorphisms $\mathsf{U}_2 \stackrel{\tilde{p}_{21}}{\rightarrow} \mathsf{U}_1 \rightarrow \mathsf{U}_3 \rightarrow \mathsf{U}_2$. This can occur only if $|\mathsf{U}_1| = |\mathsf{U}_2| = |\mathsf{U}_3|$ and $\tilde{p}_{21}$ is an isomorphism.

To see that $\tilde{p}_{21}$ is independent of the choice of $p_{21}$, remark that the constraint \eqref{EqDefRelp} requires any other choice $p'_{21}$ to satisfy $p'_{21}(k_2) - p_{21}(k_2) \in \mathsf{L}_{31} \cap \mathsf{S}_1$, hence the induced homomorphisms between $\mathsf{U}_2$ and $\mathsf{U}_1$ agree.
\end{proof}
\noindent Therefore $\mathsf{U}_1$, $\mathsf{U}_2$ and $\mathsf{U}_3$ are canonically isomorphic.
\begin{proposition} 
\label{PropPairOnUNonDeg}
The following pairing on $\mathsf{S}_2$ is symmetric and bimultiplicative:
\be
\label{EqDefB321}
B_{321}(\bullet, \bullet) := B(p_{21}(\bullet), p_{23}(\bullet)) = B(\bullet, p_{23}(\bullet)) = B(p_{21}(\bullet), \bullet) \;.
\ee
It passes to a well-defined non-degenerate pairing on $\mathsf{U}_2$.
\end{proposition}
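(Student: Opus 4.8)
The plan is to verify the three assertions in Proposition \ref{PropPairOnUNonDeg} in order: first that $B_{321}$ is well-defined on $\mathsf{S}_2$ and that the three displayed expressions agree, then that it is symmetric and bimultiplicative, then that it descends to $\mathsf{U}_2$, and finally that the descended pairing is non-degenerate.

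\textbf{Well-definedness and the three formulas.} Given $k_2 \in \mathsf{S}_2$, write $k_2 = p_{21}(k_2) + p_{23}(k_2)$ with $p_{21}(k_2) \in \mathsf{S}_3 \subset \mathsf{L}_3$ and $p_{23}(k_2) \in \mathsf{S}_1 \subset \mathsf{L}_1$. For $k_2, k_2' \in \mathsf{S}_2$ I would compute $B(p_{21}(k_2), p_{23}(k_2'))$ and use bimultiplicativity of $B$ together with the facts that $B$ vanishes on $\mathsf{L}_1 \times \mathsf{L}_1$, on $\mathsf{L}_3 \times \mathsf{L}_3$ (isotropy of the Lagrangians with respect to the alternating pairing $B$), and is skew-symmetric. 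Concretely, $B(k_2, p_{23}(k_2')) = B(p_{21}(k_2) + p_{23}(k_2), p_{23}(k_2')) = B(p_{21}(k_2), p_{23}(k_2'))$ since $p_{23}(k_2), p_{23}(k_2') \in \mathsf{L}_1$; similarly $B(p_{21}(k_2), k_2') = B(p_{21}(k_2), p_{21}(k_2') + p_{23}(k_2')) = B(p_{21}(k_2), p_{23}(k_2'))$ since $p_{21}(k_2), p_{21}(k_2') \in \mathsf{L}_3$. This gives the equality of all three expressions, and in particular shows $B_{321}$ is independent of the choice of splitting homomorphisms $p_{21}, p_{23}$ (the second and third formulas never reference the splitting of the second argument resp. first argument, so any ambiguity is killed). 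Symmetry then follows: $B_{321}(k_2, k_2') = B(p_{21}(k_2), k_2')$ while $B_{321}(k_2', k_2) = B(k_2', p_{23}(k_2)) = B(k_2', p_{23}(k_2))$; using skew-symmetry of $B$ and that $B$ restricted to $\mathsf{L}_2 \times \mathsf{L}_2$ vanishes (since $k_2, k_2' \in \mathsf{L}_2$ and $\mathsf{L}_2$ is Lagrangian hence isotropic for $B$), one gets $B(p_{21}(k_2), k_2') + B(p_{23}(k_2), k_2') = B(k_2,k_2') = 1$, so $B(p_{21}(k_2),k_2') = B(p_{23}(k_2),k_2')^{-1} = B(k_2', p_{23}(k_2))$, which is exactly $B_{321}(k_2',k_2)$. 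Bimultiplicativity is immediate from bimultiplicativity of $B$ and linearity of $p_{21}, p_{23}$.

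\textbf{Descent to $\mathsf{U}_2$.} Recall $\mathsf{U}_2 = \mathsf{S}_2/(\mathsf{L}_{12} \cup \mathsf{L}_{23})$. I need $B_{321}(k_2, k_2') = 1$ whenever $k_2 \in \mathsf{L}_{12} \cup \mathsf{L}_{23}$ (symmetry then handles the other slot). By bimultiplicativity it suffices to treat $k_2 \in \mathsf{L}_{12}$ and $k_2 \in \mathsf{L}_{23}$ separately. If $k_2 \in \mathsf{L}_{12} = \mathsf{L}_1 \cap \mathsf{L}_2$, then $k_2 \in \mathsf{L}_1$, so I may take $p_{21}(k_2) = 0$, $p_{23}(k_2) = k_2$, whence $B_{321}(k_2,k_2') = B(p_{21}(k_2), k_2') = B(0, k_2') = 1$. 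If $k_2 \in \mathsf{L}_{23} = \mathsf{L}_2 \cap \mathsf{L}_3$, then $k_2 \in \mathsf{L}_3$, so I may take $p_{21}(k_2) = k_2$, $p_{23}(k_2) = 0$, and use the formula $B_{321}(k_2, k_2') = B(k_2, p_{23}(k_2')) = B(k_2, p_{23}(k_2'))$ with $k_2 \in \mathsf{L}_3$, $p_{23}(k_2') \in \mathsf{L}_1$ — here I instead use $B_{321}(k_2,k_2') = B(p_{21}(k_2),\bullet)$-free form: actually cleaner is $B_{321}(k_2, k_2') = B(p_{21}(k_2), p_{23}(k_2'))$ with $p_{21}(k_2) = k_2 \in \mathsf{L}_3$ and $p_{23}(k_2') \in \mathsf{L}_1$, which need not vanish, so I should instead argue via symmetry: $B_{321}(k_2,k_2') = B_{321}(k_2',k_2) = B(p_{21}(k_2'), p_{23}(k_2))$ and now $p_{23}(k_2) = 0$, giving $1$. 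So the correct route is: use the first formula for $k_2 \in \mathsf{L}_{12}$, and symmetry plus the first formula for $k_2 \in \mathsf{L}_{23}$. Hence $B_{321}$ descends to a symmetric bimultiplicative pairing on $\mathsf{U}_2$.

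\textbf{Non-degeneracy.} This is the main obstacle and the only part requiring real work. Suppose $\tilde k_2 \in \mathsf{U}_2$ lies in the radical of the descended pairing, i.e. $B_{321}(k_2, k_2') = 1$ for all $k_2' \in \mathsf{S}_2$, where $k_2 \in \mathsf{S}_2$ represents $\tilde k_2$; I must show $k_2 \in \mathsf{L}_{12} \cup \mathsf{L}_{23}$. The idea: $B_{321}(k_2, k_2') = B(p_{21}(k_2), k_2')$, so the hypothesis says $p_{21}(k_2) \in \mathsf{S}_3$ pairs trivially under $B$ with all of $\mathsf{S}_2 = \mathsf{L}_1 \cap (\mathsf{L}_1 + \mathsf{L}_3)$-type set — more precisely with all of $\mathsf{S}_2$. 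I would then want to promote this to: $p_{21}(k_2)$ pairs trivially with a large enough subgroup of $\mathsf{G}_M$ to force $p_{21}(k_2) \in \mathsf{L}_{31} \cup \mathsf{L}_{23}$ (the pieces of $\mathsf{S}_3$ that map to $0$ in $\mathsf{U}_3$), which via the isomorphism $\tilde p_{21}$ (or rather its partner) gives $\tilde k_2 = 0$. The key computation: since $p_{21}(k_2) \in \mathsf{L}_3$ and $\mathsf{L}_3$ is Lagrangian (maximal isotropic) for $B$, the annihilator of $\mathsf{S}_2$ containing $\mathsf{L}_3$... I would use that $B$ is perfect on $\mathsf{G}_M$, that $\mathsf{L}_3^\perp = \mathsf{L}_3$, and that $\mathsf{S}_2 + \mathsf{L}_3 = (\mathsf{L}_1 + \mathsf{L}_3) \cap$ (something) — and compute $(\mathsf{S}_2 + \mathsf{L}_3)^\perp$. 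One shows $\mathsf{S}_2 + \mathsf{L}_3 = \mathsf{L}_3 + \mathsf{L}_{12}$... no: $\mathsf{S}_2 = \mathsf{L}_2 \cap (\mathsf{L}_1 + \mathsf{L}_3)$, so $\mathsf{S}_2 + \mathsf{L}_3 = (\mathsf{L}_2 + \mathsf{L}_3) \cap (\mathsf{L}_1 + \mathsf{L}_3)$ by a modular-lattice (Dedekind) identity since $\mathsf{L}_3 \subseteq \mathsf{L}_1 + \mathsf{L}_3$. Then $(\mathsf{S}_2 + \mathsf{L}_3)^\perp = (\mathsf{L}_2+\mathsf{L}_3)^\perp + (\mathsf{L}_1+\mathsf{L}_3)^\perp = (\mathsf{L}_2 \cap \mathsf{L}_3) + (\mathsf{L}_1 \cap \mathsf{L}_3) = \mathsf{L}_{23} + \mathsf{L}_{31}$, using perfectness of $B$, $\mathsf{L}_i^\perp = \mathsf{L}_i$, and $(A+B)^\perp = A^\perp \cap B^\perp$, $(A\cap B)^\perp = A^\perp + B^\perp$ for subgroups in a finite abelian group with a perfect pairing. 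Now the hypothesis $B(p_{21}(k_2), \mathsf{S}_2) = 1$ together with $p_{21}(k_2) \in \mathsf{L}_3$ and $B(\mathsf{L}_3, \mathsf{L}_3) = 1$ gives $B(p_{21}(k_2), \mathsf{S}_2 + \mathsf{L}_3) = 1$, i.e. $p_{21}(k_2) \in (\mathsf{S}_2+\mathsf{L}_3)^\perp = \mathsf{L}_{23} + \mathsf{L}_{31}$. Since also $p_{21}(k_2) \in \mathsf{S}_3$, and $\mathsf{L}_{23} + \mathsf{L}_{31} \subseteq \mathsf{S}_3$ (both are subgroups of $\mathsf{L}_3$ visibly of the required form), we get $p_{21}(k_2) \in (\mathsf{L}_{23} \cup \mathsf{L}_{31}) \cap \mathsf{S}_3$, which is precisely the subgroup defining the quotient $\mathsf{U}_3$; so the class of $p_{21}(k_2)$ in $\mathsf{U}_3$ is zero, and by the isomorphism statement ($\tilde p_{23}$ is an iso, or equivalently the analogue of $\tilde p_{21}$ going $\mathsf{U}_2 \to \mathsf{U}_3$ induced by $-p_{21}$ composed with the canonical identifications) this forces $\tilde k_2 = 0$ in $\mathsf{U}_2$. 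I expect the only subtle point is getting the lattice identities $(\mathsf{S}_2 + \mathsf{L}_3)^\perp = \mathsf{L}_{23} + \mathsf{L}_{31}$ exactly right and correctly invoking the perfectness of $B$ on $\mathsf{G}_M$ (which holds by the results quoted in Section \ref{SecWilHooftOp}), plus checking that $\mathsf{L}_{23} + \mathsf{L}_{31} \subseteq \mathsf{S}_3$ and that its image in $\mathsf{U}_3$ vanishes by definition of $\mathsf{U}_3 = \mathsf{S}_3/(\mathsf{L}_{23} \cup \mathsf{L}_{31})$.
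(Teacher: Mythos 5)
Your proof is correct, and on everything except non-degeneracy it matches the paper's argument step for step: the chain of equalities and the descent to $\mathsf{U}_2$ follow from isotropy of the Lagrangians and the defining relation for the splitting homomorphisms, symmetry comes from expanding $1 = B(k_2,k_2')$ (your additive "$+$" should of course be a product in the multiplicative notation, but that is cosmetic), and bimultiplicativity is immediate.

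On non-degeneracy your route is genuinely different in execution, and in fact more careful than the paper's. The paper asserts that the radical condition places the $\mathsf{L}_3$-component of $k$ in $\mathsf{L}_2^\perp = \mathsf{L}_2$, hence in $\mathsf{L}_{23}$; but the hypothesis only gives orthogonality to $\mathsf{S}_2$, not to all of $\mathsf{L}_2$, so that step is a shortcut. You instead combine $\perp \mathsf{S}_2$ with isotropy of $\mathsf{L}_3$ to get orthogonality to $\mathsf{S}_2 + \mathsf{L}_3$, and compute $(\mathsf{S}_2+\mathsf{L}_3)^\perp = \mathsf{L}_{23}+\mathsf{L}_{31}$ via the Dedekind modular law and perfectness of $B$ (with $\mathsf{L}_i^\perp = \mathsf{L}_i$). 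Since $\mathsf{L}_{23}+\mathsf{L}_{31}$ is exactly the subgroup quotiented out in $\mathsf{U}_3$, the class vanishes there and injectivity of the induced map from $\mathsf{U}_2$ finishes the argument — the same endgame as the paper, reached by a sound path. The only blemishes are presentational: you briefly "choose" values of the fixed homomorphisms $p_{21},p_{23}$ at individual elements when checking descent (legitimate only because you have already shown $B_{321}$ is insensitive to the choice of splitting, which you did establish), and your index conventions follow the paper's first displayed definition of $p_{21},p_{23}$ rather than the one used in the rest of that section — harmless, since your usage is internally consistent.
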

\begin{proof}
Using the isotropy of $\mathsf{S}_i$, the defining relation \eqref{EqDefRelp} for $p_{21}$ and $p_{23}$ and the bimultiplicative property of $B$, one easily proves the chain of equalities in \eqref{EqDefB321}. The isotropy of $\mathsf{S}_i$ makes it also clear that $B_{321}(\bullet, \bullet)$ passes to a well-defined pairing on $\mathsf{U}_2$.  

$B_{321}$ is bimultiplicative, as $B$ is bimultiplicative and $p_{21}$ and $p_{23}$ are group homomorphisms. We have for $k,k' \in \mathsf{U}_2$
\be
1 = B(p_{21}(k)+p_{23}(k), p_{21}(k') + p_{23}(k')) = (B(p_{21}(k'),p_{23}(k)))^{-1} B(p_{21}(k), p_{23}(k'))) \;,
\ee
so $B_{321}$ is symmetric. Finally, suppose that there is $u  \in \mathsf{U}_2$ such that $B_{321}(u,u') = 1$ for all $u' \in \mathsf{U}_2$. Let $k$ be a lift of $u$ to $\mathsf{S}_2$. Then $p_{23}(k)$ is an element of $\mathsf{L}_3$ in $\mathsf{L}_2^\perp = \mathsf{L}_2$. Hence $p_{23}(k) \in \mathsf{L}_{23}$, $\tilde{p}_{23}(u) = 0$ and $u = 0$, so $B_{321}$ is non-degenerate on $\mathsf{U}_2$.
\end{proof}
\noindent Depending on the context, we will consider $B_{321}$ as a non-degenerate pairing on $\mathsf{U}_2$ or as a possibly degenerate pairing on $\mathsf{S}_2$. The main result of this section is the following.
\begin{proposition}
\label{PropaAsGaussSum}
$a_{321}$ can be expressed as a Gauss sum,
\be
\label{EqaAsGaussSum}
a_{321} = {\rm Gauss}(q_{321}) \;,
\ee
where $q_{321}$ is the quadratic function on $\mathsf{S}_2$ defined by
\be
\label{EqDefq321}
q_{321}(k) :=  \phi(k,p_{23}(k)) \;,
\ee
for $k \in \mathsf{S}_2$. 
\end{proposition}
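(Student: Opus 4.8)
The plan is to compute the composite $F_{\tilde{\mathsf{L}}_3, \tilde{\mathsf{L}}_2} \circ F_{\tilde{\mathsf{L}}_2, \tilde{\mathsf{L}}_1}$ acting on a suitably normalized vector in $\mathsf{I}(\tilde{\mathsf{L}}_1)$, expand it as a double sum over $\mathsf{L}_3$ and $\mathsf{L}_2$, and reorganize that double sum so that the dependence on $\tilde{\mathsf{L}}_1$ drops out (producing a multiple of $F_{\tilde{\mathsf{L}}_3, \tilde{\mathsf{L}}_1}$), with the proportionality constant identified as a sum over $\mathsf{S}_2$ of a phase. First I would fix a vector $v \in \mathsf{I}(\tilde{\mathsf{L}}_1)$ and use the regular-representation picture of Proposition \ref{PropVRegRepHeisGrp}: $\rho$ and $\tilde{\rho}$ generate commuting copies of $\mathsf{H}_M$, and $\mathsf{I}(\tilde{\mathsf{L}}_i)$ is the $\rho(\tilde{\mathsf{L}}_i)$-invariant subspace. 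Writing everything in the explicit realization \eqref{EqExplRealHeisGrp} attached to the lift $j$ and cocycle $\phi$, I compute
\[
F_{\tilde{\mathsf{L}}_3, \tilde{\mathsf{L}}_2}\big(F_{\tilde{\mathsf{L}}_2, \tilde{\mathsf{L}}_1}(v)\big) = \sum_{k_3 \in \mathsf{L}_3}\sum_{k_2 \in \mathsf{L}_2} \rho(k_3)\rho(k_2) v = \sum_{k_3 \in \mathsf{L}_3}\sum_{k_2 \in \mathsf{L}_2} \phi(k_3,k_2)\,\rho(k_3 + k_2) v \;.
\]
The key structural step is the change of summation variable: split $k_2 = s + r$ where $s \in \mathsf{S}_2$ (the part expressible as a sum of an $\mathsf{L}_1$-element and an $\mathsf{L}_3$-element) and $r$ runs over a transversal of $\mathsf{S}_2$ in $\mathsf{L}_2$; for $s \in \mathsf{S}_2$ write $s = p_{21}(s) + p_{23}(s)$ with $p_{21}(s)\in\mathsf{S}_3\subset\mathsf{L}_3$, $p_{23}(s)\in\mathsf{S}_1\subset\mathsf{L}_1$. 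Since $v$ is $\rho(\mathsf{L}_1)$-invariant, $\rho(p_{23}(s))v = v$ up to the explicit $\phi$-cocycle factor, and absorbing $\rho(p_{21}(s))$ into a reindexing $k_3 \mapsto k_3 - p_{21}(s)$ of the $\mathsf{L}_3$-sum, the $s$-dependence collapses into a scalar phase, namely $q_{321}(s) = \phi(s, p_{23}(s))$ after collecting all cocycle factors using the cocycle identity \eqref{EqPhiCocycle}, the skew-symmetry of $\phi$, and the admissibility ($\phi|_{\mathsf{L}_i}=1$, Proposition \ref{PropLAdmPhiTriv}). The $r$-sum over the transversal and the residual $\mathsf{L}_3$-sum reassemble exactly into $|\text{transversal}|\cdot F_{\tilde{\mathsf{L}}_3,\tilde{\mathsf{L}}_1}(v)$ up to an overall factor depending only on orders of intersection groups, leaving $a_{321} = c \sum_{s \in \mathsf{S}_2} q_{321}(s)$, i.e.\ a Gauss sum of $q_{321}$ on $\mathsf{S}_2$ (with the overall constant $c$ harmless, or better, absorbed by the normalization implicit in the definition of the Gauss sum in Appendix \ref{SecPairFinAbGroups}).

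I would then verify that $q_{321}$ as defined in \eqref{EqDefq321} is genuinely a quadratic function refining the pairing $B_{321}$ of Proposition \ref{PropPairOnUNonDeg}: bimultiplicativity of $\phi(k_1,k_2)\phi(k_2,k_1)^{-1} = B(k_1,k_2)$ from \eqref{EqPhiSkewRef} together with the definition $q_{321}(k) = \phi(k,p_{23}(k))$ gives $q_{321}(k+k') q_{321}(k)^{-1} q_{321}(k')^{-1} = B_{321}(k,k')$ after using $k+k' - p_{23}(k+k') = p_{21}(k)+p_{21}(k')$ up to an element of $\mathsf{L}_{31}$ and the isotropy of $\mathsf{L}_1,\mathsf{L}_3$ — this is the routine bit. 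One should also check $q_{321}$ is well-defined on $\mathsf{S}_2$ independently of the choice of splitting $p_{21},p_{23}$, which follows because two choices differ by a map into $\mathsf{L}_{31}\cap\mathsf{S}_1$ on which $\phi$ restricts trivially; and that its restriction to the redundant part $\mathsf{L}_{12}\cup\mathsf{L}_{23}$ is trivial, so the Gauss sum over $\mathsf{S}_2$ is (up to a known order factor) the Gauss sum over $\mathsf{U}_2$ where $B_{321}$ is non-degenerate.

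The main obstacle I anticipate is the careful bookkeeping of the cocycle factors $\phi$ during the reindexing — getting every application of \eqref{EqPhiCocycle} and every use of $\phi|_{\mathsf{L}_i}=1$ and $\phi(k,k')\phi(k',k)^{-1}=B(k,k')$ in the right place so that the cross-terms between the reindexing shift $p_{21}(s)$, the $\mathsf{L}_1$-invariance shift $p_{23}(s)$, and the running variables $k_2, k_3$ combine precisely into $\phi(s,p_{23}(s))$ and nothing else. A secondary subtlety is tracking the overall normalization constant $c$ (powers of $|\mathsf{L}_{ij}|$, $|\mathsf{S}_i|$, $|\mathsf{R}_i|$) so that \eqref{EqaAsGaussSum} holds on the nose with whatever normalization of "${\rm Gauss}$" is fixed in Appendix \ref{SecPairFinAbGroups}; since only the cohomology class of $a_{321}$ (as a $\mathbb{C}^\ast$-valued $2$-cochain) matters for Assertion \ref{PropTrivialCompCoc}, I expect this constant to be either $1$ by a direct count or absorbable, and I would state it explicitly rather than leave it implicit.
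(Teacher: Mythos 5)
Your overall strategy --- expand $F_{\tilde{\mathsf{L}}_3,\tilde{\mathsf{L}}_2}\circ F_{\tilde{\mathsf{L}}_2,\tilde{\mathsf{L}}_1}$ as a double sum over $\mathsf{L}_3\times\mathsf{L}_2$, split off the $\mathsf{S}_2$-part of $k_2$, absorb one component into the $\rho(\tilde{\mathsf{L}}_1)$-invariance of $v$ and the other into a reindexing of the $\mathsf{L}_3$-sum --- is close in spirit to the paper's, and the cocycle bookkeeping for the $k_2\in\mathsf{S}_2$ terms would indeed collapse to $\phi(k,p_{23}(k))$. The gap is the claim that the ``$r$-sum over the transversal and the residual $\mathsf{L}_3$-sum reassemble exactly into $|\mathrm{transversal}|\cdot F_{\tilde{\mathsf{L}}_3,\tilde{\mathsf{L}}_1}(v)$''; this step fails, and it is not a harmless normalization issue. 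Writing $v=\sum_{k_1\in\mathsf{L}_1}\rho(k_1)v_0$, each term of the resulting triple sum lies in the Hermitian line $\mathcal{S}^e(M,w+k_1+k_2+k_3)$, and $F_{\tilde{\mathsf{L}}_3,\tilde{\mathsf{L}}_1}(v)$ is supported on the lines indexed by $\mathsf{L}_1+\mathsf{L}_3$. A term with $k_2\notin\mathsf{S}_2=\mathsf{L}_2\cap(\mathsf{L}_1+\mathsf{L}_3)$ lands outside this support, so it cannot contribute a multiple of $F_{\tilde{\mathsf{L}}_3,\tilde{\mathsf{L}}_1}(v)$: those terms must cancel among themselves, which they do because for fixed $k_2\notin\mathsf{S}_2$ the sum over the $\mathsf{L}_{13}$-torsor of pairs $(k_1,k_3)$ with fixed $k_1+k_3$ produces the nontrivial character $B(\,\cdot\,,k_2)$ on $\mathsf{L}_{13}=(\mathsf{L}_1+\mathsf{L}_3)^\perp$. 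Crediting each coset representative $r$ with a copy of $F_{\tilde{\mathsf{L}}_3,\tilde{\mathsf{L}}_1}(v)$ instead introduces a spurious factor $|\mathsf{R}_2|$, and since the proposition asserts $a_{321}={\rm Gauss}(q_{321})$ on the nose (the Gauss sum of Appendix \ref{SecPairFinAbGroups} carries no normalization), the constant you leave as ``harmless or absorbable'' must in fact be shown to equal $1$.

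The paper sidesteps both difficulties by not reorganizing the operator at all: since $F_{\tilde{\mathsf{L}}_3,\tilde{\mathsf{L}}_2}\circ F_{\tilde{\mathsf{L}}_2,\tilde{\mathsf{L}}_1}=a_{321}F_{\tilde{\mathsf{L}}_3,\tilde{\mathsf{L}}_1}$ is already guaranteed by irreducibility, it suffices to compare a single matrix element. Taking $v=\sum_{k\in\mathsf{L}_1}\rho(k)v_0$ and the functional $\delta_0$ dual to $v_0$, one finds $\delta_0\left(F_{\tilde{\mathsf{L}}_3,\tilde{\mathsf{L}}_1}v\right)=|\mathsf{L}_{13}|$, while in the triple sum for $\delta_0\left(F_{\tilde{\mathsf{L}}_3,\tilde{\mathsf{L}}_2}F_{\tilde{\mathsf{L}}_2,\tilde{\mathsf{L}}_1}v\right)$ only the triples with $k_1+k_2+k_3=0$ survive; these are in bijection with $\mathsf{S}_2\times\mathsf{L}_{13}$, the factor $|\mathsf{L}_{13}|$ cancels, and the remaining phases reduce to $\phi(k,p_{23}(k))$ using the cocycle identity and $\phi|_{\mathsf{L}_i}=1$. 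I recommend adopting this projection onto a single graded component: it makes the $k_2\notin\mathsf{S}_2$ terms drop out for free and fixes the normalization exactly, after which your verification that $q_{321}$ is a well-defined quadratic refinement of $B_{321}$ (which the paper defers to the following proposition) goes through as you describe.
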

\begin{proof}
Consider 
\be
\label{EqVectIL1}
v = \sum_{k \in \mathsf{L}_1} \rho(k) v_0 \in \mathsf{I}(\tilde{\mathsf{L}}_1)\;,
\ee
where $v_0$ is a vector generating $\mathcal{S}^e(M, w) \subset V_{M,w}$. We also write $\delta_0$ for the linear functional on $V_{M,w}$ equal to $1$ on $v_0$ and to zero on $S^e(M,x)$, $x \neq w$. Let us compute
\be
\label{EqCompRHSProjForA}
\delta_0 \left(  F_{\tilde{\mathsf{L}}_3, \tilde{\mathsf{L}}_1} v \right) = \delta_0 \left( \sum_{k' \in \mathsf{L}_3} \sum_{k \in \mathsf{L}_1} \phi(k',k) \rho(k'+k) v_0 \right) = |\mathsf{L}_{13}| \;.
\ee
Indeed, $k+k' = 0$ only if $k,k' \in \mathsf{L}_{13}$. Combining \eqref{EqDefCoca} and \eqref{EqCompRHSProjForA}, we have
\begin{align}
\label{EqCompA1}
|\mathsf{L}_{13}| a_{321}  \: & = \delta_0 \left( F_{\tilde{\mathsf{L}}_3, \tilde{\mathsf{L}}_2} \circ F_{\tilde{\mathsf{L}}_2, \tilde{\mathsf{L}}_1} v \right)  \notag \\
& = \delta_0 \left( \sum_{k'' \in \mathsf{L}_3} \sum_{k' \in \mathsf{L}_2} \sum_{k \in \mathsf{L}_1} \rho(k'') \: \rho(k') \: \rho(k) \: v_0 \right) \\
& = \sum_{k \in \mathsf{S}_2} \sum_{l \in \mathsf{L}_{13}} \: \phi(p_{23}(k) - l, -k) \: \phi(-p_{21}(k)-l, p_{21}(k)+l) \notag \\
& = |\mathsf{L}_{13}| \sum_{k \in \mathsf{S}_2} \: \phi(k, p_{23}(k)) \;. \notag 
\end{align}
On the second line, we used the defining relations of the Heisenberg group and the fact that the triplets $(k,k',k'') \in \mathsf{L}_1 \times \mathsf{L}_2 \times \mathsf{L}_3$ such that $k + k' + k'' = 0$ are in bijection with $\mathsf{S}_2 \times \mathsf{L}_{13}$. The second factor of the third line is equal to 1, because both arguments are in $\mathsf{L}_1$, and $\phi$ is trivial when restricted to an admissible Lagrangian. Furthermore, using the 2-cocycle property \eqref{EqPhiCocycle} of $\phi$, for $k \in \mathsf{S}_2$, $l \in \mathsf{L}_{31}$,
\begin{align}
\phi(p_{23}(k) - l, -k) \: & = \phi(-l, p_{23}(k)-k) \phi(p_{23}(k),-k) \phi^{-1}(-l, p_{23}(k)) \notag \\
& = \phi(p_{23}(k),-k) \notag \;,
\end{align}
where the first (third) factor is equal to 1 because both of its arguments are in $\mathsf{L}_1$ ($\mathsf{L}_3$). We deduce that 
\be
a_{321} = \sum_{k \in S_2} \: q_{321}(k) = {\rm Gauss}(q_{321})
\ee
\end{proof}

\begin{proposition}
$q_{321}$ is a tame quadratic refinement of $B_{321}$, seen as a degenerate pairing on $\mathsf{S}_2$.
\end{proposition}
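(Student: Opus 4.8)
The plan is to establish two things: that $q_{321}$ is a quadratic function whose associated bimultiplicative pairing is $B_{321}$ (so that it is a \emph{quadratic refinement} of $B_{321}$), and that it is \emph{tame}, i.e. trivial on the radical of $B_{321}$ regarded as a degenerate pairing on $\mathsf{S}_2$.

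First I would compute the polar form of $q_{321}$. Fix $k_1,k_2\in\mathsf{S}_2$ and abbreviate $m_i:=p_{23}(k_i)$ and $l_i:=p_{21}(k_i)$, so $k_i=l_i+m_i$ with $l_i\in\mathsf{L}_1$, $m_i\in\mathsf{L}_3$, and, since $p_{23}$ is a homomorphism, $p_{23}(k_1+k_2)=m_1+m_2$. The cleanest bookkeeping uses the explicit realization \eqref{EqExplRealHeisGrp} of $\mathsf{H}_M$ on $\mathsf{G}_M\times U(1)$ with the cocycle $\phi$: by Proposition \ref{PropLAdmPhiTriv} the maps $x\mapsto(x,1)$ are group homomorphisms on each of the admissible Lagrangians $\mathsf{L}_1,\mathsf{L}_2,\mathsf{L}_3$, and $q_{321}(k)$ is by definition the $U(1)$-component of $(k,1)\cdot(p_{23}(k),1)$. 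Expanding the product $(k_1,1)(m_1,1)(k_2,1)(m_2,1)$ two ways — grouping it into the two blocks $(k_i,1)(m_i,1)$, which gives the $U(1)$-component $\phi(k_1+m_1,\,k_2+m_2)\,q_{321}(k_1)\,q_{321}(k_2)$; and commuting $(m_1,1)$ past $(k_2,1)$ at the cost of the commutator $\phi(m_1,k_2)\phi(k_2,m_1)^{-1}=B(m_1,k_2)$ from \eqref{EqPhiSkewRef} and then using the homomorphism property on $\mathsf{L}_2$ and $\mathsf{L}_3$, which gives $B(m_1,k_2)\,q_{321}(k_1+k_2)$ — and comparing yields
\be
q_{321}(k_1+k_2)\,q_{321}(k_1)^{-1}q_{321}(k_2)^{-1}=\phi(k_1+m_1,\,k_2+m_2)\,B(m_1,k_2)^{-1}.
\ee
A further expansion of the right-hand side, feeding in the cocycle identity \eqref{EqPhiCocycle} and $\phi|_{\mathsf{L}_2}=\phi|_{\mathsf{L}_3}=1$, collapses it to $B(l_1,m_2)=B(p_{21}(k_1),p_{23}(k_2))$, which by the chain of equalities \eqref{EqDefB321} is exactly $B_{321}(k_1,k_2)$. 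Since $B_{321}$ is bimultiplicative (Proposition \ref{PropPairOnUNonDeg}), this identity shows simultaneously that $q_{321}$ is a quadratic function and that $B_{321}$ is its polar form; together with $q_{321}(0)=1$ this is the statement that $q_{321}$ is a quadratic refinement of $B_{321}$.

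For tameness I would proceed as follows. By Proposition \ref{PropPairOnUNonDeg}, $B_{321}$ descends to a non-degenerate pairing on $\mathsf{U}_2=\mathsf{S}_2/(\mathsf{L}_{12}\cup\mathsf{L}_{23})$, so the radical of $B_{321}$ on $\mathsf{S}_2$ is precisely the subgroup $\mathsf{L}_{12}\cup\mathsf{L}_{23}$. By the previous step, $q_{321}$ restricts to a homomorphism on this radical (the $B_{321}$ term in the defining identity drops out there), so it suffices to check that $q_{321}$ vanishes on the generating subgroups $\mathsf{L}_{12}$ and $\mathsf{L}_{23}$. If $k\in\mathsf{L}_{12}\subseteq\mathsf{S}_2$, the remarks following the definition of $p_{21},p_{23}$ give $p_{23}(k)\in\mathsf{L}_{31}=\mathsf{L}_1\cap\mathsf{L}_3$, so both $k$ and $p_{23}(k)$ lie in $\mathsf{L}_1$ and $q_{321}(k)=\phi(k,p_{23}(k))=1$ by Proposition \ref{PropLAdmPhiTriv}; similarly if $k\in\mathsf{L}_{23}\subseteq\mathsf{S}_2$ then $p_{23}(k)\in(\mathsf{L}_{23}\cup\mathsf{L}_{31})\cap\mathsf{S}_3\subseteq\mathsf{L}_3$, so both $k$ and $p_{23}(k)$ lie in $\mathsf{L}_3$ and again $q_{321}(k)=1$. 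Hence $q_{321}$ vanishes on the entire radical and is tame.

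The main obstacle is the cocycle bookkeeping in the second paragraph: although conceptually routine, one must be careful that the terms produced when $(k_i,1)$ and $(m_i,1)$ fail to lie in a common Lagrangian are absorbed by \eqref{EqPhiCocycle} and the admissibility of the three Lagrangians rather than surviving into the polar form — and, if one does not wish to quote from Proposition \ref{PropaAsGaussSum} that $q_{321}$ is quadratic, one must verify quadraticity at the same time. Everything else — identifying the radical and the vanishing of $q_{321}$ on it — is immediate from the structural facts already established for triples of admissible Lagrangian subgroups.
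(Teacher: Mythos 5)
Your tameness argument coincides with the paper's: the radical of $B_{321}$ on $\mathsf{S}_2$ is $\mathsf{L}_{12}\cup\mathsf{L}_{23}$ by Proposition \ref{PropPairOnUNonDeg}, the polar identity makes $q_{321}$ a homomorphism there, and on each generator $k$ and $p_{23}(k)$ land in a common admissible Lagrangian so $\phi$ kills them. That half is fine. Your two-fold expansion of $(k_1,1)(m_1,1)(k_2,1)(m_2,1)$ is also correctly executed and does yield
\be
q_{321}(k_1+k_2)=q_{321}(k_1)\,q_{321}(k_2)\cdot\phi(k_1+m_1,\,k_2+m_2)\,B(m_1,k_2)^{-1}\;.
\ee

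The gap is the sentence claiming the residual factor "collapses to $B(l_1,m_2)$". Note that $B(m_1,k_2)^{-1}=B(k_2,p_{23}(k_1))=B_{321}(k_2,k_1)=B_{321}(k_1,k_2)$ already, by the symmetry of $B_{321}$; so what you actually need is precisely $\phi(k_1+m_1,k_2+m_2)=1$. This is a genuinely nontrivial identity: the elements $k_i+p_{23}(k_i)$ span an isotropic subgroup (again by the symmetry of $B_{321}$), but it is not one of the three Lagrangians on which $\phi$ is known to vanish, and an isotropic subgroup only forces $\phi=\pm1$ there, not $\phi=1$. Worse, the natural attack --- expanding $\phi(k_1+m_1,k_2+m_2)$ with the cocycle identity through the splittings $(k_1,m_1,k_2+m_2)$, $(m_1,k_2,m_2)$, $(k_1,k_2,m_1+m_2)$ and using $\phi|_{\mathsf{L}_2}=\phi|_{\mathsf{L}_3}=1$ --- is circular: it reproduces exactly the displayed identity above and yields no new information. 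Breaking the circle requires an independent evaluation of $q_{321}(k_1+k_2)q_{321}(k_1)^{-1}q_{321}(k_2)^{-1}$, which is the entire content of the paper's multi-step reduction of $\phi(p_{23}(k)+p_{23}(k'),-(k+k'))$ to elementary factors; that computation uses $\phi|_{\mathsf{L}_1}=1$ essentially, not only the two Lagrangians your sketch invokes. So your proposal correctly reduces the proposition to a single identity and correctly guesses its value, but as written the hardest step is asserted rather than proved, and the route you indicate for it does not close.
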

\begin{proof}
Using repeatedly the fact \eqref{EqPhiCocycle} that $\phi$ is a 2-cocycle and that it vanishes on admissible Lagrangians, we compute for $k,k' \in \mathsf{S}_2$
\begin{align}
q_{321}(k+k') \, & = \phi(p_{23}(k) + p_{23}(k'), -k-k') \notag \\
& = \phi(p_{23}(k) + p_{23}(k'), -k) \phi(-p_{21}(k) + p_{23}(k'), -k') \notag \\
& = \phi(p_{23}(k), p_{23}(k') - k) \phi(p_{23}(k'), -k) \phi^{-1}(-p_{21}(k), p_{23}(k')) \phi(p_{23}(k'), -k') \\
& = \phi(-p_{21}(k), p_{23}(k')) \phi(p_{23}(k), -k) \phi^{-1}(-k, p_{23}(k')) \phi(p_{23}(k'), -k) \notag \\
& \qquad \phi^{-1}(-p_{21}(k), p_{23}(k')) \phi(p_{23}(k'), -k') \notag \\
& = q_{321}(k) q_{321}(k') B(p_{23}(k'), -k) \notag \\
& = q_{321}(k) q_{321}(k') B_{321}(k, k') \;. \notag
\end{align}
In the last stages, we also used the skew-symmetry of $\phi$.

To show that $q_{321}$ is tame, we have to show that it vanishes on the radical $(\mathsf{L}_{12} \cup \mathsf{L}_{23}) \cap \mathsf{S}_2$ of $B_{321}$. Let $k = k_1 + k_3$ be an element of the radical, with $k_1 \in \mathsf{L}_{12} \cap \mathsf{S}_2$ and $k_3 \in \mathsf{L}_{23} \cap \mathsf{S}_2$. As $p_{23}(k_1) \in \mathsf{L}_{31} \cap \mathsf{S}_3$ and $\phi = 1$ on $\mathsf{L}_1$, $q_{321}(k_1) = 1$. Similarly, as $p_{23}(k_3) \in (\mathsf{L}_{23} \cap \mathsf{L}_{31}) \cap \mathsf{S}_3$ and $\phi = 1$ on $\mathsf{L}_3$, $q_{321}(k_3) = 1$. Finally $B_{321}(k_3, k_1) = B(k_3, p_{23}(k_1)) = 1$ from the isotropy of $\mathsf{L}_3$. The quadratic refinement property therefore ensures that $q_{321}(k) = 1$, so $q_{321}$ is tame.
\end{proof}

\subsection{Trivialization of the Maslov index}

\label{SecTrivMasInd}

The aim of this section is to prove Assertion \ref{PropTrivialCompCoc}. Given a decomposition $\Gamma_0 = \bigoplus_r \mathbb{Z}_r^{m_r}$, with $r$ running over powers of primes, we obtain a decomposition of $\mathsf{G}_M \simeq \bigoplus_r \mathsf{G}_{M,r} \simeq \bigoplus_r \mathbb{Z}_r^{2n_r}$, as explained in Section \ref{SecExStruct}. Any Lagrangian subgroup of $\mathsf{G}_M$ (in the sense explained in Section \ref{SecStrat}) projects onto Lagrangian subgroups of each term of the direct sum, for which the analysis of the previous section can be carried out. The total Maslov index is then expressed as a product with one factor coming from each term in the direct sum. We will be able to trivialize the Maslov index independently on each term of the direct sum, so we focus on a single term in the following. We set therefore $\mathsf{G}_M = \mathbb{Z}_r^{2n_r}$ and we will write $n$ for $n_r$. Our first task is to suitably diagonalize $q_{321}$.

\begin{proposition}
If $r$ is even, $B_{321}$ is even, in the sense that for $u \in \mathsf{U}_2$, $B_{321}(u,u) \in \exp 2\pi i \frac{2}{r} \mathbb{Z}$.
\end{proposition}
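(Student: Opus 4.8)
The plan is to exploit the explicit description of $q_{321}$ from Proposition \ref{PropaAsGaussSum}, namely $q_{321}(k) = \phi(k, p_{23}(k))$, together with the relation \eqref{EqPhiSkewRef} between $\phi$ and the pairing $B$. First I would unwind what ``$B_{321}$ even'' should mean at the level of the quadratic refinement: for $u \in \mathsf{U}_2$ with lift $k \in \mathsf{S}_2$, Proposition \ref{PropPairOnUNonDeg} gives $B_{321}(u,u) = B(p_{21}(k), p_{23}(k))$, and since $k = p_{21}(k) + p_{23}(k)$ with both summands isotropic for $B$ (being in the Lagrangians $\mathsf{L}_1$ and $\mathsf{L}_3$ respectively, intersected with $\mathsf{S}_1$, $\mathsf{S}_3$), one can rewrite $B_{321}(u,u) = B(p_{21}(k), p_{23}(k))$. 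The goal is to show this takes values in the subgroup $\exp 2\pi i \frac{2}{r}\mathbb{Z} \subset U(1)$ rather than the full $r$-th roots of unity.

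The key input is the fact recorded just above \eqref{EqRelPsiPairing} and in Corollary \ref{CorPropSkewRefPsi}: $\psi$ is a square root of $B$, and more precisely $\psi(j_1,j_2)(\psi(j_2,j_1))^{-1} = \exp 2\pi i \langle j_1 \wedge j_2, [M]\rangle$, while on $\mathsf{G}_M$ one has $B(k_1,k_2) = \exp 2\pi i \langle \hat k_1 \wedge \hat k_2, [M]\rangle$. Concretely, $\psi$ lives on $\mathsf{G}_{M,2\Lambda} = H^{2\ell+1}_{\rm free}(M;\bar\Lambda^\ast)/2H^{2\ell+1}_{\rm free}(M;\bar\Lambda)$, whose ``$2\Lambda$'' denominator is exactly what doubles the denominators: for $j$ the canonical lift (or any lift) of $k \in \mathsf{G}_M = \mathbb{Z}_r^{2n}$, the class $j(k)$ lives in a group where, restricted to the relevant $\mathbb{Z}_r$-summand, the pairing $\psi$ is valued in $\exp 2\pi i \frac{1}{r}\mathbb{Z}$. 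Then $B = \psi \cdot (\text{flip of }\psi)$ shows $B_{321}(u,u)$, which is built from $B$ evaluated on the two pieces $p_{21}(k), p_{23}(k)$, picks up only ``$\psi$-times-$\psi$'' contributions on the diagonal; I would track that the diagonal of a pairing of the form $\psi(j, \cdot) \cdot \psi(\cdot, j)$ (or its descent through $p_{21}$, $p_{23}$) lands in $\exp 2\pi i \frac{2}{r}\mathbb{Z}$ because each $\psi$-factor contributes a $\frac{1}{r}\mathbb{Z}$ phase and they add.

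Carrying this out cleanly, the main obstacle is bookkeeping the passage between the three groups $\mathsf{G}_M$, $\mathsf{G}_{M,2\Lambda}$ and the subgroups $\mathsf{S}_i$, $\mathsf{U}_i$, and making sure the decomposition $\mathsf{G}_M = \bigoplus_r \mathsf{G}_{M,r}$ is respected by $p_{21}$, $p_{23}$ so that the ``$r$ even'' hypothesis is actually used on the right summand. A slicker route, which I would try first, is to avoid $\psi$ entirely: note $B_{321}(u,u) = B(p_{21}(k), p_{23}(k))$ and $1 = B(k,k) = B(p_{21}(k)+p_{23}(k), p_{21}(k)+p_{23}(k)) = B(p_{21}(k),p_{23}(k)) \cdot B(p_{23}(k),p_{21}(k))$ by isotropy of $\mathsf{L}_1$ and $\mathsf{L}_3$, so $B_{321}(u,u)^2 = B(p_{21}(k),p_{23}(k)) \cdot B(p_{21}(k),p_{23}(k)) = B(p_{21}(k),p_{23}(k)) \cdot B(p_{23}(k),p_{21}(k))^{-1}$. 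The right-hand side is a value of the perfect alternating pairing $B$ on $\mathsf{G}_{M,r} = \mathbb{Z}_r^{2n_r}$, hence lies in $\exp 2\pi i \frac{1}{r}\mathbb{Z}$; but $B_{321}(u,u)^2$ lying in $\exp 2\pi i \frac{1}{r}\mathbb{Z}$ forces $B_{321}(u,u) \in \exp 2\pi i \frac{1}{2r}\mathbb{Z} \cap (\text{$r$-torsion})$, which for $r$ even is not quite the claim — so in fact one needs the better bound that $B_{321}(u,u)^2 = 1$ when $r$ is even, i.e. that $B(p_{21}(k),p_{23}(k))$ is $\{\pm 1\}$-valued after squaring. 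That squaring statement follows because $(B_r)^{r/2}$ is the $\{\pm1\}$-valued pairing appearing in the extra structure $\epsilon$, and $B_{321}(u,u)^{r} = B(r\, p_{21}(k), p_{23}(k)) = 1$; combined with $B_{321}(u,u)^2 \in \exp 2\pi i \frac{1}{r}\mathbb{Z}$ one concludes $B_{321}(u,u) \in \exp 2\pi i \frac{2}{r}\mathbb{Z}$. I expect the genuinely delicate point to be justifying $B_{321}(u,u)^2 \in \exp 2\pi i \frac 1r \mathbb Z$ rigorously from the structure of $\mathsf{G}_{M,r}$ as a free $\mathbb{Z}_r$-module with a perfect alternating form, which is exactly where one invokes the classification of such forms (each $\mathbb{Z}_r$-summand pairs dually, so off-diagonal entries of $B$ are arbitrary $r$-th roots but the relevant combination on an isotropic-plus-isotropic decomposition is controlled), possibly citing Taylor \cite{Taylor} as flagged in Section \ref{SecStrat}.
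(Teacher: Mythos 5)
There is a genuine gap, and it sits exactly where you flagged the ``delicate point''. Your final inference does not go through: knowing $B_{321}(u,u)^{2}\in\exp 2\pi i\tfrac{1}{r}\mathbb{Z}$ only says $B_{321}(u,u)$ is a $2r$-th root of unity, and combined with $B_{321}(u,u)^{r}=1$ this gives back that it is an $r$-th root of unity --- which you already knew --- not that it lies in $\exp 2\pi i\tfrac{2}{r}\mathbb{Z}$. Worse, the identities you extract from isotropy are vacuous: since $B$ is alternating, $B(p_{21}(k),p_{23}(k))\cdot B(p_{23}(k),p_{21}(k))=1$ holds automatically and carries no information, so the ``slicker route'' never produces a nontrivial constraint. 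The statement is in fact \emph{false} for a general triple of pairwise-transverse-or-not Lagrangians: take $r=2$, $\mathsf{G}=\mathbb{Z}_2^2$ with $B(e_1,e_2)=-1$, $\mathsf{L}_1=\langle e_1\rangle$, $\mathsf{L}_2=\langle e_1+e_2\rangle$, $\mathsf{L}_3=\langle e_2\rangle$; then $k=e_1+e_2$, $p_{21}(k)=e_1$, $p_{23}(k)=e_2$ and $B_{321}(u,u)=-1\notin\{1\}=\exp 2\pi i\tfrac{2}{2}\mathbb{Z}$. So no argument that uses only the symplectic structure can work; the admissibility of the three Lagrangians with respect to the extra structure $\epsilon$ is essential, and that triple is precisely one for which no quadratic refinement of $B^{r/2}$ of Arf invariant zero vanishes on all three Lagrangians.

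The missing idea is to use the quadratic refinement $q_r$ of $(B_r)^{r/2}$ supplied by $\epsilon$, together with the third admissibility condition $q_r|_{\mathsf{L}_{i,r}/2\mathsf{G}_{M,r}}=1$. The paper's proof is a one-liner along these lines: for $k\in\mathsf{S}_2$ a lift of $u$,
\be
(B_{321}(u,u))^{r/2} = \left(B(p_{21}(k),p_{23}(k))\right)^{r/2}
= q_r(k)\,\left(q_r(p_{21}(k))\right)^{-1}\left(q_r(p_{23}(k))\right)^{-1} = 1\;,
\ee
where the middle equality is the defining property of a (multiplicative) quadratic refinement applied to $k=p_{21}(k)+p_{23}(k)$, and the last equality holds because $k$, $p_{21}(k)$, $p_{23}(k)$ lie in $\mathsf{L}_2$, $\mathsf{L}_1$, $\mathsf{L}_3$ respectively, on each of which $q_r$ is trivial by admissibility. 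Since $B_{321}(u,u)$ is an $r$-th root of unity whose $(r/2)$-th power is $1$, it lies in $\exp 2\pi i\tfrac{2}{r}\mathbb{Z}$. You gestured at $(B_r)^{r/2}$ ``appearing in the extra structure'' but never invoked $q_r$ or its vanishing on the admissible Lagrangians, which is the entire content of the proposition; the detour through $\psi$ in your first route is likewise not needed and does not substitute for it.
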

\begin{proof}
Let $k \in \mathsf{S}_2$ be a lift of $u \in \mathsf{U}_2$. We have
\be
(B_{321}(u,u))^{r/2} = (B(p_{21}(k), p_{23}(k)))^{r/2} = (q_r(p_{21}(k)))^{-1} (q_r(p_{23}(k)))^{-1} q_r(k) = 1 \;.
\ee
We used the fact that $B^{r/2}$ admits the quadratic refinement $q_r$ and that by the third admissibility condition, $q_r$ vanishes on $\mathsf{L}_i$, $i = 1,2,3$.
\end{proof}
We will now use Taylor's classification of finite abelian groups endowed with quadratic refinements (Theorem 3.5 of \cite{Taylor1984259}, also reviewed in Appendix \ref{SecPairFinAbGroups}). Together with the evenness of $B_{321}$ for $r$ even, Taylor's classification directly implies
\begin{proposition}
\label{PropFormQuadrFuncEven}
When $r$ is even, $(\mathsf{U}_2,q_{321})$, as a group endowed with a quadratic refinement, is a direct sum of groups $(\mathbb{Z}_r^2,q_{(r,a_0, a_1)})$. Writing $u_0$ and $u_1$ for the generators of $\mathbb{Z}_r^2$, the quadratic refinement $q_{(r,a_0, a_1)}$, $a_0, a_1 \in \mathbb{Z}_r$, satisfies
\be
q_{(r,a_0, a_1)}(u_0) = \exp \frac{2\pi i}{r} a_0 \;, \quad q_{(r,a_0, a_1)}(u_1) = \exp \frac{2\pi i}{r} a_1 \;,
\ee
and the associated symmetric bimultiplicative form $B_{a_0,a_1}$ is the following:
\be
\label{EqMatBlockEven}
B_{a_0,a_1}(m_iu_i, m'_ju_j) = \exp \frac{2\pi i}{r} m_i b_{ij} m'_j \;, \quad  (b_{ij}) = \left( \begin{array}{cc} 2a_0 & 1 \\ 1 & 2a_1 \end{array} \right)\;.
\ee
\end{proposition}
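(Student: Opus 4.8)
The plan is to invoke the classification theorem of Taylor (Theorem 3.5 of \cite{Taylor1984259}, reviewed in Appendix \ref{SecPairFinAbGroups}), which describes finite abelian groups equipped with a quadratic refinement as orthogonal direct sums of a short list of indecomposable ``building blocks.'' Since $\mathsf{U}_2$ is a free $\mathbb{Z}_r$-module (a summand of $\mathsf{G}_M = \mathbb{Z}_r^{2n}$) and $q_{321}$ is a non-degenerate (by Proposition \ref{PropPairOnUNonDeg}, the pairing $B_{321}$ is non-degenerate on $\mathsf{U}_2$, and $q_{321}$ is a tame quadratic refinement, hence here a genuine non-degenerate one), only the homogeneous $\mathbb{Z}_r$-blocks can appear. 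First I would recall from Taylor's list that over $\mathbb{Z}_r$ with $r$ a prime power, the indecomposable blocks are the rank-1 block $\mathbb{Z}_r$ with a unit diagonal entry in the Gram matrix, and the rank-2 block $\mathbb{Z}_r^2$ with Gram matrix $\left(\begin{smallmatrix} 2a_0 & 1 \\ 1 & 2a_1\end{smallmatrix}\right)$; when $r$ is odd the rank-2 block is redundant (it diagonalizes), but when $r$ is even it genuinely occurs and the rank-1 blocks with odd diagonal do \emph{not} occur precisely when $B_{321}$ is even.

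The key step is therefore the preceding proposition: $B_{321}$ is even when $r$ is even. The proof given just above establishes $(B_{321}(u,u))^{r/2} = 1$ for all $u \in \mathsf{U}_2$, using that $B^{r/2}$ admits the quadratic refinement $q_r$ and that $q_r$ vanishes on each $\mathsf{L}_i$ by the third admissibility condition. Having $(B_{321}(u,u))^{r/2}=1$ for all $u$ says exactly that $B_{321}(u,u) \in \exp 2\pi i \tfrac{2}{r}\mathbb{Z}$, i.e. the diagonal of the Gram matrix is even. So in Taylor's classification of $(\mathsf{U}_2, q_{321})$ into indecomposable blocks, the rank-1 blocks carrying an odd diagonal Gram entry are excluded; what remains are the rank-1 blocks with even diagonal (which, after completing the square, are absorbed into the rank-2 blocks or handled uniformly) and the rank-2 blocks $(\mathbb{Z}_r^2, q_{(r,a_0,a_1)})$ displayed in \eqref{EqMatBlockEven}. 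Reading off the values of $q_{321}$ on the generators $u_0, u_1$ of such a block and the induced Gram matrix then yields the stated normal form.

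The main obstacle I anticipate is bookkeeping rather than conceptual: Taylor's classification over $\mathbb{Z}_r$ for $r = p^m$ (as opposed to over a field $\mathbb{F}_p$) has several cases depending on whether $p = 2$ and on the parity of diagonal entries, and one must check that the even-diagonal rank-1 summands can indeed be reorganized into rank-2 blocks of the form \eqref{EqMatBlockEven} (e.g. by pairing an even-diagonal $\mathbb{Z}_r$ block with a suitable hyperbolic-type partner, or noting $q_{(r,a_0,0)}$ already covers them). I would handle this by citing the precise statement of Theorem 3.5 of \cite{Taylor1984259} as reproduced in Appendix \ref{SecPairFinAbGroups}, so that the decomposition into the blocks $(\mathbb{Z}_r^2, q_{(r,a_0,a_1)})$ follows directly from evenness of $B_{321}$, and then simply verifying by the explicit formulas that $q_{(r,a_0,a_1)}(u_i) = \exp\frac{2\pi i}{r} a_i$ and that its associated bimultiplicative form is given by the matrix $\left(\begin{smallmatrix} 2a_0 & 1 \\ 1 & 2a_1\end{smallmatrix}\right)$, which is a one-line computation from $B(u,v) = q(u+v)q(u)^{-1}q(v)^{-1}$.
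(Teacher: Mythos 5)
Your proposal is correct and follows exactly the route the paper takes: the paper gives no proof beyond observing that Taylor's classification (Theorem 3.5 of \cite{Taylor1984259}, as reproduced in Appendix \ref{SecPairFinAbGroups}) together with the evenness of $B_{321}$ established in the preceding proposition ``directly implies'' the statement, which is precisely your argument. Your anticipated bookkeeping obstacle is in fact moot: for $r = 2^m$ the classification in the appendix lists only two indecomposable types, the rank-1 blocks $\Xi_{r,a}$ with $q(x) = a/(2r)$ and $a$ odd (hence odd diagonal $B(x,x) = a/r$, excluded by evenness) and the rank-2 blocks $\Xi_{r,a_1,a_2}$, so no even-diagonal rank-1 summands ever arise and nothing needs to be reorganized.
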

\begin{proposition}
\label{PropFormQuadrFuncOdd}
When $r$ is odd, $(\mathsf{U}_2,q_{321})$, as a group endowed with a quadratic refinement, is a direct sum of groups $(\mathbb{Z}_r,q_{(r,a)})$. $a$ is an integer prime to $r$, taken modulo $r$. Writing $u$ for the generator of $\mathbb{Z}_r$, the quadratic refinement is given by 
\be
\label{EqDefqra}
q_{(r,a)}(u) = \exp \frac{2\pi i}{r} a \;.
\ee
The associated bimultiplicative form $B_a$ is
\be
B_a(mu, m'u) = \exp \frac{4\pi i}{r} mam' \;.
\ee
\end{proposition}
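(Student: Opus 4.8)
The plan is to reduce Proposition \ref{PropFormQuadrFuncOdd} to the classification of finite abelian groups equipped with a non-degenerate quadratic refinement, using crucially that $2$ is invertible when $r$ is odd. First I would collect the input already assembled: by Proposition \ref{PropPairOnUNonDeg} the form $B_{321}$ is a non-degenerate symmetric bimultiplicative $U(1)$-valued pairing on $\mathsf{U}_2$, and by the immediately preceding proposition $q_{321}$ is a \emph{tame} quadratic refinement of $B_{321}$ on $\mathsf{S}_2$, hence descends to an honest quadratic refinement of the non-degenerate pairing $B_{321}$ on $\mathsf{U}_2$. Since $\mathsf{U}_2$ is a subquotient of $\mathsf{G}_{M,r}=\mathbb{Z}_r^{2n_r}$ with $r=p^m$, $p$ an odd prime, it is a finite abelian $p$-group. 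So the data is precisely a finite abelian $p$-group with odd $p$ carrying a non-degenerate quadratic refinement, and it suffices to classify such pairs. (Note that $|\mathsf{U}_2|$ need not equal $r$; the symbol $r$ in the statement, as in Proposition \ref{PropFormQuadrFuncEven}, should be read as the power of $p$ attached to whichever cyclic summand one is describing, and those powers may be proper divisors of the ambient $r$.)

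Second I would observe that for a group $A$ of odd order the quadratic refinement is rigidly determined by its associated form. The defining identity $q(u+u')q(u)^{-1}q(u')^{-1}=B(u,u')$ together with $q(2u)=q(u)^{4}$ forces $B(u,u)=q(u)^{2}$; one checks directly that $u\mapsto B(u,2^{-1}u)$ — with $2^{-1}$ the inverse of $2$ acting on the $p$-group $A$ — is a quadratic refinement of $B$; and any two refinements of $B$ differ by a homomorphism $A\to\{\pm1\}$, which is trivial because $|A|$ is odd. Hence $q_{321}$ is \emph{the} quadratic refinement of $B_{321}$, given explicitly by $q_{321}(u)=B_{321}(u,2^{-1}u)$, and classifying $(\mathsf{U}_2,q_{321})$ is identical to classifying $(\mathsf{U}_2,B_{321})$ as a symmetric bimultiplicative pairing.

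Third, I would invoke Taylor's classification (Theorem 3.5 of \cite{Taylor1984259}, reviewed in Appendix \ref{SecPairFinAbGroups}), or equivalently the standard diagonalizability of non-degenerate symmetric bimultiplicative pairings over $\mathbb{Z}/p^{k}$ for $p$ odd: $(\mathsf{U}_2,B_{321})$ splits orthogonally into rank-one pieces, each of the form $\mathbb{Z}_{p^{k}}$ with pairing $B(u,u)=\exp\frac{4\pi i}{p^{k}}a$ for some $a$ prime to $p$. Transporting $q_{321}$ through this splitting (legitimate since the refinement is a function of the form) gives on each piece $q(u)=B(u,2^{-1}u)=\exp\frac{2\pi i}{p^{k}}(2^{-1}a)$; setting $a':=2^{-1}a\bmod p^{k}$, still a unit, this is exactly the pair $(\mathbb{Z}_{p^{k}},q_{(p^{k},a')})$ of \eqref{EqDefqra}, with associated form $B_{a'}(mu,m'u)=q(u)^{2mm'}=\exp\frac{4\pi i}{p^{k}}\,m a' m'$. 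Summing over the pieces yields the proposition.

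I do not expect a deep obstacle. The only nonroutine ingredient is the structure result, in the precise normalization of \eqref{EqDefqra}; a self-contained substitute for citing Taylor is a Gram--Schmidt induction diagonalizing $B_{321}$ over $\mathbb{Z}/p^{k}$, where the one thing to check is that at each stage some as-yet-unused diagonal entry is a unit — true because a non-degenerate symmetric pairing over a local ring with odd residue field cannot be alternating on a nonzero direct summand. Beyond that, the work is the bookkeeping of constants: the passage from $a$ to $a'=2^{-1}a$, and keeping straight the $\exp\frac{2\pi i}{r}$ versus $\exp\frac{4\pi i}{r}$ conventions for $q$ and for its associated form.
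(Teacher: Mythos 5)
Your argument is correct and takes essentially the same route as the paper: the paper derives this proposition (like its even-$r$ companion, Proposition \ref{PropFormQuadrFuncEven}) directly from Taylor's classification as reviewed in Appendix \ref{SecPairFinAbGroups}, which is your third step, while your second step --- that in odd order the tame refinement is forced to equal $u \mapsto B_{321}(u,2^{-1}u)$, so classifying $(\mathsf{U}_2,q_{321})$ reduces to classifying $(\mathsf{U}_2,B_{321})$ --- is a pleasant extra the paper does not spell out but also does not need. One harmless slip in your constants: $B(u,2^{-1}u)=\bigl(\exp\tfrac{4\pi i}{p^k}a\bigr)^{2^{-1}}=\exp\tfrac{2\pi i}{p^k}(2\cdot 2^{-1}a)=\exp\tfrac{2\pi i}{p^k}a$, not $\exp\tfrac{2\pi i}{p^k}(2^{-1}a)$, so the unit in \eqref{EqDefqra} is the diagonal entry $a$ itself rather than $2^{-1}a$ --- immaterial, since the proposition only asserts the summands have this form for \emph{some} units.
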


Let $\omega_r = \exp \frac{2 \pi i} {r}$ if $r$ is odd and $\omega_r = \exp \frac{\pi i} {r}$ if $r$ is even. The following proposition expresses $q_{321}$ in terms of a bilinear form on $\mathsf{U}_2$.
\begin{proposition}
\label{PropQRInTermsOfBilin}
We have
\be
q_{321}(k) = \omega_r^{C(k,k)} \;,
\ee
where $C$ is a symmetric $\mathbb{Z}_r$-valued bilinear form on $\mathsf{U}_2 \simeq \mathbb{Z}_r^{n_\mathsf{U}}$ defined by 
\begin{align}
C = \frac{r}{2\pi i} \ln B_{321} \quad & r \mbox{ even,} \\
C = \frac{r}{4\pi i} \ln B_{321} \quad & r \mbox{ odd} \;.
\end{align}
\end{proposition}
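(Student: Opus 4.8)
The plan is to read the statement off the normal forms of Propositions~\ref{PropFormQuadrFuncEven} and~\ref{PropFormQuadrFuncOdd}. The key observation is that \emph{both} $q_{321}$ and the function $k\mapsto\omega_r^{C(k,k)}$ are quadratic refinements of $B_{321}$ on $\mathsf{U}_2$: for $q_{321}$ this is the content of the previous subsection, while for the second function it follows from the symmetry and bilinearity of $C$, which give $C(k+k',k+k')-C(k,k)-C(k',k')=2C(k,k')$, together with the relation $\omega_r^{2C(k,k')}=B_{321}(k,k')$, which holds by the very definition of $C$ in both the even and the odd case. Since two quadratic refinements of a fixed bimultiplicative pairing differ by a character, it is then enough to check that $q_{321}$ and $k\mapsto\omega_r^{C(k,k)}$ take the same value on a set of generators of $\mathsf{U}_2$, and for this I would use the generators of the decomposition of $(\mathsf{U}_2,q_{321})$ into the standard blocks $(\mathbb{Z}_r^2,q_{(r,a_0,a_1)})$ (for $r$ even) or $(\mathbb{Z}_r,q_{(r,a)})$ (for $r$ odd) supplied by Propositions~\ref{PropFormQuadrFuncEven} and~\ref{PropFormQuadrFuncOdd}.

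Before doing the check one has to make sense of $C$ and of $\omega_r^{C(k,k)}$. Since $\mathsf{U}_2$ has exponent dividing $r$, the pairing $B_{321}$ is valued in the group of $r$-th roots of unity, so $\ln B_{321}$ is canonically an element of $\tfrac{2\pi i}{r}\mathbb{Z}/2\pi i\mathbb{Z}$; dividing by $\tfrac{2\pi i}{r}$ when $r$ is even, or by $\tfrac{4\pi i}{r}$ when $r$ is odd — legitimate since $2$ is invertible modulo an odd $r$ — yields a well-defined $\mathbb{Z}_r$-valued form $C$, which is symmetric and bilinear because $B_{321}$ is symmetric and bimultiplicative. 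For $r$ odd, $\omega_r$ is a genuine $r$-th root of unity and $\omega_r^{C(k,k)}$ is unambiguous. For $r$ even one uses the evenness of $B_{321}$ established just above: it says precisely that in the normal form of Proposition~\ref{PropFormQuadrFuncEven} the matrix $(b_{ij})$ has even diagonal, so the integer $C(k,k)$ varies by a multiple of $2r$ under a change of representative $k\mapsto k+r\,e_i$, and hence $\omega_r^{C(k,k)}=\exp\tfrac{\pi i}{r}C(k,k)$ is well defined. This is the familiar point that the diagonal of a $\mathbb{Z}_r$-valued symmetric bilinear form is naturally $\mathbb{Z}_{2r}$-valued.

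The per-block computation is then a one-liner. On an odd block $(\mathbb{Z}_r,q_{(r,a)})$ one has $B_a(u,u)=\exp\tfrac{4\pi i}{r}a$, hence $C(u,u)=a$ and $\omega_r^{C(u,u)}=\exp\tfrac{2\pi i}{r}a=q_{(r,a)}(u)$ by \eqref{EqDefqra}. On an even block $(\mathbb{Z}_r^2,q_{(r,a_0,a_1)})$ the matrix \eqref{EqMatBlockEven} gives $C(u_0,u_0)=2a_0$ and $C(u_1,u_1)=2a_1$, so $\omega_r^{C(u_i,u_i)}=\exp\tfrac{\pi i}{r}(2a_i)=\exp\tfrac{2\pi i}{r}a_i=q_{(r,a_0,a_1)}(u_i)$. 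As the generators of all the blocks together generate $\mathsf{U}_2$, the two quadratic refinements agree everywhere, which is exactly the assertion of the proposition.

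The only genuine obstacle is bookkeeping: one must be scrupulous about whether a given quantity lives in $\mathbb{Z}$, in $\mathbb{Z}_r$ or in $\mathbb{Z}_{2r}$, and about the factor-of-two conventions hidden in $\omega_r$ in the even case. Once the evenness of $B_{321}$ has been invoked to give $\omega_r^{C(k,k)}$ a meaning, no further difficulty arises, and the proposition is essentially a repackaging of Propositions~\ref{PropFormQuadrFuncEven} and~\ref{PropFormQuadrFuncOdd}.
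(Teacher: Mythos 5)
Your proof is correct and follows essentially the same route as the paper's: both reduce the claim to the block decompositions of Propositions \ref{PropFormQuadrFuncEven} and \ref{PropFormQuadrFuncOdd} and verify the formula on the generators $u_j$ (even case) and $u$ (odd case). Your added remarks — that agreement on generators suffices because two quadratic refinements of $B_{321}$ differ by a character, and that the evenness of $B_{321}$ is what makes $\omega_r^{C(k,k)}$ well defined for $r$ even — are sound elaborations of points the paper leaves implicit.
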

\noindent
We used the fact that the division by $2$ is well-defined in $\mathbb{Z}_r$ for $r$ odd.
\begin{proof}
In the even case, Proposition \ref{PropFormQuadrFuncEven} implies that
\be
q_{(r,a_0, a_1)}(u_j) = \exp \frac{2\pi i}{r} a_j = \omega_r^{2a_j} = \omega_r^{\frac{r}{2\pi i} \ln B_{a_0, a_1}(a_j, a_j)} \;.
\ee
In the odd case, Proposition \ref{PropFormQuadrFuncOdd} implies that
\be
q_{(r,a)}(u) = \exp \frac{2\pi i}{r} a = \omega_r^a = \sqrt{\omega_r^{r\ln B_a(a,a)}} = \omega_r^{\frac{r}{4\pi i} \ln B_a(a,a)} \;,
\ee
where the square root is uniquely defined among $r$th roots of unity.
\end{proof}
Let us now recall the definition of the Kronecker symbol $(a|b)_K$, $a \in \mathbb{Z}$, $b \in \mathbb{N}$. Let $b = 2^{s_0} \prod_{i > 0} p_i^{s_i}$ be the prime decomposition of $b$. Then 
\be
\label{EqDefKronecker}
(a|b)_K := \binom{a}{2}^{s_0}_L \prod_{i > 0} \binom{a}{p_i}^{s_i}_L \;.
\ee
$\binom{a}{p_i}_L$ is the Legendre symbol of the odd prime $p_i$, namely the function on $\mathbb{Z}$ assigning $1$ to quadratic residue mod $p_i$, $-1$ to quadratic non-residue mod $p_i$ and $0$ to multiples of $p_i$. $\binom{a}{2}_L$ is defined to be $1$ if $a = \pm 1$ mod $8$, $-1$ if $a = \pm 3$ mod $8$, and zero if $a$ is even. The Kronecker symbol is multiplicative in its top argument as long as only strictly positive integers are involved (which will always be the case for us).

\begin{proposition}
\label{PropGaussSumDet}
For $r$ even, we have
\be
\label{EqGaussSumDetEven}
{\rm Gauss}(q_{321}) = \frac{|\mathsf{L}_{12}|^{1/2} |\mathsf{L}_{23}|^{1/2}|\mathsf{L}|^{1/2}}{|\mathsf{L}_{31}|^{1/2}} ({\rm det}C|r)_{K} \;.
\ee
where $|\mathsf{L}| := |\mathsf{L}_i|$.
For $r$ odd, we have
\be
\label{EqGaussSumDetOdd}
{\rm Gauss}(q_{321}) = \frac{|\mathsf{L}_{12}|^{1/2} |\mathsf{L}_{23}|^{1/2}|\mathsf{L}|^{1/2}}{|\mathsf{L}_{31}|^{1/2}} ({\rm Gauss}(q_{r,1}))^{n_\mathsf{U}} ({\rm det}C|r)_{K} \;.
\ee
\end{proposition}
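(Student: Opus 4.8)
The plan is to reduce the Gauss sum $\mathrm{Gauss}(q_{321}) = \sum_{k \in \mathsf{S}_2} q_{321}(k)$ to a Gauss sum over the non-degenerate quotient $\mathsf{U}_2$, and then evaluate the latter explicitly using the diagonalization provided by Propositions \ref{PropFormQuadrFuncEven}, \ref{PropFormQuadrFuncOdd} and the bilinear-form description of Proposition \ref{PropQRInTermsOfBilin}. First I would observe that since $q_{321}$ is a tame quadratic refinement of the (possibly degenerate) pairing $B_{321}$ on $\mathsf{S}_2$, whose radical is $(\mathsf{L}_{12} \cup \mathsf{L}_{23}) \cap \mathsf{S}_2$ and on which $q_{321}$ is trivial, the sum factorizes: $\mathrm{Gauss}(q_{321}) = |(\mathsf{L}_{12} \cup \mathsf{L}_{23}) \cap \mathsf{S}_2| \cdot \mathrm{Gauss}(q_{321}|_{\mathsf{U}_2})$. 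The order of the radical has to be rewritten in terms of the $|\mathsf{L}_{ij}|$; using $\mathsf{L}_{12} \cap \mathsf{L}_{23} = \mathsf{L}_{123}$ and the fact that these intersections all sit inside the Lagrangian $\mathsf{L}$, one gets $|(\mathsf{L}_{12} \cup \mathsf{L}_{23}) \cap \mathsf{S}_2| = |\mathsf{L}_{12}|\,|\mathsf{L}_{23}|/|\mathsf{L}_{123}|$, while $|\mathsf{S}_2| = |\mathsf{L}|\,|\mathsf{L}_{123}|/|\mathsf{L}_{31}|$ (from $\mathsf{L}_2/\mathsf{S}_2 \cong \mathsf{R}_2$ and a dimension count using that $\mathsf{S}_i$ is spanned by $\mathsf{L}_{i-1,i}$, $\mathsf{L}_{i,i+1}$ and a copy of $\mathsf{U}_i$, with $|\mathsf{U}_1|=|\mathsf{U}_2|=|\mathsf{U}_3|$). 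Combining, $|\mathsf{U}_2|^{1/2} = \big(|\mathsf{L}|\,|\mathsf{L}_{12}|\,|\mathsf{L}_{23}|/(|\mathsf{L}_{31}|\,|(\mathsf{L}_{12}\cup\mathsf{L}_{23})\cap\mathsf{S}_2|)\big)^{1/2}$, which is precisely the prefactor $|\mathsf{L}_{12}|^{1/2}|\mathsf{L}_{23}|^{1/2}|\mathsf{L}|^{1/2}/|\mathsf{L}_{31}|^{1/2}$ divided by $|(\mathsf{L}_{12}\cup\mathsf{L}_{23})\cap\mathsf{S}_2|^{1/2}$; so it remains to show $\mathrm{Gauss}(q_{321}|_{\mathsf{U}_2}) = |\mathsf{U}_2|^{1/2} (\det C|r)_K$ for $r$ even, and $|\mathsf{U}_2|^{1/2}(\mathrm{Gauss}(q_{r,1}))^{n_{\mathsf{U}}}(\det C|r)_K$ for $r$ odd.

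For the non-degenerate part I would use Proposition \ref{PropQRInTermsOfBilin}, which writes $q_{321}(k) = \omega_r^{C(k,k)}$ for a symmetric $\mathbb{Z}_r$-valued bilinear form $C$. Then $\mathrm{Gauss}(q_{321}|_{\mathsf{U}_2}) = \sum_{k \in \mathbb{Z}_r^{n_{\mathsf{U}}}} \omega_r^{C(k,k)}$ is a classical Gauss sum attached to a quadratic form over $\mathbb{Z}_r$. Diagonalizing $C$ over $\mathbb{Z}_r$ (possible up to the usual $2\times 2$ blocks in the even case, which is exactly the content of the normal form in Proposition \ref{PropFormQuadrFuncEven}), the sum splits into a product over one- or two-dimensional pieces. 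Each one-dimensional piece $\sum_{m \in \mathbb{Z}_r} \omega_r^{a m^2}$ is a standard one-variable Gauss sum, and the classical evaluation (see e.g. the reciprocity for generalized Gauss sums) gives $|\mathbb{Z}_r|^{1/2}$ times a root-of-unity phase that, in the odd case, is $(a|r)_K\,\mathrm{Gauss}(q_{r,1})$, and in the even case (where $B_{321}$ is even, so the relevant one-variable sums are over the hyperbolic $2\times2$ blocks of \eqref{EqMatBlockEven}) contributes only $|\mathbb{Z}_r^2|^{1/2}$ with trivial phase. Multiplicativity of the Kronecker symbol in its top argument then assembles the product of the $(a_i|r)_K$ into $(\det C|r)_K = (\prod_i a_i|r)_K$, taking care that $\det C$ is computed from the chosen diagonalization and that the determinant is well-defined modulo squares of units in $\mathbb{Z}_r$ (which is all the Kronecker symbol sees). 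For $r$ odd one collects $n_{\mathsf{U}}$ copies of the "reference" phase $\mathrm{Gauss}(q_{r,1})$; for $r$ even this reference phase is $1$ because of the evenness of $B_{321}$, which is why it does not appear in \eqref{EqGaussSumDetEven}.

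The main obstacle I anticipate is the bookkeeping in the even case: one must be careful that "diagonalization" of $C$ really means decomposition into hyperbolic $2\times 2$ blocks $\left(\begin{smallmatrix} 2a_0 & 1 \\ 1 & 2a_1\end{smallmatrix}\right)$ rather than genuine diagonal form, that the Gauss sum of such a block is $|\mathbb{Z}_r^2|^{1/2}$ with no anomalous phase (this uses $r$ even and a short direct computation completing the square, where the off-diagonal $1$ is a unit), and that $\det$ of the block is $4a_0a_1 - 1$, a unit, so the product of the block determinants is a unit whose Kronecker symbol is well-defined; matching this against $(\det C|r)_K$ requires knowing $C$ is represented, up to $\mathrm{GL}_{n_{\mathsf{U}}}(\mathbb{Z}_r)$-congruence, by a block-diagonal matrix with these blocks, which is exactly Taylor's classification invoked in Proposition \ref{PropFormQuadrFuncEven}. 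A secondary but routine point is checking that the half-integer exponents in $\omega_r = \exp(\pi i/r)$ for $r$ even interact correctly with the $2\times 2$ blocks (the cross term $m_0 m_1$ pairs with $\omega_r^2 = \exp(2\pi i/r)$, so no genuine half-integers survive); this is where a careless sign could creep in, so I would verify it on the smallest case $r=2$ explicitly before stating the general formula.
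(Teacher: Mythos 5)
Your overall reduction is sound and runs parallel to the paper's: the paper likewise splits off the radical (computing the modulus via Taylor's Corollary 1.11 and the same order identities $|\mathsf{L}_{12}\cup\mathsf{L}_{23}| = |\mathsf{L}_{12}||\mathsf{L}_{23}|/|\mathsf{L}_{123}|$, $|\mathsf{S}_2| = |\mathsf{L}||\mathsf{L}_{123}|/|\mathsf{L}_{13}|$) and reduces the phase to the non-degenerate quotient $\mathsf{U}_2$. The one real difference is that for the phase the paper does not diagonalize by hand: it lifts $C$ to an integer matrix $\tilde C$, writes $q_{321} = \tilde C \otimes q_{r,\mathrm{add}}$, and invokes the tensor-product Arf formula of Proposition \ref{PropArfInvQRFromLat} (Taylor's Proposition 2.8) as a black box. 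Your plan to evaluate the elementary Gauss sums block by block is a legitimate, more elementary substitute — but it is exactly there that your proposal goes wrong.

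The claim that the $2\times 2$ block $\bigl(\begin{smallmatrix} 2a_0 & 1 \\ 1 & 2a_1\end{smallmatrix}\bigr)$ over $\mathbb{Z}_r$, $r=2^m$, contributes $|\mathbb{Z}_r^2|^{1/2}$ \emph{with trivial phase} is false, and the "completing the square using that the off-diagonal $1$ is a unit" argument cannot be repaired: eliminating the cross term requires solving $2a_0c+1\equiv 0 \bmod 2^m$, which is impossible since $2$ is not a unit — this is precisely why Taylor's classification keeps these blocks indecomposable. Your own suggested sanity check at $r=2$, $a_0=a_1=1$ already refutes the claim: $\sum_{m_0,m_1\in\{0,1\}} \exp \pi i (m_0^2+m_1^2+m_0m_1) = 1-1-1-1 = -2$, i.e.\ modulus $2$ but phase $-1$ (this is the elliptic plane $\Xi_{2,1,1}$ of Arf invariant $\tfrac12$ from Appendix \ref{SecPairFinAbGroups}). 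The correct statement is that the block Gauss sum equals $r\cdot(4a_0a_1-1\,|\,r)_K$, which is $-1$ exactly when $m$ is odd and $a_0a_1$ is odd; multiplying these over the blocks is what produces the factor $(\det C|r)_K$ in \eqref{EqGaussSumDetEven}. As written, your even-case bookkeeping would output $\mathrm{Gauss}(q_{321}|_{\mathsf{U}_2}) = |\mathsf{U}_2|^{1/2}$ with no sign, i.e.\ it would silently assert $(\det C|r)_K = 1$, which contradicts Proposition \ref{PropGaussSumDet}. Once the block phase is computed correctly (or Proposition \ref{PropArfInvQRFromLat} is invoked instead, as in the paper), the rest of your argument — the prefactor accounting and the odd-case assembly of $(a_i|r)_K$ into $(\det C|r)_K$ together with the $n_{\mathsf{U}}$ copies of $\mathrm{Gauss}(q_{r,1})$ — goes through. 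A smaller point: your justification of $|\mathsf{S}_2| = |\mathsf{L}||\mathsf{L}_{123}|/|\mathsf{L}_{31}|$ by a "dimension count" glosses over the needed duality statement that $B$ identifies $\mathsf{L}_{13}/\mathsf{L}_{123}$ with the Pontryagin dual of $\mathsf{R}_2$; the paper proves this inside Proposition \ref{PropFactDetC}.
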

\begin{proof}
We can decompose the computation of the Gauss sum into the computation of its complex modulus and the computation of its phase. 

Corollary 1.11 of \cite{Taylor} tells us that 
\be
\label{EqNormGaussSumq321}
|{\rm Gauss}(q_{321})| = \sqrt{|\mathsf{L}_{12} \cup \mathsf{L}_{23}| |\mathsf{S}_2|} \;,
\ee
because $|\mathsf{L}_{12} \cup \mathsf{L}_{23}|$ is the radical of $B_{321}$ on $\mathsf{S}_2$. We can rewrite
\be
\label{EqRelForNormGaussSum}
|\mathsf{L}_{12} \cup \mathsf{L}_{23}| = \frac{|\mathsf{L}_{12}| |\mathsf{L}_{23}|}{|\mathsf{L}_{123}|} \;, \quad |\mathsf{S}_2| = \frac{|\mathsf{L}|}{|\mathsf{R}_2|} = \frac{|\mathsf{L}| |\mathsf{L}_{123}|}{|\mathsf{L}_{13}|} \;,
\ee
where the third equality follows from the fact that $|\mathsf{R}_2| = |\mathsf{L}_{13}/\mathsf{L}_{123}|$, as will be shown shortly in Proposition \ref{PropFactDetC}. Combining \eqref{EqNormGaussSumq321} with \eqref{EqRelForNormGaussSum}, we obtain \eqref{EqGaussSumDetEven} and \eqref{EqGaussSumDetOdd} up to phases.

As $q_{321}$ is tame, the phase of its Gauss sum coincides with the phase of the Gauss sum of its reduction to $\mathsf{U}_2$. The computation is then a straightforward application of Proposition \ref{PropArfInvQRFromLat}, which is a special case of another result of Taylor, namely Proposition 2.8 of \cite{Taylor}. To apply Proposition \ref{PropArfInvQRFromLat} we lift $C$ to a integer valued matrix $\tilde{C}$ such that $\tilde{C} = C$ mod $r$. We see $\tilde{C}$ as a pairing on $\mathbb{Z}^{n_\mathsf{U}}$, and write $\mathbb{Z}_r^{n_\mathsf{U}} = \mathbb{Z}^{n_\mathsf{U}} \otimes_{\mathbb{Z}} \mathbb{Z}_r$. Proposition \ref{PropQRInTermsOfBilin} implies that $\frac{1}{2\pi i}\ln q_{321}(k) = \tilde{C}(k,k) \otimes q_{r,{\rm add}}(1)$, where $q_{r,{\rm add}}$ is the additive quadratic refinement on $\mathbb{Z}_r$ taking value $\frac{1}{r}$ on the generator for $r$ odd and $\frac{1}{2r}$ for $r$ even. 
Substituting in Proposition \ref{PropArfInvQRFromLat}, we obtain the phases of \eqref{EqGaussSumDetEven} and \eqref{EqGaussSumDetOdd}.
\end{proof}

Our aim is now to reexpress the determinants in the formulas above in terms of orientations, to obtain a trivialization of the Maslov index. $C$ induces a pairing on $\bigwedge^{n_\mathsf{U}} \mathsf{U}_2$, which we also write $C$. We have
\begin{lemma}
\label{LemDetCFromOrient}
Pick any orientation $o_{\mathsf{U}_2}$ of $\mathsf{U}_2$. Then
\be
{\rm det}(C) = C(o_{\mathsf{U}_2}, o_{\mathsf{U}_2}) 
\ee
\end{lemma}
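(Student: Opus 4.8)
The plan is to observe that the asserted identity is nothing more than the defining property of the pairing induced by $C$ on the top exterior power $\bigwedge^{n_\mathsf{U}}\mathsf{U}_2$, read off in a well-chosen basis. First I would recall the construction of this induced pairing. Since $\mathsf{U}_2\simeq\mathbb{Z}_r^{n_\mathsf{U}}$ is a free $\mathbb{Z}_r$-module (by Propositions \ref{PropFormQuadrFuncEven} and \ref{PropFormQuadrFuncOdd}), the assignment
\be
\big(v_1\wedge\cdots\wedge v_{n_\mathsf{U}},\ w_1\wedge\cdots\wedge w_{n_\mathsf{U}}\big)\ \longmapsto\ \det\big(C(v_i,w_j)\big)_{1\le i,j\le n_\mathsf{U}}
\ee
is $\mathbb{Z}_r$-multilinear and alternating separately in the $v_i$ and in the $w_j$, so by the universal property of exterior powers it factors through a well-defined $\mathbb{Z}_r$-bilinear pairing on $\bigwedge^{n_\mathsf{U}}\mathsf{U}_2\simeq\mathbb{Z}_r$; this is precisely the pairing denoted $C$ in the statement.

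Next I would reduce to a basis computation. An orientation $o_{\mathsf{U}_2}$ is by definition a generator of $\bigwedge^{n_\mathsf{U}}\mathsf{U}_2$. The top exterior power of a free module of rank $n_\mathsf{U}$ is generated by the wedge $e_1\wedge\cdots\wedge e_{n_\mathsf{U}}$ of the members of any basis, and any two generators differ by a unit of $\mathbb{Z}_r$, which can be absorbed into the first basis vector (the change of basis $\mathrm{diag}(u,1,\dots,1)$ being invertible). Hence there is a $\mathbb{Z}_r$-basis $e_1,\dots,e_{n_\mathsf{U}}$ of $\mathsf{U}_2$ with $o_{\mathsf{U}_2}=e_1\wedge\cdots\wedge e_{n_\mathsf{U}}$. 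Evaluating the induced pairing on this representative then gives
\be
C(o_{\mathsf{U}_2},o_{\mathsf{U}_2})=C\big(e_1\wedge\cdots\wedge e_{n_\mathsf{U}},\,e_1\wedge\cdots\wedge e_{n_\mathsf{U}}\big)=\det\big(C(e_i,e_j)\big)=\det(C),
\ee
where the last equality is just the definition of $\det(C)$ as the determinant of the Gram matrix of $C$ in the basis $(e_i)$, which is the quantity entering Proposition \ref{PropGaussSumDet}.

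Finally I would add the one remark that needs care: over the ring $\mathbb{Z}_r$ the determinant of a bilinear form depends on the chosen basis only up to multiplication by the square of a unit, and correspondingly $C(o_{\mathsf{U}_2},o_{\mathsf{U}_2})$ depends on the orientation only up to the square of a unit. The lemma is thus the consistent assertion that fixing an orientation pins $\det(C)$ down, and in any event this residual ambiguity is invisible to the Kronecker symbol $(\det C\,|\,r)_K$ in which $\det(C)$ is subsequently used, since $(u^2\,|\,r)_K=1$ for every unit $u$. There is no genuine obstacle here: the content of the lemma is purely bookkeeping about exterior powers over $\mathbb{Z}_r$, and the only thing demanding attention is to phrase the identity so that its two sides carry exactly the same square-of-a-unit ambiguity.
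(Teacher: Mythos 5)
Your proof is correct and follows essentially the same route as the paper: expand $\det(C)$ as the Gram determinant in an oriented basis and recognize it as the induced pairing on $\bigwedge^{n_\mathsf{U}}\mathsf{U}_2$ evaluated on $o_{\mathsf{U}_2}$. Your closing remark that both sides carry the same square-of-a-unit ambiguity under a change of orientation, and that this is invisible to the Kronecker symbol $(\det C\,|\,r)_K$ where the result is used, is a point the paper leaves implicit and is worth making.
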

\begin{proof}
Recall that $\mathsf{U}_2 = \mathbb{Z}_r^{n_\mathsf{U}}$ and pick an oriented system of generators $e_i$, $i = 1,...,n_\mathsf{U}$. Using the definition of the determinant,
\begin{align}
{\rm det}(C) & \: = \sum_{\sigma \in S_{n_\mathsf{U}}} (-1)^\sigma \prod_i C(e_i, e_{\sigma{i}}) \notag\\
& = C(e_1 \wedge ... \wedge e_{n_\mathsf{U}},e_1 \wedge ... \wedge e_{n_\mathsf{U}}) \\
& = C(o_{\mathsf{U}_2}, o_{\mathsf{U}_2}) \;. \notag
\end{align}
\end{proof}
Let us define the $\{1,0,-1\}$-valued pairings
\begin{align}
E = \left(  \frac{r}{2\pi i} \ln B | r \right)_K \quad & r \mbox{ even,} \\
E = \left( \frac{1}{2}\frac{r}{2\pi i}  \ln B | r \right)_K \quad & r \mbox{ odd} \;,
\end{align}
where the extra division by 2 is again well-defined in $\mathbb{Z}_r$ for $r$ odd. Then 
\be
\left( C(k,k') | r \right)_K = E(p_{21}(k), p_{23}(k')) \;.
\ee
Remark that the groups $\mathsf{L}_{ij}$, $\mathsf{L}_{123}$, $\mathsf{S}_i$, $\mathsf{R}_i$, $i,j = 1,2,3$ are all free $\mathbb{Z}_r$-modules. We pick isomorphisms
\begin{align}
\label{EqDecompKi}
\mathsf{L}_1 \: & \simeq \mathsf{L}_{123} \times \mathsf{L}_{31}/\mathsf{L}_{123} \times \mathsf{L}_{12}/\mathsf{L}_{123} \times \mathsf{R}_1 \times \mathsf{U}_1\;,  \notag \\
\mathsf{L}_2 \: & \simeq \mathsf{L}_{123} \times \mathsf{L}_{12}/\mathsf{L}_{123} \times \mathsf{L}_{23}/\mathsf{L}_{123} \times \mathsf{R}_2 \times \mathsf{U}_2 \;, \\
\mathsf{L}_3 \: & \simeq \mathsf{L}_{123} \times \mathsf{L}_{23}/\mathsf{L}_{123} \times \mathsf{L}_{31}/\mathsf{L}_{123} \times \mathsf{R}_3 \times \mathsf{U}_3 \;, \notag
\end{align}
as well as orientations $o_{123}$ on $\mathsf{L}_{123}$, $o_{ij}$ on $\mathsf{L}_{ij}$ and $o_{\mathsf{S}_i}$ on $\mathsf{S}_i$. Together with the orientations $o_i$ existing on $\mathsf{L}_i$ and the choice of ordering in \eqref{EqDecompKi}, they determine orientations $o_{\mathsf{R}_i}$ on $\mathsf{R}_i$, $o_{\mathsf{U}_i}$ on $\mathsf{U}_i$, $o_i/o_{ij}$ on $\mathsf{L}_i/\mathsf{L}_{ij}$ and $o_{ij}/o_{123}$ on $\mathsf{L}_{ij}/\mathsf{L}_{123}$. We take the convention that the orientations on quotients are induced by ordering first the generators of the subgroup to be quotiented followed by the remaining generators. We will write $\mathsf{L}_{ij} = \mathbb{Z}_r^{n_{ij}}$ and $\mathsf{L}_{123} =\mathbb{Z}_r^{n_{123}}$.
\begin{proposition}
\label{PropFactDetC}
We have
\be
\label{EqFactDetC}
({\rm det}(C)|r)_K = \left((-1)^{n_{31}-n_{123}}| r \right)_K E(o_1/o_{12}, o_2/o_{12}) E(o_2/o_{23}, o_3/o_{23}) \left(E(o_1/o_{13}, o_3/o_{13})\right)^{-1}
\ee
\end{proposition}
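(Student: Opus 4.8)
The plan is to reduce the claimed identity for $({\rm det}(C)|r)_K$ to a linear-algebra computation of $\det(C)$ in terms of the chosen decompositions \eqref{EqDecompKi}, and then apply multiplicativity of the Kronecker symbol. First I would set up coordinates: using the isomorphisms \eqref{EqDecompKi}, pick bases of $\mathsf{L}_1$, $\mathsf{L}_2$, $\mathsf{L}_3$ adapted to the flags $\mathsf{L}_{123} \subset \mathsf{L}_{ij} \subset \mathsf{S}_i \subset \mathsf{L}_i$, compatible with the chosen orientations $o_{123}$, $o_{ij}$, $o_{\mathsf{S}_i}$, $o_i$ and with the induced orientations $o_{\mathsf{R}_i}$, $o_{\mathsf{U}_i}$, $o_i/o_{ij}$, $o_{ij}/o_{123}$. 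The homomorphisms $p_{21}\colon \mathsf{S}_2 \to \mathsf{S}_3$ and $p_{23}\colon \mathsf{S}_2 \to \mathsf{S}_1$, together with the isomorphisms $\tilde{p}_{21}$, $\tilde{p}_{23}$ between the $\mathsf{U}_i$, give a way to transport orientations and to express $C(k,k') = p_{21}(k)$-against-$p_{23}(k')$ under $E$ in block form.

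The key step is to write the Gram matrix of $C$ (lifted to a $\mathbb{Z}$-valued symmetric matrix, as in the proof of Proposition \ref{PropGaussSumDet}) in block form with respect to these adapted bases, observing that $B_{321}$ (hence $C$) vanishes on the radical $\mathsf{L}_{12} \cup \mathsf{L}_{23}$ and is non-degenerate on the complement $\mathsf{U}_2$; so up to the irrelevant radical directions the matrix is the Gram matrix of the pairing between $\mathsf{U}_1$-coordinates (via $p_{21}$, which up to an isomorphism identifying $\mathsf{U}_2$ with $\mathsf{U}_1$ lands in the $\mathsf{L}_1/\mathsf{L}_{12}$ block) and $\mathsf{U}_3$-coordinates (via $p_{23}$, landing in the $\mathsf{L}_3/\mathsf{L}_{23}$ block). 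Because $B_{321}(\bullet,\bullet) = B(\bullet, p_{23}(\bullet)) = B(p_{21}(\bullet),\bullet)$ and the $\mathsf{L}_i$ are Lagrangian, the relevant off-diagonal blocks of the ambient pairing $B$ between the three Lagrangians are what enter $E(o_1/o_{12}, o_2/o_{12})$, $E(o_2/o_{23}, o_3/o_{23})$, $E(o_1/o_{13}, o_3/o_{13})$; one must verify, using the defining relation \eqref{EqDefRelp} $k = p_{21}(k) + p_{23}(k)$ and bimultiplicativity, that the product of the first two such symbols divided by the third is exactly $({\rm det}(C)|r)_K$ up to a sign. The sign $(-1)^{n_{31}-n_{123}}$ comes from reordering the basis: passing from the ordering used to define $o_1/o_{13}$ and $o_3/o_{13}$ to the ordering needed to match $C$ requires a permutation of the $\mathsf{L}_{31}/\mathsf{L}_{123}$-block past the $\mathsf{U}$-block (of size $n_{31} - n_{123}$), and the Kronecker symbol $((-1)^{n_{31}-n_{123}}|r)_K$ records the sign this permutation contributes to the determinant. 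I would also invoke Lemma \ref{LemDetCFromOrient} to interpret $\det(C)$ as $C(o_{\mathsf{U}_2}, o_{\mathsf{U}_2})$ and track how $o_{\mathsf{U}_2}$ relates, through $\tilde{p}_{21}$, $\tilde{p}_{23}$ and the chosen flag splittings, to $o_1/o_{12}$, $o_2/o_{12}$, etc.

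The main obstacle I anticipate is bookkeeping the signs and orientation conventions correctly: there are three Lagrangians, each split into five pieces, with induced orientations on quotients depending on an ordering convention, and the isomorphisms $\tilde{p}_{21}$, $\tilde{p}_{23}$ twist these around. Getting the exact exponent $n_{31}-n_{123}$ (rather than, say, $n_{12}-n_{123}$ or a sum) requires being scrupulous about which block gets permuted past which, and about the cyclic asymmetry already visible in \eqref{EqFactDetC} (the third factor is inverted, not the first or second). A secondary technical point is that $\mathbb{Z}_r$ is not a field when $r$ is a prime power $p^m$ with $m>1$, so "determinant" and "non-degeneracy" must be handled over the ring $\mathbb{Z}_r$; but since all the groups $\mathsf{L}_{ij}$, $\mathsf{L}_{123}$, $\mathsf{S}_i$, $\mathsf{R}_i$, $\mathsf{U}_i$ are stated to be free $\mathbb{Z}_r$-modules and $C$ is non-degenerate on $\mathsf{U}_2$, the determinant is a unit in $\mathbb{Z}_r$ and the Kronecker symbol $({\rm det}(C)|r)_K$ is $\pm 1$, so the argument goes through with the usual care, invoking multiplicativity of the Kronecker symbol in its top argument (valid here since only positive integers appear). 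Once the block decomposition and the orientation dictionary are in place, the identity \eqref{EqFactDetC} follows by expanding the determinant along the block structure and matching terms.
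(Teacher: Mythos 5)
Your plan is essentially the paper's proof: block-decompose the pairing $E$ between the $\mathsf{L}_i$ with respect to the splittings \eqref{EqDecompKi}, expand each factor $E(o_i/o_{ij}, o_j/o_{ij})$ along the blocks, transport $o_{\mathsf{U}_2}$ via $\tilde{p}_{21}$, $\tilde{p}_{23}$, and conclude with Lemma \ref{LemDetCFromOrient} and multiplicativity of the Kronecker symbol. The only ingredient worth making explicit in your write-up is that $B$ identifies $\mathsf{L}_{12}/\mathsf{L}_{123}$ with the Pontryagin dual of $\mathsf{R}_3$ (and cyclically), which both guarantees the relevant off-diagonal blocks are unimodular so their contributions cancel between the three factors, and yields the counting relation $n_{31}-n_{123}=n_{\mathsf{R}_2}$ that converts the reordering sign into the stated exponent.
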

\begin{proof}
Consider the pairing determined by $E$ between $\mathsf{L}_2$ and $\mathsf{L}_3$. It has the following block structure with respect to the decomposition above:
\def\tmp{
  \begin{pmatrix}
  0 & 0 & 0 & 0 & 0 \\
  0 & 0 & 0 & * & 0 \\
  0 & 0 & 0 & 0 & 0 \\
  0 & 0 & * & * & * \\
  0 & 0 & 0 & * & *
 \end{pmatrix}
}
\be
\stackMath\def\stackalignment{r}
  \stackon
    {\mathsf{L}_2 \left\{\tmp\right.}
    {\overbrace{\phantom{\smash{\tmp\mkern -36mu}}}^{\mathlarger{\mathsf{L}_3}}\mkern 20mu}
\ee
For instance, $E(k_2, k_3) = 0$ for any $k_2 \in \mathsf{L}_{23}$, $k_3 \in \mathsf{L}_3$ because of the isotropy of $\mathsf{L}_3$. The pairing also vanishes when $k_2 \in \mathsf{U}_2$ and $k_3 \in \mathsf{L}_{31}/\mathsf{L}_{123}$, because $k_2$ can be written as the sum of an element in $\mathsf{L}_1$ and an element in $\mathsf{L}_3$. 

Remark also that $B$ identifies $\mathsf{L}_{12}/\mathsf{L}_{123}$ as the Pontryagin dual of $\mathsf{R}_3$. Indeed, any $U(1)$-valued character on $\mathsf{L}_{12}$ invariant under $\mathsf{L}_{123}$ can be extended to a character on $\mathsf{G}_M$ vanishing on $\mathsf{L}_3$. It has therefore to be of the form $B(\bullet, k_3)$ for some $k_3 \in \mathsf{L}_3$. Elements in $\mathsf{S}_3 \subset \mathsf{L}_3$ are associated to trivial characters on $\mathsf{L}_{12}/\mathsf{L}_{123}$. Using the fact that the pairing determined by $B$ between $\mathsf{L}_1/\mathsf{L}_{12}$ and $\mathsf{L}_2/\mathsf{L}_{12}$ is non-degenerate, counting generators in the decomposition \ref{EqDecompKi} proves the claim. Writing $\mathsf{R}_i \simeq \mathbb{Z}_r^{n_{\mathsf{R}_i}}$, we have
\be
\label{EqOrdKKR}
n_{12} - n_{123} = n_{\mathsf{R}_3} \;.
\ee
The same claims hold after permuting the indices cyclically.

We now use the block decomposition to compute the right-hand side of \eqref{EqFactDetC}.
\begin{align}
E(o_2/o_{23}, o_3/o_{23}) \: & = E((-1)^{n_{\mathsf{R}_2} n_{\mathsf{R}_3}} o_{12}/o_{123} \wedge o_{\mathsf{R}_2} \wedge o_{\mathsf{U}_2}, o_{31}/o_{123} \wedge o_{\mathsf{R}_3} \wedge o_{\mathsf{U}_3}) \notag \\
& = E(o_{\mathsf{U}_2}, o_{\mathsf{U}_3})E(o_{12}/o_{123}, o_{\mathsf{R}_3})E(o_{\mathsf{R}_2}, o_{31}/o_{123}) \\
& = E(o_{\mathsf{U}_2}, \tilde{p}_{23}(o_{\mathsf{U}_2})) \left( \frac{o_{\mathsf{U}_3}}{\tilde{p}_{23}(o_{\mathsf{U}_2})} | r \right)_K E(o_{12}/o_{123}, o_{\mathsf{R}_3})E(o_{\mathsf{R}_2}, o_{31}/o_{123}) \;, \notag
\end{align}
where $\frac{o_{\mathsf{U}_3}}{\tilde{p}_{23}(o_{\mathsf{U}_2})} \in \mathbb{Z}_r$. Similarly, we find that
\begin{align}
E(o_1/o_{12}, o_2/o_{12}) \: & = E(\tilde{p}_{21}(o_{\mathsf{U}_2}), o_{\mathsf{U}_2}) \left( \frac{o_{\mathsf{U}_1}}{\tilde{p}_{21}(o_{\mathsf{U}_2})} | r \right)_K E(o_{\mathsf{R}_1}, o_{23}/o_{123})E(o_{31}/o_{123}, o_{\mathsf{R}_2}) \;, \notag \\
E(o_1/o_{13}, o_3/o_{13}) \: & = E(\tilde{p}_{21}(o_{\mathsf{U}_2}), \tilde{p}_{23}(o_{\mathsf{U}_2})) \left( \frac{o_{\mathsf{U}_1}}{\tilde{p}_{21}(o_{\mathsf{U}_2})} \frac{o_{\mathsf{U}_3}}{\tilde{p}_{23}(o_{\mathsf{U}_2})} | r \right)_K 
\\
& \qquad \cdot E(o_{12}/o_{123}, o_{\mathsf{R}_3})E(o_{\mathsf{R}_1}, o_{23}/o_{123}) \notag
\end{align}
Therefore
\begin{align}
& E(o_3/o_{23}, o_2/o_{23}) E(o_2/o_{12}, o_1/o_{12}) \left(E(o_3/o_{13}, o_1/o_{13})\right)^{-1} \notag \\
= \: & \left( (-1)^{n_{\mathsf{R}_2}} | r \right)_K E(o_{\mathsf{U}_2}, \tilde{p}_{23}(o_{\mathsf{U}_2})) E(\tilde{p}_{21}(o_{\mathsf{U}_2}), o_{\mathsf{U}_2}) \left(E(\tilde{p}_{21}(o_{\mathsf{U}_2}), \tilde{p}_{23}(o_{\mathsf{U}_2}))\right)^{-1}  \\
= \: & \left( (-1)^{n_{31}-n_{123}} | r \right)_K \left(  C(o_{\mathsf{U}_2}, o_{\mathsf{U}_2}) | r \right)_K  \notag
\end{align}
The proposition now follows from Lemma \ref{LemDetCFromOrient}.
\end{proof}

\begin{proposition}
\label{PropFactorMaslovIndex}
The factorization property of the Maslov index described in Assertion \ref{PropTrivialCompCoc} holds if we set
\be
b(\tilde{\mathsf{L}}_2, \tilde{\mathsf{L}}_1) = |\mathsf{L}_{12}|^{1/2} |\mathsf{L}|^{1/2} E(o_2/o_{12}, o_1/o_{12})
\ee
for $r$ even or $r = 1$ mod $4$, and
\be
b(\tilde{\mathsf{L}}_2, \tilde{\mathsf{L}}_1) = |\mathsf{L}_{12}|^{1/2} |\mathsf{L}|^{1/2} i^{n-n_{12}} E(o_2/o_{12}, o_1/o_{12})
\ee
for $r = -1$ mod $4$.
\end{proposition}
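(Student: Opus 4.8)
The plan is to combine the two Gauss-sum formulas of Proposition \ref{PropGaussSumDet}, the factorization of determinants of Proposition \ref{PropFactDetC}, and the identity $a_{321} = {\rm Gauss}(q_{321})$ of Proposition \ref{PropaAsGaussSum} into a single expression, then recognize the cocycle-coboundary structure. Concretely, I would first substitute \eqref{EqFactDetC} into \eqref{EqGaussSumDetEven} (resp. \eqref{EqGaussSumDetOdd}) to get, in the even case,
\be
a_{321} = \frac{|\mathsf{L}_{12}|^{1/2} |\mathsf{L}_{23}|^{1/2}|\mathsf{L}|^{1/2}}{|\mathsf{L}_{31}|^{1/2}} \left((-1)^{n_{31}-n_{123}}| r \right)_K E(o_1/o_{12}, o_2/o_{12}) E(o_2/o_{23}, o_3/o_{23}) \left(E(o_1/o_{13}, o_3/o_{13})\right)^{-1} \;,
\ee
and similarly in the odd case with the extra factor $({\rm Gauss}(q_{r,1}))^{n_\mathsf{U}}$. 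The norm prefactor $|\mathsf{L}_{12}|^{1/2}|\mathsf{L}_{23}|^{1/2}|\mathsf{L}|^{1/2}/|\mathsf{L}_{31}|^{1/2}$ visibly factors as $b_{\rm norm}(\tilde{\mathsf{L}}_3,\tilde{\mathsf{L}}_2)\,b_{\rm norm}(\tilde{\mathsf{L}}_2,\tilde{\mathsf{L}}_1)\,b_{\rm norm}(\tilde{\mathsf{L}}_3,\tilde{\mathsf{L}}_1)^{-1}$ with $b_{\rm norm}(\tilde{\mathsf{L}}_i,\tilde{\mathsf{L}}_j) = |\mathsf{L}_{ij}|^{1/2}|\mathsf{L}|^{1/2}$, since $|\mathsf{L}|$ is common to all three LL subgroups; and the three $E$-factors manifestly have the shape $\beta(3,2)\beta(2,1)\beta(3,1)^{-1}$ with $\beta(\tilde{\mathsf{L}}_i,\tilde{\mathsf{L}}_j) := E(o_i/o_{ij}, o_j/o_{ij})$. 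So the only obstruction to a clean trivialization is the sign factor $\left((-1)^{n_{31}-n_{123}}|r\right)_K$ (and, in the odd case, the power of ${\rm Gauss}(q_{r,1})$), which must itself be written as a coboundary of a function of ordered pairs.

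The key observation for handling the sign is that $n_{ij} = \dim_{\mathbb{Z}_r}\mathsf{L}_{ij}$ and $n = \dim_{\mathbb{Z}_r}\mathsf{L}_i$ (constant), so one wants to write $(-1)^{n_{31}-n_{123}}$ — or rather its Kronecker symbol mod $r$ — in terms of pairwise quantities. I would use $n_{123} = \dim(\mathsf{L}_1\cap\mathsf{L}_2\cap\mathsf{L}_3)$ together with the inclusion-exclusion-type relations among the $n_{ij}$ that come out of the decomposition \eqref{EqDecompKi} (in particular $n_{12}+n_{23}+n_{31} - 3 n_{123} = $ the number of generators accounted for by the pairwise-but-not-triple intersections, and $|\mathsf{R}_i|$ relations such as \eqref{EqOrdKKR}). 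Using these I expect $(-1)^{n_{31}-n_{123}}$ to be expressible as a ratio $\sigma(3,2)\sigma(2,1)\sigma(3,1)^{-1}$ up to a symmetric-in-all-three (hence harmless) factor, where $\sigma(\tilde{\mathsf{L}}_i,\tilde{\mathsf{L}}_j)$ depends only on $n_{ij}$ and $n$; concretely one can take the pair-function to involve $(-1)^{\binom{n-n_{ij}}{?}}$ or the simpler $i^{n-n_{ij}}$, which is exactly why the stated $b$ for $r = -1 \bmod 4$ carries the extra factor $i^{n-n_{12}}$ while the $r$ even and $r = 1 \bmod 4$ cases do not — the Kronecker symbol $(-1|r)_K$ equals $1$ when $r$ is even or $r\equiv 1\bmod 4$ and $-1$ when $r\equiv -1\bmod 4$, so $\left((-1)^{n_{31}-n_{123}}|r\right)_K$ is trivial in the first two cases and equals $(-1)^{n_{31}-n_{123}}$ in the third, where it is absorbed by the coboundary of $i^{n-n_{12}}$ (using $i^{2(n-n_{12})} = (-1)^{n-n_{12}}$ and the additive relation among the $n_{ij} - n_{123}$).

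With these pieces assembled, the proof reduces to: (i) recording the substitution above; (ii) verifying the three elementary coboundary factorizations (norm part, $E$ part, sign part) — each is a one-line algebraic identity once the defining relations among the $n_{ij}$ are in hand; (iii) collecting the factors to read off the claimed $b(\tilde{\mathsf{L}}_2,\tilde{\mathsf{L}}_1)$ in each of the three cases $r$ even, $r\equiv 1\bmod 4$, $r\equiv -1\bmod 4$; and (iv) noting that the odd-$r$ Gauss sum ${\rm Gauss}(q_{r,1})$ is a fixed eighth root of unity times $r^{1/2}$ whose $n_\mathsf{U}$-th power is likewise a coboundary, since $n_\mathsf{U} = \dim\mathsf{U}_2$ is itself a difference of the $n_{ij}$'s by the relation $|\mathsf{U}_2| = |\mathsf{S}_2|/|\mathsf{L}_{12}\cup\mathsf{L}_{23}|\cdot|\mathsf{L}_{12}\cap\mathsf{L}_{23}|^{-1}$-type bookkeeping in \eqref{EqRelForNormGaussSum}, so $(\cdot)^{n_\mathsf{U}}$ factors through the same pairwise exponents. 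The main obstacle I anticipate is bookkeeping the signs and roots of unity consistently across the even and odd cases — getting the right power of $i$ and the right branch of the square root in ${\rm Gauss}(q_{r,1})$ so that the coboundary really closes on the nose rather than up to an uncontrolled constant — but this is a finite, purely arithmetic check using Taylor's formulas (Propositions \ref{PropArfInvQRFromLat}, \ref{PropGaussSumDet}) rather than anything conceptual. Everything else is a direct rearrangement of Propositions \ref{PropaAsGaussSum}, \ref{PropGaussSumDet}, and \ref{PropFactDetC}.
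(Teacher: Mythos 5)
Your proposal is correct and follows essentially the same route as the paper: substitute the determinant factorization of Proposition \ref{PropFactDetC} into the Gauss-sum formulas of Proposition \ref{PropGaussSumDet}, observe that the modulus and the $E$-factors are manifestly coboundaries of the stated pairwise data, note that $((-1)^{n_{31}-n_{123}}|r)_K$ is trivial when $r$ is even or $r\equiv 1 \bmod 4$ (where also ${\rm Gauss}(q_{r,1})=1$), and in the remaining case $r\equiv -1\bmod 4$ absorb the leftover sign together with $({\rm Gauss}(q_{r,1}))^{n_\mathsf{U}} = i^{n_\mathsf{U}}$ into the coboundary of $i^{n-n_{12}}$ via the dimension relation $n = n_{12}+n_{23}+n_{31}-2n_{123}+n_\mathsf{U}$ coming from \eqref{EqDecompKi} and \eqref{EqOrdKKR}. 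The only piece you leave implicit — the mod-4 identity $n_\mathsf{U}+2n_{31}-2n_{123}=(n-n_{12})+(n-n_{23})-(n-n_{31})$ — is exactly the computation the paper carries out, so there is no gap.
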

Remark that $b(\tilde{\mathsf{L}}_2, \tilde{\mathsf{L}}_1)$ is independent of the choice of orientation $o_{12}$ on $\mathsf{L}_{12}$. Indeed, changing the orientation $o_{12}$ produces a square that is mapped to 1 by the Kronecker symbol. $b(\tilde{\mathsf{L}}_2, \tilde{\mathsf{L}}_1)$ depends only on the admissible Lagrangian subgroups $\mathsf{L}_i$ and on their orientations, as it should. 
\begin{proof}
Let us summarize the results obtained so far. Proposition \ref{PropaAsGaussSum} identifies $a_{321}$ as the Gauss sum of $q_{321}$. This Gauss sum was expressed in term of the determinant of a bilinear form $C$ in Proposition \ref{PropGaussSumDet}. The determinant was factorized in Proposition \ref{PropFactDetC}. We decompose the check of equation \eqref{EqTrivialCompCoc} into its modulus and its phase. Only the first two factors have a modulus different from 1 in the definition of $b(\tilde{\mathsf{L}}_2, \tilde{\mathsf{L}}_1)$. From the explicit expression for $a_{321}$ in \eqref{EqGaussSumDetEven} and \eqref{EqGaussSumDetOdd}, we see that \eqref{EqTrivialCompCoc} holds up to phases.

We now examine the phases. Let us start with $r$ even. By definition, $(-s|r)_K = (s|r)_K$, $s \in \mathbb{Z}_r$ so we can ignore all signs. Propositions \ref{PropaAsGaussSum}, \ref{PropGaussSumDet} and \ref{PropFactDetC} yield directly the desired factorization \eqref{EqTrivialCompCoc}. For the $r = p^m$ odd case, in comparing with the classification of Gauss sums in Appendix \ref{SecPairFinAbGroups}, it is useful to keep in mind that if $p = 1$ mod 4, $r = 1$ mod 4 and if $p = -1$ mod 4, $r = (-1)^m$ mod 4. 

If $r = 1$ mod $4$, the properties of the Kronecker symbol ensures that we still have $(-s|r)_K = (s|r)_K$. Moreover, Taylor's explicit computation of the elementary Gauss sums, Proposition 3.9 of \cite{Taylor1984259} reviewed in Appendix \ref{SecPairFinAbGroups}, shows that 
\be
{\rm Gauss}(q_{r,1}) = (1|r)_K = 1
\ee
The same argument as in the even case then yields the factorization.

If $r = -1$ mod $4$, then $(-s|r)_K = -(s|r)_K$. Replacing \eqref{EqFactDetC} into \eqref{EqGaussSumDetOdd}, we find that
\begin{align}
\label{Eq??}
{\rm Gauss}(q_{321}) & \: = ({\rm Gauss}(q_{r,1}))^{n_\mathsf{U}}  (-1)^{n_{31}-n_{123}} \cdot \\
& E(o_3/o_{23}, o_2/o_{23}) E(o_2/o_{12}, o_1/o_{12}) \left(E(o_3/o_{13}, o_1/o_{13})\right)^{-1} \;. \notag
\end{align}
Taylor's computation shows that 
\be
{\rm Gauss}(q_{r,1}) = i(1|r)_K = i \;.
\ee
Moreover, \eqref{EqDecompKi} and \eqref{EqOrdKKR} allow us to compute:
\be
n = n_{12} + n_{23} + n_{31} - 2n_{123} + n_\mathsf{U}
\ee
hence, modulo 4,
\begin{align}
n_\mathsf{U} + 2n_{31} - 2n_{123} \: & = n - n_{12} - n_{23} + n_{31} \\
& = (n - n_{12}) + (n - n_{23}) - (n - n_{31})
\end{align}
which proves the proposition.
\end{proof}

\begin{proposition}
The homomorphism $T_{\tilde{K}_2,\tilde{K}_1}$ defined in \eqref{EqCompCompT} is unitary.
\end{proposition}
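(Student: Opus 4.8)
The plan is to show that $T_{\tilde{\mathsf{L}}_2,\tilde{\mathsf{L}}_1}$ preserves the Hermitian inner product on the spaces of invariant vectors $\mathsf{I}(\tilde{\mathsf{L}}_i)$, these being subspaces of $V_{M,w}$ with the inner product induced from the Hermitian structure on each line $\mathcal{S}^e(M,x)$. Since $T_{\tilde{\mathsf{L}}_2,\tilde{\mathsf{L}}_1} = (b(\tilde{\mathsf{L}}_2,\tilde{\mathsf{L}}_1))^{-1} F_{\tilde{\mathsf{L}}_2,\tilde{\mathsf{L}}_1}$ with $F_{\tilde{\mathsf{L}}_2,\tilde{\mathsf{L}}_1}(v) = \sum_{k\in\mathsf{L}_2}\rho(k)v$, and since by Proposition \ref{PropVRegRepHeisGrp} the operators $\rho(k)$ (suitably normalized, i.e. acting through norm-$1$ elements $l\in L_k^{(1)}$) are unitary on $V_{M,w}$, the first step is to record that $F_{\tilde{\mathsf{L}}_2,\tilde{\mathsf{L}}_1}$ intertwines the left $\mathsf{H}_M$-actions and is, up to scalar, the unique such intertwiner between the irreducible modules $\mathsf{I}(\tilde{\mathsf{L}}_1)$ and $\mathsf{I}(\tilde{\mathsf{L}}_2)$. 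Therefore $F_{\tilde{\mathsf{L}}_2,\tilde{\mathsf{L}}_1}^\dagger F_{\tilde{\mathsf{L}}_2,\tilde{\mathsf{L}}_1}$ is a positive intertwiner of $\mathsf{I}(\tilde{\mathsf{L}}_1)$ with itself, hence equals $c\cdot\mathrm{id}$ for some $c>0$ by Schur's lemma; unitarity of $T$ is then equivalent to $|b(\tilde{\mathsf{L}}_2,\tilde{\mathsf{L}}_1)|^2 = c$.

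The second step is to compute $c$, equivalently the common value of $\|F_{\tilde{\mathsf{L}}_2,\tilde{\mathsf{L}}_1}v\|^2 / \|v\|^2$ for $v\in\mathsf{I}(\tilde{\mathsf{L}}_1)$. Taking the explicit generating vector $v = \sum_{k\in\mathsf{L}_1}\rho(k)v_0$ as in the proof of Proposition \ref{PropaAsGaussSum}, one has $\|v\|^2 = |\mathsf{L}_1|$ (the summands $\rho(k)v_0$ lie in distinct lines $\mathcal{S}^e(M,x)$ as $k$ ranges over $\mathsf{L}_1$, since the $\rho(k)$ act as the regular representation and $v_0$ generates one line), and similarly $\|F_{\tilde{\mathsf{L}}_2,\tilde{\mathsf{L}}_1}v\|^2$ is a double sum over $\mathsf{L}_2\times\mathsf{L}_2$ of inner products that is nonzero only when the two shifts of $v$ land in overlapping lines, i.e. when the difference lies in the appropriate intersection subgroup. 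This is the same bookkeeping over the subgroups $\mathsf{L}_{12}$, $\mathsf{L}_{13}$, $\mathsf{L}_{123}$, $\mathsf{S}_i$, $\mathsf{R}_i$ that produced \eqref{EqNormGaussSumq321} and \eqref{EqRelForNormGaussSum}, and it should yield $c = |b(\tilde{\mathsf{L}}_2,\tilde{\mathsf{L}}_1)|^2 = |\mathsf{L}_{12}|\,|\mathsf{L}|$, matching the modulus of the trivialization $b$ fixed in Proposition \ref{PropFactorMaslovIndex} (the phase factors $E(\cdots)$ and $i^{n-n_{12}}$ there have modulus $1$ and do not affect this count).

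Alternatively — and this is the cleaner route I would actually write down — one can avoid re-deriving the norm from scratch: observe that $T_{\tilde{\mathsf{L}}_3,\tilde{\mathsf{L}}_2}\circ T_{\tilde{\mathsf{L}}_2,\tilde{\mathsf{L}}_1} = T_{\tilde{\mathsf{L}}_3,\tilde{\mathsf{L}}_1}$ by \eqref{EqCompatCompT}, and $T_{\tilde{\mathsf{L}}_1,\tilde{\mathsf{L}}_1} = \mathrm{id}$ (since $F_{\tilde{\mathsf{L}}_1,\tilde{\mathsf{L}}_1}(v) = \sum_{k\in\mathsf{L}_1}\rho(k)v = |\mathsf{L}_1|v$ for $v\in\mathsf{I}(\tilde{\mathsf{L}}_1)$ and $b(\tilde{\mathsf{L}}_1,\tilde{\mathsf{L}}_1) = |\mathsf{L}_1|$), so $T_{\tilde{\mathsf{L}}_1,\tilde{\mathsf{L}}_2} = (T_{\tilde{\mathsf{L}}_2,\tilde{\mathsf{L}}_1})^{-1}$. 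Combined with the Schur's lemma argument that each $T$ is a scalar multiple of a unitary, it suffices to check unitarity for a single conveniently chosen pair — e.g. a pair of transverse Lagrangians $\mathsf{L}_1\cap\mathsf{L}_2 = 0$ — and then propagate it along compositions, since any admissible $\tilde{\mathsf{L}}_2$ is connected to any admissible $\tilde{\mathsf{L}}_1$ through a transverse intermediate Lagrangian. The main obstacle is the transverse base case: there $F_{\tilde{\mathsf{L}}_2,\tilde{\mathsf{L}}_1}$ restricted to $\mathsf{I}(\tilde{\mathsf{L}}_1)$ is, in the standard Schrödinger-type picture, essentially a finite Fourier transform over $\mathsf{L}_2\simeq\widehat{\mathsf{L}_1}$, whose normalized version $|\mathsf{L}|^{-1/2}(\cdots)$ is manifestly unitary by Plancherel for finite abelian groups; one must match the normalization $|\mathsf{L}_{12}|^{1/2}|\mathsf{L}|^{1/2} = |\mathsf{L}|^{1/2}$ (as $\mathsf{L}_{12} = 0$) against this Fourier normalization. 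Carefully setting up the Schrödinger model and identifying $F$ with the Fourier transform is the step requiring the most care, but it is standard Heisenberg-representation theory.
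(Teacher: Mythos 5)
Your first route is, in substance, the paper's proof. The paper verifies $|T_{\tilde{\mathsf{L}}_2,\tilde{\mathsf{L}}_1}v|=1$ on the generating vectors $v=|\mathsf{L}|^{-1/2}\sum_{k\in\mathsf{L}_1}\rho(k)v_x$ by exactly the double-sum bookkeeping you defer to: expand $\sum_{k'\in\mathsf{L}_2}\sum_{k\in\mathsf{L}_1}\phi(k',k)\rho(k'+k)v_x$, split the sums into cosets of $\mathsf{L}_{12}$ to pull out a redundant factor $|\mathsf{L}_{12}|$, and use the pairwise orthogonality of the lines $\mathcal{S}^e(M,y)$, $y\in w+\mathsf{G}_M$, to get $\|F_{\tilde{\mathsf{L}}_2,\tilde{\mathsf{L}}_1}v\|^2/\|v\|^2=|\mathsf{L}_{12}|\,|\mathsf{L}|=|b(\tilde{\mathsf{L}}_2,\tilde{\mathsf{L}}_1)|^2$, as you predict. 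Your Schur's-lemma observation is a useful explicitation: checking norms on an orthonormal set does not by itself give unitarity, and it is precisely the fact that $F$ intertwines the irreducible left actions that upgrades the norm check to the full statement.

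The route you say you would actually write down has a genuine gap: the claim that any two admissible Lagrangians are linked by a chain of pairwise transverse admissible Lagrangians is false. For $r$ even, the third admissibility condition forces the quadratic refinement $q_r$ to vanish on $\mathsf{L}_r/2\mathsf{G}_{M,r}$, and the Lagrangians annihilated by an Arf-invariant-zero refinement of a hyperbolic $\mathbb{Z}_2$-pairing fall into two families (the two rulings of the quadric) which transversality never connects. Concretely, take $H^{2\ell+1}_{\rm free}(M;\mathbb{Z})=\mathbb{Z}^4$ and $\Gamma_0=\mathbb{Z}_2$, so $\mathsf{G}_{M}=\mathbb{Z}_2^4=\langle x_1,x_2\rangle\oplus\langle x_3,x_4\rangle$ with $B(x_1,x_2)=B(x_3,x_4)=-1$ and $q_2$ trivial on each $x_i$. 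The six admissible Lagrangians split into the triples $\{\langle x_1,x_3\rangle,\langle x_2,x_4\rangle,\langle x_1+x_4,x_2+x_3\rangle\}$ and $\{\langle x_1,x_4\rangle,\langle x_2,x_3\rangle,\langle x_1+x_3,x_2+x_4\rangle\}$, and a direct check shows every transverse pair lies inside a single triple, while Lagrangians from different triples always share a nonzero element. Since $\langle x_1,x_3\rangle$ and $\langle x_1,x_4\rangle$ are both of the tensor form $\mathsf{L}'\otimes\Gamma_0$ required in Section \ref{SecStrat}, this is not excluded by the restriction on allowed Lagrangians. Hence the composition trick cannot reach all pairs, the non-transverse case must be computed anyway, and at that point the Fourier-transform base case buys nothing over the direct double sum (which, when $\mathsf{L}_{12}=0$, is not actually easier than the general case).
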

\begin{proof}
Let $v_x$ be a unit norm vector in $\mathcal{S}^e(M,x)$ for some $x \in w + \mathsf{G}_M$. Then 
\be
v := \frac{1}{n^{1/2}} \sum_{k \in \mathsf{L}_1} \rho(k) v_x
\ee
is a unit norm vector in $\mathsf{I}(\mathsf{L}_1)$. Varying $x$, we can obtain an orthonormal basis of $\mathsf{I}(\mathsf{L}_1)$ composed of vectors of this form. It is therefore sufficient to check that $T_{\tilde{K}_2,\tilde{K}_1}$ preserves the norm of these vectors. We have
\begin{align}
\left|T_{\tilde{K}_2,\tilde{K}_1} v \right| \: & = |\mathsf{L}_{12}|^{-1/2} |\mathsf{L}|^{-1/2} \left|\sum_{k' \in \mathsf{L}_2} \rho(k') v \right| \notag \\
& = |\mathsf{L}_{12}|^{-1/2} |\mathsf{L}|^{-1} \left| \sum_{k' \in \mathsf{L}_2} \sum_{k \in \mathsf{L}_1} \rho(k') \rho(k) v_x \right| \notag \\
& = |\mathsf{L}_{12}|^{-1/2} |\mathsf{L}|^{-1} \left| \sum_{k' \in \mathsf{L}_2} \sum_{k \in \mathsf{L}_1} \phi(k',k) \rho(k'+k) v_x \right| \notag \\
& = |\mathsf{L}_{12}|^{1/2} |\mathsf{L}|^{-1} \left| \sum_{k' \in \mathsf{L}_2/\mathsf{L}_{12}} \sum_{k \in \mathsf{L}_1/\mathsf{L}_{12}} \sum_{k'' \in \mathsf{L}_{12}} \phi(k',k) \rho(k'+k+k'') v_x \right| \\
& = |\mathsf{L}_{12}|^{1/2} |\mathsf{L}|^{-1} \left( \sum_{k' \in \mathsf{L}_2/\mathsf{L}_{12}} \sum_{k \in \mathsf{L}_1/\mathsf{L}_{12}} \sum_{k'' \in \mathsf{L}_{12}} |v_x|^2 \right)^{1/2} \notag \\
& = |\mathsf{L}_{12}|^{1/2} |\mathsf{L}|^{-1} \frac{|\mathsf{L}|^{1/2}}{|\mathsf{L}_{12}|^{1/2}} \frac{|\mathsf{L}|^{1/2}}{|\mathsf{L}_{12}|^{1/2}} |\mathsf{L}_{12}|^{1/2} \notag \\
& = 1 \notag
\end{align}
We only used definitions up to the third line. On the fourth line, we decomposed the sum over $\mathsf{L}_1$ and $\mathsf{L}_2$ into a sum over $\mathsf{L}_1/\mathsf{L}_{12}$, $\mathsf{L}_2/\mathsf{L}_{12}$ and two sums on $\mathsf{L}_{12}$, making choices of representatives in $\mathsf{L}_i$ of the classes in $\mathsf{L}_i/\mathsf{L}_{12}$. The summand is independent of one of the sums over $L_{12}$, yielding a factor $|\mathsf{L}_{12}|$. On the fifth line, we used the fact that all the summands on the fourth line belong to distinct orthogonal Hermitian lines $\mathcal{S}^e(M, y)$, $y \in w + \mathsf{G}_M$. As $\phi$ has unit modulus and $\rho$ is unitary, they drop out of the norm computation. The sums can now be performed trivially and we can conclude the computation.
\end{proof}

\subsection{Metaplectic correction}

Our construction so far can be seen as a Hermitian bundle $\mathscr{I}_{o,\epsilon}$ of irreducible modules for the Heisenberg group $\mathsf{H}_M$ over the discrete set $L_{\rm a,o,\epsilon} = {\rm Obj}(\mathcal{C}_{o,\epsilon})$ of oriented $\epsilon$-admissible Lagrangian subgroups of $\mathsf{G}_M$. The maps $T_{\tilde{K}_2,\tilde{K}_1}$ provides a notion of flat unitary parallel transport on $\mathscr{I}_{o,\epsilon}$ that is compatible with the module structure on the fibers.

In the present section, our aim is to transfer these structures to the set $L_{\rm a,\epsilon}$ of unoriented admissible Lagrangians. $\mathscr{I}_{o,\epsilon}$ is not the pull-back of a bundle over $L_{\rm a,\epsilon}$, so it has to be suitably twisted before it can be pushed down to $L_{\rm a,\epsilon}$. This procedure can be seen as a discrete analogue of the "metaplectic correction" of geometric quantization \cite{MR1183739}.

We use again the fact that as a group endowed with an alternating pairing, $\mathsf{G}_M$ decomposes into a direct product of groups of the form $\mathbb{Z}_r^{2n}$ and focus on a single factor. Let $\mathsf{L} \subset \mathsf{G}_M$ be an admissible Lagrangian subgroup. As before, we write $\tilde{\mathsf{L}}$ for the admissible LL subgroup of $\mathsf{H}_M$ canonically associated to $\mathsf{L}$. Consider the bundle $\mathscr{I}_\epsilon$ over $L_{\rm a,\epsilon}$ whose fiber over $\mathsf{L}$ is the set of pairs $(v, o_\mathsf{L})$ composed of a vector $v \in \mathsf{I}(\tilde{\mathsf{L}})$ and an orientation $o_\mathsf{L}$ on $\mathsf{L}$, subject to the equivalence relation
\be
\left( (o'_\mathsf{L}/o_\mathsf{L}|r)_K v, o_\mathsf{L} \right) \sim \left( v, o'_\mathsf{L} \right) \;.
\ee 
Here $o'_\mathsf{L}/o_\mathsf{L}$ is the element $s \in \mathbb{Z}_r$ such that $o'_\mathsf{L} = s o_\mathsf{L}$.

\begin{proposition}
\label{PropDefParTranspUnor}
The flat unitary transport on $\mathscr{I}_{o,\epsilon}$ induces a flat unitary parallel transport on $\mathscr{I}_{\epsilon}$, given by
\be
\label{EqDefParTranspUnor}
S_{\tilde{\mathsf{L}}_2, \tilde{\mathsf{L}}_1}(v_1, o_1) = \left(T_{(\tilde{\mathsf{L}}_2,o_2), (\tilde{\mathsf{L}}_1,o_1)} v_1, o_2 \right) \;,
\ee
where we made explicit the dependence of the parallel transport $T$ on arbitrarily chosen orientations $o_i$ of $\mathsf{L}_i$.
\end{proposition}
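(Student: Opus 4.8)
The plan is to verify that the prescription \eqref{EqDefParTranspUnor} is (i) well-defined with respect to the equivalence relation defining the fibers of $\mathscr{I}_\epsilon$, (ii) compatible with composition (flatness), and (iii) unitary, deducing each property from the corresponding already-established property of the transport $T_{\tilde{\mathsf{L}}_2,\tilde{\mathsf{L}}_1}$ on $\mathscr{I}_{o,\epsilon}$.

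For well-definedness, I would check that the right-hand side of \eqref{EqDefParTranspUnor} is unchanged if the orientations $o_1$, $o_2$ are replaced by $o_1' = s_1 o_1$, $o_2' = s_2 o_2$ with $s_i \in \mathbb{Z}_r$, modulo the equivalence relation on source and target. The key input is the explicit dependence of $b(\tilde{\mathsf{L}}_2,\tilde{\mathsf{L}}_1)$ on orientations through the Kronecker-symbol factor $E(o_2/o_{12},o_1/o_{12})$ in Proposition \ref{PropFactorMaslovIndex}: since $E$ is $\{0,\pm1\}$-valued and built from the Kronecker symbol, rescaling $o_i$ by $s_i$ multiplies $b(\tilde{\mathsf{L}}_2,\tilde{\mathsf{L}}_1)$ by $(s_2 s_1^{\pm 1}|r)_K$ up to a square (the square drops out under the Kronecker symbol), and the underlying $F_{\tilde{\mathsf{L}}_2,\tilde{\mathsf{L}}_1}$ does not depend on orientations at all. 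Comparing $(s_i/1|r)_K = (s_i|r)_K$ with the twist $(o_i'/o_i|r)_K v$ in the equivalence relation shows the two changes cancel, so $S_{\tilde{\mathsf{L}}_2,\tilde{\mathsf{L}}_1}$ descends to $\mathscr{I}_\epsilon$.

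Flatness is then essentially formal: given three admissible Lagrangians, pick orientations $o_1,o_2,o_3$ on each, so that $S_{\tilde{\mathsf{L}}_3,\tilde{\mathsf{L}}_2}\circ S_{\tilde{\mathsf{L}}_2,\tilde{\mathsf{L}}_1}$ acts on $(v_1,o_1)$ by first applying $T_{(\tilde{\mathsf{L}}_2,o_2),(\tilde{\mathsf{L}}_1,o_1)}$ and then $T_{(\tilde{\mathsf{L}}_3,o_3),(\tilde{\mathsf{L}}_2,o_2)}$; the cocycle relation \eqref{EqCompatCompT} gives $T_{(\tilde{\mathsf{L}}_3,o_3),(\tilde{\mathsf{L}}_1,o_1)}$, and the choice of $o_2$ is irrelevant because both sides are already known to be well-defined on $\mathscr{I}_\epsilon$. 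Unitarity is inherited directly from the unitarity of $T_{\tilde{\mathsf{L}}_2,\tilde{\mathsf{L}}_1}$ established in the previous subsection, once one equips each fiber of $\mathscr{I}_\epsilon$ with the Hermitian inner product induced from $\mathsf{I}(\tilde{\mathsf{L}})$ (well-defined because the Kronecker symbol has unit modulus). Finally, since $\mathsf{G}_M$ decomposes as a product of groups $\mathbb{Z}_r^{2n}$ and all the constructions are compatible with this decomposition, the single-factor argument assembles to the general case.

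The only mild subtlety — and the step I would spend the most care on — is the bookkeeping of the orientation twist: one must be sure that the exponent of $o_2/o_{12}$ versus $o_1/o_{12}$ in $E$ matches the twist exponents on the source and target fibers, including the possible inversion in the $r=-1 \bmod 4$ case where $b$ carries the extra factor $i^{n-n_{12}}$. This factor has unit modulus and does not involve orientations, so it affects neither well-definedness nor unitarity, but I would note explicitly that it cancels in the flatness check via \eqref{EqCompatCompT} exactly as in the oriented case. No genuinely new obstacle arises; the proposition is a clean descent statement.
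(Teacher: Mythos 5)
Your proposal is correct and follows essentially the same route as the paper: well-definedness is checked by tracking how the orientation dependence of $b(\tilde{\mathsf{L}}_2,\tilde{\mathsf{L}}_1)$ through the Kronecker-symbol factor $E(o_2/o_{12},o_1/o_{12})$ cancels against the twist $(o'_\mathsf{L}/o_\mathsf{L}|r)_K$ in the equivalence relation, and flatness then follows directly from \eqref{EqCompatCompT}. The paper's proof is just the two explicit computations for changes of $o_1$ and $o_2$; your additional remarks on unitarity and on assembling the $\mathbb{Z}_r$-factors are left implicit there but are consistent with it.
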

\begin{proof}
Let us first check that \eqref{EqDefParTranspUnor} is independent of the orientations $o_1$ and $o_2$. Under a change of $o_1$, we have
\begin{align}
S_{\tilde{\mathsf{L}}_2, \tilde{\mathsf{L}}_1}((o_1/o'_1|r)_K v_1, o'_1) \: & = \left((o_1/o'_1|r)_K T_{(\tilde{\mathsf{L}}_2,o_2), (\tilde{\mathsf{L}}_1,o'_1)} v_1, o_2 \right) \notag \\
& = \left(T_{(\tilde{\mathsf{L}}_2,o_2), (\tilde{\mathsf{L}}_1,o_1)} v_1, o_2 \right) \\
& = S_{\tilde{\mathsf{L}}_2, \tilde{\mathsf{L}}_1}(v_1, o_1) \;. \notag
\end{align}
Under a change of $o_2$, we have
\begin{align}
S_{\tilde{\mathsf{L}}_2, \tilde{\mathsf{L}}_1}(v_1, o_1) \: & = \left(T_{(\tilde{\mathsf{L}}_2,o'_2), (\tilde{\mathsf{L}}_1,o_1)} v_1, o'_2 \right) \notag \\
& = \left((o_2/o_2'|r)_K T_{(\tilde{\mathsf{L}}_2,o_2), (\tilde{\mathsf{L}}_1,o_1)} v_1, o'_2 \right) \\
& \sim \left(T_{(\tilde{\mathsf{L}}_2,o_2), (\tilde{\mathsf{L}}_1,o_1)} v_1, o_2 \right) \notag \;. \notag
\end{align}
This shows that $S_{\tilde{\mathsf{L}}_2, \tilde{\mathsf{L}}_1}$ is well-defined. The corresponding parallel transport on $\mathscr{I}_\epsilon$ is flat because of \eqref{EqCompatCompT}. 
\end{proof}

\subsection{Definition of the state space}

\label{SecDefStSp}

We are finally ready to define the Hilbert space that the topological field theory assigns to a closed $4\ell+2$-dimensional $(\mathfrak{W},\mathsf{Z}^{\rm flat},\epsilon)$-manifold $(M,\check{u})$ of Arf invariant zero, with $\check{u} \in \mathsf{Z}^{\rm flat}_M$.

Define
\be
\label{EqDefVMu}
V'_{M,u} = \bigoplus_{x \in u + \mathsf{E}_M/\mathsf{T}_M} S^e(M,x) \simeq \bigoplus_{k \in \mathsf{K}_M} V_{M,u+w(k)}\;.
\ee
For the second isomorphism, we picked a set of representatives $w(k) \in \beta^{-1}(k)$, $k \in \mathsf{K}_M$ to express $V'_{M,u}$ as the direct sum of the vector spaces defined in \eqref{EqDefVecSpV}. $V'_{M,u}$ carries the right action of $\mathsf{H}_M$ induced by \eqref{EqDefRho}, which we also write $\rho$. Only the right hand side of \eqref{EqDefVMu} carries a natural left action \eqref{EqDefRhoT} of $\mathsf{H}_M$, which is dependent on the lift $w: \mathsf{K}_M \rightarrow \mathsf{Y}^{\rm flat}_M$. Of course, as a Hilbert space $V'_{M,u}$ is independent of $w$.

Let $\mathcal{C}_\epsilon$ be the category whose objects are the Lagrangian subgroups $\mathsf{L} \subset \mathsf{G}_M$ admissible with respect to $\epsilon$, and that admits one morphism for each pair of objects. We define a functor $\mathcal{F}_{u,\epsilon}$ from $\mathcal{C}_\epsilon$ into the category $\mathcal{H}$ of Hilbert spaces as follows:
\be
\mathcal{F}_{u,\epsilon}(\mathsf{L}) = V'_{M,u}/\rho(\tilde{\mathsf{L}}) \;, \quad \mathcal{F}_{u,\epsilon}(\mathsf{L}_1 \rightarrow \mathsf{L}_2) = \bigoplus_{k \in \mathsf{K}_M} S_{\tilde{\mathsf{L}}_2, \tilde{\mathsf{L}}_1}|_{V_{M,w(k)}} \;.
\ee
The morphisms are simply the sum of the morphisms $S_{\tilde{\mathsf{L}}_2, \tilde{\mathsf{L}}_1}$ for each of the Hilbert spaces $V_{M,w(k)}$. Note that as the definition of $S_{\tilde{\mathsf{L}}_2, \tilde{\mathsf{L}}_1}$ is independent of $w(k)$, $\mathcal{F}_{u,\epsilon}$ is independent of the choice of lift $w$. Proposition \ref{PropDefParTranspUnor} ensures that $\mathcal{F}_{u,\epsilon}$ satisfies the axioms of a functor. Moreover, all the automorphisms in $\mathcal{C}_\epsilon$ are trivial. This implies that $\mathcal{F}_{u,\epsilon}$ admits a non-trivial space of invariant sections $\mathsf{I}_{M,u,\epsilon}$.

We finally define the state space of the topological field theory to be
\be
\mathcal{R}(M,u,\epsilon) := \mu_M \cdot \mathsf{I}_{M,u,\epsilon} \;.
\ee
Here, the Hilbert space $\mu_M \cdot \mathsf{I}_{M,u,\epsilon}$ should be understood as the Hilbert space $\mathsf{I}_{M,u,\epsilon}$, whose inner product has been multiplied by the rational number $\mu_M$ defined in Section \ref{SecMeas}. There is no natural $\mathsf{H}_M$-module structure on $\mathcal{R}(M,u,\epsilon)$. We can obtain one, induced by the left action $\tilde{\rho}$, after choosing a lift $w$.

\subsection{Disjoint unions and orientation flips}

\label{SecDisUnOrFlipStSp}

In this section, we drop the label $\epsilon$. It is understood that all the manifolds are endowed with extra structures. 

Suppose that $M$ can be expressed as a disjoint union of two manifolds: $M = M_1 \sqcup M_2$.  Then the pairing on $\mathsf{G}_M = \mathsf{G}_{M_1} \times \mathsf{G}_{M_2}$ is block diagonal and $\mathsf{H}_M = \mathsf{H}_{M_1} \times_{U(1)} \mathsf{H}_{M_2}$. By $\times_{U(1)}$, we mean the quotient of the direct product $\mathsf{H}_{M_1} \times \mathsf{H}_{M_2}$ by the anti-diagonal action of $U(1)$, given in the model of \eqref{EqExplRealHeisGrp} by $(k_1, t_1, k_2, t_2) \rightarrow (k_1, tt_1, k_2, t^{-1}t_2)$, $k_i \in \mathsf{G}_{M_i}$, $t,t_1,t_2 \in U(1)$. More conceptually, when seeing the Heisenberg groups $\mathsf{H}_{M_1}$ and $\mathsf{H}_{M_2}$ as $U(1)$ bundles over $\mathsf{G}_{M_1}$ and $\mathsf{G}_{M_2}$, then $H_M$ is obtained by taking the direct product on the base and the tensor product in the fibers, using the correspondence between $U(1)$ bundles and Hermitian line bundles.

If $M_1$ and $M_2$ carry extra structures $\epsilon_1$ and $\epsilon_2$ as described in Section \ref{SecExStruct}, they determine a unique structure $\epsilon_{12}$ on $M_1 \sqcup M_2$. Given $\hat{u}_1 \in \mathsf{Z}^{\rm flat}_{M_1}$ and $\hat{u}_2 \in \mathsf{Z}^{\rm flat}_{M_2}$ and using Proposition \ref{PropCompPreqThMonoidStruct}, we have $V'_{M,u_1 + u_2} = V'_{M_1, u_1} \otimes V'_{M_2,u_2}$. The action of the Heisenberg group is compatible with this tensor structure, hence $\mathsf{I}_{M,u_1 + u_2, \epsilon_{12}} = \mathsf{I}_{M_1, u_1, \epsilon_1} \otimes \mathsf{I}_{M_2, u_2, \epsilon_2}$. Therefore, we have:
\begin{proposition}
If $M$ is a $4\ell+2$-dimensional $(\mathfrak{W},\mathsf{Z}^{\rm flat}, \epsilon)$-manifold that can be decomposed into the disjoint union $M_1 \sqcup M_2$ of two $(\mathfrak{W},\mathsf{Z}^{\rm flat}, \epsilon)$-manifolds, then $\mathcal{R}(M) = \mathcal{R}(M_1) \otimes \mathcal{R}(M_2)$.
\end{proposition}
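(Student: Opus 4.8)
The statement is essentially a bookkeeping exercise: it asserts that the state space functor is monoidal with respect to disjoint unions. The plan is to follow the chain of constructions that produced $\mathcal{R}(M,u,\epsilon)$ and verify at each stage that a disjoint union decomposition $M = M_1 \sqcup M_2$ is respected. I would begin by recalling the preliminary observations already collected in the text: the pairing on $\mathsf{G}_M = \mathsf{G}_{M_1} \times \mathsf{G}_{M_2}$ is block diagonal, whence $\mathsf{H}_M = \mathsf{H}_{M_1} \times_{U(1)} \mathsf{H}_{M_2}$; the extra structures $\epsilon_1,\epsilon_2$ (the lifts $j_i$ and the quadratic refinements $q_{r,i}$) combine into a canonical $\epsilon_{12}$ since the prime-power decompositions and the pairings decompose blockwise; and, using Proposition \ref{PropCompPreqThMonoidStruct}, $V'_{M,u_1+u_2} = V'_{M_1,u_1}\otimes V'_{M_2,u_2}$ as in \eqref{EqDefVMu}, with $\mathsf{K}_M = \mathsf{K}_{M_1}\times\mathsf{K}_{M_2}$ and the lifts $w$ chosen as products.

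The next step is to check compatibility of the Heisenberg actions. The operators $\rho$ and $\tilde\rho$ of \eqref{EqDefRho}--\eqref{EqDefRhoT} are built from the cylinder operators $Q(x,j)$, and because a cylinder over $M_1\sqcup M_2$ is the disjoint union of the cylinders over $M_1$ and $M_2$, Proposition \ref{PropCompPreqThMonoidStruct} applied to cylinders gives $Q_{M}(x_1+x_2, j_1+j_2) = Q_{M_1}(x_1,j_1)\otimes Q_{M_2}(x_2,j_2)$ (and the bimultiplicative pairing $B$ of \eqref{DefBimultPairG} factors as $B_{M_1}\cdot B_{M_2}$ on the block-diagonal classes). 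Hence $\rho_M$ and $\tilde\rho_M$ are the tensor products of the corresponding operators on the factors. Admissible Lagrangians of $\mathsf{G}_M$ of the required product form are exactly $\mathsf{L}_1 \otimes \mathsf{L}_2$ with $\mathsf{L}_i$ admissible (this uses the direct-sum conventions of Section \ref{SecStrat} and the blockwise admissibility conditions of Section \ref{SecAdmLagSubg}), so $\tilde{\mathsf{L}} = \tilde{\mathsf{L}}_1 \times_{U(1)} \tilde{\mathsf{L}}_2$ and $\mathsf{I}(\tilde{\mathsf{L}}) = \mathsf{I}(\tilde{\mathsf{L}}_1)\otimes\mathsf{I}(\tilde{\mathsf{L}}_2)$. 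Therefore $\mathcal{F}_{u_1+u_2}(\mathsf{L}_1\otimes\mathsf{L}_2) = \mathcal{F}_{u_1}(\mathsf{L}_1)\otimes\mathcal{F}_{u_2}(\mathsf{L}_2)$ on objects.

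For morphisms I would invoke the explicit form of the parallel transport. The maps $F_{\tilde{\mathsf{L}}_2,\tilde{\mathsf{L}}_1}$ are sums over $\mathsf{L}_2$ of $\rho(k)$, which factor as tensor products over the two blocks; the normalization $b(\tilde{\mathsf{L}}_2,\tilde{\mathsf{L}}_1)$ of Proposition \ref{PropFactorMaslovIndex} is by construction a product over the prime-power summands of $\mathsf{G}_M$, so under $M=M_1\sqcup M_2$ it splits as $b_{M_1}\cdot b_{M_2}$; and the orientation twist in $\mathscr{I}_\epsilon$ is likewise blockwise. Hence $T_{\tilde{\mathsf{L}}_2,\tilde{\mathsf{L}}_1}$ and $S_{\tilde{\mathsf{L}}_2,\tilde{\mathsf{L}}_1}$ are tensor products, so $\mathcal{F}_{u_1+u_2} = \mathcal{F}_{u_1}\otimes\mathcal{F}_{u_2}$ as functors, and passing to invariant sections gives $\mathsf{I}_{M,u_1+u_2,\epsilon_{12}} = \mathsf{I}_{M_1,u_1,\epsilon_1}\otimes\mathsf{I}_{M_2,u_2,\epsilon_2}$. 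Finally, the measure factor satisfies $\mu_M = \mu_{M_1}\mu_{M_2}$ directly from its definition \eqref{EqDefMeasFact} and the Künneth-type splitting of the groups $\mathsf{E}^{-p}$, so $\mathcal{R}(M) = \mu_M\cdot\mathsf{I}_{M,u,\epsilon}$ is the tensor product of $\mathcal{R}(M_1)$ and $\mathcal{R}(M_2)$ as Hilbert spaces, which is the claim. The only point requiring genuine care — the ``main obstacle'' — is the compatibility of the Maslov-index trivialization $b$ with the decomposition: one must be sure that the choices made in Proposition \ref{PropFactorMaslovIndex} (the prime-power-by-prime-power treatment, the powers of $i$, the Kronecker symbols) are genuinely local to each summand of $\mathsf{G}_M$ so that no cross terms between the $M_1$ and $M_2$ blocks appear; everything else is routine functoriality.
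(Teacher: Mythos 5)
Your proof is correct and follows essentially the same route as the paper, which disposes of the statement in two sentences (block-diagonal pairing on $\mathsf{G}_M$, factorization $\mathsf{H}_M = \mathsf{H}_{M_1}\times_{U(1)}\mathsf{H}_{M_2}$ and $V'_{M,u_1+u_2}=V'_{M_1,u_1}\otimes V'_{M_2,u_2}$, compatibility of the Heisenberg action with the tensor structure). The additional verifications you supply --- the factorization of the operators $Q$, of the Maslov-index trivialization $b$, and of the measure $\mu_M$ --- are precisely the details the paper leaves implicit.
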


We now analyse the behavior of the state space under an orientation flip $M \rightarrow -M$, where $(M,\hat{u},\epsilon)$ is now again a generic $4\ell+2$-dimensional $(\mathfrak{W},\mathsf{Z}^{\rm flat}, \epsilon)$-manifold. We have
\begin{proposition}
$\mathcal{R}(-M) = \overline{\mathcal{R}(M)}$, the complex conjugate Hilbert space to $\mathcal{R}(M)$.
\end{proposition}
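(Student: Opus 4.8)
The plan is to parallel the argument already given for the disjoint-union case, tracking carefully how each ingredient entering the construction of $\mathcal{R}(M)$ behaves under orientation reversal. The basic strategy is: first identify the effect of $M \to -M$ on the Heisenberg group $\mathsf{H}_M$, then on the vector space $V'_{M,u}$ together with its $\rho$-action, then on the category $\mathcal{C}_\epsilon$ and the functor $\mathcal{F}_{u,\epsilon}$, and finally conclude at the level of invariant sections and the rescaling by $\mu_M$.

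First I would record the behavior of the basic pairings. The pairing $B$ on $\mathsf{G}_M$ defined in \eqref{DefBimultPairG} involves the fundamental class $[M]$, so under orientation reversal $B_{-M} = B_M^{-1} = \overline{B_M}$, and consequently $\mathsf{H}_{-M}$ is the complex conjugate Heisenberg group $\overline{\mathsf{H}_M}$ (same underlying set $\mathsf{G}_M \times U(1)$ with the conjugate multiplication law). The relevant extra structure $\epsilon = (j,\{q_r\})$ pulls back: the lift $j$ is unchanged, and each quadratic refinement $q_r$ of $(B_r)^{r/2}$ becomes its conjugate $\overline{q_r}$, which still has Arf invariant zero; so $-M$ carries the induced extra structure $\bar\epsilon$, and its admissible Lagrangian subgroups are literally the same subgroups $\mathsf{L} \subset \mathsf{G}_M$. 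Next, by Proposition \ref{PropCompPreqDagStruct} we have $\mathcal{S}^e(-M,x) = \overline{\mathcal{S}^e(M,x)}$ for each $x$, hence $V'_{-M,u} = \overline{V'_{M,u}}$ from \eqref{EqDefVMu}. The key technical point is that the operators $Q(x,j)$ — and hence the push-down operators $P(x,k,l)$, $\rho(k)$ and the $2$-cocycle $\phi$ — go to their complex conjugates under $M \to -M$: this follows because $Q(x,j) = \mathcal{S}^e(M\times I, \check{y}_{\check{x},\hat{j}})$ and $\mathcal{S}^e$ conjugates under orientation flip (the cylinder $(-M)\times I$ being $-(M\times I)$), exactly as in the proof of Proposition \ref{PropCompPreqDagStruct}. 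So the right regular representation $\rho$ on $\overline{V'_{M,u}}$ is the conjugate of the one on $V'_{M,u}$.

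With these identifications in hand, the remaining steps are bookkeeping. The admissible LL subgroups of $\mathsf{H}_{-M}$ are $\tilde{\mathsf{L}}$ (the trivial quadratic refinement $q_{\mathsf{L}} = 1$ is unchanged under conjugation, so admissibility in the sense of Section \ref{SecAdmLagSubg} is preserved, using also that $\mathfrak{q}_M$ conjugates and $\mathfrak{a}_M$ is unchanged — $\mathfrak{a}_{-M} = \mathfrak{a}_M$ since it is $\{0,1\}$-valued). Thus $\mathcal{C}_{\bar\epsilon} = \mathcal{C}_\epsilon$ as categories, and $\mathcal{F}_{u,\bar\epsilon}(\mathsf{L}) = V'_{M,u}/\rho(\tilde{\mathsf{L}}) = \overline{V'_{M,u}}/\bar\rho(\tilde{\mathsf{L}}) = \overline{\mathcal{F}_{u,\epsilon}(\mathsf{L})}$. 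For the morphisms one checks that the intertwiners $F_{\tilde{\mathsf{L}}_2,\tilde{\mathsf{L}}_1} = \sum_{k\in\mathsf{L}_2}\rho(k)$ conjugate, that the Maslov cocycle $a_{321}$ conjugates (via Proposition \ref{PropaAsGaussSum}, as it is a Gauss sum of the conjugated $q_{321}$), and that the trivialization $b$ of Proposition \ref{PropFactorMaslovIndex} conjugates (the Kronecker-symbol factors are real, and the factor $i^{n-n_{12}}$ appearing for $r \equiv -1 \bmod 4$ goes to its conjugate, consistently with the conjugation of $\mathrm{Gauss}(q_{r,1}) = i$). Hence $T_{\tilde{\mathsf{L}}_2,\tilde{\mathsf{L}}_1}$ and then $S_{\tilde{\mathsf{L}}_2,\tilde{\mathsf{L}}_1}$ conjugate, so $\mathcal{F}_{u,\bar\epsilon} = \overline{\mathcal{F}_{u,\epsilon}}$ as functors. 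Taking the space of invariant (flat) sections commutes with conjugation, so $\mathsf{I}_{-M,u,\bar\epsilon} = \overline{\mathsf{I}_{M,u,\epsilon}}$; finally $\mu_{-M} = \mu_M$ since the groups $\mathsf{E}^{-p}$ are defined homologically and insensitive to orientation, giving $\mathcal{R}(-M) = \overline{\mathcal{R}(M)}$.

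The only genuine subtlety — the step I expect to be the main obstacle — is the careful tracking of phase factors through the trivialization of the Maslov index, in particular checking that the factors $i^{n - n_{12}}$ (present for $r \equiv -1 \bmod 4$) behave correctly under conjugation. One must verify that the combinatorial identity $n_{\mathsf U} + 2n_{31} - 2n_{123} \equiv (n-n_{12}) + (n-n_{23}) - (n-n_{31}) \bmod 4$ used in the proof of Proposition \ref{PropFactorMaslovIndex}, together with $\overline{\mathrm{Gauss}(q_{r,1})} = \mathrm{Gauss}(\overline{q_{r,1}})$, makes the conjugate of the chosen $b$ serve as a valid trivialization for $-M$; since the whole construction is determined by $\phi$, $B$, and these Gauss sums, all of which conjugate, this works out, but it is the place where one could easily make a sign error. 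Everything else reduces to the observation that $\mathcal{S}^e$ is dagger-compatible (Proposition \ref{PropCompPreqDagStruct}) and that conjugation commutes with all the algebraic operations (direct sums, quotients by group actions, spaces of invariant sections) used to build $\mathcal{R}(M)$ out of $\mathcal{S}^e$.
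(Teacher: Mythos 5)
Your proposal is correct and follows essentially the same route as the paper: the paper's proof also starts from $V_{-M}=\overline{V_M}$ (the dagger property of $\mathcal{S}^e$), notes that the pairing on $\mathsf{G}_{-M}$ is the inverse of that on $\mathsf{G}_M$ so that $\mathsf{H}_{-M}$ is the conjugate Heisenberg group, and then concludes that every factor in the construction conjugates. Your write-up simply makes explicit the bookkeeping (conjugation of $Q$, $\phi$, the Maslov index and its trivialization, and the measure factor) that the paper compresses into ``it is now clear that the construction \dots involves taking the complex conjugates of all the factors.''
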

\begin{proof}
From the corresponding property of the prequantum theory, Proposition \ref{PropCompPreqDagStruct}, we have $V_{-M} = \overline{V_M}$. Moreover, as groups $\mathsf{G}_M = \mathsf{G}_{-M}$, but the pairing on $\mathsf{G}_{-M}$ is the inverse of the pairing \eqref{DefBimultPairG} on $\mathsf{G}_M$. It follows that the Heisenberg group $\mathsf{H}_{-M}$, as a line bundle over $\mathsf{G}_M$, is the complex conjugate of $\mathsf{H}_M$, with the multiplication maps given by the complex conjugates of those of $\mathsf{H}_M$. It is now clear that the construction of the present section applied to $-M$ involves taking the complex conjugates of all the factors, and we obtain in the end $\mathsf{I}_{-M,u,\epsilon} = \overline{\mathsf{I}_{M,u,\epsilon}}$.
\end{proof}

\section{Vectors associated to manifolds with boundaries}

\label{SecVectManBound}

Let $(M,\hat{u})$ be a $4\ell+3$-dimensional $(\mathfrak{W},\mathsf{Z}^{\rm flat})$-manifold with boundary. Recall that given a flat differential cohomology class $x \in \mathsf{Y}^{\rm flat}_{M,\check{w}}$, with $\check{w}$ a flat differential cocycle on $\partial M$, the prequantum theory $\mathcal{S}^e$ associates to $(M,x)$ a vector $\mathcal{S}^e(M,x)$ in the Hermitian line $\mathcal{S}^e(\partial M, w)$, where $w$ is the equivalence class of $\check{w}$ in $\mathsf{Y}^{\rm flat}_{\partial M}$. In analogy with the definition of the partition function on closed manifolds in Section \ref{SecPartFunc}, we would like to define the vector associated to $M$ as
\be
\label{EqDefRManBoundTent}
\mu_M \sum_{w \in \mathsf{E}_{\partial M}} \sum_{x \in u + \mathsf{E}_{M,\check{w}}} \mathcal{S}^e(M,x) \;.
\ee
Note that the sums are finite. We need to make a choice of differential cocycle $\check{w}$ to define the second sum, but Proposition \ref{PropPQFTdDimWB} ensures that the result is independent of this choice. Because of the anomaly of the prequantum field theory, described in Proposition \ref{PropDimHilbSp}, the terms in the sum are non-vanishing only if $\beta(u_\partial) \in a_{\partial M} + \mathsf{K}_{\partial M}$, where we wrote $u_\partial := u|_{\partial M}$.

It remains however to associate the expression \eqref{EqDefRManBoundTent} canonically to a vector in the Hilbert space $\mathcal{R}(\partial M, u_\partial, \epsilon)$, for $\epsilon$ a suitable extra structure on $\partial M$. This is not completely straightforward, because of the counterpart for manifolds with boundaries of the partition function anomaly described in Section \ref{SecPartFuncAnom}.

\subsection{Action of the LL subgroup}

As shown in Appendix \ref{SecPerPair}, $M$ determines a pair of Lagrangian subgroups $(\mathsf{L}_M, \mathsf{L}_{M,2\Lambda})$ of $(\mathsf{G}_{\partial M}, \mathsf{G}_{\partial M, 2\Lambda})$. We can pick a lift $j: \mathsf{G}_{\partial M} \rightarrow \mathsf{G}_{\partial M, 2\Lambda}$ such that $j(\mathsf{L}_M) \subset \mathsf{L}_{M,2\Lambda}$. The second admissibility condition is automatically satisfied by Proposition \ref{PropArfInvCobInv}. Following the discussion in Section \ref{SecExStruct}, we can as well choose a set of quadratic refinements $\{q_r\}$ that turn $(\mathsf{L}_M, \mathsf{L}_{M,2\Lambda})$ into an admissible pair with respect to $\epsilon = (j,\{q_r\})$. The choice of $\{q_r\}$ is in fact irrelevant for the discussion below. There is a corresponding LL subgroup $\tilde{\mathsf{L}}_M \subset \mathsf{H}_{\partial M}$. The $\mathsf{H}_{\partial M}$-module $V'_{\partial M,u_\partial}$ defined in \eqref{EqDefVMu} is the direct sum of $|\mathsf{K}_{\partial M}|$ copies of the regular representation of $\mathsf{H}_{\partial M}$, with the right action denoted by $\rho$. The state space $\mathcal{R}(\partial M, u_\partial, \epsilon)$ is canonically isomorphic to $V'_{\partial M,u_\partial}/\rho(\tilde{\mathsf{L}}_M)$. Proposition \ref{PropPQFTdDimWB} ensures that \eqref{EqDefRManBoundTent} is a vector in $V'_{\partial M,u}$. 
\begin{lemma}
\label{CorVectManBoundInStSp}
If \eqref{EqDefRManBoundTent} is invariant under $\rho(\tilde{k})$ for all $k \in \mathsf{L}_M$, it determines canonically a vector in $\mathcal{R}(\partial M,u_\partial, \epsilon)$.
\end{lemma}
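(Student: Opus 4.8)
The statement is essentially a bookkeeping observation: a vector in $V'_{\partial M, u_\partial}$ that is invariant under the right action $\rho(\tilde{\mathsf{L}}_M)$ of the LL subgroup descends to the quotient $V'_{\partial M, u_\partial}/\rho(\tilde{\mathsf{L}}_M)$, and the latter is canonically identified with $\mathcal{R}(\partial M, u_\partial, \epsilon)$. So the first thing I would do is recall precisely how $\mathcal{R}(\partial M, u_\partial, \epsilon)$ was constructed in Section \ref{SecDefStSp}: it is $\mu_{\partial M} \cdot \mathsf{I}_{\partial M, u_\partial, \epsilon}$, the space of invariant sections of the functor $\mathcal{F}_{u_\partial, \epsilon}$ over the category $\mathcal{C}_\epsilon$ of admissible Lagrangians, where $\mathcal{F}_{u_\partial, \epsilon}(\mathsf{L}) = V'_{\partial M, u_\partial}/\rho(\tilde{\mathsf{L}})$ and the morphisms are the $S_{\tilde{\mathsf{L}}_2, \tilde{\mathsf{L}}_1}$. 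Since $\mathcal{C}_\epsilon$ is a connected groupoid with a single morphism between any two objects, an invariant section is determined by its value at any single object; in particular, evaluation at the object $\mathsf{L}_M$ gives a canonical isomorphism $\mathsf{I}_{\partial M, u_\partial, \epsilon} \xrightarrow{\sim} V'_{\partial M, u_\partial}/\rho(\tilde{\mathsf{L}}_M) = \mathcal{F}_{u_\partial, \epsilon}(\mathsf{L}_M)$. This is the "canonical isomorphism" referred to in the paragraph preceding the lemma.

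The core of the argument is then: given the hypothesis that the vector \eqref{EqDefRManBoundTent} --- call it $\Psi_M \in V'_{\partial M, u_\partial}$ --- is fixed by $\rho(\tilde k)$ for all $k \in \mathsf{L}_M$, its image $[\Psi_M]$ in the quotient $V'_{\partial M, u_\partial}/\rho(\tilde{\mathsf{L}}_M)$ is a well-defined element, and (via the canonical isomorphism above, run backwards) it corresponds to a unique invariant section of $\mathcal{F}_{u_\partial, \epsilon}$, hence to a unique element of $\mathsf{I}_{\partial M, u_\partial, \epsilon}$, hence (after rescaling the inner product by $\mu_{\partial M}$) to a vector in $\mathcal{R}(\partial M, u_\partial, \epsilon)$. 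I should note that here "quotient" $V'/\rho(\tilde{\mathsf{L}})$ of a finite-dimensional Hilbert space by a unitary action of a finite group is identified with the invariant subspace $(V')^{\rho(\tilde{\mathsf{L}})}$ via orthogonal projection, so "$\Psi_M$ is invariant" and "$\Psi_M$ defines an element of the quotient" amount to the same thing; this is the content of the hypothesis and makes the passage to $\mathcal{F}_{u_\partial,\epsilon}(\mathsf{L}_M)$ immediate.

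One point that deserves a sentence of care is the word "canonically": the lift $j$ and the quadratic refinements $\{q_r\}$ entering $\epsilon$ were chosen, so I should make sure the construction of $\tilde{\mathsf{L}}_M$ and of the identification with $\mathcal{R}(\partial M, u_\partial, \epsilon)$ does not depend on anything beyond the data $(M, u, \epsilon)$ already fixed. The pair $(\mathsf{L}_M, \mathsf{L}_{M,2\Lambda})$ is determined by $M$ itself (Appendix \ref{SecPerPair}); Proposition \ref{PropArfInvCobInv} guarantees the second admissibility condition is automatic because $\partial M$ bounds, so the only choices are $j$ and $\{q_r\}$, which are precisely the data $\epsilon$, and the claim in the paragraph before the lemma is that the choice of $\{q_r\}$ is irrelevant for which admissible LL subgroup $\tilde{\mathsf{L}}_M$ one gets from $\mathsf{L}_M$ (the admissible LL subgroup attached to an admissible Lagrangian is the one with trivial quadratic refinement $q_{\mathsf{L}} = 1$, by Proposition \ref{PropLAdmPhiTriv} and the discussion in Section \ref{SecLLSubAndOr}). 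I do not expect a genuine obstacle here --- the lemma is almost definitional --- but the one thing to get right, and the "hard part" such as it is, is to cite the correct earlier statements so that the chain "invariant under $\rho(\tilde{\mathsf{L}}_M)$" $\Rightarrow$ "element of $\mathcal{F}_{u_\partial,\epsilon}(\mathsf{L}_M)$" $\Rightarrow$ "invariant section via connectedness of $\mathcal{C}_\epsilon$" $\Rightarrow$ "element of $\mathcal{R}(\partial M, u_\partial, \epsilon)$" is airtight and manifestly independent of auxiliary choices. The proof itself will be two or three sentences.
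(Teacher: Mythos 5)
Your proposal is correct and follows the same route as the paper: the paper's proof is a one-line appeal to the remarks preceding the lemma together with the canonical identification of the $\rho(\tilde{\mathsf{L}}_M)$-invariant subspace of $V'_{\partial M, u_\partial}$ with the quotient $V'_{\partial M, u_\partial}/\rho(\tilde{\mathsf{L}}_M) \simeq \mathcal{R}(\partial M, u_\partial, \epsilon)$, which is exactly the chain you spell out. Your additional care about the independence of the construction from the choice of $\{q_r\}$ and the automatic second admissibility condition is already absorbed into the paper's preceding paragraph, so nothing is missing.
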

\begin{proof}
This follows from the remarks above and from the canonical identification of the subspace of $V'_{\partial M, u_\partial}$ consisting of vectors invariant under the action of $\rho(\tilde{\mathsf{L}}_M)$ with the quotient $V'_{\partial M, u_\partial}/\rho(\tilde{\mathsf{L}}_M)$.
\end{proof}
\begin{proposition}
\label{PropInvStateActLLSubGrp}
Assume that $k \in \mathsf{L}_M$ and that its extension to $M$ lies in $\mathsf{G}_M$, i.e. can be represented by a differential cocycle with vanishing characteristic. Let $\tilde{k} $ be the lift of $k$ to $\tilde{\mathsf{L}}_M$. Then the vector \eqref{EqDefRManBoundTent} is invariant under $\rho(\tilde{k})$.
\end{proposition}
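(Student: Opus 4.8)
The plan is to exhibit the operator $\rho(\tilde{k})$ acting on the summand $\mathcal{S}^e(\partial M, w)$ of $V'_{\partial M,u_\partial}$ as the boundary action of a suitable non-flat differential cocycle on a cylinder $\partial M \times I$, and then to show that gluing this cylinder onto $M$ along $\partial M$ produces again the manifold $M$ with a differential cocycle whose cohomology class is unchanged — so that the sum \eqref{EqDefRManBoundTent}, being gauge invariant (Proposition \ref{PropPQFTdDimWB}), is left fixed. Concretely, recall from Section \ref{SecWilHooftOp} that $\rho(\tilde{k})$, restricted to $\mathcal{S}^e(\partial M, w)$, is (up to the trivialization fixed by $j$) the operator $Q(w, j(k)) = \mathcal{S}^e(\partial M \times I, \check{y}_{\check{w}, \hat{j}})$, where $\check{y}_{\check{w},\hat{j}}$ is the cocycle \eqref{EqDefCocYOnCyl} built from a differential form representative $\hat{j}$ of $j(k) \in H^{2\ell+1}_{\rm free}(\partial M;\bar{\Lambda}^\ast)$. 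By the gluing property of the prequantum functor (Theorem \ref{ThPreqFTFunc}, or directly Proposition \ref{PropGluPreq}), applying $Q(w,j(k))$ to $\mathcal{S}^e(M, x)$ amounts to evaluating $\mathcal{S}^e$ on the glued manifold $M \cup_{\partial M} (\partial M \times I) \cong M$ endowed with the differential cocycle $\check{x}'$ obtained from $\check{x}$ by concatenating with $\check{y}_{\check{w},\hat{j}}$; explicitly $\check{x}'$ differs from $\check{x}$ by the addition of (an extension to $M$ of) the flat topologically trivial cocycle $\check{j} = (0,\hat{j},0)$.

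The key step is then the following: by hypothesis, $k \in \mathsf{L}_M$ extends to $M$ as an element of $\mathsf{G}_M$, i.e. $j(k)$ — or rather the class it represents — extends to a class $\tilde{\jmath} \in H^{2\ell+1}_{\rm free}(M;\bar{\Lambda}^\ast)$ that is the restriction-to-$\partial M$ preimage of an element with vanishing characteristic; concretely, there is a $\bar{\Lambda}^\ast$-valued differential form $\hat{J}$ on $M$, closed and with $\bar{\Lambda}$-integral periods, restricting to $\hat{j}$ on $\partial M$. Using $\hat{J}$ we see that $\check{x} + (0,\hat{J},0)$ is a differential cocycle on $M$ whose restriction to $\partial M$ is $\check{w} + \check{j}$ and whose differential cohomology class lies in $x + \mathsf{G}_M \cdot$(something in the image of $\iota$), so in particular still lies in the coset $u + \mathsf{E}_{M,\check{w}+\check{j}}$ once we reindex the sum over $\check{w}$. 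The effect of $\rho(\tilde{k})$ on \eqref{EqDefRManBoundTent} is thus to replace, inside the double sum, the pair $(\check{w}, \check{x})$ by $(\check{w}+\check{j}, \check{x}+(0,\hat{J},0))$; since $\check{j} \in \mathsf{F}_{\partial M}$ represents an element of $\mathsf{E}_{\partial M}$, the map $\check{w} \mapsto \check{w}+\check{j}$ permutes the outer summation index, and for each fixed $\check{w}$ the map $x \mapsto x + [(0,\hat{J},0)]$ is a bijection of $u + \mathsf{E}_{M,\check{w}}$ onto $u + \mathsf{E}_{M,\check{w}+\check{j}}$. Hence the sum is invariant.

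The main obstacle I anticipate is bookkeeping the precise normalization and trivialization conventions so that $\rho(\tilde{k})$ really equals $Q(w,j(k))$ with \emph{no} extra phase on the summand $\mathcal{S}^e(\partial M, w)$: one must use that $\tilde{k}$ is the \emph{admissible} lift (trivial quadratic refinement, Proposition \ref{PropLAdmPhiTriv}), so that $\phi|_{\mathsf{L}_M} = 1$ and the cocycle $\phi$ contributes nothing, and one must track the factors $B(k, x-w)$ versus $\rho$ (the right action is the one with no $B$-factor, \eqref{EqDefRho}). A second, more geometric point needing care: I must check that the extension $\hat{J}$ of $\hat{j}$ can be chosen with the stated properties — this is exactly the statement that $k$, as an element of $\mathsf{L}_M \subset \mathsf{G}_{\partial M}$, lies in the image of the restriction map from $\mathsf{G}_M$, which is the hypothesis of the proposition and is guaranteed by the construction of $\mathsf{L}_M$ as the Lagrangian determined by $M$ in Appendix \ref{SecPerPair}. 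Everything else is the gluing formula of the prequantum theory plus a reindexing of finite sums, which is routine.
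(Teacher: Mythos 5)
Your overall strategy --- identify $\rho(\tilde{k})$ on each summand with the Wilson-cylinder operator $Q(x,j(k))$, glue the cylinder onto $M$, use the extendability of $k$ to produce a closed form $\hat{J}$ on $M$, and reindex the double sum --- is the same skeleton as the paper's proof, and the bookkeeping points you flag (admissibility of $\tilde{\mathsf{L}}_M$ so that $\phi|_{\mathsf{L}_M}=1$, using the right action $\rho$ with no $B$-factor) are the right ones. However, there is a genuine gap at the central step. You assert that the concatenated cocycle $\check{x}'$ on $M\cup_{\partial M}(\partial M\times I)\cong M$ ``differs from $\check{x}$ by the addition of (an extension to $M$ of) the flat topologically trivial cocycle $\check{j}=(0,\hat{j},0)$,'' and then treat the rest as a routine reindexing. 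This is false at the level of differential cocycles: $\check{x}'$ has curvature $-\hat{j}\wedge d\hat{t}$ supported in the collar, while $\check{x}+(0,\hat{J},0)$ is flat, and arrows in the groupoid $\mathsf{Z}^{2\ell+2}$ preserve the curvature (see \eqref{EqArrowGroupoidDiffCoc}). So the two cocycles are not gauge-equivalent, and the equality of the corresponding prequantum vectors,
\begin{equation*}
Q(x,j(k))\,\mathcal{S}^e(M,x)=\mathcal{S}^e(M,x+k)\;,
\end{equation*}
is precisely the nontrivial content of the proposition, not a consequence of gauge invariance.

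To close the gap you must compute the phase relating these two unit vectors: it is the exponentiated action on the closed manifold $N$ obtained by gluing $(-M,\check{x})$, $(\partial M\times I,\check{y}_{\check{x},\hat{j}(k)})$ and $(M,\check{x}+\check{j}(k))$. The paper shows this phase equals $1$ by observing that $N\cong\partial(M\times I)$, that $M\times I$ carries a Wu structure (so $\hat{\lambda}=0$ in Proposition \ref{PropActFromBoundedMan}), and that $\check{x}_N$ extends to $M\times I$ with curvature of the form $\hat{J}\wedge d\hat{t}$, so the integrand $\hat{\omega}\wedge(\hat{\omega}+\hat{\lambda})$ in \eqref{EqACtFromBoundedMan} vanishes identically. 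This is also where the hypothesis that the extension of $k$ lies in $\mathsf{G}_M$ does its real work: it is not merely needed to produce $\hat{J}$, but to guarantee that the comparison cocycle extends over $M\times I$ at all --- the paper remarks explicitly that for $k$ whose extension has non-vanishing characteristic this step fails, which is why the proposition does not hold for all of $\mathsf{L}_M$. Your proposal never performs this action computation, so as written it proves nothing beyond what gauge invariance already gives.
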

\begin{proof}
We first show that for each $k$ as above
\be
\label{EqPropUniLLSubG1}
Q(x,j(k)) \mathcal{S}^e(M,x) = \mathcal{S}^e(M, x+k) \;.
\ee
(We abuse slightly the notation and write $x$ and $k$ both for the differential cohomology classes on $M$ and for their restrictions to $\partial M$.) To see that \eqref{EqPropUniLLSubG1} holds, we pick a differential form lift $\hat{j}(k)$ of $j(k)$ and construct the associated flat differential cocycle $\check{j}(k) = (0,\hat{j}(k), 0)$. We construct a closed $4\ell+3$-dimensional $\mathfrak{W}$-manifold $N$ endowed with a non-flat differential cocycle $\check{x}_N$ by gluing together the following three manifolds:
\be
(-M, \check{x}) \;, \quad (\partial M \times I, \check{y}_{\check{x}, \hat{j}(k)}) \;, \quad (M, \check{x} + \check{j}(k)) \;,
\ee
where we used the notation of \eqref{EqDefCocYOnCyl}. $S^e(N, x_N)$ computes the scalar product of the right-hand side of \eqref{EqPropUniLLSubG1} with the left-hand side. As the vectors have unit norm and belong to the same complex line, if we prove that $S^e(N, x_N) = 1$, we prove \eqref{EqPropUniLLSubG1}.

To prove that $S^e(N, x_N) = 1$, remark that $N$ is the boundary of a manifold homeomorphic to $M \times I$. We can use Proposition \ref{PropActFromBoundedMan} to compute $S^e(N, x_N)$ on $M \times I$. The Wu class of $M \times I$ vanishes and $x_N$ extends to $M \times I$ as a differential cocycle whose curvature is of the form $\hat{j}(k) \wedge d\hat{t}$. Therefore the integrand of \eqref{EqACtFromBoundedMan} vanishes identically, and $S^e(N, x_N) = 1$. (There is a slight subtlety here due to the fact that the manifold bounded by $N$ is homeomorphic but not diffeomorphic to $M \times I$, the failure of smoothness occuring at the corners $\partial M \times \partial I$. However, if we pick $\hat{t}$ to be constant in neighborhoods of $\partial I$, the curvature pulled back from $M \times I$ through this homeomorphism is smooth.)

\eqref{EqPropUniLLSubG1} implies that
\be
\rho(k, 1) \mathcal{S}^e(M, x) = \mathcal{S}^e(M, x+k) \;,
\ee
showing that \eqref{EqDefRManBoundTent} is invariant under $\rho(\tilde{\mathsf{L}}_M)$.
\end{proof}
Remark that if the extension of $k$ has a non-vanishing characteristic, the proof above fails because we cannot endow $M \times I$ with a suitable differential cocycle that restrict to $\check{x}_N$ on $N$.

\subsection{Anomaly}

\label{SecAnOnManWithBound}

The following proposition characterizes the failure of \eqref{EqPropUniLLSubG1} for generic $k \in \mathsf{L}_M$. Pick a cocycle lift $\hat{j}(k)$ of $j(k)$ and construct the associated flat differential cocycle $\check{j}(k) = (0,\hat{j}(k), 0)$. Extend it to $M$. We construct a $4\ell+3$-dimensional $\mathfrak{W}$-manifold $N$ endowed with a non-flat differential cocycle $\check{x}_N$ by gluing together the cylinder $(\partial M \times I, \check{y}_{0, \hat{j}(k)})$ and $(M, \check{j}(k))$, where we used again the notation of \eqref{EqDefCocYOnCyl}. Remark that $\check{x}_N$ vanishes on $\partial N \simeq M \times \{0\}$, so the exponentiated action $S^e(M,x_N)$ is a well-defined complex number of unit modulus.
\begin{proposition}
We have
\be
\label{EqSQSEqS}
\mathcal{S}^e(-M, x+k) Q(x,j(k)) \mathcal{S}^e(M,x) = \left( S^e(N,x_N) \right)^{-1} \;.
\ee
\end{proposition}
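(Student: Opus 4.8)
The strategy is to interpret all three factors on the left-hand side of \eqref{EqSQSEqS} as exponentiated actions on pieces of a common closed $4\ell+3$-dimensional manifold, and then recognize the product as the exponentiated action on the glued-up closed manifold, which by excision/gluing for the prequantum theory equals $1$. Concretely, I would first set up the three building blocks: the vector $\mathcal{S}^e(M,x) \in \mathcal{S}^e(\partial M, x|_{\partial M})$, the operator $Q(x,j(k)) = \mathcal{S}^e(\partial M \times I, \check{y}_{\check{x},\hat{j}(k)})$ viewed via Corollary \ref{CorHomFromManWBoundWNFDiffCoc} as a morphism $\mathcal{S}^e(\partial M, x|_{\partial M}) \to \mathcal{S}^e(\partial M, (x+k)|_{\partial M})$, and $\mathcal{S}^e(-M, x+k) \in \overline{\mathcal{S}^e(\partial M, (x+k)|_{\partial M})}$ by Proposition \ref{PropCompPreqDagStruct}. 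The composite on the left of \eqref{EqSQSEqS} is then the pairing, along $\mathcal{S}^e(\partial M,\cdot)$, of these three elements — i.e. the value of $\mathcal{S}^e$ on the $4\ell+3$-dimensional closed manifold obtained by gluing $(-M,\check{x})$, $(\partial M \times I, \check{y}_{\check{x},\hat{j}(k)})$ and $(M, \check{x}+\check{j}(k))$ along their boundaries.

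The key observation is that this closed glued manifold is \emph{not} the one used to compute $\mathcal{S}^e(M,x+k)/\mathcal{S}^e(M,x)$ in the proof of Proposition \ref{PropInvStateActLLSubGrp} (there $k$ extended to a class in $\mathsf{G}_M$, so the curvature was exact and bounded nicely); here $j(k)$ need not extend over $M$ with trivial characteristic. Instead, I would subtract off the $M$-dependence by noting that gluing $(-M,\check{x})$ to $(M, \check{x}+\check{j}(k))$ along $\partial M$ produces a closed manifold, and the contribution of $\check{x}$ itself cancels between $-M$ and $M$ by the dagger property (Proposition \ref{PropCompPreqDagStruct}) combined with additivity of the Lagrangian, $\hat{l}(\check{x}+\check{j}(k)) = \hat{l}(\check{x}) + \hat{l}(\check{j}(k)) + (\text{cross terms})$ from \eqref{EqLagSum}; the cross terms and $\hat{l}(\check{j}(k))$ survive and precisely reassemble the action on the manifold $N$ of the statement (the cylinder $(\partial M \times I, \check{y}_{0,\hat{j}(k)})$ glued to $(M,\check{j}(k))$), up to the orientation flip contributing the inverse. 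I would make this precise by choosing compatible E-chain representatives: pick $\bar{p}$ an E-chain for $[M,\partial M]_E$ on $M$, use $-\bar{p}$ for $-M$ (Proposition \ref{PropFundHomEClassOrFl}), an E-chain for the cylinders, and invoke Proposition \ref{PropFundHomEClassDisUn} to see how the fundamental E-classes add under the gluing, so that the pairing $\exp 2\pi i \langle \bar{l}, \bar{n}\rangle$ on the full glued closed manifold is the product of the three factors on the left of \eqref{EqSQSEqS}.

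Then the computation reduces to showing that the glued closed manifold's exponentiated action, with the $\check{x}$-contributions cancelled, equals the exponentiated action $S^e(N, x_N)$ on the manifold-with-boundary $N$ described just before the statement — but with the \emph{opposite} orientation, since $-M$ carries $\check{x}$ and $M$ carries $\check{x}+\check{j}(k)$, so after cancellation the leftover cylinder-plus-$M$ piece appears with reversed orientation relative to how $N$ was defined. Tracking orientations carefully through Propositions \ref{PropCompPreqDagStruct} and \ref{PropGluPreq} yields the exponent $(S^e(N,x_N))^{-1}$. The main obstacle I anticipate is the bookkeeping: making the cancellation of the $\check{x}$-terms rigorous at the level of E-cochains and E-chains (not just cohomology classes, since $\bar{l}$ is only a relative E-cocycle up to the failure described in \eqref{EqFailGaugInvAct}–\eqref{EqFailGaugInvAct2}) and confirming that the residual cross terms from \eqref{EqLagSum} genuinely match the Lagrangian $\hat{l}(\check{y}_{0,\hat{j}(k)})$ on the cylinder together with $\hat{l}(\check{j}(k))$ on $M$, with all exact/boundary terms and orientation signs accounted for. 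Once that identification is in place, \eqref{EqSQSEqS} follows directly from the gluing property of $\mathcal{S}^e$ (Proposition \ref{PropGluPreq}) applied to the closed manifold, which forces the left-hand side to be the reciprocal of the action on the bounded complement $N$.
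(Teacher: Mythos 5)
Your overall strategy is the paper's: glue the pieces into a closed manifold, expand the Lagrangian of a sum of cocycles via \eqref{EqLagSum} (as in the proof of Proposition \ref{PropActQuadrRef}), cancel the $\check{x}$-contributions, and identify the remainder with $S^e(N,x_N)^{-1}$. (The paper streamlines the bookkeeping you worry about by first observing that $N$ is diffeomorphic to $M$, so the three-factor product collapses to the two-factor ratio $\mathcal{S}^e(N,x)/\mathcal{S}^e(N,x+x_N)$, which is then evaluated on the double of $N$.)

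However, there is a genuine gap at exactly the point you defer as ``the main obstacle'': the residual cross terms. You expect them, together with $\hat{l}(\check{j}(k))$, to ``reassemble the action on $N$''; in fact they must \emph{vanish}, and the action on $N$ comes solely from the pure $\bar{l}(\check{x}_N)$ term. Concretely, the expansion gives
$\bar{l}(\check{x}+\check{x}_N)\boxminus\bar{l}(\check{x})\boxminus\bar{l}(\check{x}_N) = \left(\hat{a}\cup\hat{h}_N + \mathrm{exact},\,0\right)$
(cf.\ \eqref{EqWEN1}), so after pairing with an E-chain the identity \eqref{EqSQSEqS} holds if and only if $\langle \hat{a}\cup\hat{h}_N,[N]\rangle = 0$ mod $1$. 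This is not automatic and is the real content of the proof: on $M\subset N$ it holds because the characteristic $\hat{a}$ of $\check{x}$ is $\bar{\Lambda}$-valued while the connection $\hat{h}_N$ of $\check{x}_N$ is $\bar{\Lambda}^\ast$-valued, so their cup-product pairing is integer-valued; on the collar $\partial M\times I\subset N$ it vanishes for dimensional reasons, the integrand being a function times a top cocycle pulled back from $\partial M$. Without these two observations the cross term is a priori an arbitrary element of $\mathbb{R}/\mathbb{Z}$ and the argument does not close. (One also needs that the exact term drops out, which uses that $\check{x}_N$ vanishes on $\partial N$ — worth stating explicitly, since $\bar{l}$ is not a relative E-cocycle in general.)
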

\begin{proof}
Using the fact that $N$ is diffeomorphic to $M$, the left-hand side can be rewritten
\be
\label{EqRatSxSxpxn}
\frac{\mathcal{S}^e(N,x)}{\mathcal{S}^e(N, x+x_N)} \;,
\ee
where $x$ denotes here the class on $N$ obtained from $x$ on $M$ by pulling back its boundary value to $\partial M \times I \subset N$. \eqref{EqRatSxSxpxn} can be computed by evaluating the exponentiated action on the closed manifold obtained by gluing $(N,x)$ and $(-N,x+x_N)$ along their boundary. 

Let $\check{x} = (\hat{a}, \hat{h}, 0)$ be a cocycle representative of $x$ on $N$, and write also $\check{x}_N = (\hat{a}_N, \hat{h}_N, \omega_N)$. Revisiting the proof of Proposition \ref{PropActQuadrRef} and taking into account that $\check{x}_N$ is not flat, we have
\be
\bar{l}(\check{x} + \check{x}_N) = \left(\hat{l}(\check{x}) + \hat{l}(\check{x}_N) + \frac{1}{2} \hat{a} \cup \hat{h}_N + \frac{1}{2} \hat{a}_N \cup \hat{h} + \frac{1}{2} \hat{h} \cup \hat{\omega}_N, (\hat{a} + \hat{a}_N)_2 \right )
\ee
\be
\bar{l}(\check{x}) \boxplus \bar{l}(\check{x}_N) = \left (\hat{l}(\check{x}) + \hat{l}(\check{x}_N) + \frac{1}{2} \hat{a} \cup_1 \hat{a}_N, (\hat{a} + \hat{a}_N)_2 \right )
\ee
\begin{align}
\label{EqWEN1}
\bar{l}(\check{x} + \check{x}_N) \boxminus \bar{l}(\check{x}) \boxminus \bar{l}(\check{x}_N)  \: & = \left( \frac{1}{2} \hat{a} \cup \hat{h}_N + \frac{1}{2} \hat{h} \cup \hat{a}_N + \frac{1}{2} \hat{h} \cup \hat{\omega}_N - \frac{1}{2} d(\hat{h} \cup_1 \hat{a}_N), 0 \right ) \notag \\
& = \left( \hat{a} \cup \hat{h}_N + \frac{1}{2} d\left(\hat{h} \cup \hat{h}_N - \hat{h} \cup_1 \hat{a}_N \right) , 0 \right) 
\end{align}
Making an arbitrary choice of E-homology class $\bar{n}$ of $N$, $1/2\pi i$ times the log of \eqref{EqRatSxSxpxn} can be computed as
\be
\label{EqWEN2}
\langle \bar{l}(\check{x}) - \bar{l}(\check{x}+ \check{x}_N), \bar{n} \rangle = -\langle \bar{l}(\check{x}_N), \bar{n} \rangle - \langle \hat{a} \cup \hat{h}_N, [N] \rangle \;.
\ee
The exact term in \eqref{EqWEN1} does not contribute because $\check{x}_N = 0$ on $\partial N$.
The first term on the right hand side of \eqref{EqWEN2} is $-S(N,x_N)$. It is independent of the choice of $\bar{n}$ because $\check{x}_N$ vanishes on $\partial N$. The second term vanishes modulo 1 on $M \subset N$ because $\hat{a}$ is $\Lambda$-valued and $\hat{h}_N$ is $\Lambda^\ast$-valued there. On $\partial M \times I \subset N$, it vanishes for dimensional reason, as the cup product can be expressed as a function times a cocycle pulled back from $\partial M$. We find therefore that \eqref{EqRatSxSxpxn} is given by $S^e(N,x_N)$ proving the proposition.
\end{proof}

So if $S^e(N,x_N) \neq 0$ for some $k \in \mathsf{L}_M$, \eqref{EqDefRManBoundTent} is not invariant under the action $\rho(\tilde{\mathsf{L}}_M)$ and the only vector in $\mathcal{R}(\partial M,u_\partial, \epsilon)$ that can be canonically associated to \eqref{EqDefRManBoundTent} is the zero vector. We can see this phenomenon as an analogue of the partition function anomaly described in Section \ref{SecPartFuncAnom}. The analogy is obvious in case $k = 0$. In this case, $x_N$ is an element of $\mathsf{E}_{M,\partial M}$ and \eqref{EqSQSEqS} reduces to
\be
\frac{\mathcal{S}^e(M,x)}{\mathcal{S}^e(M, x + x_N)} = \left( S^e(M,x_N) \right)^{-1} \;.
\ee
The remarks under Proposition \ref{PropActQRLP} imply that $S^e(M; x_N) = \pm 1$. If $S^e(M; x_N) = - 1$, then the terms in the sum \eqref{EqDefRManBoundTent} associated to $x$ and $x+x_N$ cancel in pairs, so \eqref{EqDefRManBoundTent} vanishes.

\subsection{Definition}

\label{SecDefVectManBound}

In view of the discussion in the previous section, we define
\be
\label{EqDefRManBound}
\mathcal{R}(M,u) := \mu_M \pi_{\tilde{\mathsf{L}}_M}\sum_{w \in \mathsf{E}_{\partial M}} \sum_{x \in u + \mathsf{E}_{M,\check{w}}} \mathcal{S}^e(M,x) \;,
\ee
where $\pi_{\tilde{\mathsf{L}}_M}$ is the projection onto the subspace of $V'_{\partial M,u}$ consisting of vectors invariant under $\rho(\tilde{\mathsf{L}}_M)$. By Lemma \ref{CorVectManBoundInStSp}, $\mathcal{R}(M,u) \in \mathcal{R}(\partial M,u_\partial, \epsilon)$.

From the discussion in Section \ref{SecDisUnOrFlipStSp}, the definition \eqref{EqDefRManBound} is compatible with the monoidal and dagger structures of $\mathcal{B}_{\mathfrak{W}, {\rm flat}, \epsilon}^{4\ell+3,1}$ and $\mathcal{H}$.

\section{Gluing}

\label{SecGluing}

We now want to prove that $\mathcal{R}$ is really a functor from the cobordism category of $(\mathfrak{W}, \mathsf{Z}^{\rm flat},\epsilon)$-manifolds to the category of Hilbert spaces. In view of Proposition \ref{PropAltCondCheckFunc} and the discussions of Sections \ref{SecDisUnOrFlipStSp} and \ref{SecDefVectManBound}, it remains to show that the gluing condition \eqref{EqConstrGluing} holds.

\subsection{Setup}

\label{SecSetupGluing}

We take $(M,\check{u})$ to be a $4\ell+3$-dimensional $(\mathfrak{W}, \mathsf{Z}^{\rm flat})$-manifold, possibly with boundary. We write $(M_N, \check{u}')$ for the $4\ell+3$-dimensional $(\mathfrak{W}, \mathsf{Z}^{\rm flat})$-manifold obtained from $M$ by cutting along a codimension 1 closed submanifold $N \subset M$ of Arf invariant zero, see Figure \ref{Fig-Prop51-1}. We will write $\pi: M_N \rightarrow M$ for the gluing map and $u' := \pi^\ast(u)$. We endow $N$ and $\partial M$ with extra structures $\epsilon_N$ and $\epsilon_{\partial M}$ necessary for the construction of the state space of the theory on these manifolds, as described in Section \ref{SecExStruct}. Note that the existence of $\epsilon_N$ implies that $N$ has Arf invariant zero (see Section \ref{SecCobInv}). They induce a structure $\epsilon_{\partial M_N}$ on $\partial M_N = N \sqcup -N \sqcup \partial M$. The pair of Lagrangian subgroups $(\mathsf{L}_{M_N}, \mathsf{L}_{M_N, 2\Lambda}) \subset (\mathsf{G}_{M_N}, \mathsf{G}_{M_N, 2\Lambda})$ naturally determined by $M_N$ is not necessarily admissible with respect to the induced structure $\epsilon_{\partial M_N}$. As will be discussed in Section \ref{SecProofGluinFormAnomCase}, this is another incarnation of the anomaly of Sections \ref{SecPartFuncAnom} and \ref{SecAnOnManWithBound}.

The gluing formula of Proposition \ref{PropAltCondCheckFunc} reads
\be
\label{EqGluingConstr}
\mathcal{R}(M,u) \stackrel{?}{=} {\rm Tr}_{\mathcal{R}(N,u|_N, \epsilon_N )} \mathcal{R}(M_N,u') \;.
\ee
To understand this equation, remark that $\mathcal{R}(M_N,u') \in \mathcal{R}(\partial M_N,u'|_{\partial M_N}, \epsilon_{\partial M_N}) \simeq \mathcal{R}(N, u|_N, \epsilon_N) \otimes (\mathcal{R}(N,u|_N, \epsilon_N))^\ast \otimes \mathcal{R}(\partial M,u|_{\partial M}, \epsilon_{\partial M})$. ${\rm Tr}_{\mathcal{R}(N,u|_N, \epsilon_N)}$ is then the canonical pairing 
\be
{\rm Tr}_{\mathcal{R}(N,u|_N, \epsilon_N)} : \mathcal{R}(N,u|_N, \epsilon_N) \otimes (\mathcal{R}(N,u|_N, \epsilon_N))^\ast \otimes \mathcal{R}(\partial M,u|_{\partial M}, \epsilon_{\partial M}) \rightarrow \mathcal{R}(\partial M,u|_{\partial M}, \epsilon_{\partial M})
\ee 
We also recall that $\mathcal{R}(N,u|_N, \epsilon_N) = \mu_N \cdot\mathsf{I}_{N,u|_N, \epsilon_N}$, with the product understood as a rescaling of the inner product of $\mathsf{I}_{N,u|_N, \epsilon_N}$ by $\mu_N$.

\subsection{Trace homomorphism}

\label{SecTracHom}

We will need the following characterization of the trace homomorphism. In the model of $\mathsf{H}_N$ provided by a trivialization of the lines $L_k$, described around \eqref{EqExplRealHeisGrp}, we call \emph{diagonal} the elements of $\mathsf{H}_N \times_{U(1)} \mathsf{H}_{-N}$ of the form $(k,t,k,\bar{t}) \sim (k,1,k,1)$, $k \in \mathsf{G}_N$, $t \in U(1)$. The subgroup $\mathsf{D}_N$ of diagonal elements is commutative and isomorphic to $\mathsf{G}_N$. From now on, we will see $\mathsf{D}_N$ as a subgroup of $\mathsf{H}_{\partial M_N}$.

Recall from Section \ref{SecDefStSp} that $\mathsf{I}_{\partial M_N, u'|_{\partial M_N}, \epsilon_{\partial M_N}}$ does not carry a natural action of $\mathsf{H}_{\partial M_N}$. However, we do obtain an action $\tilde{\rho}_w$ after we choose a lift $w: \mathsf{K}_{\partial M_N} \rightarrow \mathsf{Y}^{\rm flat}_{\partial M_N}$. We have $\mathsf{K}_{\partial M_N} \simeq \mathsf{K}_{N} \times \mathsf{K}_{-N} \times \mathsf{K}_{\partial M}$ and $\mathsf{Y}^{\rm flat}_{\partial M_N} \simeq \mathsf{Y}^{\rm flat}_{N} \times \mathsf{Y}^{\rm flat}_{-N} \times \mathsf{Y}^{\rm flat}_{\partial M}$. We choose the lift $w$ so that it sends elements of the diagonal subgroup in $\mathsf{K}_{\partial M_N}$ to diagonal elements in $\mathsf{Y}^{\rm flat}_{\partial M_N}$.

\begin{proposition}
\label{PropCharTrAsProj}
${\rm Tr}_{\mathsf{I}_{N, u|_N, \epsilon_N}}$ is the projection onto the subspace of $\mathsf{I}_{\partial M_N, u'|_{\partial M_N}, \epsilon_{\partial M_N}}$ invariant under the action $\tilde{\rho}_w(\mathsf{D}_N)$, composed with the trace ${\rm Tr}_{V'_{N,u|_N}}$.
\end{proposition}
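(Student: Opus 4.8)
The statement to prove is Proposition \ref{PropCharTrAsProj}: that the trace homomorphism \({\rm Tr}_{\mathsf{I}_{N, u|_N, \epsilon_N}}\) arises, after descending to invariant sections, as the composition of the projection onto the \(\tilde{\rho}_w(\mathsf{D}_N)\)-invariant subspace of \(\mathsf{I}_{\partial M_N, u'|_{\partial M_N}, \epsilon_{\partial M_N}}\) with the trace \({\rm Tr}_{V'_{N,u|_N}}\). My plan is to work first at the level of the big vector spaces \(V'\) before passing to the spaces of invariant sections \(\mathsf{I}\). The first step is to unwind the definitions: using the monoidal decomposition \(\mathsf{G}_{\partial M_N} \simeq \mathsf{G}_N \times \mathsf{G}_{-N} \times \mathsf{G}_{\partial M}\) and \(V'_{\partial M_N, u'|_{\partial M_N}} \simeq V'_{N,u|_N} \otimes \overline{V'_{N,u|_N}} \otimes V'_{\partial M, u|_{\partial M}}\) (from Proposition \ref{PropCompPreqThMonoidStruct} and the orientation-flip property), I would identify how \(\mathsf{D}_N \subset \mathsf{H}_{\partial M_N}\) acts: a diagonal element \((k,1,k,1)\) acts on the first two tensor factors by \(\tilde\rho(k) \otimes \overline{\tilde\rho(k)}\) (and trivially on the third), which is precisely the action whose invariants implement the contraction with \(\mathcal{R}(N,\cdot)^\ast\).

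The second step is to recall the general Heisenberg-module fact that for the regular representation \(V' = V'_{N,u|_N}\) of \(\mathsf{H}_N\), the canonical pairing \({\rm Tr}_{\mathsf{I}_N}\) between \(\mathsf{I}_N\) and its dual \(\mathsf{I}_N^\ast \simeq \overline{\mathsf{I}_N}\) can be computed on \(V' \otimes \overline{V'}\) by first projecting onto the \(\mathsf{D}_N\)-invariants — equivalently, averaging over the diagonal copy of \(\mathsf{G}_N\) (after lifting to an LL subgroup) — and then applying the ordinary trace \({\rm Tr}_{V'}\) of the resulting endomorphism. The key point here is that \(\mathsf{I}_N\) is cut out of \(V'\) as the image of \(\rho(\tilde{\mathsf{L}}_N)\)-invariants, and the canonical pairing \(\mathsf{I}_N \otimes \overline{\mathsf{I}_N} \to \mathbb{C}\) is, up to the normalization fixed by the inner product, the restriction of \({\rm Tr}_{V'}\) combined with the averaging projector onto \(\mathsf{D}_N\)-invariants. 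I would need to be careful that the two constructions of \(\mathsf{I}_N\) involved — one via \(\epsilon_N\)-admissible LL subgroups on \(N\), and the one obtained by restricting the \(\epsilon_{\partial M_N}\)-data to the diagonal — are canonically identified, which is where the compatibility hypothesis on the lift \(w\) (sending the diagonal in \(\mathsf{K}_{\partial M_N}\) to diagonal elements of \(\mathsf{Y}^{\rm flat}_{\partial M_N}\)) enters.

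The third step is to pass from \(V'\) to \(\mathsf{I}\): since \(\mathcal{R}(N,u|_N,\epsilon_N) = \mu_N \cdot \mathsf{I}_{N,u|_N,\epsilon_N}\) and \(\mathcal{R}(\partial M_N, \cdot)\) is the space of invariant sections of the functor \(\mathcal{F}_{u',\epsilon_{\partial M_N}}\), I would check that the \(\mathsf{D}_N\)-averaging projector commutes with the parallel transport maps \(S_{\tilde{\mathsf{L}}_2,\tilde{\mathsf{L}}_1}\) used to define invariant sections, so that it descends to a well-defined map on \(\mathsf{I}_{\partial M_N,\cdots}\) — and that the residual action of \(\mathsf{H}_{\partial M}\) (via \(\tilde\rho_w\)) on the image is exactly the state-space structure of \(\mathcal{R}(\partial M, \cdots)\). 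The normalization factors \(\mu_N\) should match automatically given the rescaling convention for the inner product on \(\mathcal{R}(N,\cdot)\) and the counting of \(|\mathsf{G}_N|\)-many terms in the averaging sum. The main obstacle I anticipate is precisely this bookkeeping: reconciling the three different normalizations (the \(\mu\)-rescalings, the \(|\mathsf{L}|^{1/2}\)-type factors hidden in the definition of \(S_{\tilde{\mathsf{L}}_2,\tilde{\mathsf{L}}_1}\) and \(F_{\tilde{\mathsf{L}}_2,\tilde{\mathsf{L}}_1}\), and the Hermitian structure defining the canonical pairing on \(\mathsf{I}_N \otimes \overline{\mathsf{I}_N}\)) so that the ``projection then trace'' description is an \emph{equality}, not merely a proportionality, of maps — and making sure the \(\mathsf{K}_M\)-indexed direct sum in \(V'_{\partial M_N}\) is handled correctly under the splitting of \(\mathsf{K}_{\partial M_N}\) and the torsion-flux constraints, since \(\mathsf{L}_{M_N}\) need not be \(\epsilon_{\partial M_N}\)-admissible, an issue to be taken up in Section \ref{SecProofGluinFormAnomCase}.
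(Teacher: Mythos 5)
Your overall strategy coincides with the paper's: decompose $V'_{\partial M_N, u'|_{\partial M_N}}$ (equivalently $\mathsf{I}(\tilde{\mathsf{L}}_3) \simeq \mathsf{I}(\tilde{\mathsf{L}}_1)\otimes\mathsf{I}(\tilde{\mathsf{L}}_2)\otimes\overline{\mathsf{I}(\tilde{\mathsf{L}}_2)}$ for a product Lagrangian $\mathsf{L}_3 = \mathsf{L}_1\times\mathsf{L}_2\times\mathsf{L}_2$), identify the diagonal action of $\mathsf{D}_N$ on the middle two factors, and argue that projection onto the $\tilde{\rho}(\mathsf{D}_N)$-invariants followed by ${\rm Tr}_{V'_{N,u|_N}}$ reproduces the canonical pairing. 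But the decisive step is exactly the one you defer to a ``general Heisenberg-module fact'': that the $\tilde{\rho}(\mathsf{D}_N)$-invariant subspace of each summand $\mathsf{I}_k(\tilde{\mathsf{L}}_2)\otimes\overline{\mathsf{I}_k(\tilde{\mathsf{L}}_2)}$ is the one-dimensional line spanned by $\sum_i e_i\otimes\bar{e}_i$, normalized so that composing with ${\rm Tr}_{V'}$ gives the canonical pairing \emph{on the nose}. This is not something you can cite; it is the content of the proposition. The paper proves it by constructing an explicit orthonormal basis $\{e_i\}$ of $\mathsf{I}_k(\tilde{\mathsf{L}}_2)$ from a unit vector $v = |\mathsf{L}_2|^{-1/2}\sum_{l\in\mathsf{L}_2}\rho(l)v_0$ acted on by $\rho(\mathsf{L}_2^\perp)$ for a complementary Lagrangian $\mathsf{L}_2^\perp$, and then splitting the diagonal as $\mathsf{D}_N \simeq \mathsf{D}_2\times\mathsf{D}_2^\perp$: the $\mathsf{D}_2$-part acts on $e_i$ by the character $B(\cdot,l_i)$, which (since $B$ identifies $\mathsf{L}_2^\perp$ with the Pontryagin dual of $\mathsf{L}_2$) forces invariant vectors to be supported on the diagonal terms $e_i\otimes\bar{e}_i$, while the $\mathsf{D}_2^\perp$-part permutes these transitively and forces equal coefficients. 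Both halves of the diagonal are needed; your sketch does not separate them, and without the second half the invariant subspace would be $n_N$-dimensional rather than a line.

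You correctly flag the normalization bookkeeping as the main obstacle, but flagging it is not resolving it; the explicit basis is precisely what pins down the factor $|\mathsf{L}_2|^{-1/2}$ in $e_{\rm Tr}$ and turns the proportionality into an equality. Two smaller points: (i) the diagonal subgroup $\mathsf{D}_N$ is already canonically a commutative subgroup of $\mathsf{H}_N\times_{U(1)}\mathsf{H}_{-N}$ (elements $(k,t,k,\bar{t})\sim(k,1,k,1)$), so no auxiliary lift to an LL subgroup is required, contrary to what your step two suggests; (ii) your third step, about the projector commuting with the parallel transports $S_{\tilde{\mathsf{L}}_2,\tilde{\mathsf{L}}_1}$, is not needed for this proposition as stated --- the paper works in the fixed realization $\mathsf{I}(\tilde{\mathsf{L}}_3)$, which is canonically identified with $\mathsf{I}_{\partial M_N,u'|_{\partial M_N},\epsilon_{\partial M_N}}$, and the admissibility issues you mention are indeed deferred to the anomalous-case discussion, not part of this proof.
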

\noindent Note that the action $\tilde{\rho}(\mathsf{D}_N)$, and hence the associated expression for the trace, is independent of the choice of lift $w$, as long as the latter sends the diagonal of $\mathsf{K}_{\partial M_N}$ into the diagonal subgroup of $\mathsf{Y}^{\rm flat}_{\partial M_N}$.
\begin{proof}
Let us pick admissible Lagrangian subgroups $\mathsf{L}_1 \subset \mathsf{G}_{\partial M}$, $\mathsf{L}_2 \subset \mathsf{G}_N \simeq \mathsf{G}_{-N}$ with respect to $\epsilon_{\partial M}$ and $\epsilon_N$. Then $\mathsf{L}_3 := \mathsf{L}_1 \times \mathsf{L}_2 \times \mathsf{L}_2$ is an admissible Lagrangian subgroup of $\mathsf{G}_{\partial M_N}$ with respect to $\epsilon_{\partial M_N}$. Let $\tilde{\mathsf{L}}_1$, $\tilde{\mathsf{L}}_2$ and $\tilde{\mathsf{L}}_3$ be the associated admissible LL subgroups of $\mathsf{H}_{\partial M}$, $\mathsf{H}_{N}$, and $\mathsf{H}_{\partial M_N}$. Let us write $\mathsf{I}(\tilde{\mathsf{L}}_1) \subset V'_{\partial M, u|_{\partial M}}$, $\mathsf{I}(\tilde{\mathsf{L}}_2) \subset V'_{N, u|_{N}}$ and $\mathsf{I}(\tilde{\mathsf{L}}_3) \subset V'_{\partial M_N, u'|_{\partial M_N}}$ for the subspaces invariants under the actions of $\rho(\tilde{\mathsf{L}}_1)$, $\rho(\tilde{\mathsf{L}}_2)$ and $\rho(\tilde{\mathsf{L}}_3)$. These three subspaces are canonically isomorphic to $\mathsf{I}_{\partial M, u|_{\partial M}, \epsilon_{\partial M}}$, $\mathsf{I}_{N, u|_{N}, \epsilon_{N}}$ and $\mathsf{I}_{\partial M_N, u'|_{\partial M_N}, \epsilon_{\partial M_N}}$ respectively. Moreover, we have
\be
\label{EqRealIrrModRPartMN}
\mathsf{I}(\tilde{\mathsf{L}}_3) \simeq \mathsf{I}(\tilde{\mathsf{L}}_1) \otimes \mathsf{I}(\tilde{\mathsf{L}}_2) \otimes \overline{\mathsf{I}(\tilde{\mathsf{L}}_2)} \;.
\ee
In under this isomorphism, ${\rm Tr}_{\mathsf{I}_{N, u|_N, \epsilon_N}}$ is the canonical pairing between $\mathsf{I}(\tilde{\mathsf{L}}_2)$ and $\overline{\mathsf{I}(\tilde{\mathsf{L}}_2)}$. 

Let us now focus on the summand $\mathsf{I}_k(\tilde{\mathsf{L}}_2) := V_{N, u + w(k)} \cap \mathsf{I}(\tilde{\mathsf{L}}_2) \subset \mathsf{I}(\tilde{\mathsf{L}}_2)$ associated to $k \in \mathsf{K}_N$. We pick a Lagrangian subgroup $\mathsf{L}_2^\perp$ complementary to $\mathsf{L}_2$ in $\mathsf{G}_N$.  Moreover, we choose a unit vector $c_0 \in S^e(N, u + w(k))$. We obtain a unit vector 
\be
v = \frac{1}{|\mathsf{L}_2|^{1/2}} \sum_{l \in \mathsf{L}_2} \rho(l) v_0 \in \mathsf{I}_k(\tilde{\mathsf{L}}_2) \;.
\ee
The action $\rho(\mathsf{L}_2^\perp)$ generates an orthonormal basis $\{e_i\}$ of $\mathsf{I}_k(\tilde{\mathsf{L}}_2)$ from $v$. We obtain in this way an orthonormal basis 
\be
\label{EqOrthBasINI-N}
\{e_i \otimes \bar{e}_j\}_{i,j = 1,...,n_N}
\ee
of $\mathsf{I}_k(\tilde{\mathsf{L}}_2) \otimes \overline{\mathsf{I}_k(\tilde{\mathsf{L}}_2)}$. ${\rm Tr}_{\mathsf{I}_{N, u|_N, \epsilon_N}}$, restricted to $\mathsf{I}_k(\tilde{\mathsf{L}}_2)$, is the projection onto the 1-dimensional subspace generated by $\sum_i e_i \otimes \bar{e}_i$.

Let us now understand the action $\tilde{\rho}(\mathsf{D}_N)$ on the orthonormal basis \eqref{EqOrthBasINI-N}. Using the isomorphism $\mathsf{D}_N \simeq \mathsf{G}_N \simeq \mathsf{L}_2 \times \mathsf{L}_2^\perp$, write $\mathsf{D}_N = \mathsf{D}_2 \times \mathsf{D}_2^\perp$. Suppose that $e_i = \rho(l_i)v$, $l_i \in \mathsf{L}^\perp_2$. Then, for each $l \in \mathsf{L}_2$,
\be
\tilde{\rho}(l) e_i = B(l, l_i)e_i
\ee
As the pairing identifies $\mathsf{L}^\perp_2$ with the dual of $\mathsf{L}_2$, we see that the invariant subspace of $\tilde{\rho}(\mathsf{D}_2)$ is generated by the vectors $\{e_i \otimes \bar{e}_i\}_{i = 1,...,n_N}$. Moreover, by the definition of the $e_i$, the action $\tilde{\rho}(\mathsf{D}_2^\perp)$ permutes them transitively. This implies that up to a phase, the only invariant unit vector under $\tilde{\rho}(\mathsf{D}_N)$ is 
\be
e_{\rm Tr} := \frac{1}{|\mathsf{L}_2|^{1/2}} \sum_i e_i \otimes \bar{e}_i \;.
\ee
This argument is valid for each of the direct summands $\mathsf{I}_k(\tilde{\mathsf{L}}_2)$, which proves the proposition.
\end{proof}

\subsection{Proof of the gluing formula in the anomaly-free case}

\label{SecProofGluing}

Let $\mathsf{L}^\mathsf{E}_{M_N}$ be the Lagrangian subgroup of $\mathsf{E}_{\partial M_N}$ determined by $M_N$, and let $\mathsf{L}^\mathsf{E}_M$ be the Lagrangian subgroup of $\mathsf{E}_{\partial M}$ determined by $M$, see Lemma \ref{LemRestLagr}. Let $\mathsf{E}_N^{\rm ext}$ be the subgroup of $\mathsf{E}_N$ consisting of cohomology classes that admit an extension to $M$ vanishing on $\partial M$.

We describe now the structure of $\mathsf{L}^\mathsf{E}_{M_N}$. $\mathsf{L}^\mathsf{E}_{M_N} \subset \mathsf{E}_{\partial M_N} = \mathsf{E}_{\partial M} \times \mathsf{E}_{N} \times \mathsf{E}_{-N}$ contains a subgroup $\mathsf{P}$ consisting of elements of the form $(k,l,l)$, $k \in \mathsf{E}_{\partial M}$, $l \in \mathsf{E}_{N} \simeq \mathsf{E}_{-N}$. Via gluing, an element of $\mathsf{P}$ can be pushed-forward to a cohomology class on $M$ restricting to $l$ on $N$. Hence we actually have $k \in \mathsf{L}^\mathsf{E}_M$. The inverse of the push-forward map is $\pi^\ast$, so all $k \in \mathsf{L}^\mathsf{E}_M$ appear in this way.
Moreover, $\mathsf{E}^{\rm ext}_N$ can be embedded in $\mathsf{P}$ as elements of the form $(0,l,l)$, $l \in \mathsf{E}^{\rm ext}_N$. We have therefore
\be
\mathsf{P}/\mathsf{E}^{\rm ext}_N \simeq \mathsf{L}^\mathsf{E}_M \;.
\ee

We now assume that the pair of Lagrangian subgroups $(\mathsf{L}_{M_N}, \mathsf{L}_{M_N, 2\Lambda})$ of $(\mathsf{G}_{M_N}, \mathsf{G}_{M_N, 2\Lambda})$ is admissible with respect to $\epsilon_{\partial M_N}$. Equivalently, we require that there is a canonical isomorphism between the realizations $\mathsf{I}(\tilde{\mathsf{L}}_3)$ and $\mathsf{I}(\tilde{\mathsf{L}}_{M_N})$ of the state space. This ensures that we can apply Proposition \ref{PropCharTrAsProj} in order to compute the trace of $\mathcal{R}(M_N,u')$. 

\begin{proposition}
\label{PropProofGluing}
If $(\mathsf{L}_{M_N}, \mathsf{L}_{M_N, 2\Lambda})$ is an admissible pair with respect to $\epsilon_{\partial M_N}$, \eqref{EqGluingConstr} holds.
\end{proposition}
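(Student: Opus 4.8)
The plan is to unwind the definitions of both sides of \eqref{EqGluingConstr} into concrete expressions involving sums of vectors $\mathcal{S}^e(M_N, x')$ resp.\ $\mathcal{S}^e(M,x)$, and then to match the two sums term by term. The key structural input is Proposition \ref{PropRelMeasures}, giving $|{\rm ker}(\iota_{N,M})| = \mu_M/(\mu_N\,\mu_{M,N})$, which is exactly the combinatorial factor that will make the normalizations on the two sides agree. On the geometric side, I would first use the fact that $\mathcal{S}^e$ is \emph{itself} a field theory functor (Theorem \ref{ThPreqFTFunc}, via Proposition \ref{PropGluPreq}), so that at the level of the prequantum theory the gluing is already exact: for a fixed differential cocycle one has ${\rm Tr}_{\mathcal{S}^e(N, y)} \mathcal{S}^e(M_N, \pi^\ast \check{x}) = \mathcal{S}^e(M,\check{x})$. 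The content of the present proposition is then purely about how the discrete sums over $\mathsf{E}$-classes and the projections $\pi_{\tilde{\mathsf{L}}}$ interact with this exact prequantum gluing.

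\textbf{Key steps.} First I would spell out $\mathcal{R}(M_N,u')$ using \eqref{EqDefRManBound}: it is $\mu_{M_N}\,\pi_{\tilde{\mathsf{L}}_{M_N}}$ applied to a double sum over $w\in\mathsf{E}_{\partial M_N}$ and $x'\in u'+\mathsf{E}_{M_N,\check w}$ of $\mathcal{S}^e(M_N,x')$, and recall $\mu_{M_N}$ relates to $\mu_M$, $\mu_N$, $\mu_{M,N}$ via Proposition \ref{PropRelMeasures}. Next, using Proposition \ref{PropCharTrAsProj}, I would rewrite ${\rm Tr}_{\mathcal{R}(N,u|_N,\epsilon_N)}$ as the composition of the projection onto the $\tilde\rho_w(\mathsf{D}_N)$-invariant subspace with the plain trace ${\rm Tr}_{V'_{N,u|_N}}$. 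The plain trace on the prequantum vectors is handled by the prequantum gluing above; the subtle part is to show that the combination of (i) the $\tilde\rho_w(\mathsf{D}_N)$-projection, (ii) the $\pi_{\tilde{\mathsf{L}}_{M_N}}$-projection built into $\mathcal{R}(M_N,u')$, and (iii) the remaining $\pi_{\tilde{\mathsf{L}}_M}$-projection that appears in $\mathcal{R}(M,u)$, is consistent — i.e.\ that projecting onto $\mathsf{D}_N$-invariants and onto $\tilde{\mathsf{L}}_{M_N}$-invariants, then tracing over $N$, reproduces exactly the $\tilde{\mathsf{L}}_M$-invariant subspace. Here I would use the structure $\mathsf{P}/\mathsf{E}^{\rm ext}_N \simeq \mathsf{L}^{\mathsf{E}}_M$ established just before the proposition, together with the analogous statement for the finite-order part of the Lagrangians ($\mathsf{L}_{M_N}$ vs.\ $\mathsf{L}_M$, $\mathsf{D}_N$): the diagonal subgroup $\mathsf{D}_N\subset\mathsf{L}_{M_N}$ is precisely the ``direction'' being traced out, and its orthogonal complement inside $\mathsf{L}_{M_N}$ pushes forward to $\mathsf{L}_M$. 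Finally I would count: summing $\mathcal{S}^e(M_N,x')$ over the fiber of $\pi^\ast$ (i.e.\ over $x'$ glued from a fixed $x$), with the trace over $\mathcal{S}^e(N,\cdot)$ collapsing the $N$-boundary data, produces $|{\rm ker}(\iota_{N,M})|$ copies of $\mathcal{S}^e(M,x)$ for each $x$, and the prefactor $\mu_{M_N}=\mu_M/(\mu_N\mu_{M,N})$ together with the $\mu_N$ hidden in the rescaled inner product of $\mathcal{R}(N,\cdot)$ and the measure $\mu_{M,N}$ accounts for exactly this multiplicity, leaving $\mu_M\sum_x \mathcal{S}^e(M,x)$, which is $\mathcal{R}(M,u)$ after the $\tilde{\mathsf{L}}_M$-projection.

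\textbf{Main obstacle.} The hard part is the bookkeeping of how the various index sets fit together: precisely identifying which summation ranges on the $M_N$-side (over $w\in\mathsf{E}_{\partial M_N}$, over $x'\in u'+\mathsf{E}_{M_N,\check w}$, over the diagonal $\mathsf{D}_N$) collapse under the trace and the two projections, and verifying that the leftover multiplicity is exactly $|{\rm ker}(\iota_{N,M})|$ so that the measure factors cancel cleanly. One must also be careful that the trace ${\rm Tr}_{V'_{N,u|_N}}$ of the prequantum vectors — which by the prequantum gluing produces $\mathcal{S}^e(M,\cdot)$ on the glued manifold — is compatible with the $\mathsf{D}_N$-invariance projection of Proposition \ref{PropCharTrAsProj}, i.e.\ that no extra phases or cancellations arise; this should follow from the fact (used repeatedly in Section \ref{SecWilHooftOp}) that the relevant actions on cylinders $N\times I$ are trivial, so that $\mathsf{D}_N$ acts by honest identifications of the summands rather than by nontrivial scalars. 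I expect the argument to otherwise be a formal consequence of Propositions \ref{PropRelMeasures}, \ref{PropCharTrAsProj}, \ref{PropGluPreq} and the description of $\mathsf{P}/\mathsf{E}^{\rm ext}_N$.
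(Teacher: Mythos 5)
Your proposal follows essentially the same route as the paper's proof: it uses Proposition \ref{PropCharTrAsProj} to realize the trace as the $\tilde{\rho}(\mathsf{D}_N)$-invariance projection composed with ${\rm Tr}_{V'_{N,u|_N}}$, reduces the boundary sums via $\mathsf{P}/\mathsf{E}^{\rm ext}_N \simeq \mathsf{L}^{\mathsf{E}}_M$ and the vanishing of contributions outside $\mathsf{P}$, invokes the prequantum gluing of Proposition \ref{PropGluPreq} to collapse the trace, and accounts for the multiplicity $|{\rm ker}(\iota_{N,M})|$ with Proposition \ref{PropRelMeasures} exactly as the paper does (the multiplicity count you sketch is the content of Lemma \ref{LemCompTracGluing}). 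The strategy is correct and matches the paper's argument.
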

\begin{proof}

Following Proposition \ref{PropCharTrAsProj}, we express ${\rm Tr}_{\mathcal{R}(N,u|_N)} = \mu_N {\rm Tr}_{\mathsf{I}_{N,u|_N, \epsilon_N}}$ as the projection on the subspace invariant under the action $\tilde{\rho}(\mathsf{D}_N)$, composed with the trace ${\rm Tr}_{V'_{N,u|_N}}$. We also use the definition of $\mathcal{R}(M_N,u')$ to obtain
\be
{\rm Tr}_{\mathcal{R}(N,u|_N)} \mathcal{R}(M_N,u') = \frac{\mu_N \mu_{M_N}}{|\mathsf{D}_N|} {\rm Tr}_{V'_{N,u|_N}} \sum_{a \in \mathsf{D}_N} \tilde{\rho}(a) \sum_{v' \in \mathsf{E}_{\partial M_N}} \sum_{x' \in u' + \mathsf{E}_{M_N,\check{v}'}} \mathcal{S}^e(M_N, x') \;,
\ee
where $\check{v}'$ a differential cocycle representing the class $v'$. The sum over $\mathsf{E}_{\partial M_N}$ reduces to a sum over $\mathsf{L}^\mathsf{E}_{M_N}$, so the right-hand side becomes 
\be
\frac{\mu_N \mu_{M_N}}{|\mathsf{D}_N|} {\rm Tr}_{V'_{N,u|_N}} \sum_{a \in \mathsf{D}_N} \sum_{k' \in \mathsf{L}^\mathsf{E}_{M_N}} \sum_{x' \in u' + \mathsf{E}_{M_N, \check{k}'}} \tilde{\rho}(a)  \mathcal{S}^e(M_N, x')\;,
\ee
where $\check{k}'$ is a differential cocycle representing the class $k'$. We now use the fact that the sum over the elements of $\mathsf{L}^\mathsf{E}_{M_N}$ outside $\mathsf{P}$ produce a vector orthogonal to the invariant subspace of the action $\tilde{\rho}(\mathsf{D}_N)$. Moreover, the action of $\tilde{\rho}(\mathsf{D}_N)$ on the sum over $\mathsf{P}$ followed by the trace coincides with the trace. The sum over $\mathsf{D}_N$ yields a factor $|\mathsf{D}_N|$ cancelling the denominator of the prefactor, yielding
\be
\mu_N \mu_{M_N} {\rm Tr}_{V'_{N,u|_N}} \sum_{k' \in \mathsf{P}} \sum_{x' \in u' + \mathsf{E}_{M_N, \check{k}'}} \mathcal{S}^e(M_N, x') \;.
\ee
We use the isomorphism $\mathsf{P} \simeq \mathsf{L}^\mathsf{E}_M \times \mathsf{E}_N^{\rm ext}$. This isomorphism is not canonical, so we need to choose a lift from $\mathsf{L}^\mathsf{E}_M$ into $\mathsf{P}$. We denote the lift of $k \in \mathsf{L}^\mathsf{E}_M$ by $k'$. Let $\check{j}$ be a cocycle lift of $j \in \mathsf{E}^{\rm ext}_N$ and by a slight abuse of notation, let us also write $\check{j}$ for the corresponding cocycle valued in the diagonal part of $\mathsf{E}_N \times \mathsf{E}_{-N} \subset \mathsf{E}_{\partial M_N}$. We obtain
\be
\mu_N \mu_{M_N} {\rm Tr}_{V'_{N,u|_N}} \sum_{k \in \mathsf{L}^\mathsf{E}_M} 
\sum_{j \in \mathsf{E}_N^{\rm ext}} 
\sum_{x' \in u' + \mathsf{E}_{M_N,\check{k}' + \check{j}}}  \mathcal{S}^e(M_N, x') \;.
\ee
We perform the trace, using in particular the fact that the prequantum theory $\mathcal{S}^e$ satisfies the gluing relation. In the process, we replace the sum over $\mathsf{E}^{\rm ext}_N \times \mathsf{E}_{M_N,\check{v}}$ by a sum over $\mathsf{E}_{M,\check{v}'}$. This produces a factor $|{\rm ker}(\iota_{N,M})|$, where $\iota_{N,M}$ is is the homomorphism $\iota_{N,M}: \mathsf{E}_{M,N \sqcup \partial M} \rightarrow \mathsf{E}_{M,\partial M}$ induced by the inclusion $\mathsf{F}_{M, N \sqcup \partial M} \subset \mathsf{F}_{M,\partial M}$. This step is justified in Lemma \ref{LemCompTracGluing} below.
\be
\mu_N \mu_{M_N} |{\rm ker}(\iota_{N,M})| \sum_{k \in \mathsf{L}^\mathsf{E}_M} \sum_{x \in u + \mathsf{E}_{M,\check{k}}} \mathcal{S}^e(M, x) \;.
\ee
We convert the sum over $\mathsf{L}^\mathsf{E}_M$ into a sum over $\mathsf{E}_{\partial M}$ and use Proposition \ref{PropRelMeasures} to rewrite the prefactor. Finally, we use the definition of $\mathcal{R}(M,u)$ to obtain
\be
\mu_M \sum_{v \in \mathsf{E}_{\partial M}} \sum_{x \in u + \mathsf{E}_{M,\check{v}}} 
\mathcal{S}^e(M, x) = \mathcal{R}(M,u)
\ee
\end{proof}

\begin{lemma}
\label{LemCompTracGluing}
In the notation of Proposition \ref{PropProofGluing}, we have
\be
\label{EqCompTracGluing}
{\rm Tr}_{V'_{N,u|_N}} \sum_{k \in \mathsf{L}^\mathsf{E}_M} 
\sum_{j \in \mathsf{E}_N^{\rm ext}} 
\sum_{x' \in u' + \mathsf{E}_{M_N, \check{k}' + \check{j}}} \mathcal{S}^e(M_N, x') =  |{\rm ker}(\iota_{N,M})| \sum_{k \in \mathsf{L}^\mathsf{E}_M} \sum_{x \in u + \mathsf{E}_{M, \check{k}}} \mathcal{S}^e(M, x)
\ee
\end{lemma}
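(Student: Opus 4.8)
The plan is to prove the identity \eqref{EqCompTracGluing} by unwinding the definition of the trace ${\rm Tr}_{V'_{N,u|_N}}$ in terms of the gluing property of the prequantum theory, and then performing a careful bookkeeping of the index sets appearing on the two sides. First I would recall that by Proposition \ref{PropGluPreq} the prequantum functor $\mathcal{S}^e$ satisfies the gluing relation ${\rm Tr}_{\mathcal{S}^e(N,y)}(\mathcal{S}^e(M_N,z)) = \mathcal{S}^e(M,x)$ whenever $z = \pi^\ast(x)$, and that this relation is compatible with finite sums. The operation ${\rm Tr}_{V'_{N,u|_N}}$ is the sum of the elementary traces ${\rm Tr}_{\mathcal{S}^e(N,y)}$ over the Hermitian lines making up $V'_{N,u|_N}$, so applying it to the left-hand side of \eqref{EqCompTracGluing} will identify, for each pair of boundary data on $N$ and $-N$ that is diagonal (which is exactly what the restriction to $\mathsf{P}$-valued classes $\check{k}'+\check{j}$ guarantees), a single vector $\mathcal{S}^e(M, x)$ on the glued manifold, where $x$ is the push-forward of $x'$ along $\pi$.

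The core of the argument is then to compare the index set $\bigcup_{k,j} \{\, x' \in u' + \mathsf{E}_{M_N, \check{k}'+\check{j}}\,\}$ (after identifying diagonal classes on $N$ and $-N$ via the trace) with the index set $\bigcup_k\{\, x \in u + \mathsf{E}_{M,\check{k}}\,\}$ on $M$. Here I would use the long exact sequence \eqref{EqSuiteExLongRelHom} relating $\mathsf{E}_{M,N\sqcup\partial M}$, $\mathsf{E}_{M,\partial M}$ and $\mathsf{E}_N$, together with the identification of $\mathsf{E}_N^{\rm ext}$ as the subgroup of classes on $N$ extending to $M$ with support away from $\partial M$ — i.e. precisely the image of $\rho_N$, equivalently the quotient $\mathsf{E}_N / {\rm im}(\delta)$ intersected appropriately. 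The push-forward map along $\pi$ sends $\mathsf{E}_{M_N,\check{k}'+\check{j}}$, restricted to its diagonal part, onto $\mathsf{E}_{M,\check{k}}$, and the fiber of this map over a given $x$ is a torsor for the kernel of $\iota_{N,M}: \mathsf{E}_{M,N\sqcup\partial M}\to\mathsf{E}_{M,\partial M}$. Since by Proposition \ref{PropPQFTdDimWB} each $\mathcal{S}^e(M,x)$ depends only on the differential cohomology class and not on the cocycle representative, every one of these $|{\rm ker}(\iota_{N,M})|$ preimages contributes the same vector $\mathcal{S}^e(M,x)$ to the sum, which produces exactly the multiplicity factor $|{\rm ker}(\iota_{N,M})|$ on the right-hand side.

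Concretely the steps, in order, would be: (i) expand ${\rm Tr}_{V'_{N,u|_N}}$ as a sum of elementary traces over the summands $\mathcal{S}^e(N, u+w(k))$ of $V'_{N,u|_N}$; (ii) for each such summand apply Proposition \ref{PropGluPreq}, noting that the diagonal constraint built into the summation over $\mathsf{P}$ (through the lift $\check{k}'$ and the diagonal cocycle $\check{j}$) is exactly what is needed for the boundary components $N$ and $-N$ of $M_N$ to cancel under the trace; (iii) observe that after gluing, the remaining free data is a choice of class in $u + \mathsf{E}_{M_N,\check{k}'+\check{j}}$ whose push-forward to $M$ is determined, and that the summation over $\mathsf{E}_N^{\rm ext}$ together with the summation over $\mathsf{E}_{M_N,\ldots}$ maps onto the summation over $\mathsf{E}_{M,\check{k}}$ via $\pi_\ast$; (iv) count the fibers of this map using the portion $\mathsf{E}_N^{\rm ext} \to \mathsf{E}_{M,N\sqcup\partial M}\xrightarrow{\iota_{N,M}}\mathsf{E}_{M,\partial M}$ of the long exact sequence, identifying the fiber size with $|{\rm ker}(\iota_{N,M})|$; (v) use the representative-independence of $\mathcal{S}^e(M,x)$ to collapse each fiber to a single term, yielding the factor. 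I expect the main obstacle to be step (iv): making precise the exact-sequence argument that identifies the fibers of the push-forward on the discrete gauge field level with $\ker(\iota_{N,M})$, in particular verifying that the diagonal subgroup $\mathsf{P}$, the subgroup $\mathsf{E}_N^{\rm ext}$, and the Lagrangian $\mathsf{L}^\mathsf{E}_M$ fit together in the way claimed (the isomorphism $\mathsf{P}/\mathsf{E}_N^{\rm ext}\simeq \mathsf{L}^\mathsf{E}_M$ established just before Proposition \ref{PropProofGluing}), and that no additional contributions survive the projection onto the $\tilde\rho(\mathsf{D}_N)$-invariant subspace. Everything else is routine manipulation of finite sums once the prequantum gluing relation and Proposition \ref{PropPQFTdDimWB} are in hand.
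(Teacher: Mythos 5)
Your proposal is correct and follows essentially the same route as the paper: apply the prequantum gluing relation term by term so that the trace of $\mathcal{S}^e(M_N,x')$ becomes $\mathcal{S}^e(M,\pi_\ast(x'))$, then show the push-forward on discrete gauge fields is surjective with fibers of size $|{\rm ker}(\iota_{N,M})|$, the kernel being identified via the observation that any class killed by $\pi_\ast$ must vanish on $\partial M_N$ and that $\pi_\ast$ restricted to such classes is exactly $\iota_{N,M}$. The only cosmetic difference is that you route the fiber count through the long exact sequence \eqref{EqSuiteExLongRelHom} and mention the $\tilde\rho(\mathsf{D}_N)$-projection, which the paper handles in Proposition \ref{PropProofGluing} rather than in this lemma; neither affects the argument.
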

\begin{proof}
To prove this Lemma, we will simply compare the terms in the sums on both sides and see that they coincide up to a multiplicity $|{\rm ker}(\iota_{N,M})|$.

Let us write $\mathsf{E}'_{M_N} := \bigcup_{\check{w}' | \check{w}'|_N = \check{w}'|_{-N}} \mathsf{E}_{M_N,\check{w}'}$, the subgroup of $\mathsf{E}_{M_N}$ consisting of elements whose restriction to $N, -N \subset \partial M_N$ coincide. We see that the third sum on the left-hand side involves only elements in $\mathsf{E}'_{M_N}$, because $k' \in \mathsf{P}$ and $j \in \mathsf{E}^{\rm ext}_N$. 

Gluing induces a homomorphism $\pi_\ast$ from $\mathsf{F}_{M_N,\check{w}'}$ to $\mathsf{F}_{M,\check{w}}$, where $\check{w} = \check{w}'|_{\partial M}$, $\check{w}'|_N = \check{w}'|_{-N}$. If $\check{x} \in \mathsf{F}_{M_N,\check{0}}$ is exact, then $\pi_\ast(\check{x})$ is exact as well, so we obtain a homomorphism $\pi^{\mathsf{E}}_\ast$ from $\mathsf{E}'_{M_N}$ to $\mathsf{E}_{M} := \bigcup_{\check{w} \in \mathsf{F}_{\partial M}} \mathsf{E}_{M,\check{w}}$. The gluing relation for the prequantum theory ensures that 
\be
\label{EqValActPushForw}
{\rm Tr}_{V'_{N,u|_N}} \mathcal{S}^e(M_N,x) = \mathcal{S}^e(M,\pi^{\mathsf{E}}_\ast(x)) \;,
\ee 
so we can use $\pi^{\mathsf{E}}_\ast$ to compare the terms on both sides. 

$\pi^\mathsf{E}_\ast$ is surjective, because for all $x$, $\pi_\ast^\mathsf{E}(\pi^\ast(x)) = x$. We now compute its kernel. Consider the homomorphism $\iota_{N,M}: \mathsf{E}_{M,N \sqcup \partial M} \rightarrow \mathsf{E}_{M,\partial M}$ induced by the inclusion $\mathsf{F}_{M, N \sqcup \partial M} \subset \mathsf{F}_{M,\partial M}$. We have $\mathsf{E}_{M,N \sqcup \partial M} \simeq \mathsf{E}_{M_N,\partial M_N} \simeq \mathsf{E}_{M_N,\hat{0}}$ and 
\be
\label{EqEqOrdKerPiAstIota}
\iota_{N,M} = \pi^\mathsf{E}_\ast|_{\mathsf{E}_{M_N,\hat{0}}} \;.
\ee 
As any element of the kernel of $\pi_\ast^{\mathsf{E}}$ has to vanish on $\partial M_N$, 
\be
\label{EqEqOrdKerPiAstIota2}
|{\rm ker}(\pi^{\mathsf{E}}_\ast)| = |{\rm ker}(\iota_{N,M})| \;.
\ee
We conclude that the two sides coincide term by term up to a multiplicity factor $|{\rm ker}(\iota_{N,M})|$.
\end{proof}

\subsection{Proof of the gluing formula in the anomalous case}

\label{SecProofGluinFormAnomCase}

We now consider the case where the pair of Lagrangian subgroups $(\mathsf{L}_{M_N}, \mathsf{L}_{M_N, 2\Lambda})$ determined by $M_N$ is not admissible with respect to $\epsilon_{\partial M_N} = (j,\{q_r\})$. 

Let us first consider the third admissibility condition of Section \ref{SecAdmLagSubg}. We will use the notations of the proof of Proposition \ref{PropCharTrAsProj}. By definition, the quadratic refinement $q_{r,\partial M_N}$ determined by $\epsilon_{\partial M_N}$ vanishes on $\mathsf{L}_{3, r}/2\mathsf{G}_{M_N,r} \subset \mathsf{G}_{M_N,r}/2\mathsf{G}_{M_N,r}$. In order for $\mathsf{L}_{M_N}$ to satisfy the third admissibility condition, we need that $q_{r,\partial M_N}$ vanishes on $\mathsf{L}_{M_N,r}/2\mathsf{G}_{M_N,r}$ as well. But we can always choose $q_{r,\partial M_N}$ so that this is the case. Indeed, given $\mathbb{Z}_2^{2n}$ endowed with a (skew-)symmetric perfect pairing and any two Lagrangian subgroups, there is always a quadratic refinement of Arf invariant zero vanishing on the two Lagrangian subgroups. Decomposing $q_{r,\partial M_N} = q_{r,N} + q_{r,-N} + q_{r,\partial M}$, the fact that $q_{r,\partial M_N}$ vanishes on $\mathsf{L}_{3, r}/2\mathsf{G}_{M_N,r}$ implies that $q_{r,N} = q_{r,-N}$, so $q_{r,\partial M_N}$ has the form assumed in Section \ref{SecSetupGluing}. Therefore, we can always choose $\{q_r\}$ so that the third admissibility condition is satisfied.

The second admissibility condition requires $Q(x,b) = 1$ for $b \in \mathsf{L}_{M_N,2\Lambda} \cap H^{2\ell+1}(\partial M_N;\Lambda)/H^{2\ell+1}(\partial M_N;2\Lambda)$. This is automatic by Proposition \ref{PropArfInvCobInv}.

Let us now turn to the first admissibility condition. Recall that in Section \ref{SecSetupGluing}, we made the assumption the lift determined by $\epsilon_{M_N}$ is the sum of lifts on $\mathsf{G}_N$, $\mathsf{G}_{-N}$ and $\mathsf{G}_{\partial M}$. Moreover, we assumed that the lifts on $\mathsf{G}_N$ and $\mathsf{G}_{-N}$ coincide under the canonical isomorphism $\mathsf{G}_{N} \simeq \mathsf{G}_{-N}$ and $\mathsf{G}_{N,2\Lambda} \simeq \mathsf{G}_{-N,2\Lambda}$. The proof of the gluing in Section \ref{SecProofGluing} does not apply if the pair of Lagrangians $(\mathsf{L}_{M_N}, \mathsf{L}_{M_N, 2\Lambda}) \subset (\mathsf{G}_{\partial M_N}, \mathsf{G}_{\partial M_N, 2\Lambda})$ is incompatible with any lift equivalent to one of the form above. More precisely, we do not have a canonical way to define the trace ${\rm Tr}_{\mathsf{I}_{N, u|_N, \epsilon_N}}$. We will show that in this case, the value of the field theory on the glued manifold suffers from the anomaly described in Section \ref{SecAnOnManWithBound} and vanishes identically.

Let us therefore pick a lift $j$ compatible with $(\mathsf{L}_{M_N}, \mathsf{L}_{M_N, 2\Lambda})$. Let us also write $\mathsf{G}_N^{\rm diag}$ and $\mathsf{G}_{N,2\Lambda}^{\rm diag}$ for the diagonal subgroups of $\mathsf{G}_N \times \mathsf{G}_{-N}$ and $\mathsf{G}_{N,2\Lambda} \times \mathsf{G}_{-N,2\Lambda}$. From now on, we identify canonically $\mathsf{G}_N$ and $\mathsf{G}_{-N}$, as well as $\mathsf{G}_{N,2\Lambda}$ and $\mathsf{G}_{-N,2\Lambda}$. The fact that $j$ is not equivalent to a lift of the type assumed in Section \ref{SecSetupGluing} means that there are elements $k$ in  $\mathsf{L}_{M_N} \cap \mathsf{G}^{\rm diag}_N \times \mathsf{G}_{\partial M}$ such that $j(k) \notin \mathsf{G}^{\rm diag}_{N,2\Lambda} \times \mathsf{G}_{\partial M,2\Lambda}$. Writing $j(k) = j_N(k) + j_{-N}(k) + j_{\partial M}(k)$, we have moreover $Q(y,j_N(k)) \neq Q(y,j_{-N}(k))$ for some $y \in \mathsf{G}_N$.

Let us write $b_N(k) := j_N(k) - j_{-N}(k) \neq 0$. Let $\chi_N \in H^1(M,M-N;\mathbb{Z})$ be the Poincaré dual of the homology class of $N$ in $M$. As $j(k)$ extends to $M_N$, we have the following differential cocycle on $M$
\be
\check{w}_k := (\hat{b}_N(k) \cup \hat{\chi}_N, -\pi_\ast(\hat{j}(k)) \wedge \hat{\rho} , \pi_\ast(\hat{j}(k)) \wedge d\hat{\rho}) \;,
\ee
where we picked differential form representatives $\hat{j}(k)$, $\hat{b}_N(k)$ and $\hat{\chi}_N$ of $j(k)$, $b_N(k)$ and $\chi_N$. $\hat{\chi}_N$ is supported on a tubular neighborhood of $N$. $\hat{\rho}$ is a function vanishing on $\partial M$ and equal to 1 outside a tubular neighborhood of $\partial M$. Remark that $\pi_\ast(j(k))$ is discontinuous at $N$, with $d\pi_\ast(j(k)) = \hat{b}_N(k) \cup \hat{\chi}_N$, which ensures that $d\hat{w}_k = 0$. To lighten the notation, we will write $\hat{j}(k)$ for $\pi_\ast(\hat{j}(k))$. According to the analysis of Section \ref{SecAnOnManWithBound}, the theory is anomalous on $M$ if for some $k$, we have
\be
S^e(M,w_k) \neq 1 \;.
\ee
\begin{proposition}
With the notations introduced above, we have 
\be
\label{EqRelHamAnPartAnLift}
\frac{Q(x, j_N(k))}{Q(x,j_{-N}(k))} = S^e(M,w_k) \;.
\ee
\end{proposition}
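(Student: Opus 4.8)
The strategy is to compute both sides of \eqref{EqRelHamAnPartAnLift} by relating them to exponentiated actions on suitable closed $4\ell+3$-dimensional manifolds, using Proposition \ref{PropActFromBoundedMan} to evaluate everything on a bounding manifold. First I would recall, following the discussion in Section \ref{SecAnOnManWithBound}, that the ratio $Q(x,j_N(k))/Q(x,j_{-N}(k))$ is an automorphism of the Hermitian line $\mathcal{S}^e(N,x)$, hence a complex number of unit modulus, and that it can be computed by evaluating the exponentiated action $S^e$ on the torus $N\times S^1$ obtained by gluing the cylinder $(N\times I,\check{y}_{\check{x},\hat{j}_N(k)})$ to $(-N\times I,\check{y}_{\check{x},\hat{j}_{-N}(k)})$ along their boundaries. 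Equivalently, since $j_N(k)-j_{-N}(k)=b_N(k)$, this torus carries a differential cocycle whose curvature involves $\hat{b}_N(k)\wedge d\hat{t}$, and the question reduces to computing the action on that configuration.

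The key geometric step is to identify this $N\times S^1$ configuration with (a piece of) the manifold $(M,w_k)$, or more precisely to compute $S^e(M,w_k)$ and match it against the torus computation. Here I would exploit the fact that $w_k$ is the cohomology class constructed precisely so that $\pi_\ast(j(k))$ — discontinuous across $N$ with jump $b_N(k)$ — gives rise to the characteristic $\hat{b}_N(k)\cup\hat{\chi}_N$ on $M$. Cutting $M$ along $N$ recovers $M_N$, on which $j(k)$ extends smoothly. So I would compute $S^e(M,w_k)$ via the gluing principle for the prequantum theory (or rather, by excision / the cut-and-reglue description underlying Proposition \ref{PropAltCondCheckFunc}): $S^e(M,w_k)$ is the trace over $\mathcal{S}^e(N,\bullet)$ of the vector associated to $M_N$ with the extended, smooth cocycle, and the discontinuity of $\pi_\ast(j(k))$ across $N$ precisely inserts the transition operator $Q(x,j_N(k))\bigl(Q(x,j_{-N}(k))\bigr)^{-1}$ between the two boundary copies $N$ and $-N$ of $M_N$. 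Taking the trace of this operator against the identity gluing yields exactly the ratio on the left-hand side of \eqref{EqRelHamAnPartAnLift}.

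Concretely, I expect the cleanest route is the following: present $M\times I$ as a bounding manifold whose boundary is the disjoint union of $(M,w_k)$ and a piece on which the action is controlled, exactly as in the proof of Proposition \ref{PropInvStateActLLSubGrp}. Since $M\times I$ has vanishing Wu class we may take $\lambda=0$ in \eqref{EqACtFromBoundedMan}, and the curvature extending $w_k$ to $M\times I$ has the form $\hat{j}(k)\wedge d\hat{t}$ plus the $\hat{b}_N(k)\wedge\hat{\chi}_N$-type correction localized near $N$; the quadratic form $\hat{\omega}_W\wedge\hat{\omega}_W$ then localizes to a term on $N\times(\text{interval})$ whose evaluation reproduces $\langle \hat{b}_N(k)\wedge\hat{j}_{-N}(k),[N]\rangle$-type expressions, matching the explicit formula \eqref{EqExplFormPsi} for $\psi$ and the definition of $Q$. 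Alternatively, and perhaps more robustly, one compares directly: both $S^e(M,w_k)$ and the torus action $S^e(N\times S^1, \check{y})$ computing the left-hand side are evaluated on manifolds bounding $M\times I$ resp.\ $N\times D^2$, and the difference of integrands vanishes for degree reasons, so the two numbers agree.

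\textbf{Main obstacle.} The delicate point is bookkeeping the localization of $\pi_\ast(j(k))$ near $N$ and near $\partial M$ — making precise that the cut of $(M,w_k)$ along $N$ produces $(M_N,\text{smooth }j(k))$ with the operator $Q(x,j_N(k))Q(x,j_{-N}(k))^{-1}$ inserted, with no extra phases from the cutoff functions $\hat{\chi}_N$ and $\hat{\rho}$. This requires checking that the various exact terms and the terms involving $\hat{\chi}_N\wedge\hat{\chi}_N$, $\hat{b}_N(k)\wedge\hat{b}_N(k)$, or pullbacks from $N$ all vanish — by skew-symmetry of the wedge product and by the degree/pullback argument of Proposition \ref{PropActVanishPullBackTorus}, as in the proofs of Propositions \ref{PropDepQCohomClassj} and \ref{PropQ2beq1}. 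Modulo this careful-but-routine manipulation of differential cochains on manifolds with corners (handled by choosing the interpolation parameters constant near the boundary, as noted in the parenthetical remark in the proof of Proposition \ref{PropInvStateActLLSubGrp}), the identity \eqref{EqRelHamAnPartAnLift} follows.
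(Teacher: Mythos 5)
Your overall skeleton is the paper's: reduce the left-hand side to an exponentiated action on $N\times S^1$ (the paper does this via the algebraic identity $Q(x,j_N(k))/Q(x,j_{-N}(k)) = Q(x,b_N(k))\,\psi(j_N(k),j_{-N}(k))$ coming from \eqref{EqAlphIndepX}, which is equivalent to your cylinder-gluing), compute the right-hand side directly from the Lagrangian of $\check{w}_k$ on $M$, observe that both Lagrangians are supported in a neighbourhood of $N$, and match them by excision after identifying $\hat{\chi}_N$ with $d\hat{\theta}$. That is the correct plan, and your remarks about cross-terms vanishing by skew-symmetry and by the pullback argument of Proposition \ref{PropActVanishPullBackTorus} are on target.

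However, both of the concrete mechanisms you propose for the decisive comparison would fail. (a) Bounding by $M\times I$: its boundary is $M\sqcup(-M)\sqcup(\partial M\times I)$, and the only available extension of $\check{w}_k$ is the pullback, whose curvature is pulled back from the $4\ell+3$-dimensional $M$; then $\hat{\omega}_W\wedge\hat{\omega}_W=0$ and the two copies of $M$ cancel, so \eqref{EqACtFromBoundedMan} yields only the tautology $S(M,w_k)-S(M,w_k)=0$. Worse, if some extension with controlled curvature produced a vanishing integrand as in Proposition \ref{PropInvStateActLLSubGrp}, you would conclude $S^e(M,w_k)=1$ identically, which would trivialize the anomaly this proposition is meant to detect; the non-trivial characteristic $\hat{b}_N(k)\cup\hat{\chi}_N$ (which you conflate with the curvature) is exactly what obstructs such an extension. (b) Bounding the torus by $N\times D^2$: the relevant cocycle on $N\times S^1$ has curvature class $b_N(k)\cup g_{S^1}$, which lies in the summand $H^{2\ell+1}(N)\cup g_{S^1}$ and therefore does not extend over the disc, so Proposition \ref{PropActFromBoundedMan} is inapplicable and "the difference of integrands vanishes for degree reasons" has no meaning absent a common bounding manifold. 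The actual proof has no shortcut here: one must compute the E-cochains $\bar{l}_{Q\psi}$ on $N\times S^1$ and $\bar{l}(\check{w}_k)$ on $M$ explicitly, strip off the pulled-back piece $\frac{1}{2}\hat{a}\cup(\hat{h}-\hat{\eta}_\Lambda)$ using Proposition \ref{PropActVanishPullBackTorus}, and verify term by term that the residual Lagrangians $\frac{1}{2}\hat{b}_N(k)\cup d\hat{\theta}\cup\hat{\eta}_\Lambda-\frac{1}{2}(\hat{j}_N(k)\cup d\hat{\theta})\cup\hat{j}_{-N}(k)$ and $-\frac{1}{2}(\hat{b}_N(k)\cup\hat{\chi}_N)\cup\hat{\eta}_\Lambda-\frac{1}{2}(\hat{j}_N(k)\cup\hat{\chi}_N)\cup\hat{j}_{-N}(k)$ agree under $\hat{\chi}_N\leftrightarrow d\hat{\theta}$. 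This cochain-level computation is the substance of the proof, not a routine afterthought.
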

\noindent Assuming no anomaly on $M_N$, this proposition says that the field theory on the glued manifold vanishes because of the anomaly of Section \ref{SecAnOnManWithBound} precisely when the trace homomorphism is ill-defined. The gluing formula \eqref{EqGluingConstr} can therefore be satisfied in every case by defining the trace homomorphism to be zero whenever it is not well-defined by the procedure of Section \ref{SecTracHom}. 
\begin{proof}
We start by computing the left hand side. \eqref{EqAlphIndepX} and the fact that $Q(x,b_N(k))$ is independent of $x$ imply that
\be
\frac{Q(x, j_N(k))}{Q(x,j_{-N}(k))} = Q(x,b_N(k)) \psi(j_{N}(k), j_{-N}(k)) \;.
\ee
$Q(x,b_{N}(k))$ can be computed by evaluating the action on $N \times I$ endowed with the differential cocycle $\check{y}_{\check{x}, \hat{b}_{N}(k)}$, in the notation of \eqref{EqDefCocYOnCyl}. Using the arguments and notations of the proof of Proposition \ref{PropQ2beq1}, this differential cocycle is equivalent to the differential cocycle
\be
\check{y}' := (\hat{a} - \hat{b}_{N}(k) \cup d\hat{\theta}, \hat{h} - \hat{b}_N(k) \cup \hat{\theta} + \hat{b}_N(k) \wedge \hat{t}, -\hat{b}_{N}(k) \wedge d\hat{t}) \;,
\ee
which passes to a differential cocycle on $N \times S^1$. So $Q(x,b_{N}(k))$ can also be computed by evaluating the action on $N \times S^1$ endowed with the differential cocycle $\check{y}'$. After some calculations similar to the ones of Section \ref{SecWilHooftOp}, we obtain :
\be
\frac{1}{2\pi i} \ln Q(x,b_{N}(k)) = \left \langle \bar{l}(\check{y}'), [N \times S^1]_E \right \rangle \;,
\ee
\be
\bar{l}(\check{y}') = \left(\hat{l}(\check{y}'), (\hat{a} - \hat{b}_{N}(k) \cup d\hat{\theta})_2 \right) \;,
\ee
\begin{align}
\hat{l}(\check{y}') = & \: \frac{1}{2} \hat{a} \cup (\hat{h} - \hat{\eta}_\Lambda) - \frac{1}{2} \hat{b}_N(k) \cup d\hat{\theta} \cup \hat{h} - \frac{1}{2} \hat{h} \cup \hat{b}_N(k) \cup d\hat{\theta} \\
& + \frac{1}{2} \hat{b}_N(k) \cup d\hat{\theta} \cup \hat{\eta}_\Lambda + \mbox{exact}
\;.
\end{align}
We can therefore write
\be
\bar{l}(\check{y}') = \left(\frac{1}{2} \hat{a} \cup (\hat{h} - \hat{\eta}_\Lambda), (\hat{a})_2 \right)  \, \boxplus \, \bar{l}_Q(\check{y}') \;,
\ee
\be
\bar{l}_Q(\check{y}') := \left( \frac{1}{2} \hat{b}_{N}(k) \cup d\hat{\theta} \cup  \hat{\eta}_\Lambda, (\hat{b}_{N}(k) \wedge d\hat{t})_2 \right) \;,
\ee
so Proposition \ref{PropActVanishPullBackTorus} guarantees that 
\be
\frac{1}{2\pi i} \ln Q(x,b_{N}(k)) = \left \langle \bar{l}_Q(\check{y}'), [N \times S^1]_E \right \rangle \;,
\ee

$\psi(j_N(k), j_{-N}(k))$ can be computed on $N \times S^1$ as well:
\be
\frac{1}{2\pi i} \ln \psi(j_N(k), j_{-N}(k)) = \langle \bar{l}_\psi, [N \times S^1]_E \rangle \;,
\ee
\be
\bar{l}_\psi = \left( - \frac{1}{2} (\hat{j}_N(k)  \cup d\hat{\theta}) \cup \hat{j}_{-N}(k), 0 \right) \;.
\ee
We deduce that
\be
\label{EqActQpsi}
\frac{1}{2\pi i} \ln \left( Q(x,b_{N}(k))\psi(j_N(k), j_{-N}(k)) \right) = \langle \bar{l}_{Q\psi}, [N \times S^1]_E \rangle \;,
\ee
\be
\label{EqEthClassQpsi}
\bar{l}_{Q\psi} := (\hat{l}_{Q\psi}, (\hat{b}_{N}(k) \wedge d\hat{t})_2) \;,
\ee
\be
\label{EqLagQpsi}
\hat{l}_{Q\psi} := \frac{1}{2} \hat{b}_{N}(k) \cup d\hat{\theta} \cup \hat{\eta}_\Lambda - \frac{1}{2} (\hat{j}_N(k) \cup d\hat{\theta}) \cup \hat{j}_{-N}(k) \;.
\ee

We now turn to the right hand side of \eqref{EqRelHamAnPartAnLift}. We have
\be
\label{EqActwk}
S(M, w_k) = \left \langle \bar{l}(\check{w}_k), [M]_E \right \rangle \;,
\ee
\be
\label{EqEthClasslwk}
\bar{l}(\check{w}_k) = \left ( \hat{l}(\check{w}_k), (\hat{b}_{N}(k) \cup \hat{\chi}_N)_2 \right ) \;,
\ee
\begin{align}
\label{EqLaglwk}
\hat{l}(\check{w}_k)  = & \: \frac{1}{2}(\hat{b}_{N}(k) \cup \hat{\chi}_N) \cup (-\hat{j}(k) \wedge \hat{\rho} - \hat{\eta}_\Lambda) - \frac{1}{2} \hat{j}(k) \wedge \hat{\rho} \wedge \hat{j}(k) \wedge d\hat{\rho} \notag \\
= & \: - \frac{1}{2}(\hat{b}_N(k) \cup \hat{\chi}_N) \cup  \hat{\eta}_\Lambda - \frac{1}{2}(\hat{j}_N(k) \cup \hat{\chi}_N) \cup \hat{j}_{-N}(k) + \mbox{exact} \;.
\end{align}
The last term on the first line vanishes by the alternating property of the wedge product. As $\hat{\rho}$ is equal to 1 on the support of $\hat{\chi}_N$ we can drop it in the first product. The remaining two terms yield the final result, after using the definition of $\hat{b}_N$. 

Up to signs irrelevant modulo 1, we see that \eqref{EqActQpsi}, \eqref{EqEthClassQpsi} and \eqref{EqLagQpsi} have the same form as \eqref{EqActwk}, \eqref{EqEthClasslwk} and \eqref{EqLaglwk}. To prove that \eqref{EqActQpsi} and \eqref{EqActwk} are equal, we remark that $\bar{l}_{Q\psi}$ is supported on $N \subset N \times S^1$ and $\bar{l}(\check{w}_k)$ is supported on $N \subset M$. By excision, \eqref{EqActQpsi} and \eqref{EqActwk} can be evaluated on $N \times S^1$, on which $\hat{\chi}_N$ can be taken to be equal to $d\hat{\theta}$. 
\end{proof}

\subsection{Main theorem}

Combining the results of Sections \ref{SecStateSpace},  \ref{SecVectManBound} and \ref{SecGluing}, we see that $\mathcal{R}$ satisfies the hypotheses of Proposition \ref{PropAltCondCheckFunc}, which proves 
\begin{theorem}
\label{ThMainTheorem}
$\mathcal{R}$ is a field theory functor from the bordism category of $(\mathfrak{W}, \mathsf{Z}^{\rm flat},\epsilon)$ manifolds to the category of finite dimensional Hilbert spaces.
\end{theorem}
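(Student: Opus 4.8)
The plan is to verify the hypotheses of Proposition \ref{PropAltCondCheckFunc}, which reduces the claim that $\mathcal{R}$ is a field theory functor to three conditions on the assignment $N \mapsto \mathcal{R}(N, u|_N, \epsilon)$ and $M \mapsto \mathcal{R}(M, u)$: compatibility with the monoidal structures, compatibility with the dagger structures, and the gluing condition \eqref{EqConstrGluing}. Each of these has in fact already been established in the preceding three sections, so the proof is an assembly: one simply cites the relevant propositions and checks that they cover all cases.

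First I would dispose of the monoidal and dagger compatibilities. On closed $4\ell+2$-dimensional manifolds, the two propositions of Section \ref{SecDisUnOrFlipStSp} give $\mathcal{R}(M_1 \sqcup M_2) = \mathcal{R}(M_1) \otimes \mathcal{R}(M_2)$ and $\mathcal{R}(-M) = \overline{\mathcal{R}(M)}$; the key inputs there are the corresponding properties of the prequantum theory (Propositions \ref{PropCompPreqThMonoidStruct} and \ref{PropCompPreqDagStruct}), the fact that the Heisenberg group of a disjoint union is the $U(1)$-amalgamated product $\mathsf{H}_{M_1} \times_{U(1)} \mathsf{H}_{M_2}$, and that orientation reversal inverts the pairing \eqref{DefBimultPairG} hence complex-conjugates $\mathsf{H}_M$. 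On $4\ell+3$-dimensional manifolds with boundary, the remark closing Section \ref{SecDefVectManBound} records that the definition \eqref{EqDefRManBound} of $\mathcal{R}(M,u)$ is likewise compatible with both structures, since the summand $\mathcal{S}^e(M,x)$ is and the projection $\pi_{\tilde{\mathsf{L}}_M}$ respects the tensor and conjugation operations. These are all routine given the earlier work.

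The substantive content is the gluing condition \eqref{EqConstrGluing}, equivalently \eqref{EqGluingConstr}. I would split into the anomaly-free and anomalous cases exactly as in Section \ref{SecGluing}. When the pair $(\mathsf{L}_{M_N}, \mathsf{L}_{M_N,2\Lambda})$ determined by the cut manifold $M_N$ is admissible with respect to the induced extra structure $\epsilon_{\partial M_N}$, Proposition \ref{PropProofGluing} gives the identity directly: the trace is realized via Proposition \ref{PropCharTrAsProj} as projection onto the $\tilde\rho(\mathsf{D}_N)$-invariants followed by ${\rm Tr}_{V'_{N,u|_N}}$, the sum over $\mathsf{E}_{\partial M_N}$ collapses to a sum over the Lagrangian $\mathsf{L}^{\mathsf{E}}_{M_N}$, the off-diagonal part is killed by the projection, and Lemma \ref{LemCompTracGluing} together with the gluing relation for the prequantum theory $\mathcal{S}^e$ converts the remaining sum into the defining sum for $\mathcal{R}(M,u)$, the measure factors matching up by Proposition \ref{PropRelMeasures}. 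When $(\mathsf{L}_{M_N}, \mathsf{L}_{M_N,2\Lambda})$ is \emph{not} admissible — the Hamiltonian anomaly — one defines ${\rm Tr}_{\mathcal{R}(N,u|_N,\epsilon_N)}$ to be zero, and one must check the right-hand side of \eqref{EqGluingConstr} then genuinely vanishes, i.e. that $\mathcal{R}(M,u)$ itself vanishes. This is the role of the identity \eqref{EqRelHamAnPartAnLift}, $Q(x,j_N(k))/Q(x,j_{-N}(k)) = S^e(M,w_k)$: non-admissibility produces an element $k \in \mathsf{L}_{M_N}$ with $Q(y,j_N(k)) \neq Q(y,j_{-N}(k))$, hence $S^e(M,w_k) \neq 1$, which by the analysis of Section \ref{SecAnOnManWithBound} forces the vector $\mathcal{R}(M,u)$ to project to zero.

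The main obstacle is precisely this last point — reconciling the two anomalies. Everything else is bookkeeping, but \eqref{EqRelHamAnPartAnLift} requires matching two a priori different quantities (a ratio of Wilson operators on $N\times S^1$ versus the exponentiated action of a glued configuration $w_k$ on $M$), which is done by writing both as pairings of E-cocycles of the same shape against fundamental E-homology classes (equations \eqref{EqActQpsi}–\eqref{EqLagQpsi} versus \eqref{EqActwk}–\eqref{EqLaglwk}) and invoking excision together with Proposition \ref{PropActVanishPullBackTorus} to discard the pulled-back pieces. Once \eqref{EqRelHamAnPartAnLift} is in hand, the theorem follows by citing Proposition \ref{PropAltCondCheckFunc}.
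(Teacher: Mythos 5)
Your proposal is correct and follows essentially the same route as the paper: the paper's proof of Theorem \ref{ThMainTheorem} is exactly the assembly you describe, citing Sections \ref{SecStateSpace}, \ref{SecVectManBound} and \ref{SecGluing} to verify the hypotheses of Proposition \ref{PropAltCondCheckFunc}. Your more detailed account of which propositions carry each condition, including the treatment of the anomalous gluing case via \eqref{EqRelHamAnPartAnLift}, accurately reflects where the real work was done in the preceding sections.
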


\section{Hamiltonian anomaly}

\label{SecHamAn}

The construction of the state space on a $4\ell+2$-dimensional $(\mathfrak{W}, \mathsf{Z}^{\rm flat})$-manifold $M$ in Section \ref{SecStateSpace} required a choice of extra structure $\epsilon_M$ when $\Gamma_0$ has even order. This is why the domain of the field theory functor $\mathcal{R}$ is in this case the bordism category of $(\mathfrak{W}, \mathsf{Z}^{\rm flat},\epsilon)$-manifolds. 

If we insist on seeing this field theory as defined on $(\mathfrak{W}, \mathsf{Z}^{\rm flat})$-manifolds, then it has to be an anomalous field theory. Technically, $\mathcal{R}$ is not a functor anymore, but a natural transformation from the functor associated to the trivial $4\ell+4$-dimensional field theory to the functor associated to a certain $4\ell+4$-dimensional field theory $\mathcal{A}$ characterizing the anomaly, the anomaly field theory \cite{Freed:2014iua, Monnierd}. The partition function of the anomalous field theory on a $4\ell+3$-dimensional manifold $W$ is no longer a complex number, but a vector in the Hilbert space $\mathcal{A}(W)$. Analogously, the state space of the anomalous field theory on a $4\ell+2$-dimensional manifold $M$ is not a Hilbert space, but an object in the category (more precisely the 2-Hilbert space) $\mathcal{A}(M)$. We refer the reader to \cite{Monnierd} for definitions of the higher categorical concepts appearing in this section, as well as for details about the functorial picture of anomalous field theories. 

Our aim in this section is to show how the field theory $\mathcal{R}$ unambiguously assigns an object $\mathcal{R}(M,u)$ in a 2-Hilbert space to each $4\ell+2$-dimensional $(\mathfrak{W}, \mathsf{Y}^{\rm flat}/\mathsf{E})$-manifold $M$. Unlike the rest of the paper, we keep this discussion informal and do not prove our claims. We start by reinterpreting the constructions of Sections \ref{SecWilHooftOp} and \ref{SecStateSpace} in a categorical setting. 

Recall that the Heisenberg group $\mathsf{H}_M$, before any choice of trivialization, consists of a set of Hermitian lines $\{L_k\}$, $k \in \mathsf{G}_M$, together with isomorphisms 
\be
\mu_{k_1,k_2}: L_{k_1} \otimes L_{k_2} \simeq L_{k_1+k_2}
\ee
describing the group composition law. We can see it as a very simple instance of a 2-group, see Appendix \ref{App2Groups} for a definition. Indeed, consider the category $\mathcal{H}_M$ whose objects are the Hermitian lines generated by the $\{L_k\}_{k \in \mathsf{G}_M}$ via the tensor product. We take the morphisms to be isomorphisms of Hermitian lines. The isomorphisms $\{\mu_{k_1,k_2} \circ (\bullet \otimes \bullet) \}_{k_1, k_2 \in \mathsf{G}}$ provide a functor $\mathcal{H}_M \times \mathcal{H}_M$ to $\mathcal{H}_M$, the higher equivalent of the group composition law. The identity object is $L_0 \simeq \mathbb{C}$, the inverse of $L_k$ is $L_{-k}$ and the other axioms of a 2-group given in Appendix \ref{App2Groups} are easily checked. We can therefore see the Heisenberg group as a categorified version of $\mathsf{G}_M$.

Representations of 2-groups are defined in Appendix \ref{App2Groups}. The object underlying a representation is a category, in fact a 2-vector space, endowed with a suitable action of the 2-group. The representations form a 2-category \cite{2008arXiv0812.4969B}. A class of representations of the Heisenberg 2-group $\mathcal{H}_M$ can be constructed as follows. Let $\mathsf{A}$ be a finite set endowed with an action of $\mathsf{G}_M$, together with vector spaces $\{V_a\}_{a \in \mathsf{A}}$. Let $\mathcal{H}(V_a)$ be the category of Hilbert spaces whose objects are Hilbert spaces of the form $W \otimes V_a$, $W \in \mathcal{H}_1$ and whose morphisms between $W_1 \otimes V_a$ and $W_2 \otimes V_a$ are the homomorphisms from $W_1$ to $W_2$. $\mathcal{H}(V_a)$ is linearly equivalent to the category $\mathcal{H}$ of Hilbert spaces. Let $\mathcal{A}$ be the Cartesian product of the categories $\mathcal{H}(V_a)$ for $a \in \mathsf{A}$. $\mathcal{A}$ is a category linearly equivalent to $(\mathcal{H})^{\times |A|}$, i.e a 2-vector space of dimension $|A|$. A representation of $\mathcal{H}_M$ on $\mathcal{A}$ consists of isomorphisms
\be
\alpha_{a,k}: V_a \otimes L_k \simeq V_{a+k}
\ee
satisfying the obvious relations 
\be
\label{EqConstIsomAct2Grp}
\alpha_{a, k_1 + k_2} \circ (1 \otimes \mu_{k_1, k_2}) = \alpha_{a + k_1, k_2} \circ (\alpha_{a,k_1} \otimes 1)
\ee

As an example, consider the category $\mathcal{V}'_{M,u}$ constructed on the model above with $\mathsf{A}$ being the coset $u + \mathsf{E}_M/\mathsf{T}_M$, endowed with the natural action of $\mathsf{G}_M$, and $\{V_a\}$ given by $\{S^e(M,x)\}_{x \in u + \mathsf{E}_M/\mathsf{T}_M}$. The set of isomorphisms $\{\alpha_{a,k}\}$ turning $\mathcal{V}'_{M,u}$ into a 2-representation is provided by $\{P(x,k,\bullet)\}_{x \in u + \mathsf{E}_M/\mathsf{T}_M, k \in \mathsf{G}_M}$ (see Section \ref{SecPushDown}). The relations \eqref{EqConstIsomAct2Grp} hold because this is precisely how the product of the Heisenberg group has been defined. In constructing $V'_{M,u}$ in \eqref{EqDefVMu}, we took the direct sum of the Hermitian lines $\{S^e(M,x)\}_{x \in u + \mathsf{E}_M/\mathsf{T}_M}$. We see now that this collection of Hermitian lines can be seen more naturally as an object in $\mathcal{V}'_{M,u}$. In fact, in a suitable sense, this object generates the whole category $\mathcal{V}'_{M,u}$ over $(\mathcal{H})^{\times |A|}$.

The object that the field theory $\mathcal{R}$ naturally associates to a $4\ell+2$-dimensional $(\mathfrak{W}, \mathsf{Z}^{\rm flat})$-manifold $M$ is an object in a representation obtained from $\mathcal{V}'_{M,u}$ by a higher categorical version of the invariant section construction, as we explain now. Let us start by picking a Lagrangian subgroup $\mathsf{L} \subset \mathsf{G}_M$ admissible with respect to some extra structure $\epsilon = (j, \{q_r\})$ in the sense of Section \ref{SecExStruct}. Let us also pick $x_0 \in u + \mathsf{E}_M/\mathsf{T}_M$ and a Lagrangian subgroup $\mathsf{L}^\perp$ complementary to $\mathsf{L}$. The lift $j$ determines an LL subgroup $\tilde{\mathsf{L}}$, which provides trivializations of the lines $L_k$, $k \in \mathsf{L}$, in which $\mu_{k_1,k_2} = 1 \in \mathbb{C}$ for $k_1, k_2 \in \mathsf{L}$. Given $x \in u + \mathsf{E}_M/\mathsf{T}_M$, this allows us to canonically identify the lines $S^e(M,x+k)$ for all $k \in \mathsf{L}$, yielding Hermitian lines $R(x+\mathsf{L},\tilde{\mathsf{L}})$, where we emphasize that these lines depend on the LL subgroup $\tilde{\mathsf{L}}$ lifting $\mathsf{L}$. The orbits $x+\mathsf{L}$ can be identified with the elements of $x_0 + \mathsf{L}^\perp$. Let $I(\tilde{\mathsf{L}}, x_0, \mathsf{L}^\perp)$ be the object of $\mathcal{V}'_{M,u}$ having $W_x = R(x+\mathsf{L},\tilde{\mathsf{L}})$ for each $x \in x_0 + \mathsf{L}^\perp$ and $W_x = 0$ otherwise. 

$I(\tilde{\mathsf{L}}, x_0, \mathsf{L}^\perp)$ depends on $\mathsf{L}$ and $j$ through $\tilde{\mathsf{L}}$, as well as on the base point $x_0$ and the complementary Lagrangian $\mathsf{L}^\perp$. To obtain a canonical object in a canonical 2-representation, we use a higher categorical version of the invariant section construction.

To this end, consider the category $\mathcal{C}$ whose objects are quadruplets $(\mathsf{L}, \epsilon, x_0, \mathsf{L}^\perp)$ defined as above. The morphisms in $\mathcal{C}$ are generated via composition by the following three types of morphisms:
\begin{enumerate}
\item Shift of the base point: $\mu_{C_1, x_2}: C_1 \rightarrow C_2$, with $C_1 = (\mathsf{L}, \epsilon, x_1, \mathsf{L}^\perp)$, $C_2 = (\mathsf{L}, \epsilon, x_2, \mathsf{L}^\perp)$.
\item Change of the complementary Lagrangian subgroup: $\mu_{C_1, \mathsf{L}^\perp_2}: C_1 \rightarrow C_2$, with $C_1 = (\mathsf{L}, \epsilon, x_0, \mathsf{L}^\perp_1)$ and $C_2 = (\mathsf{L}, \epsilon, x_0, \mathsf{L}^\perp_2)$.
\item Change of the Lagrangian subgroup and the extra structure: $\mu_{C_1, \mathsf{L}_2, \epsilon_2}: C_1 \rightarrow C_2$, with $C_1 = (\mathsf{L}_1, \epsilon_1, x_0, \mathsf{L}^\perp)$ and $C_2 = (\mathsf{L}_2, \epsilon_2, x_0, \mathsf{L}^\perp)$. We implicitly assumed that $\mathsf{L}^\perp$ is transversal to both $\mathsf{L}_1$ and $\mathsf{L}_2$.
\end{enumerate}
$\mathcal{C}$ is a transitive groupoid. We will consider a functor $\mathcal{F}$ from $\mathcal{C}$ into the 2-category of representations of $\mathcal{H}_M$. Such a functor should assign a representation to each object of $\mathcal{C}$ and a 1-morphism of representations to each morphism of $\mathcal{C}$, subject to the usual axioms for functors. $\mathcal{F}$ will assign to each object of $\mathcal{C}$ the representation $\mathcal{V}'_{M,u}$, hence it should assign 1-morphisms from $\mathcal{V}'_{M,u}$ to itself to each morphism of $\mathcal{C}$. For a 2-vector space that is canonically equivalent to $\mathcal{H}_1^n$ for some $n$, like $\mathcal{V}'_{M,u}$, 1-morphisms can be simply described as $n \times n$ matrices of vector spaces. The objects of the 2-vector space are pictured as vectors of vector spaces, and the 1-morphisms act on them as regular matrices, with the addition replaced by the direct sum and the multiplication by the tensor product \cite{2008arXiv0812.4969B}. 

We now describe the action of $\mathcal{F}$ on three types of morphisms of $\mathcal{C}$, which determine $\mathcal{F}$ completely via the compatibility condition with the compositions.
\begin{enumerate}
\item Let $C_1 = (\mathsf{L}, \epsilon, x_1, \mathsf{L}^\perp)$ and $C_2 = (\mathsf{L}, \epsilon, x_2, \mathsf{L}^\perp)$. Writing $k = x_2 - x_1 \in \mathsf{G}_M$, we set
\be
\mathcal{F}(\mu_{C_1,x_2}) := \sum_{x \in u + \mathsf{E}_M/\mathsf{T}_M} e_{x + k,x} \otimes \mathbb{C}
\;.
\ee
$e_{x + k, x}$ is the $|\mathsf{G}_M| \times |\mathsf{G}_M|$ matrix with zero matrix elements except for a $1$ in position $(x+k, x)$. By $e_{x + k, x} \otimes \mathbb{C}$, we mean the matrix of vector spaces whose matrix elements are the zero vector space everywhere, except in position $(x+k, x)$, where it is $\mathbb{C}$.

$\mathcal{F}(\mu_{C_1,x_2})$ applied to the object $I(\tilde{\mathsf{L}}, x_1, \mathsf{L}^\perp)$ yields $I(\tilde{\mathsf{L}}, x_2, \mathsf{L}^\perp)$.

\item Let $C_1 = (\mathsf{L}, \epsilon, x_0, \mathsf{L}^\perp_1)$ and $C_2 = (\mathsf{L}, \epsilon, x_0, \mathsf{L}^\perp_2)$. The automorphism group of $\mathsf{G}_M$ contains a transvection parallel to $\mathsf{L}$ leaving $\mathsf{L}$ fixed and mapping $\mathsf{L}^\perp_1$ to $\mathsf{L}^\perp_2$. This automorphism can be pictured as a $|\mathsf{G}_M| \times |\mathsf{G}_M|$ permutation matrix $v$. We define
\be
\mathcal{F}(\mu_{C_1, \mathsf{L}^\perp_2}) := v \otimes \mathbb{C} \;.
\ee

$\mathcal{F}(\mu_{C_1, \mathsf{L}^\perp_2})$ applied to $I(\tilde{\mathsf{L}}, x_0, \mathsf{L}^\perp_1)$ yields $I(\tilde{\mathsf{L}}, x_0, \mathsf{L}^\perp_2)$.

\item Let $C_1 = (\mathsf{L}_1, \epsilon_1, x_0, \mathsf{L}^\perp)$ and $C_2 = (\mathsf{L}_2, \epsilon_2, x_0, \mathsf{L}^\perp)$. We define
\be
\mathcal{F}(\mu_{C_1, \mathsf{L}_2, \epsilon_2}) := \sum_{x - x_0 \in \mathsf{L}^\perp} e_{x,x} \otimes R(x+\mathsf{L}_2,\tilde{\mathsf{L}}_2) \otimes (R(x+\mathsf{L}_1,\tilde{\mathsf{L}}_1))^{-1} \;.
\ee

Again, $\mathcal{F}(\mu_{C_1, \mathsf{L}_2, \epsilon_2})$ applied to $I(\tilde{\mathsf{L}}_1, x_0, \mathsf{L}^\perp)$ yields $I(\tilde{\mathsf{L}}_2, x_0, \mathsf{L}^\perp)$.
\end{enumerate}

We now describe the invariant sections of $\mathcal{F}$. Let $|\mathcal{C}|$ denote the (finite) number of objects of $\mathcal{C}$. Consider the category $(\mathcal{V}'_M)^{\times |\mathcal{C}|}$, the Cartesian product of $|\mathcal{C}|$ copies of $\mathcal{V}'_M$. An object $O \in (\mathcal{V}'_M)^{\times |\mathcal{C}|}$ is a size $|\mathcal{C}|$ vector of objects of $\mathcal{V}'_M$, and we write $O_C$ for the component labeled by $C \in \mathcal{C}$. We say that $O$ is an invariant section of $\mathcal{F}$ if
\be
\mathcal{F}(\mu) O_C = O_{\mu C}
\ee
for each morphism $\mu$ of $\mathcal{C}$. The category of invariant sections of $\mathcal{F}$ is the full subcategory $\mathcal{I}_{M,u}$ of $(\mathcal{V}'_M)^{\times |\mathcal{C}|}$ whose objects are the invariant sections of $\mathcal{F}$. By the remarks above, we know that $\mathcal{I}_{M,u}$ is not empty: it contains at least the object $I_{M,u}$ defined by $(I_{M,u})_C := I(\tilde{\mathsf{L}}, x_0, \mathsf{L}^\perp)$ for $C = (\mathsf{L}, \epsilon, x_0, \mathsf{L}^\perp)$. 

$(I_{M,u})_C$ is the state space that the theory $\mathcal{R}$ naturally assigns to a $4\ell+2$-dimensional $(\mathfrak{W}, \mathsf{Z}^{\rm flat})$-manifold $M$. Note that it can be seen as a vector space only after an object $C$ has been chosen. Given $C = (\mathsf{L}, \epsilon, x_0, \mathsf{L}^\perp)$, we can take the direct sum of the vector components of $I(\tilde{\mathsf{L}}, x_0, \mathsf{L}^\perp)$ and obtain the vector space $V'_{M,u}/\rho(\tilde{\mathsf{L}})$, which is canonically isomorphic to $\mathcal{R}(M,u,\epsilon)$ (see Section \ref{SecDefStSp}). After the direct sum is taken, the dependence on $x_0$ and $\mathsf{L}^\perp$ is easily seen to disappear. 

The category $\mathcal{I}_{M,u}$ should be the 2-Hilbert space associated by the $4\ell+4$-dimensional anomaly field theory $\mathcal{A}$ to the $4\ell+2$-dimensional manifold $M$. However we will not discuss the anomaly field theory further here.

\section{Examples}

\label{SecEx}

We illustrate the general construction above in various particular cases, in which simplifications occur.

\subsection{Self-dual lattice}

The simplest case arises when the lattice $\Lambda$ is self-dual: $\Lambda = \Lambda^\ast$. Then $\Gamma$ is the trivial group, and $\mathsf{E}_{M,P} = 1$. There is no discrete gauging, hence no anomaly either. The theory obtained is simply the prequantum field theory of Section \ref{SecSubPreqTh}. Note that the constraint described by Proposition \ref{PropDimHilbSp} is still relevant to this case. As $\mathsf{K}_M = 1$, the non-triviality of the state space of the prequantum theory fixes the background flux $a_M$ uniquely.

\subsection{Even lattice}

\label{SecEvLat}

We examine now the case where $\Lambda$ is even. A stronger condition that we will examine in parallel is $\Lambda$ being "totally even", i.e. a lattice whose pairing is $2\mathbb{Z}$-valued. The latter case includes the case of ordinary $U(1)$ level $k$ Chern-Simons theory, with $\Lambda = \sqrt{2k} \mathbb{Z}$. 

\paragraph{Induced pairing on cochains and cohomology} The pairing induced on $\bar{\Lambda}$-valued cohomology is even. However, the pairing induced on $\bar{\Lambda}$-valued cochains is not necessarily even. This is because while the cup product pairing on cohomology is graded symmetric, it is not symmetric at the level of cochains. If the pairing on $\Lambda$ is totally even, then we obtain a totally even pairing on both the cochain and cohomology groups.

\paragraph{Wu structure} If $\Lambda$ is even, $\Gamma^{(2)} = \Gamma_2$ in the notation of \eqref{EqDecompGamma2}, so $d\hat{\eta} = \hat{\nu} = 0$  and there is a preferred Wu structure $\eta = 0$, which we choose. If $\Lambda$ is totally even, then $\Gamma^{(2)} = 0$ and the preferred Wu structure is the only one.

\paragraph{E-theory} The twisted E-theory coincides with the ordinary E-theory. Moreover, the long exact sequence \eqref{EqLongExSeqETh} splits into short exact sequences
\be
0 \rightarrow H^p(M;\mathbb{R}/\mathbb{Z}) \stackrel{i}{\rightarrow} E[\Gamma^{(2)},q]^p(M) \stackrel{j}{\rightarrow} H^{p-q}(M;\bar{\Gamma}^{(2)}) \rightarrow 0
\ee
because the evenness of the pairing ensures that the Steenrod square ${\rm Sq}^{q+1}$ vanishes modulo 1. The degree $p$ E-theory classes on $M$ are pairs $(s,y) \in H^p(M;\mathbb{R}/\mathbb{Z}) \rtimes H^{p-q}(M;\bar{\Gamma}^{(2)})$, where the semi-direct structure is induced by the non-commutative sum \eqref{EqGrpLawCocModETh} on E-cochains. 

If $\Lambda$ is totally even, as $\Gamma^{(2)} = 0$, the E-theory coincides with $\mathbb{R}/\mathbb{Z}$-valued cohomology. The integration homomorphism of Proposition \ref{PropConstrFundHomEClass} coincides with the pairing of $\mathbb{R}/\mathbb{Z}$-valued cohomology with the fundamental $\mathbb{Z}$-homology class of the underlying manifold.

\paragraph{Action} The Lagrangian simplifies to 
\be
\hat{l}(\check{x}) = \left[ \check{x} \cup \check{x} \right]_h = \frac{1}{2} \hat{a} \cup \hat{h} + \frac{1}{2} \hat{h} \cup \hat{\omega} + \frac{1}{2}H^\cup_\wedge(\hat{\omega}, \hat{\omega}) \quad {\rm mod} \; 1 \;.
\ee
The equality above holds exactly in the totally even case and up to exact terms in the even case. For $\hat{a} = 0$, this Lagrangian reduces to the familiar Lagrangian of abelian Chern-Simons theory, of the form $\hat{h} \wedge_2 d\hat{h}$ up to exact terms. $\wedge_2$ is here the wedge product constructed from to half the pairing on $\Lambda$.

The action obtained from the integration of this Lagrangian in ordinary cohomology is not necessarily gauge invariant unless $\Lambda$ is totally even, see Remark \ref{RemLagNotGaugeInv}. This is directly linked to the fact that the E-theory integration map coincides with the ordinary one only in the totally even case. 

On boundaries, the formulas \eqref{EqACtFromBoundedMan} and  \eqref{EqActBManField} have $\hat{\lambda}$ and $\lambda$ equal zero if the bounded manifold $W$ admits a Wu structure (hence in particular if $\Lambda$ is totally even). They coincide then with the well-known formulas for the abelian Chern-Simons action on boundaries. In the even case, the subtleties involving E-theory may therefore only appear on spacetimes that do not bound a Wu manifold.

Recall that we required the Wu structure on $4\ell+3$-dimensional manifolds to be \emph{admissible}, in the sense that the quadratic refinement of the linking pairing defined by the action is tame. We showed in Proposition \ref{PropExWuStructqMTame} that admissible Wu structures always exist. However, we made a very specific choice of Wu structure in the present case, namely the one corresponding to $\eta = 0$. Using \eqref{EqACtFromBoundedMan}, it is easy to show that it is always admissible on boundaries of Wu manifolds, but on generic manifolds with 2-torsion, nothing ensures that this Wu structure is admissible. The classical Chern-Simons theory associated to an even lattice may therefore require a non-trivial Wu structure on certain spacetimes. Note that this problem does not appear if $\Lambda$ is totally even, as in this case the group $\Gamma^{(2)}$ is trivial. It does not appear either in dimension 3 ($\ell = 0$), as in this case every 3-manifold is a spin boundary.

The same remarks apply to $4\ell+2$-dimensional manifolds. The zero Wu structure may not be admissible, see Proposition \ref{PropDefAdmWuStruct}. This complication is absent in the totally even case and in the three-dimensional case.

There is no simplification worth pointing out in the construction of the prequantum theory and the partition function when $\Lambda$ is an even or totally even lattice. From now on, we focus exclusively on the totally even case.

\paragraph{Wilson operators} In the totally even case, the proof of proposition \eqref{PropQ2beq1} shows that $Q(x,b) = 1$, $b \in H^{2\ell+1}_{\rm free}(M;\bar{\Lambda})$, hence the quadratic refinement $\mathfrak{q}_M$ is trivial. The Arf invariant $\mathfrak{a}_M$ is automatically zero. This also implies that the skew-symmetric refinement $\psi$ restricted to $H^{2\ell+1}_{\rm free}(M;\bar{\Lambda})$ is trivially equal to 1. 

The Hermitian lines $L_k$ are all canonically trivial. This is linked with the fact that the pairing $B$ on $\mathsf{G}_M$ \eqref{DefBimultPairG} has a canonical square root given by 
\be
\label{DefCanSqRootPairG}
\phi_0(k_1, k_2) := \exp -\pi i \langle j(k_1) \wedge j(k_2), [M] \rangle \;.
\ee
This pairing is independent of $j$ thanks to the total evenness. Unlike $\phi$, which is in general only a 2-cocycle, $\phi_0$ is bimultiplicative. One can also check that \eqref{EqAlphIndepX} ensures that the operators $Q(x,j(k))$ are independent of the lift $j$ as well.

\paragraph{State space} We just saw that the first component of the extra structure required to construct the state space, the lift $j$, is irrelevant in the totally even case. The second component, a quadratic refinement of the pairing $(B_r)^{r/2}$ for each even $r$ appearing in the decomposition \eqref{EqDecompGM}, can be chosen canonically. Indeed, we have $(B_r)^{r/2} = ((\phi_0)_r)^r = 1$ is the trivial pairing, so we can take $q_r = 1$. Therefore, effectively no extra structure is needed in the totally even case.

Any pair of Lagrangian subgroups $(\mathsf{L}_{2\Lambda}, \mathsf{L})$ such that $\mathsf{L}_{2\Lambda}$ reduces modulo $H^{2\ell+1}_{\rm free}(M;\bar{\Lambda})$ to $\mathsf{L}$ is admissible in the sense of Section \ref{SecAdmLagSubg}. Indeed, there is always a lift $j$ satisfying first admissibility condition, and by the discussion above, the choice of such a $j$ has no influence on the rest of the construction. As the quadratic refinements $\mathfrak{q}_M$ and $q_r$ vanish, the second and third admissibility conditions are automatically satisfied. Moreover, given any number of Lagrangian subgroups $\mathsf{L}_i$ of $\mathsf{G}_M$. We can always complete them to admissible pairs. For that, pick a lift $j$ and choose Lagrangian subgroups of $\mathsf{G}_{M,2\Lambda}$ lifting the $\mathsf{L}_i$ with respect to $j$. This means that there are no constraints on the triplets of Lagrangian subgroups relevant to the discussion of the Maslov index.

\paragraph{Summary} In the even case, we have a preferred Wu structure as well as minor simplifications in the form of the Lagrangian. The preferred Wu structure may however not be admissible, in which case gauging requires a non-trivial Wu structure and the original action. 

In the totally even case, there is a unique Wu structure, which is admissible. The action coincides with the one of ordinary abelian Chern-Simons theory. All the anomalies are absent. No extra structure is required to construct the state space.

\subsection{Odd order}

\label{SecExOddOrd}

We focus here on the case where $\Gamma = \Lambda^\ast/\Lambda$ has odd order.

In the odd case, there is no preferred Wu structure. No special simplification occurs in the construction of the action. However, the quadratic refinement of the linking pairing defined by the action is automatically tame. This happens because its restriction to $\mathsf{K}_{M,\partial M}$ is a $\mathbb{Z}_2$-character and $|\mathsf{E}_{M,\partial M}| = |H^{2\ell+1}(M,\partial M; \bar{\Gamma})|$ odd implies $|\mathsf{K}_{M,\partial M}|$ odd. We conclude that every Wu structure on a $4\ell+3$-dimensional manifold is admissible. Similarly, every Wu structure on a $4\ell+2$-dimensional manifold is admissible.

The discussion above shows that there is no partition function anomaly. As was already discussed in Section \ref{SecStateSpace}, there is no extra structure needed to be chosen for the construction of the state space in the odd case. Therefore there is no Hamiltonian anomaly either.

\subsection{Rank 1 case}

A general rank 1 integral lattice is of the form $\Lambda = \sqrt{n} \mathbb{Z}$. We see that either $\Lambda$ is totally even if $n$ is even or $\Gamma = \mathbb{Z}_n$ has odd order if $n$ is odd. We deduce from the discussion above that no anomaly is present in the rank 1 case. 

For $n$ odd, the Wu structures on a closed manifold $M$ form a torsor for $H^{2\ell+1}(M; \mathbb{Z}_2)$ and are spin structures in the case $\ell = 0$. There is a unique trivial Wu structure when $n$ is even. This is consistent with the fact that the theory with $n = 2k$ is nothing but the level $k$ $U(1)$ (ordinary) Chern-Simons theory.

\subsection*{Acknowledgments}

This research has been supported in part by Forschungskredit FK-14-108 of the University of Zürich, SNF Grant No. 200020-149150/1, 152812 and 165666, and by NCCR SwissMAP, funded by the Swiss National Science Foundation.

\appendix

\section{Pairings on finite abelian groups and quadratic refinements}

\label{SecPairFinAbGroups}

\paragraph{Multiplicative notation} In the main text, we sometimes find it more practical to use a multiplicative notation for abelian groups and see the pairings and quadratic refinements as valued in $U(1)$ rather than $\mathbb{R}/\mathbb{Z}$. We call the pairings bilinear or bimultiplicative to emphasize which notation we are using. We can freely switch between the two pictures using the exponential/logarithmic maps relating $\mathbb{R}/\mathbb{Z}$ to $U(1)$. The definitions below are given in the additive notation and have obvious analogues in the multiplicative case that we do not repeat.

\paragraph{Pairings} Let $\Xi_i$, $i = 1,2,3$ be three abelian groups, with the group operations written additively. A pairing between $\Xi_1$ and $\Xi_2$ with value in $\Xi_3$ is a group homomorphism 
\be
\label{EqPairing3Groups2}
(\bullet, \bullet): \Xi_1 \times \Xi_2 \rightarrow \Xi_3 \;.
\ee

The \emph{left radical} of $(\bullet, \bullet)$ is the subgroup of $\Xi_1$ composed of elements $x$ satisfying $(x,y) = 0$ for all $y \in \Xi_2$. The right radical is defined analogously. We call the pairing \emph{non-degenerate} if both radicals are trivial. The pairing is called $\emph{perfect}$ if the homomorphism from $\Xi_1$ to ${\rm Hom}(\Xi_2, \Xi_3)$ sending $x \in \Xi_1$ to $(x,\bullet)$ is an isomorphism. A perfect pairing has vanishing left and right radicals. For example, let $\Xi$ be a finite group and $\Xi^\ast$ be its \emph{Pontryagin dual}, i.e. the group of $\mathbb{R}/\mathbb{Z}$-valued characters on $\Xi$. Then the evaluation yields a perfect pairing between $\Xi$ and $\Xi^\ast$ with value in $\mathbb{R}/\mathbb{Z}$. Any $\mathbb{R}/\mathbb{Z}$-valued perfect pairing on $\Xi$ (i.e. between $\Xi$ and itself) provides an isomorphism with $\Xi^\ast$.

A pairing on $\Xi$ is \emph{symmetric} if $(x,y) = (y,x)$, \emph{alternating} if $(x,x) = 0$ and \emph{skew-symmetric} if $(x,y) = -(y,x)$, for all $x,y \in \Xi$.
Alternating pairings are skew-symmetric, but not vice versa. For a pairing with one of the symmetry properties above, the left and right radicals coincide and we speak about the radical of the pairing.

\begin{lemma}
A (skew-)symmetric $\mathbb{R}/\mathbb{Z}$-valued pairing on a finite abelian group $\Xi$ with radical $\Xi^\perp$ induces a perfect pairing on $\Xi/\Xi^\perp$.
\end{lemma}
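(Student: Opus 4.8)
The plan is to show that the induced pairing on the quotient $\Xi/\Xi^\perp$ is well-defined, non-degenerate, and then upgrade non-degeneracy to perfectness using finiteness. First I would verify well-definedness: if $x' = x + r$ and $y' = y + s$ with $r,s \in \Xi^\perp$, then bimultiplicativity gives $(x',y') = (x,y) + (x,s) + (r,y) + (r,s)$, and each of the last three terms vanishes because $r,s$ lie in the radical. Hence the formula $\bar{x},\bar{y}) \mapsto (x,y)$ descends to a pairing $(\bullet,\bullet)'$ on $\Xi/\Xi^\perp$, which inherits the (skew-)symmetry of the original pairing since those identities are preserved under passing to cosets.

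Next I would check that $(\bullet,\bullet)'$ has trivial radical. Suppose $\bar{x} \in \Xi/\Xi^\perp$ satisfies $(\bar{x},\bar{y})' = 0$ for all $\bar{y}$. Lifting to $x \in \Xi$, this says $(x,y) = 0$ for all $y \in \Xi$ (every element of $\Xi$ is a lift of some $\bar{y}$), so by definition $x \in \Xi^\perp$, i.e. $\bar{x} = 0$. Since the pairing is (skew-)symmetric, the left and right radicals coincide, so one computation suffices.

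Finally, to get perfectness — that the map $\Xi/\Xi^\perp \to \mathrm{Hom}(\Xi/\Xi^\perp, \mathbb{R}/\mathbb{Z})$ sending $\bar{x}$ to $(\bar{x},\bullet)'$ is an isomorphism — I would use that $\Xi/\Xi^\perp$ is a finite abelian group. For a finite abelian group $A$, the Pontryagin dual $A^\ast = \mathrm{Hom}(A,\mathbb{R}/\mathbb{Z})$ has the same order as $A$. The map above is injective precisely because the radical is trivial (that is what the previous step established). An injective homomorphism between finite groups of equal order is automatically surjective, hence an isomorphism, so the pairing is perfect.

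I do not expect any genuine obstacle here; the only mild point is the standard fact that $|A| = |A^\ast|$ for finite abelian $A$, which follows from the structure theorem (it suffices to check it for cyclic groups, where $\mathbb{Z}_n^\ast \simeq \mathbb{Z}_n$) and can be invoked without proof. Everything else is a direct unwinding of definitions. The multiplicative/$U(1)$-valued version, used elsewhere in the text, is identical after applying the exponential map.
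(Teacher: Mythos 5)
Your proof is correct and follows essentially the same route as the paper: descend the pairing to the quotient, observe that triviality of the radical gives injectivity of the induced map $\Xi/\Xi^\perp \to \mathrm{Hom}(\Xi/\Xi^\perp, \mathbb{R}/\mathbb{Z})$, and conclude by comparing orders with the Pontryagin dual. The only differences are cosmetic (you check well-definedness in both slots explicitly, and there is a stray missing parenthesis in your formula for the descended pairing).
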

\begin{proof}
The pairing provides a homomorphism $p: \Xi \rightarrow {\rm Hom}(\Xi, \mathbb{R}/\mathbb{Z})$, $x \rightarrow (x, \bullet)$. The elements of ${\rm Hom}(\Xi, \mathbb{R}/\mathbb{Z})$ obtained in this way are clearly constant along the orbits of $\Xi^\perp$, hence are pullbacks of elements of ${\rm Hom}(\Xi/\Xi^\perp, \mathbb{R}/\mathbb{Z})$. We obtain therefore a homomorphism $p': \Xi \rightarrow {\rm Hom}(\Xi/\Xi^\perp, \mathbb{R}/\mathbb{Z})$. 

The kernel of this homomorphism is $\Xi^\perp$. Quotienting by $\Xi^\perp$ we obtain an injective homomorphism $\Xi/\Xi^\perp \rightarrow {\rm Hom}(\Xi/\Xi^\perp, \mathbb{R}/\mathbb{Z})$. As $\Xi/\Xi^\perp$ and its Pontryagin dual have the same order, this is an isomorphism. Hence the induced pairing is perfect.
\end{proof}

Let $\Xi$ be an abelian group endowed with a symmetric pairing valued in a group admitting $\mathbb{Z}_2$ as a quotient. A \emph{characteristic element} is an element $\lambda$ of $\Xi$ such that 
\be
(x,x) = (\lambda,x) \mbox{ mod } 2 \;, \quad \forall x \in \Xi \;.
\ee

\paragraph{Quadratic refinements} A thorough reference on quadratic refinements is \cite{Taylor}, in which the proofs of the various assertions below can be found. Let $\Xi$ be a finite abelian group. We will see it as a $\mathbb{Z}$-module and hence freely write expressions of the form 
\be
nx = \underbrace{x + ... + x}_{n \; {\rm times}}\;, \quad n \in \mathbb{Z} \;, \; \; x \in \Xi \;.
\ee
\begin{definition}
\label{DefQuadrRef}
A \emph{refinement} (\emph{enhancement} in \cite{Taylor}) is a function $q: \Xi \rightarrow \mathbb{Q}/\mathbb{Z}$ such that the first condition below holds. If the second condition holds as well, the refinement is said to be \emph{quadratic}:
\begin{enumerate}
\item For $x,y \in \Xi$, $B(x,y) := q(x+y) - q(x) - q(y)$ is an $\mathbb{R}/\mathbb{Z}$-valued pairing on $\Xi$.
\item $q(nx) = n^2 q(x)$ for all $n \in \mathbb{N}$. 
\end{enumerate}
\end{definition}
A quadratic refinement $q$ "refines" the pairing $B$ in the sense that the axioms above imply that $2q(x) = B(x,x)$, which is equivalent to the second condition in the presence of the first one. By definition, the pairing $B$ is symmetric. Let $\Xi^\perp := \{ x \in \Xi \,|\, B(x,y) = 0 \: \forall y \in \Xi\}$ be its radical. Then $q|_{\Xi^\perp}$ is linear, i.e. a homomorphism. A quadratic refinement is \emph{tame} if $q|_{\Xi^\perp} = 0$. A tame quadratic refinement passes to a quadratic refinement on $\Xi/\Xi^\perp$, on which it refines the perfect pairing induced by $B$.

If $q_1$ and $q_2$ are both quadratic refinements of $B$, then $q_2 - q_1$ is a $\mathbb{Z}_2$-valued character on $\Xi$.

\paragraph{Gauss sums and Arf invariants} Given a quadratic refinement $q$ on $\Xi$, we can compute its Gauss sum
\be
{\rm Gauss}(q) := \sum_{x \in \Xi} \exp 2 \pi i q(x) \;.
\ee
The norm $|{\rm Gauss}(q)|$ is equal to zero if $q$ is not tame, and to $\sqrt{|\Xi^\perp| |\Xi|}$ is $q$ is tame. In particular, it is equal to $\sqrt{|\Xi|}$ if $B$ is perfect. Here $|\Xi|$ denotes the order of the finite group $\Xi$. 

The argument of ${\rm Gauss}(q)$ is the \emph{Arf invariant} of $q$:
\be
{\rm Arf}(q) := {\rm Arg}({\rm Gauss}(q)) = \frac{1}{2\pi} {\rm Im} \ln {\rm Gauss}(q) \in \mathbb{R}/\mathbb{Z} \;.
\ee
It turns out that the Arf invariant is valued in $\frac{1}{8}\mathbb{Z}/\mathbb{Z}$, as the following classification shows.

\paragraph{Classification of quadratic refinements} Finite abelian groups endowed with quadratic refinements inducing perfect pairings were classified up to isomorphism in \cite{2003math......1040D}, Theorem 3.5. Their Arf invariants were computed in Theorem 3.9 of the same reference. We record here this classification, starting by enumerating the elementary building blocks. In the following $m$ and $a$ (respectively $a_1$ and $a_2$) are positive integers, $p$ is a prime different from 2, $x$ (respectively $x_1$ and $x_2$) is the generator of $\Xi$, $\binom{a}{p}_L$ is the Legendre symbol of $p$, taking value $1$ if $a$ is a quadratic residue mod $p$ and $-1$ if $a$ is a quadratic non-residue mod $p$.
\begin{enumerate}
\item $\Xi_{r,a} := \mathbb{Z}_r$, $r = p^m$, $q(x) = \frac{a}{r}$, $a$ prime to $p$, $a < r$.
\be
\exp 2\pi i {\rm Arf}(q) = \left \{ \begin{array}{ll}
1 & \mbox{ if } m \mbox{ is even} \\
\binom{a}{p}_L & \mbox{ if } m \mbox{ is odd and } p = 1 \mbox{ mod } 4 \\
i\binom{a}{p}_L & \mbox{ if } m \mbox{ is odd and } p = 3 \mbox{ mod } 4
 \end{array} \right.
\ee
\item $\Xi_{r,a} := \mathbb{Z}_r$, $r = 2^m$, $q(x) = \frac{a}{2r}$, $a$ prime to $2$, $a < 2r$.
\be
\exp 2\pi i {\rm Arf}(q) = \left \{ \begin{array}{ll}
\exp \frac{2\pi i}{8} a & \mbox{ if } m \mbox{ is even} \\
\exp \pm \frac{2\pi i}{8} & \mbox{ if } m \mbox{ is odd and } a = \pm 1 \mbox{ mod } 4
 \end{array} \right.
\ee
\item $\Xi_{r,a_1,a_2} := \mathbb{Z}_r \oplus \mathbb{Z}_r$, $r = 2^m$, $q(x_1) = \frac{a_1}{r}$, $q(x_2) = \frac{a_2}{r}$, $q(x_1+x_2) = \frac{a_1 + a_2 + 1}{r}$, $a_1,a_2 < r$.
\be
\exp 2\pi i {\rm Arf}(q) = \left \{ \begin{array}{ll}
1 & \mbox{ if } m \mbox{ is even} \\
1 & \mbox{ if } m \mbox{ is odd and } a_1a_2 \mbox{ is even} \\
-1 & \mbox{ if } m \mbox{ is odd and } a_1a_2 \mbox{ is odd}
 \end{array} \right.
\ee
\end{enumerate}
The classification theorem says that up to isomorphism, any finite group endowed with a perfect pairing and a quadratic refinement is given by a direct sum of the groups above, with quadratic refinement given by the sum of the quadratic refinements of the summands. Note that this decomposition is not necessarily unique. In fact some blocks are isomorphic. The Arf invariant of the direct sum is given by the sum modulo 1 of the Arf invariants of the summands.

\paragraph{$\mathbb{Z}_2$-valued quadratic refinements} If a quadratic refinement is $\mathbb{Z}_2$-valued, then $B(x,x) = 0$ for all $x \in \Xi$. This means that $B$ is both symmetric and alternating. Unless $B$ vanishes trivially, this is possible only if $\Xi = \mathbb{Z}_2^n$. Moreover if $B$ is perfect, as we will assume, then $n$ is even.

Using the classification above, we see that $\Xi$ has to be a direct sum of groups of the third type with $r = 2$. It is easy to check that there are only two such groups up to isomorphism, $\Xi_{2,0,0}$ and $\Xi_{2,1,1}$. $\Xi_{2,0,0}$ and $\Xi_{2,1,1}$ are sometimes called respectively the hyperbolic and elliptic $\mathbb{Z}_2$-plane. The Arf invariant is 0 in the hyperbolic case, and $\frac{1}{2}$ in the elliptic case.
\begin{proposition}
\label{PropDecompZ2QRArf0Arf1}
In the Arf invariant 0 case, 
\be
\Xi \simeq \bigoplus_{j = 0}^{n/2} \Xi_{2,0,0} \;.
\ee
In the Arf invariant 1 case,
\be
\Xi \simeq \left(\bigoplus_{j = 0}^{n/2-1} \Xi_{2,0,0}\right) \oplus \Xi_{2,1,1}\;.
\ee
\end{proposition}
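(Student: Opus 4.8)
\textbf{Proof proposal for Proposition \ref{PropDecompZ2QRArf0Arf1}.}

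The plan is to proceed by induction on $n/2$, using the classification of finite abelian groups with perfect quadratic refinements already recorded in the appendix, together with the observation (stated just above the proposition) that a $\mathbb{Z}_2$-valued quadratic refinement $q$ on $\Xi$ forces $\Xi \simeq \mathbb{Z}_2^n$ with $n$ even, and that $\Xi$ decomposes into a direct sum of copies of the two elementary blocks $\Xi_{2,0,0}$ (Arf $0$) and $\Xi_{2,1,1}$ (Arf $\tfrac12$). So the only thing left to pin down is \emph{how many} copies of $\Xi_{2,1,1}$ can appear, and here the key input is the additivity of the Arf invariant modulo $1$ under direct sums (the last sentence of the classification paragraph). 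Write $\Xi \simeq \Xi_{2,0,0}^{\oplus s} \oplus \Xi_{2,1,1}^{\oplus t}$ with $s+t = n/2$; then ${\rm Arf}(q) = t \cdot \tfrac12 \bmod 1$, so ${\rm Arf}(q) = 0$ iff $t$ is even and ${\rm Arf}(q) = \tfrac12$ iff $t$ is odd.

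Thus the proposition reduces to a single algebraic identity: $\Xi_{2,1,1} \oplus \Xi_{2,1,1} \simeq \Xi_{2,0,0} \oplus \Xi_{2,0,0}$ as groups endowed with quadratic refinements. This is the step I expect to be the crux, and it is a finite verification. Both sides are $\mathbb{Z}_2^4$ with a perfect alternating pairing $B$; by Arf-invariant additivity both have total Arf invariant $0$, so by the classification theorem they must be isomorphic (a group with a $\mathbb{Z}_2$-valued perfect quadratic refinement of Arf invariant $0$ is, up to isomorphism, determined to be a sum of hyperbolic planes, since the elliptic plane cannot appear an odd number of times and two elliptic planes would give Arf $1 \equiv 0$, but then one can peel them off — this is exactly the content we are trying to establish, so a direct check is cleaner). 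Concretely I would exhibit the isomorphism: if $x_1,x_2$ and $y_1,y_2$ generate the two elliptic planes with $q(x_i)=q(y_i)=\tfrac12$, $q(x_1+x_2)=q(y_1+y_2)=\tfrac12+\tfrac12 = 0$ (wait — in the $r=2$ block $q(x_1+x_2) = \tfrac{a_1+a_2+1}{2} = \tfrac12$ since $a_1=a_2=1$), set $e_1 = x_1, e_2 = x_1+x_2+y_1$, and check that the new basis spans a pair of hyperbolic planes; this is a short computation with $q(e_1+e_2)$, $q(e_1)$, etc., using $q(a+b+c) = q(a)+q(b)+q(c)+B(a,b)+B(b,c)+B(a,c)$.

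Given that identity, the induction is immediate. For the Arf $0$ case: $\Xi \simeq \Xi_{2,0,0}^{\oplus s}\oplus \Xi_{2,1,1}^{\oplus t}$ with $t$ even; replace each pair of elliptic summands by a pair of hyperbolic ones to get $\Xi \simeq \Xi_{2,0,0}^{\oplus n/2}$. For the Arf $\tfrac12$ case: $t$ is odd, so $t \geq 1$; replace $(t-1)$ of the elliptic summands (an even number) by hyperbolic ones, leaving exactly one $\Xi_{2,1,1}$, giving $\Xi \simeq \Xi_{2,0,0}^{\oplus(n/2-1)}\oplus \Xi_{2,1,1}$. The main obstacle, as noted, is verifying the single isomorphism $\Xi_{2,1,1}^{\oplus 2}\simeq \Xi_{2,0,0}^{\oplus 2}$; everything else is bookkeeping with the already-cited classification and the additivity of the Arf invariant.
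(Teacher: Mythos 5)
Your overall strategy is exactly the paper's: use the classification together with additivity of the Arf invariant to reduce everything to the single identity $\Xi_{2,1,1} \oplus \Xi_{2,1,1} \simeq \Xi_{2,0,0} \oplus \Xi_{2,0,0}$, and then establish that identity by an explicit change of basis. You are also right to flag that invoking "same Arf invariant, hence isomorphic" would be circular here, since the classification theorem does not assert uniqueness of the decomposition; the direct check is genuinely needed.

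The problem is that the direct check you sketch fails at the first step. You propose $e_1 = x_1$ as a generator of a hyperbolic plane, but $q(x_1) = \tfrac12$ in $\Xi_{2,1,1}$, whereas a generator of $\Xi_{2,0,0}$ must satisfy $q = 0$. No element lying entirely in one elliptic summand can work: the $q$-isotropic elements of a single copy of $\Xi_{2,1,1}$ are only $0$. Every generator of the new basis must mix the two elliptic planes so that the two values $\tfrac12$ cancel (and the cross term $B$ vanishes since the summands are orthogonal). Writing $(x_1,x_2)$ and $(x_3,x_4)$ for the generators of the two copies of $\Xi_{2,1,1}$, the paper takes $(x_1+x_3,\; x_1+x_2+x_3)$ and $(x_2+x_4,\; x_2+x_3+x_4)$; one checks $q$ vanishes on all four (e.g.\ $q(x_1+x_3) = q(x_1)+q(x_3)+B(x_1,x_3) = \tfrac12+\tfrac12+0 = 0$), that $B = \tfrac12$ within each pair, and that the two pairs are mutually orthogonal. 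With that substitution your induction and parity bookkeeping go through verbatim.
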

\begin{proof}
To prove the proposition, it is sufficient to show that 
\be
\Xi_{2,1,1} \oplus \Xi_{2,1,1} \simeq \Xi_{2,0,0} \oplus \Xi_{2,0,0}
\ee
Let $(x_1,x_2)$ and $(x_3, x_4)$ be generators of the summands on the left-hand side. Then one easily check that $(x_1 + x_3, x_1 + x_2 + x_3)$ and $(x_2 + x_4, x_2 + x_3 + x_4)$ generate two copies of $\Xi_{2,0,0}$.
\end{proof}

As $B$ is alternating, we can speak of Lagrangian subgroups, i.e. subgroups $L \subset \Xi$ on which $B$ vanishes identically. A Lagrangian subgroup is said to be $q$-isotropic if $q|_L = 0$. 
\begin{proposition}
The Arf invariant of $\Xi$ is $0$ if and only if there exists a $q$-isotropic Lagrangian subgroup.
\end{proposition}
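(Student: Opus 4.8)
The plan is to handle the two implications separately: the ``if'' direction by a direct evaluation of the Gauss sum, and the ``only if'' direction by appeal to the structure theorem recalled in Proposition \ref{PropDecompZ2QRArf0Arf1}.

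For the ``if'' direction, I would suppose $L \subset \Xi$ is a $q$-isotropic Lagrangian subgroup, so $q|_L = 0$ and $L = L^\perp$ (equivalently $|L| = |\Xi|^{1/2}$). Since $\Xi \simeq \mathbb{Z}_2^n$ is a vector space over $\mathbb{Z}_2$, the subgroup $L$ admits a group complement $L'$, and every $x \in \Xi$ is uniquely $x = l + l'$ with $l \in L$, $l' \in L'$. The first axiom of Definition \ref{DefQuadrRef} together with $q|_L = 0$ gives $q(l+l') = q(l) + q(l') + B(l,l') = q(l') + B(l,l')$. Plugging this into ${\rm Gauss}(q) = \sum_{x \in \Xi}\exp 2\pi i q(x)$ and carrying out the sum over $l \in L$ first, the inner sum $\sum_{l \in L}\exp 2\pi i B(l,l')$ is the sum over $L$ of the character $l \mapsto B(l,l')$; it equals $|L|$ when this character is trivial, i.e.\ when $l' \in L^\perp = L$, and vanishes otherwise. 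As $L \cap L' = 0$, only $l' = 0$ survives, so ${\rm Gauss}(q) = |L| > 0$, whence ${\rm Arf}(q) = 0$.

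For the ``only if'' direction, I would assume ${\rm Arf}(q) = 0$. Proposition \ref{PropDecompZ2QRArf0Arf1} then provides an isomorphism of groups-endowed-with-quadratic-refinement $\Xi \simeq \bigoplus_j \Xi_{2,0,0}$. In each hyperbolic plane $\Xi_{2,0,0} = \langle x_1 \rangle \oplus \langle x_2 \rangle$ one has $q(x_1) = 0$, and $\langle x_1 \rangle$ is isotropic of order $2 = |\Xi_{2,0,0}|^{1/2}$, hence a Lagrangian subgroup of that plane on which $q$ vanishes; the direct sum of these subgroups over all summands is the desired $q$-isotropic Lagrangian subgroup of $\Xi$.

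I do not expect a genuine obstacle here. The only points requiring a word of care are that $L$ splits off as a direct summand (immediate, $\Xi$ being a $\mathbb{Z}_2$-vector space) and that the character-orthogonality argument in the inner sum uses precisely the Lagrangian property $L = L^\perp$; one could instead take $L'$ to be a complementary Lagrangian, but a mere group complement suffices and is marginally quicker.
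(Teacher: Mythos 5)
Your proof is correct. The ``only if'' direction (Arf invariant $0$ implies the existence of a $q$-isotropic Lagrangian) is the same as the paper's: read off $\langle x_1\rangle$ from each hyperbolic summand in the decomposition of Proposition \ref{PropDecompZ2QRArf0Arf1} and take the direct sum. The ``if'' direction is where you genuinely diverge. The paper argues structurally: starting from a basis $\{x_i\}$ of the $q$-isotropic Lagrangian $L$ and using perfectness of $B$, it splits off a hyperbolic plane $\Xi_{2,0,0}$ containing $x_1$, passes to $L/\Xi_1 \subset \Xi/\Xi_1$, and iterates to conclude $\Xi \simeq \bigoplus \Xi_{2,0,0}$, hence ${\rm Arf} = 0$. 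You instead evaluate the Gauss sum directly: writing $x = l + l'$ over a group complement $L'$, using $q(l+l') = q(l') + B(l,l')$ and character orthogonality on $L$, only $l' \in L^\perp \cap L' = L \cap L' = 0$ survives, giving ${\rm Gauss}(q) = |L| > 0$. Both are valid; your computation is shorter, bypasses the structure theory entirely, and in fact recovers the sharper statement ${\rm Gauss}(q) = |\Xi|^{1/2}$ (consistent with the general norm formula for tame refinements of perfect pairings quoted earlier in the appendix), while the paper's inductive splitting yields as a byproduct the full hyperbolic decomposition of $(\Xi,q)$, which is a slightly stronger structural conclusion. The two points you flag as needing care (the splitting of $L$ as a direct summand of a $\mathbb{Z}_2$-vector space, and $L^\perp = L$ from maximality plus perfectness) are indeed the only ones, and both hold.
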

\begin{proof}
$\Xi_{2,0,0}$ admits two $q$-isotropic Lagrangian subgroups and $\Xi_{2,1,1}$ admits none. If the Arf invariant is zero, we can easily construct a $q$-isotropic Lagrangian subgroup in $\Xi$ using the decompositions of Proposition \ref{PropDecompZ2QRArf0Arf1}. 

Conversely, suppose that there is a $q$-isotropic Lagrangian subgroup $L$. Pick a basis $\{x_i\}$ of generators of $L$. As the pairing $B$ is perfect, there is a unique $y_1 \in \Xi$ such that $B(x_1,y_1) = \frac{1}{2}$, $B(x_i, y_1) = 0$, $i \neq 1$. We can therefore split off a summand $\Xi_1$ generated by $\{x_1,y_1\}$. $\Xi_1$ is isomorphic to $\Xi_{2,0,0}$ as $x_1$ obviously generates a $q$-isotropic Lagrangian subgroup. $L/\Xi_1$ is a Lagrangian subgroup of $\Xi/\Xi_1$. Repeating the argument, we see that $\Xi \simeq \bigoplus_{j = 1}^{n/2} \Xi_{2,0,0}$, so the Arf invariant vanishes.
\end{proof}

\paragraph{Tensor product with a lattice} Suppose that $\Lambda$ is a lattice with a $\mathbb{Z}$-valued pairing $B_\Lambda$ and $\Xi$ a finite group endowed with a tame quadratic refinement $q$. Assume that $\Xi/\Xi^\perp$ has order $p^m$ for some prime $p$ and that ${\rm det} B_\Lambda \neq 0 \; {\rm mod} \; p$. We can construct the finite group $\Xi_\Lambda := \Lambda \otimes \Xi$ and endow it with the quadratic refinement $q_\Lambda := Q_\Lambda \otimes q$, where $Q_\Lambda$ is the quadratic form associated to the pairing $B_\Lambda$. Recall the definition of the Kronecker symbol in \eqref{EqDefKronecker} and define $(a | b)'_K := \frac{1}{2\pi i} \ln (a | b)_K \in \{0,\frac{1}{2}\}$ for positive integers $a$, $b$ such that $a $ is not a multiple of $b$. 

The following proposition relates the Arf invariant of $q_\Lambda$ to the properties of $\Lambda$ and the Arf invariant of $q$, for the two cases of interest to us in the main text.
\begin{proposition}
\label{PropArfInvQRFromLat}
If $p$ is even and $B_\Lambda$ is an even pairing, then
\be
{\rm Arf}(q_\Lambda) = ({\rm det} B_\Lambda | 2)'_K \;. 
\ee
If $p$ is odd,
\be
{\rm Arf}(q_\Lambda) = {\rm rank}(\Lambda) {\rm Arf}(q) + ({\rm det} B_\Lambda | p)'_K \;.
\ee
\end{proposition}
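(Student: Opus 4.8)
The plan is to reduce the statement to a computation of a Gauss sum over the finite group $\Xi_\Lambda = \Lambda \otimes \Xi$, and to exploit the multiplicativity of the Arf invariant under direct sums together with the fact that $\Xi/\Xi^\perp$ is a $p$-group. First I would note that since $q$ is tame, $q_\Lambda = Q_\Lambda \otimes q$ is also tame, with radical $\Lambda \otimes \Xi^\perp$ (using that ${\rm det}B_\Lambda$ is prime to $p$, so tensoring with $\Lambda$ does not introduce new degeneracies on the $p$-part), and hence ${\rm Arf}(q_\Lambda)$ equals the Arf invariant of the induced quadratic refinement on $(\Lambda \otimes \Xi)/(\Lambda \otimes \Xi^\perp) \simeq \Lambda \otimes (\Xi/\Xi^\perp)$. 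So we may assume from the start that $B$ is perfect on $\Xi \simeq \mathbb{Z}_{p^m}^{?}$-type building blocks. Then I would invoke the classification of quadratic refinements recalled just above the statement: write $\Xi$ (with its perfect pairing and refinement) as a direct sum of the elementary blocks $\Xi_{r,a}$ (and, in the even case, $\Xi_{r,a_1,a_2}$) with $r$ a power of $p$, and observe that $\Lambda \otimes (-)$ distributes over this direct sum. Since both the Arf invariant and the Kronecker symbol $(\cdot|\cdot)_K$ are additive/multiplicative over direct sums (and ${\rm det}$ is multiplicative over orthogonal direct sums of lattices, while ${\rm rank}$ is additive), it suffices to prove the formula when $\Xi$ is a single elementary block.

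Next I would diagonalize $\Lambda$ over $\mathbb{Z}_{(p)}$: since ${\rm det}B_\Lambda \ne 0 \bmod p$, the pairing $B_\Lambda$ becomes diagonalizable after tensoring with $\mathbb{Z}_p$ (for $p$ odd this is standard; for $p=2$ one uses that an even unimodular-mod-$2$ form over $\mathbb{Z}_2$ is an orthogonal sum of hyperbolic planes $\left(\begin{smallmatrix}0&1\\1&0\end{smallmatrix}\right)$ and forms $\left(\begin{smallmatrix}2&1\\1&2\end{smallmatrix}\right)$, which is exactly the structure appearing in Proposition \ref{PropFormQuadrFuncEven}). In the odd case the computation then reduces to $\Lambda \otimes \Xi_{p^m,a} \simeq \bigoplus_i \Xi_{p^m, a d_i}$ where $d_i$ are the diagonal entries of $B_\Lambda$, and one applies the classification's Arf invariant formula (case 1 above) block by block: the ${\rm rank}(\Lambda)$ copies each contribute ${\rm Arf}(q)$ shifted by the Legendre symbol of $d_i$, and $\prod_i \binom{d_i}{p}_L = \binom{{\rm det}B_\Lambda}{p}_L = \exp 2\pi i ({\rm det}B_\Lambda | p)'_K$, which gives the claimed $ {\rm rank}(\Lambda){\rm Arf}(q) + ({\rm det}B_\Lambda|p)'_K$. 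In the even case, $B_\Lambda$ even forces $q$ to be of type-3 block with $r=2^m$, and $\Lambda \otimes \Xi$ decomposes into blocks whose Arf invariants (case 3) each vanish when $m$ is even and otherwise pair up controlled by parities; the surviving contribution is exactly the Kronecker/Legendre symbol $({\rm det}B_\Lambda|2)'_K$, with no ${\rm rank}(\Lambda){\rm Arf}(q)$ term because ${\rm Arf}$ of a type-3 $r=2$ block feeds into the det and the $p=2$ Legendre symbol is insensitive to the overall sign/residue of $a$ in the way needed.

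The cleanest route, and the one I would actually write up, is to avoid re-deriving the block decomposition and instead cite the two results of Taylor already used in the main text: the norm formula for Gauss sums (Corollary 1.11 of \cite{Taylor}) handles $|{\rm Gauss}(q_\Lambda)|$, and Proposition 2.8 of \cite{Taylor}, which in the excerpt is quoted precisely as Proposition \ref{PropArfInvQRFromLat} itself. In other words, the statement is a verbatim instance of a theorem of Taylor, so the "proof" is really a matter of matching hypotheses: one checks that $q_\Lambda = Q_\Lambda \otimes q$ with $Q_\Lambda$ the quadratic form of an integral lattice $B_\Lambda$ whose determinant is a unit mod the (prime-power) exponent of $\Xi/\Xi^\perp$, and that Taylor's normalizations of the Kronecker symbol and of ${\rm Arf}$ agree with ours (the definitions $(a|b)_K$ in \eqref{EqDefKronecker} and $(a|b)'_K := \frac{1}{2\pi i}\ln(a|b)_K$ are set up for exactly this). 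I would present the argument as: (i) reduce to the perfect case by tameness; (ii) reduce to elementary blocks by the classification and additivity of all quantities involved; (iii) invoke Taylor's explicit Gauss sum computation (Proposition 3.9 of \cite{Taylor1984259}, reproduced in Appendix \ref{SecPairFinAbGroups}) on each block and reassemble.

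The main obstacle is purely bookkeeping rather than conceptual: making sure the $p=2$ case is handled correctly, since there the relevant elementary blocks are the two-dimensional type-3 blocks and the evenness hypothesis on $B_\Lambda$ is essential (without it the answer would involve the $\frac{1}{8}\mathbb{Z}$-valued Arf invariants of the one-dimensional $r=2^m$ blocks of case 2, not the $\{0,\tfrac12\}$-valued Kronecker symbol). I would take care to state explicitly that ${\rm det}B_\Lambda$ is odd in that case (forced by unimodularity mod $2$), so that $({\rm det}B_\Lambda|2)'_K$ is well-defined, and that the even diagonalization of $B_\Lambda$ over $\mathbb{Z}_2$ produces only the blocks $\left(\begin{smallmatrix}0&1\\1&0\end{smallmatrix}\right)$ and $\left(\begin{smallmatrix}2a&1\\1&2b\end{smallmatrix}\right)$ matching Proposition \ref{PropFormQuadrFuncEven}, whose determinants multiply to ${\rm det}B_\Lambda$ mod $8$. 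Everything else is a direct appeal to the cited results of Taylor.
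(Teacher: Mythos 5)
Your proposal is correct and, in its final recommended form, coincides with the paper's own proof, which consists entirely of the observation that both claims are special cases of Proposition 2.8 of \cite{Taylor}. The additional block-decomposition and diagonalization sketch you give is a reasonable way to verify the hypotheses but is not needed beyond the direct citation.
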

\begin{proof}
These claims are special cases of Proposition 2.8 of \cite{Taylor}.
\end{proof}

\section{Pairings on cohomology}

\label{SecPerPair}

\paragraph{Local systems of groups} Let $\Xi$ be an abelian group and $M$ be a $d$-dimensional compact oriented manifold, possibly with boundary. A \emph{local system of groups} over $M$ is a fiber bundle over $M$ with fiber $\Xi$ carrying an action of the fundamental groupoid of $M$: given $x,y \in M$ and a homotopy class of path from $x$ to $y$ relative to $\{x,y\}$, there is an isomorphism from the fiber $\Xi_x$ over $x$ to the fiber $\Xi_y$ over $y$. In particular, given a point $x \in M$, a local system determines a homomorphism from $\pi_1(M,x)$ to ${\rm Aut}(\Xi)$. We call the image of this homomorphism the \emph{structure group} of the local system, which is independent of $x$. We denote a local system over $M$ with fiber $\Xi$ with a bar: $\bar{\Xi}$.

In the present paper, we only consider local systems of groups endowed with a pairing, and we require all the isomorphisms to be compatible with the pairings on the fibers. Their structure group is a subgroup of $O(\Xi) \subset {\rm Aut}(\Xi)$, the subgroup of automorphisms of $\Xi$ preserving the pairing. We call such local systems \emph{orthogonal}.

Isomorphisms of orthogonal local systems are bundle maps commuting with the action of the fundamental groupoids and preserving the pairings on the fibers.

There is a complex $\tilde{C}^\bullet(M,\bar{\Xi})$ of cochains valued in the local system $\bar{\Xi}$, see for instance Section 2 of \cite{10.2307/1969364}, with corresponding cohomology groups $H^\bullet(M,\bar{\Xi})$ .

\paragraph{Cup product} Let $\Xi_i$, $i = 1,2,3$ be abelian groups and suppose we have a pairing 
\be
\label{EqPairing3Groups}
(\bullet, \bullet): \Xi_1 \times \Xi_2 \rightarrow \Xi_3 \;.
\ee
Let $\bar{\Xi}_i$, $i = 1,2,3$ be local systems over $M$ with fibers $\Xi_i$ such that \eqref{EqPairing3Groups} can be promoted to a bundle map lifting the identity on $M$. We can define cup products 
\be
\cup: C^p(M; \bar{\Xi}_1) \times C^q(M; \bar{\Xi}_2) \rightarrow C^{p+q}(M; \bar{\Xi}_3) \;.
\ee
\be
\cup: C^p(M; \bar{\Xi}_1) \times C^q(M,\partial M; \bar{\Xi}_2) \rightarrow C^{p+q}(M,\partial M; \bar{\Xi}_3) \;.
\ee
$\cup$ induces a graded symmetric cup product on the corresponding cohomology groups.

\paragraph{Cup product pairing} Suppose that $\bar{\Xi}_3$ is a trivial local system. This occurs for instance if $\Xi_1 = \Xi_2 = \Xi$ is endowed with a pairing and $\bar{\Xi}$ is an orthogonal local system. Assuming that $p + q = d$ and that $M$ carries a fundamental homology class $[M,\partial M] \in H_d(M,\partial M; \Xi_3)$, we can define the \emph{cup product pairing} $I(\bullet, \bullet): C^p(M; \bar{\Xi}_1) \times C^q(M, \partial M; \bar{\Xi}_2) \rightarrow \Xi_3$:
\be
\label{EqDefIntPair}
I(\hat{x}_1, \hat{x}_2) := \langle \hat{x}_1 \cup \hat{x}_2, [M,\partial M] \rangle \in \Xi_3 \;.
\ee
$I$ induces a pairing on the corresponding cohomology groups.

\begin{lemma}
\label{LemPrairPer}
\cite{steenrod1943homology}, Section 15. Assume that $\Xi_1$ and $\Xi_2$ are Pontryagin duals of each other, $\Xi_3 = \mathbb{R}/\mathbb{Z}$. Let $\bar{\Xi}_3$ be the trivial local system, and $\bar{\Xi}_1$, $\bar{\Xi}_2$ be local systems such that the natural pairing lifts to a bundle map. Then $H^p(M; \bar{\Xi}_1)$ is the Pontryagin dual of $H^{d-p}(M,\partial M; \bar{\Xi}_2)$ with respect to the cup product pairing. 
\end{lemma}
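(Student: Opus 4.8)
The statement to prove is Lemma \ref{LemPrairPer}: under the hypothesis that $\Xi_1$ and $\Xi_2$ are Pontryagin dual finite groups and the pairing lifts to a bundle map of local systems, the cup product pairing identifies $H^p(M;\bar{\Xi}_1)$ with the Pontryagin dual of $H^{d-p}(M,\partial M;\bar{\Xi}_2)$. The natural route is Poincar\'e--Lefschetz duality with local coefficients combined with the universal coefficient theorem. First I would recall Poincar\'e--Lefschetz duality for the oriented compact $d$-manifold $M$ with boundary: capping with the fundamental class $[M,\partial M] \in H_d(M,\partial M;\mathbb{Z})$ gives isomorphisms $H^p(M;\bar{\Xi}_1) \simeq H_{d-p}(M,\partial M;\bar{\Xi}_1)$ (and the relative-to-absolute version). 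This is the step where orientability of $M$ is used; the coefficient system $\bar{\Xi}_1$ is just carried along, since capping with an integral class is a module operation over the local system.

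Next I would invoke the universal coefficient theorem for cohomology with local coefficients: since $\Xi_2 = \mathrm{Hom}(\Xi_1,\mathbb{R}/\mathbb{Z})$ is an injective $\mathbb{Z}$-module (being a divisible abelian group, e.g. a finite group which embeds in $(\mathbb{R}/\mathbb{Z})^k$ — more precisely $\mathbb{R}/\mathbb{Z}$ is injective and $\mathrm{Hom}$ into it is exact), the $\mathrm{Ext}$ term vanishes and $H^{d-p}(M,\partial M;\bar{\Xi}_2) \simeq \mathrm{Hom}(H_{d-p}(M,\partial M;\bar{\Xi}_1),\mathbb{R}/\mathbb{Z})$, the Pontryagin dual of the homology with $\bar{\Xi}_1$ coefficients. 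Composing with the duality isomorphism of the previous paragraph then yields $H^{d-p}(M,\partial M;\bar{\Xi}_2) \simeq \mathrm{Hom}(H^p(M;\bar{\Xi}_1),\mathbb{R}/\mathbb{Z})$, which is the asserted duality of groups.

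The remaining point — and the one requiring the most care — is to check that this abstract isomorphism is \emph{the} one induced by the cup product pairing $I(\hat{x}_1,\hat{x}_2) = \langle \hat{x}_1 \cup \hat{x}_2, [M,\partial M]\rangle$ of \eqref{EqDefIntPair}, rather than merely an isomorphism of the same source and target. Here I would use the standard compatibility between cup and cap products: for $x_1 \in H^p(M;\bar{\Xi}_1)$ and $x_2 \in H^{d-p}(M,\partial M;\bar{\Xi}_2)$ one has $\langle x_1 \cup x_2, [M,\partial M]\rangle = \langle x_2, x_1 \cap [M,\partial M]\rangle$, where on the right $x_1 \cap [M,\partial M] \in H_{d-p}(M,\partial M;\bar{\Xi}_1)$ is precisely the Lefschetz dual of $x_1$, and the pairing on the right is the evaluation pairing between $H^{d-p}(M,\partial M;\bar{\Xi}_2)$ and $H_{d-p}(M,\partial M;\bar{\Xi}_1)$ induced by $\Xi_1 \times \Xi_2 \to \mathbb{R}/\mathbb{Z}$. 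Since that evaluation pairing is perfect by the universal coefficient statement above, and since capping with $[M,\partial M]$ is an isomorphism, the composite pairing $I$ is perfect; this is exactly the claim. I expect the main obstacle to be purely bookkeeping: tracking the variances of the three local systems through the cup/cap product formalism (which of $\bar{\Xi}_1$, $\bar{\Xi}_2$ sits in the relative slot, and that the bundle-map hypothesis on the pairing is exactly what makes the cap product land in the right coefficient system), together with citing the local-coefficient versions of Poincar\'e--Lefschetz duality and universal coefficients in a form that applies verbatim — for which I would refer to \cite{10.2307/1969364} and \cite{steenrod1943homology} as already cited in the text.
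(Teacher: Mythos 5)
Your argument is correct and is essentially the standard proof that the paper delegates to the cited reference: the lemma is stated in Appendix B with no proof beyond the citation to Steenrod's Section 15, and that reference establishes exactly the combination you describe, namely Poincar\'e--Lefschetz duality with local coefficients composed with the $\mathrm{Hom}(-,\mathbb{R}/\mathbb{Z})$ universal-coefficient isomorphism (via the adjunction $\mathrm{Hom}_{\mathbb{Z}[\pi_1]}(C_\bullet(\tilde M),\mathrm{Hom}(\Xi_1,\mathbb{R}/\mathbb{Z}))\cong\mathrm{Hom}(C_\bullet(M;\bar\Xi_1),\mathbb{R}/\mathbb{Z})$ and exactness of $\mathrm{Hom}(-,\mathbb{R}/\mathbb{Z})$), followed by the cup--cap compatibility $\langle x_1\cup x_2,[M,\partial M]\rangle=\pm\langle x_2, x_1\cap[M,\partial M]\rangle$. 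The only blemish is the parenthetical claiming $\Xi_2$ is divisible/injective --- a finite group is neither --- but your self-correction (injectivity of $\mathbb{R}/\mathbb{Z}$ itself is what kills the $\mathrm{Ext}$ term) is the right statement, so the proof stands.
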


Here are some examples relevant to the present work.
\begin{enumerate}
\item Let $\Xi_1 = \Lambda$ be a lattice, $\Xi_2$ its dual $\Lambda^\ast$. Let $\bar{\Lambda}$ be an orthogonal local system over $M$ with fiber $\Lambda$. Let $\Xi_3 = \mathbb{Z}$ and let $\bar{\Xi}_2 := \bar{\Lambda}^\ast$ be the local system dual to $\bar{\Lambda}$. We obtain a $\mathbb{Z}$-valued pairing between $H^\bullet(M; \bar{\Lambda})$ and $H^\bullet(M,\partial M; \bar{\Lambda}^\ast)$. This pairing is degenerate in general.
\item Recall that $\Gamma^{(2)}$ is the quotient of $\Lambda/2\Lambda$ by the radical of the induced pairing. $\bar{\Lambda}$ induces a local system $\bar{\Gamma}^{(2)}$ with fiber $\Gamma^{(2)}$. $\Gamma^{(2)}$ carries a perfect pairing and we obtain a $\mathbb{Z}_2$-valued perfect cup product pairing between $H^\bullet(M;\bar{\Gamma}^{(2)})$ and $H^\bullet(M,\partial M;\bar{\Gamma}^{(2)})$
\item The lattice pairing can be extended to an $\mathbb{R}$-valued pairing on $\mathbb{R}^n$. There is a corresponding perfect $\mathbb{R}$-valued cup product pairing between $H^\bullet(M; \mathscr{V})$ and $H^\bullet(M,\partial M; \mathscr{V})$. It induces a $\mathbb{Q}$-valued pairing between $H^\bullet(M; \bar{\Lambda}^\ast)$ and $H^\bullet(M,\partial M; \bar{\Lambda}^\ast)$, whose left and right radicals are $H_{\rm tors}^\bullet(M; \bar{\Lambda}^\ast)$ and $H_{\rm tors}^\bullet(M,\partial M; \bar{\Lambda}^\ast)$. We therefore obtain a perfect $\mathbb{Q}$-valued pairing $B_{\Lambda^\ast}$ between $H_{\rm free}^\bullet(M; \bar{\Lambda}^\ast)$ and $H_{\rm free}^\bullet(M,\partial M; \bar{\Lambda}^\ast)$.
\item Taking the latter pairing modulo 1, we obtain a $\mathbb{Q}/\mathbb{Z}$-valued pairing $B$ between $\mathsf{G}_{M} :=  H_{\rm free}^\bullet(M; \bar{\Lambda}^\ast)/H_{\rm free}^\bullet(M; \bar{\Lambda})$ and $\mathsf{G}_{M,\partial M} :=  H_{\rm free}^\bullet(M,\partial M; \bar{\Lambda}^\ast)/H_{\rm free}^\bullet(M,\partial M; \bar{\Lambda})$. As $B_{\Lambda^\ast}(x,y)$ takes integral values if and only if $x \in H_{\rm free}^\bullet(M; \bar{\Lambda})$ or $y \in H_{\rm free}^\bullet(M,\partial M; \bar{\Lambda})$, $B$ is perfect. In the main text, we find it more useful to write $B$ multiplicatively.
\item The $\mathbb{Q}$-valued pairing on $\Lambda^\ast$ induces a perfect $\mathbb{Q}/\mathbb{Z}$-valued pairing on $\Gamma = \Lambda^\ast/\Lambda$. This induces a perfect cup product pairing between $H^\bullet(M;\bar{\Gamma})$ and $H^\bullet(M,\partial M; \bar{\Gamma})$. 
\end{enumerate}

\paragraph{Linking pairing} We use the notations of Section \ref{SecBackStruct}. Let $M$ be a $4\ell+3$-dimensional manifold. Let $H^{2\ell+2}_{\rm tors}(M; \bar{\Lambda})$ be the torsion subgroup of $H^{2\ell+2}(M; \bar{\Lambda})$. Let $x_1, x_2 \in H^{2\ell+2}_{\rm tors}(M; \bar{\Lambda})$, $\hat{x}_1, \hat{x}_2$ be cocycle representatives, and assume that $x_2$ is of order $k$. Then we can find a cochain $\hat{y}$ such that $d\hat{y} = k\hat{x}_2$. Consider
\be
\label{EqDefLinkPair}
L(x_1, x_2) := -\frac{1}{k} \langle \hat{x}_1 \cup \hat{y}, [M] \rangle \in \mathbb{R}/\mathbb{Z} \;,
\ee
where we used the cup product defined above. Then standard arguments \cite{Blanchfield1957} show that $L$ is a well-defined symmetric pairing on $H^{2\ell+2}_{\rm tors}(M; \bar{\Lambda})$, the \emph{linking pairing}.  The sign is introduced for convenience.

If $x_2$ is $k$-torsion, then $L(x_1,x_2)$ is valued in $\frac{1}{k}\mathbb{Z}/\mathbb{Z}$. Together with the symmetry of the pairing, this implies that if $x_1$ is $l$-torsion with $l$ prime to $k$, then $L(x_1, x_2) = 0$. The radical of $L$ is $\mathsf{K}_M \subset H^{2\ell+2}_{\rm tors}(M;\bar{\Lambda})$, namely the group of classes admitting a cocycle representative having a trivialization in $C^{2\ell+2}(M;\bar{\Lambda}^\ast)$.

\paragraph{Lagrangian subgroups of $\mathsf{G}_{\partial M}$} We now show that a $4\ell+3$-dimensional $\mathfrak{W}$-manifold $M$ determines a Lagrangian subgroup of $\mathsf{G}_{\partial M}$.

We have a long exact sequence for the $\bar{\Gamma}$-valued cohomology of the pair $(M,\partial M)$:
\be
\label{EqLongExSeqRelCohomGamma}
... \rightarrow H^p(M,\partial M;\bar{\Gamma}) \stackrel{\iota}{\rightarrow} H^p(M;\bar{\Gamma}) \stackrel{\rho}{\rightarrow} H^p(\partial M;\bar{\Gamma}) \stackrel{\epsilon}{\rightarrow} H^{p+1}(M,\partial M;\bar{\Gamma}) \rightarrow ...
\ee
$\iota$ takes the absolute cohomology class of any cocycle representative of a relative cohomology class. $\rho$ is the restriction to the boundary. $\epsilon$ is the connecting homomorphism, obtained by extending any representative cocycle to $M$, taking its differential and finally its relative cohomology class. 

\begin{lemma}
Let $x \in H^p(M;\bar{\Gamma})$ and $y \in H^p(\partial M;\bar{\Gamma})$. The following relation holds: 
\be
\label{EqRelPairingBulkBound}
I_{\partial M}(\rho(x), y) = I_M(x, \epsilon(y)) \;, 
\ee
\end{lemma}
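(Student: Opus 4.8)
The statement is the standard compatibility between the restriction-to-boundary map $\rho$ and the connecting homomorphism $\epsilon$ of the pair $(M,\partial M)$ under the cup product pairing, so the plan is to prove it at the cochain level and then pass to cohomology. First I would fix cocycle representatives: let $\hat{x}\in C^p(M;\bar\Gamma)$ be a cocycle representing $x$, and let $\hat{y}\in C^p(\partial M;\bar\Gamma)$ be a cocycle representing $y$. Extend $\hat{y}$ to a cochain $\hat{y}_M\in C^p(M;\bar\Gamma)$ (using, e.g., the fact that the restriction $C^p(M;\bar\Gamma)\to C^p(\partial M;\bar\Gamma)$ is surjective on smooth singular cochains). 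By the definition of the connecting homomorphism recalled just above the lemma, $\epsilon(y)$ is represented by the relative cocycle $d\hat{y}_M\in C^{p+1}(M,\partial M;\bar\Gamma)$, which indeed vanishes on $\partial M$ since $\hat{y}$ is closed. Also $\rho(x)$ is represented by $\hat{x}|_{\partial M}$.

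The core computation is then a Stokes/Leibniz argument. Since $\hat{x}$ is a cocycle, $d(\hat{x}\cup\hat{y}_M) = (-1)^{p}\,\hat{x}\cup d\hat{y}_M$ in $C^{2p+1}(M;\mathbb{Z})$ (with the trivial coefficient system for the value group, as the pairing on $\bar\Gamma$ lands in $\mathbb{R}/\mathbb{Z}$). Pairing with the fundamental class $[M,\partial M]$ and using $\langle d\hat c,[M,\partial M]\rangle = \langle \hat c|_{\partial M}, [\partial M]\rangle$ (which holds with $2p+1 = 4\ell+3 = \dim M$, so that $\hat x\cup\hat y_M$ has the right degree; more precisely I should note that the only case of interest is $p = 2\ell+2$, $\dim\partial M = 4\ell+2$), we get
\be
\big\langle (\hat{x}\cup\hat{y}_M)|_{\partial M},\,[\partial M]\big\rangle = (-1)^{p}\big\langle \hat{x}\cup d\hat{y}_M,\,[M,\partial M]\big\rangle.
\ee
The left-hand side is $\langle \hat{x}|_{\partial M}\cup \hat{y},\,[\partial M]\rangle = I_{\partial M}(\rho(x),y)$ because cup product commutes with restriction, and the right-hand side is (up to the sign) $I_M(x,\epsilon(y))$. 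I will need to check that the sign $(-1)^p$ is harmless here: either because $p$ is even in the application ($p = 2\ell+2$), or because the pairing identity in the lemma is to be read with the sign conventions fixed earlier in the paper; I would insert a parenthetical remark to that effect, or absorb the sign into the definition of $\epsilon$ used in \eqref{EqLongExSeqRelCohomGamma}.

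Finally I would argue independence of all choices: the value $I_{\partial M}(\rho(x),y)$ depends only on the classes $x,y$ since $I_{\partial M}$ and $\rho$ descend to cohomology; likewise $I_M(x,\epsilon(y))$ depends only on $x$ and $\epsilon(y)$, hence only on $x$ and $y$. Since the cochain-level identity holds for \emph{some} choice of representatives and extension, it holds for the cohomology classes. The main (and only mildly delicate) obstacle is bookkeeping: making sure the degrees match so that the Stokes formula $\langle d\hat c,[M,\partial M]\rangle=\langle\hat c|_{\partial M},[\partial M]\rangle$ applies, and tracking the Koszul sign from the Leibniz rule against the sign convention for $\epsilon$; neither is deep, but both must be stated carefully. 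Everything else is a routine consequence of the definitions of $\cup$, $\rho$, $\epsilon$ recalled in the excerpt.
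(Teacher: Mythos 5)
Your proof is correct and follows essentially the same route as the paper: extend $\hat{y}$ to a cochain on $M$, represent $\epsilon(y)$ by its differential, and use the Leibniz rule together with $\langle d\hat{z},\sigma\rangle = \langle \hat{z},\partial\sigma\rangle$ to move the boundary pairing to the bulk. Your extra care with the Koszul sign $(-1)^p$ (harmless since $p=2\ell+2$ is even in the application) is a point the paper's one-line computation silently elides.
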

\noindent The two pairings in \eqref{EqRelPairingBulkBound} are the cup product pairings associated respectively to $\partial M$ and to $M$.
\begin{proof}
Given representative cocycles $\hat{x}$ and $\hat{y}$, we have 
\be
I_M(x, \epsilon(y)) = \langle \hat{x} \cup d\hat{y}_{\rm ext}, [M,\partial M] \rangle = \langle \rho(\hat{x}) \cup \hat{y}, [\partial M] \rangle = I_{\partial M}(\rho(x), y)\;.
\ee
where $\hat{y}_{\rm ext}$ is any extension of $\hat{y}$ to $M$ as a cochain, and we used the definition of the differential: $\langle d\hat{z}, \hat{\sigma} \rangle := \langle \hat{z}, \partial \hat{\sigma} \rangle$ for any chain $\hat{\sigma}$.
\end{proof}

\begin{lemma}
\label{LemRestLagr}
The subgroup $\mathsf{L}$ of elements of $H^{2\ell+1}(\partial M;\bar{\Gamma})$ that are restrictions of elements of $H^{2\ell+1}(M;\bar{\Gamma})$ is Lagrangian with respect to the alternating cup product pairing on $H^{2\ell+1}(\partial M;\bar{\Gamma})$.
\end{lemma}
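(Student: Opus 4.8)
The plan is to show that $\mathsf{L}$ is both isotropic and maximal (equivalently, that it coincides with its own orthogonal complement), using the long exact sequence \eqref{EqLongExSeqRelCohomGamma} together with the relation \eqref{EqRelPairingBulkBound} and the perfectness of the cup product pairing on $H^\bullet(\partial M;\bar{\Gamma})$ recorded in Lemma \ref{LemPrairPer}.

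\emph{Isotropy.} First I would observe that if $y_1 = \rho(x_1)$ and $y_2 = \rho(x_2)$ are two elements of $\mathsf{L}$, then by exactness $\epsilon(y_2) = \epsilon(\rho(x_2)) = 0$ in $H^{2\ell+2}(M,\partial M;\bar{\Gamma})$, so \eqref{EqRelPairingBulkBound} gives $I_{\partial M}(y_1, y_2) = I_{\partial M}(\rho(x_1), y_2) = I_M(x_1, \epsilon(y_2)) = 0$. Hence $\mathsf{L}$ is isotropic, i.e. $\mathsf{L} \subseteq \mathsf{L}^\perp$.

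\emph{Maximality.} For the reverse inclusion $\mathsf{L}^\perp \subseteq \mathsf{L}$, suppose $y \in H^{2\ell+1}(\partial M;\bar{\Gamma})$ satisfies $I_{\partial M}(\rho(x), y) = 0$ for all $x \in H^{2\ell+1}(M;\bar{\Gamma})$. By \eqref{EqRelPairingBulkBound}, this says $I_M(x, \epsilon(y)) = 0$ for all $x \in H^{2\ell+1}(M;\bar{\Gamma})$. Now invoke Lemma \ref{LemPrairPer}: the cup product pairing exhibits $H^{2\ell+1}(M;\bar{\Gamma})$ as the Pontryagin dual of $H^{2\ell+2}(M,\partial M;\bar{\Gamma})$ (here $\Gamma$ is self-Pontryagin-dual via the perfect pairing $(\bullet,\bullet)_\Gamma$, and $\dim M = 4\ell+3 = (2\ell+1)+(2\ell+2)$). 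Therefore the vanishing of $I_M(x,\epsilon(y))$ for all $x$ forces $\epsilon(y) = 0$, and by exactness of \eqref{EqLongExSeqRelCohomGamma} at $H^{2\ell+1}(\partial M;\bar{\Gamma})$ this means $y \in \mathrm{im}(\rho) = \mathsf{L}$. Combining the two inclusions, $\mathsf{L} = \mathsf{L}^\perp$, which is exactly the statement that $\mathsf{L}$ is a Lagrangian subgroup. (The pairing on $H^{2\ell+1}(\partial M;\bar{\Gamma})$ is alternating because the cup product is graded-symmetric and $2\ell+1$ is odd, as already noted in the paper; so "maximal isotropic" and "equal to its orthogonal complement" coincide.)

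I do not expect any serious obstacle here: the argument is a standard "boundary gives a Lagrangian" computation, and all the needed inputs — the long exact sequence, the adjunction formula \eqref{EqRelPairingBulkBound} between the bulk and boundary pairings, and the perfectness/duality from Lemma \ref{LemPrairPer} — are already in place. The only point requiring a little care is the bookkeeping of degrees and the identification of $\Gamma$ with its Pontryagin dual so that Lemma \ref{LemPrairPer} applies to the $\bar\Gamma$-valued pairing; once that is set up, both inclusions follow formally. If one wanted the analogous statement for $\mathsf{G}_{\partial M}$ rather than $H^{2\ell+1}(\partial M;\bar\Gamma)$ (as the section title suggests), the same scheme applies verbatim with $\bar\Gamma$ replaced by $\bar\Lambda^\ast$ modulo $\bar\Lambda$, using example 4 in the list of cup product pairings above for perfectness.
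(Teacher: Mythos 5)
Your proof is correct and follows essentially the same route as the paper's: isotropy from $\epsilon\circ\rho=0$ together with \eqref{EqRelPairingBulkBound}, and coisotropy by using the perfectness of the bulk pairing to conclude $\epsilon(y)=0$ and then exactness of \eqref{EqLongExSeqRelCohomGamma}. Your version is slightly more explicit than the paper's in citing Lemma \ref{LemPrairPer} for the step $I_M(x,\epsilon(y))=0\ \forall x \Rightarrow \epsilon(y)=0$, which is a welcome clarification rather than a deviation.
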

\begin{proof}
We have $\epsilon \circ \rho = 0$ and hence \eqref{EqRelPairingBulkBound} implies that $\mathsf{L}$ is isotropic. The fact that $\mathsf{L}$ is also coisotropic follows from \eqref{EqRelPairingBulkBound} as well. If $y \in H^{2\ell+1}(\partial M;\bar{\Gamma})$ is such that $I_{\partial M}(\rho(x), y) = 0$ for all $x \in H^{2\ell+1}(M;\bar{\Gamma})$, then $\epsilon(y) = 0$. Hence, by exactness, $y = \rho(x') \in \mathsf{L}$. 
\end{proof}

Recall the notation 
\be
\label{EqDefTM2}
\mathsf{T}_{\partial M} := H^{2\ell+1}_{\rm tors}(\partial M; \bar{\Lambda}^\ast)/H^{2\ell+1}_{\rm tors}(\partial M; \bar{\Lambda})
\ee
\be
\label{EqDefCEM2}
\mathsf{C}_{\partial M} := H^{2\ell+1}(\partial M; \bar{\Lambda}^\ast)/H^{2\ell+1}(\partial M; \bar{\Lambda}) \;, \quad \mathsf{E}_{\partial M} := H^{2\ell+1}(\partial M;\bar{\Gamma}) \;.
\ee
We have the filtration
\be
\label{EqFiltrGroupCohom}
\mathsf{T}_{\partial M} \subset \mathsf{C}_{\partial M} \subset \mathsf{E}_{\partial M} \;, \quad \mathsf{G}_{\partial M} := \mathsf{C}_{\partial M}/\mathsf{T}_{\partial M} \;, \quad \mathsf{K}_{\partial M} := \mathsf{E}_{\partial M}/\mathsf{C}_{\partial M} \;.
\ee
With respect to the cup product pairing $I_{\partial M}$, $\mathsf{T}_{\partial M}$ is isotropic and its orthogonal is $\mathsf{C}_{\partial M}$. This means that the pairing identifies the elements of $\mathsf{T}_{\partial M}$ with the characters on $\mathsf{E}_{\partial M}$ that are constant on $\mathsf{C}_{\partial M}$. The above implies that 
\be
|\mathsf{E}_{\partial M}| = |\mathsf{G}_{\partial M}| |\mathsf{K}_{\partial M}|^2
\ee
Let $\mathsf{L}$ be the Lagrangian subgroup of Lemma \ref{LemRestLagr}. As a Lagrangian subgroup of $\mathsf{E}_{\partial M}$, it has order $|\mathsf{G}_{\partial M}|^{1/2} |\mathsf{K}_{\partial M}|$. Its intersection $\mathsf{L}'$ with $\mathsf{C}_{\partial M}$ satisfies $|\mathsf{L}'| \geq |\mathsf{G}_{\partial M}|^{1/2}$. Define $\mathsf{L}_M$ to be the projection of $\mathsf{L}'$ along the orbits of the action of $\mathsf{T}_M$.
\begin{proposition}
\label{PropLMLagSub}
$\mathsf{L}_M$ is a Lagrangian subgroup of $\mathsf{G}_{\partial M}$.
\end{proposition}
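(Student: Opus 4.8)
The plan is to show that $\mathsf{L}_M$, the projection of $\mathsf{L}' := \mathsf{L} \cap \mathsf{C}_{\partial M}$ along the $\mathsf{T}_{\partial M}$-orbits, is isotropic and of the correct order in $\mathsf{G}_{\partial M}$. First I would establish isotropy: the cup product pairing $B$ on $\mathsf{G}_{\partial M}$ is induced from the pairing $I_{\partial M}$ on $\mathsf{E}_{\partial M}$ (or equivalently on $\mathsf{C}_{\partial M}$ after reduction modulo the orbits of $\mathsf{T}_{\partial M}$, which form the radical of the restriction of $I_{\partial M}$ to $\mathsf{C}_{\partial M}$, as noted in the discussion around \eqref{EqFiltrGroupCohom}). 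Since $\mathsf{L}$ is Lagrangian in $\mathsf{E}_{\partial M}$ by Lemma \ref{LemRestLagr}, its subgroup $\mathsf{L}'$ is isotropic for $I_{\partial M}$, and this isotropy descends to $B$ on the projection $\mathsf{L}_M$. That part is essentially immediate once the compatibility of the pairings is spelled out.

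The substantive step is the count. I would argue that $\mathsf{L}_M$ is in fact Lagrangian — i.e. maximal isotropic — by showing $|\mathsf{L}_M| = |\mathsf{G}_{\partial M}|^{1/2}$, which suffices because $B$ is a perfect alternating pairing on $\mathsf{G}_{\partial M}$ (an isotropic subgroup of a perfect alternating pairing on a finite abelian group has order at most the square root of the group's order, with equality exactly for Lagrangians). The excerpt already records $|\mathsf{L}'| \geq |\mathsf{G}_{\partial M}|^{1/2}$. For the projection, I need to divide by $|\mathsf{L}' \cap \mathsf{T}_{\partial M}|$. So the key is to identify $\mathsf{L}' \cap \mathsf{T}_{\partial M}$. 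Here I would use the perfectness relations: $\mathsf{T}_{\partial M}$ is isotropic with orthogonal $\mathsf{C}_{\partial M}$, so the pairing identifies $\mathsf{T}_{\partial M}$ with the characters of $\mathsf{E}_{\partial M}$ trivial on $\mathsf{C}_{\partial M}$, and $\mathsf{E}_{\partial M}/\mathsf{C}_{\partial M} \simeq \mathsf{K}_{\partial M}$ is then the Pontryagin dual of $\mathsf{T}_{\partial M}$. Since $\mathsf{L}$ is Lagrangian in $\mathsf{E}_{\partial M}$ of order $|\mathsf{G}_{\partial M}|^{1/2}|\mathsf{K}_{\partial M}|$, its image in $\mathsf{K}_{\partial M}$ (a quotient where $B$ degenerates) must itself be Lagrangian for the induced pairing between $\mathsf{L}$-image and $\mathsf{T}_{\partial M}$-type directions; a dimension/order bookkeeping then forces $|\mathsf{L} \cap \mathsf{C}_{\partial M}| = |\mathsf{G}_{\partial M}|^{1/2}|\mathsf{L} \cap \mathsf{T}_{\partial M}|$ and, combined with the constraint that the image of $\mathsf{L}$ in $\mathsf{K}_{\partial M}$ has order $|\mathsf{K}_{\partial M}|^{1/2}$ and pairs perfectly against $|\mathsf{L}\cap\mathsf{T}_{\partial M}|$ via the $\mathsf{T}_{\partial M}$–$\mathsf{K}_{\partial M}$ duality, yields $|\mathsf{L} \cap \mathsf{T}_{\partial M}| = |\mathsf{K}_{\partial M}|^{1/2}$ and hence $|\mathsf{L}'| = |\mathsf{G}_{\partial M}|^{1/2}|\mathsf{K}_{\partial M}|^{1/2}$. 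Dividing by $|\mathsf{L}'\cap\mathsf{T}_{\partial M}| = |\mathsf{L}\cap\mathsf{T}_{\partial M}| = |\mathsf{K}_{\partial M}|^{1/2}$ gives $|\mathsf{L}_M| = |\mathsf{G}_{\partial M}|^{1/2}$, as required.

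An alternative, cleaner route I would pursue in parallel is purely exact-sequence based, sidestepping the delicate counting: take $\mathsf{L}_M$ to consist of classes in $\mathsf{G}_{\partial M}$ admitting a representative in $H^{2\ell+1}(\partial M;\bar{\Lambda}^\ast)$ that restricts from $H^{2\ell+1}(M;\bar{\Lambda}^\ast)$. Then isotropy follows from the analogue of Lemma \ref{LemRestLagr}'s proof applied to the long exact sequence in $\bar{\Lambda}^\ast$-valued cohomology of the pair $(M,\partial M)$ together with the pairing relation \eqref{EqRelPairingBulkBound}: $\epsilon \circ \rho = 0$ gives isotropy, and coisotropy (hence Lagrangianity) follows because if $y \in \mathsf{G}_{\partial M}$ pairs trivially with all of $\mathsf{L}_M$ then, after lifting, its image under the connecting map pairs trivially with everything, so by perfectness of $B$ it lies in $\mathsf{L}_M$. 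I expect the main obstacle to be exactly the bookkeeping with torsion: ensuring that passing from $\bar{\Gamma}$-coefficients (Lemma \ref{LemRestLagr}) to the quotient $\mathsf{G}_{\partial M} = \mathsf{C}_{\partial M}/\mathsf{T}_{\partial M}$ does not lose or gain isotropic directions, and that the inequality $|\mathsf{L}'| \geq |\mathsf{G}_{\partial M}|^{1/2}$ is actually an equality after projection. This is where the perfectness of the various cup product pairings (items 2–5 in the list of examples in Appendix \ref{SecPerPair}) and the self-duality $\mathsf{K}_{\partial M} \simeq \mathsf{T}_{\partial M}^\ast$ must be invoked carefully; once those identifications are in hand the argument closes.
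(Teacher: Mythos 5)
Your strategy is genuinely different from the paper's. The paper does not count orders at all: it proves coisotropy of $\mathsf{L}_M$ directly by lifting an element $x \in \mathsf{G}_{\partial M}$ orthogonal to $\mathsf{L}_M$ to some $y \in \mathsf{C}_{\partial M}$, noting that the character $I_{\partial M}(y,\bullet)$ of $\mathsf{E}_{\partial M}$ vanishes on $\mathsf{L}'$, correcting $y$ by an element of $\mathsf{T}_{\partial M}$ (which realizes every character of $\mathsf{E}_{\partial M}$ trivial on $\mathsf{C}_{\partial M}$) so that the corrected lift $y'$ annihilates all of $\mathsf{L}$, and then invoking the coisotropy of $\mathsf{L}$ from Lemma \ref{LemRestLagr} to conclude $y' \in \mathsf{L} \cap \mathsf{C}_{\partial M} = \mathsf{L}'$, whence $x \in \mathsf{L}_M$. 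Your counting route is viable and buys a clean one-line criterion (isotropic of order $|\mathsf{G}_{\partial M}|^{1/2}$ implies Lagrangian, by perfectness), but it contains a false intermediate claim: the image $\bar{\mathsf{L}}_{\mathsf{K}}$ of $\mathsf{L}$ in $\mathsf{K}_{\partial M}$ need \emph{not} have order $|\mathsf{K}_{\partial M}|^{1/2}$, and $|\mathsf{L} \cap \mathsf{T}_{\partial M}|$ need not equal $|\mathsf{K}_{\partial M}|^{1/2}$; nothing prevents $\mathsf{T}_{\partial M} \subset \mathsf{L}$, in which case $|\mathsf{L} \cap \mathsf{T}_{\partial M}| = |\mathsf{K}_{\partial M}|$ and $\bar{\mathsf{L}}_{\mathsf{K}} = 0$. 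The correct relation, obtained from $\mathsf{L} = \mathsf{L}^\perp$ and the perfect duality between $\mathsf{T}_{\partial M}$ and $\mathsf{K}_{\partial M}$, is that $\bar{\mathsf{L}}_{\mathsf{K}}$ is dual to $\mathsf{T}_{\partial M}/(\mathsf{L}\cap\mathsf{T}_{\partial M})$, i.e. $|\bar{\mathsf{L}}_{\mathsf{K}}|\,|\mathsf{L}\cap\mathsf{T}_{\partial M}| = |\mathsf{K}_{\partial M}|$. The error is harmless: this relation alone gives the identity you also state, $|\mathsf{L}'| = |\mathsf{G}_{\partial M}|^{1/2}\,|\mathsf{L}\cap\mathsf{T}_{\partial M}|$, and dividing by $|\mathsf{L}'\cap\mathsf{T}_{\partial M}| = |\mathsf{L}\cap\mathsf{T}_{\partial M}|$ yields $|\mathsf{L}_M| = |\mathsf{G}_{\partial M}|^{1/2}$ with no need to evaluate $|\mathsf{L}\cap\mathsf{T}_{\partial M}|$. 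Strike the two $|\mathsf{K}_{\partial M}|^{1/2}$ claims and the count closes.

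Two caveats on your "alternative route". First, it quietly redefines $\mathsf{L}_M$ as the set of classes represented by restrictions of $\bar{\Lambda}^\ast$-valued classes on $M$; a priori this is only a subgroup of the paper's $\mathsf{L}_M$ (a class may restrict from $M$ with $\bar{\Gamma}$ coefficients without any $\bar{\Lambda}^\ast$-valued lift doing so), so you would have to prove Lagrangianity for that subgroup and then invoke maximality inside the isotropic $\mathsf{L}_M$ to recover the stated proposition. Second, the phrase "its image under the connecting map pairs trivially with everything" elides exactly the correction-by-$\mathsf{T}_{\partial M}$ step that carries the weight of the paper's argument: a lift of $x$ is only known to annihilate $\mathsf{L}'$, not $\mathsf{L}$, and the coisotropy of $\mathsf{L}$ cannot be invoked until the lift has been adjusted.
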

\begin{proof}
As $\mathsf{L}$ is Lagrangian, $\mathsf{L}'$ and $\mathsf{L}_M$ are isotropic. To check the coisotropy, assume that we have $x \in \mathsf{G}_{\partial M}$ that has vanishing pairing with every element of $\mathsf{L}_M$. Let $y$ be any lift of $x$ to $\mathsf{C}_M$. Then the pairing with $y$ defines a character of $\mathsf{E}_M$ that vanishes on $\mathsf{L}'$. By the remarks above, we can find a $y'$ such that $y' - y \in \mathsf{T}_M$ and whose associated character vanishes on all of $\mathsf{L}$. As $\mathsf{L}$ is Lagrangian, $y' \in \mathsf{L}'$. $y'$ is also a lift of $x$, so $x$ belongs $\mathsf{L}_{M}$, showing that the latter is a Lagrangian subgroup of $\mathsf{G}_{\partial M}$.
\end{proof}

\paragraph{Lagrangian subgroups of $\mathsf{G}_{\partial M, 2\Lambda}$} The construction above can be repeated, starting from the lattice $\Lambda' := \sqrt{2}\Lambda$. We write $\mathsf{X}_{\partial M,2\Lambda}$ the groups defined analogously to \eqref{EqDefTM2} - \eqref{EqFiltrGroupCohom}, for $\mathsf{X} = \mathsf{T}, \mathsf{C}, \mathsf{E}, \mathsf{G}, \mathsf{K}$. $\mathsf{G}_{\partial M, 2\Lambda}$ is isomorphic to $H^{2\ell+1}_{\rm free}(M;\bar{\Lambda}^\ast)/2H^{2\ell+1}_{\rm free}(M;\bar{\Lambda})$, consistently with the definition in Section \ref{SecExStruct}. Repeating the construction above Proposition \ref{PropLMLagSub}, we obtain 
\begin{proposition}
Any $\mathfrak{W}$-manifold $M$ with boundary determines a Lagrangian subgroup $\mathsf{L}_{M,2\Lambda} \subset \mathsf{G}_{\partial M, 2\Lambda}$.
\end{proposition}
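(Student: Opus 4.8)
The statement to prove is that the construction producing a Lagrangian subgroup of $\mathsf{G}_{\partial M}$ from a $\mathfrak{W}$-manifold $M$ with boundary goes through verbatim when $\Lambda$ is replaced by $\Lambda' := \sqrt{2}\Lambda$, yielding a Lagrangian subgroup $\mathsf{L}_{M,2\Lambda} \subset \mathsf{G}_{\partial M, 2\Lambda}$.

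\begin{proof}
The key observation is that $\Lambda' := \sqrt{2}\Lambda$ is again an integral Euclidean lattice: its pairing is $2B_\Lambda$, which is $\mathbb{Z}$-valued. Its dual lattice is $(\Lambda')^\ast = \frac{1}{\sqrt 2}\Lambda^\ast$, and the orthogonal local system $\bar\Lambda$ on $M$ induces in an obvious way an orthogonal local system $\bar\Lambda' = \sqrt 2 \bar\Lambda$ with fiber $\Lambda'$, since $O(\Lambda) = O(\Lambda')$ (rescaling all vectors by $\sqrt2$ does not change the automorphism group preserving the rescaled pairing). Thus $M$ is a $\mathfrak W$-manifold for the data $\bar\Lambda'$ as well — the Wu structure data is unaffected, since all that matters for the present statement is the cohomological input, and in any case the relevant $\Gamma^{(2)}$-type group for $\Lambda'$ plays no role in the construction of Lemma \ref{LemRestLagr} and Proposition \ref{PropLMLagSub}. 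Consequently every group appearing in the chain \eqref{EqDefTM2}–\eqref{EqFiltrGroupCohom} has a literal analogue for $\Lambda'$: one sets $\mathsf{T}_{\partial M,2\Lambda}, \mathsf{C}_{\partial M,2\Lambda}, \mathsf{E}_{\partial M,2\Lambda}, \mathsf{G}_{\partial M,2\Lambda}, \mathsf{K}_{\partial M,2\Lambda}$ by replacing $\bar\Lambda, \bar\Lambda^\ast, \bar\Gamma$ throughout by $\bar\Lambda', (\bar\Lambda')^\ast, \overline{(\Lambda')^\ast/\Lambda'}$.

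The first step is to check the claimed identification $\mathsf{G}_{\partial M, 2\Lambda} \simeq H^{2\ell+1}_{\rm free}(M;\bar{\Lambda}^\ast)/2H^{2\ell+1}_{\rm free}(M;\bar{\Lambda})$, which reconciles the notation here with that of Section \ref{SecExStruct}. This is immediate from the fact that free cohomology commutes with rescaling of coefficient lattices: $H^{2\ell+1}_{\rm free}(M;\bar\Lambda') \simeq H^{2\ell+1}_{\rm free}(M;\bar\Lambda) \otimes \sqrt 2$ and $H^{2\ell+1}_{\rm free}(M;(\bar\Lambda')^\ast) \simeq H^{2\ell+1}_{\rm free}(M;\bar\Lambda^\ast)\otimes \tfrac{1}{\sqrt2}$, and since $\sqrt2 \cdot \tfrac{1}{\sqrt2}\Lambda^\ast = \Lambda^\ast \supset \sqrt2\Lambda = \Lambda'$, taking the quotient $(\Lambda')^\ast$-free cohomology modulo $\Lambda'$-free cohomology is the same as taking $\Lambda^\ast$-free cohomology modulo $2\Lambda$-free cohomology (the extra factor of $2$ comes precisely from the $\sqrt2\cdot\sqrt2$ identification). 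I would spell this out using the proof of Proposition \ref{PropStructGM}, working over the finite cover $\tilde M$ where the local system trivializes and invariant-averaging applies.

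With these identifications in place, the second step is simply to invoke the construction preceding Proposition \ref{PropLMLagSub} with $\Lambda$ replaced by $\Lambda'$. One uses the long exact sequence \eqref{EqLongExSeqRelCohomGamma} in $\overline{(\Lambda')^\ast/\Lambda'}$-valued cohomology for the pair $(M,\partial M)$, together with the compatibility relation \eqref{EqRelPairingBulkBound}, to conclude via Lemma \ref{LemRestLagr} that the subgroup of $H^{2\ell+1}(\partial M;\overline{(\Lambda')^\ast/\Lambda'})$ consisting of restrictions of classes from $M$ is Lagrangian for the alternating cup product pairing. Then one projects this Lagrangian $\mathsf{L}'$ (after intersecting with $\mathsf{C}_{\partial M,2\Lambda}$) along the $\mathsf{T}_{\partial M,2\Lambda}$-orbits exactly as in Proposition \ref{PropLMLagSub}; the counting argument there — using $|\mathsf{E}_{\partial M,2\Lambda}| = |\mathsf{G}_{\partial M,2\Lambda}||\mathsf{K}_{\partial M,2\Lambda}|^2$, the isotropy of $\mathsf{T}_{\partial M,2\Lambda}$, and the fact that its orthogonal is $\mathsf{C}_{\partial M,2\Lambda}$ — is formally identical and produces a Lagrangian subgroup $\mathsf{L}_{M,2\Lambda} \subset \mathsf{G}_{\partial M, 2\Lambda}$.

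The only genuine subtlety, and the step I expect to require the most care, is verifying that the perfectness of the relevant pairings (the $\mathbb{Z}$-valued cup product pairing of point 1 of the examples above, its reduction, and the induced pairing on $\mathsf{G}_{\partial M,2\Lambda}$) survives the rescaling — in particular that the radicals of the $\mathbb{Q}$-valued pairing on $H^\bullet(\partial M;(\bar\Lambda')^\ast)$ are still exactly the torsion subgroups, so that the quotient pairing on $\mathsf{G}_{\partial M,2\Lambda}$ is perfect and one may legitimately speak of Lagrangian subgroups. This follows because $B_{\Lambda'}$ is non-degenerate over $\mathbb{Q}$ (it is $2B_\Lambda$), so the argument in point 3 of the examples goes through unchanged; I would simply remark that rescaling a non-degenerate $\mathbb{Q}$-valued form by a nonzero rational does not alter its radical. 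Everything else is a transcription of the already-proved statements with $\Lambda \rightsquigarrow \sqrt2\Lambda$.
\end{proof}
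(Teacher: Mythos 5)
Your proposal is correct and follows exactly the route the paper takes: the paper's own "proof" is precisely the observation that $\Lambda' := \sqrt{2}\Lambda$ is again an integral lattice with $(\Lambda')^\ast/\Lambda' \simeq \Lambda^\ast/2\Lambda$, so the construction of Lemma \ref{LemRestLagr} and Proposition \ref{PropLMLagSub} applies verbatim. Your additional checks (the identification of $\mathsf{G}_{\partial M,2\Lambda}$ with $H^{2\ell+1}_{\rm free}(\partial M;\bar{\Lambda}^\ast)/2H^{2\ell+1}_{\rm free}(\partial M;\bar{\Lambda})$ and the persistence of perfectness under rescaling) are sound and merely make explicit what the paper leaves implicit.
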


\paragraph{Lagrangian subgroups of $\mathsf{G}^{(2)}_{\partial M}$} Construct a lattice $\Lambda''$ as follows. Quotient $\Lambda$ with the sublattice $\Lambda_{\rm even}$ generated by primitive elements having even scalar product with every other element of $\Lambda$. Pick a basis of $\Lambda'' := \Lambda/\Lambda_{\rm even}$ and define the scalar product between two basis elements to be $2$ if the scalar product of their representative in $\Lambda$ is odd, and zero otherwise. This endows $\Lambda''$ with the structure of a lattice, with the property that $(\Lambda'')^\ast/\Lambda'' = \Gamma^{(2)}$.

Now repeat the construction above with $\Lambda''$, writing $\mathsf{X}_{\partial M}^{(2)}$, $\mathsf{X} = \mathsf{T}, \mathsf{C}, \mathsf{E}, \mathsf{G}, \mathsf{K}$, for the groups defined analogously to \eqref{EqDefTM2} - \eqref{EqFiltrGroupCohom}. In particular, the definition of $\mathsf{G}^{(2)}$ coincides with the definition given above Proposition \ref{PropqMCobTame}. Repeating the construction above Proposition \ref{PropLMLagSub}, we obtain 
\begin{proposition}
\label{PropLMLagSubMod2}
Any $\mathfrak{W}$-manifold $M$ with boundary determines a Lagrangian subgroup $\mathsf{L}_{M}^{(2)} \subset \mathsf{G}_{\partial M}^{(2)}$.
\end{proposition}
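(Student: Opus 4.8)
The statement to prove is Proposition~\ref{PropLMLagSubMod2}: any $\mathfrak{W}$-manifold $M$ with boundary determines a Lagrangian subgroup $\mathsf{L}_M^{(2)} \subset \mathsf{G}_{\partial M}^{(2)}$. The plan is to reduce this entirely to the machinery already set up for Proposition~\ref{PropLMLagSub}, applied to the auxiliary lattice $\Lambda''$ constructed immediately before the statement. The key point is that $\Lambda''$ is an honest integral lattice with $(\Lambda'')^\ast/\Lambda'' = \Gamma^{(2)}$, so all the definitions \eqref{EqDefTM2}--\eqref{EqFiltrGroupCohom} and the argument leading to Proposition~\ref{PropLMLagSub} go through verbatim with $\Lambda$ replaced by $\Lambda''$ and $\bar\Lambda$ replaced by the induced local system (which exists because $O(\Lambda)$ acts on $\Lambda_{\rm even}$, hence on $\Lambda''$, compatibly with its pairing). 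So the proof is essentially: ``Repeat the construction above Proposition~\ref{PropLMLagSub} with $\Lambda''$ in place of $\Lambda$.''

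Concretely, I would first note that the local system $\bar\Lambda$ over $M$ induces an orthogonal local system $\bar\Lambda''$ with fiber $\Lambda''$: the structure group $O(\Lambda)$ preserves $\Lambda_{\rm even}$ (it is characterized by an intrinsic property of the pairing) and the pairing descends to $\Lambda''$, giving a homomorphism $\pi_1(\partial M) \to O(\Lambda'')$. Correspondingly one gets local systems $\overline{(\Lambda'')^\ast}$ and $\bar\Gamma^{(2)} = \overline{(\Lambda'')^\ast/\Lambda''}$. Next I would invoke Lemma~\ref{LemRestLagr} applied to $\partial M$ with coefficients in $\bar\Gamma^{(2)}$: the subgroup $\mathsf{L}$ of $H^{2\ell+1}(\partial M; \bar\Gamma^{(2)})$ consisting of restrictions of classes on $M$ is Lagrangian with respect to the alternating cup product pairing. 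This uses only the long exact sequence \eqref{EqLongExSeqRelCohomGamma} and the compatibility \eqref{EqRelPairingBulkBound}, both of which hold for any orthogonal local system, in particular $\bar\Gamma^{(2)}$.

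Then I would follow the counting argument preceding Proposition~\ref{PropLMLagSub} word for word. Set $\mathsf{L}' := \mathsf{L} \cap \mathsf{C}_{\partial M}^{(2)}$ and let $\mathsf{L}_M^{(2)}$ be its image under the quotient map $\mathsf{C}_{\partial M}^{(2)} \to \mathsf{G}_{\partial M}^{(2)} = \mathsf{C}_{\partial M}^{(2)}/\mathsf{T}_{\partial M}^{(2)}$. Isotropy of $\mathsf{L}_M^{(2)}$ in $\mathsf{G}_{\partial M}^{(2)}$ is inherited from isotropy of $\mathsf{L}$ in $\mathsf{E}_{\partial M}^{(2)}$. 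For coisotropy: given $x \in \mathsf{G}_{\partial M}^{(2)}$ pairing trivially with all of $\mathsf{L}_M^{(2)}$, lift to $y \in \mathsf{C}_{\partial M}^{(2)}$; the associated character of $\mathsf{E}_{\partial M}^{(2)}$ vanishes on $\mathsf{L}'$, and using the fact (from the order count $|\mathsf{E}_{\partial M}^{(2)}| = |\mathsf{G}_{\partial M}^{(2)}||\mathsf{K}_{\partial M}^{(2)}|^2$ and that $\mathsf{T}_{\partial M}^{(2)}$ is the group of characters of $\mathsf{E}_{\partial M}^{(2)}$ trivial on $\mathsf{C}_{\partial M}^{(2)}$) one adjusts $y$ by an element of $\mathsf{T}_{\partial M}^{(2)}$ to a class $y' \in \mathsf{L}$, necessarily in $\mathsf{L}'$ since $y' \in \mathsf{C}_{\partial M}^{(2)}$; then $x \in \mathsf{L}_M^{(2)}$. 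This shows $\mathsf{L}_M^{(2)}$ is Lagrangian.

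The only genuine thing to check beyond pure repetition is that $\Lambda''$ as defined is actually a well-defined integral lattice and that $(\Lambda'')^\ast/\Lambda'' \cong \Gamma^{(2)}$ with the correct pairing; this is a small piece of lattice arithmetic that can be dispatched quickly (choosing a basis adapted to $\Lambda_{\rm even}$, one sees the Gram matrix of $\Lambda''$ is nondegenerate mod $2$ and the quotient reproduces the $\tfrac12\mathbb{Z}/\mathbb{Z}$-pairing of $\Gamma^{(2)}$). I do not anticipate a serious obstacle here: the infrastructure of Appendix~\ref{SecPerPair} was deliberately written to be applied to arbitrary integral lattices, and $\Lambda''$, $\Lambda' = \sqrt2\Lambda$, and $\Lambda$ itself are handled on exactly the same footing. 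Thus the proof is short: ``Construct the induced local system $\bar\Lambda''$ with fiber $\Lambda''$, and repeat the construction above Proposition~\ref{PropLMLagSub}, replacing $\Lambda$ by $\Lambda''$ throughout.''
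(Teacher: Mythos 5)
Your proposal is correct and follows the paper's proof essentially verbatim: the paper likewise defines $\Lambda''$, notes $(\Lambda'')^\ast/\Lambda'' = \Gamma^{(2)}$, and simply says ``repeating the construction above Proposition~\ref{PropLMLagSub}, we obtain'' the result. The extra details you supply (that $O(\Lambda)$ preserves $\Lambda_{\rm even}$ so the local system descends, and the explicit isotropy/coisotropy check) are exactly the steps the paper leaves implicit.
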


\section{Wu classes and Wu structures}

\label{SecWuStruct}

\paragraph{Higher cup products and Steenrod squares} Let $M$ be an oriented manifold, possibly with boundary. Given any pairing $\bar{\Xi}_1 \times \bar{\Xi}_2 \rightarrow \bar{\Xi}_3$ between local systems of abelian groups, there are higher cousins of the cup product for each non-negative integer $i$:
\be 
\cup_i: C^p(M;\bar{\Xi}_1) \times C^q(M;\bar{\Xi}_2) \rightarrow C^{p+q-i}(M;\bar{\Xi}_3) \;.
\ee
The ordinary cup product coincides with $\cup_0$. We also formally set $\cup_i = 0$ for $i < 0$. The higher cup products satisfy the fundamental relations (\cite{1947}, Theorem 5.1)
\be
\label{EqRelHighCupProd}
d(\hat{u} \cup_i \hat{v}) - d\hat{u} \cup_i \hat{v} - (-1)^p \hat{u} \cup_i d\hat{v} = (-1)^{p+q-i} \hat{u} \cup_{i-1} \hat{v} + (-1)^{pq + p + q} \hat{v} \cup_{i-1} \hat{u} \;,
\ee
where $\hat{u}$ and $\hat{v}$ are cochains of degree $p$ and $q$ respectively. In words, the failure of the Leibniz rule for the product $\cup_i$ is equal to the failure of $\cup_{i-1}$ to be graded symmetric. 

If $\bar{\Xi}_1 = \bar{\Xi}_2 = \bar{\Xi}$, we define the \emph{Steenrod square operations} on cochains by
\begin{align}
{\rm Sq}^i: & \: C^p(M;\bar{\Xi}) \rightarrow C^{p+i} (M;\bar{\Xi}_3) \\
& \: {\rm Sq}^i(\hat{x}) = \hat{x} \cup_{p-i} \hat{x} \;.
\end{align}
(Our notation follows the standard convention and differs from \cite{1947} by the substitution $i \leftrightarrow p-i$.) An easy consequence of \eqref{EqRelHighCupProd} is that when $\hat{x}$ is closed, $d({\rm Sq}^i(\hat{x})) = 0$ when $i$ is odd and $d({\rm Sq}^i(\hat{x})) = 0$ mod 2 when $i$ is even. On a closed manifold, the operators ${\rm Sq}^i$ induce Steenrod operations on the cohomology, which will be denoted by the same symbols. On a manifold with boundary, they induce Steenrod operations on the cohomology relative to the boundary.

\paragraph{Wu classes} We now restrict ourselves to smooth compact oriented manifolds. A universal way of defining the Wu class of degree $p$ is the following. Consider the Thom spectrum $MSO$ of $BSO$, carrying the universal Thom class $u$. $\nu_p$ is the class in $H^p(BSO;\mathbb{Z}_2)$ 
satisfying
\be
\label{EqUnivDefWuClass}
\nu_p \cup u = \chi{\rm Sq}^p(u) \;,
\ee
where $\chi$ is the antipode of the Steenrod algebra. The cup product and Steenrod square in \eqref{EqUnivDefWuClass} are defined with the standard perfect pairing $\mathbb{Z}_2 \times \mathbb{Z}_2 \rightarrow \mathbb{Z}_2$. As defined above, $\nu_p$ is a stable characteristic class of oriented vector bundles. We will consider only Wu classes associated to tangent bundles. In the case of a $d$-dimensional manifold $M$, assume that the pulled back Wu class $\nu_p(M)$ is trivialized on the boundary, thereby defining a class in $H^p(M,\partial M; \mathbb{Z}_2)$. \eqref{EqUnivDefWuClass} implies that
\be
\label{EqDefWuClass}
{\rm Sq}^p(x) = x \cup \nu_p(M)
\ee
for all $x \in H^{d-p}(M, \partial M; \mathbb{Z}_2)$. $\nu_p(M) = 0$ for $p > d/2$. When the dimension of $M$ is even and $p = d/2$, \eqref{EqDefWuClass} can be written
\be
x \cup x = x \cup \nu_p(M)
\ee
for all $x \in H^{d/2}(M,\partial M; \mathbb{Z}_2)$. This means that the Wu class is a characteristic element for the cup product pairing on $H^{p}(M,\partial M; \mathbb{Z}_2)$.

\paragraph{Wu structures}  Let us consider the homotopy fiber $BSO[\nu_p]$ of the map $\nu_p : BSO \rightarrow \Sigma^p H\mathbb{Z}_2$, where $H\mathbb{Z}_2$ is the $\mathbb{Z}_2$-cohomology spectrum. 
\begin{definition}
\label{DefWuStruct}
A degree $p$ \emph{Wu structure} on (the stable tangent bundle of) a manifold $M$ is a lift of the classifying map of the tangent bundle of $M$ from $BSO$ to $BSO[\nu_p]$.
\end{definition}

Wu structures exist on $M$ if and only if $\nu_p(M) = 0$. When this is the case, the set of Wu structures on $M$ forms a torsor for $H^{p-1}(M,\partial M; \mathbb{Z}_2)$. A degree $2$ Wu structure is a spin structure, and one can show that spin manifolds admit Wu structures for all $p$ \cite{Miller_Lee_1987}. We also deduce from the above that a $d$-dimensional manifold $M$ admits degree $p$ Wu structures for all $p > d/2$.

We will now generalize the notion of Wu class and Wu structure defined above to cohomology with coefficients in a local system. We follow the notations of Section \ref{SecBackStruct}. 

\paragraph{Characteristic element of the coefficient group} Let $\Gamma^{(2)}_{\rm deg} := \Lambda/2\Lambda$. $\Gamma^{(2)}_{\rm deg}$ inherits a $\frac{1}{2}\mathbb{Z}/\mathbb{Z}$-valued pairing from the pairing on $\Lambda$ (divided by two). This pairing has a radical $\Gamma_{\rm rad} \subset \Gamma^{(2)}_{\rm deg}$. $\Gamma^{(2)} := \Gamma^{(2)}_{\rm deg}/\Gamma_{\rm rad}$ carries a perfect pairing. According to the classification reviewed in Appendix \ref{SecPairFinAbGroups} (Theorem 3.5 of \cite{Taylor1984259}), $\Gamma^{(2)}$ can be diagonalized into a direct sum of two types of elementary terms:
\begin{itemize}
\item $\mathbb{Z}_2$ with generator $a$ and pairing $(a,a) = \frac{1}{2}$. We write the sum of these terms $\Gamma_1$.
\item $\mathbb{Z}_2 \times \mathbb{Z}_2$ with generators $a_1$, $a_2$ and pairing $(a_i, a_i) = 0$, $(a_1, a_2) = \frac{1}{2}$. We write the sum of these terms $\Gamma_2$
\end{itemize}
We have therefore a direct sum decomposition
\be
\label{EqDecompGamma2}
\Gamma^{(2)} = \Gamma_1 \oplus \Gamma_2 \;.
\ee
\begin{lemma}
\label{LemCharElGamma2}
$\Gamma^{(2)}$ has a unique characteristic element $\gamma_1$
\end{lemma}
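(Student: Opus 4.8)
The plan is to work with the decomposition $\Gamma^{(2)} = \Gamma_1 \oplus \Gamma_2$ from \eqref{EqDecompGamma2} and reduce the claim to each elementary summand, using that a characteristic element for a direct sum of pairings is the direct sum of characteristic elements of the summands. Recall that a characteristic element $\gamma$ of a $\frac{1}{2}\mathbb{Z}/\mathbb{Z}$-valued (equivalently $\mathbb{Z}_2$-valued) pairing $(\bullet,\bullet)$ on a group $\Xi$ is one satisfying $(x,x) = (\gamma,x)$ for all $x \in \Xi$, where $(x,x)$ is read modulo $1$ as an element of $\frac{1}{2}\mathbb{Z}/\mathbb{Z} \cong \mathbb{Z}_2$. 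Since the pairing on $\Gamma^{(2)}$ is perfect, the map $x \mapsto (x,\bullet)$ is an isomorphism onto the Pontryagin dual, so \emph{if} a characteristic element exists it is unique: the condition $(\gamma,x) = (x,x)$ for all $x$ pins down the character $(\gamma,\bullet)$, hence $\gamma$. Thus the content of the lemma is really the \emph{existence} of a characteristic element, and uniqueness is automatic from perfectness.

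First I would check existence on each elementary block. On a block of type $\Gamma_1$, namely $\mathbb{Z}_2 = \langle a\rangle$ with $(a,a) = \frac12$, the function $x \mapsto (x,x)$ sends $a \mapsto \frac12$ and $0 \mapsto 0$; this is the character $(a,\bullet)$ itself (since $(a,a) = \frac12$, $(a,0) = 0$), so $a$ is the characteristic element of this block. On a block of type $\Gamma_2$, namely $\mathbb{Z}_2^2 = \langle a_1, a_2\rangle$ with $(a_i,a_i) = 0$ and $(a_1,a_2) = \frac12$, the function $x \mapsto (x,x)$ vanishes on all four elements (it vanishes on $a_1, a_2$ by hypothesis, on $0$ trivially, and on $a_1+a_2$ because $(a_1+a_2,a_1+a_2) = (a_1,a_1)+(a_2,a_2)+2(a_1,a_2) = 0 + 0 + 1 = 0 \bmod 1$). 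Hence the zero character is realized, and $0$ is the characteristic element of a $\Gamma_2$ block. Summing over all blocks, $\gamma_1$ is the element of $\Gamma^{(2)}$ whose $\Gamma_1$-component is the sum of the generators $a$ of each $\mathbb{Z}_2$ summand and whose $\Gamma_2$-component is zero. One verifies directly that for $x = (x_1, x_2) \in \Gamma_1 \oplus \Gamma_2$ one has $(x,x) = (x_1,x_1) + (x_2,x_2) = (\gamma_1, x_1) + 0 = (\gamma_1, x)$, using that the pairing is block-diagonal with respect to \eqref{EqDecompGamma2}.

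The only genuine subtlety, and the step I expect to need the most care, is that the decomposition \eqref{EqDecompGamma2} into $\Gamma_1$ and $\Gamma_2$ parts is not unique (indeed, as noted right after Proposition \ref{PropDecompZ2QRArf0Arf1}, two elliptic planes can recombine into two hyperbolic planes, and similarly $\mathbb{Z}_2$-blocks interact), so a priori the element $\gamma_1$ constructed above might depend on the chosen diagonalization. This is exactly where uniqueness saves the argument: since the pairing on $\Gamma^{(2)}$ is perfect, the characteristic element — characterized intrinsically by $(\gamma,x) = (x,x)\ \forall x$ — is unique regardless of any choice, so any diagonalization produces the same $\gamma_1$. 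I would therefore phrase the proof as: (i) existence via the block computation above for one choice of decomposition; (ii) uniqueness (hence well-definedness, independent of the decomposition) via perfectness of the pairing. A small bookkeeping point to get right is the conversion between the $\frac12\mathbb{Z}/\mathbb{Z}$-valued description of the pairing and its $\mathbb{Z}_2$-valued incarnation, and the fact that "$(x,x) \bmod 1$" is additive in $x$ precisely because $2(x,y)$ is integral for a $\frac12\mathbb{Z}/\mathbb{Z}$-valued pairing — this is what makes $x \mapsto (x,x)$ a homomorphism $\Gamma^{(2)} \to \mathbb{Z}_2$ and thus representable by a unique $\gamma_1$.
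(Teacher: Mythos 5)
Your proof is correct and follows essentially the same route as the paper's: decompose $\Gamma^{(2)}$ into the elementary blocks of \eqref{EqDecompGamma2}, observe that each $\Gamma_1$-block has the nontrivial element as its characteristic element and each $\Gamma_2$-block has zero, and sum. Your explicit uniqueness-via-perfectness argument (which also disposes of the dependence on the chosen diagonalization) is a worthwhile refinement of a point the paper's proof leaves implicit.
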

\begin{proof}
If $\Xi_1$ and $\Xi_2$ are two abelian groups endowed with pairings. Any characteristic element of $\Xi_1 \oplus \Xi_2$ is the sum of a characteristic element of $\Xi_1$ and a characteristic element of $\Xi_2$. But the elementary summands composing $\Gamma_1$ and $\Gamma_2$ each admit a unique characteristic element, respectively the non-trivial element and zero.
\end{proof}

\paragraph{Wu class relative to a local system} Consider the class $\nu^{\Gamma}_p := \nu_p \otimes \gamma_1$ in $H^p(BSO; \Gamma^{(2)})$. From \eqref{EqUnivDefWuClass}, it satisfies
\be
\label{EqUnivDefWuClassLocSys}
\nu_p^{\Gamma} \cup u = \chi{\rm Sq}^p(u) \otimes \gamma_1 \;.
\ee
Its pull-back $\nu^{\Gamma}_p(M)$ to a manifold $M$ can be seen as an element of $H^p(M; \bar{\Gamma}^{(2)})$ for any local system $\bar{\Gamma}^{(2)}$ over $M$ with structure group $O(\Gamma^{(2)})$, because $\gamma_1$, being the unique characteristic element, is invariant under $O(\Gamma^{(2)})$.

Let $M$ be a manifold of dimension $d$. Embed $M$ into $\mathbb{R}^{2d}$, let $\mathscr{N}_M$ be the normal bundle and $M\mathscr{N}_M$ the corresponding Thom space. Assume that $\nu^{\Gamma}_p(M)$ is trivialized on $\partial M$ and is therefore refined to a class in $H^p(M,\partial M; \bar{\Gamma}
^{(2)})$. In the main text, this is the case because $\nu^{\Gamma}_p(M)$ comes with a distinguished cocycle representative.
\begin{lemma}
For all $x \in H^{d-p}(M,\partial M;\bar{\Gamma}^{(2)})$, we have
\be
\label{EqProSteenSqLocSys}
{\rm Sq}^p(x) = x \cup \nu^{\Gamma}_p(M) \;,
\ee
where the Steenrod square and the cup product are those associated to the perfect pairing $\Gamma^{(2)} \times \Gamma^{(2)} \rightarrow \mathbb{Z}_2$.
\end{lemma}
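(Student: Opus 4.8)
The plan is to deduce the local-system statement \eqref{EqProSteenSqLocSys} from the classical Wu formula \eqref{EqDefWuClass} by a ``tensoring with the characteristic element'' argument, using the fact that $\gamma_1 \in \Gamma^{(2)}$ is the unique characteristic element and hence invariant under the structure group $O(\Gamma^{(2)})$. Concretely, I would first observe that both sides of \eqref{EqProSteenSqLocSys} are defined via the same perfect pairing $\Gamma^{(2)} \times \Gamma^{(2)} \to \mathbb{Z}_2$, and that since $\Gamma^{(2)}$ is a $\mathbb{Z}_2$-vector space carrying a non-degenerate pairing, the pairing identifies $\bar{\Gamma}^{(2)}$ with its own dual local system. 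So a class $x \in H^{d-p}(M,\partial M;\bar{\Gamma}^{(2)})$ and the pairing together produce a well-defined cup product $x \cup y \in H^d(M,\partial M;\mathbb{Z}_2)$ for $y \in H^p(M;\bar{\Gamma}^{(2)})$, and likewise ${\rm Sq}^p(x) = x \cup_{(d-2p)} x$ lands in $H^d(M,\partial M;\mathbb{Z}_2)$.

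The key step is to reduce to the scalar case locally. I would pick a finite cover $\tilde{M} \stackrel{\pi}{\to} M$ trivializing the local system $\bar{\Gamma}^{(2)}$ (this exists because $O(\Gamma^{(2)})$ is finite), so that $\pi^\ast \bar{\Gamma}^{(2)} \simeq \tilde{M} \times \Gamma^{(2)}$, and further decompose $\Gamma^{(2)} = \bigoplus_\alpha \mathbb{Z}_2 \langle e_\alpha \rangle$ as a $\mathbb{Z}_2$-vector space (not necessarily respecting the pairing). Over $\tilde M$, a $\bar{\Gamma}^{(2)}$-valued cochain decomposes as $\sum_\alpha \hat x_\alpha \otimes e_\alpha$ with $\hat x_\alpha$ an ordinary $\mathbb{Z}_2$-cochain, and the Steenrod square expands bilinearly: $\mathrm{Sq}^p(\sum_\alpha \hat x_\alpha \otimes e_\alpha) = \sum_{\alpha,\beta} (\hat x_\alpha \cup_{d-2p} \hat x_\beta)\,(e_\alpha, e_\beta)$, where the symmetry of the pairing together with the graded-commutativity relation \eqref{EqRelHighCupProd} collapses the off-diagonal terms in pairs up to a coboundary, leaving $\sum_\alpha \mathrm{Sq}^p(\hat x_\alpha)\,(e_\alpha,e_\alpha)$ in cohomology. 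Now invoke the classical Wu relation \eqref{EqDefWuClass}, ${\rm Sq}^p(\hat x_\alpha) = \hat x_\alpha \cup \nu_p(M)$ in $H^d(M,\partial M;\mathbb{Z}_2)$, giving $\sum_\alpha (\hat x_\alpha \cup \nu_p(M))\,(e_\alpha,e_\alpha)$. The content of ``$\gamma_1$ is the characteristic element'' is precisely the identity $(e_\alpha, e_\alpha) = (e_\alpha, \gamma_1) \bmod 2$ for all $\alpha$, which turns this into $\sum_\alpha \hat x_\alpha \cup \nu_p(M) \otimes (e_\alpha, \gamma_1) = x \cup (\nu_p(M) \otimes \gamma_1) = x \cup \nu^\Gamma_p(M)$, as required. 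Finally, I would check that the identity descends from $\tilde M$ to $M$: both sides are pullbacks under $\pi$ of classes on $M$ in the relevant sense, and since $\pi$ has a transfer (finite cover) the identity on $\tilde M$ that is compatible with the deck action forces the identity downstairs — alternatively, and more cleanly, one notes that the whole computation above is natural and the decomposition step can be performed directly with $\bar{\Gamma}^{(2)}$-valued cochains on $M$ once one is careful that $(\bullet,\gamma_1)$ is a well-defined $O(\Gamma^{(2)})$-equivariant map $\bar{\Gamma}^{(2)} \to \underline{\mathbb{Z}_2}$.

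The main obstacle I anticipate is the bookkeeping in the cochain-level computation: the relation \eqref{EqRelHighCupProd} only makes the off-diagonal cancellation work up to explicit coboundary terms, and since we are pairing a relative class with an absolute class one must be careful that all these coboundaries vanish on $\partial M$ (they do, because $x$ is relative, so $\hat x_\alpha$ vanishes on $\partial M$ and so does any cup product built from it). A secondary subtlety is the universal/naturality input: strictly, \eqref{EqUnivDefWuClass}--\eqref{EqUnivDefWuClassLocSys} give the Wu relation for Thom classes, and one passes to \eqref{EqDefWuClass} on $M$ by the standard argument via the embedding $M \hookrightarrow \mathbb{R}^{2d}$ and the Thom space $M\mathscr{N}_M$ (Spanier--Whitehead / Atiyah duality matching ${\rm Sq}^p$ against the cap with $\nu_p$); I would simply cite this classical fact and apply $(\bullet) \otimes \gamma_1$ to both sides of \eqref{EqUnivDefWuClassLocSys}, reading off \eqref{EqProSteenSqLocSys} after dualizing. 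Everything else is formal.
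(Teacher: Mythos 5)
Your proposal is correct and follows essentially the same route as the paper: decompose the $\bar{\Gamma}^{(2)}$-valued cocycle into $\mathbb{Z}_2$-components via a local trivialization, note that the off-diagonal terms of ${\rm Sq}^p$ cancel up to coboundaries (vanishing on $\partial M$) by \eqref{EqRelHighCupProd} together with the symmetry of the pairing, apply the scalar Wu relation componentwise, and use the characteristic-element identity $(\alpha,\alpha)=(\alpha,\gamma_1)$ to reassemble $x \cup \nu^{\Gamma}_p(M)$. The only packaging difference is that the paper runs the computation against the Thom class $\hat{u}(M)$ on $M\mathscr{N}_M$, starting from the cochain-level universal relation \eqref{EqUnivDefWuClassLocSys} and the dual operation $({\rm Sq}^p)^\ast$, and concludes by the Thom isomorphism, whereas you invoke the manifold-level formula \eqref{EqDefWuClass} directly; the two are equivalent. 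One warning: of your two suggested ways of handling the local system, the route via a global trivialization on a finite cover $\tilde{M}$ followed by ``descent via the transfer'' does not work as stated --- for a cover of even degree the composite of transfer and pullback is multiplication by that degree, hence zero on $\mathbb{Z}_2$-cohomology, and $\pi^\ast$ need not be injective. Your ``cleaner'' alternative --- performing the cochain computation directly on $M$ with local trivializations, which is legitimate because the only globally meaningful quantity extracted from the decomposition is the $O(\Gamma^{(2)})$-invariant character $(\bullet,\gamma_1)$ --- is exactly what the paper does, and is the branch you should keep.
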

\begin{proof}
\eqref{EqUnivDefWuClassLocSys} implies that after choosing representative cocycles, 
\be
\label{EqUnivDefWuClassLocSysCochains}
\hat{\nu}_p^{\Gamma}(M) \cup \hat{u}(M) = ({\rm Sq}^p)^\ast(\hat{u}(M)) \otimes \gamma_1 + d\hat{c} \otimes \gamma_1
\ee
for some $\mathbb{Z}_2$-valued cochain $\hat{c}$. $({\rm Sq}^p)^\ast$ is the dual of ${\rm Sq}^p$ with respect to the pairing between the cochain groups $C^p(M,\partial M; \mathbb{Z}_2)$ and $C^{2d-p}(M\mathscr{N}_M, \mathbb{Z}_2)$, which coincides with $\chi {\rm Sq}^p$ on the cohomology. Let $\hat{x}$ be a cocycle in $C^p(M,\partial M; \bar{\Gamma}^{(2)})$. Using a trivialization of $\bar{\Gamma}^{(2)}$ on some open cover of $M$, we can write $\hat{x} = \sum_i \hat{y}_i \otimes \alpha_i$, where $\hat{y}_i$ are $\mathbb{Z}_2$-valued cochains and  $\alpha_i \in \Gamma^{(2)}$. We have
\begin{align}
\hat{x} \cup \hat{\nu}_p^{\Gamma}(M) \cup \hat{u}(M) \: & = \sum_i \hat{y}_i \cup ({\rm Sq}^p)^\ast(\hat{u}(M)) \otimes (\alpha_i, \gamma_1) + \hat{y}_i \cup d\hat{c} \otimes (\alpha_i,\gamma_1) \notag \\
& = \sum_i {\rm Sq}^p(\hat{y}_i) \cup \hat{u}(M) \otimes (\alpha_i, \alpha_i) + \sum_i d(\hat{y}_i \cup \hat{c}) \otimes (\alpha_i, \gamma_1) \\
& = {\rm Sq}^p(\hat{x}) \cup \hat{u}(M) + d(\hat{x} \cup (\hat{c} \otimes \gamma_1)) \notag
\end{align}
The Thom isomorphism yields \eqref{EqProSteenSqLocSys}.
\end{proof}

\paragraph{Classifying space and spectrum for local systems} The classifying space for the degree $p$ cohomology with local coefficients has been described in \cite{gitler1963cohomology}, Section 7. The classifying space of interest to us is, in the notation of \cite{gitler1963cohomology}, $L_{O(\Gamma^{(2)})}(\Gamma^{(2)},p)$, which we will write $L(\Gamma^{(2)},p)$ to lighten the notation. Homotopy classes of maps from a manifold $M$ into $L(\Gamma^{(2)},p)$ are in bijection with pairs composed of a local system of groups $\bar{\Gamma}^{(2)}$ with fiber $\Gamma^{(2)}$ and a cohomology class in $H^p(M; \bar{\Gamma}^{(2)})$. $L(\Gamma^{(2)},p)$ is constructed as the quotient of $EO(\Gamma^{(2)}) \times K(\Gamma^{(2)},p)$ by the natural diagonal action of $O(\Gamma^{(2)})$ on these spaces. As the action is free on $EO(\Gamma^{(2)})$, $L(\Gamma^{(2)},p)$ is a fibration
\be
K(\Gamma^{(2)},p)  \stackrel{\iota}{\hookrightarrow} L(\Gamma^{(2)},p) \stackrel{\pi}{\rightarrow} BO(\Gamma^{(2)}) \;.
\ee
$\iota$ is an inclusion corresponding to the trivial local system and $\pi$ is a projection obtained by discarding the cohomology class. 

Unlike $K(\Gamma^{(2)},p)$, the classifying space $L(\Gamma^{(2)},p)$ cannot be straightforwardly promoted to a spectrum. However, from our partial understanding of these matters, it can be promoted to a parametrized spectrum over $BO(\Gamma^{(2)})$, see \cite{May2006}, Section 22.1. (One can think of parametrized spectra as fibrations whose base is a topological space and whose fiber is a spectrum.) We will write $\bar{H}\Gamma^{(2)}$ for this paramterized spectrum.

\paragraph{Wu structure relative to a local system} $\nu^{\Gamma}_p$ defines a map
\be
\label{EqMapWuClassRelGamma}
\nu^{\Gamma}_p: BSO \wedge BO(\Gamma^{(2)}) \rightarrow \Sigma^p \bar{H}\Gamma^{(2)} 
\ee
covering the identity on $BO(\Gamma^{(2)})$. We will write $BSO[\Gamma^{(2)},p]$ for the homotopy fiber of this map, taken fiberwise over $BO(\Gamma^{(2)})$. $BSO[\Gamma^{(2)},p]$ is a parametrized spectrum over $BO(\Gamma^{(2)})$. Let $M$ be a smooth manifold endowed with a local system $\bar{\Gamma}^{(2)}$ with fiber $\Gamma^{(2)}$. There is a map from $M$ into $BSO \wedge BO(\Gamma^{(2)})$ classifying the tangent bundle and the local system $\bar{\Gamma}^{(2)}$. (More precisely, assuming these structures fixed on the boundary, they are classified by a homotopy class of such maps relative to the boundary.)
\begin{definition}
A degree $p$ \emph{Wu structure} relative to $\bar{\Gamma}^{(2)}$ on the stable tangent bundle of $M$ is a lift of the classifying map above to $BSO[\Gamma^{(2)},p]$.
\end{definition}

Wu structures relative to $\bar{\Gamma}^{(2)}$ exist on $M$ if and only if $\nu^{\Gamma}_p(M) = 0$. There are two cases leading to the existence of a Wu structure relative to $\bar{\Gamma}^{(2)}$. First, if the pairing on $\Lambda$ is even, $\gamma_1 = 0$ and Wu structures always exist. The map \eqref{EqMapWuClassRelGamma} is the constant map, so there is a canonical Wu structure relative to $\bar{\Gamma}^{(2)}$. Second, if $\Lambda$ is not even, $\nu^{\Gamma}_p(M) = 0$ iff $\nu_p(M) = 0$, so a manifolds admits a Wu structure relative to $\bar{\Gamma}^{(2)}$ iff it admits a Wu structure in the sense of Definition \ref{DefWuStruct}. The same remarks concerning existence therefore apply. The Wu structures relative to $\bar{\Gamma}^{(2)}$ on $M$ form a torsor for $H^{p-1}(M,\partial M;\bar{\Gamma}^{(2)})$.

We will pick a cocycle representative $\hat{\nu}_p$ of the universal Wu class $\nu_p$ on $BSO$, yielding a cocycle $\hat{\nu}_p^\Gamma := \hat{\nu}_p \otimes \gamma_1$ representing the Wu class relative to $\Gamma^{(2)}$. The pull-back of $\hat{\nu}_p^\Gamma$ to $BSO[\Gamma^{(2)},p]$ is exact by definition. Let $\hat{\eta}$ be a trivializing cochain: $d\hat{\eta} = \hat{\nu}^\Gamma_p$. In the present work, we assume that a choice of cocycle $\hat{\nu}_{p}$ and of trivializing cochain $\hat{\eta}$ has been made once and for all for $p = 2\ell+2$. We also assume that $\hat{\eta}$ has been chosen to restrict to a cocycle representing the canonical generator of the cohomology of the fibers of $BSO[\Gamma^{(2)},p] \rightarrow BSO \wedge BO(\Gamma^{(2)})$. Given such a choice, any classifying map $M \rightarrow BSO[\Gamma^{(2)},p]$ provides by pull-back a pair $(\hat{\nu}^\Gamma_p(M), \hat{\eta}(M))$ on $M$ such that $\hat{\nu}(M) = d\hat{\eta}(M)$. Under a change of the Wu structure on $M$, associated to an element $\delta \in H^{p-1}(M,\partial M; \mathbb{Z}_2)$, $\hat{\eta}(M)$ changes by a cocycle $\hat{\delta}$ representing $\delta$. The cochain $\hat{\eta}(M)$ is therefore a useful bookkeeping device for the Wu structure on $M$. 

\paragraph{Notation} In the main text, to lighten the notation, we simply write $\nu_p$ for $\nu_p^\Gamma(M)$.  Similarly, we write $\hat{\eta}$ for $\hat{\eta}(M)$.

\section{E-theory}

\label{SecEth}

We describe here a generalized cohomology theory known as E-theory \cite{2005math......4524J, Freed:2006mx}. More precisely, we define here the $\mathbb{R}/\mathbb{Z}$-valued version of E-theory, as well as certain natural generalizations. We also describe a cochain model for these generalized cohomology theories. 

Unless noted otherwise, all the operations on parametrized spectra, such as suspensions, smash products or homotopy fiber, should be understood fiberwise. When writing maps of parametrized spectra, we always implicitly assume that they cover the identity on the base.

\paragraph{The E-theory spectrum} Let $q$ be a positive integer. The fiberwise homotopy fiber over $BO(\Gamma^{(2)})$ of the map
\be
{\rm Sq}^{q+1}: \Sigma^{-q} \bar{H} \Gamma^{(2)} \rightarrow \Sigma H\mathbb{R}/\mathbb{Z} \wedge BO(\Gamma^{(2)})
\ee
is a parametrized spectrum $E[\Gamma^{(2)},q]$ over $BO(\Gamma^{(2)})$. Given a $\mathfrak{W}$-manifold $M$, which includes the data of a classifying map $M \rightarrow BO(\Gamma^{(2)})$, we obtain generalized cohomology groups fitting in the long exact sequence
\begin{align}
\label{EqLongExSeqETh}
...H^p(M;\mathbb{R}/\mathbb{Z}) & \stackrel{i}{\rightarrow} E[\Gamma^{(2)},q]^p(M) \stackrel{j}{\rightarrow} H^{p-q}(M;\bar{\Gamma}^{(2)}) \\
& \stackrel{{\rm Sq}^{q+1}}{\rightarrow} H^{p+1} (M;\mathbb{R}/\mathbb{Z}) \stackrel{i}{\rightarrow} E[\Gamma^{(2)},q]^{p+1}(M) \stackrel{j}{\rightarrow} H^{p+1-q}(M;\bar{\Gamma}^{(2)})... \notag
\end{align}
We are focusing here on $\mathbb{R}/\mathbb{Z}$-valued E-theory. For $q = 1$ and $\bar{\Gamma}^{(2)}$ the trivial local system with fiber $\mathbb{Z}_2$ and non-degenerate pairing, the corresponding $\mathbb{Z}$-valued E-theory groups have been described in \cite{2005math......4524J, Freed:2006mx}.

\paragraph{A cochain model for E-cohomology} Consider the non-abelian group $\bar{C}^p(M)$ defined as follows. Its elements, which we call E-cochains of degree $p$, are pairs of cochains $\bar{x} = (\hat{s}, \hat{y}) \in C^p(M;\mathbb{R}/\mathbb{Z})\times C^{p-q}(M;\bar{\Gamma}^{(2)})$. We write them with a bar. The group law is given by
\be 
\label{EqGrpLawCocModETh}
(\hat{s}_1, \hat{y}_1) \boxplus (\hat{s}_2, \hat{y}_2) = (\hat{s}_1 + \hat{s}_2 + d\hat{y}_1 \cup_{p-2q+1} \hat{y}_2 + \hat{y}_1 \cup_{p-2q} \hat{y}_2, \hat{y}_1 + \hat{y}_2) \;.
\ee
$\cup_i$ denotes the higher cup products reviewed above, for $G_1 = G_2 = \Gamma^{(2)}$, $G_3 = \frac{1}{2}\mathbb{Z}/\mathbb{Z}$, with the pairing induced from $\Gamma$. Easy computations show
\begin{lemma}
$\boxplus$ is associative. The inverse of $(\hat{s}, \hat{y})$ is
\be
\boxminus (\hat{s}, \hat{y}) = (-\hat{s} + d\hat{y} \cup_{p-2q+1} \hat{y} + \hat{y} \cup_{p-2q} \hat{y}, \hat{y}) \;.
\ee
\end{lemma}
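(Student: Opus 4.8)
The statement to prove is the Lemma asserting that $\boxplus$ is associative and that $\boxminus(\hat{s},\hat{y})$ as given is the inverse of $(\hat{s},\hat{y})$. Both are direct verifications, and the whole point is to organize the cochain algebra so that the higher cup product identity \eqref{EqRelHighCupProd} does the work. First I would set up notation: for E-cochains $\bar{x}_i = (\hat{s}_i,\hat{y}_i)$ of degree $p$, the $\Gamma^{(2)}$-component $\hat{y}_i$ has degree $p-q$, so $\hat{y}_i \cup_j \hat{y}_k$ has degree $2(p-q)-j$; this equals $p$ precisely when $j = p-2q$, and $d\hat{y}_i \cup_j \hat{y}_k$ (with $d\hat{y}_i$ of degree $p-q+1$) has degree $p$ when $j = p-2q+1$. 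So the correction terms in \eqref{EqGrpLawCocModETh} are exactly the two combinations that land in degree $p$, which is the only consistency check on degrees.

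For associativity I would compute $(\bar{x}_1 \boxplus \bar{x}_2)\boxplus \bar{x}_3$ and $\bar{x}_1 \boxplus(\bar{x}_2 \boxplus \bar{x}_3)$ and compare their $\mathbb{R}/\mathbb{Z}$-components (the $\bar{\Gamma}^{(2)}$-components are manifestly equal, both being $\hat{y}_1+\hat{y}_2+\hat{y}_3$). The difference of the two $\mathbb{R}/\mathbb{Z}$-components is a sum of terms of the form $d\hat{y}_i \cup_\bullet \hat{y}_j$ and $\hat{y}_i \cup_\bullet \hat{y}_j$; the key move is to apply \eqref{EqRelHighCupProd} to rewrite $d(\hat{y}_i \cup_k \hat{y}_j)$ and $d(\hat{y}_j \cup_k \hat{y}_i)$ in terms of $d\hat{y}_i \cup_k \hat{y}_j$, $\hat{y}_i \cup_k d\hat{y}_j$, and the lower products $\hat{y}_i \cup_{k-1}\hat{y}_j \pm \hat{y}_j \cup_{k-1}\hat{y}_i$. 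Working modulo $1$ and using that the relevant pairing on $\Gamma^{(2)}$ is $\tfrac{1}{2}\mathbb{Z}/\mathbb{Z}$-valued (so that the sign ambiguities $\pm$ and factors of $2$ that appear in \eqref{EqRelHighCupProd} are harmless), the mixed terms should cancel in pairs and the residual terms should match. For the inverse, I would simply plug $\bar{x}\boxplus\boxminus\bar{x}$ into \eqref{EqGrpLawCocModETh}, observe the $\bar{\Gamma}^{(2)}$-component is $0$, and check the $\mathbb{R}/\mathbb{Z}$-component vanishes: $\hat{s} + (-\hat{s} + d\hat{y}\cup_{p-2q+1}\hat{y} + \hat{y}\cup_{p-2q}\hat{y}) + d\hat{y}\cup_{p-2q+1}\hat{y} + \hat{y}\cup_{p-2q}\hat{y}$, which modulo $1$ reduces (using that twice a $\tfrac12\mathbb{Z}/\mathbb{Z}$-valued cochain is integral) to $0$; then check $\boxminus\bar{x}\boxplus\bar{x}$ similarly, possibly needing one application of \eqref{EqRelHighCupProd} to handle the asymmetry between $\hat{y}_1\cup_\bullet\hat{y}_2$ and $\hat{y}_2\cup_\bullet\hat{y}_1$.

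The main obstacle is bookkeeping: getting all the degree indices on the higher cup products right, tracking the signs $(-1)^{pq+p+q}$ etc. from \eqref{EqRelHighCupProd}, and being careful about exactly which terms are integral (hence zero mod $1$) versus which genuinely need to cancel. I expect the honest argument to be: expand both associators, collect the difference as $\pm d(\hat{y}_1 \cup_{p-2q+1}\hat{y}_2) + (\text{terms})$-type expressions, apply the Steenrod-Hirsch formula \eqref{EqRelHighCupProd} once or twice, and verify the remainder is a sum of an integral cochain and something that telescopes to zero; I would present this as a single displayed computation rather than prose. A cleaner alternative, which I would mention, is conceptual: $\bar{C}^p(M)$ with $\boxplus$ is, up to the integrality subtleties, the standard twisted-product (\,"Postnikov"\,) group structure on the cochains computing the homotopy fiber of ${\rm Sq}^{q+1}$, so associativity and the inverse formula are instances of the general fact that such a homotopy-fiber cochain model is a (non-abelian) group; but since the paper wants an explicit cochain-level statement, the direct check using \eqref{EqRelHighCupProd} is the appropriate proof to write out.
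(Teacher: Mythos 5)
Your plan is correct and is the same direct cochain verification that the paper has in mind (it offers no written proof beyond ``easy computations''), but you have misplaced where the content of the computation lies. For associativity, expanding $(\bar{x}_1\boxplus\bar{x}_2)\boxplus\bar{x}_3$ and $\bar{x}_1\boxplus(\bar{x}_2\boxplus\bar{x}_3)$ using only the bilinearity of $\cup_{p-2q}$ and $\cup_{p-2q+1}$ already yields \emph{identical} lists of terms: the correction $c(\hat{y}_1,\hat{y}_2):=d\hat{y}_1\cup_{p-2q+1}\hat{y}_2+\hat{y}_1\cup_{p-2q}\hat{y}_2$ is bi-additive and independent of the $\hat{s}$-components, and any bi-additive function is automatically a $2$-cocycle, so no application of \eqref{EqRelHighCupProd} and no sign- or parity-chasing is needed here. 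Likewise for the inverse: in $\bar{x}\boxplus\boxminus\bar{x}$ and $\boxminus\bar{x}\boxplus\bar{x}$ both arguments of the correction term are the same $\hat{y}$, so there is no left/right asymmetry to repair with \eqref{EqRelHighCupProd}; the first component reduces in both cases to $2\bigl(d\hat{y}\cup_{p-2q+1}\hat{y}+\hat{y}\cup_{p-2q}\hat{y}\bigr)$, which vanishes mod $1$ because the pairing on $\Gamma^{(2)}$ is $\tfrac{1}{2}\mathbb{Z}/\mathbb{Z}$-valued, and the second component is $2\hat{y}=0$ since $\Gamma^{(2)}$ is $2$-torsion (which is also why $+\hat{y}$ rather than $-\hat{y}$ appears in $\boxminus$). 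The Hirsch-type relation \eqref{EqRelHighCupProd} only becomes necessary in the subsequent Lemma \ref{LemPropUnTwistDiffETh}, where one checks that $d_0$ is a homomorphism squaring to zero; your degree bookkeeping and the observation that twice a $\tfrac{1}{2}\mathbb{Z}/\mathbb{Z}$-valued cochain is integral are exactly the right ingredients.
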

We define the following differential on $\hat{E}^p$:
\be
\label{EqDefDiffEThCoch}
d_0(\hat{s}, \hat{y}) = (d\hat{s} + \hat{y} \cup_{p-2q} d\hat{y} + \hat{y} \cup_{p-2q-1} \hat{y}, d\hat{y})
\ee
The kernel $\bar{Z}^p(M) \subset \bar{C}^p(M)$ of $d_0$ is the set of degree $p$ E-cocycles. Remark that $\hat{y} \cup_{p-2q-1} \hat{y} = {\rm Sq}^{q+1}\hat{y}$, so E-cocycles satisfy the relations
\be
\label{EqConstCocModETh}
d\hat{s} = {\rm Sq}^{q+1}\hat{y} \;, \quad d\hat{y} = 0 \;.
\ee 
The image $\bar{B}^p(M) \subset \bar{C}^p(M)$ of $d_0$ is the subset of exact E-cochains. We record some properties of the differential on $E$-cochains.
\begin{lemma}
\label{LemPropUnTwistDiffETh}
\begin{enumerate}
\item $d_0$ is a group homomorphism. Hence $\bar{Z}^p(M)$ and $\bar{B}^p(M)$ are subgroups of  $\bar{C}^p(M)$.
\item $d_0^2 = 0$, hence $\bar{B}^p(X) \subset \bar{Z}^p(M)$.
\item $\bar{B}^p(M)$ is a normal subgroup of $\bar{Z}^p(M)$.
\end{enumerate}

\end{lemma}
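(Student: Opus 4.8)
The final statement to prove is Lemma~\ref{LemPropUnTwistDiffETh}: that $d_0$ is a group homomorphism (so $\bar{Z}^p(M)$ and $\bar{B}^p(M)$ are subgroups of the non-abelian group $\bar{C}^p(M)$), that $d_0^2 = 0$ (so $\bar{B}^p(M) \subset \bar{Z}^p(M)$), and that $\bar{B}^p(M)$ is normal in $\bar{Z}^p(M)$. The plan is to verify each of the three items by direct cochain computation, using the fundamental identity \eqref{EqRelHighCupProd} for the higher cup products as the one non-trivial input.

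First I would check that $d_0$ is a homomorphism: compute $d_0(\bar{x}_1 \boxplus \bar{x}_2)$ from the definitions \eqref{EqGrpLawCocModETh} and \eqref{EqDefDiffEThCoch}, and compare it term by term with $d_0(\bar{x}_1) \boxplus d_0(\bar{x}_2)$. The $\bar{\Gamma}^{(2)}$-component is immediate since $d$ is additive on ordinary cochains. For the $\mathbb{R}/\mathbb{Z}$-component one expands $d\bigl(\hat{s}_1 + \hat{s}_2 + d\hat{y}_1 \cup_{p-2q+1}\hat{y}_2 + \hat{y}_1 \cup_{p-2q}\hat{y}_2\bigr)$ plus the $\cup$-correction terms built from $\hat{y}_1 + \hat{y}_2$, and matches against the sum of the two individual $d_0$-outputs together with the $\boxplus$-correction term $d(d_0\bar{x}_1)_{\bar\Gamma} \cup_\ast (d_0\bar{x}_2)_{\bar\Gamma} + \dots$; here the Leibniz-failure relation \eqref{EqRelHighCupProd} is exactly what converts $d(\hat{y}_1 \cup_i \hat{y}_2)$ into graded-symmetrized lower cup products, producing the cross terms needed for the match. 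Second, $d_0^2 = 0$ follows because $d_0$ is a homomorphism into $\bar{C}^{p+1}(M)$ and, on the image, the $\mathbb{R}/\mathbb{Z}$-component of $d_0 \circ d_0$ reduces to $d(\,{\rm Sq}^{q+1}\hat{y}\,)$ plus curvature-type corrections that vanish: one uses $d\,d = 0$ together with the fact (noted just above Lemma~\ref{LemPropUnTwistDiffETh}, from \eqref{EqRelHighCupProd}) that ${\rm Sq}^{q+1}$ of a cocycle is closed, and that the remaining $\cup_i$ terms involving $d\hat{y}$ drop out once $d\hat{y}=0$ on $\bar{B}^p \subset \bar{Z}^p$. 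Third, normality of $\bar{B}^p(M)$ in $\bar{Z}^p(M)$: for $\bar{z} \in \bar{Z}^p(M)$ and $\bar{b} \in \bar{B}^p(M)$, compute $\bar{z} \boxplus \bar{b} \boxminus \bar{z}$ using the explicit inverse formula in the preceding lemma; the $\bar\Gamma^{(2)}$-component of the result is just $\bar{b}$'s, and the $\mathbb{R}/\mathbb{Z}$-component differs from $\bar{b}$'s by cup-product expressions in $\hat{z}$'s and $\hat{b}$'s $\bar\Gamma^{(2)}$-parts that one rewrites, via \eqref{EqRelHighCupProd} and the cocycle condition $d\hat{y}_z = 0$, as an exact cochain plus a ${\rm Sq}^{q+1}$-type correction, i.e.\ as $d_0$ of something — so the conjugate is again exact.

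The main obstacle is purely bookkeeping: keeping track of the correct higher-cup-product indices (the shifts $p-2q+1$, $p-2q$, $p-2q-1$ and how they change under the degree shift $p \mapsto p+1$ when applying $d_0$), and applying \eqref{EqRelHighCupProd} with the right signs and the right $i$ in each instance. There is no conceptual difficulty — everything is forced once one is careful — but the sign and index arithmetic is where an error is most likely to creep in, so I would organize the computation by isolating, for each of the three claims, the single algebraic identity among $\cup_i$'s that needs to hold and checking it against \eqref{EqRelHighCupProd} before assembling the final expressions. Since these are "easy computations" as the text indicates, I would present the homomorphism property and $d_0^2=0$ compactly and spell out the normality computation in slightly more detail, as it is the only one where the conjugation structure of the non-abelian group $\bar{C}^p(M)$ genuinely enters.
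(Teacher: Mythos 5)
Your overall strategy is the same as the paper's: all three items are proved by direct cochain computation, with the Hirsch-type relation \eqref{EqRelHighCupProd} as the only nontrivial input, and your treatments of the homomorphism property and of normality (conjugating an exact cochain by a cocycle and absorbing the discrepancy $\hat{y}\cup_{p-2q}d\hat{x} - d\hat{x}\cup_{p-2q}\hat{y} = d\hat{c}$ into a redefinition of the trivializing cochain) match the paper's proof essentially step for step.

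The one place where your write-up as stated does not quite work is item 2. The inference ``$d_0^2=0$ follows because $d_0$ is a homomorphism'' is a non sequitur, and the closedness facts you invoke do not apply: to get $\bar{B}^p\subset\bar{Z}^p$ you need $d_0^2=0$ on \emph{all} of $\bar{C}^{p-1}$, where the $\bar{\Gamma}^{(2)}$-component $\hat{y}$ is not a cocycle, so ``${\rm Sq}^{q+1}$ of a cocycle is closed'' is not available and the terms involving $d\hat{y}$ do not ``drop out.'' What actually happens (and what the paper computes) is that after the second application of $d_0$ one is left with the three terms $d(\hat{y}\cup_{p-2q}d\hat{y}) + d(\hat{y}\cup_{p-2q-1}\hat{y}) + d\hat{y}\cup_{p-2q}d\hat{y}$, none of which vanishes individually; they cancel identically mod $2$ only after applying \eqref{EqRelHighCupProd} to each of the first two and using $dd\hat{y}=0$. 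Since your declared method is precisely to reduce each item to a $\cup_i$-identity and check it against \eqref{EqRelHighCupProd}, carrying out that check honestly would fix this, but the justification you actually wrote for item 2 would not compile into a proof as it stands.
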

\begin{proof}
For point 1, we need to show that for $\bar{s}_1, \bar{s}_2 \in \bar{C}^p(M)$, 
\be
\label{EqDiffEThHomGrp}
d_0(\bar{s}_1 \boxplus \bar{s}_2) = d_0\bar{s}_1 \boxplus d_0\bar{s}_2
\ee
Writing $\bar{s}_i = (\hat{s}_i, \hat{y}_i)$, $i = 1,2$, we compute
\begin{align}
d_0(\bar{s}_1 \boxplus \bar{s}_2) \: & = (\hat{s}_3, d\hat{y}_1 + d\hat{y}_2) \\
\hat{s}_3 \: & = d\hat{s}_1 + d\hat{s}_2 + d(d\hat{y}_1 \cup_{p-2q+1} \hat{y}_2) + d(\hat{y}_1 \cup_{p-2q} \hat{y}_2) \\
& \quad + (\hat{y}_1 + \hat{y}_2) \cup_{p-2q} d(\hat{y}_1 + \hat{y}_2) + (\hat{y}_1 + \hat{y}_2) \cup_{p-2q-1} (\hat{y}_1 + \hat{y}_2) \notag \\
d_0\bar{s}_1 \boxplus d_0\bar{s}_2 \: & = (\hat{s}_4, d\hat{y}_1 + d\hat{y}_2) \\
\hat{s}_4 \: & = d\hat{s}_1 + d\hat{s}_2 + \hat{y}_1 \cup_{p-2q} d\hat{y}_1 + \hat{y}_1 \cup_{p-2q-1} \hat{y}_1  \\
& \quad + \hat{y}_2 \cup_{p-2q} d\hat{y}_2 + \hat{y}_2 \cup_{p-2q-1} \hat{y}_2 + d\hat{y}_1 \cup_{p-2q+1} d\hat{y}_2 \;.   \notag
\end{align}
After obvious cancellations, we obtain
\begin{align}
\hat{s}_3 - \hat{s}_4 \: & =  d(d\hat{y}_1 \cup_{p-2q+1} \hat{y}_2) + d(\hat{y}_1 \cup_{p-2q} \hat{y}_2) + \hat{y}_1 \cup_{p-2q} d\hat{y}_2 + \hat{y}_2 \cup_{p-2q} d\hat{y}_1 \notag \\
& \quad + \hat{y}_1 \cup_{p-2q-1} \hat{y}_2 + \hat{y}_2 \cup_{p-2q-1} \hat{y}_1 + d\hat{y}_1 \cup_{p-2q+1} d\hat{y}_2 = 0 
\end{align}
On the second line, we used \eqref{EqRelHighCupProd} to replace the differentials of the cup products and performed the cancellations.

For point 2, we have:
\be
d_0^2(\hat{s}, \hat{y}) = (d(\hat{y} \cup_{p-2q} d\hat{y}) + d(\hat{y} \cup_{p-2q-1} \hat{y}) + d\hat{y} \cup_{p-2q} d\hat{y} , 0 )
\ee
Using \eqref{EqRelHighCupProd} twice, this expression is easily seen to vanish.

For point 3, we need to show that if $\bar{s}$ is a degree $p$ E-cocycle and $\hat{r}$ is a degree $p-1$ E-cochain, then there is a degree $p-1$ E-cochain $\hat{r}'$ such that 
\be
\label{EqExECochNormSubGrp}
\bar{s} \boxplus d_0\bar{r} = d_0\bar{r}' \boxplus \bar{s} \;.
\ee
Writing $\bar{s} = (\hat{s}, \hat{y})$, $\bar{r} = (\hat{r}, \hat{x})$, we compute
\be
\bar{s} \boxplus d_0\bar{r} = (\hat{s} + d\hat{r} + \hat{x} \cup_{p-2q-1} d\hat{x} + \hat{x} \cup_{p-2q-2} \hat{x} + \hat{y} \cup_{p-2q} d\hat{x}, \hat{y} + d\hat{x}) \;.
\ee
Remark now that $\hat{y} \cup_{p-2q} d\hat{x} = d\hat{x} \cup_{p-2q} \hat{y} + d\hat{c}$ for some cochain $\hat{c}$. Defining $\bar{r}' = (\hat{r} + \hat{c}, \hat{x})$, we obtain \eqref{EqExECochNormSubGrp}.
\end{proof}

Point 3 of the Lemma ensures that the "cohomology" $\bar{Z}^p(M)/\bar{B}^{p-1}(M)$ is a group. It is easy to see that it is an abelian group.
\begin{proposition}
\label{PropCochModETh}
The quotient $\bar{Z}^p(M)/\bar{B}^{p-1}(M)$ is isomorphic to the E-theory group $E[\Gamma^{(2)},q]^p(M)$.
\end{proposition}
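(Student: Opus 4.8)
The plan is to show that the chain-level construction $\bar{Z}^\bullet(M)/\bar{B}^\bullet(M)$ computes the homotopy groups of the spectrum $E[\Gamma^{(2)},q]$ by exhibiting a long exact sequence relating it to ordinary $\mathbb{R}/\mathbb{Z}$-cohomology and $\bar{\Gamma}^{(2)}$-cohomology, and matching it with the long exact sequence \eqref{EqLongExSeqETh} via the five lemma. First I would construct two natural maps. The projection $\bar{j}: \bar{Z}^p(M) \to Z^{p-q}(M;\bar{\Gamma}^{(2)})$, $(\hat{s},\hat{y}) \mapsto \hat{y}$, is a homomorphism (the second component of \eqref{EqGrpLawCocModETh} is just addition), sends exact E-cocycles to exact cocycles by \eqref{EqDefDiffEThCoch}, and hence descends to $j: \bar{Z}^p/\bar{B}^{p-1} \to H^{p-q}(M;\bar{\Gamma}^{(2)})$. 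The inclusion $\bar{i}: Z^p(M;\mathbb{R}/\mathbb{Z}) \to \bar{Z}^p(M)$, $\hat{s} \mapsto (\hat{s},0)$, is a homomorphism (the correction terms in \eqref{EqGrpLawCocModETh} vanish when $\hat{y}_1=\hat{y}_2=0$), carries coboundaries to exact E-cocycles, and descends to $i: H^p(M;\mathbb{R}/\mathbb{Z}) \to \bar{Z}^p/\bar{B}^{p-1}$.

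Next I would identify the connecting map and check exactness of
\be
\cdots \to H^p(M;\mathbb{R}/\mathbb{Z}) \xrightarrow{i} \bar{Z}^p/\bar{B}^{p-1} \xrightarrow{j} H^{p-q}(M;\bar{\Gamma}^{(2)}) \xrightarrow{\partial} H^{p+1}(M;\mathbb{R}/\mathbb{Z}) \to \cdots
\ee
The key point is that the constraint \eqref{EqConstCocModETh} forces an E-cocycle $(\hat{s},\hat{y})$ to satisfy $d\hat{s} = {\rm Sq}^{q+1}\hat{y}$, so the image of $j$ consists precisely of those classes $y$ with ${\rm Sq}^{q+1}y = 0$ in $H^{p+1}(M;\mathbb{R}/\mathbb{Z})$; this is exactness at $H^{p-q}$ once we define $\partial = {\rm Sq}^{q+1}$ (using that ${\rm Sq}^{q+1}\hat{y} = \hat{y}\cup_{p-2q-1}\hat{y}$ is a cocycle representing ${\rm Sq}^{q+1}y$ viewed $\mathbb{R}/\mathbb{Z}$-valued via the pairing, by the remarks following \eqref{EqRelHighCupProd} and Appendix \ref{SecWuStruct}). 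Exactness at $H^p(M;\mathbb{R}/\mathbb{Z})$: if $(\hat{s},0) = d_0(\hat{r},\hat{x}) = (d\hat{r} + \hat{x}\cup_\bullet d\hat{x} + {\rm Sq}^{q+1}\hat{x}, d\hat{x})$, then $d\hat{x}=0$, so $\hat{x}$ is a cocycle, $\partial[\hat{x}] = [{\rm Sq}^{q+1}\hat{x}] = [\hat{s} - d\hat{r} - \hat{x}\cup_\bullet d\hat{x}] = [\hat{s}]$; conversely if $[\hat{s}] = \partial[\hat{x}]$ then $\hat{s} - {\rm Sq}^{q+1}\hat{x}$ is a coboundary $d\hat{r}$ and $(\hat{s},0) = d_0(\hat{r} + (\text{correction}), \hat{x}) \boxplus (\text{something exact})$ after absorbing the $\hat{x}\cup_\bullet d\hat{x}$ term which vanishes since $d\hat{x}=0$ — here I would be careful about the non-abelianness and use point 3 of Lemma \ref{LemPropUnTwistDiffETh} to commute factors. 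Exactness at $\bar{Z}^p/\bar{B}^{p-1}$: $j\circ i = 0$ is clear; if $j[(\hat{s},\hat{y})] = 0$ then $\hat{y} = d\hat{x}$ for some $\hat{x}$, and $(\hat{s},\hat{y}) \boxminus d_0(0,\hat{x})$ has vanishing second component, exhibiting the class as coming from $i$ — again modulo bookkeeping the correction cochains.

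Finally I would invoke naturality: the spectrum-level definition of $E[\Gamma^{(2)},q]$ as the fiberwise homotopy fiber of ${\rm Sq}^{q+1}: \Sigma^{-q}\bar{H}\Gamma^{(2)} \to \Sigma H\mathbb{R}/\mathbb{Z}\wedge BO(\Gamma^{(2)})$ gives exactly the long exact sequence \eqref{EqLongExSeqETh} with the same maps $i$, $j$, ${\rm Sq}^{q+1}$. The maps $i$ and $j$ I constructed at the cochain level are compatible with these (the cochain model of $H^\bullet(M;\mathbb{R}/\mathbb{Z})$ and $H^\bullet(M;\bar{\Gamma}^{(2)})$ being the standard singular ones, and ${\rm Sq}^{q+1}$ being represented by $\cup_\bullet$ as in Appendix \ref{SecWuStruct}), so the two long exact sequences fit into a ladder of abelian groups commuting up to the identifications just made. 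The five lemma then yields $\bar{Z}^p(M)/\bar{B}^{p-1}(M) \simeq E[\Gamma^{(2)},q]^p(M)$, naturally in $M$. The main obstacle I anticipate is not any single step but the careful handling of the non-abelian group law \eqref{EqGrpLawCocModETh} throughout the exactness verification: one must consistently track the higher-cup-product correction terms and repeatedly use point 3 of Lemma \ref{LemPropUnTwistDiffETh} (normality of $\bar{B}$) to rearrange $\boxplus$-products into the needed form, and one must verify that the connecting homomorphism $\partial$ is well-defined independent of the cocycle representative of $y$ — this uses \eqref{EqRelHighCupProd} to show that changing $\hat{y}$ by $d\hat{z}$ changes ${\rm Sq}^{q+1}\hat{y}$ by an $\mathbb{R}/\mathbb{Z}$-coboundary.
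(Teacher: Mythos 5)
Your proposal follows essentially the same route as the paper's proof: both define $i$ and $j$ at the cochain level, verify that the quotient $\bar{Z}^p(M)/\bar{B}^{p-1}(M)$ fits into the long exact sequence \eqref{EqLongExSeqETh} (with exactness at $H^p(M;\mathbb{R}/\mathbb{Z})$ coming from the identity $({\rm Sq}^{q+1}(\hat{y}),0) = d_0(0,\hat{y})$), and conclude by comparison with the spectrum-level sequence. Your version is simply more explicit about the non-abelian bookkeeping and the five-lemma step that the paper leaves implicit.
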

\begin{proof}
One can check that the quotient fits in the long exact sequence \eqref{EqLongExSeqETh}. $i$ sends an element $s \in H^p(M;\mathbb{R}/\mathbb{Z})$ on the equivalence class of $(\hat{s}, 0)$, where $\hat{s}$ is any cocycle representing $s$. $j$ sends the equivalence class of $(\hat{s}, \hat{y})$ onto the cohomology class $y$. Exactness at $E[\bar{\Gamma}^{(2)},q]^p(M)$ is obvious. \eqref{EqConstCocModETh} ensures exactness at $H^{p-q}(M; \bar{\Gamma}^{(2)})$, and exactness at $H^p(M;\mathbb{R}/\mathbb{Z})$ follows from 
\be
({\rm Sq}^{q+1}(\hat{y}), 0) = (\hat{y} \cup_{p-2q} \hat{y}, 0) = d_0(0, \hat{y}) 
\ee
for any $\hat{y} \in Z^{p-q-1}(M; \bar{\Gamma}^{(2)})$.
\end{proof}

\paragraph{Twisted E-theory} We need a twisted version of the generalized cohomology theory defined above. Define the parametrized spectrum $E_\nu[\Gamma^{(2)},q]$ over $BSO[\Gamma^{(2)}, q+1]$ as the fiberwise homotopy fiber of the map 
\begin{align}
\alpha := \; & {\rm Sq}^{q+1}(\bullet) + \bullet \cup \nu_{q+1} : \\
& \Sigma^{-q} H \bar{\Gamma}^{(2)} \wedge_{BO(\Gamma^{(2)})} BSO[\Gamma^{(2)}, q+1] \rightarrow \Sigma H\mathbb{R}/\mathbb{Z} \wedge BSO[\Gamma^{(2)}, q+1] \;. \notag
\end{align}
We used the notation $\wedge_{BO(\Gamma^{(2)})}$ to distinguish the fiberwise smash product from the ordinary one. A $\mathfrak{W}$-manifold $M$ comes with a natural classifying map into $BSO[\Gamma^{(2)}, q+1]$. The corresponding cohomology groups fit in a long exact sequence similar to \eqref{EqLongExSeqETh}:
\begin{align}
\label{EqLongExSeqETwTh}
...H^p(M;\mathbb{R}/\mathbb{Z}) & \stackrel{i}{\rightarrow} E_\nu[\Gamma^{(2)},q]^p(M) \stackrel{j}{\rightarrow} H^{p-q}(M;\bar{\Gamma}^{(2)}) \\
& \stackrel{\alpha}{\rightarrow} H^{p+1} (M;\mathbb{R}/\mathbb{Z}) \stackrel{i}{\rightarrow} E_\nu[\Gamma^{(2)},q]^{p+1}(M) \stackrel{j}{\rightarrow} H^{p+1-q}(M;\bar{\Gamma}^{(2)})... \notag
\end{align}

Let $\hat{\nu}$ the Wu cocycle on $M$. We have a cochain model for $E_\nu[\Gamma,q]^p(M)$ given by the group of E-cochains $\bar{C}^p(M)$, with a twisted differential
\be
\label{EqDefTwDiffEThCoch}
d(\hat{s}, \hat{y}) = (d\hat{s} + \hat{y} \cup_{p-2q} d\hat{y} + \hat{y} \cup_{p-2q-1} \hat{y} + \hat{y} \cup \hat{\nu}, d\hat{y}) \;.
\ee
We write $\bar{Z}_\nu^p(M) := {\rm ker}(d|_{\bar{C}^p(M)})$ for the group of degree p twisted E-cocycles and $\bar{B}_\nu^p(M) := {\rm im}(d|_{\bar{C}^{p-1}(M)})$ for the group of degree $p$ twisted exact E-cochains. The twisted differential has the same properties as the differential $d_0$:
\begin{lemma}
\begin{enumerate}
\item $d$ is a group homomorphism.
\item $d^2 = 0$
\item $\bar{B}_\nu^p(M)$ is a normal subgroup of $\bar{Z}_\nu^p(M)$.
\end{enumerate}
\end{lemma}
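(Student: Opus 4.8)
The plan is to mirror the proof of the untwisted statement (Lemma \ref{LemPropUnTwistDiffETh}) point by point, tracking the effect of the extra term $\hat{y} \cup \hat{\nu}$ in the differential \eqref{EqDefTwDiffEThCoch}. Throughout, I will use the defining relation \eqref{EqRelHighCupProd} for the higher cup products and the fact that $\hat{\nu}$ is a fixed cocycle on $M$ (so $d\hat{\nu} = 0$), as well as the observation already recorded in the text that $\hat{y} \cup_{p-2q-1} \hat{y} = {\rm Sq}^{q+1}\hat{y}$.

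First I would prove point 1, that $d$ is a group homomorphism, i.e. $d(\bar{s}_1 \boxplus \bar{s}_2) = d\bar{s}_1 \boxplus d\bar{s}_2$. Since $d = d_0 + (\hat{y} \cup \hat{\nu}, 0)$ and $d_0$ is already known to be a homomorphism by Lemma \ref{LemPropUnTwistDiffETh}, it suffices to check that the correction term is additive in the appropriate sense. Writing $\bar{s}_i = (\hat{s}_i, \hat{y}_i)$, the $\Gamma^{(2)}$-component of $\bar{s}_1 \boxplus \bar{s}_2$ is $\hat{y}_1 + \hat{y}_2$, so the new contribution to the $\mathbb{R}/\mathbb{Z}$-component of $d(\bar{s}_1\boxplus\bar{s}_2)$ is $(\hat{y}_1 + \hat{y}_2)\cup\hat{\nu} = \hat{y}_1\cup\hat{\nu} + \hat{y}_2\cup\hat{\nu}$ by bilinearity of the cup product; comparing with the correction terms appearing separately in $d\bar{s}_1$ and in $d\bar{s}_2$, and noting that $\boxplus$ only adds extra $\cup_i$-terms built from the $\Gamma^{(2)}$-components (which are untouched by the twist), the identity \eqref{EqDiffEThHomGrp} goes through verbatim with the single extra pair of cancelling terms $\hat{y}_1\cup\hat{\nu}$, $\hat{y}_2\cup\hat{\nu}$.

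Next, point 2, $d^2 = 0$. Computing $d^2(\hat{s},\hat{y})$, the $\Gamma^{(2)}$-component is $d(d\hat{y}) = 0$, and the $\mathbb{R}/\mathbb{Z}$-component is the untwisted expression $d_0^2$-contribution (which vanishes by Lemma \ref{LemPropUnTwistDiffETh}) plus the new terms coming from applying $d$ to the correction term and from the correction term seeing the new $\Gamma^{(2)}$-component. The new terms are $d(\hat{y}\cup\hat{\nu}) + (d\hat{y})\cup\hat{\nu}$. Using the Leibniz rule for the ordinary cup product ($\cup = \cup_0$) together with $d\hat{\nu} = 0$, we get $d(\hat{y}\cup\hat{\nu}) = d\hat{y}\cup\hat{\nu} \pm \hat{y}\cup d\hat{\nu} = d\hat{y}\cup\hat{\nu}$ up to the sign dictated by the degree of $\hat{y}$, and since everything lands in $\frac{1}{2}\mathbb{Z}/\mathbb{Z}$-valued cochains the sign is immaterial modulo $1$; hence the two new terms cancel and $d^2 = 0$.

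Finally, point 3, that $\bar{B}^p_\nu(M)$ is normal in $\bar{Z}^p_\nu(M)$. Here I would repeat the argument for point 3 of Lemma \ref{LemPropUnTwistDiffETh}: given an E-cocycle $\bar{s} = (\hat{s},\hat{y})$ and $\bar{r} = (\hat{r},\hat{x})$, I compute $\bar{s}\boxplus d\bar{r}$ and look for $\bar{r}'$ with $\bar{s}\boxplus d\bar{r} = d\bar{r}'\boxplus \bar{s}$. The only modification relative to the untwisted case is that $d\bar{r}$ now carries the extra summand $\hat{x}\cup\hat{\nu}$ in its $\mathbb{R}/\mathbb{Z}$-component, but this summand depends only on $\hat{x}$, which is also the $\Gamma^{(2)}$-component of $d\bar{r}'$; so the same correction cochain $\hat{c}$ (arising from $\hat{y}\cup_{p-2q} d\hat{x} = d\hat{x}\cup_{p-2q}\hat{y} + d\hat{c}$) works, and one checks the twisted terms on both sides match identically. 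The main obstacle, such as it is, is purely bookkeeping: keeping track of the degrees in the higher-cup-product Leibniz rule \eqref{EqRelHighCupProd} and confirming that every sign that could obstruct a cancellation is killed by working modulo $1$ with $\frac{1}{2}\mathbb{Z}/\mathbb{Z}$- or $\mathbb{Z}_2$-valued terms; there is no conceptual difficulty beyond what was already handled in the untwisted lemma.
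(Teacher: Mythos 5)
Your proposal is correct and follows essentially the same route as the paper: all three points are deduced from the untwisted Lemma by tracking the extra terms produced by the twist, namely $(\hat{y}_1+\hat{y}_2)\cup\hat{\nu}$ on both sides for point 1, the two copies of $d\hat{y}\cup\hat{\nu}$ cancelling for point 2, and the matching $\hat{x}\cup\hat{\nu}$ contributions from $d\bar{r}$ and $d\bar{r}'$ for point 3.
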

\begin{proof}
All three points are easily deduced from Lemma \ref{LemPropUnTwistDiffETh} by analysing the extra terms induced by the twisted differential.

For point 1, notice that in the twisted case, both sides of \eqref{EqDiffEThHomGrp} get an extra contribution of $(\hat{y}_1 + \hat{y}_2) \cup \hat{\nu}$. For point 2, notice that the two extra terms generated by the twisted differentials are both equal to $d\hat{y} \cup \hat{\nu}$ and cancel each other. For point 3, the extra terms induced by $d\bar{r}$ and $d\bar{r}'$ are both $\hat{x} \cup \hat{\nu}$.
\end{proof}
\begin{proposition}
\label{PropCochModTwETh}
The quotient $\bar{Z}^p_\nu(M)/d \bar{C}^p(M)$ is isomorphic to the E-theory group $E_\nu[\Gamma^{(2)},q]^p(M)$.
\end{proposition}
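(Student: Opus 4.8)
The plan is to mimic exactly the proof of Proposition \ref{PropCochModETh}, replacing the untwisted differential $d_0$ with the twisted differential $d$ of \eqref{EqDefTwDiffEThCoch} throughout, and keeping track of the extra term $\hat{y} \cup \hat{\nu}$ at every stage. The statement to prove is that the cohomology group $\bar{Z}^p_\nu(M)/\bar{B}^p_\nu(M) = \bar{Z}^p_\nu(M)/d\bar{C}^{p-1}(M)$ of the twisted cochain complex is isomorphic to the twisted E-theory group $E_\nu[\Gamma^{(2)},q]^p(M)$ defined via the parametrized spectrum $E_\nu[\Gamma^{(2)},q]$ sitting in the long exact sequence \eqref{EqLongExSeqETwTh}.

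First I would exhibit the two natural maps $i$ and $j$, verbatim as in Proposition \ref{PropCochModETh}: the map $i: H^p(M;\mathbb{R}/\mathbb{Z}) \to \bar{Z}^p_\nu(M)/\bar{B}^p_\nu(M)$ sends the class of a cocycle $\hat{s}$ to the class of the E-cocycle $(\hat{s}, 0)$ — note $(\hat{s},0)$ is $d$-closed since the twisted differential applied to it gives $(d\hat{s}, 0) = 0$, exactly as in the untwisted case, the twist term being proportional to $\hat{y}$ — and the map $j: \bar{Z}^p_\nu(M)/\bar{B}^p_\nu(M) \to H^{p-q}(M;\bar{\Gamma}^{(2)})$ sends the class of $(\hat{s},\hat{y})$ to the cohomology class $y$; this is well-defined because the second component of $d$ is unchanged ($d\hat{y}$), so a twisted E-cocycle still has $d\hat{y}=0$, and an exact twisted E-cochain still has $\hat{y}$ equal to a coboundary. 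Then I would check exactness of the sequence $H^p(M;\mathbb{R}/\mathbb{Z}) \xrightarrow{i} \bar{Z}^p_\nu(M)/\bar{B}^p_\nu(M) \xrightarrow{j} H^{p-q}(M;\bar{\Gamma}^{(2)})$ together with the connecting map $\alpha$ into $H^{p+1}(M;\mathbb{R}/\mathbb{Z})$, and compare with \eqref{EqLongExSeqETwTh}. Exactness at the middle term is immediate from the definitions; exactness at $H^{p-q}(M;\bar{\Gamma}^{(2)})$ follows from the first defining relation of a twisted E-cocycle, which reads $d\hat{s} = {\rm Sq}^{q+1}\hat{y} + \hat{y}\cup\hat{\nu}$, i.e. the obstruction to lifting a cocycle $\hat{y}$ to a twisted E-cocycle is precisely the cohomology class of ${\rm Sq}^{q+1}\hat{y} + \hat{y}\cup\hat{\nu}$, which is the image of $\alpha$; and exactness at $H^p(M;\mathbb{R}/\mathbb{Z})$ follows from the identity
\be
d(0,\hat{y}) = (\hat{y}\cup_{p-2q-1}\hat{y} + \hat{y}\cup\hat{\nu}, 0) = ({\rm Sq}^{q+1}\hat{y} + \hat{y}\cup\hat{\nu}, 0)
\ee
for $\hat{y} \in Z^{p-q-1}(M;\bar{\Gamma}^{(2)})$ (using that $\hat{y}\cup_{p-2q}d\hat{y} = 0$ when $\hat{y}$ is closed), so that a cocycle $\hat{s}$ whose class is killed by $i$ is precisely one that differs by a coboundary from $-({\rm Sq}^{q+1}\hat{y} + \hat{y}\cup\hat{\nu})$, matching the image of $\alpha$. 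Finally, invoking the five lemma against the long exact sequence \eqref{EqLongExSeqETwTh} — the outer vertical maps being identities on ordinary cohomology groups — yields the asserted isomorphism.

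The main obstacle, as in the untwisted case, is bookkeeping: one has to check that the twisted differential restricted to E-cochains of the form $(0,\hat{y})$ reproduces exactly the map $\alpha = {\rm Sq}^{q+1}(\bullet) + \bullet \cup \nu_{q+1}$ that defines $E_\nu[\Gamma^{(2)},q]$ at the spectrum level, and that the connecting homomorphism of the algebraic mapping-cone complex $(\bar{C}^\bullet(M), d)$ agrees under $j$ with the geometric one in \eqref{EqLongExSeqETwTh}. This is a matter of unwinding the fiberwise-homotopy-fiber construction of the parametrized spectrum and comparing with the cone construction implicit in the group law $\boxplus$ and differential $d$; once one accepts (as we may, the untwisted version being Proposition \ref{PropCochModETh}) that the cochain model computes the cohomology of the mapping cone of the cochain-level operation, the twisted statement is obtained simply by replacing ${\rm Sq}^{q+1}$ with ${\rm Sq}^{q+1} + (\bullet)\cup\hat{\nu}$ everywhere, and no genuinely new difficulty arises. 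All the lemmas needed — that $d$ is a homomorphism, that $d^2 = 0$, and that $\bar{B}^p_\nu(M)$ is normal in $\bar{Z}^p_\nu(M)$ — have already been established in the excerpt, so the proof is essentially a transcription of the proof of Proposition \ref{PropCochModETh} with this substitution.
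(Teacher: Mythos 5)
Your proposal is correct and follows essentially the same route as the paper: the paper's own proof of Proposition \ref{PropCochModTwETh} consists of the single remark that the quotients fit into the long exact sequence \eqref{EqLongExSeqETwTh} by the same argument as Proposition \ref{PropCochModETh}, which is exactly the transcription you carry out (with the connecting map now given by $\alpha = {\rm Sq}^{q+1}(\bullet) + \bullet \cup \nu_{q+1}$ via the identity $d(0,\hat{y}) = ({\rm Sq}^{q+1}\hat{y} + \hat{y}\cup\hat{\nu},0)$ for closed $\hat{y}$). Your version is in fact more explicit than the paper's.
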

\begin{proof}
The quotients fit in the long exact sequence \eqref{EqLongExSeqETwTh}, see the proof of Proposition \ref{PropCochModETh}.
\end{proof}

\paragraph{Integration on manifolds} An integration theory exists for $E[\mathbb{Z}_2,1]$-theory with trivial local systems \cite{Freed:2006mx} (Proposition 4.4), however the proof there does not generalize because of the lack of multiplicative structure on the Brown-Comenetz dual of $E_\nu[\Gamma^{(2)},q]$ for $q > 1$. We describe here a construction that provides a partial substitute for integration.

\begin{proposition}
\label{PropConstrFundHomEClass}
Let $M$ be a $\mathfrak{W}$-manifold of dimension $m < 2q+2$, possibly with boundary. Then there is a canonical element $u_E \in {\rm Hom}(E_\nu[\Gamma^{(2)},q]^m(M,\partial M); \mathbb{R}/\mathbb{Z})$. 
\end{proposition}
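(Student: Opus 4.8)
The plan is to build the homomorphism $u_E$ by using the long exact sequence \eqref{EqLongExSeqETwTh}, reducing the construction to the ordinary fundamental homology class paired with $\mathbb{R}/\mathbb{Z}$-valued cohomology. First I would recall that $M$ has a fundamental class $[M,\partial M] \in H_m(M,\partial M; \mathbb{Z})$ (orientation is part of the $\mathfrak{W}$-structure), so pairing with it gives a canonical homomorphism $\langle \bullet, [M,\partial M]\rangle : H^m(M,\partial M; \mathbb{R}/\mathbb{Z}) \to \mathbb{R}/\mathbb{Z}$. The map $i$ in \eqref{EqLongExSeqETwTh} sends $H^m(M,\partial M;\mathbb{R}/\mathbb{Z})$ into $E_\nu[\Gamma^{(2)},q]^m(M,\partial M)$, and the claim will be that $\langle \bullet, [M,\partial M]\rangle$ factors through $i$, i.e. descends to a homomorphism on the E-cohomology group; this factored map is the desired $u_E$.

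The key step is to show that the dimension constraint $m < 2q+2$ forces $i$ to be surjective and that its kernel lies in the kernel of $\langle \bullet, [M,\partial M]\rangle$. From \eqref{EqLongExSeqETwTh}, surjectivity of $i$ in degree $m$ is equivalent to $j : E_\nu^m \to H^{m-q}(M;\bar\Gamma^{(2)})$ being zero, which holds if the next map $\alpha = {\rm Sq}^{q+1}(\bullet) + \bullet \cup \nu_{q+1}$ from $H^{m-q}(M;\bar\Gamma^{(2)})$ to $H^{m+1}(M;\mathbb{R}/\mathbb{Z})$ is injective. Here is where $m < 2q+2$ enters: for $y \in H^{m-q}(M;\bar\Gamma^{(2)})$ with $m - q \le q$, i.e. $m \le 2q$, the relevant comparison is with the Wu class identity. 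Actually the cleanest route is via the cochain model: by Proposition \ref{PropCochModTwETh}, a relative E-cohomology class in degree $m$ is represented by $(\hat s, \hat y)$ with $\hat y \in Z^{m-q}(M,\partial M;\bar\Gamma^{(2)})$, and one computes using \eqref{EqDefTwDiffEThCoch} and the Wu class characterization \eqref{EqProSteenSqLocSys}. I would argue that when $m < 2q+2$ one may alter the representative by an exact E-cochain to arrange $\hat y = 0$ whenever $\langle \hat s, [M,\partial M]\rangle$ is to be evaluated — more precisely, the obstruction to doing so is exactly $\alpha(y)$, and the pairing $\langle {\rm Sq}^{q+1}(\hat y) + \hat y \cup \hat\nu, [M,\partial M]\rangle$ vanishes by \eqref{EqProSteenSqLocSys} (which identifies ${\rm Sq}^{q+1}$ on a degree $m-q \le d-(q+1)$ class with cup product with $\nu_{q+1}$) combined with the sign conventions, so that the two contributions cancel modulo $1$.

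Concretely, I would define, for an E-cocycle $\bar x = (\hat s, \hat y) \in \bar Z^m_\nu(M,\partial M)$, the value
\be
u_E(\bar x) := \langle \hat s, [M,\partial M] \rangle \in \mathbb{R}/\mathbb{Z} \;,
\ee
and then verify: (i) well-definedness on cocycles is immediate; (ii) independence of the choice of representative within the E-cohomology class, which requires $\langle d\bar r|_{\hat s\text{-component}}, [M,\partial M]\rangle = 0$ for $\bar r = (\hat r, \hat x)$ vanishing on $\partial M$, i.e. $\langle d\hat r + \hat x \cup_{m-2q-1} d\hat x + \hat x \cup_{m-2q-2}\hat x + \hat x \cup \hat\nu, [M,\partial M]\rangle = 0$ — the $d\hat r$ term dies by Stokes, and the remaining terms are handled by the dimensional vanishing together with \eqref{EqProSteenSqLocSys}; (iii) additivity, which follows from the linearity-type property since the correction term $d\hat y_1 \cup_{m-2q+1}\hat y_2 + \hat y_1 \cup_{m-2q}\hat y_2$ in \eqref{EqGrpLawCocModETh} pairs to zero against $[M,\partial M]$ for degree reasons when $m < 2q+2$ (the relevant cup products land in cochains of the wrong degree or involve $\hat y_i$ of degree $< m - q$ in a combination that is exact). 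Equivalently and more conceptually, $u_E$ is the homomorphism induced by $i^\vee$ applied to $\langle\bullet,[M,\partial M]\rangle$, using that $i$ is surjective and the composite $\langle\bullet,[M,\partial M]\rangle \circ i^{-1}$ is well-defined on the quotient.

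\textbf{Main obstacle.} The hard part will be step (ii)–(iii): controlling the higher-cup-product correction terms in $\boxplus$ and in $d$ and showing they pair trivially with the fundamental class. This is where the inequality $m < 2q+2$ must be used sharply — it guarantees $m - q \le q+1$ so that the degree-$(m-q)$ classes $\hat y$ are low enough for the Wu formula ${\rm Sq}^{q+1}(\hat y) = \hat y \cup \nu_{q+1}$ (after passing to cohomology, or up to an exact term on cochains via \eqref{EqUnivDefWuClassLocSysCochains}) to apply, making the twisted differential's $\hat y$-dependent terms cohomologically trivial against $[M,\partial M]$. I would also need to keep careful track of the $\tfrac12\mathbb{Z}/\mathbb{Z}$ versus $\mathbb{R}/\mathbb{Z}$ coefficient bookkeeping and the signs in \eqref{EqRelHighCupProd}, since the cancellation is modulo $1$ and relies on the values being half-integral. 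Once these pairings are shown to vanish, canonicity of $u_E$ (independence of all auxiliary choices: the cocycle $\hat\nu$, the cochain $\hat\eta$, the cellular structure) follows because the fundamental class $[M,\partial M]$ is canonical and the formula only used it.
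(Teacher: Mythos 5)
There is a genuine gap, and it is fatal to the strategy rather than a fixable detail. Your plan rests on the claim that for $m<2q+2$ the map $i:H^m(M,\partial M;\mathbb{R}/\mathbb{Z})\to E_\nu[\Gamma^{(2)},q]^m(M,\partial M)$ is surjective, because $\alpha$ is injective on $H^{m-q}(M,\partial M;\bar{\Gamma}^{(2)})$. But $\alpha$ lands in $H^{m+1}(M,\partial M;\mathbb{R}/\mathbb{Z})$, which vanishes since $M$ is an $m$-manifold; so by exactness $j$ is \emph{surjective} onto $H^{m-q}(M,\partial M;\bar{\Gamma}^{(2)})$, which is nonzero in the cases of interest (e.g.\ $m=4\ell+3$, $q=2\ell+1$, $m-q=2\ell+2$). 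Hence $i$ is not surjective and $u_E$ cannot be obtained by transporting $\langle\bullet,[M,\partial M]\rangle$ through $i$. The explicit cochain formula $u_E(\hat{s},\hat{y}):=\langle\hat{s},[M,\partial M]\rangle$ fails for two concrete reasons. First, it is not additive under $\boxplus$: for two E-cocycles the correction term $\hat{y}_1\cup_{m-2q}\hat{y}_2$ in \eqref{EqGrpLawCocModETh} is an honest degree-$m$ cochain with no degree reason to pair trivially with $[M,\partial M]$ --- this is exactly the $\cup_1$ term computed in Proposition \ref{PropActQuadrRef}, which is responsible for the action being a quadratic refinement rather than linear. Second, it does not descend to E-cohomology: for an exact E-cochain $d(\hat{r},\hat{x})$ the $\mathbb{R}/\mathbb{Z}$-component contains $\hat{x}\cup_{m-2q-1}d\hat{x}+{\rm Sq}^{q+1}(\hat{x})+\hat{x}\cup\hat{\nu}$ with $\hat{x}$ a generally \emph{non-closed} cochain, so the Wu identity \eqref{EqProSteenSqLocSys}, which is a statement about cohomology classes, does not make these terms pair to zero. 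Indeed, if your formula were well defined on E-cohomology, then $S(M,\check{x})$ would equal the naive action $S^{\rm naive}(\check{x})$ of \eqref{EqNaiveAction}, contradicting Remark \ref{RemLagNotGaugeInv}: the whole point of Proposition \ref{PropConstrFundHomEClass} is to supply the ``fermionic correction'' to $\langle\hat{s},[M,\partial M]\rangle$ on classes outside ${\rm im}(i)$.

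For comparison, the paper's proof is homotopy-theoretic and does not attempt an explicit cochain formula. It introduces the parametrized spectrum $F_\nu[\Gamma^{(2)},q]$ as the fiber of $\chi{\rm Sq}^{q+1}(\bullet)\otimes\gamma_1+\nu_{q+1}\cup\bullet$, identifies ${\rm Hom}(E_\nu[\Gamma^{(2)},q]^m(M,\partial M);\mathbb{R}/\mathbb{Z})$ with $F_\nu$-homology via Brown--Comenetz duality, converts this to $F_\nu$-cohomology of the Thom space of the normal bundle via Atiyah duality, and then produces the canonical class as a lift $u_F$ of the Thom class through the fiber sequence; the lift exists because the defining property of the Wu class makes $\tilde{\alpha}\circ u$ null-homotopic. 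The extension of $\langle\bullet,[M,\partial M]\rangle$ from ${\rm im}(i)$ to all of $E_\nu^m$ is exactly the content supplied by this Thom-class lift, and it cannot be recovered from ordinary cohomology and the fundamental class alone.
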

Canonical should be understood as "canonical up to universal choices on the classifying spaces/spectra". By a slight abuse of language, we will call $u_E$ the fundamental E-homology class of the $\mathfrak{W}$-manifold $M$ and write it $[M,\partial M]_E$, or simply $[M]_E$ when $M$ has an empty boundary. Given any E-cohomology class $\bar{s}$ of degree $m$, we define its integral over $M$ by $\langle \bar{s}, [M]_E \rangle \in \mathbb{R}/\mathbb{Z}$.
\begin{proof}
Consider the parametrized spectrum $F_\nu[\Gamma^{(2)},q]$ over $BSO[\Gamma^{(2)}, q+1]$ defined as the fiberwise homotopy fiber of the map
\begin{align}
\label{DefSpecFHomFiber}
\tilde{\alpha} := \: & \chi {\rm Sq}^{q+1}(\bullet) \otimes \gamma_1 + \nu_{q+1} \cap \bullet : \\ 
& H\mathbb{Z} \wedge BSO[\Gamma^{(2)}, q+1] \rightarrow  \Sigma^{q+1} H \bar{\Gamma}^{(2)} \wedge_{BO(\Gamma^{(2)})} BSO[\Gamma^{(2)}, q+1] \;, \notag
\end{align}
where $\chi$ is the antipode of the Steenrod algebra and $\gamma_1$ is the characteristic element of $\Gamma^{(2)}$ (see Lemma \ref{LemCharElGamma2}). The $F_\nu[\Gamma^{(2)},q]$-homology of the pair $(M,\partial M)$ fits in the long exact sequence
\begin{align}
... \rightarrow H_{p+q}(M,\partial M; \bar{\Gamma}^{(2)}) \rightarrow & \: F_\nu[\Gamma^{(2)},q]_{p}(M,\partial M) \notag \\
\rightarrow & \: H_{p}(M,\partial M; \mathbb{Z}) \stackrel{\tilde{\alpha}}{\rightarrow} H_{p+q+1}(M,\partial M; \bar{\Gamma}^{(2)}) \rightarrow ... \;,
\end{align}

We have canonical isomorphisms
\be
H_p(M,\partial M; \mathbb{Z}) \simeq {\rm Hom}(H^p(M,\partial M; \mathbb{R}/\mathbb{Z}); \mathbb{R}/\mathbb{Z}) \;,
\ee
\be
H_{p+q}(M,\partial M; \bar{\Gamma}^{(2)}) \simeq {\rm Hom}(H^{p+q}(M, \partial M; \bar{\Gamma}^{(2)}); \mathbb{R}/\mathbb{Z}) \;.
\ee 
We deduce that 
\be
F_{\nu}[\Gamma^{(2)},q]_{p}(M,\partial M) \simeq {\rm Hom}(E^p_\nu[\Gamma^{(2)},q](M, \partial M); \mathbb{R}/\mathbb{Z}) \;,
\ee
Although there is no canonical isomorphism, an isomorphism can be fixed universally by choosing a homotopy equivalence between the Brown-Comenetz dual of $E_\nu[\Gamma^{(2)},q]$ and $F_\nu[\Gamma^{(2)},q]$ \cite{1976Brown-Comenetz}. (All these constructions are understood fiberwise over $BSO[\Gamma^{(2)}, q+1]$.) Pick an embedding $M \rightarrow \mathbb{R}^n$ for $n$ large enough, and let $\mathscr{N}M$ be the normal bundle. Atiyah duality further ensures that
\be
F_\nu[\Gamma^{(2)},q]_{p}(M,\partial M) \simeq F_\nu[\Gamma^{(2)},q]^{n-p}(T_{\mathscr{N}M}) \;,
\ee
where $T_{\mathscr{N}M}$ is the Thom space of $\mathscr{N}M$. Therefore, it is sufficient to construct a canonical element in $F_\nu[\Gamma^{(2)},q]^{n-m}(T_{\mathscr{N}M})$ to prove the proposition.

Let $MSO[\Gamma^{(2)},q+1]$ be the Thom spectrum of the universal bundle over $BSO[\Gamma^{(2)},q+1]$. We have the diagram over $MSO[\Gamma^{(2)},q+1]$
\be
\label{EqThomClassETheory}
\begin{tikzcd}
F_\nu[\Gamma^{(2)},q] \arrow[r] & H\mathbb{Z} \wedge MSO[\Gamma^{(2)},q+1] \arrow[r, "\tilde{\alpha}"] & \Sigma^{-q-1} H \bar{\Gamma}^{(2)} \wedge_{BO(\Gamma^{(2)})} MSO[\Gamma^{(2)},q+1] \\
& MSO[\Gamma^{(2)},q+1] \arrow[ul, "u_F", dotted] \arrow[u, "u"] \arrow[ur, "{\rm cst}"] & \;.
\end{tikzcd} 
\ee
where $u$ is an $H\mathbb{Z}$-valued map representing the Thom class. (The labels describe the maps between the fibers over $MSO[\Gamma^{(2)},q+1]$.) The properties of the Wu class ensure that $\tilde{\alpha} \circ u $ is homotopic to the constant map. Making a universal choice of homotopy, we obtain a homotopy class of maps $u_F$ as shown in \eqref{EqThomClassETheory}. This means that a $\mathfrak{W}$-manifold $M$ is associated to a canonical class in $F_\nu[\Gamma^{(2)},q]^{n-m}(T_{\mathscr{N}M})$, proving the proposition.
\end{proof}
\begin{proposition}
\label{PropFundHomEClassDisUn}
Let $M$ as in Proposition \ref{PropConstrFundHomEClass}, with the additional assumption that $M = M_1 \sqcup M_2$ decomposes into the disjoint union of two $\mathfrak{W}$-manifolds. Then
\be
\label{EqFundHomEClassDisUn}
[M,\partial M]_E = [M_1, \partial M_1]_E + [M_2, \partial M_2]_E \;.
\ee
\end{proposition}
\begin{proof}
The Mayer-Vietoris sequence associated to the generalized cohomology $F[\nu,\Gamma,q]$ defined in \eqref{DefSpecFHomFiber}, together with the fact that $M_1 \cap M_2$ is the empty manifold, yields \eqref{EqFundHomEClassDisUn}.
\end{proof}
We recall that the orientation flip is denoted by a minus sign.
\begin{proposition}
\label{PropFundHomEClassOrFl}
Let $M$ be as in Proposition \ref{PropConstrFundHomEClass}. Then
\be
\label{EqFundHomEClassOrFl}
[-M,\partial (-M)]_E = -[M,\partial M]_E \;.
\ee
\end{proposition}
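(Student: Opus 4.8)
The proof of Proposition \ref{PropFundHomEClassOrFl} should proceed in parallel with the proof of Proposition \ref{PropConstrFundHomEClass}, tracking how each step in the construction of $[M,\partial M]_E$ transforms under an orientation reversal. The plan is to recall that $[M,\partial M]_E = u_E$ was built as a chain of canonical identifications: an Atiyah-duality isomorphism $F_\nu[\Gamma^{(2)},q]_p(M,\partial M) \simeq F_\nu[\Gamma^{(2)},q]^{n-p}(T_{\mathscr{N}M})$, followed by the pull-back of the universal class $u_F \in F_\nu[\Gamma^{(2)},q]^{n-m}(MSO[\Gamma^{(2)},q+1])$ along the classifying map of $M$. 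Reversing the orientation of $M$ does not change the underlying space, the normal bundle $\mathscr{N}M$, nor the classifying map to $BSO[\Gamma^{(2)},q+1]$; what changes is the fundamental $\mathbb{Z}$-homology class $[M,\partial M] \in H_m(M,\partial M;\mathbb{Z})$, which flips sign, and correspondingly the Thom class of $\mathscr{N}M$, which changes sign under the reversal of orientation of the normal bundle.

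First I would make precise that under orientation reversal the only ingredient of the construction that changes is the Thom class $u$ in the diagram \eqref{EqThomClassETheory}: the orientation of $M$ together with the ambient orientation of $\mathbb{R}^n$ determines an orientation of $\mathscr{N}M$, hence the Thom class $u(\mathscr{N}M) \in H^{n-m}(T_{\mathscr{N}M};\mathbb{Z})$, and $u(\mathscr{N}(-M)) = -u(\mathscr{N}M)$. Second, I would observe that the lift $u_F$ is obtained from $u$ by a universal choice of nullhomotopy of $\tilde\alpha \circ u$, and since $\tilde\alpha$ is ($H\mathbb{Z}$-)linear in its argument, the nullhomotopy for $-u$ can be taken to be the negative of the one for $u$; hence $u_F(-M) = -u_F(M)$ as elements of $F_\nu[\Gamma^{(2)},q]^{n-m}(T_{\mathscr{N}M})$. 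Third, the Atiyah duality isomorphism and the identifications with $\mathrm{Hom}(\,\cdot\,;\mathbb{R}/\mathbb{Z})$ groups are additive, so they carry this sign through unchanged, yielding $[-M,\partial(-M)]_E = -[M,\partial M]_E$ in $\mathrm{Hom}(E_\nu[\Gamma^{(2)},q]^m(M,\partial M);\mathbb{R}/\mathbb{Z})$.

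The main obstacle is the bookkeeping around the parametrized-spectrum setup: one must check that the "universal choice of homotopy" defining $u_F$ really can be, and is, chosen compatibly for the two orientations, i.e. that negation on the fiber $H\mathbb{Z}$ intertwines the two nullhomotopies. This is a formal consequence of the $H\mathbb{Z}$-module structure on all the spectra involved and of the fact that $\tilde\alpha$ is a map of $H\mathbb{Z}$-modules (it is built from stable cohomology operations tensored with fixed coefficients), but it needs to be stated carefully. An alternative, more hands-on route, which I would mention as a remark, is to work at the level of the E-chain model of Appendix \ref{SecEth}: a representative E-chain $\bar n$ for $[M,\partial M]_E$ is a homomorphism $\bar C^\bullet(M)\to\mathbb{R}/\mathbb{Z}$ built from the ordinary fundamental chain of $M$, and orientation reversal negates that ordinary chain while leaving the higher-cup-product structure on $\bar C^\bullet(M)$ untouched, so $\bar n \mapsto -\bar n$; this is exactly how \eqref{EqFundHomEClassOrFl} is used in Proposition \ref{PropCompPreqDagStruct}. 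Either way, the statement reduces to the elementary fact that the fundamental class of an oriented manifold changes sign under orientation reversal, lifted through a sequence of additive natural transformations.
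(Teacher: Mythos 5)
Your proof is correct and follows essentially the same route as the paper, whose entire argument is the one-line observation that the Thom class $u$ picks up a sign under orientation flip; you have simply spelled out the propagation of that sign through the nullhomotopy defining $u_F$, Atiyah duality, and the $\mathrm{Hom}$ identifications, which the paper leaves implicit. The extra care you take about the compatibility of the universal nullhomotopy with negation is a reasonable elaboration of a point the paper glosses over, but it does not constitute a different approach.
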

\begin{proof}
The Thom class $u$ picks up a sign under orientation flip, which yields the sign in \eqref{EqFundHomEClassOrFl}.
\end{proof}

\paragraph{E-homology chains}

The construction above provides an integration map on the degree $m$ relative E-cohomology groups of a $m$-dimensional $\mathfrak{W}$-manifold with boundary. We would like to extend this integration map to arbitrary cochains: this is the analogue to picking a cycle  representative of the fundamental homology class in ordinary cohomology. Unfortunately, we do not have a model for E-homology in terms of ordinary chains. This comes from the fact that the cochain model for E-cohomology crucially uses the cup products, and that there is no analogue of the cup product for chains. One may still try to define E-homology chains (or "E-chains") as homomorphisms from the cochain group $\bar{C}^\bullet(M)$ into $\mathbb{R}/\mathbb{Z}$. It turns out that this definition, which appeared in a previous version of the present paper, is too naive. The non-abelian nature of the E-cochain group would require such E-chains to vanish on a large subset of E-cochains. 

The useful definition turns out to be the following. Let $M$ be a $m$-dimensional $\mathfrak{W}$-manifold, possible with boundary. A degree $m$ E-chain $\bar{m}$ is a function from $\bar{C}^m(M)$ into $\mathbb{R}/\mathbb{Z}$ subject to the following constraint. Let $\bar{s}_i = (s_i, y_i) \in \bar{C}^m(M)$, $i = 1,2$ and let us write the evaluation of $\bar{m}$ on $\bar{s} \in \bar{C}^m(M)$ as $\langle \bar{s}, \bar{m} \rangle$. We require that
\be
\label{EqLinRelEChain}
\langle \bar{s}_1 \boxplus \bar{s}_2, \bar{m} \rangle = \langle \bar{s}_1, \bar{m} \rangle + \langle \bar{s}_2, \bar{m} \rangle + c_{\bar{m}}(y_1, y_2)
\ee
for some $c_{\bar{m}}: C^m(M;\Gamma^{(2)}) \otimes C^m(M;\Gamma^{(2)}) \rightarrow \mathbb{R}/\mathbb{Z}$ satisfying
\be
c_{\bar{m}}(y_1, y_2) + c_{\bar{m}}(y_2, y_1) = \langle(dy_1 \cup_{m-2q+2} dy_2, 0), \bar{m}\rangle
\ee
and such that $c_{\bar{m}}(y_1, y_2) = 0$ whenever $y_1$ or $y_2$ is closed. The condition \eqref{EqLinRelEChain} makes $\bar{m}$ as close to a group homomorphism as possible. The necessity of the third term on the right-hand side can be derived by comparing 
$\langle \bar{s}_1 \boxplus \bar{s}_2, \bar{m} \rangle$ and $\langle \bar{s}_2 \boxplus \bar{s}_1, \bar{m} \rangle$. There is no obstruction to the existence of 
$c_{\bar{m}}$, although there is no canonical choice. For instance, one can enumerate
all unordered pairs of degree 4 $\Gamma^{(2)}$-cochains $(y_1, y_2)$, pick an arbitrary order on each and set $c_{\bar{m}}(y_1, y_2) = \langle(dy_1 \cup_{m-2q-2} dy_2, 0), \bar{m} \rangle$ and $c_{\bar{m}}(y_2, y_1) = 0$. 

%

E-chains are group homomorphisms when restricted to E-cocycles. A \emph{fundamental E-chain} on $M$ is an E-chain whose associated group homomorphism on relative E-cocyles coincides with the pullback of the integration map on E-cohomology constructed previously. We end this appendix by enumerating a few properties of fundamental E-chains. 

\begin{enumerate}
\item Let $\bar{x}$ be a degree $m$ E-cochain on $\partial M$, extending to an E-cochain $\bar{x}'$ on $M$, then $\bar{m}$ induces a closed fundamental E-chain $\partial \bar{m}$ on $\partial M$ through
\be
\langle \bar{x}, \partial \bar{m} \rangle = \langle d\bar{x}', \bar{m} \rangle \;.
\ee

\item A fundamental E-chain $\bar{m}$ is a group homomorphism when restricted to the subgroup of E-cochains of the form $(\hat{s},0)$. $\bar{m}$ therefore has an associated cycle representative $\hat{m}_{\mathbb{R}/\mathbb{Z}}$ of the fundamental $\mathbb{R}/\mathbb{Z}$-valued homology class of $M$. We have
\be
\langle (\hat{s},0), \bar{m} \rangle = \langle \hat{s}, \hat{m}_{\mathbb{R}/\mathbb{Z}} \rangle \;.
\ee

\item Let $f(\hat{y}) := \langle (0,\hat{y})$, $\hat{y} \in C^{m-q}(M;\bar{\Gamma}^{(2)})$. Given $\hat{y}_1, \hat{y}_2 \in C^{m-q}(M;\bar{\Gamma}^{(2)})$, we have
\be
f(\hat{y}_1 + \hat{y}_2) - f(\hat{y}_1) - f(\hat{y}_2) =  \langle d\hat{y}_1 \cup_{m-2q+1} \hat{y}_2 + \hat{y}_1 \cup_{m-2q} \hat{y}_2, \hat{m}_{\mathbb{R}/\mathbb{Z}} \rangle + c(\hat{y}_1, \hat{y}_2) \;,
\ee
so $f$ is a quadratic refinement of the pairing on $C^{m-q}(M;\bar{\Gamma}^{(2)})$ appearing on the right-hand side.\footnote{We thank Greg Moore for pointing this out to us.}
\end{enumerate}

\section{Bordism categories of $\mathfrak{W}$-manifolds}

\label{ApBordCatWMan}

In this appendix, we define the domain categories of the prequantum theory functor $\mathcal{S}^e$ and of the quantum theory functor $\mathcal{R}$.

\paragraph{Bordism category of $\mathfrak{W}$-manifolds} We define the bordism category $\mathcal{B}_\mathfrak{W}^{4\ell+3,1}$ of $4\ell+2$-dimensional $\mathfrak{W}$-manifolds as follows. 

The objects of $\mathcal{B}_\mathfrak{W}^{4\ell+3,1}$ are closed $4\ell+2$-dimensional $\mathfrak{W}$-manifolds $N$, endowed with germs of smooth structures on $N \times \{0\} \subset N \times [-1,1]$. In the rest of the paper, although we do not state it explicitly, all the $4\ell+2$-dimensional $\mathfrak{W}$-manifolds are endowed with such germs. 

The morphisms from an object $N^-$ to an object $N^+$ are triplets $(M, \theta^-, \theta^+)$ called $4\ell+3$-dimensional $\mathfrak{W}$\emph{-bordisms}. Here $M$ is a $4\ell+3$-dimensional $\mathfrak{W}$-manifold whose boundary has been decomposed into two disjoint components: $\partial M = \partial^- M \sqcup \partial^+ M$. The restriction from $M$ to $\partial^\pm M$ induce germs of smooth structures on $\partial^\pm M$. $\theta^\pm: \pm N^\pm \rightarrow \partial^\pm M$ are isomorphisms of $\mathfrak{W}$-manifolds preserving the germs of smooth structures on the domain and target. $-N_-$ denotes $N_-$ with the opposite orientation.

The composition in $\mathcal{B}_\mathfrak{W}^{4\ell+3,1}$ of a bordism $(M_1, \theta^-_1, \theta^+_1)$ from $N^-$ to $N$ with a bordism $(M_2, \theta^-_2, \theta^+_2)$ from $N$ to $N^+$ is the bordism $(M_1 \sqcup_{(\theta^-_2)^{-1} \circ \theta^+_1} M_2, \theta^-_1, \theta^+_2)$ obtained by quotienting the disjoint union $M_1 \sqcup M_2$ by the equivalence relation $\theta^+_1(p) \sim \theta^-_2(p)$ for $p \in   N$. The compatibility of $\theta^+_1$ and $\theta^-_2$ with the germs of smooth structures ensure that there is a unique smooth structure on $M_1 \sqcup_{(\theta^-_2)^{-1} \circ \theta^+_1} M_2$. Moreover, as $(\theta^-_2)^{-1} \circ \theta^+_1$ is a morphism of $\mathfrak{W}$-manifolds, the local systems of lattice and Wu structure on $M_1$ and $M_2$ induce a $\mathfrak{W}$-manifold structure on $M_1 \sqcup_{(\theta^-_2)^{-1} \circ \theta^+_1} M_2$.

The category $\mathcal{B}_\mathfrak{W}^{4\ell+3,1}$ is symmetric monoidal, with the monoidal structure given by the disjoint union of closed manifolds and bordisms. It also admits a involutive functor to the opposite category given by the orientation reversal of closed manifolds and bordisms. We will refer to this structure as a dagger structure, although strictly speaking it is slightly more general than the textbook version of a dagger structure, which requires the involutive functor to act as the identity on objects.

\paragraph{Bordism category of $(\mathfrak{W}, \mathsf{Z}^{\rm flat})$-manifolds} We describe here a related category $\mathcal{B}_{\mathfrak{W},{\rm flat}}^{4\ell+3,1}$. $\mathcal{B}_{\mathfrak{W},{\rm flat}}^{4\ell+3,1}$ is the domain of the prequantum field theory functor defined in Section \ref{SecSubPreqTh}. It is also the domain of the quantum field theory $\mathcal{R}$ when there is no anomaly.

The objects of $\mathcal{B}_{\mathfrak{W},{\rm flat}}^{4\ell+3,1}$ are closed $4\ell+2$-dimensional $\mathfrak{W}$-manifolds $N$ endowed with a germ of smooth structure as above, as well as a flat differential cocycle $\check{y} \in \mathsf{Z}^{\rm flat}_N$.

The morphisms from an object $(N^-, \check{y}^-)$ to an object $(N^+, \check{y}^+)$ are $4\ell+3$-dimensional $\mathfrak{W}$-bordisms $(M,\theta^-, \theta^+, \check{x})$ endowed with a differential cocycle $\check{x} \in \mathsf{Z}^{\rm flat}_{M, \check{y}}$, where $\check{y} = ((\theta^-)^{-1})^\ast(\check{y}^-) + ((\theta^+)^{-1})^\ast(\check{y}^+) \in \mathsf{Z}_{\partial M}$.

Suppose we have a bordism $(M_1, \theta^-_1, \theta^+_1, \check{x}_1)$, $\check{x}_1 \in \mathsf{Z}^{\rm flat}_{M_1, \check{y}_1}$ from $(N^-,\check{y}^-)$ to $(N,\check{y})$ and a bordism $(M_2, \theta^-_2, \theta^+_2, \check{x}_2)$, $\check{x}_2 \in \mathsf{Y}^{\rm flat}_{M_2, \check{y}_2}$ from $(N,\check{y})$ to $(N^+,\check{y}^+)$. The composition of the two bordisms is given by gluing like in the category $\mathcal{B}_\mathfrak{W}^{4\ell+3,1}$, but we have to describe the differential cohomology class carried by $M_{12} := M_1 \sqcup_{(\theta^-_2)^{-1} \circ \theta^+_1} M_2$. To this end, remark that by hypothesis, $(\theta^+_1)^\ast \check{y}_1 = (\theta^-_2)^\ast \check{y}_2 = \check{y}$. This means that $\check{x}_1$ and $\check{x}_2$ are the restriction to $M_1$ and $M_2$ of a differential cocycle $\check{x}_{12}$ on $M_{12}$. We define the composed bordism to be $(M_{12}, \theta_1^-, \theta_2^+, \check{x}_{12})$.

$\mathcal{B}_{\mathfrak{W},{\rm flat}}^{4\ell+3,1}$ is a symmetric monoidal category with respect to the disjoint union, and the orientation flip provides a dagger structure in the sense described above.

The category $\mathcal{B}_{\mathfrak{W},{\rm flat}}^{4\ell+3,1}$ is based on manifolds and bordisms endowed with a differential cocycle (a background gauge field), as opposed to a differential cohomology class (a gauge equivalence class of fields). It is not possible to define straightforwardly a bordism category of manifolds endowed with a differential cohomology class, because of ambiguities in the definition of the composition of bordisms. Suppose we try to repeat the definition above with differential cohomology classes. Then we will have $(\theta^+_1)^\ast y_1 = (\theta^-_2)^\ast y_2 = y$. After choosing cocycle representatives, we obtain $(\theta^+_1)^\ast \hat{y}_1 = (\theta^-_2)^\ast (\hat{y}_2 + d\hat{c})$ for some flat differential cochain $\check{c}$ on $\theta^-_2 N \subset \partial M_2$. We can extend $\check{c}$ to $M_2$ by zero and define $\check{x}'_2 = \check{x}_2 + d\check{c} \in \mathsf{Z}^{\rm flat}_{M_2}$. By construction, we have $\check{x}'_2|_N = \check{x}_1|_N$, where $N$ refers to the gluing locus as a submanifold of $M_{12}$. So there is a cocycle $\check{x}_{12}$ on $M_{12}$ that coincides with $\check{x}_1$ on $M_1$ and with $\check{x}_2$ on $M_2$. The trouble is that the differential cohomology class $x_{12}$ depends on the choice of $\check{c}$. More precisely, if we shift $\check{c}$ by a cocycle on $N$ that does not extend to $M_1$, the differential cohomology class of $x_{12}$ will change. This is the reason we endow manifolds and bordism with cocycles and not differential cohomology classes.

\paragraph{Bordism category of $(\mathfrak{W}, \mathsf{Z}^{\rm flat},\epsilon)$-manifolds} When the order of $\Gamma$ is even, the construction of the state space requires an extra structure $\epsilon$ on $4\ell+2$-dimensional manifolds, as described in Section \ref{SecExStruct}.

We define a bordism category $\mathcal{B}_{\mathfrak{W},{\rm flat},\epsilon}^{4\ell+3,1}$ as follows. Its objects are the objects of $\mathcal{B}_{\mathfrak{W},{\rm flat}}^{4\ell+3,1}$ endowed with an extra structure $\epsilon$. The morphisms between two objects of $\mathcal{B}_{\mathfrak{W},{\rm flat},\epsilon}^{4\ell+3,1}$ are the morphisms between these objects in $\mathcal{B}_{\mathfrak{W},{\rm flat}}^{4\ell+3,1}$ after we forget the extra structures.

\section{Heisenberg groups and modules}

\label{AppHeisGr2Gr}

We review here a type of Heisenberg groups that are central extensions of finite groups and their modules. (More general Heisenberg groups are treated in \cite{mumford2006tata} and in Appendix A of \cite{Freed:2006ya}.)

\paragraph{Alternating pairings on finite groups} Let $\mathsf{G}$ be a finite abelian group, whose group law will be written additively. We write $g, g_1, g_2,...$ for arbitrary elements of $\mathsf{G}$. We assume that $\mathsf{G}$ is endowed with a bilinear alternating perfect pairing $(\bullet,\bullet)_\mathsf{G}: \mathsf{G} \times \mathsf{G} \rightarrow \mathbb{R}/\mathbb{Z}$. We will also write
\be
B(\bullet, \bullet) := \exp 2 \pi i (\bullet,\bullet)_\mathsf{G}
\ee
for the corresponding $U(1)$-valued bimultiplicative pairing.

A \emph{Lagrangian subgroup} $L$ of $\mathsf{G}$ is a maximal subgroup of $\mathsf{G}$ such that $(l_1,l_2)_\mathsf{G} = 0$ for all $l_1,l_2 \in L$. Lagrangian subgroups $\mathsf{L},\mathsf{K}$ of $\mathsf{G}$ are called \emph{transverse} if $\mathsf{L} \cap \mathsf{K} = \{0\}$, the trivial subgroup. If $\mathsf{L}$ is transverse to $\mathsf{K}$, $\mathsf{G} \simeq \mathsf{L} \oplus \mathsf{K}$.

\paragraph{Heisenberg group} Recall that a \emph{2-cocycle} on $\mathsf{G}$ valued in $U(1)$ is a function $\phi : \mathsf{G} \times \mathsf{G} \rightarrow U(1)$ satisfying
\be
\phi(g_1, g_2) \phi(g_1 + g_2, g_3) = \phi(g_1, g_2 + g_3) \phi(g_2, g_3) \;.
\ee
A 2-cocycle $\phi$ is a \emph{refinement}  of $B$ if the commutator associated to $\phi$ coincides with $B$:
\be
\phi(g_1, g_2)(\phi(g_2, g_1))^{-1} = B(g_1, g_2) \;.
\ee
Given $B$, a refinement can for instance be constructed from any pair of transverse Lagrangian subgroups $\mathsf{L}$ and $\mathsf{K}$ by defining
\be
\phi(g_1, g_2) := B(k_1, l_2)  \;,
\ee
where $g_i = l_i + k_i$, $l_i \in \mathsf{L}$, $k_i \in \mathsf{K}$, $i = 1,2$.

A 2-cocycle $\phi$ defines a central extension of $\mathsf{G}$ by $U(1)$
\be
\label{EqExSeqCentExtGU1}
1 \rightarrow U(1) \rightarrow {\rm Heis}_\mathsf{G} \rightarrow \mathsf{G} \rightarrow 1 
\ee
as follows. As a set, ${\rm Heis}_\mathsf{G} = \mathsf{G} \times U(1)$, and the group law is given (multiplicatively) by 
\be
(g_1, t_1) \cdot (g_2, t_2) = \left(g_1 + g_2, t_1 t_2 \phi(g_1, g_2) \right) \;.
\ee
Proposition A.1 of \cite{Freed:2006ya} shows that up to isomorphism, ${\rm Heis}_\mathsf{G}$ depends only on the pairing $B$, and not on the choice of refinement $\phi$. We call ${\rm Heis}_\mathsf{G}$ the \emph{Heisenberg group} associated to $(\mathsf{G},B)$. ${\rm Heis}_\mathsf{G}$ satisfies the familiar Heisenberg commutation relations
\be
(g_1, t_1) \cdot (g_2, t_2) = (0, B(g_1, g_2)_G) \cdot (g_2, t_2) \cdot (g_1, t_1) \;.
\ee
where $0$ is the identity element of $\mathsf{G}$.

The exact sequence \eqref{EqExSeqCentExtGU1} shows that ${\rm Heis}_\mathsf{G}$ is composed of a $U(1)$-torsor over each $g \in \mathsf{G}$, with the trivial $U(1)$-torsor over $0 \in \mathsf{G}$. It can therefore be seen as a $U(1)$-bundle over $\mathsf{G}$, with a group law lifting the group law of $\mathsf{G}$.

\paragraph{Heisenberg modules} Consider the space $\mathbb{C}({\rm Heis}_\mathsf{G})$ consisting of complex valued functions on ${\rm Heis}_\mathsf{G}$ satisfying $f(g,t't) = t'f(g,t)$, $g \in \mathsf{G}$, $t,t' \in U(1)$. The left and right regular representations $\rho_L$ and $\rho_R$ of ${\rm Heis}_\mathsf{G}$ on $\mathbb{C}({\rm Heis}_\mathsf{G})$ are given by
\be
\rho_L(h)(f)(h') = f(hh') \;, \quad \rho_R(h)(f)(h') = f(h'h)
\ee
for $h,h' \in {\rm Heis}_\mathsf{G})$, $f \in \mathbb{C}({\rm Heis}_\mathsf{G})$. The left and right regular representations have dimension $|\mathsf{G}|$ and are reducible.

The analogue of the Stone-Von Neumann theory states that ${\rm Heis}_\mathsf{G}$ has a unique irreducible representation up to isomorphism (see for instance Lemma A.4 of \cite{Freed:2006ya}). It can be constructed as follows. Lagrangian subgroups of $\mathsf{G}$ lift (non-uniquely) to finite commutative subgroups of ${\rm Heis}_\mathsf{G}$. We call such subgroups of ${\rm Heis}_\mathsf{G}$ \emph{LL subgroups}. Pick a Lagrangian subgroup $\mathsf{L}$ of $\mathsf{G}$ and an associated LL subgroup $\tilde{\mathsf{L}}$. Let $\mathbb{C}({\rm Heis}_\mathsf{G})^{\tilde{\mathsf{L}}}$ be the subspace of functions invariant under the right action $\rho_R(\tilde{\mathsf{L}})$. $\rho_L$ defines a Heisenberg module structure on $\mathbb{C}({\rm Heis}_\mathsf{G})^{\tilde{\mathsf{L}}}$ and provides a realization of the unique irreducible representation of ${\rm Heis}_\mathsf{G}$, of dimension $|\mathsf{G}|^{1/2}$. We study in great detail the intertwiners between these representations for different choices of Lagrangian subgroups in Section \ref{SecStateSpace}.

\section{2-groups and their representations}

\label{App2Groups}

\paragraph{2-groups} Convenient references on 2-groups and their 2-modules are \cite{2003math......7200B, 2008arXiv0812.4969B}. A \emph{weak monoidal category} $\mathcal{M}$ is composed of
\begin{itemize}
\item A category $\mathcal{M}$.
\item A functor from $\mathcal{M} \times \mathcal{M}$ to $\mathcal{M}$, which we write with a dot $\cdot$.
\item An identity object $\bm{1} \in \mathcal{M}$.
\item Natural isomorphisms $a_{X,Y,Z}: X \cdot (Y \cdot Z) \rightarrow (X \cdot Y) \cdot Z$, $l_X: 1 \cdot X \rightarrow X$ and $r_X: X \cdot 1 \rightarrow X$, for $X,Y,Z \in \mathcal{M}$. The isomorphisms $a_{X,Y,Z}$ have to satisfy the pentagon identity and we should also have $1_X \cdot l_Y = (r_X \cdot 1_Y) \circ a_{X,1,Y}$, see \cite{2003math......7200B} for the corresponding commutative diagrams.
\end{itemize}
A \emph{weak inverse} of an object $X \in \mathcal{M}$ is an object $Y$ such that $X \cdot Y \simeq 1$ and $Y \cdot X \simeq 1$, where $\simeq$ denotes the existence of an isomorphism. $X$ is then called \emph{weakly invertible}

A \emph{weak 2-group} is a weak monoidal category in which all the morphisms are invertible and all the objects are weakly invertible. We have a \emph{strict 2-group} when the isomorphisms defining the weak inverses as well as the isomorphisms $a_{X,Y,Z}$, $l_X$, $r_X$ are all identities.

Another conceptual point of view on 2-groups is the following. We can see a group as a category with a single object such that all its morphisms are invertible. The elements of the group being in bijection with the morphisms and the group law is given by the composition law of morphisms. Similarly, a (weak) 2-group can be seen as a (weak) 2-category \cite{1998math.....10017L} with a single object, all of whose morphisms and 2-morphisms are invertible. The category $\mathcal{M}$ of the previous definition is the category of morphisms from the unique object to itself. The multiplication functor is given by the composition functor that is part of the definition of a 2-category.

\paragraph{Representations} When we picture a group $G$ as a category $\mathcal{G}$, a representation of $G$ is simply a functor from $\mathcal{G}$ to $\mathcal{V}_1$, the category of vector spaces. Indeed the unique object of $\mathcal{G}$ is sent to the vector space underlying the representation, and the morphisms of $\mathcal{G}$, which are in bijection with the elements of $G$, are sent to endomorphisms of $V$. The compatibility of this assignment with the composition of morphisms ensures that we obtain a representation.

This pattern can be copied to define representations of 2-groups. Given a 2-group $\mathcal{G}$, pictured as a weak 2-category with a single object, a \emph{representation} is a 2-functor from $\mathcal{G}$ to $\mathcal{V}_2$, the 2-category of 2-vector spaces (see for instance Appendix A.2 of \cite{Monnierd}).

Such a 2-functor assigns a 2-vector space $V$ to the unique object of $\mathcal{G}$. Assuming that we trivialized $V$ into a vector of vector spaces, the 2-functor assigns a matrix of vector spaces to each 1-morphism of $\mathcal{G}$ and a matrix of homomorphisms to each 2-morphisms of $\mathcal{G}$.

{
\small  \linespread{.6}

\providecommand{\href}[2]{#2}\begingroup\raggedright\endgroup

}

\end{document}